\newtheorem{thm}{Theorem}[section]
\newtheorem{cor}[thm]{Corollary}
\newtheorem{df}[thm]{Definition}
\newtheorem{rk}[thm]{Remark}
\newtheorem{proposition}[thm]{Proposition}
\newtheorem{lemma}[thm]{Lemma}
\newtheorem{conjecture}[thm]{Conjecture}
\newcommand{\m}[1]{\mathbb{#1}}
\newcommand{\q}[1]{\mathcal{#1}}
\newcommand{\wht}[1]{\widetilde{#1}}
\newcommand{\ep}{\varepsilon}
\newcommand{\f}{\frac}
\newcommand{\rd}{\partial}
\newcommand{\nab}{\nabla}
\newcommand{\alp}{\alpha}
\newcommand{\bt}{\beta}
\newcommand{\bA}{{\bf A}}
\newcommand{\gi}{(g^{-1})}
\newcommand{\mfg}{\mathfrak g}
\newcommand{\ls}{\lesssim}
\newcommand{\de}{\delta}
\newcommand{\om}{\omega}
\def\i {\infty}
\newcommand{\ud}{\mathrm{d}}
\newcommand{\pfstep}[1]{\vspace{.5em} {\it \noindent #1.}}
\numberwithin{equation}{section}
\begin{document}

\title[Trilinear compensated compactness and Burnett's conjecture]{Trilinear compensated compactness and\\ Burnett's conjecture in general relativity}

\begin{abstract}
Consider a sequence of $C^4$ Lorentzian metrics $\{h_n\}_{n=1}^{+\infty}$ on a manifold $\mathcal M$ satisfying the Einstein vacuum equation $\mathrm{Ric}(h_n)=0$. Suppose there exists a smooth Lorentzian metric $h_0$ on $\mathcal M$ such that $h_n\to h_0$ uniformly on compact sets. Assume also that on any compact set $K\subset \mathcal M$, there is a decreasing sequence of positive numbers $\lambda_n \to 0$ such that
$$\|\rd^\alp (h_n - h_0)\|_{L^\i(K)} \ls \lambda_n^{1-|\alp|},\quad |\alp|\geq 4.$$
It is well-known that $h_0$, which represents a ``high-frequency limit'', is not necessarily a solution to the Einstein vacuum equation. Nevertheless, Burnett conjectured that $h_0$ must be isometric to a solution to the Einstein--massless Vlasov system. 

In this paper, we prove Burnett's conjecture assuming that $\{h_n\}_{n=1}^{+\infty}$ and $h_0$ in addition admit a $\mathbb U(1)$ symmetry and obey an elliptic gauge condition. The proof uses microlocal defect measures --- we identify an appropriately defined microlocal defect measure to be the Vlasov measure of the limit spacetime. In order to show that this measure indeed obeys the Vlasov equation, we need some special cancellations which rely on the precise structure of the Einstein equations. These cancellations are related to a new ``trilinear compensated compactness'' phenomenon for solutions to (semilinear) elliptic and (quasilinear) hyperbolic equations.
\end{abstract}

\author{C\'ecile Huneau}
\address{CMLS, Ecole Polytechnique, 91120 Palaiseau, France}
\email{cecile.huneau@polytechnique.edu}
\author{Jonathan Luk}
\address{Department of Mathematics, Stanford University, CA 94304, USA}
\email{jluk@stanford.edu}
	
	\maketitle

\tableofcontents

\section{Introduction}

It has been known in the context of classical general relativity that ``backreaction of high frequency gravitational waves mimics effective matter fields'' (see for instance \cite{Burnett, CBHF, GW1, GW2, Isaacson1, Isaacson2}). One way to describe this phenomenon mathematically (due to Burnett \cite{Burnett}) is to consider a sequence of (sufficiently regular) Lorentzian metrics $\{h_n\}_{n=1}^{+\infty}$ on a smooth manifold $\mathcal M$ satisfying the Einstein vacuum equations
\begin{equation}\label{eq:Vac}
\mathrm{Ric}(h_n) = 0
\end{equation}
such that (in some coordinate system) the metric components admit some limit $h_0$ where $h_n\to h_0$ uniformly on compact sets and $\rd h_n \to \rd h_0$ weakly. Assume moreover that for any compact set $K$, there is some sequence of positive numbers $\lambda_n\to 0$ such that the following holds on $K$:
\begin{equation}\label{eq:bounds.caricature}
|h_n-h_0|\ls \lambda_n,\quad |\rd h_n|\ls 1,\quad |\rd^k h_n|\ls \lambda_n^{-k+1} \mbox{ for $k=2,3,4$}.
\end{equation}
Due to the nonlinearity of the Einstein equations, the limit $h_0$ does not necessarily satisfy \eqref{eq:Vac}. Instead, in general it is possible for $h_0$ to satisfy
$$\mathrm{Ric}(h_0) - \f12 h_0 R(h_0) = T$$
(where $R$ is the scalar curvature) for some non-trivial stress-energy-momentum tensor $T$. This tensor $T$ that arise in the limit can be interpreted as an effective matter field.

A question arises as to what type of effective matter field can arise in such a limiting process. In this direction, Burnett made the following conjecture\footnote{{We remark that in the original \cite{Burnett}, \eqref{eq:bounds.caricature} is only required to hold up to $k=2$. We impose the slightly stronger assumption that \eqref{eq:bounds.caricature} holds up to $k=4$ in view of the result that we prove in this paper.}}:
\begin{conjecture}[Burnett \cite{Burnett}]\label{conj:Burnett}
Given $(\mathcal M,h_n)$ and $(\mathcal M, h_0)$ above, the limit $h_0$ is isometric to a solution to the Einstein--massless Vlasov system, i.e.~the effective stress-energy-momentum tensor corresponds to that of massless Vlasov matter.
\end{conjecture}

We refer the reader to Sections~\ref{sec:EMV}--\ref{sec:null.dust} below for definitions concerning the Einstein--massless Vlasov system. We remark that in Conjecture~\ref{conj:Burnett}, ``Einstein--massless Vlasov system'' has to be appropriately formulated to include \emph{measure-valued} Vlasov fields since there are known examples for which the limits are isometric to solutions to the Einstein--null dust system. For further background on the Einstein--Vlasov system, see for instance \cite{Andreasson,Ringstrom}.

Conjecture~\ref{conj:Burnett} can be interpreted as stating that the effective matter field must be propagating with the speed of light and that the matter propagating in different directions do not directly interact, but only interact through their effect on the geometry; see \cite{Burnett}.

Our main result is a proof of Conjecture~\ref{conj:Burnett} under two additional assumptions:
\begin{enumerate}
\item ($\mathbb U(1)$ symmetry.) The sequence $\{h_n\}_{n=1}^{+\infty}$ and the limit $h_0$ all admit a $\mathbb U(1)$ symmetry (without necessarily obeying a polarization condition).
\item (Elliptic gauge.) All the metrics can be put in an elliptic gauge and the bounds \eqref{eq:bounds.caricature} hold in this gauge.
\end{enumerate}

The following is our main theorem; see Theorem~\ref{thm:main} for a precise statement.
\begin{thm}\label{thm:intro}
Conjecture~\ref{conj:Burnett} is true under the above two additional assumptions.
\end{thm}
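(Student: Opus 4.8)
The plan is to reduce everything, via the $\mathbb U(1)$ symmetry, to a $(2+1)$-dimensional Einstein--wave-map system. Writing the $3+1$ metric schematically in Kaluza--Klein form $h_n = e^{2\phi_n}(\mathrm dx^3+\mathcal A_n)^2 + e^{-2\phi_n}g_n$ with $g_n$ a Lorentzian metric on the quotient $\mathcal M^{2+1}$, and introducing the twist potential $\om_n$ so that $U_n := (\phi_n,\om_n)$ maps into the hyperbolic plane $\mathbb H^2$ with its standard metric $k$, the vacuum equations $\mathrm{Ric}(h_n)=0$ become $\mathrm{Ric}(g_n)_{\mu\nu} = k_{ab}(U_n)\,\partial_\mu U_n^a\,\partial_\nu U_n^b$ together with the quasilinear wave-map equation $\Box_{g_n}U_n^a + \Gamma^a_{bc}(U_n)\,g_n^{\mu\nu}\partial_\mu U_n^b\partial_\nu U_n^c = 0$. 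I would then impose an elliptic gauge on $g_n$: spatial conformal coordinates $g_n = -N_n^2\,\mathrm dt^2 + e^{2\gamma_n}\delta_{ij}(\mathrm dx^i+\beta_n^i\,\mathrm dt)(\mathrm dx^j+\beta_n^j\,\mathrm dt)$ on a maximal foliation, so that $N_n$, $\gamma_n$, $\beta_n$ solve second-order \emph{elliptic} equations on each slice whose right-hand sides are quadratic in $\partial U_n$ and in the second fundamental form. The bounds \eqref{eq:bounds.caricature} (now needed up to $|\alp|=4$ to run the elliptic iterations and the microlocal calculus below) make the geometry quantities $\partial\gamma_n,\partial N_n,\partial\beta_n$ bounded in $H^1_{\mathrm{loc}}$ --- one derivative smoother than a generic quadratic expression in the oscillating fields --- hence, by Rellich, \emph{strongly} convergent in $L^2_{\mathrm{loc}}$. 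It follows that $\mathrm{Ric}(g_n)\to\mathrm{Ric}(g_0)$ in $\mathcal D'$, so the sole mechanism producing effective matter is the failure of strong convergence of $\partial U_n$.

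Second, I would extract the microlocal defect (Wigner) measure: since $\partial U_n$ is bounded in $L^2_{\mathrm{loc}}$ with $\partial U_n\rightharpoonup\partial U_0$, there is a Hermitian positive matrix-valued measure $\nu=(\nu^{ab})$ on the cosphere bundle of $\mathcal M$ attached to $\partial U_n-\partial U_0$. Two structural facts are needed. (i) \emph{Null support:} testing the wave-map equation against $\Psi$DOs, and using that $g_n^{\mu\nu}\partial_\mu\partial_\nu U_n$ is, modulo strongly convergent and lower-order contributions, the principal part, forces $\nu$ to be supported on $\{g_0^{\mu\nu}\xi_\mu\xi_\nu=0\}$, the null covectors of $g_0$ (equivalently of $h_0$). (ii) \emph{Polarization:} because each $\partial_\mu U_n$ is an exact one-form, its oscillatory part is to leading order $\xi_\mu$ times a scalar oscillation, so $\nu$ is rank one in the spacetime indices, $\nu^{ab}_{\mu\nu}=\xi_\mu\xi_\nu\,\mathrm d\varrho^{ab}$ for a measure $\varrho$ on the null cone bundle. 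Passing to the limit in $\mathrm{Ric}(g_n)_{\mu\nu}=k_{ab}(U_n)\partial_\mu U_n^a\partial_\nu U_n^b$ then gives $\mathrm{Ric}(g_0)_{\mu\nu} - k_{ab}(U_0)\partial_\mu U_0^a\partial_\nu U_0^b = \mathcal T_{\mu\nu}$ with $\mathcal T_{\mu\nu}=\int \xi_\mu\xi_\nu\,k_{ab}(U_0)\,\mathrm d\varrho^{ab}$, precisely the stress-energy tensor of a (possibly measure-valued) massless Vlasov field whose distribution function is determined by $\varrho$; lifting back through the $\mathbb U(1)$ reduction produces the corresponding massless Vlasov field on the $3+1$ spacetime. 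Hence $h_0$ is isometric to a solution of the Einstein--massless-Vlasov system \emph{provided} $\varrho$ obeys the null geodesic transport (massless Vlasov) equation of $g_0$.

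Third, I would derive that transport equation. Pairing the wave-map equation with $a(x,D)(\partial U_n-\partial U_0)$ for $a$ homogeneous of degree $0$ and letting $n\to\infty$ yields, by the standard commutator and symbol calculus, $\langle\varrho,\{p_0,a\}\rangle = \mathcal S(a)$, where $p_0(x,\xi)=g_0^{\mu\nu}\xi_\mu\xi_\nu$ generates the null geodesic flow and $\mathcal S$ is the limit of the \emph{trilinear} quantity built from $\Gamma^a_{bc}(U_n)g_n^{\mu\nu}\partial_\mu U_n^b\partial_\nu U_n^c$ correlated with $\partial U_n-\partial U_0$. The pure self-interaction of the oscillatory parts contributes $\int g_0^{\mu\nu}\xi_\mu\xi_\nu(\cdots)\,\mathrm d\varrho$, which vanishes by null support and polarization. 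What remains are the genuinely trilinear mixed terms --- those where one oscillating factor is replaced by $\partial U_0$, or by a gauge coefficient $\partial\gamma_n,\partial N_n,\partial\beta_n$ which is itself determined elliptically by a quadratic expression in the oscillations. Here one must use the precise algebraic structure of the reduced Einstein equations, the constraint structure of the elliptic gauge, and a new \emph{trilinear compensated compactness} principle (the titular phenomenon): such triple products of oscillating solutions of the semilinear elliptic and quasilinear hyperbolic equations, which are \emph{not} controlled by classical div-curl/bilinear compensated compactness, either vanish in the limit or reassemble into a perfect flow term that is absorbed into $\{p_0,\cdot\}$, upgrading it to the covariant transport operator along null geodesics, with the internal polarization carried along by the target connection. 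One then concludes $\{p_0,\varrho\}=0$ on the mass shell, i.e.\ $\varrho$ is the Vlasov measure, which finishes the proof.

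The main obstacle is precisely this trilinear step. Bilinear compensated compactness and the bootstrap of quadratic defect measures are by now classical and suffice to obtain $\mathrm{Ric}(g_n)\to\mathrm{Ric}(g_0)$ and to identify $\mathcal T$; what is new and delicate is that the source $\mathcal S$ in the transport equation is \emph{cubic} in the oscillating fields, entangling the wave-map nonlinearity with the elliptically determined metric coefficients (which are themselves quadratic functionals of the oscillations). Showing that this cubic interaction is ``compensated'' --- that its only net effect is to covariantize the transport operator, with no spurious source for $\varrho$ --- requires exploiting cancellations specific to the Einstein equations (the hyperbolic/wave-map structure of the $U_n$-equation together with the constraints built into the elliptic gauge), and is where the heart of the argument lies. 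A secondary subtlety is verifying that $\varrho$ is a \emph{bona fide} nonnegative (matrix) measure on the null cone bundle with the correct propagation, so that the resulting $\mathcal T$ is genuinely of massless Vlasov type; the allowance for measure-valued Vlasov fields in Conjecture~\ref{conj:Burnett} is exactly what accommodates possible concentrated, null-dust-like, components of $\varrho$.
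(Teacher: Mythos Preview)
Your outline is essentially the paper's strategy and correctly isolates trilinear compensated compactness as the heart of the argument. Two places where your stated mechanism diverges from what actually works: first, it is \emph{not} true that all of $\partial\gamma_n,\partial N_n,\partial\beta_n$ gain a derivative from the elliptic equations --- only the \emph{spatial} derivatives (and $\partial_t\gamma_n$, via the gauge relation) improve, while $\partial_t N_n$ and $\partial_t\beta_n$ remain merely bounded; the convergence $\mathrm{Ric}(g_n)\rightharpoonup\mathrm{Ric}(g_0)$ is rescued instead by the structural fact that in this gauge the Ricci tensor contains no quadratic products of these bad time derivatives. Second, the ``pure self-interaction of the oscillatory parts'' does \emph{not} vanish simply by null support and polarization of the defect measure --- the measure encodes only bilinear data, so a genuinely cubic expression like $(X\phi_n)\,g_0^{-1}(\ud\phi_n,\ud\phi_n)$ requires its own three-waves compensated compactness argument, proved via the identity $g_0^{-1}(\ud\phi,\ud\phi)=\tfrac12\Box_{g_0}(\phi^2)-\phi\,\Box_{g_0}\phi$ and integration by parts; this is distinct from (and runs alongside) the elliptic-wave trilinear step you describe for the gauge-coupled terms, and the paper handles the latter by a freezing-of-coefficients technique in Fourier space.
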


Theorem~\ref{thm:intro} implies a fortiori that the effective stress-energy-momentum tensor is traceless, obeys the \underline{dominant} energy condition (i.e.~for every future-directed causal vector $X$, the vector $-T^{\mu}{}_{\nu}X^\nu$ is also future-directed and causal), and is non-negative in the sense that $T(X,X)\geq 0$ pointwise for every vector field $X$ (not necessarily causal). In fact, we show that these statements continue to hold even if we relax the convergence assumption to be significantly weaker than \eqref{eq:bounds.caricature}. We give an informal statement here but refer the reader to Theorem~\ref{thm:prelim} for a precise statement.
\begin{thm}\label{thm:intro.prelim}
{Assume that $h_n$, $h_0$ all admit a $\mathbb U(1)$ symmetry and are put in an elliptic gauge.} Suppose \eqref{eq:bounds.caricature} is replaced by the conditions that $h_n\to h_0$ uniformly on compact sets and $\rd h_n \rightharpoonup \rd h_0$ weakly in $L^{p_0}_{\mathrm{loc}}$ for some $p_0 > \f 83$. 

Then the effective stress-energy-momentum tensor is traceless, obeys the dominant energy condition, and is non-negative.
\end{thm}

Theorem~\ref{thm:intro.prelim} can be compared with the following theorem of Green--Wald \cite{GW1}, which to our knowledge is so far the best result towards Conjecture~\ref{conj:Burnett}:
\begin{thm}[Green--Wald \cite{GW1}]\label{thm:GW}
Assume $\{h_n\}_{n=1}^{+\infty}$ and $h_0$ are such that \eqref{eq:Vac} and \eqref{eq:bounds.caricature} hold. Then the effective stress-energy-momentum tensor is traceless and obeys the {\underline{weak}} energy condition {(i.e.~$T(X,X)\geq 0$ pointwise for {every} timelike $X$)}.
\end{thm}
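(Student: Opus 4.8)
\emph{Proof proposal for Theorem~\ref{thm:GW}.} \textbf{Step 1 (reduction to a quadratic defect).} Write $\gamma_n := h_n - h_0$. By \eqref{eq:bounds.caricature}, $\gamma_n \to 0$ in $C^0_{loc}$, $\partial\gamma_n$ is bounded in $L^\infty_{loc}$ with $\partial\gamma_n \rightharpoonup 0$ weak-$*$, and $|\partial^2\gamma_n|\lesssim\lambda_n^{-1}$. Expanding $0 = \mathrm{Ric}(h_n)$ around $h_0$ (schematically $\mathrm{Ric}\sim g^{-1}\partial^2 g + (g^{-1})^{\otimes 2}(\partial g)^{\otimes 2}$), the linearization of $\mathrm{Ric}$ at $h_0$ applied to $\gamma_n$ is a fixed second-order linear operator evaluated on $\gamma_n \to 0$, hence tends to $0$ in $\mathcal D'$ (each term is a smooth coefficient times $\partial^2\gamma_n$, $\partial\gamma_n$ or $\gamma_n$, and each tends to $0$ in $\mathcal D'$ after integrating derivatives onto the coefficients, using $\gamma_n\to 0$ strongly); every nonlinear term of order $\geq 3$ in $\gamma_n$ is $O(\lambda_n)$ in $L^\infty_{loc}$; and the only dangerous $O(1)$ contribution, of the form $(\text{smooth})\cdot\gamma_n\,\partial^2\gamma_n$, equals $\partial\big((\text{smooth})\cdot\gamma_n\,\partial\gamma_n\big)$ minus $(\text{smooth})\cdot(\partial\gamma_n)^{\otimes2}$, up to $O(\lambda_n)$, with the exact-derivative piece tending to $0$ in $\mathcal D'$. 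Passing to a subsequence so that the locally bounded quantities $\partial_\mu\gamma_{n,\alpha\beta}\,\partial_\nu\gamma_{n,\rho\sigma}$ converge weak-$*$ in $L^\infty_{loc}$, we obtain $\mathrm{Ric}(h_0) = -\tau$ for an explicit $h_0$-linear functional $\tau$ of this quadratic defect; since $\mathrm{Ric}(h_0)$ is a fixed tensor, $\tau$ does not depend on the subsequence, the whole sequence converges, and $T = \mathrm{Ric}(h_0)-\tfrac12 h_0 R(h_0)$ is the corresponding functional of the defect.

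\textbf{Step 2 (defect measure; localization to the null cone).} Lift the weak-$*$ limit of $\partial\gamma_n\otimes\partial\gamma_n$ to a microlocal defect (H-)measure: after cutting off to a chart, a nonnegative matrix-valued Radon measure $\nu_{(\alpha\beta)(\alpha'\beta')}$ on $S^*\mathcal M$ with $\langle a(x,D)\partial\gamma_n,\partial\gamma_n\rangle\to\int a\,\mathrm{d}(\mathrm{tr}\,\nu)$ for zeroth-order symbols $a$, whose first $\xi$-moment recovers $\tau$: up to a fixed constant, $\tau_{\mu\nu}(x) = \int_{S^*_x\mathcal M}\xi_\mu\xi_\nu\, h_0^{\alpha\alpha'}h_0^{\beta\beta'}\,\mathrm{d}\nu_{(\alpha\beta)(\alpha'\beta')}$. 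The point is that $\nu\geq 0$ comes for free. Choosing wave coordinates for each $h_n$ (always available, compatibly with \eqref{eq:bounds.caricature}), the vacuum equation gives $h_0^{\mu\nu}\partial^2_{\mu\nu}\gamma_n = O(1)$ in $L^\infty_{loc}$, so $\lambda_n$ times this tends to $0$; testing against $a(x,D)\gamma_n$ with $a$ supported away from the characteristic variety $\{h_0^{\mu\nu}\xi_\mu\xi_\nu=0\}$ and using ellipticity there forces $\mathrm{supp}\,\nu\subset\{h_0^{\mu\nu}\xi_\mu\xi_\nu=0\}$. The wave-gauge identity $\partial^\alpha\gamma_{n,\mu\alpha}-\tfrac12\partial_\mu(\mathrm{tr}_{h_0}\gamma_n) = O(\lambda_n)$ further constrains $\nu$ to be transverse up to its trace.

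\textbf{Step 3 (tracelessness and the weak energy condition).} Tracelessness is robust: $\mathrm{tr}_{h_0}T = -R(h_0) = \mathrm{tr}_{h_0}\tau$, and by the wave-gauge reduction $\mathrm{tr}_{h_0}\tau$ is a contraction of $\nu$ with the symbol $h_0^{\mu\nu}\xi_\mu\xi_\nu$ of the wave operator, which vanishes on $\mathrm{supp}\,\nu$; hence $R(h_0)=0$ and $T=-\tau$ is traceless. For the weak energy condition, given a timelike $X$ one has $T(X,X) = -\tau(X,X) = \lim\big(X^\mu X^\nu Q_{\mu\nu}(\partial\gamma_n)\big)$, with $Q$ the Isaacson-type quadratic form furnished by Step 1. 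Its main term $\big(X^\mu\partial_\mu\gamma_{n,\alpha\beta}\big)\big(X^\nu\partial_\nu\gamma_n^{\alpha\beta}\big)$ is the Lorentzian square of a symmetric $2$-tensor, not sign-definite pointwise, but the transversality and trace constraints of Step 2 confine the relevant polarization --- over each null $\xi\in\mathrm{supp}\,\nu$ --- to the transverse-traceless symmetric $2$-tensors, on which $(h_0^{-1})^{\otimes2}$ is positive semidefinite; together with a div--curl / compensated-compactness argument ensuring that the remaining bilinear pieces of $Q$ pass to the limit as products of weak limits (hence drop out), this yields $T(X,X)\geq 0$ for every timelike $X$.

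\textbf{Main obstacle.} The crux is the positivity in Step 3. The scalar wave equation pins $\nu$ to the null cone but not its tensorial structure, so one must exploit the precise quadratic structure of $\mathrm{Ric}$ together with the gauge constraints to show that the sign-indefinite, non-transverse-traceless bilinear combinations of $\partial\gamma_n$ drop out in the weak limit; here it is essential that $\partial\gamma_n$ is curl-free and that its wave-gauge divergence is precompact in $H^{-1}_{loc}$, which is exactly the setting of the Murat--Tartar div--curl lemma, while the surviving combination must be Euclidean-positive against timelike directions. Getting this right with correct Lorentzian bookkeeping delivers the weak energy condition, but nothing more: the dominant energy condition, non-negativity against \emph{all} vectors, and ultimately the recognition of $T$ as a (measure-valued) massless Vlasov field require controlling not just which bilinear defects survive but the finer \emph{alignment} of the defect along the null cone --- precisely the kind of information the trilinear compensated-compactness analysis of the present paper is built to extract.
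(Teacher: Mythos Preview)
The paper does not prove this theorem; it is quoted from Green--Wald \cite{GW1} as background, so there is no in-paper proof to compare against, only the original Green--Wald argument.

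Your route---wave coordinates, microlocal defect measures, localization to the null cone, then positivity via transverse-traceless polarizations---is genuinely different from Green--Wald's. Their argument is gauge-independent: they pass directly to the weak-$*$ limit $\mu_{abcdef}$ of $\nabla_a\gamma_{n,cd}\,\nabla_b\gamma_{n,ef}$, express the effective $T_{ab}$ as a fixed linear combination of contractions of $\mu$, and then extract algebraic constraints on $\mu$ from identities obtained by taking weak limits of $\gamma_n$ multiplied by the second-order expansion of curvature quantities of $h_n$. Tracelessness and $T(X,X)\ge 0$ for timelike $X$ then follow from pure index algebra on $\mu$; there are no H-measures, no null-cone localization, and no coordinate choice.

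Your sketch has two real gaps. First, the passage to wave coordinates in Step~2: the bounds \eqref{eq:bounds.caricature} are given in a fixed chart, and putting each $h_n$ into wave gauge requires solving an $n$-dependent quasilinear hyperbolic system for the coordinate functions; there is no reason the resulting diffeomorphisms preserve the $\lambda_n$-hierarchy uniformly in $n$. (Note that the present paper, when it does use the defect-measure strategy for its own Theorem~\ref{thm:intro.prelim}, \emph{assumes} an elliptic gauge rather than constructing one.) Second, and as you yourself flag, the positivity in Step~3 is not established. The wave-gauge identity only yields $\xi^\alpha h_{\alpha\beta}=\tfrac12\xi_\beta\,\mathrm{tr}_{h_0}h$ at the level of the measure; this does not by itself force the polarization into the $2$-dimensional transverse-traceless subspace of $\xi^\perp/\langle\xi\rangle$ (the trace and the longitudinal-in-$\xi$ modes remain), so the claim that $(h_0^{-1})^{\otimes 2}$ is positive semidefinite on the surviving polarizations is unproven. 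And the assertion that the non-TT bilinear pieces of the Isaacson form ``drop out by a div--curl argument'' is schematic: those terms are tensor-valued quadratic forms in $\partial\gamma_n$, not scalar div--curl pairs, and the Murat--Tartar lemma does not apply to them without substantial further structure.
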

Note that while its conclusion is weaker than Theorem~\ref{thm:intro.prelim}, Theorem~\ref{thm:GW} is a general result which does \underline{not} require $\mathbb U(1)$ symmetry.

While our results are gauge-dependent, it should be mentioned that a large class of non-trivial examples have been constructed under our gauge conditions. In our previous paper \cite{HL}, we have constructed sequences of solutions of Einstein vacuum equation with polarized $\m U(1)$ symmetry, which can be put in an elliptic gauge, such that \eqref{eq:bounds.caricature} are satisfied and the limit is a solution to Einstein equations coupled to $N$ null dusts. See further discussions in Section~\ref{sec:reverse}.

We now briefly discuss the proof; for more details see Section~\ref{sec:method}. Under the $\m U(1)$ symmetry assumptions, the $(3+1)$-dimensional Einstein vacuum equations reduce to the $(2+1)$-dimensional Einstein--wave map system. The rough strategy is the following:
\begin{itemize}
\item The first step of the proof is to show that only the two scalar fields corresponding to the wave map part of the system are responsible for the failure of the limit to be vacuum. This can already be viewed as a form of compensated compactness.
\item To capture and describe the defect of convergence given by the scalar fields, we rely on microlocal defect measures (introduced by Tartar \cite{Tartar} and G\'erard \cite{Gerard}). It is well-known that microlocal defect measures arising from \emph{linear} wave equations satisfy a massless Vlasov equation\footnote{In \cite{Francfort, FrancfortMurat}, a transport equation is derived only when the coefficients of the linear wave equation are time-independent. The case of a general linear wave equation in fact follows in a similar manner, except for more complicated algebraic manipulations.} \cite{Francfort, FrancfortMurat, Tartar}. 
\item We show that in our setting, despite the \emph{quasilinear} nature of the problem, the microlocal defect measure corresponding to the wave map part of the system still satisfies the massless Vlasov equation.
\end{itemize}
The most difficult part of the argument is to justify the massless Vlasov equation for the microlocal defect measure. That this holds relies on some remarkable structures and cancellations of the system, which are related to what we call a \emph{trilinear} compensated compactness phenomenon.

The remainder of the introduction will be organized as follows: In \textbf{Section~\ref{sec:method}}, we explain the ideas of the proof. In \textbf{Section~\ref{sec:discussions}}, we discuss some related problems. In \textbf{Section~\ref{sec:outline}}, we outline the remainder of the paper.

\subsection{Ideas of the proof}\label{sec:method}

\subsubsection{Microlocal defect measures}\label{sec:method.MDM}

The microlocal defect measure (see Section~\ref{sec:PSIDOs} for further details) is a measure on the cosphere bundle which identifies the ``region in phase space'' for which strong convergence fails. One important property of microlocal defect measures, especially relevant for our problem, is that microlocal defect measures arising from (approximate) solutions to hyperbolic equations themselves satisfy some transport equations. 
	
Let $u_n$ be a sequence of functions $\Omega \to \m R$, where $\Omega\subset \m R^d$ is open, which converges \emph{weakly} in $L^2(\Omega)$ to a function $u$. In general, after passing to a subsequence, $|u_n|^2 - |u|^2$ converges to a non-zero measure. The failure of the convergence $|u_n|^2 \to |u|^2$ can arise from concentrations or oscillations. The microlocal defect measure is a tool which captures both the position and the frequency of this failure of strong convergence.

For instance, if $u_n=n^\frac{d}{2}\chi(n(x-x_0))$ (with $\chi \in C^\infty_c$) so that $|u_n|^2$ concentrates to a delta measure, then the corresponding microlocal defect measure is given by $\de_{x_0} \otimes \nu$, where $\de_{x_0}$ is the spatial delta measure and $\nu$ is a uniform measure on the cotangent space. On the other hand, suppose $u_n(x)=\chi(x)\cos\left(n(x\cdot\omega)\right)$ so that $u_n$ oscillates in a particular frequency $\omega$. Then the corresponding microlocal defect measure is $|\chi|^2 \ud x \otimes \de_{[\omega]}$, where $\de_{[\omega]}$ is the delta measure concentrated at the (equivalent class of the) frequency $\omega$. See \cite{Tartar} for further discussions.

An important fact is that \textbf{microlocal defect measures arising from solutions to \underline{linear} wave equations on $(\Omega,g)$ satisfy the massless Vlasov equation on $(\Omega,g)$.} Consider the special case where $\Omega = \mathbb R^{d+1}$ and $\rd_\alp\phi_n$ a sequence of functions such that $\rd\phi_n \rightharpoonup \rd\phi_0$ weakly in $L^2$. In this case, there exists a non-negative Radon measure $\ud \nu$ on $S^*\mathbb R^{d+1}$ --- which is the microlocal defect measure --- so that 
\begin{equation}\label{eq:intro.MDM.def}
\int_{\mathbb R^{d+1}} \rd_\alp(\phi_n - \phi_0)(A\rd_\bt(\phi_n - \phi_0)) \,\ud x \to \int_{S^*\mathbb R^{d+1}} \f{a(x,\xi)\xi_\alp \xi_\bt}{|\xi|^2}\,\ud\nu,
\end{equation}
If $\phi_n$ are approximate solutions to some wave equation, then $\ud\nu$ is a (measure-valued) solution to the massless Vlasov equation \eqref{eq:massless.def.intro} and \eqref{eq:transport.def.intro}. More precisely,
\begin{enumerate}
\item Suppose
\begin{equation}\label{eq:ex.weak}
\Box_g \phi_n = f_n,\quad \|f_n \|_{L^2(\Omega)}\ls 1.
\end{equation} 
Then $\ud \nu$ is supported on the zero mass shell in the sense that for {every} $f\in C_c(\q M)$,
\begin{equation}\label{eq:massless.def.intro}
\int_{S^*\m R^{2+1}} f(x) (g^{-1})^{\alp\bt}\xi_\alp\xi_\bt \,\f{\ud \nu}{|\xi|^2} =0.
\end{equation}
\item If, instead of \eqref{eq:ex.weak}, we also have the stronger assumption
\begin{equation}\label{eq:ex.strong}
\Box_g \phi_n = f_n,\quad \|f_n - f_0\|_{L^2(\Omega)}\to 0.
\end{equation}
Then for any $C^1$ function $\widetilde{a}:T^*\mathcal M\to \mathbb R$ which is homogeneous of degree $1$ in $\xi$, 
\begin{equation}\label{eq:transport.def.intro}
\int_{S^*\m R^{2+1}} ((g^{-1})^{\alp\bt}\xi_\bt \rd_{x^{\alp}}\widetilde{a} - \f 12 (\rd_\mu g^{-1})^{\alp\bt} \xi_\alp \xi_\bt \rd_{\xi_\mu}\widetilde{a}) \,\f{\ud\nu}{|\xi|^2} = 0.
\end{equation}
\end{enumerate}

\subsubsection{Standard (bilinear) compensated compactness}

We now explain how microlocal defect measures can be applied to the Burnett conjecture. Recall that Einstein equation with $\m U(1)$ symmetry reduced to a $2+1$ dimensional system (see Section~\ref{sec:u1})
\begin{equation}\label{sysn}
\left\{
\begin{array}{l}
\Box_g \psi + \f 12 e^{-4\psi} g^{-1}(\ud \om, \ud \om) = 0,\\
\Box_g \om - 4 g^{-1} (\ud \om, \ud \psi) = 0,\\
\mathrm{Ric}_{\alp\bt}(g)= 2\partial_\alp \psi \partial_\bt \psi + \f 12 e^{-4\psi} \rd_\alp \om \rd_\bt \om.
\end{array}
\right.
\end{equation}

{Assume that} we have a sequence of solutions $\{(\psi_n, \om_n, g_n)\}_{n=1}^{+\infty}$ which satisfy \eqref{sysn}, with $g_n$ in an elliptic gauge, {which} moreover attains $C^0$-limit $(\psi_0, \om_0, g_0)$ with {the following estimates:}
\begin{equation}\label{eq:est.intro}
\|\rd^k (\psi_n - \psi_0 ,\, \om_n - \om_0,\, g_n - g_0)\|_{L^2\cap L^\i(\mathbb R^2)} \ls \lambda_n^{1-k},\quad k=0,1,\dots, 4.
\end{equation}

 The first step is to show that 
\begin{equation}\label{eq:CC.wave}
\Box_{g_n} \psi_n\rightharpoonup \Box_{g_0} \psi_0, \quad \Box_{g_n} \om_n\rightharpoonup \Box_{g_0} \om_0;
\end{equation}
\begin{equation}\label{eq:CC.null}
g_n^{-1}(\ud \om_n, \ud \om_n)\rightharpoonup g_0^{-1}(\ud \om_0, \ud \om_0),\quad g_n^{-1}(\ud \om_n, \ud \psi_n)\rightharpoonup g_0^{-1}(\ud \om_0, \ud \psi_0);
\end{equation}
\begin{equation}\label{eq:CC.Ric}
\mathrm{Ric}_{\alp\bt}(g_n) \rightharpoonup \mathrm{Ric}_{\alp\bt}(g_0).
\end{equation}
in the sense of distributions.

That \eqref{eq:CC.wave} holds is due to the divergence structure of the terms. That \eqref{eq:CC.null} is true is slightly more subtle but well-known, and is related to the standard compensated compactness: $g_0^{-1}(\ud \om_n, \ud \om_n)$ and $g_0^{-1}(\ud \om_n, \ud \psi_n)$ are \emph{null forms}, so that when \eqref{eq:est.intro} holds and that $\Box_{g_0}\om_n$ and $\Box_{g_0}\psi_n$ are bounded uniformly in $L^2\cap L^\i$, the convergence \eqref{eq:CC.null} holds.

Finally, \eqref{eq:CC.Ric} holds under our elliptic gauge condition. This is because
\begin{itemize}
\item the elliptic gauge gives strong compactness for \emph{spatial} derivatives of the metric components;
\item in this gauge the nonlinear structure is such that there are no quadratic products of time derivatives of the metric components.
\end{itemize}

Given \eqref{eq:CC.wave}--\eqref{eq:CC.Ric}, it follows that to capture how much the limit $(\psi_0,\om_0,g_0)$ deviates from solving \eqref{sysn}, we just need to understand {(for every vector field $Y \in C^\infty_c$)} the $n \to +\infty$ limit of
\begin{equation}\label{eq:intro.defect}
\int_{\mathcal M} \{ 2 (Y \psi_n) (Y \psi_n) + \f 12 e^{-4\psi_n} (Y \om_n )(Y \om_n) \} \, \mathrm{dVol}_{g_n} - \int_{\mathcal M} \{ 2 (Y \psi_0) (Y \psi_0) + \f 12 e^{-4\psi_0} (Y \om_0 )(Y \om_0) \} \, \mathrm{dVol}_{g_0}.
\end{equation}

{The deviation} of \eqref{eq:intro.defect} from $0$ is in particular captured by the microlocal defect measure. More precisely, {we define} the non-negative Radon measure $\ud \nu$ {(cf.~\eqref{eq:intro.MDM.def})} by 
\begin{equation}\label{eq:intro.defect.1}
\begin{split}
&\: \lim_{n\to +\infty} \int_{\mathcal M} \{ 2 (\rd_\alp (\psi_n-\psi_0)) (A\rd_\bt (\psi_n-\psi_0)) + \f {e^{-4\psi_0}}2  (\rd_\alp (\om_n -\om_0) )(A\rd_\bt (\om_n -\om_0)) \} \, \mathrm{dVol}_{g_0} \\
= &\: \int_{S^*\mathcal M} \f{a(x,\xi)\,\xi_\alp \xi_\bt \,\ud \nu}{|\xi|^2},
\end{split}
\end{equation}
{after passing to a subsequence (which we do not relabel). Then}
\begin{equation}\label{eq:intro.defect.2}
\lim_{n\to +\infty} \mbox{\eqref{eq:intro.defect}} = \int_{S^*\mathcal M} \langle Y,\, \xi\rangle^2 \,\f{\ud \nu}{|\xi|^2}.
\end{equation}
In particular, the limit $(\psi_0, \om_0, g_0)$ obeys the following system:
\begin{equation}\label{sysn.2}
\left\{
\begin{array}{l}
\Box_{g_0} \psi_0 + \f 12 e^{-4\psi_0} g_0^{-1}(\ud \om_0, \ud \om_0) = 0,\\
\Box_{g_0} \om_0 - 4 g_0^{-1} (\ud \om_0, \ud \psi_0) = 0,\\
\int_{\mathcal M} \mathrm{Ric}(g_0)(Y,Y)\, \mathrm{dVol}_{g_0} = \int_{\mathcal M} \{ 2(Y \psi_0)^2 + \f 12 e^{-4\psi_0} (Y \om_0)^2 \} \, \mathrm{dVol}_{g_0} + \int_{S^*\mathcal M} \langle Y,\, \xi\rangle^2 \,\f{\ud \nu}{|\xi|^2},
\end{array}
\right.
\end{equation}
where the final equation in \eqref{sysn.2} is to be understood as holding for {every} vector field $Y\in C^\infty_c$. \eqref{sysn.2} is exactly the form of the Einstein--massless Vlasov system, as long as the measure $\ud \nu$ is indeed a measure-valued (weak) solution to the massless Vlasov equation.

The main task of the paper is therefore to justify that in our \emph{quasilinear} setting, $\ud \nu$ still solves the massless Vlasov equation, i.e.~(analogs of) \eqref{eq:massless.def.intro} and \eqref{eq:transport.def.intro} still hold. Already in Section~\ref{sec:method.MDM}, we saw that \eqref{eq:massless.def.intro} only require weaker assumptions (cf.~\eqref{eq:ex.weak} and \eqref{eq:ex.strong}) and is therefore relatively straightforward. However, as we discuss below, it is much harder to obtain the transport equation \eqref{eq:transport.def.intro}.

\subsubsection{Model problem}

As we argued above, the key difficulty is to justify the transport equation for the microlocal defect measure. Observe already that in \eqref{eq:ex.strong}, one needs that $f_n \to f_0$ in the $L^2$ \underline{norm} in order to justify the transport equation. However, in our setting, we only have weak convergence so that the derivation of the transport equation must rely on some special compensation. Another issue is that the wave operator $\Box_{g_n}$ is now dependent on $n$. It is relatively straightforward to show that if $g_n$ tends to its limit $g_0$ in $C^1$, then the transport equation remains valid. However, again because of weak convergence, we need compensation in the relevant terms.

To elucidate some of the difficulties and the techniques to tackle them, consider the following simplified \emph{semilinear} model problem with $n$-dependent metrics:
\begin{equation}\label{eq:intro.model.eqn}
\begin{cases}
\Box_{g_n}\phi_n = g_n^{-1}(\ud \phi_n,\ud \phi_n),\\
g_n = - N_n^2 (dt)^2 +  (dx^1)^2 + (dx^2)^2.
\end{cases}
\end{equation}
Assume also, for simplicity in this exposition, that 
$\phi_n \to 0$ and $N_n\to 1$ pointwise with the following bounds:
\begin{equation}\label{eq:intro.model.est}
\|\rd^k\phi_n\|_{L^2\cap L^\i(\mathbb R^{2+1})} + \|\rd^k (N_n-1)\|_{L^2\cap L^\i(\mathbb R^{2+1})} \ls \lambda_n^{1-k},
\end{equation}
and that the spatial derivatives of $N_n$ (denoted by $\nab$) obey stronger estimates:
\begin{equation}\label{eq:intro.N.spatial}
\|\nab N_n\|_{L^2\cap L^\i(\mathbb R^{2+1})} \ls \lambda_n^{\f 12}.
\end{equation}
(Note that the assumptions that $\phi_n\to 0$ and $N_n \to 1$ are slight over-simplifications. On the other hand, \eqref{eq:intro.N.spatial} is a reasonable assumption in view of the elliptic gauge. See Section~\ref{sec:further.issues}.)

{Define the microlocal defect measure $\ud \nu$ according to \eqref{eq:intro.MDM.def}.} Our goal will be to show that for any $\widetilde{a}(x,\xi)$ which is homogeneous of order $+1$ in $\xi$, 
\begin{equation}\label{eq:intro.goal}
\begin{split}
0 = \int_{S^*\mathbb R^{2+1}} (- \xi_t \rd_t + \xi_i \rd_i) \widetilde{a} \,\f{\ud\nu}{|\xi|^2}.
\end{split}
\end{equation}

We derive \eqref{eq:intro.goal} using an energy identity. Let $A$ be a pseudo-differential operator with principal symbol $a = \f{\widetilde{a}}{\xi_t}$. A long (but unilluminating from the point of view of this discussion) computation yields
\begin{align}
&\: \int_{\mathbb R^{2+1}}
\{\frac{\partial_t \phi_n}{N_n} [\partial_t,A] \frac{\partial_t \phi_n}{N_n} 
	+\delta^{ij} (N_n \partial_j \phi_n) [A, \partial_i](\frac{\partial_t \phi_n}{N_n})\} \,\ud x\label{eq:intro.main.0}\\
	&\: + \int_{\mathbb R^{2+1}} \{ 
	\delta^{ij} (N_n \partial_j \phi_n) [\partial_t,A] (\frac{1}{N_n} \partial_i \phi_n)
	+ [A,\partial_j]( N_n\partial_i \phi_n)\delta^{ij} ( \frac{\partial_t \phi_n}{N_n})\}
	\ud x \label{eq:intro.main}\\
&\: + \int_{\mathbb R^{2+1}} \{ (\rd_i \phi_n) \de^{ij} (\rd_t N_n) A \f{\rd_j \phi_n}{N_n} - (\rd_i \phi_n)\de^{ij} N_n A\f{(\rd_t N_n) (\rd_j \phi_n)}{N_n^2} \} \,\ud x \label{eq:intro.SA} \\
&\: + \int_{\mathbb R^{2+1}} \{ (\rd_i \phi_n)\de^{ij} N_n A\f{(\rd_j N_n) (\rd_t \phi_n)}{N_n^2} - (\rd_t \phi_n)(\rd_j N_n)\de^{ij} A\f{(\rd_i \phi_n)}{N_n} \} \,\ud x \label{eq:intro.trivial}\\
= &\: \int_{\mathbb R^{2+1}} \{ \f{\rd_t\phi_n}{N_n} A(N_n g_n^{-1}(\ud\phi_n, \ud\phi_n)) + A(\f{\rd_t\phi_n}{N_n}) N_n g_n^{-1}(\ud\phi_n, \ud\phi_n) \} \,\ud x \label{eq:intro.wave}\\
&\:-\int_{\mathbb R^{2+1}}  \f{\rd_t \phi_n}{N_n} \de^{ij} \{(N_n^2 - 1) \rd_i A \f{\rd_j\phi_n}{N_n}-  \rd_i A ((N_n^2 - 1) \f{\rd_j \phi_n}{N_n})   \} \,\ud x.\label{eq:intro.hard}
\end{align}
{(See Section~\ref{sec:energy.id} for details of a similar computation.)}

By \eqref{eq:intro.MDM.def}, (after passing to a subsequence if necessary {and using that $\xi_t^2=\delta^{ij}\xi_i \xi_j$ on the support of $\ud\nu$}) as $n\to +\infty$, ${\eqref{eq:intro.main.0}+}\eqref{eq:intro.main} \to 2\times \mbox{RHS of \eqref{eq:intro.goal}}$. It therefore suffices to show that the other lines all tend to $0$ as $n\to +\infty$.

That \eqref{eq:intro.SA} and \eqref{eq:intro.trivial} tend to $0$ are relatively straightforward: these rely respectively on the self-adjointness (up to error) of $A$ and \eqref{eq:intro.N.spatial}. 

However, that \eqref{eq:intro.wave} and \eqref{eq:intro.hard} both tend to $0$ is more subtle. This requires trilinear compensated compactness. We now turn to that.

\subsubsection{Trilinear compensated compactness}

There are two types of trilinear compensated compactness that we use. The first kind relates to term \eqref{eq:intro.wave}. We call this \emph{trilinear compensated compactness for three waves} as it is a trilinear term in the derivatives of $\phi_n$, and the compensated compactness relies in particular on good bounds for $\Box_{g_n}\phi_n$. The second kind of trilinear compensated compactness relates to the term \eqref{eq:intro.hard}. We call this \emph{elliptic-wave trilinear compensated compactness} since it relies on both $\phi_n$ satisfying wave estimates and $N_n$ satisfying good spatial derivative estimate \eqref{eq:intro.N.spatial} (which in the actual problem is obtained via elliptic estimates for $N_n$).

\textbf{Trilinear compensated compactness for three waves.} In fact each term in \eqref{eq:intro.wave} tends to $0$. We discuss only a simpler statement, which captures already the main idea involved. We argue that for $\phi_n$ satisfying \eqref{eq:intro.model.est}, 
\begin{equation}\label{eq:phi.trilinear.model}
(\partial_t \phi_n) g_n^{-1} (\ud \phi_n,\ud \phi_n)\rightharpoonup 0
\end{equation}
in the sense of distributions.

 To this end, first observe that by \eqref{eq:intro.model.eqn} and \eqref{eq:intro.model.est}, 
\begin{equation}\label{eq:intro.model.Box.est} 
\|\Box_{g_n} \phi_n \|_{L^1\cap L^\i(\m R^{2+1})} \ls 1.
\end{equation}
Then notice that we can write
 $$g_n^{-1} (\ud \phi_n,\ud \phi_n) = \f 12 \Box_{g_n} (\phi_n^2)- \phi_n(\Box_{g_n} \phi_n).$$
 It follows that for $\chi \in C^\infty_c(\m R^{2+1})$,
 $$ \int_{\m R^{2+1}} \chi (\partial_t \phi_n) g_n^{-1} (\ud \phi_n,\ud \phi_n) \, N_n\, \ud x 
 =  \frac{1}{2}\int_{\m R^{2+1}} \chi(x)(\rd_t\phi_n)\Box_{g_n} (\phi_n^2) \, N_n\,\ud x - \int_{\m R^{2+1}} \chi(x)(\rd_t\phi_n)\phi_n\Box_{g_n} (\phi_n) \, N_n\,\ud x.$$
 The second term clearly $\to 0$ by \eqref{eq:intro.model.est} and \eqref{eq:intro.model.Box.est}. After integrating by parts, the first term can be written as a term taking the form of the second terms plus $O(\lambda_n)$ error, which then implies \eqref{eq:phi.trilinear.model}.

\textbf{Elliptic-wave trilinear compensated compactness.} We now turn to the term \eqref{eq:intro.hard}. Using the estimates in \eqref{eq:intro.model.est}, it follows that \eqref{eq:intro.hard} has the same limit as
\begin{equation}\label{eq:intro.hard.again}
\int_{\mathbb R^{2+1}} \partial_t \phi_n \de^{ij} \{ (N_n^2-1)\rd_i A \rd_j \phi_n - \rd_i A [(N_n^2 - 1) \rd_j \phi_n] \} \,\ud x.
\end{equation}

If $N_n\to 1$ in $C^1$, then \eqref{eq:intro.hard.again} can be easily handled using the Calder\'on commutator estimate (see Lemma~\ref{lem:PSIDOs}.6), which gives
$$|\eqref{eq:intro.hard.again}| \ls \|\rd_t \phi_n\|_{L^2(\mathbb R^{n+1})} \|N_n^2 -1\|_{C^1(\mathbb R^{2+1})} \|\rd_j\phi_n\|_{L^2(\mathbb R^{2+1})} \to 0.$$
The main issue is therefore that while $N_n-1$ and $\nabla (N_n-1)$ indeed converge uniformly, the term $\rd_t N_n$ only converges to $0$ \emph{weakly}. We therefore need the more precise structure in \eqref{eq:intro.hard.again} and argue in Fourier space.

To illustrate the idea, assume that $A$ is simply a Fourier multiplier, i.e.~its symbol $a(x, \xi) = m(\xi)$ is independent of $x$. This indeed captures the main difficulty. In this case, since $\phi_n$ is real-valued, we can assume also that $m$ is even.

Under these assumptions, we can rewrite \eqref{eq:intro.hard.again} up to terms tending to $0$.
\begin{equation*}
\begin{split}
&\:  \left| \eqref{eq:intro.hard.again} - \int_{\mathbb R^{2+1}} \partial_t \phi_n \de^{ij} \{ (N_n^2-1) A \rd^2_{ij} \phi_n - A [(N_n^2 - 1) \rd^2_{ij} \phi_n] \} \,\ud x \right| \\
 = & \: \left| \int_{\mathbb R^{2+1}} \partial_t \phi_n \de^{ij} A [\rd_i(N_n^2 - 1) (\rd_j \phi_n)] \,\ud x\right| \ls \|\rd_t \phi_n\|_{L^2(\mathbb R^{n+1})} \|\nab(N_n^2 -1)\|_{C^0(\mathbb R^{2+1})} \|\rd_j\phi_n\|_{L^2(\mathbb R^{2+1})} \to 0.
\end{split}
\end{equation*}
Then we compute (cf.~Proposition \ref{prop:main.N.term})
\begin{equation}\label{eq:intro.hard.yet.again}
\begin{split}
&\: \int_{\mathbb R^{2+1}} \partial_t \phi_n \de^{ij} \{ (N_n^2-1) A \rd^2_{ij} \phi_n - A [(N_n^2 - 1) \rd^2_{ij} \phi_n]\} \,\ud x \\
= &\: \frac{i}{2}\int_{\m R^{2+1}\times \m R^{2+1} }(\xi_t|\eta_i|^2+ \eta_t|\xi_i|^2)\widehat{(N_n^2-1)}(\eta-\xi)\widehat{\phi_n}(-\eta)\widehat{\phi_n}(\xi)(m(\xi)-m(\eta))\, \ud\xi \,\ud\eta,
\end{split}
\end{equation}
where we decomposed $\xi$ and $\eta$ into their time and spatial parts: $\xi=(\xi_t,\xi_i), \eta = (\eta_t,\eta_i)$.
	
	Roughly speaking $ (\xi_t|\eta_i|^2+ \eta_t|\xi_i|^2)$ corresponds to three derivatives, and hence contributes roughly to $O(\lambda_n^{-3})$ in size (see \eqref{eq:intro.model.est}). This is just enough to show that the \eqref{eq:intro.hard.yet.again} is bounded using the estimates \eqref{eq:intro.model.est}. To deduce that \eqref{eq:intro.hard.yet.again} in fact tends to $0$, observe
	\begin{itemize}
	\item our main enemy is when $N_n^2-1$ has high-frequency in $t$, i.e.~$|\eta_t -\xi_t|$ is large (since we have better estimates for spatial derivatives of $N_n$; see \eqref{eq:intro.N.spatial});
	\item we can gain with factors of $\xi_i - \eta_i$ (corresponding to spatial derivatives of $N_n^2 -1$) or $\xi_t^2 - |\xi_i|^2$ or $\eta_t^2 - |\eta_i|^2$ (corresponding to $\Box_{g_0}$ acting on $\phi_n$).
	\end{itemize}
	
Now the Fourier multiplier in \eqref{eq:intro.hard.yet.again} can be written as 	
	$$\xi_t|\eta_i|^2+ \eta_t|\xi_i|^2 = \eta_t(\xi_i+\eta_i)(\xi_i-\eta_i)+ |\eta_i|^2(\xi_t+\eta_t).$$
	The first term contains a factor of $(\xi_i-\eta_i)$ which as mentioned above corresponds to a spatial derivative of $N_n$ and behaves better.
	For the other term, we rewrite
	$$ |\eta_i|^2(\xi_t+\eta_t)=|\eta_i|^2 \frac{\xi_t^2-\eta_t^2}{\xi_t-\eta_t}=|\eta_i|^2 \frac{\xi_t^2-|\xi_i|^2}{\xi_t-\eta_t}
	+|\eta_i|^2 \frac{|\eta_i|^2-\eta_t^2}{\xi_t-\eta_t}+|\eta_i|^2 \frac{(\xi_i+\eta_i)\cdot (\xi_i - \eta_i)}{\xi_t-\eta_t}.$$
	When $\xi_t - \eta_t$ is large, we can make use of the gain in $\xi_t^2-|\xi_i|^2$, $|\eta_i|^2-\eta_t^2$ or $(\xi_i - \eta_i)$ to conclude that this term behaves better than expected.
	
\subsubsection{Further issues}\label{sec:further.issues}
	
We finally discuss a few additional issues that we encounter in the proof, but are not captured by the simplified model problem above.
\begin{enumerate}
  \item (Spacetime cutoff) Our solution is a priori only defined in a subset of $\mathbb R^{2+1}$, with estimates that hold only locally. We therefore need to introduce and control appropriate cutoff functions.
  \item (Estimates for metric components) The estimates for the metric coefficients has to be derived using the elliptic equations that they satisfy. 
	\begin{enumerate}
	\item To show that \eqref{eq:intro.N.spatial} holds for the metric components, we use the fact that the metric components satisfy (semilinear) elliptic equations due to our gauge condition.
	\item There is in fact further structure for the estimates for the metric components: while the spatial derivatives of \emph{all} metric components obey a better estimate of the form \eqref{eq:intro.N.spatial}, the $\rd_t$ derivative of the metric component of $\gamma$ (see \eqref{g.form}) also obeys a better estimate due to the gauge condition. This fact is crucially used.
	\end{enumerate}
	\item (Non-trivial limit for wave variables) In general $\phi_n$ does not tend to $0$, but instead tends to a non-trivial limit $\phi_0$ (with estimates $\|\rd^k (\phi_n - \phi_0)\|_{L^2\cap L^\i} \ls \lambda_n^{1-k}$). 
	\begin{enumerate}
	\item The non-triviality of $\phi_0$ already means that (in addition to an analogue of \eqref{eq:intro.main}--\eqref{eq:intro.hard}) we need to derive an energy identity for the limit spacetime and take difference appropriately.
	\item More seriously, we need an additional ingredient, which is \underline{not} captured by our model problem.  In general, when the limit $\phi_0$ is not identically $0$, the corresponding trilinear compensated compactness statement gives (see Proposition~\ref{prop:trilinear}),
	$$(\rd_t\phi_n) g_0^{-1}(\ud \phi_n, \ud\phi_n) - 2 (\rd_t\phi_n) g_0^{-1}(\ud \phi_n, \ud\phi_0) \rightharpoonup -(\rd_t\phi_0) g_0^{-1}(\ud \phi_0, \ud\phi_0)$$
	in the sense of distributions. In other words, in our model problem, if we assume $\phi_n\to \phi_0 \not\equiv 0$ (but still assuming $N_n \to 1$), we get 
	\begin{equation}\label{eq:intro.problem}
	\begin{split}
	&\: \int_{\mathbb R^{2+1}} \f{\rd_t\phi_n}{N_n} A(N_n g_n^{-1}(\ud\phi_n, \ud\phi_n))  \,\ud x \\
	\to &\: 2 \int_{S^*\mathbb R^{2+1}}  a(x,\xi) (g_0^{-1})^{\alp\bt} \xi_t \xi_\alp (\rd_\bt\phi_0)  \,\f{\ud \nu}{|\xi|^2} + \int_{\mathbb R^{2+1}} \rd_t\phi_0 A(g_0^{-1}(\ud\phi_0, \ud\phi_0))  \,\ud x \\
	\neq &\: \int_{\mathbb R^{2+1}} \rd_t\phi_0 A(g_0^{-1}(\ud\phi_0, \ud\phi_0))  \,\ud x,
	\end{split}
	\end{equation}
	which does \underline{not} cancel off the corresponding term in the energy identity for $\phi_0$. 
	
	The actual system, despite its complications, is in fact better in the sense that all the terms involving the microlocal defect measure as in \eqref{eq:intro.problem} \underline{cancel}! This cancellation is related to the Lagrangian structure of the wave map system.
	\end{enumerate}
	\item (Freezing coefficients) Since the equation for $\phi_n$ is quasilinear, we can not take the Fourier transform as in the model problems. To overcome this difficulty, we will introduce a partition of our domains into ball of radius $\lambda_n^{\ep_0}$ {(with well-chosen $\ep_0$)}, and show that in each of these balls the metric coefficients can be well-approximated (in terms of $\lambda_n$) by constants {so as} to carry out our argument. See Sections~\ref{sec:freeze.coeff} and \ref{sec:elliptic.wave.tri}.
\end{enumerate}

Finally, let us emphasize that {in all the above discussions} we have relied very heavily on the structure of the terms involved. Indeed it is easy to slightly modify the terms so that the argument fails.

\subsection{Discussions}\label{sec:discussions}

\subsubsection{The reverse Burnett conjecture}\label{sec:reverse}
Already in \cite{Burnett}, Burnett suggested that a reverse version of Conjecture~\ref{conj:Burnett} may also hold, in the sense that any sufficiently regular solution to the Einstein--massless Vlasov system can be approximated weakly by a sequence of high frequency vacuum spacetimes in the sense of \eqref{eq:bounds.caricature}. 

Like Conjecture~\ref{conj:Burnett}, in full generality the reverse Burnett conjecture remains open. On the other hand, some results have been achieved in the $\mathbb U(1)$-symmetric polarized case in our previous \cite{HL}. More precisely, given a generic small and regular polarized $\m U(1)$-symmetric solution to the Einstein--null dust system with a finite number of families of null dust which are angularly separated in an appropriate sense, we proved that it can arise as a weak limit\footnote{Note however that the convergence we obtained was slightly weaker than \eqref{eq:bounds.caricature}; see \cite{HL} for precise convergence rates.} of solutions to the Einstein vacuum system.

Note that the Einstein--null dust system is indeed a special case of the Einstein--massless Vlasov system, where at each spacetime point the Vlasov measure is given as a finite sum of delta measures in the cotangent space; see Section~\ref{sec:null.dust}. In fact, since finite sums of delta measures form a weak-* dense subset of finite Radon measures, one can even hope that the results in \cite{HL} can be extended to a larger class of solutions to the Einstein--massless Vlasov system.


\subsubsection{Trilinear compensated compactness} 
To the best of our knowledge, the phenomenon of trilinear compensated compactness has previously only been studied in the classical work \cite{JJR}. The work considers three sequences of functions $\{\phi_{1,i}\}_{i=1}^{+\infty}$, $\{\phi_{2,i}\}_{i=1}^{+\infty}$ and $\{\phi_{3,i}\}_{i=1}^{+\infty}$ on $\mathbb R^3$, each of which has a weak-$L^2$ limit and moreover $X_j \phi_{j,i}$ is bounded in $L^2$ uniformly in $i$ for some smooth vector fields $X_1$, $X_2$ and $X_3$. It is proven that \emph{under suitable assumptions of $X_j$}, the product $\phi_{1,i}\phi_{2,i}\phi_{3,i}$ converges in the sense of distributions to the product of the weak limits.

\subsection{Outline of the paper}\label{sec:outline}
{The remainder of the paper is structured as follows. In \textbf{Section~\ref{sec:prelim}}, we begin with an introduction to various notions important for our setup, including the symmetry and gauge conditions, and the notion of measure-valued solutions to the Einstein--massless Vlasov system. In \textbf{Section~\ref{sec.notations}}, we then introduce the notation used for the remainder of the paper. In \textbf{Section~\ref{sec:main.results}}, we give the precise statements of the main results of the paper. In \textbf{Section~\ref{sec:PSIDOs}}, we recall some standard facts about pseudo-differential operators and microlocal defect measures.

Starting in Section~\ref{sec:MDM.psi.om}, we begin with the proof of the main results. In \textbf{Section~\ref{sec:MDM.psi.om}}, we apply derive some simple facts about the microlocal defect measures in our setting. In \textbf{Section~\ref{sec:pf.thm.prelim}}, we prove our first main theorem, Theorem~\ref{thm:prelim} (cf.~Theorem~\ref{thm:intro.prelim}).

In the remaining sections, we prove our other main theorem, Theorem~\ref{thm:main} (cf.~Theorem~\ref{thm:intro}). \textbf{Section~\ref{sec:preliminaries}} gives some preliminary observations. In \textbf{Section~\ref{sec:energy.id}}, we derive the main energy identities (cf.~\eqref{eq:intro.main}--\eqref{eq:intro.hard}) which will be used to derive the transport equation of the microlocal defect measures. In \textbf{Section~\ref{sec:energy.id.n.1}}, we first handle the easier terms in deriving the transport equation. In the next two sections we handle terms for which we need \emph{trilinear compensated compactness}: terms requiring elliptic-wave compensated compactness will be treated in \textbf{Section~\ref{sec:elliptic.wave.tri}}; and terms requiring three-waves compensated compactness will be treated in \textbf{Section~\ref{sec:cc}}. The proof is finally concluded in \textbf{Section~\ref{sec:final}}.
}

\subsection*{Acknowledgements} C.~Huneau is supported by the ANR-16-CE40-0012-01. J.~Luk is supported by a Terman fellowship, a Sloan fellowship and the NSF grant DMS-1709458.

\section{Setup and preliminaries}\label{sec:prelim}

\subsection{$\mathbb U(1)$ symmetry}\label{sec:u1}

For the remainder of the paper, fix a $T>0$ and take as our ambient $(3+1)$-dimensional manifold ${}^{(4)}\mathcal M = \mathcal M\times \mathbb R$, where $\mathcal M = (0,T)\times \mathbb R^2$. Introduce coordinates $(t,x^1,x^2)$ on $\mathcal M$ and $(t,x^1,x^2,x^3)$ on ${}^{(4)}\mathcal M$ in the obvious manner.

Consider a Lorentzian metric ${}^{(4)}g$ on ${}^{(4)}\mathcal M$ with a $\mathbb U(1)$ symmetry, i.e.~${}^{(4)}g$ takes the form 
\begin{equation}\label{eq:4g}
^{(4)}g= e^{-2\psi}g+ e^{2\psi}(\ud x^3+ \mathfrak A_\alp \ud x^{\alp})^2,
\end{equation}
where $g$ is a Lorentzian metric on $\mathcal M$, $\psi$ is a real-valued function on $\mathcal M$ and $\mathfrak A_{\alp}$ is a real-valued $1$-form on $\mathcal M$.

Under these assumptions, it is well known that the Einstein vacuum equations for $({}^{(4)}\mathcal M,{}^{(4)}g)$ reduces to the following $(2+1)$-dimensional Einstein--wave map system for $(\mathcal M, g, \psi, \om)$ (see for instance \cite{livrecb}),
\begin{equation}\label{eq:U1vac}
\left\{
\begin{array}{l}
\Box_g \psi + \f 12 e^{-4\psi} g^{-1}(\ud \om, \ud \om) = 0,\\
\Box_g \om - 4 g^{-1} (\ud \om, \ud \psi) = 0,\\
\mathrm{Ric}_{\alp\bt}(g)= 2\partial_\alp \psi \partial_\bt \psi + \f 12 e^{-4\psi} \rd_\alp \om \rd_\bt \om,
\end{array}
\right.
\end{equation}
where $\om$ is a real-valued function which relates to $\mathfrak A_\alp$ via the relation
\begin{equation}\label{eq:mfkA}
(\ud \mathfrak A)_{\alp\bt}  = \rd_\alp \mathfrak A_\bt - \rd_\bt \mathfrak A_\alp = \f 12 e^{-4\psi} (g^{-1})^{\lambda\de} \in_{\alp\bt\lambda} \rd_\de \om,
\end{equation}
where $\in_{\alp\bt\lambda}$ denotes the volume form corresponding to $g$.

\subsection{Elliptic gauge}\label{sec:elliptic.gauge}

We will work in a particular elliptic gauge for the $(2+1)$-dimensional metric $g$ on $\mathcal M$ (cf.~\eqref{eq:4g}). More precisely, we will assume that $g$ takes the form
\begin{equation}\label{g.form}
g=-N^2dt^2 + e^{2\gamma}\delta_{ij}(dx^i + \beta^i dt)(dx^j + \beta^jdt).
\end{equation}
such that the following relation is satisfied
\begin{equation}\label{trace.free}
\rd_t \gamma -\bt^i \rd_i \gamma - \f 12 \rd_i \bt^i = 0,
\end{equation}
where in \eqref{g.form} and \eqref{trace.free} (and in the remainder of the paper), repeated lower case Latin indices are summed over $i,j=1,2$.

We remark that the condition \eqref{trace.free} ensures that the constant-$t$ hypersurfaces have zero mean curvature and the condition \eqref{g.form} ensures that the metric on a constant-$t$ hypersurface induced by $g$ is conformal to the flat metric.

Given the form of the metric in \eqref{g.form}, the inverse metric $g^{-1}$ takes the form
\begin{equation}\label{g.inverse}
g^{-1} = - \f 1{N^2} (\rd_t - \bt^i \rd_i) \otimes (\rd_t - \bt^j \rd_j) + e^{-2\gamma} \de^{ij} \rd_i \otimes \rd_j.
\end{equation}

Assuming that a metric $g$ on $\mathcal M$ obeys \eqref{g.form} and \eqref{trace.free}, the metric components $N$, $\gamma$ and $\bt^i$ satisfy the following elliptic equations; see \cite[Appendix~B]{HL}:
\begin{align}
&{\de^{ik}}\partial_{{k}} H_{ij}= -\frac{e^{2\gamma}}{N} \mathrm{Ric}_{0j}, \label{elliptic.1}\\
&\Delta \gamma = -\frac{e^{2\gamma}}{N^2}G_{00} - \frac{1}{2}e^{-2\gamma}|H|^2,\label{elliptic.2}\\
&\Delta N =Ne^{-2\gamma}|H|^2 -\f 12 e^{2\gamma} N R + \f{e^{2\gamma}}{N} G_{00},\label{elliptic.3}\\
& (\mathfrak L\beta)_{ij}=2Ne^{-2\gamma}H_{ij},\label{elliptic.4}
\end{align}
where $e_0=\rd_t-\bt^i\rd_i$, $\mathrm{Ric}_{\alp\bt}$ is the Ricci tensor, $R$ is the scalar curvature, $G_{\alp\bt} = \mathrm{Ric}_{\alp\bt} - \f 12 R g_{\alp\bt}$ is the Einstein tensor, and $\mathfrak L$ is the conformal Killing operator given by 
\begin{equation}\label{L.def}
(\mathfrak L\beta)_{ij}:=\delta_{j\ell}\rd_i\beta^\ell+\delta_{i\ell}\rd_j\beta^\ell-\delta_{ij}\rd_k\beta^k.
\end{equation}
Moreover, assuming \eqref{g.form} and \eqref{trace.free}, the spatial components of the Ricci tensor is given by (see \cite[Proposition~B2]{HL})
\begin{align}
\label{Rij} \mathrm{Ric}_{ij} = &\delta_{ij}\left(-\Delta \gamma-\frac{1}{2N}\Delta N\right)
-\frac{1}{N}(\partial_t-\beta^k\partial_k)H_{ij}-2e^{-2\gamma} H_i{ }^{\ell} H_{j\ell} \\
\notag &+\frac{1}{N}\left(\partial_j \beta^k H_{ki}+\partial_i \beta^k H_{kj}\right)-\frac{1}{N}\left( \partial_i \partial_j N -\frac{1}{2}\delta_{ij}\Delta N -\left(\delta_i^k\partial_j \gamma +\delta_j^k\partial_i \gamma-\delta_{ij} \de^{\ell k}\partial_{\ell} \gamma\right)\partial_k N \right).
\end{align}

In the particular case where the vacuum equations \eqref{eq:U1vac} are satisfied, \eqref{elliptic.1}--\eqref{elliptic.3} take the following form:
\begin{align}
&{\de^{ik}}\partial_{{k}} H_{ij}= -\frac{ e^{2\gamma}}{N}\left(2(e_0 \psi)( \partial_j \psi) + \f 12 e^{-4\psi} (e_0\om)(\rd_j\om)\right), \label{elliptic.wave.1}\\
&\Delta \gamma = -(|\nabla \psi|^2 +\f 14 e^{-4\psi} |\nabla \om|^2) -\frac{e^{2\gamma}}{N^2} ((e_0 \psi)^2+\f 14 e^{-4\psi} (e_0\om)^2) - \frac{1}{2}e^{-2\gamma}|H|^2,\label{elliptic.wave.2}\\
&\Delta N =Ne^{-2\gamma}|H|^2+ \frac{e^{2\gamma}}{N}(2(e_0 \psi)^2+\f 12 e^{-4\psi}(e_0\om)^2).\label{elliptic.wave.3}
\end{align}
Combining \eqref{elliptic.4} and \eqref{elliptic.wave.1}, we also obtain the following second order elliptic equation for $\bt^j$ when \eqref{eq:U1vac} are satisfied
\begin{equation}\label{elliptic.wave.4}
\Delta \beta^j={\delta^{ik}}\delta^{j\ell}\rd_k\left(\log(Ne^{-2\gamma})\right)(L\beta)_{i\ell}-4 \delta^{ij}(e_0 \phi)( \partial_i \phi) \end{equation}
Moreover, \eqref{Rij} takes the following form when \eqref{eq:U1vac} are satisfied:
\begin{equation}\label{Rij.wave}
\begin{split}
&\: 2\partial_i \psi \partial_j \psi + \f 12 e^{-4\psi} \rd_i \om \rd_j \om \\
= &\: \delta_{ij}\left(-\Delta \gamma-\frac{1}{2N}\Delta N\right)
-\frac{1}{N}(\partial_t-\beta^k\partial_k)H_{ij}-2e^{-2\gamma} H_i{ }^{\ell} H_{j\ell} \\
 &\: +\frac{1}{N}\left(\partial_j \beta^k H_{ki}+\partial_i \beta^k H_{kj}\right)-\frac{1}{N}\left( \partial_i \partial_j N -\frac{1}{2}\delta_{ij}\Delta N -\left(\delta_i^k\partial_j \gamma +\delta_j^k\partial_i \gamma-\delta_{ij} \de^{\ell k}\partial_{\ell} \gamma\right)\partial_k N \right).
\end{split}
\end{equation}

Finally, given a metric in the gauge \eqref{g.form}, the wave operator takes the following form:
\begin{equation}\label{eq:wave}
\Box_g f = \f{1}{\sqrt{-\det g}} \rd_\alp \Big(\gi^{\alp\bt} \sqrt{-\det g} \rd_\bt f \Big) = -\f{e_0^2 f}{N^2} - e^{-2\gamma} \de^{ij} \rd^2_{ij} f + \f{e_n N}{N^3} e_0 f + \f{e^{-2\gamma}}N \de^{ij} \rd_i N \rd_j f.
\end{equation}

\subsection{Measure solutions to the Einstein--massless Vlasov system}\label{sec:EMV}

\begin{df}[Measure solutions to Vlasov equation]\label{def:Vlasov}
Let $(\mathcal M,g)$ be a $C^1$ Lorentzian manifold. We say that a non-negative finite Radon measure $\ud\mu$ on $T^*\mathcal M$ with $\int_{T^*\mathcal M} |\xi|^2 \,\ud\mu<+\infty$ solves the massless Vlasov equation if the following two conditions both hold:
\begin{enumerate}
\item $\ud\mu$ is supported on the zero mass shell $\{(x,\xi)\in T^*\mathcal M : (g^{-1})^{\alp\bt} \xi_\alp \xi_\bt = 0\}$.
\item For every function $a(x,\xi)\in C^\infty_c(T^*\mathcal M\setminus \{0\})$, it holds that 
\begin{equation}\label{eq:transport.orig}
\begin{split}
\int_{T^*\mathcal M\setminus\{0\}} ((g^{-1})^{\alp\bt}\xi_\bt \rd_{x^{\alp}}a - \f 12 (\rd_\mu g^{-1})^{\alp\bt} \xi_\alp \xi_\bt \rd_{\xi_\mu}a) \,\ud\mu = 0.
\end{split}
\end{equation}
\end{enumerate}
\end{df}

Definition~\ref{def:Vlasov} is indeed a generalization of the ``usual'' Vlasov equation, where $\ud\mu$ is absolutely continuous with respect to the natural measure on the zero mass shell. More precisely, 
\begin{proposition}
Let $(x^0, x^1, \dots, x^n)$ be a system of local coordinates on $U\subset \mathcal M$. Introduce a local coordinate system $(\bar{x}^0, \bar{x}^1,\dots, \bar{x}^n, \bar{\xi}_1, \dots, \bar{\xi}_n) :=(x^0, x^1, \dots, x^n, \xi_1, \dots, \xi_n)$ on the zero mass shell restricted to $U$ (which is a $(2n+1)$-dimensional sub-manifold of the cotangent bundle). Here, and in the proof, we use the bar in $\rd_{\bar{x}^\alp}$, etc.~to indicate that the derivative is to be understood as the coordinate derivative with respect to the coordinate system on the zero mass shell. On the zero mass shell, $\xi_0$ will be understood as a function of $(x^0, x^1, \dots, x^n, \xi_1, \dots, \xi_n)$ defined implicitly by $(g^{-1})^{\alp\bt} \xi_\alp \xi_\bt = 0$. {(Note that $\xi_0$ is well-defined locally.)}

Suppose $f:\{(x,\xi)\in T^* U : (g^{-1})^{\alp\bt} \xi_\alp \xi_\bt = 0{,\, \xi \neq 0}\} \to [0,+\infty)$ is a $C^1$ function satisfying the equation
\begin{equation}\label{eq:classical.Vlasov}
(g^{-1})^{\alp\bt}\xi_\alp\rd_{\bar{x}^\bt} f - \f 12 \rd_{\bar{x}^i} (g^{-1})^{\alp\bt} \xi_\alp\xi_\bt \rd_{\bar{\xi}_i} f= 0.
\end{equation}
Then, for $\ud\mu:= f\,\f{\ud x^0\,\ud x^1\,\cdots\,\ud x^n\,\ud \xi_1\,\cdots \,\ud \xi_n}{{|(g^{-1})^{\alp0}\xi_\alp|}}$, \eqref{eq:transport.orig} holds for {every} $a\in C^\infty_c(T^*U\setminus\{0\})$.
\end{proposition}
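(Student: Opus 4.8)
The plan is to compute the left-hand side of \eqref{eq:transport.orig} directly on the zero mass shell $\Sigma := \{(x,\xi)\in T^*U : \gi^{\alp\bt}\xi_\alp\xi_\bt = 0\}$ --- on which $\ud\mu$ is supported, so the support condition of Definition~\ref{def:Vlasov} is immediate --- using the chart $(\bar x^0,\dots,\bar x^n,\bar\xi_1,\dots,\bar\xi_n)$. Throughout, abbreviate $P^\alp := \gi^{\alp\bt}\xi_\bt$ (so the density in $\ud\mu$ is $f/P^0$ with $P^0 := \gi^{\alp 0}\xi_\alp$) and $Q_\mu := \f12(\rd_\mu g^{-1})^{\alp\bt}\xi_\alp\xi_\bt$; the chart and $\ud\mu$ itself are defined exactly on the part of $\Sigma$ where $P^0 \neq 0$, and I work where $P^0$ has a fixed sign so that $\ud\mu$ is genuinely non-negative. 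The first step is to differentiate the relation $\gi^{\alp\bt}\xi_\alp\xi_\bt = 0$ along $\Sigma$ --- once in each $\bar x^\alp$ and once in each $\bar\xi_i$, using that the $\xi_i$ are coordinates on $\Sigma$ while $\xi_0 = \xi_0(\bar x,\bar\xi)$ --- which gives the implicit-function identities $\rd_{\bar x^\alp}\xi_0 = -Q_\alp/P^0$ and $\rd_{\bar\xi_i}\xi_0 = -P^i/P^0$. With $\bar a := a|_\Sigma$, the chain rule $\rd_{x^\alp}a = \rd_{\bar x^\alp}\bar a - (\rd_{\bar x^\alp}\xi_0)\rd_{\xi_0}a$, $\rd_{\xi_i}a = \rd_{\bar\xi_i}\bar a - (\rd_{\bar\xi_i}\xi_0)\rd_{\xi_0}a$, together with these identities, rewrites the integrand $P^\alp\rd_{x^\alp}a - Q_\mu\rd_{\xi_\mu}a$ of \eqref{eq:transport.orig}: the $\rd_{\xi_0}a$ contributions cancel identically --- their coefficient is $P^\alp Q_\alp/P^0 - Q_0 - Q_i P^i/P^0 = 0$, reflecting that $X_H a|_\Sigma$ depends only on $a|_\Sigma$ since the Hamiltonian vector field of $H := \gi^{\alp\bt}\xi_\alp\xi_\bt$ is tangent to $\Sigma$ --- leaving the integrand equal to $P^\alp\rd_{\bar x^\alp}\bar a - Q_i\rd_{\bar\xi_i}\bar a$, all evaluated at $\xi_0 = \xi_0(\bar x,\bar\xi)$.

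Next I would integrate by parts in the $\bar x^\alp$ and $\bar\xi_i$ variables. Since $a \in C^\infty_c(T^*U\setminus\{0\})$, $\bar a$ has compact support and there are no boundary terms, so the left-hand side of \eqref{eq:transport.orig} becomes $\int_\Sigma \bar a\,\big[\rd_{\bar x^\alp}(\f{P^\alp f}{P^0}) - \rd_{\bar\xi_i}(\f{Q_i f}{P^0})\big]\,\ud\bar x\,\ud\bar\xi$. Expanding the derivatives and using the hypothesis \eqref{eq:classical.Vlasov} --- which is precisely the statement that the transport vector field $X := P^\alp\rd_{\bar x^\alp} - Q_i\rd_{\bar\xi_i}$ annihilates $f$, so that $X(f/P^0) = -f\,(XP^0)/(P^0)^2$ --- the bracket collapses to $\f{f}{(P^0)^2}\big[P^0(\rd_{\bar x^\alp}P^\alp - \rd_{\bar\xi_i}Q_i) - P^\alp\rd_{\bar x^\alp}P^0 + Q_i\rd_{\bar\xi_i}P^0\big]$. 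The entire proposition therefore reduces to the single identity $\rd_{\bar x^\alp}(P^\alp/P^0) - \rd_{\bar\xi_i}(Q_i/P^0) = 0$ on $\Sigma$, i.e.\ to the statement that $X$ is divergence-free for the density $(P^0)^{-1}\,\ud\bar x\,\ud\bar\xi$.

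This divergence identity is the crux, and the step I expect to be the main obstacle, as it requires transferring a divergence statement from the full $(2n+2)$-dimensional cotangent space down to $\Sigma$. I would prove it by combining: (i) the elementary fact that on $T^*U$ the Hamiltonian vector field $X_H = (\rd_{\xi_\mu}H)\rd_{x^\mu} - (\rd_{x^\mu}H)\rd_{\xi_\mu}$ is divergence-free for the flat volume $\ud x\,\ud\xi$ --- immediate from equality of mixed partials, and equivalent to $\rd_{x^\mu}P^\mu - \rd_{\xi_\mu}Q_\mu = 0$ with the sum over $\mu = 0,\dots,n$ (Liouville's theorem); and (ii) the implicit-function/chain-rule formulas above, which rewrite every barred derivative of $P^\alp$, $Q_i$, $P^0$ as the unbarred one minus a $(\rd_{\xi_0})$-correction proportional to $Q_\alp/P^0$ or $P^i/P^0$, so that substituting these into the bracket and invoking (i) produces a complete cancellation. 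As a structural cross-check, one may instead note that $(P^0)^{-1}\,\ud\bar x\,\ud\bar\xi$ is, up to a constant, the Leray/Gelfand--Leray measure on $\Sigma$ obtained by dividing $\ud x\wedge\ud\xi$ by $\ud H$, which $X_H$ preserves because $\mathcal L_{X_H}(\ud x\wedge\ud\xi)=0$ (Liouville) and $\mathcal L_{X_H}(\ud H) = \ud(X_H H) = 0$; then $\int_\Sigma (X\bar a)\,\ud\mu = 0$ follows from Cartan's magic formula and Stokes' theorem. Either way, with the divergence identity in hand the proposition is proved, the only leftover being the harmless constant and sign relating the stated density to the Leray measure.
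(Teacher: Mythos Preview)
Your proposal is correct and follows essentially the same route as the paper: both compute the chain-rule relations $\rd_{\bar x^\alp}\xi_0 = -Q_\alp/P^0$, $\rd_{\bar\xi_i}\xi_0 = -P^i/P^0$, use them to show that $P^\alp\rd_{x^\alp}a - Q_\mu\rd_{\xi_\mu}a$ restricts on $\Sigma$ to $P^\alp\rd_{\bar x^\alp}\bar a - Q_i\rd_{\bar\xi_i}\bar a$, and then integrate by parts against $f/P^0$, reducing everything to the divergence identity $\rd_{\bar x^\alp}(P^\alp/P^0) - \rd_{\bar\xi_i}(Q_i/P^0)=0$. The only difference is that the paper dispatches this last identity by the phrase ``a direct computation using \eqref{eq:mass.shell.transform},'' whereas you give two conceptual explanations --- Liouville's theorem for $X_H$ on $T^*U$ pushed down via the chain rule, and the Gelfand--Leray interpretation of $(P^0)^{-1}\,\ud\bar x\,\ud\bar\xi$ --- which is a nice bonus but not a genuinely different argument.
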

\begin{proof}
{For the proof, we fix either $\xi_0>0$ or $\xi_0<0$. (The argument is identical in the two cases.)} 

We first compute the transformation 
\begin{equation}\label{eq:mass.shell.transform}
\rd_{\bar{x}^\alp} = \rd_{x^\alp} - \f 12 \f{\rd_{x^\alp}\gi^{\bt\nu}\xi_\bt\xi_\nu}{\gi^{0\mu} \xi_\mu}\rd_{\xi_0}, \quad \rd_{\bar{\xi}_i} = \rd_{\xi_i} - \f{\gi^{i\bt}\xi_\bt}{\gi^{0\mu} \xi_\mu}\rd_{\xi_0}, 
\end{equation}
where, here, and in the rest of the proof, Greek indices run through $0,1,\dots, n$ and Latin indices run through $1,\dots, n$.

It follows that 
$$(g^{-1})^{\alp\bt}\xi_\bt \rd_{x^{\alp}} - \f 12 (\rd_\mu g^{-1})^{\alp\bt} \xi_\alp \xi_\bt \rd_{\xi_\mu}  = (g^{-1})^{\alp\bt}\xi_\alp\rd_{\bar{x}^\bt} - \f 12 \rd_{\bar{x}^i} (g^{-1})^{\alp\bt} \xi_\alp\xi_\bt \rd_{\bar{\xi}_i}.$$
Therefore, the LHS of \eqref{eq:transport.orig} in the coordinate system we introduced reads
\begin{equation}\label{eq:need.to.check.for.classical}
\int_{\mathbb R^n} \int_{U} [(g^{-1})^{\alp\bt}\xi_\alp\rd_{\bar{x}^\bt}a - \f 12 \rd_{\bar{x}^i} (g^{-1})^{\alp\bt} \xi_\alp\xi_\bt \rd_{\bar{\xi}_i} a]\,f \,\f{\ud x^0\,\ud x^1\,\cdots\,\ud x^n\,\ud \xi_1\,\cdots \,\ud \xi_n}{(g^{-1})^{\mu 0}\xi_\mu}.
\end{equation}

Next, notice that by \eqref{eq:mass.shell.transform},
\begin{equation*}
\begin{split}
&\:  \rd_{\bar{x}^\bt} \Big[ \f{(g^{-1})^{\alp\bt}\xi_\alp}{(g^{-1})^{\mu 0}\xi_\mu} \Big]  \\
= &\: \f{ \rd_{\bar{x}^i} (g^{-1})^{\alp i}\xi_\alp}{(g^{-1})^{\mu 0}\xi_\mu} - \f 12 \f{\gi^{0i}\rd_{x^i}\gi^{\sigma\nu}\xi_\sigma\xi_\nu}{[\gi^{0\mu} \xi_\mu]^2} - \f{(g^{-1})^{\alp i}\xi_\alp \rd_{\bar{x}^i} (g^{-1})^{\nu 0}\xi_\nu}{[(g^{-1})^{\mu 0}\xi_\mu]^2}  \\
&\: + \f 12 \f{\gi^{00} (g^{-1})^{\alp i}\xi_\alp\rd_{x^i}\gi^{\sigma\nu}\xi_\sigma\xi_\nu}{[\gi^{0\mu} \xi_\mu]^2} + \f{ \rd_{x^0} (g^{-1})^{\alp 0}\xi_\alp}{(g^{-1})^{\mu 0}\xi_\mu} - \f12 \f{(g^{-1})^{0 0}\rd_{x^0}\gi^{\alp\nu}\xi_\alp\xi_\nu}{[\gi^{0\mu} \xi_\mu]^2} \\
&\:  - \f{(g^{-1})^{\alp 0}\xi_\alp \rd_{x^0} (g^{-1})^{\nu 0}\xi_\nu}{[(g^{-1})^{\mu 0}\xi_\mu]^2} + \f 12 \f{ \rd_{x^0} \gi^{\sigma\nu}\xi_\sigma\xi_\nu (g^{-1})^{\alp 0}\xi_\alp (g^{-1})^{00}}{[(g^{-1})^{\mu 0}\xi_\mu]^3} \\
=&\: \f{ \rd_{\bar{x}^i} (g^{-1})^{\alp i}\xi_\alp}{(g^{-1})^{\mu 0}\xi_\mu} - \f{(g^{-1})^{\alp i}\xi_\alp \rd_{\bar{x}^i} (g^{-1})^{\nu 0}\xi_\nu}{[(g^{-1})^{\mu 0}\xi_\mu]^2} - \f 12 \f{\gi^{0i}\rd_{x^i}\gi^{\sigma\nu}\xi_\sigma\xi_\nu}{[\gi^{0\mu} \xi_\mu]^2} + \f 12 \f{\gi^{00} (g^{-1})^{\alp i}\xi_\alp\rd_{x^i}\gi^{\sigma\nu}\xi_\sigma\xi_\nu}{[\gi^{0\mu} \xi_\mu]^2}
\end{split}
\end{equation*}
and 
\begin{equation*}
\begin{split}
 - \f 12 \rd_{\bar{\xi}_i} \Big[ \f{\rd_{\bar{x}^i} (g^{-1})^{\alp\bt} \xi_\alp\xi_\bt}{(g^{-1})^{\mu 0}\xi_\mu} \Big] 
= &\: - \f{\rd_{\bar{x}^i} (g^{-1})^{\alp i} \xi_\alp}{(g^{-1})^{\mu 0}\xi_\mu} + \f{\gi^{i\bt}\xi_\bt \rd_{\bar{x}^i} (g^{-1})^{\alp 0} \xi_\alp}{[\gi^{\mu 0} \xi_\mu]^2} \\
&\: +  \f 12 \f{\gi^{0i} \rd_{\bar{x}^i} (g^{-1})^{\alp\bt} \xi_\alp \xi_\bt}{[(g^{-1})^{\mu 0}\xi_\mu]^2} - \f12 \f{\gi^{i\nu}\xi_\nu \rd_{\bar{x}^i} (g^{-1})^{\alp\bt} \xi_\alp\xi_\bt \gi^{00} }{[(g^{-1})^{\mu 0}\xi_\mu]^3}.
\end{split}
\end{equation*}
Therefore,
\begin{equation*}
\begin{split}
&\: \rd_{\bar{x}^\bt} \Big[\f{(g^{-1})^{\alp\bt}\xi_\alp}{(g^{-1})^{\mu 0}\xi_\mu} \Big] - \f 12 \rd_{\bar{\xi}_i} \Big[\f{\rd_{\bar{x}^i} (g^{-1})^{\alp\bt} \xi_\alp\xi_\bt}{(g^{-1})^{\mu 0}\xi_\mu} \Big] = 0. 
\end{split}
\end{equation*}

Therefore, integrating by parts in \eqref{eq:need.to.check.for.classical} and using \eqref{eq:classical.Vlasov}, we obtain that 
$\eqref{eq:need.to.check.for.classical} = 0,$
as desired. \qedhere
\end{proof}

\begin{df}[Measure solutions to the Einstein--massless Vlasov system]\label{def:measEmV}
Let $(\mathcal M, g)$ be a $C^2$ Lorentzian manifold and $\ud \mu$ be a non-negative finite Radon measure on $T^*\mathcal M$. We say that $(\mathcal M, g,\ud \mu)$ is a \emph{measure solution} to the Einstein--massless Vlasov system if the following both hold:
\begin{enumerate}
\item For every smooth and compactly supported vector field $Y$,
$$\int_{\mathcal M} \mathrm{Ric}(Y,Y)\,\ud \mathrm{Vol}_g = \int_{T^*\mathcal M} \langle \xi,Y\rangle^2 \, \ud \mu.$$
\item $\ud \mu$ is a measure solution to the massless Vlasov system in the sense of Definition~\ref{def:Vlasov}.
\end{enumerate}
\end{df}

\subsection{Radially-averaged measure solutions to the Einstein--massless Vlasov system}\label{sec:radavg.EMV}

It will be convenient for us to define a notion of \emph{radially-averaged measure solution} to the Einstein--massless Vlasov system. Strictly speaking, this is related but is a distinct notion from that of a measure solution in Definition~\ref{def:measEmV}. It is however easy to see that any measure solution in the sense of Definition~\ref{def:measEmV} naturally induces a radially-averaged measure solution. Conversely, given a radially-averaged measure solution, one can construct a measure solution in the sense of Definition~\ref{def:measEmV}; see Lemma~\ref{lem:relationbetweenformulation}. One reason for introducing this notion is that this is the natural class of solutions that we construct using the microlocal defect measure.

Before we proceed to the definition of a radially-averaged measure solution, let us first define the cosphere bundle
$$S^*\mathcal M := \cup_{x\in \mathcal M} S^*_x \mathcal M := \cup_{x\in \mathcal M} \big((T^*_x \mathcal M \setminus\{0\})/{\sim}\big),$$
where we have quotiented out by the equivalence relation $\xi \sim \eta$ if $\xi = \lambda \eta$ for some $\lambda>0$. 

A continuous function on $S^*\mathcal M$ can be naturally identified with a continuous function on $T^*\mathcal M$ which is homogeneous of order $0$ in $\xi$. Therefore a Radon measure on $S^*\mathcal M$ naturally acts on continuous function on $T^*\mathcal M$ which is homogeneous of order $0$ in $\xi$.

We are now ready to define radially-averaged measure solutions to the Einstein--massless Vlasov system:
\begin{df}[Radially-averaged measure solutions to the Einstein--massless Vlasov system]\label{def:radmeasEmV}
Let $(\mathcal M, g)$ be a $C^2$ Lorentzian manifold and $\ud \nu$ be a non-negative finite Radon measure on $S^*\mathcal M$. We say that $(\mathcal M, g,\ud \nu)$ is a radially-averaged measure solution to the Einstein--massless Vlasov system if the following both hold:
\begin{enumerate}
\item For every smooth and compactly supported vector field $Y$,
$$\int_{\mathcal M} \mathrm{Ric}(Y,Y)\,\ud \mathrm{Vol}_g = \int_{S^*\mathcal M} \langle \xi,Y\rangle^2 \, \f{\ud \nu}{|\xi|^2},$$
where $|\xi|^2 = \sum_{\alp=0}^n \xi_\alp^2$.
\item $\ud \nu$ is supported on the zero mass shell in the sense that for {every} $f\in C_c(\q M)${,}
$$\int_{S^*\mathcal M} f(x) (g^{-1})^{\alp\bt}\xi_\alp\xi_\bt \,\f{\ud \nu}{|\xi|^2} =0.$$
\item For any $C^1$ function $\widetilde{a}:T^*\mathcal M\to \mathbb R$ which is homogeneous of degree $1$ in $\xi$, 
\begin{equation}\label{eq:transport.def}
\int_{{S}^*\mathcal M} ((g^{-1})^{\alp\bt}\xi_\bt \rd_{x^{\alp}}\widetilde{a} - \f 12 (\rd_\mu g^{-1})^{\alp\bt} \xi_\alp \xi_\bt \rd_{\xi_\mu}\widetilde{a}) \,\f{\ud\nu}{|\xi|^2} = 0.
\end{equation}
\end{enumerate}
\end{df}

The relation between a measure solution to the Einstein--massless Vlasov system (Definition~\ref{def:measEmV}) and a radially-averaged measure solution to the Einstein--massless Vlasov system (Definition~\ref{def:radmeasEmV}) is clarified in the following lemma:
\begin{lemma}\label{lem:relationbetweenformulation}
Given a measure solution $(\mathcal M, g, \ud\mu)$ to the Einstein--massless Vlasov system, there exists a radially-averaged measure solution $(\mathcal M, g, \ud\nu)$ to the Einstein--massless Vlasov system (with the same $(\mathcal M, g)$). This is also true conversely if $(\mathcal M, g)$ is globally hyperbolic.
\end{lemma}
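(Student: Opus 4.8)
The plan is to treat the two implications separately: the forward one is bookkeeping with pushforwards, while the converse, which is where global hyperbolicity is genuinely used, carries all the difficulty. Throughout write $H:=(g^{-1})^{\alp\bt}\xi_\alp\xi_\bt$, let $Z:=\{H=0\}\setminus\{0\}\subset T^*\mathcal M$ be the null cone bundle, and let $\Phi:T^*\mathcal M\setminus\{0\}\to S^*\mathcal M$ be the canonical projection. A direct computation shows that for $\widetilde a$ homogeneous of degree $1$ in $\xi$ the integrand appearing in \eqref{eq:transport.orig} and in \eqref{eq:transport.def} is exactly $\tfrac12\{H,\widetilde a\}$, which is homogeneous of degree $2$; I will use this freely, together with the identification of continuous (resp.\ $C^1$) functions on $S^*\mathcal M$ with continuous (resp.\ $C^1$) functions on $T^*\mathcal M\setminus\{0\}$ that are homogeneous of degree $0$.

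For ``measure solution $\Rightarrow$ radially-averaged measure solution'', given $(\mathcal M,g,\ud\mu)$ I set $\ud\nu:=\Phi_*(|\xi|^2\,\ud\mu)$. Since $|\xi|^2$ vanishes on the zero section and $\int|\xi|^2\,\ud\mu<\infty$, this is a finite nonnegative Radon measure on $S^*\mathcal M$, and the defining property of the pushforward gives $\int_{S^*\mathcal M}\varphi\,\tfrac{\ud\nu}{|\xi|^2}=\int_{T^*\mathcal M}\varphi\,\ud\mu$ for every $\varphi$ homogeneous of degree $0$. Conditions (1) and (2) of Definition~\ref{def:radmeasEmV} then follow at once from condition (1) of Definition~\ref{def:measEmV} and condition (1) of Definition~\ref{def:Vlasov}, applied with $\varphi=\langle\xi,Y\rangle^2/|\xi|^2$ and $\varphi=f(x)H/|\xi|^2$ respectively. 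For condition (3), fix $\widetilde a$ homogeneous of degree $1$, take a cutoff $\chi_R$ supported in $|\xi|\in[\tfrac1{2R},2R]$ with $\chi_R\equiv1$ on $[\tfrac1R,R]$, and apply \eqref{eq:transport.orig} to the test function $\chi_R(|\xi|)\widetilde a$ (first extending \eqref{eq:transport.orig} from $C^\infty_c$ to $C^1_c$ test functions by density, if $\widetilde a$ is only $C^1$). Expanding $\{H,\chi_R\widetilde a\}=\chi_R\{H,\widetilde a\}+\chi_R'(|\xi|)\{H,|\xi|\}\widetilde a$ and letting $R\to\infty$, the first piece converges to $\{H,\widetilde a\}$ in $L^1(\ud\mu)$ by dominated convergence, while the commutator piece is pointwise $O(|\xi|^2)$ and supported where $|\xi|\sim R$ or $|\xi|\sim 1/R$, so its $\ud\mu$-integral is $\leq C\int_{\{|\xi|\geq R\}}|\xi|^2\,\ud\mu+CR^{-2}\,\ud\mu(T^*\mathcal M)\to0$. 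Hence $\int\{H,\widetilde a\}\,\ud\mu=0$, which is \eqref{eq:transport.def}.

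For the converse, assume $(\mathcal M,g)$ is globally hyperbolic and let $(\mathcal M,g,\ud\nu)$ be a radially-averaged measure solution; I must build $\ud\mu$ on $T^*\mathcal M$, supported on $Z$, finite with $\int|\xi|^2\,\ud\mu<\infty$, satisfying (1)--(2) of Definition~\ref{def:Vlasov} and (1) of Definition~\ref{def:measEmV}. The naive guess -- pick a section $\sigma$ of $\Phi$ and set $\ud\mu:=\sigma_*(\ud\nu/|\sigma|^2)$ -- does deliver condition (1) of Definition~\ref{def:measEmV} for free, since $\langle\xi,Y\rangle^2/|\xi|^2$ is scale invariant and hence independent of the chosen representative; but it \emph{cannot} satisfy the transport equation, because $\sigma(S^*\mathcal M)$ is transverse to the Hamiltonian geodesic flow $\phi_s$ of $\{H,\cdot\}$ and no measure carried by a transversal is $\phi_s$-invariant -- the underlying obstruction being that $\phi_s$ does not commute with fiber scaling $m_\lambda$ but merely reparametrizes it, $\phi_s\circ m_\lambda=m_\lambda\circ\phi_{\lambda s}$, so no section of $\Phi$ is flow-invariant. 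The remedy is a flow-out construction. Global hyperbolicity provides a Cauchy temporal function $t$ whose level sets $\Sigma_c$ are Cauchy hypersurfaces, so that every inextendible null geodesic meets each $\Sigma_c$ exactly once; hence $Z_{\Sigma_c}:=\{(x,\xi)\in Z:x\in\Sigma_c\}$ is a global cross-section of $\phi_s|_Z$, and $\phi_s$-invariant measures on $Z$ are precisely those of the form $\ud\mu=\ud\mu_{\Sigma_c}\otimes\ud s$ with $\ud\mu_{\Sigma_c}$ a measure on $Z_{\Sigma_c}$ and $s$ the affine parameter. Since $t$ is strictly monotone along null geodesics, the light-ray flow on the projectivized null cone bundle is transverse to $\{t=c\}$, so $\ud\nu$ (which is invariant under that flow by \eqref{eq:transport.def}) disintegrates there as a transverse measure over $\bar Z_{\Sigma_c}$ tensored with the flow parameter; I take $\ud\mu_{\Sigma_c}$ to be this transverse measure placed on the unit-scale representatives over $\Sigma_c$, and set $\ud\mu:=\ud\mu_{\Sigma_c}\otimes\ud s$. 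Condition (1) of Definition~\ref{def:Vlasov} is built into the support of $\ud\mu_{\Sigma_c}$, condition (2) is the flow-invariance, and condition (1) of Definition~\ref{def:measEmV} transfers back from condition (1) of Definition~\ref{def:radmeasEmV} exactly as in the forward step, once $\Phi_*(|\xi|^2\,\ud\mu)=\ud\nu$ is known. Finiteness is then essentially automatic: one has $\int_Z|\xi|^2\,\ud\mu=\ud\nu(S^*\mathcal M)<\infty$ as soon as $\Phi_*(|\xi|^2\,\ud\mu)=\ud\nu$, and the total mass $\ud\mu(Z)$ is finite because the profile is kept at unit scale (the conformal factor governing how $|\xi|$ evolves along geodesics being controlled in the setting at hand) and because $\ud\nu$ assigns no mass to the directions of null geodesics of infinite affine length in $\mathcal M$ -- these are non-recurrent since $t$ is monotone along them, so a finite invariant measure cannot charge them -- whence $\ud\mu=\ud\mu_{\Sigma_c}\otimes\ud s$ really is a finite measure.

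The heart of the matter, and the step I expect to be the main obstacle, is proving the identity $\Phi_*(|\xi|^2\,\ud\mu)=\ud\nu$ for the flow-out measure: equivalently, that the transport equation \eqref{eq:transport.orig} for $\ud\mu$ on $T^*\mathcal M$ is the correct \emph{homogeneous lift} of the transport equation \eqref{eq:transport.def} for $\ud\nu$ on $S^*\mathcal M$. The subtlety is that the light-ray flow on $S^*\mathcal M$ is not unimodular: expanding $\{H,\widetilde a\}$ for $\widetilde a$ homogeneous of degree $1$ and restricting to the unit cosphere bundle produces, in addition to the expected tangential derivative, an extra zeroth-order term -- a conformal factor proportional to $\xi_\mu(\rd_{x^\mu}g^{-1})^{\alp\bt}\xi_\alp\xi_\bt/|\xi|^2$, arising from the change of $|\xi|$ along the geodesic flow. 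One must verify that, upon lifting to $T^*\mathcal M$, this conformal factor is exactly compensated by the $|\xi|^2$ weight in $\ud\nu=\Phi_*(|\xi|^2\,\ud\mu)$, so that flow-invariance of $\ud\mu$ on $Z$ and the transport equation for $\ud\nu$ on $S^*\mathcal M$ become genuinely equivalent at \emph{every} point of \emph{every} orbit, not merely over the initial slice $\Sigma_c$. Setting up this bookkeeping carefully in the globally hyperbolic setting -- the cross-section, the diffeomorphism between $\bar Z_{\Sigma_c}\times(\text{affine interval})$ and a neighborhood in the projectivized null cone bundle, and the Jacobian identity encoding the conformal compensation -- is where the real work lies; the finiteness claims are comparatively soft, being consequences of $\int|\xi|^2\,\ud\mu=\ud\nu(S^*\mathcal M)$ and of the monotonicity of $t$ along null geodesics.
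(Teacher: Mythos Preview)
For the forward direction you do exactly what the paper does---the pushforward $\ud\nu=\Phi_*(|\xi|^2\ud\mu)$ is precisely the paper's Riesz--Markov construction applied to $I(\varphi)=\int\varphi\,|\xi|^2\,\ud\mu$---and your cutoff verification of condition~(3) is in fact more careful than the paper's one-line treatment.

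For the converse your flow-out/disintegration approach is sound in principle and would produce the same measure, but the paper's construction is considerably simpler and completely sidesteps the bookkeeping you correctly flag as the main obstacle. Rather than disintegrating $\ud\nu$ and tensoring with the affine parameter, the paper runs Riesz--Markov \emph{backwards}. With $\underline S=\{(x,\xi):x\in\Sigma_0,\ |\xi|=1,\ H=0\}$ and $S$ its flow-out under the geodesic flow, global hyperbolicity makes $S$ a global section of the positive scaling action on the null cone, so every $\varphi\in C_0(T^*\mathcal M)$ determines a unique degree-$0$ function $\varphi^*$ with $\varphi^*|_S=\varphi|_S$. Setting $J(\varphi):=\int_{S^*\mathcal M}\varphi^*\,\ud\nu$ and invoking Riesz--Markov produces $|\xi|^2\,\ud\mu$ directly, supported on $S$, with the relation to $\ud\nu$ holding \emph{by construction}---there is nothing to verify. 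The transport equation then drops out in one line: given $a\in C^\infty_c(T^*\mathcal M\setminus\{0\})$, let $\widetilde a$ be degree-$1$ homogeneous with $\widetilde a|_S=a|_S$; since the Hamiltonian vector field of $H$ is tangent to $S$ (indeed it generates $S$), one has $\{H,a\}|_S=\{H,\widetilde a\}|_S$, and because $\{H,\widetilde a\}/|\xi|^2$ is already degree $0$ its own $*$-extension is itself, so the definition of $J$ gives $\int\{H,a\}\,\ud\mu=\int\{H,\widetilde a\}\,\tfrac{\ud\nu}{|\xi|^2}=0$ by condition~(3) for $\ud\nu$. No conformal factor or Jacobian ever appears; the homogeneous-extension device absorbs all of that automatically. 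Your construction $\ud\mu_{\Sigma_c}\otimes\ud s$ would yield the same $\ud\mu$, but the paper's formulation turns what you call ``the real work'' into a tautology.
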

\begin{proof}
\pfstep{Forward direction} This is the easier direction, and can in some sense be viewed as taking average in the radial direction in $\xi$. More precisely, given $\varphi\in C_0(S^*\mathcal M)$ (thought of as a continuous function homogeneous of order $0$ in $v$), define a map $I: C_0(S^*\mathcal M)\to \mathbb R$ by
$$I(\varphi) := \int_{T^*\mathcal M} \varphi \, |\xi|^2\ud \mu.$$
Since $\ud\mu$ is non-negative, $I$ is a non-negative map. By the Riesz--Markov representation theorem, it follows that there exists non-negative $\ud \nu$ such that 
$$I(\varphi) = \int_{S^*\mathcal M} \varphi \, \ud \nu.$$
Since $|\xi|^2\ud \mu$ is finite (by definition), it follows that $\ud \nu$ is finite. The Einstein equation also follows by definition.

\pfstep{Converse direction} This is harder and there is some choice available in the construction.

Given a globally hyperbolic $(\mathcal M, g)$, pick a Cauchy hypersurface $\Sigma_0$ and define the set 
 (with two connected components)
$$\underline{S}:=\{(x,\xi)\in T^*\mathcal M: x\in \Sigma_0,\, |\xi|^2=1,\, g^{-1}(x)(\xi,\xi) = 0 \}.$$ 

Define now the set $S$ as {the set of points in $T^*\mathcal M$ which lie in a geodesic starting from $\underline{S}$}. Note that $S$ is a co-dimensional $2$ submanifold of $T^*\mathcal M \setminus \{0\}$. Moreover, the vector field $(g^{-1})^{\alp\bt}\xi_\bt \rd_{x^{\alp}} - \f 12 (\rd_\mu g^{-1})^{\alp\bt} \xi_\alp \xi_\bt \rd_{\xi_\mu}$ is by definition tangential to $S$.

Given $\varphi \in C_0(T^*\mathcal M)$, define $\varphi^* \in C_0(S^*\mathcal M)$ as the function such that $\varphi^*\restriction_{S} = \varphi \restriction_S$ which is homogeneous of order $0$ in $\xi$. Define a map $J:C_0(T^*\mathcal M)\to \mathbb R$ by
$$J(\varphi) := \int_{S^*\mathcal M}\varphi^*\,\ud\nu.$$
This is non-negative by the non-negativity of $\ud \nu$. Hence, by the Riesz--Markov representation theorem, it follows that there exists non-negative $\ud \mu$ such that 
$$J(\varphi) = \int_{T^*\mathcal M} \varphi \, |\xi|^2 \ud \mu.$$
Note that $|\xi|^2\ud\mu$ is finite since $\ud \nu$ is finite. The Einstein equation also follows by definition.

To see that $\ud\mu$ is supported on the zero mass shell, it suffices to note that by definition, $S$, on which by definition $\ud \mu$ is supported, is a subset of the zero mass shell by construction.

Finally we show that \eqref{eq:transport.orig} holds. Take $a(x,\xi)\in C^\infty_c(T^*\mathcal M\setminus \{0\})$. Define $\widetilde{a}$ so that $\widetilde{a}\restriction_{S} = a$ but such that $\widetilde{a}$ is homogeneous of order $1$. Therefore, using \eqref{eq:transport.def}, we know that \eqref{eq:transport.orig} holds with $\widetilde{a}$ in the place of $a$. However, since $\ud \mu$ is supported on $S$ (by construction), it follows that in fact \eqref{eq:transport.orig} holds for $a$. \qedhere
\end{proof}

\begin{rk}[$\ud \nu$ can be chosen to be even] In the ``forward'' direction of the above proof, we could have instead defined
$$I(\varphi) := \int_{T^*\mathcal M} \f 12(\varphi(\xi) + \varphi(-\xi)) \, |\xi|^2\ud \mu(\xi),$$
so that $I$ is even, i.e.~$I(\varphi)= 0$ for {every} odd function $\varphi$. Consequently, $\ud\nu$ is also even. In fact, the measure solution to the Vlasov equation that we will eventually construct is even. 
\end{rk}

\subsection{Restricted Einstein--massless Vlasov system in $\mathbb U(1)$ symmetry}\label{sec:restricted}
The final notion that we introduce in this section is that of the restricted Einstein--massless Vlasov system in $\mathbb U(1)$ symmetry. By ``restricted'', we mean that we are not considering general $(3+1)$-dimensional solutions to the Einstein--massless Vlasov system for which the metric admits a $\mathbb U(1)$ symmetry, but instead we require that massless Vlasov measure to be supported in the cotangent bundle corresponding to the $(2+1)$-dimensional (instead of the $(3+1)$-dimensional) manifold.

Since we have already introduced and contrast both measure solutions and radially-averaged measure solutions for the Einstein--massless Vlasov system (cf.~Sections~\ref{sec:EMV} and \ref{sec:radavg.EMV}), we will directly define the notion of radially-averaged measure solutions for the restricted Einstein--massless Vlasov system in $\mathbb U(1)$ symmetry.

\begin{df}[Radially-averaged measure solutions for the \underline{restricted} Einstein--massless Vlasov system in \underline{$\mathbb U(1)$ symmetry}]\label{def:the.final.def}
Let $(^{(4)}\mathcal M, ^{(4)}g)$ be a $(4+1)$-dimensional $C^2$ Lorentzian manifold which is $\mathbb U(1)$ symmetric as in \eqref{eq:4g}, i.e.~the metric takes the form
$$
^{(4)}g= e^{-2\psi}g+ e^{2\psi}(\ud x^3+ \mathfrak A_\alp \ud x^{\alp})^2,
$$
for $g$, $\psi$, $\mathfrak A$ independent of $x^3$. Let $\ud\nu$ be a non-negative finite Radon measure on $S^*\mathcal M$.

We say that $(^{(4)}\mathcal M, ^{(4)}g, \ud\nu)$ is a radially-averaged measure solution for the restricted Einstein--massless Vlasov system in $\mathbb U(1)$ symmetry if 
\begin{enumerate}
\item the following equations are satisfied:
\begin{equation}\label{eq:U1vac.vlasov}
\left\{
\begin{array}{l}
\Box_g \psi + \f 12 e^{-4\psi} g^{-1}(\ud \om, \ud \om) = 0,\\
\Box_g \om - 4 g^{-1}(\ud \om , \ud \psi) = 0,\\
\int_{\mathcal M} \mathrm{Ric}(g)(Y,Y) \, \mathrm{dVol}_g=  \int_{\mathcal M} [2 (Y \psi)^2 + \f 12 e^{-4\psi} (Y\om)^2] \, \mathrm{dVol}_g + \int_{{S}^*\mathcal M} \langle \xi,Y\rangle^2 \, \f{\ud \nu}{|\xi|^2},
\end{array}
\right.
\end{equation}
for every $C^\infty_c$ vector field $Y$, where $\om$ relates to $\mathfrak A_\alp$ via \eqref{eq:mfkA};
\item (2) and (3) in Definition~\ref{def:radmeasEmV} both hold.
\end{enumerate}
\end{df}

\subsection{Null dust and massless Vlasov}\label{sec:null.dust}

In this subsection, we show that a solution to the null dust system is a measure solution to the massless Vlasov system. In particular, this shows that solutions to Einstein--null dust system considered in \cite{HLHF} can indeed be viewed within the framework of this paper.

For simplicity, let us just consider the case where $F_{\bA}$ is compactly supported.

\begin{lemma}
Let $(\mathcal M, g)$ be a $C^2$ Lorentzian manifold. Suppose for a finite set $\mathcal A$, $\{(F_{\bA}, u_{\bA})\}_{\bA\in \mathcal A}$ is a compactly supported solution to the null dust system on $(\mathcal M, g)$, i.e.~$F_{\bA}:\mathcal M\to \mathbb R$ is a compactly supported $C^1$ function and $u_{\bA}:\mathcal M\to \mathbb R$ is a $C^2$ function satisfying
\begin{enumerate}
\item $g^{-1}(\ud u_{\bA},\ud u_{\bA}) = 0, \quad \ud u_{\bA}\neq 0$ for all $\bA\in \mathcal A$,
\item $2(g^{-1})^{\alp\bt}(\rd_\bt u_{\bA}) \rd_{\alp} F_{\bA} + (\Box_g u_{\bA})F_{\bA} = 0$ for all $\bA\in \mathcal A$.
\end{enumerate}
Then the measure $\ud\mu$ on $T^*\mathcal M$ defined by
\begin{equation}\label{def:mu.dust}
\ud\mu := \sum_{\bA\in \mathcal A} F_{\bA}^2 \de_{v = du_{\bA}} \mathrm{dVol}_g
\end{equation}
is a measure solution to the massless Vlasov equation on $(\mathcal M,g)$ (cf.~Definition~\ref{def:Vlasov}).
\end{lemma}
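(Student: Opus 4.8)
The plan is to verify the two conditions in Definition~\ref{def:Vlasov} directly for the measure $\ud\mu$ defined in \eqref{def:mu.dust}. Since $\ud\mu$ is a finite sum over $\bA\in\mathcal A$ and the conditions are linear in $\ud\mu$, it suffices to treat a single $\bA$; I will drop the subscript and write $(F,u)$ for a compactly supported null dust pair with $g^{-1}(\ud u,\ud u)=0$, $\ud u\neq 0$, and $2(g^{-1})^{\alp\bt}(\rd_\bt u)\rd_\alp F+(\Box_g u)F=0$. The measure under consideration is then $\ud\mu = F^2\,\de_{\xi = du}\,\mathrm{dVol}_g$, i.e.~for $\varphi\in C_c(T^*\mathcal M)$ one has $\int_{T^*\mathcal M}\varphi\,\ud\mu = \int_{\mathcal M}\varphi(x,du(x))\,F^2(x)\,\mathrm{dVol}_g$. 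Finiteness and $\int |\xi|^2\,\ud\mu<\infty$ are immediate from compact support of $F$ and continuity of $du$.

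First I would check the zero mass shell condition (1): for $f\in C_c(\mathcal M)$,
\[
\int_{T^*\mathcal M} f(x)(g^{-1})^{\alp\bt}\xi_\alp\xi_\bt\,\ud\mu = \int_{\mathcal M} f(x)\,(g^{-1})^{\alp\bt}(\rd_\alp u)(\rd_\bt u)\,F^2\,\mathrm{dVol}_g = 0,
\]
using hypothesis (1) of the null dust system pointwise. Equivalently, $\ud\mu$ is literally supported on $\{(x,du(x))\}$, which lies in the zero mass shell.

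Next I would check the transport/Poisson bracket condition \eqref{eq:transport.orig}. Fix $a\in C^\infty_c(T^*\mathcal M\setminus\{0\})$. Plugging the definition of $\ud\mu$ into the left-hand side gives
\[
-\int_{\mathcal M} \Big[(g^{-1})^{\alp\bt}(\rd_\bt u)\,(\rd_{x^\alp}a)(x,du) - \tfrac12(\rd_\mu g^{-1})^{\alp\bt}(\rd_\alp u)(\rd_\bt u)\,(\rd_{\xi_\mu}a)(x,du)\Big]\,F^2\,\mathrm{dVol}_g.
\]
The key observation is that the bracketed quantity is exactly the derivative of $a(x,du(x))$ along the null geodesic flow restricted to the graph of $du$: differentiating $x\mapsto a(x,du(x))$ in the direction of the vector field $X^\alp := (g^{-1})^{\alp\bt}\rd_\bt u$ gives $X^\alp\rd_{x^\alp}a + X^\alp(\rd^2_{\alp\mu}u)\rd_{\xi_\mu}a$, and since $g^{-1}(\ud u,\ud u)=0$ one has, differentiating this identity, $\rd_{x^\mu}\big((g^{-1})^{\alp\bt}(\rd_\alp u)(\rd_\bt u)\big) = (\rd_\mu g^{-1})^{\alp\bt}(\rd_\alp u)(\rd_\bt u) + 2 X^\alp \rd^2_{\alp\mu}u = 0$, so $X^\alp\rd^2_{\alp\mu}u = -\tfrac12(\rd_\mu g^{-1})^{\alp\bt}(\rd_\alp u)(\rd_\bt u)$. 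Hence the bracket equals $X(a(\cdot,du(\cdot)))=:X\big(a\circ du\big)$, the $X$-derivative of the scalar function $b:=a\circ du$ on $\mathcal M$. The integral becomes $-\int_{\mathcal M} (X b)\,F^2\,\mathrm{dVol}_g = -\int_{\mathcal M} X^\alp (\rd_\alp b)\,F^2\,\mathrm{dVol}_g$. Now integrate by parts: this equals $\int_{\mathcal M} b\,\mathrm{div}_g(F^2 X)\,\mathrm{dVol}_g$ (no boundary term since $F$ is compactly supported). Finally $\mathrm{div}_g(F^2 X) = F^2\,\mathrm{div}_g X + X(F^2) = F^2\Box_g u + 2F(Xu')$ where... more carefully $\mathrm{div}_g X = \mathrm{div}_g(\mathrm{grad}_g u) = \Box_g u$ and $X(F^2) = 2F\,(g^{-1})^{\alp\bt}(\rd_\bt u)(\rd_\alp F)$, so $\mathrm{div}_g(F^2 X) = F\big(2(g^{-1})^{\alp\bt}(\rd_\bt u)\rd_\alp F + (\Box_g u)F\big) = 0$ by hypothesis (2). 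Therefore \eqref{eq:transport.orig} holds.

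The main obstacle, and the step deserving the most care, is the algebraic identification of the Poisson bracket integrand with the flow derivative $X(a\circ du)$ — specifically the computation $X^\alp\rd^2_{\alp\mu}u = -\tfrac12(\rd_\mu g^{-1})^{\alp\bt}(\rd_\alp u)(\rd_\bt u)$ obtained by differentiating the null condition, which requires $u\in C^2$ (as assumed) and care with which derivatives are coordinate derivatives versus derivatives of $a$ evaluated on the graph. Once that identity is in hand, the remainder is a clean integration by parts using compact support of $F$ and the transport equation (2); the regularity hypotheses ($F\in C^1$, $u\in C^2$, $g\in C^2$) are exactly what is needed to justify the integration by parts and make every term continuous.
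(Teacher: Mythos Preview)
Your proposal is correct and follows essentially the same approach as the paper: both differentiate the null condition $g^{-1}(\ud u,\ud u)=0$ to obtain $X^\alp\rd^2_{\alp\mu}u = -\tfrac12(\rd_\mu g^{-1})^{\alp\bt}(\rd_\alp u)(\rd_\bt u)$, use this to identify the Poisson bracket integrand with $X(a\circ du)$, and then integrate by parts against $F^2\,\mathrm{dVol}_g$ so that the null dust transport equation for $F$ kills the result. The only cosmetic difference is that you phrase the integration by parts via $\mathrm{div}_g(F^2 X)$, whereas the paper writes it out in coordinates.
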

\begin{proof}
That $\ud\mu$ is supported on the zero mass shell follows immediately from \eqref{def:mu.dust} and $g^{-1}(du_{\bA},du_{\bA}) = 0$. It remains therefore to verify the transport equation in Definition~\ref{def:Vlasov}.

For this we need a preliminary calculation. First, since $(g^{-1})^{\alp\bt}\rd_\alp u \rd_\bt u = 0$, we have
$$(\rd_\sigma (g^{-1})^{\alp\bt})\rd_\alp u \rd_\bt u + 2 (g^{-1})^{\alp\bt}(\rd_\bt u)(\rd_\alp \rd_\sigma u) = 0.$$
Therefore, given any $a\in C^\infty_c(T^*\mathcal M)$, viewing $a(x, du(x))$ as a function on $\mathcal M$ (and emphasizing this by calling the coordinates $\bar{x}$), we have
\begin{equation}\label{a.for.dust}
\begin{split}
&\: (g^{-1})^{\alp\bt}(\rd_\bt u) \rd_{\bar{x}^\alp} (a(\bar{x}, du(\bar{x}))) \\
= &\: (g^{-1})^{\alp\bt}(\rd_\bt u) \rd_{x^\alp} a + (g^{-1})^{\alp\bt}(\rd_\bt u) (\rd_{\xi_\sigma} a) (\rd_\alp \rd_\sigma u)\\
= &\: (g^{-1})^{\alp\bt}(\rd_\bt u) \rd_{x^\alp} a - \f 12 (\rd_\sigma (g^{-1})^{\alp\bt}) (\rd_\alp u)(\rd_\bt u) (\rd_{\xi_\sigma} a).
\end{split}
\end{equation}

We now check that the transport equation in Definition~\ref{def:Vlasov} using \eqref{a.for.dust} and integrating by parts:
\begin{equation*}
\begin{split}
&\: \int_{T^*\mathcal M} \Big( (g^{-1})^{\alp\bt}\xi_\bt \rd_{x^\alp} a - \f 12 (\rd_\sigma (g^{-1})^{\alp\bt}) \xi_\alp \xi_\bt (\rd_{\xi_\sigma} a) \Big)\, \ud \mu \\
=&\: \sum_{\bA\in \mathcal A} \int_{\mathcal M} \Big((g^{-1})^{\alp\bt}(\rd_\bt u_{\bA}) \rd_{x^\alp} a - \f 12 (\rd_\sigma (g^{-1})^{\alp\bt}) (\rd_\alp u_{\bA})(\rd_\bt u_{\bA}) (\rd_{\xi_\sigma} a) \Big) F^2_{\bA} \, \mathrm{dVol}_g \\
=&\: \sum_{\bA\in \mathcal A} \int_{\mathcal M} \Big((g^{-1})^{\alp\bt}(\bar{x})(\rd_\bt u_{\bA})(\bar{x}) \rd_{\bar{x}^\alp} \big(a(\bar{x}, du_{\bA}(\bar{x}))\big)\Big) F^2_{\bA}(\bar{x}) \, \mathrm{dVol}_g(\bar{x}) \\
=&\: - \sum_{\bA\in \mathcal A} \int_{\mathcal M} \Big(2(g^{-1})^{\alp\bt}(\rd_\bt u_{\bA}) \rd_{\alp} F_{\bA} + (\Box_g u_{\bA})F_{\bA} \Big)(\bar{x}) F_{\bA}(\bar{x}) a\big(\bar{x}, du_{\bA}(\bar{x})\big) \, \mathrm{dVol}_g(\bar{x}) =0,
\end{split}
\end{equation*}
where in the last line we used the equation satisfied by $F_{\bA}$. This concludes the proof. \qedhere
\end{proof}

\section{Notations and function spaces}\label{sec.notations}

{\bf Ambient space and coordinates.}
In this paper, we will be working on the ambient manifold $\q M:=(0,T)\times \mathbb R^2$ (although often we only restrict to subsets $\Omega\subset \Omega'\subset\Omega''$, cf.~Section~\ref{sec:compact.reduction}); see Section~\ref{sec:main.results}. The space will be equipped with a system of coordinates $(t,x^1,x^2)$. We often write $x = (t,x^1,x^2)$. We will use $x^i$ with the lower case Latin index $i,j=1,2$. We will also sometimes denote $x^t=t$.

Let $T^*\mathcal M$ be the cotangent bundle. The standard coordinates on $T^*\mathcal M$ will be given by $(x,\xi) = (x^t,x^1,x^2,\xi_t,\xi_1,\xi_2)$.

When there is no risk of confusion, we write $\rd_i = \rd_{x^i}$.

{\bf Indices.} We will use the following conventions:
\begin{itemize}
\item Lower case Latin indices run through the spatial indices $1,2$, while lower case Greek indices run through all $t,1,2$.
\item Repeat indices are always summed over: where lower case Latin indices sum over the spatial indices $1,2$ and lower case Greek indices sum over all indices $t,1,2$.
\end{itemize}

{\bf Metrics.}
\begin{itemize}
\item $g_n$ and $g_0$ denote the metrics introduced in Section~\ref{sec:main.results}, which both take the form \eqref{g.form} and \eqref{trace.free}.
\item We denote by $\mfg_n \in \{\log N_n, \bt^i_n, \gamma_n\}$ and $\mfg_0 \in \{\log N_0, \bt^i_0, \gamma_0\}$ the metric coefficients of $g_n$ and $g_0$ respectively.
\item $\de_{ij}$ (and $\de^{ij}$) denotes the Euclidean metric.
\end{itemize}

{\bf Norms for tensors and derivatives.} 
\begin{itemize}
\item Given a rank-$r$ covariant tensor $\eta_{\mu_1\cdots \mu_r}$, define
$$|\eta|^2:= \sum_{\mu_1,\dots, \mu_r=t,1,2} |\eta_{\mu_1\cdots \mu_r}|^2,\quad |\eta_{i_1\cdots i_r}|^2:=\sum_{j_1,\dots, j_r = 1,2} |\eta_{j_1\cdots j_r}|^2.$$
(The second definition is a slight abuse of notation, by which we mean unless otherwise stated, we will also implicitly take the sum. Similarly below.)
\item {The above} notation is in particular used for $(x,\xi)\in T^*\mathcal M$ where we denote
$$|\xi|^2:= \sum_{\mu = t,1,2} |\xi_\mu|^2,\quad |\xi_i|^2:= \sum_{j=1,2} |\xi_j|^2.$$
\item Likewise, given a scalar function $f:\mathbb R^{2+1}\to \mathbb R$, we define
$$|\rd f|^2:= |\rd_t f|^2+\sum_{i=1}^2 |\rd_{x^i} f|^2, \quad |\rd_i f|^2=\sum_{j=1}^2 |\rd_{x^j} f|^2.$$
\item A similar notation will be used for higher coordinate derivatives (even though they are not tensors), i.e.
$$|\rd^k f|^2:= \sum_{\mu_1,\dots, \mu_k=t,1,2} |\rd_{\mu_1,\dots, \mu_k}^k f|^2.$$
\end{itemize}

{\bf Constants.} Conventions for constants will be discussed in the beginning of Section~\ref{sec:preliminaries}.

{\bf Differential operators.}
\begin{itemize}
\item $\Delta$ denotes the spatial Laplacian on $\m R^2$ with respect to the \underline{spatial Euclidean metric}, i.e.
$$\Delta u = \sum_{i=1}^2 \rd_i^2 u.$$
\item $\Box_{g_0}$ and $\Box_{g_n}$ denote the Laplace--Beltrami operators with respect to $g_0$ and $g_n$ respectively.
 (see also \eqref{eq:Box.g0} and \eqref{eq:Box.gn}).
\item {$\Box_{g_0,A}$ and $\Box_{g_n,A}$ are operators to be defined respectively in \eqref{def:Box.g0} and \eqref{def:Box.gn}.}
\item $(e_0)_0$ and $(e_0)_n$ denote the vector fields $(e_0)_0=\rd_t-\beta^i_0\rd_{x^i}$ and $(e_0)_n=\rd_t-\beta_n^i\rd_{x^i}$ respectively (where {$\beta_n$ and $\bt_0$} will be introduced in \eqref{g.form}). 
\item $\mathfrak L$ denotes the Euclidean conformal Killing operator acting on vectors on $\m R^2$ to give a symmetric traceless (with respect to the Euclidean metric $\delta$) covariant $2$-tensor, i.e., 
$$(\mathfrak L\eta)_{ij}:=\delta_{j\ell}\rd_i\eta^\ell+\delta_{i\ell}\rd_j\eta^\ell-\delta_{ij}\rd_k\eta^k.$$
\end{itemize}

\textbf{Fourier transforms.} We will denote \underline{spacetime} Fourier transform by $\mbox{ }\widehat{ }\mbox{ }$ and \underline{spatial} Fourier transform by $\mathcal F_{\mathrm{spa}}$. We will take the following normalizations: 
$$\widehat{f}(\xi):= \f{1}{(2\pi)^{\f 32}}\int_{\mathbb R^{2+1}} e^{-ix^\mu \xi_\mu} f(x) \, \ud x,$$
$$\mathcal F_{\mathrm{spa}}(f)(t,\xi_k):=\f{1}{2\pi}\int_{\mathbb R^{2}} e^{-ix^k \xi_k} f(t,x^j)\,\ud x^1\,\ud x^2.$$
Fourier multipliers will be denoted as follows for $m:\mathbb R^{2+1}\to \mathbb R$:
$$(m(\f 1i \nabla) f)(x) := \f{1}{(2\pi)^3} \int_{\mathbb R^{2+1}} e^{i(x^\mu-y^\mu) \xi_\mu} m(\xi) f(y)\,\ud y.$$

{\bf Functions spaces.} Unless otherwise stated, all function spaces will be understood on $\mathbb R^{2+1}$. Define the following norms for a scalar function $f:\mathbb R^{2+1}\to \mathbb R$:
$$\|f\|_{L^p}:= (\int_{\mathbb R^{2+1}} |f|^p(x)\,\ud x)^{\f 1p},\quad p \in [1,+\infty),\qquad \|f\|_{L^\i}:= \mathrm{esssup}_{x\in \mathbb R^{2+1}} |f|(x),$$
$$\|f\|_{W^{m,p}}:= \sum_{|\alp|\leq m} \|\rd^\alp f\|_{L^p},\quad m\in \mathbb N\cup \{0\},\,p\in [1,+\infty).$$
Define also the corresponding function spaces in the obvious way. We will denote $H^m:= W^{m,2}$. Define also the norm
$$\|f\|_{H^{-1}}:= \Big( \int_{\mathbb R^{2+1}} (1+|\xi|^2)^{-1} |\widehat{f}|^2(\xi)\,\ud \xi \Big)^{\f 12}$$
and the corresponding function space. 

We will also use the above function spaces for tensors on $\m R^{2+1}$, where the norms in the case of tensors are understood componentwise (with respect to the $(t,x^1,x^2)$ coordinates).

\section{Main results}\label{sec:main.results}

Let $\mathcal M := (0,T)\times \mathbb R^2$. Suppose $\{(\psi_n,\om_n,g_n)\}_{n=1}^{+\infty}$ is a sequence such that $\psi_n$, $\om_n$ are $C^4$ real-valued function on $\mathcal M$ and $g_n$ is a $C^4$ Lorentzian metric on $\mathcal M$ satisfying the following four conditions:
\begin{enumerate}
\item (Solving the equations)
$(\psi_n,\om_n,g_n)$ satisfies \eqref{eq:U1vac} for all $n\in \mathbb N$.
\item (Gauge condition)
The metric $g_n$ is put into a form satisfying \eqref{g.form} and \eqref{trace.free} for all $n\in \mathbb N$.
\item (Local uniform convergence) There exists a limit $(\psi_0,\om_0,g_0)$ which is smooth and $g_0$ satisfies \eqref{g.form} and \eqref{trace.free}. 
Assume that the following convergences hold: 
\begin{enumerate}
\item $\psi_n\to \psi_0$, $\om_n\to \om_0$ uniformly on (spacetime) compact sets.
\item For $\mfg \in \{\log N,\, \bt^i,\, \gamma\}$ (being the metric components; cf.~\eqref{g.form}), $\mfg_n \to \mfg_0$ uniformly on compact sets.
\end{enumerate}
\item (Weak convergence of the derivatives) Let $p_0 \in (\f 83, +\infty)$. With $(\psi_0,\om_0,g_0)$ as above, the following convergences hold:
\begin{enumerate}
\item $\rd\psi_n\rightharpoonup \rd\psi_0$, $\rd\om_n\rightharpoonup \rd\om_0$ weakly in $L^{p_0}_{\mathrm{loc}}$.
\item For $\mfg \in \{\log N,\, \bt^i,\, \gamma\}$, $\rd\mfg_n \rightharpoonup \rd\mfg_0$ weakly in $L^{p_0}_{\mathrm{loc}}$.
\end{enumerate}
\end{enumerate}

\begin{thm}\label{thm:prelim}
Given $\{(\psi_n,\om_n,g_n)\}_{n=1}^{+\infty}$ and $(\psi_0,\om_0,g_0)$ such that the conditions (1)--(4) above hold. Then there exists a non-negative Radon measure $\ud \nu$ on $S^*\mathcal M$ such that $(\mathcal M,\psi_0,\om_0,g_0,\ud\nu)$ satisfy the following conditions:
\begin{enumerate}
\item $\ud \nu$ is supported on $\{(x,\xi) \in S^*\mathcal M: g^{-1}_0(\xi,\xi) = 0\}$;
\item the following system of equations hold:
\begin{equation}\label{eq:U1vac.vlasov.again}
\left\{
\begin{array}{l}
\Box_{g_0} \psi_0 + \f 12 e^{-4\psi_0} g^{-1}_0(\ud \om_0, \ud \om_0) = 0,\\
\Box_{g_0} \om_0 - 4 g_0^{-1}(\ud \om_0 , \ud \psi_0) = 0,\\
\int_{\mathcal M} \mathrm{Ric}(g_0)(Y,Y) \, \mathrm{dVol}_{g_0}=  \int_{\mathcal M} [2 (Y \psi_0)^2 + \f 12 e^{-4\psi_0} (Y\om_0)^2] \, \mathrm{dVol}_{g_0} + \int_{S^*\mathcal M} \langle \xi,Y\rangle^2 \, \f{\ud \nu}{|\xi|^2},
\end{array}
\right.
\end{equation}
for every $C^\infty_c$ vector field $Y$.
\end{enumerate}

In particular, the effective stress-energy-momentum tensor $T_{\mu\nu}$ is traceless, non-negative and obeys the dominant energy condition.
\end{thm}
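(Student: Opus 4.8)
The strategy is to pass to the limit in the reduced system \eqref{eq:U1vac}, to show that the only obstruction to $(\psi_0,\om_0,g_0)$ being a vacuum solution is a defect carried entirely by the wave-map fields $\psi_n,\om_n$, and then to identify that defect with a microlocal defect measure. \textbf{Step 1 (strong precompactness of the metric from the elliptic gauge).} Since the right-hand sides of \eqref{elliptic.wave.1}--\eqref{elliptic.wave.4} are quadratic in $\rd\psi_n,\rd\om_n$ (times lower-order, uniformly convergent metric factors), they are bounded in $L^{p_0/2}_{loc}$; elliptic regularity on the slices $\m R^2$ then gives uniform $W^{2,p_0/2}_{loc}$ bounds for $\log N_n$, $\gamma_n$, $\bt_n^i$, so that by Rellich the $\mfg_n$ and the \emph{spatial} derivatives $\nab\mfg_n$ converge strongly in $L^q_{loc}$ for some $q>2$. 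The trace-free condition \eqref{trace.free} expresses $\rd_t\gamma_n$ through $\bt_n^i$ and $\nab\bt_n^i$, so $\rd_t\gamma_n$ also converges strongly in $L^q_{loc}$; consequently the second fundamental form $H^{(n)}_{ij}$ --- which only involves $\gamma_n,\nab\gamma_n,\rd_t\gamma_n,N_n,\bt_n^i,\nab\bt_n^i$ --- converges strongly in $L^q_{loc}$. (The remaining time derivatives $\rd_tN_n,\rd_t\bt_n^i$ converge only weakly, but will enter only linearly below.)

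\textbf{Step 2 (passing to the limit in the equations).} Writing $\Box_g$ in divergence form one sees at once that $\Box_{g_n}\psi_n,\Box_{g_n}\om_n$ converge in $\q D'$ (the divergence of a product of a uniformly convergent metric factor with a weakly-$L^{p_0}_{loc}$-convergent first derivative), and likewise that $(\Box_{g_0}-\Box_{g_n})\psi_n\to0$ in $H^{-1}_{loc}$; combined with $\|g_n^{-1}(\ud\om_n,\ud\om_n)\|_{L^{p_0/2}_{loc}}\ls1$ (from \eqref{eq:U1vac}) this shows that $\Box_{g_0}(\psi_n-\psi_0)$ and $\Box_{g_0}(\om_n-\om_0)$ are precompact in $H^{-1}_{loc}$, using $p_0>\f83$. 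For the null forms I would use the identity $2g_0^{-1}(\ud\om_n,\ud\psi_n)=\Box_{g_0}(\om_n\psi_n)-\psi_n\Box_{g_0}\om_n-\om_n\Box_{g_0}\psi_n$: the first term converges in $\q D'$ since $\om_n\psi_n\to\om_0\psi_0$ uniformly and $\ud(\om_n\psi_n)\rightharpoonup\ud(\om_0\psi_0)$, and the last two converge by pairing the $H^{-1}_{loc}$-\emph{strongly} convergent $\Box_{g_0}\om_n,\Box_{g_0}\psi_n$ with the uniformly convergent $\psi_n,\om_n$; together with $g_n^{-1}-g_0^{-1}\to0$ uniformly this gives \eqref{eq:CC.null}, hence the first two equations of \eqref{eq:U1vac.vlasov.again}. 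Finally, substituting the constraint equations \eqref{elliptic.1}--\eqref{elliptic.3} into \eqref{Rij} (and treating $\mathrm{Ric}_{00},\mathrm{Ric}_{0j}$ through \eqref{elliptic.1}--\eqref{elliptic.3} similarly), the structural feature of the elliptic gauge is that every genuinely quadratic nonlinearity in the metric then carries a factor among $\nab\mfg_n$, $\rd_t\gamma_n$ or $H^{(n)}$ --- all strongly precompact by Step 1 --- while the $|H|^2$ contributions of $\Delta\gamma$ and $\Delta N$ cancel and $\rd_tN_n,\rd_t\bt_n^i$ appear only linearly against uniformly convergent coefficients. Hence $\mathrm{Ric}_{\alp\bt}(g_n)\rightharpoonup\mathrm{Ric}_{\alp\bt}(g_0)$ in $\q D'$, which is \eqref{eq:CC.Ric}.

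\textbf{Steps 3 and 4 (the microlocal defect measure and properties of $T$).} Using $\mathrm{Ric}(g_n)(Y,Y)=2(Y\psi_n)^2+\f12e^{-4\psi_n}(Y\om_n)^2$ exactly, integrating against $\mathrm{dVol}_{g_n}$ (which converges uniformly to $\mathrm{dVol}_{g_0}$), and passing to the limit via \eqref{eq:CC.Ric}, the left side tends to $\int_{\q M}\mathrm{Ric}(g_0)(Y,Y)\,\mathrm{dVol}_{g_0}$; on the right, after subtracting the strongly convergent limit $(\psi_0,\om_0)$ and the vanishing cross terms, there remains exactly the quadratic defect functional of $\rd(\psi_n-\psi_0),\rd(\om_n-\om_0)$. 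By the general theory of microlocal defect measures (Tartar, G\'erard; see Section~\ref{sec:PSIDOs}) this functional is represented, after passing to a subsequence, by a non-negative Radon measure $\ud\nu$ on $S^*\q M$ as in \eqref{eq:intro.MDM.def} and \eqref{eq:intro.defect.1}--\eqref{eq:intro.defect.2}, yielding the third equation of \eqref{eq:U1vac.vlasov.again}. That $\ud\nu$ is supported on $\{g_0^{-1}(\xi,\xi)=0\}$ follows from the $H^{-1}_{loc}$-precompactness of $\Box_{g_0}(\psi_n-\psi_0),\Box_{g_0}(\om_n-\om_0)$ established in Step 2, via elliptic regularity off the characteristic set of $\Box_{g_0}$. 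Finally, since under Kaluza--Klein reduction the effective $(3+1)$-dimensional stress-energy-momentum tensor corresponds precisely to the measure term $\int_{S^*\q M}\f{\xi_\mu\xi_\nu}{|\xi|^2}\,\ud\nu$ (the $\psi_0,\om_0$ contributions being geometric in four dimensions), tracelessness is immediate from the support property, non-negativity $\int\langle\xi,X\rangle^2\,\f{\ud\nu}{|\xi|^2}\ge0$ from $\ud\nu\ge0$, and the dominant energy condition from the fact that the map $\xi\mapsto-\xi_\mu\langle\xi,X\rangle$ sends each null covector to a future-directed causal covector when $X$ is future causal, so its $\ud\nu$-average $-T_{\mu\nu}X^\nu$ is again future causal.

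\textbf{Main obstacle.} I expect the hard part to be Step 2: verifying in detail that in the elliptic gauge $\mathrm{Ric}(g_n)$ contains no dangerous quadratic product of time derivatives of $N_n$ or $\bt_n$ and tracking the cancellations among the constraint equations, together with making all the compensated-compactness inputs work at the low regularity $p_0>\f83$ (in particular the various $L^{p_0/2}_{loc}\hookrightarrow\hookrightarrow H^{-1}_{loc}$ and strong-precompactness statements of Step 1). Steps 1, 3 and 4 are comparatively routine once this structural input is in hand.
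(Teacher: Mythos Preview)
Your strategy is the paper's: the null-form identity in Step~2 is exactly Lemma~\ref{lem:limitwave}, and the structural observation that in the elliptic gauge every quadratic metric nonlinearity in $\mathrm{Ric}(g_n)$ carries a strongly-precompact factor ($\rd_i\gamma_n$ or $H_n$) while $\rd_tN_n,\rd_t\bt_n^i$ enter only linearly is precisely how Proposition~\ref{prop:T.compute.1} proceeds.

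There is one technical gap in Step~1 that you flag as routine but is not quite: slice-wise elliptic $W^{2,p_0/2}$ bounds on $\mfg_n$ do \emph{not} by themselves give strong spacetime $L^q_{loc}$ convergence of $\nab\mfg_n$ or $H_n$ via Rellich, since no time regularity has been established (think of $\sin(nt)\varphi(x)$). The paper closes this by a short bootstrap you omit: it reads \eqref{Rij.wave} as an equation for $\rd_t H_n$ (everything else being already bounded in $L^{p_0/2}_{loc}$), and differentiates \eqref{trace.free} spatially to express $\rd_t\rd_i\gamma_n$ through $\rd^2_{ij}\bt_n,\rd^2_{ij}\gamma_n$; only then is $\rd_i\gamma_n,H_n\in W^{1,p_0/2}_{loc}\hookrightarrow\hookrightarrow L^2_{loc}$. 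This is straightforward once seen, but your sketch as written invokes Rellich at a point where its hypotheses are not yet in hand.
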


The above theorem has the advantage that the assumptions are very weak. On the other hand, we also do not get the full Burnett's conjecture in that we do not show that the limit is isometric to a solution to the Einstein--massless Vlasov system. In order to obtain the stronger result, we impose the following additional assumption:

\begin{enumerate}\setcounter{enumi}{4}
\item (Estimates)
For every compact subset $K\subset \mathcal M$, there exists a sequence $\{\lambda_n\}_{n=1}^{\infty}\subset (0,1]$ (depending on $K$) with $\lim_{n\to +\infty} \lambda_n = 0$ such that for $\mathfrak g \in \{\log N,\, \bt^i, \, \gamma\}$,
\begin{equation}\label{assumption.0}
\sup_n \lambda_n^{-1} \|(\psi_n-\psi_0,\om_n-\om_0,\mathfrak{g}_n-\mathfrak{g}_0)\|_{L^\i(K)} <+\infty,
\end{equation}
\begin{equation}\label{assumption.1}
\sup_n  \|(\partial \psi_n,\rd\om_n, \rd\mathfrak{g}_n)\|_{L^\i(K)} <+\infty.
\end{equation}
\begin{equation}\label{assumption.2}
\sup_n  \lambda_n^{k-1} \|(\partial^k \psi_n,\rd^k\om_n,\rd^k \mathfrak{g}_n)\|_{L^\i(K)} <+\infty, \quad \mbox{for $k=2,3,4$}.
\end{equation}
\end{enumerate}

\begin{thm}\label{thm:main}
Given $\{(\psi_n,\om_n,g_n)\}_{n=1}^{+\infty}$ and $(\psi_0,\om_0,g_0)$ such that the conditions (1)--(5) above hold. Then there exists a non-negative Radon measure $\ud\nu$ on $S^*\mathcal M$ such that $(\mathcal M,\psi_0,\om_0,g_0,\ud\nu)$ is a radially-averaged measure solution to the restricted Einstein--massless Vlasov equations in $\mathbb U(1)$ symmetry in the sense of Definition~\ref{def:the.final.def}.
\end{thm}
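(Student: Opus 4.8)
The plan is to verify the two requirements of Definition~\ref{def:the.final.def}: the equations \eqref{eq:U1vac.vlasov} together with conditions (2) and (3) of Definition~\ref{def:radmeasEmV}. Since hypotheses (1)--(5) of the present statement imply hypotheses (1)--(4) of Theorem~\ref{thm:prelim}, that theorem already supplies the wave map and Einstein equations \eqref{eq:U1vac.vlasov.again} (which are exactly \eqref{eq:U1vac.vlasov}) as well as the fact that $\ud\nu$ is supported on the zero mass shell of $g_0$ — i.e.\ condition (1) of Definition~\ref{def:the.final.def} and condition (2) of Definition~\ref{def:radmeasEmV}. Hence the genuinely new content is the transport equation \eqref{eq:transport.def} for $\ud\nu$, and it is for this that the quantitative estimates \eqref{assumption.0}--\eqref{assumption.2} are needed. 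First I would recall the construction of $\ud\nu$: from the uniform $L^\infty_{loc}$ bounds \eqref{assumption.1} on $\rd(\psi_n-\psi_0)$ and $\rd(\om_n-\om_0)$, after passing to a subsequence there is a non-negative Radon measure $\ud\nu$ on $S^*\mathcal M$ characterized by \eqref{eq:intro.defect.1}, and the compensated-compactness arguments of Section~\ref{sec:method} (div-curl structure for the wave operators, null-form structure for the quadratic terms, and the elliptic gauge — which yields strong $L^2_{loc}$ convergence of the spatial derivatives of the metric coefficients and the absence of products of time derivatives of the metric coefficients in the nonlinearity) identify the defect of the quadratic energy densities with $\ud\nu$, giving \eqref{eq:U1vac.vlasov.again}.

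The heart of the argument is the derivation of \eqref{eq:transport.def}. Fix $\widetilde a : T^*\mathcal M \to \mathbb R$ homogeneous of degree $1$ in $\xi$ and let $A$ be a (suitably spacetime-cutoff) pseudodifferential operator with principal symbol $\widetilde a / \xi_t$; the cutoffs are needed because the solutions live only on $\mathcal M$ and the estimates are only local. Testing the wave equations for $\psi_n$ and $\om_n$ against $A$ applied to the appropriate first-order quantities and integrating by parts produces an energy identity of the type \eqref{eq:intro.main.0}--\eqref{eq:intro.hard} (plus the analogous $\om$-terms); subtracting the corresponding identity for $(\psi_0,\om_0,g_0)$ isolates, as the principal term, precisely $2\int_{S^*\mathcal M}\big((g_0^{-1})^{\alp\bt}\xi_\bt\rd_{x^\alp}\widetilde a - \f12(\rd_\mu g_0^{-1})^{\alp\bt}\xi_\alp\xi_\bt\rd_{\xi_\mu}\widetilde a\big)\f{\ud\nu}{|\xi|^2}$. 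It then remains to show that every other contribution tends to $0$: (i) commutator and self-adjointness errors, by the pseudodifferential calculus; (ii) terms containing spatial derivatives of the metric coefficients, using the improved bound $\|\nab\mfg_n\|_{L^\i} \ls \lambda_n^{1/2}$ obtained from the elliptic equations \eqref{elliptic.wave.1}--\eqref{elliptic.wave.4}, together with the extra improvement for $\rd_t\gamma_n$ coming from \eqref{trace.free}; (iii) the cubic-in-$\rd\phi_n$ terms such as $\int \f{\rd_t\phi_n}{N_n} A\big(N_n g_n^{-1}(\ud\phi_n,\ud\phi_n)\big)$, by \emph{three-wave trilinear compensated compactness} — writing $g_n^{-1}(\ud\phi_n,\ud\phi_n) = \f12\Box_{g_n}(\phi_n^2) - \phi_n\Box_{g_n}\phi_n$ and using the uniform $L^1\cap L^\i$ bound on $\Box_{g_n}\phi_n$, cf.\ Proposition~\ref{prop:trilinear}; and (iv) the delicate commutator term \eqref{eq:intro.hard}, in which $\rd_t N_n$ only converges weakly, by \emph{elliptic-wave trilinear compensated compactness} — partitioning $\mathcal M$ into balls of radius $\lambda_n^{\ep_0}$ on which the metric coefficients are nearly constant (to circumvent the quasilinearity), passing to Fourier space as in \eqref{eq:intro.hard.yet.again}, and exploiting the algebraic decompositions $\xi_t|\eta_i|^2 + \eta_t|\xi_i|^2 = \eta_t(\xi_i+\eta_i)(\xi_i-\eta_i) + |\eta_i|^2(\xi_t+\eta_t)$ and $\xi_t+\eta_t = \f{\xi_t^2-|\xi_i|^2}{\xi_t-\eta_t} + \f{|\eta_i|^2-\eta_t^2}{\xi_t-\eta_t} + \f{(\xi_i+\eta_i)\cdot(\xi_i-\eta_i)}{\xi_t-\eta_t}$ to trade a bad time-derivative of $N_n$ for a spatial derivative of $N_n$ or a factor of $\Box_{g_0}$ hitting $\phi_n$.

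A subtlety not visible in the model problem must then be addressed: since in general $\psi_0,\om_0 \not\equiv 0$, the three-wave trilinear step (iii) leaves over terms of the form $\int_{S^*\mathcal M} a(x,\xi)(g_0^{-1})^{\alp\bt}\xi_t\xi_\alp(\rd_\bt\psi_0)\,\f{\ud\nu}{|\xi|^2}$ (and its $\om$-analogue) coming from the cross terms in the splitting $g_n^{-1}(\ud\phi_n,\ud\phi_n) \rightharpoonup 2g_0^{-1}(\ud\phi_n,\ud\phi_0) - g_0^{-1}(\ud\phi_0,\ud\phi_0)$ of Proposition~\ref{prop:trilinear}. Exploiting the precise structure of the Einstein--wave map system \eqref{eq:U1vac} — effectively, that the sources of the $\psi$- and $\om$-equations and the matter terms in the $g$-equation descend from the same Lagrangian — one shows that these measure-valued leftovers cancel among the $\psi$-, $\om$- and Einstein-equation contributions. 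Assembling (i)--(iv) and this cancellation, the energy identity collapses in the limit to $\int_{S^*\mathcal M}\big((g_0^{-1})^{\alp\bt}\xi_\bt\rd_{x^\alp}\widetilde a - \f12(\rd_\mu g_0^{-1})^{\alp\bt}\xi_\alp\xi_\bt\rd_{\xi_\mu}\widetilde a\big)\f{\ud\nu}{|\xi|^2} = 0$, which is \eqref{eq:transport.def}; combined with Theorem~\ref{thm:prelim} this shows $(\mathcal M,\psi_0,\om_0,g_0,\ud\nu)$ is a radially-averaged measure solution to the restricted Einstein--massless Vlasov system in $\mathbb U(1)$ symmetry. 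The main obstacle I anticipate is step (iv) together with the freezing-of-coefficients device it requires, and, on the structural side, the identification and cancellation of the leftover microlocal-defect-measure terms in the last paragraph, which is sensitive to the exact algebraic form of the system.
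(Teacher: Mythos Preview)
Your proposal is correct and follows essentially the same approach as the paper: invoke Theorem~\ref{thm:prelim} (via Proposition~\ref{prop:compact.reduction}) for everything except the transport equation, then derive \eqref{eq:transport.def} by an energy identity, handling the error terms with the improved spatial/$\rd_t\gamma$ metric estimates, the two flavors of trilinear compensated compactness (three-waves and elliptic-wave with frozen coefficients), and the Lagrangian-structure cancellation of the leftover $\ud\nu$-terms. One minor refinement: the paper does not literally take the symbol $\widetilde a/\xi_t$ but the gauge-adapted version $\widetilde a\, N_0/(\xi_t-\bt_0^k\xi_k)$ (equivalently, multiplies by $(e_0)_n(\chi\psi_n)/N_n$ rather than $\rd_t\phi_n$), and first reduces to symbols $a(x,\xi)=b(x)m(\xi)$ with $m$ \emph{even} (Proposition~\ref{prop:main.reduction}); this evenness is what makes the Fourier symmetrization in your step (iv) work cleanly.
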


\begin{rk}
Even though it is most convenient for the proof to formulate Theorem~\ref{thm:main} so that the Einstein part of the system is satisfied in the weak sense (see \eqref{eq:U1vac.vlasov}), it follows a posteriori that the Einstein part of the system is also satisfied classically for an appropriately defined stress-energy-momentum tensor. 

This can be formulated as follows. Let ${\bf\pi}:S^*\mathcal M\to \mathcal M$ be the natural projection map. It follows from Theorem~\ref{thm:main} that after defining 
$$T_{\alp\bt}(p) := \liminf_{r\to 0} \frac{3}{4\pi r^3}\int_{{\bf\pi}^{-1}(B(p,r))} \frac{\xi_\alp\xi_\bt}{|\xi|^2}\,\ud \nu$$
(where $B(p,r)$ is the coordinate ball), $T_{\alp\bt}$ is continuous and that the Einstein equation
$$\mathrm{Ric}_{\alp\bt}(g)= 2\partial_\alp \psi \partial_\bt \psi + \f 12 e^{-4\psi} \rd_\alp \om \rd_\bt \om + T_{\alp\bt}$$
holds classically.
\end{rk}

\subsection{Reduction to compact sets}\label{sec:compact.reduction}

It will be technically convenient to reduce our theorems to corresponding cut-off versions. 

Fix an open and precompact $\Omega\subset \mathcal M$. Let $\Omega'\subset \mathcal M$ be an open and precompact set containing $\overline{\Omega}${, the closure of $\Omega$}. Let $\chi$ be a non-negative function in $C^\infty_c$ such that
\begin{equation}\label{def:chi}
\chi = 1 \mbox{ on $\Omega$},\quad \mathrm{supp}(\chi)\subset \Omega'.
\end{equation}  
We will show that (cf.~Section~\ref{sec:existence} below) for every such $\Omega$, $\Omega'$ and $\chi$, there exists a non-negative Radon measure $\ud\nu$ on $S^*\mathbb R^{2+1}$ such that for some subsequence $n_k$, the following holds for any $0$-th order pseudo-differential operator $A$ with principal symbol $a$ (cf.~Section~\ref{sec:PSIDOs}):
\begin{equation}\label{eq:MDM.def.structure}
\begin{split}
&\: \lim_{k\to +\infty}2\int_{\mathbb R^{2+1}} \rd_\alp \big(\chi(\psi_{n_k}-\psi_0)\big)\Big[A\rd_\bt(\chi(\psi_{n_k}-\psi_0))\Big] \,\mathrm{dVol}_{g_{n_k}} \\
&\: \quad + \lim_{k\to +\infty} \f 12 \int_{\mathbb R^{2+1}} e^{-4\psi_0} \rd_\alp \big(\chi(\om_{n_k}-\om_0) \big)\Big[A\rd_\bt(\chi(\om_{n_k}-\om_0)) \Big] \,\mathrm{dVol}_{g_{n_k}} = \int_{S^*\mathbb R^{2+1}} a \xi_\alp \xi_\bt \f{\ud \nu}{|\xi|^2}.
\end{split}
\end{equation}

We now claim that in order to prove Theorems~\ref{thm:prelim} and \ref{thm:main}, it suffices to show that for every $\Omega$, $\Omega'$ and $\chi$ as above, $(\Omega,\psi_0,\om_0,g_0,\ud\nu\restriction_{\Omega})$ verifies the conclusion of Theorems~\ref{thm:prelim} and \ref{thm:main}. More precisely,
\begin{proposition}\label{prop:compact.reduction}
Let $(\Omega,\psi_0,\om_0,g_0,\ud\nu)$ be as defined above.
\begin{enumerate}
\item Suppose that under the assumptions of Theorem~\ref{thm:prelim}, for every $\Omega$, $\Omega'$ and $\chi$ above, $\ud\nu$ is supported in $\{(x,\xi) \in S^*\mathcal M: g^{-1}_0(\xi,\xi) = 0\}$ and \eqref{eq:U1vac.vlasov.again} holds in $\Omega$ with any $Y\in C^\infty_c(\Omega)$. Then Theorem~\ref{thm:prelim} holds.
\item Suppose that under the assumptions of Theorem~\ref{thm:main}, for every $\Omega$, $\Omega'$ and $\chi$ above, $(\Omega,\psi_0,\om_0,g_0,\ud\nu\restriction_{\Omega})$ is a radially-averaged measure solution to the restricted Einstein--massless Vlasov equations in $\mathbb U(1)$ symmetry in the sense of Definition~\ref{def:the.final.def}. Then Theorem~\ref{thm:main} holds.
\end{enumerate}
\end{proposition}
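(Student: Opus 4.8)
The plan is to run a standard partition-of-unity / exhaustion argument: local statements on precompact sets $\Omega$, together with a uniqueness-up-to-restriction property for the microlocal defect measures, patch together to the global statements on $\mathcal M$. First I would fix an exhaustion $\Omega_1\Subset\Omega_2\Subset\cdots$ of $\mathcal M$ by open precompact sets with $\overline{\Omega_j}\subset\Omega_{j+1}$, and on each $\Omega_j$ apply the hypothesis with, say, $\Omega=\Omega_j$, $\Omega'=\Omega_{j+1}$ and a cutoff $\chi_j$ as in \eqref{def:chi}. This produces, after passing to a subsequence, a measure $\ud\nu^{(j)}$ on $S^*\mathbb R^{2+1}$ satisfying \eqref{eq:MDM.def.structure}. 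The key observation is that the defining relation \eqref{eq:MDM.def.structure}, tested against symbols $a$ supported in $\pi^{-1}(\Omega_j\cap\Omega_{j'})$ with $j<j'$, forces $\ud\nu^{(j')}\restriction_{S^*\Omega_j}=\ud\nu^{(j)}\restriction_{S^*\Omega_j}$: since $\chi_j=\chi_{j'}=1$ on $\Omega_j$ and the cutoffs only matter through their values where $a$ is supported (the commutator of $A$ with multiplication by $(\chi_{j}-\chi_{j'})$, which vanishes near $\Omega_j$, is smoothing there), both sides of \eqref{eq:MDM.def.structure} agree. A diagonal subsequence then yields a single non-negative Radon measure $\ud\nu$ on $S^*\mathcal M$ with $\ud\nu\restriction_{S^*\Omega_j}=\ud\nu^{(j)}\restriction_{S^*\Omega_j}$ for all $j$, and \eqref{eq:MDM.def.structure} holds globally for any $0$-th order $A$ (any such $a\in C^\infty_c(T^*\mathcal M)$ is supported in some $\pi^{-1}(\Omega_j)$).

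With the global $\ud\nu$ in hand, the conclusions of Theorems~\ref{thm:prelim} and \ref{thm:main} are all \emph{local} in nature, so they transfer immediately. The support condition $\ud\nu$ on $\{g_0^{-1}(\xi,\xi)=0\}$ is checked by testing against $f\in C_c(\mathcal M)$; choosing $j$ large enough that $\mathrm{supp}(f)\subset\Omega_j$ reduces it to the corresponding statement for $\ud\nu^{(j)}$. The same applies to the transport equation \eqref{eq:transport.def}: for $\widetilde a$ homogeneous of degree $1$ with $x$-support in a compact set, pick $\Omega_j$ containing that support, and — after multiplying $\widetilde a$ by a cutoff equal to $1$ there, which does not change the integral since $\ud\nu$ is supported over $\mathrm{supp}_x(\widetilde a)$ — invoke the hypothesis on $\Omega_j$. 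Likewise, the two wave equations $\Box_{g_0}\psi_0+\tfrac12 e^{-4\psi_0}g_0^{-1}(\ud\om_0,\ud\om_0)=0$ and $\Box_{g_0}\om_0-4g_0^{-1}(\ud\om_0,\ud\psi_0)=0$ are distributional identities, and holding them when tested against every $C^\infty_c(\Omega_j)$ test function for every $j$ is equivalent to holding them on all of $\mathcal M$. Finally, for the Einstein part one tests against $Y\in C^\infty_c(\mathcal M)$; choosing $j$ with $\mathrm{supp}(Y)\subset\Omega_j$, both the spacetime integrals and the $S^*\mathcal M$ integral localize to $\Omega_j$, and the hypothesis for $\Omega_j$ gives the desired identity. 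The a fortiori algebraic properties of $T_{\mu\nu}$ (traceless, non-negative, dominant energy condition) are pointwise and follow from the local versions.

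I expect the only genuinely nontrivial point to be the \textbf{consistency / gluing step}: verifying that the locally extracted measures agree on overlaps and hence assemble into one global measure. This rests on showing that \eqref{eq:MDM.def.structure}, restricted to symbols $a$ supported over an overlap region, is insensitive to the choice of cutoff $\chi$ (as long as $\chi\equiv1$ on the region where $a$ lives). This is a routine pseudodifferential-calculus fact — writing $\chi(\psi_{n}-\psi_0)-\chi'(\psi_n-\psi_0)=(\chi-\chi')(\psi_n-\psi_0)$ with $\chi-\chi'$ vanishing on a neighborhood of $\mathrm{supp}_x(a)$, commuting $A$ past the multiplication operator, and using that the resulting commutator is a smoothing operator microlocally over $\mathrm{supp}(a)$, together with the uniform boundedness of $\rd(\psi_n-\psi_0)$ in $L^2_{loc}$ (resp.\ $L^{p_0}_{loc}$) — but it is where the care is needed. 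Once that is granted, the diagonal extraction and the localization of each conclusion are bookkeeping; I would present the argument in the order: (i) exhaustion and local extraction; (ii) overlap consistency and diagonal subsequence; (iii) term-by-term localization of each assertion in Theorems~\ref{thm:prelim}–\ref{thm:main}.
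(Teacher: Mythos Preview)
Your proposal is correct and follows essentially the same approach as the paper: an exhaustion $\Omega_1\Subset\Omega_2\Subset\cdots$ with associated cutoffs $\chi_j$, extraction of local measures $\ud\nu^{(j)}$ along a diagonal subsequence, a consistency check on overlaps (which the paper states in one line as ``well-defined and independent of the particular choice of $i$''), and then term-by-term localization of each conclusion. If anything, you are more explicit than the paper about the overlap consistency and the localization steps.
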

\begin{proof}
We will define a Radon measure on all of $\mathcal M$ under the assumptions.

Let $\{\Omega_i\}_{i=1}^{+\infty}$ be an exhaustion of $\mathcal M$, i.e.~$\Omega_i \subset \Omega_{i+1}$ and $\cup_{i=1}^{+\infty} \Omega_i = \mathcal M$. Define $\chi_i$ by \eqref{def:chi} with $\Omega = \Omega_i$ and $\Omega' = \Omega_{i+1}$.

We now proceed inductively in $i$. First, define $\ud \nu_1$ as a Radon measure on $\Omega_1$ as in the assumption of the proposition. Now, for ever $i\geq 1$, suppose we are given a Radon measure $\ud \nu_i$ on $\Omega_i$ so that the convergence in \eqref{eq:MDM.def.structure} holds and the properties as in the statement of the proposition are satisfied. We can then define a Radon measure $\ud \nu_{i+1}$ on $\Omega_{i+1}$ by considering a further subsequence of $(\psi_{n_k}, \om_{n_k}, g_{n_k})$ as in the assumption of the proposition. {Notice that by definition $\ud \nu_{i+1}\restriction_{\Omega_i} = \ud\nu_i$ for all $i \in \mathbb N$.} By considering a diagonal subsequence, we {therefore obtain} that there is a \emph{fixed} subsequence $n_k$ such that the following holds for every $i\in \mathbb N$:
\begin{equation*}
\begin{split}
&\: \lim_{k\to +\infty}2\int_{\mathbb R^{2+1}} \rd_\alp \big(\chi_i(\psi_{n_k}-\psi_0) \big) \big[A\rd_\bt(\chi_i(\psi_{n_k}-\psi_0)) \big] \,\mathrm{dVol}_{g_{n_k}} \\
&\: \quad + \lim_{k\to +\infty} \f 12 \int_{\mathbb R^{2+1}} e^{-4\psi_0} \rd_\alp \big(\chi_i(\om_{n_k}-\om_0) \big)\big[A\rd_\bt(\chi_i(\om_{n_k}-\om_0)) \big] \,\mathrm{dVol}_{g_{n_k}} = \int_{S^*\mathbb R^{2+1}} a \xi_\alp \xi_\bt \f{\ud \nu_i}{|\xi|^2},
\end{split}
\end{equation*}
 
Define $\ud\nu_\infty$ as follows. Let $\varphi \in C_c(S^*\mathcal M)$. Then there exists $\Omega_i$ such that $\mathrm{supp}\varphi\subset S^*\Omega_i$. Define
$$\ud\nu_\infty(\varphi):= \ud\nu_{i}(\varphi).$$
Note that this is well-defined (and independent of the particular choice of $i$).

In each of the cases (1) and (2), it is then easy to verify that $(\mathcal M, \psi_0,\om_0,g_0,\ud\nu_\infty)$ obeys the desired conclusion. \qedhere
\end{proof}

In view of the above proposition, \textbf{from now on we fix $\Omega$, $\Omega'$ and $\chi$ as above.} It will suffice to prove that the conditions in Proposition~\ref{prop:compact.reduction} hold.

It will be convenient to fix also two open, precompact sets $\Omega'' \subset \Omega'''\subset \mathcal M$ such that $\overline{\Omega'}\subset \Omega''$ and $\overline{\Omega''}\subset \Omega'''$. Define
\begin{equation}\label{def:tildechi}
\widetilde{\chi} = 1 \mbox{ on $\Omega''$},\quad \mathrm{supp}(\widetilde{\chi})\subset \Omega'''.
\end{equation}

\section{Preliminaries on pseudo-differential operators and microlocal defect measures}\label{sec:PSIDOs}

In this section, we recall useful notions on pseudo-differential operators and microlocal defect measures. Everything in this section is standard and is mainly included to fix notations.

For the remainder of this section, fix $k \in \mathbb N$ (which will be taken as $3=2+1$ in later sections). Denote by $T^*\mathbb R^k$ the cotangent bundle of $\mathbb R^k$ with coordinates $(x,\xi)\in \mathbb R^k \times \mathbb R^k$.

\begin{df}
\begin{enumerate}
\item For $m\in \mathbb Z$, define the symbol class 
$$S^m:= \{a:T^*\mathbb R^k \to \mathbb C: a\in C^{\infty},\, \forall \alp,\bt,\, \exists C_{\alp,\bt}>0,\, |\rd_x^\alp \rd_\xi^\bt a(x,\xi)| \leq C_{\alp,\bt} (1+|\xi|)^{m-|\bt|} \}.$$
\item Given a symbol $a \in S^m$, define the operator $\mathrm{Op}(a):\mathcal S(\mathbb R^k)\to \mathcal S(\mathbb R^k)$ by
$$(\mathrm{Op}(a)u)(x) := \f{1}{(2\pi)^k}\int_{\mathbb R^k} \int_{\mathbb R^k} e^{i(x-y)\cdot \xi} a(x,\xi) u(y) \,\ud y\, \ud \xi.$$
For {$A= \mathrm{Op}(a)$ as} above, we say that $A$ is a pseudo-differential operator of order $m$ with symbol $a$. If moreover {$a(x,\xi) = a_{\mathrm{prin}}(x,\xi)\chi(\xi) + a_{\mathrm{error}}$, where $a_{\mathrm{prin}}(x,\lambda\xi) = \lambda^m a(x,\xi)$ for all $\lambda>0$, $\chi(\xi)$ is a cutoff $\equiv 1$ for large $|\xi|$ and $\equiv 0$ near $0$, and $a_{\mathrm{error}} \in S^{m-1}$,} we say that ${a_{\mathrm{prin}}}$ is the principal symbol 
of $A$.
\end{enumerate}
\end{df}

We record for convenience some standard facts.
\begin{lemma}\label{lem:PSIDOs}
\begin{enumerate}
\item (\cite[Theorem~2, p.237]{Stein}) Let $a_1\in S^{m_1}$, $a_2 \in S^{m_2}$. Then $\exists c\in S^{m_1+m_2}$ such that $\mathrm{Op}(a_1) \circ\mathrm{Op}(a_2) = \mathrm{Op}(c)$, where 
$$c(x,\xi) - a_1(x,\xi) a_2(x,\xi) \in S^{m_1+m_2-1}.$$
\item (\cite[Theorem~2, p.237]{Stein}) Let $a_1\in S^{m_1}$, $a_2 \in S^{m_2}$. Then $\exists c\in S^{m_1+m_2-1}$ such that $\mathrm{Op}(a_1) \circ\mathrm{Op}(a_2) - \mathrm{Op}(a_2) \circ\mathrm{Op}(a_1) = \mathrm{Op}(c)$, where 
$$c(x,\xi) + i\{ a_1, a_2\} \in S^{m_1+m_2-2},\quad \{a_1,a_2\}:= \rd_{\xi_\mu} a_1 \rd_{x^\mu} a_2 -  \rd_{x^\mu} a_1  \rd_{\xi_\mu} a_2.$$
\item (\cite[Proposition, p.259]{Stein}) Let $a \in S^{m}$. Then $\mathrm{Op}(a)^*$ (the $L^2$-adjoint of $\mathrm{Op}(a)$) satisfies
$$\mathrm{Op}(a)^*- \mathrm{Op}(\overline{a}) \in S^{m-1}.$$
\item (Calder\'on--Zygmund theorem \cite[Proposition~5, p.251]{Stein}) A pseudo-differential operator $A$ of order $m$ extends to a bounded map $W^{n+m,p}(\mathbb R^k) \to W^{n,p}(\mathbb R^k)$, $\forall n \in \mathbb N\cup \{0\}$, $\forall m\in \mathbb Z$, $\forall p\in (1,+\infty)$.
\item (Rellich--Kondrachov theorem) A pseudo-differential operator $A$ of order $-1$ extends to a \underline{compact} map $:L^2(\mathbb R^k) \to L^2_{\mathrm{loc}}(\mathbb R^k)$.
\item (Calder\'on commutator theorem \cite[Corollary, p.309]{Stein})
Let $u:\mathbb R^n\to \mathbb R$ be a Lipschitz function for which there exists $M>0$ so that $|u(x)-u(y)|\leq M|x-y|$ for all $x,\,y\in \mathbb R^n$. Let $T$ be a pseudo-differential operator of order $1$. Then $[T, u] \in \mathcal B(L^2(\mathbb R^n), L^2(\mathbb R^n))$, i.e.~that it is a bounded linear map on $L^2(\mathbb R^n)$. In {addition}\footnote{The precise statement in \cite{Stein} only asserts that $[T, u] \in \mathcal B(L^2(\mathbb R^n), L^2(\mathbb R^n))$ (without explicitly saying that the {operator norm} is proportional to $M$). Nevertheless, \eqref{eq:quan.Calderon} follows from the closed graph theorem. Let $(\mathbf{Lip},\,\|\cdot \|_{\mathbf{Lip}})$ be the Banach space of equivalence classes of Lipschitz functions, where two functions are equivalent if they differ by a constant, and $\|u \|_{\mathbf{Lip}}:= \sup_{x\neq y} \f{|u(x)-u(y)|}{|x-y|}$. The corollary on p.309 in \cite{Stein} implies that there is a map $\Phi:\mathbf{Lip} \to \mathcal B(L^2(\mathbb R^n), L^2(\mathbb R^n))$ given by $[u]\mapsto [T,u]$. By the closed graph theorem, in order to obtain \eqref{eq:quan.Calderon}, it suffices to show that if $\lim_{j\to+\infty}\| [u_j] \|_{\mathbf{Lip}}= 0$ and $\lim_{j\to+\infty} \|[T, u_j] - S\|_{\mathcal B(L^2(\mathbb R^n), L^2(\mathbb R^n))} = 0$ for some $S \in \mathcal B(L^2(\mathbb R^n), L^2(\mathbb R^n))$, then $S= 0$. To show this, pick a representative $u_j$ such that $u_j(0) = 0$. In particular it follows that $\||x|^{-1}u_j\|_{L^\infty} \to 0$ as $j\to +\infty$. Now for any $\varphi\in L^2(\mathbb R^n)$, $T(u_j\varphi)$ and $u_j T(\varphi)$ both tend to $0$ in the sense of distribution{s} as $j\to +\infty$. Therefore, $S=0$ as required.}, for every $f\in \mathcal S(\mathbb R^n)$,
\begin{equation}\label{eq:quan.Calderon}
\|T(uf) - u(Tf)\|_{L^2(\mathbb R^n)} \ls M \|f\|_{L^2(\mathbb R^n)},
\end{equation}
where the implicit constant depends only on $T$.
\end{enumerate}
\end{lemma}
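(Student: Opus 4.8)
The statement collects a number of standard facts about pseudo-differential operators, and the section itself notes that everything is included only to fix notation; accordingly, the plan is to reduce each item to a known theorem, supplying an extra argument only where the reduction is not verbatim. Items (1)--(4) are quoted directly from the references indicated: the composition formula (1) and the commutator formula (2) are the symbolic calculus from \cite[Theorem~2, p.237]{Stein}, the adjoint formula (3) is \cite[Proposition, p.259]{Stein}, and the Sobolev mapping property (4) is the Calder\'on--Zygmund boundedness from \cite[Proposition~5, p.251]{Stein}.

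Concretely, for (1)--(3) one writes $\mathrm{Op}(a_1)\circ\mathrm{Op}(a_2)$ (resp.\ the formal adjoint $\mathrm{Op}(a)^*$) as an iterated oscillatory integral, performs the interior integration to express it as $\mathrm{Op}(c)$ for a symbol $c$ given by an integral of $a_1$, $a_2$ against an oscillatory kernel, and Taylor-expands in the appropriate frequency variable about zero; the leading term of the expansion is $a_1(x,\xi)a_2(x,\xi)$ (resp.\ $\overline{a(x,\xi)}$) and the remainder is shown to lie in $S^{m_1+m_2-1}$ (resp.\ $S^{m-1}$) by the standard non-stationary-phase estimates. Item (2) then follows by antisymmetrizing: the symmetric first-order terms cancel in the difference, leaving $-i\{a_1,a_2\}$ modulo $S^{m_1+m_2-2}$ with the Poisson bracket $\{a_1,a_2\}=\rd_{\xi_\mu}a_1\,\rd_{x^\mu}a_2-\rd_{x^\mu}a_1\,\rd_{\xi_\mu}a_2$ exactly as in the statement. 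I would simply cite \cite{Stein} for all of this rather than reproduce the computation.

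Item (5) is the only place needing a short independent argument. Given a pseudo-differential operator $A$ of order $-1$, apply (4) with $n=1$, $m=-1$, $p=2$ to see that $A:L^2(\m R^k)\to W^{1,2}(\m R^k)=H^1(\m R^k)$ is bounded; then compose with the embedding $H^1(\m R^k)\hookrightarrow L^2_{loc}(\m R^k)$, which is compact by the Rellich--Kondrachov theorem. A bounded operator followed by a compact one is compact, so $A:L^2(\m R^k)\to L^2_{loc}(\m R^k)$ is compact; note that the localization is essential here, since on the unbounded space $\m R^k$ the embedding $H^1\hookrightarrow L^2$ is not compact.

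For item (6), the qualitative assertion that $[T,u]$ is bounded on $L^2(\m R^n)$ whenever $T$ has order $1$ and $u$ is Lipschitz is the Calder\'on commutator corollary \cite[Corollary, p.309]{Stein}. To obtain the quantitative bound \eqref{eq:quan.Calderon}, the plan is the closed-graph argument outlined in the footnote: view $[u]\mapsto[T,u]$ as a linear map from the Banach space of Lipschitz functions modulo constants (with the Lipschitz seminorm) into $\mathcal{B}(L^2(\m R^n),L^2(\m R^n))$; if $\|[u_j]\|\to0$, normalize representatives so that $u_j(0)=0$, so that $\||x|^{-1}u_j\|_{L^\i}\to 0$, whence for each $\varphi\in L^2(\m R^n)$ both $T(u_j\varphi)$ and $u_jT\varphi$ tend to $0$ in the sense of distributions; hence the graph is closed, the map is bounded, and this bound is precisely \eqref{eq:quan.Calderon}. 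There is no genuine obstacle in any of this: the only points that warrant care are making the normalization and sign conventions in (1)--(3) consistent with those in the statement, and keeping the target space in (5) as $L^2_{loc}$ rather than $L^2$.
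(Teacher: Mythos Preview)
Your proposal is correct and matches the paper's treatment exactly: the lemma is stated without proof, with items (1)--(4) and the qualitative part of (6) simply cited from \cite{Stein}, item (5) left as an immediate consequence of (4) plus Rellich--Kondrachov, and the quantitative bound \eqref{eq:quan.Calderon} in (6) derived via the closed-graph argument you reproduce (which is precisely the content of the footnote).
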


We now turn to the discussion of microlocal defect measures, following \cite{Gerard} (see also \cite{Tartar}). We first need some preliminary definitions.

Let $S^*\mathbb R^k$ be the cosphere bundle given by $S^*\mathbb R^k:= (T^*\mathbb R^k\setminus\{0\})/\sim$, where $(x,\xi)\sim (y,\eta)$ if and only if $x=y$ and $\xi = \lambda \eta$ for some $\lambda>0$. From now on, we identify a function on $S^*\mathbb R^k$ with a function on $T^*\mathbb R^k\setminus\{0\}$ which is homogeneous of order $0$ in $\xi$, i.e.~$a(x,\lambda\xi) = a(x,\xi)$, $\forall \lambda>0$.

\begin{df}\label{def:nonneg.measure}
We say that $\ud\mu$ is a non-negative $(N\times N)$-complex-matrix-valued Radon measure on $S^*\mathbb R^k$ if  $\ud\mu$ is a map $\ud \mu: C_c(S^*\mathbb R^k)\to \mathbb C^{N\times N}$ 
\begin{enumerate}
\item obeying the estimate $\|\ud \mu (\varphi) \|\leq C_K \|\varphi\|_{C(K)}$ for every compact set $K\subset S^*\mathbb R^k$ (for some $C_K>0$ depending on $K$), and
\item satisfying $\ud \mu(\varphi)$ is a positive semi-definite Hermitian matrix whenever $\varphi$ is a non-negative function.
\end{enumerate}
\end{df}

\begin{df}
Let $\ud\mu$ be a non-negative $(N\times N)$-complex-matrix-valued Radon measure on $S^*\mathbb R^k$ and $\slashed{a}:S^*\mathbb R^k\to \mathbb C^{N\times N}$ be a continuous matrix-valued function on $S^*\mathbb R^k$.

Define $\mathrm{tr}\, (\slashed{a}(x,\xi) \,\ud \mu)$ to be the (scalar-valued) Radon measure on $S^*\mathbb R^k$ given by
$$(\mathrm{tr}\, (\slashed{a}(x,\xi) \,\ud \mu))(\varphi) := \mathrm{tr}\,[\slashed{a}(x,\xi) \cdot (\ud \mu(\varphi))].$$
\end{df}

\begin{thm}[Existence of microlocal defect measures, Theorem~1\footnote{Note that this is a specialization of the original theorem of G\'erard. In the original paper, the domain is any open set in $\mathbb R^k$ and $u_n$ may take value in any separable Hilbert space, instead of $\mathbb C^N$.} in \cite{Gerard}]\label{thm:existenceMDM}
Suppose $\{u_n\}_{n=1}^{+\infty} \in L^2(\mathbb R^k;\mathbb C^N)$ be a bounded sequence such that $u_n \rightharpoonup 0$ weakly in $L^2(\mathbb R^k;\mathbb C^N)$.

Then there exists a subsequence $\{u_{n_k}\}_{k=1}^{+\infty}$ and a non-negative $(N\times N)$-complex-matrix-valued Radon measure $\ud\mu$ on $S^*\mathbb R^k$ such that for every $\mathbb C^{N\times N}$-valued pseudo-differential operator $\slashed{A}$ of order $0$ with principal symbol $\slashed{a} \in C_c(S^*\mathbb R^k; \mathbb C^{N\times N})$,
$$\lim_{k\to +\infty} \int_{\mathbb R^k} \langle \slashed{A} u_{n_k}, u_{n_k} \rangle_{\mathbb C^N} \,\ud x= \int_{S^*\mathbb R^k} \mathrm{tr}\, (\slashed{a}(x,\xi) \,\ud \mu).$$
\end{thm}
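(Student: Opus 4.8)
The plan is to follow the classical construction of Tartar and G\'erard, built entirely on the facts recorded in Lemma~\ref{lem:PSIDOs}. Since $u_n\rightharpoonup 0$ in $L^2(\mathbb R^k;\mathbb C^N)$ and $C:=\sup_n\|u_n\|_{L^2}<\infty$, I would first introduce, for every matrix symbol $\slashed a\in S^0(\mathbb R^k;\mathbb C^{N\times N})$ whose $x$-base is compact, the quantity
\[
L_n(\slashed a):=\int_{\mathbb R^k}\langle \mathrm{Op}(\slashed a)u_n,u_n\rangle_{\mathbb C^N}\,\ud x.
\]
The Calder\'on--Zygmund $L^2$-bound (Lemma~\ref{lem:PSIDOs}.4) gives $|L_n(\slashed a)|\ls C^2\,p(\slashed a)$ for a fixed seminorm $p$ controlling $\slashed a$ and finitely many of its derivatives. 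Fixing a countable family $\{\slashed a^{(m)}\}\subset C^\infty_c(S^*\mathbb R^k;\mathbb C^{N\times N})$ dense for the uniform norm, a diagonal extraction produces a subsequence $\{u_{n_k}\}$ along which $L_{n_k}(\slashed a^{(m)})$ converges for every $m$; the uniform bound and equicontinuity then upgrade this to convergence of $L_{n_k}(\slashed a)$ for every continuous principal symbol $\slashed a$. Write $L(\slashed a):=\lim_k L_{n_k}(\slashed a)$.

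Next I would check that $L$ depends only on the principal symbol and defines a continuous linear functional on $C_c(S^*\mathbb R^k;\mathbb C^{N\times N})$. If $\slashed b\in S^{-1}$ has compact $x$-base, then $\mathrm{Op}(\slashed b)$ is compact as a map $L^2\to L^2_{loc}$ (Lemma~\ref{lem:PSIDOs}.5); inserting a spatial cutoff equal to $1$ on the base of $\slashed b$ (which costs only an $S^{-\infty}$ term) and using $u_{n_k}\rightharpoonup 0$, one obtains $\langle \mathrm{Op}(\slashed b)u_{n_k},u_{n_k}\rangle\to 0$. Combined with the composition and adjoint calculus (Lemma~\ref{lem:PSIDOs}.1 and Lemma~\ref{lem:PSIDOs}.3), this shows that $L(\slashed a)$ is unchanged if $\slashed a$ is modified by an element of $S^{-1}$, that $L(\slashed a^*)=\overline{L(\slashed a)}$, and that $L$ is linear over $\mathbb C^{N\times N}$ in the componentwise sense. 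Restricting $L$ to symbols of the form $\varphi(x,\xi)M$ with $\varphi\in C_c(S^*\mathbb R^k)$ and $M\in\mathbb C^{N\times N}$ then packages $L$ into the object that will be $\slashed a\mapsto\int_{S^*\mathbb R^k}\mathrm{tr}(\slashed a\,\ud\mu)$ once the measure $\ud\mu$ is produced.

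It remains to establish the positivity of Definition~\ref{def:nonneg.measure} and then invoke Riesz--Markov. Given a smooth Hermitian positive semidefinite $\slashed a$ with compact base and $\ep>0$, the symbol $\slashed a+\ep I$ is bounded below by $\ep I$, hence $\slashed c_\ep:=(\slashed a+\ep I)^{1/2}$ is a smooth Hermitian element of $S^0$ with $\slashed c_\ep-\sqrt\ep\,I$ compactly based. Since every symbolic error below involves an $x$-derivative of $\slashed c_\ep$, it vanishes off the base of $\slashed a$, so by the calculus
\[
\mathrm{Op}(\slashed c_\ep)^*\,\mathrm{Op}(\slashed c_\ep)=\mathrm{Op}(\slashed a)+\ep\,\mathrm{Id}+\mathrm{Op}(\slashed s_\ep),\qquad \slashed s_\ep\in S^{-1}\ \text{with compact base.}
\]
Pairing with $u_{n_k}$ and using $\mathrm{Op}(\slashed c_\ep)^*\mathrm{Op}(\slashed c_\ep)\ge 0$,
\[
L_{n_k}(\slashed a)=\|\mathrm{Op}(\slashed c_\ep)u_{n_k}\|_{L^2}^2-\ep\|u_{n_k}\|_{L^2}^2-\langle \mathrm{Op}(\slashed s_\ep)u_{n_k},u_{n_k}\rangle\ge -\ep C^2-\langle \mathrm{Op}(\slashed s_\ep)u_{n_k},u_{n_k}\rangle .
\]
Letting $k\to\infty$ kills the last term by the previous step, and then $\ep\to 0$ gives $L(\slashed a)\ge 0$; applied componentwise this is exactly the semidefiniteness in Definition~\ref{def:nonneg.measure}. (Alternatively one may quote the sharp G\aa rding inequality directly.) The Riesz--Markov representation theorem then yields a non-negative $(N\times N)$-matrix-valued Radon measure $\ud\mu$ on $S^*\mathbb R^k$ with $\lim_k L_{n_k}(\slashed a)=\int_{S^*\mathbb R^k}\mathrm{tr}(\slashed a(x,\xi)\,\ud\mu)$, which is the assertion.

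The main obstacle is the positivity step: one must realize the Hermitian square root $(\slashed a+\ep I)^{1/2}$ as a genuine element of $S^0$ and keep track of the lower-order errors $\slashed s_\ep$ while accommodating the noncompactness of $\mathbb R^k$ in $x$ --- this is precisely where the compact base of the admissible symbols and the Rellich-type compactness of $\mathrm{Op}(S^{-1})$ (Lemma~\ref{lem:PSIDOs}.5) are essential, converting the weak $L^2$-convergence of $u_n$ into strong vanishing of every error term. By comparison, the diagonal extraction and the uniform $L^2$-bound for zeroth-order operators are routine once Lemma~\ref{lem:PSIDOs} is available.
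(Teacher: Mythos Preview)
The paper does not give its own proof of this theorem: it is stated as a quotation of Theorem~1 in \cite{Gerard} and is used as a black box throughout. Your proposal is a correct outline of the standard G\'erard--Tartar construction (diagonal extraction on a dense family, reduction modulo $S^{-1}$ via compactness, positivity through a symbolic square root or sharp G\aa rding, then Riesz--Markov), so there is nothing to compare against in the paper itself.
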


The measure $\ud \mu$ in Theorem~\ref{thm:existenceMDM} is called a \emph{microlocal defect measure}. Following \cite{Gerard}, if the conclusion of Theorem~\ref{thm:existenceMDM} holds for $\{u_n\}_{n=1}^{+\infty}$ (without passing to a subsequence), we say that $\{u_n\}_{n=1}^{+\infty}$ is a \emph{pure} sequence.

\begin{thm}[Localization of microlocal defect measures, Corollary~2.2 in \cite{Gerard}]\label{thm:localization}
Let $\{u_n\}$ be a pure sequence of $L^2(\mathbb R^k,\mathbb C^N)$, of microlocal defect measure $\ud \mu$. Let $P$ be an $m$-th order differential operator with principal symbol $p = \sum_{|\alp|= m} a_\alp (i\xi)^\alp$ for some smooth $(N\times N)$-matrices $a_\alp$. If $\{Pu_n\}_{n=1}$ is relatively compact in $H^{-m}_{\mathrm{loc}}(\mathbb R^k,\mathbb C^N)$, then 
$p\,\ud\mu = 0$.
\end{thm}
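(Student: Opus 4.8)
The plan is to test the measure $\ud\mu$ against a carefully chosen order-zero pseudo-differential operator built out of $P$ itself, exploiting that post-composing $P$ with an elliptic operator of order $-m$ converts the relative compactness hypothesis into genuine strong $L^2$-convergence to zero.

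First I would upgrade the hypothesis. Since $u_n\rightharpoonup 0$ weakly in $L^2(\m R^k;\m C^N)$ and $P$ is a continuous linear map $L^2(\m R^k;\m C^N)\to H^{-m}_{loc}(\m R^k;\m C^N)$, we get $Pu_n\rightharpoonup 0$ weakly in $H^{-m}_{loc}$. Combined with the assumed relative compactness of $\{Pu_n\}$ in $H^{-m}_{loc}$ and uniqueness of weak limits, every subsequence of $\{Pu_n\}$ has a further subsequence converging strongly to $0$, so $Pu_n\to 0$ strongly in $H^{-m}_{loc}$. Next I would fix $\slashed a\in C_c(S^*\m R^k;\m C^{N\times N})$, let $\slashed A\in\mathrm{Op}(S^0)$ have principal symbol $\slashed a$, and choose $\chi\in C^\infty_c(\m R^k)$ equal to $1$ on the base-projection of $\mathrm{supp}\,\slashed a$. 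Then $\chi P=\sum_{|\alp|\le m}(\chi a_\alp)\rd^\alp\in\mathrm{Op}(S^m)$ (its coefficients $\chi a_\alp$ now have bounded derivatives of all orders), with principal symbol $\chi(x)p(x,\xi)$. Let $\Lambda^{-m}:=\mathrm{Op}(b)$ with $b(\xi)=|\xi|^{-m}$ for $|\xi|\ge 1$, so $\Lambda^{-m}\in\mathrm{Op}(S^{-m})$ has principal symbol $|\xi|^{-m}$ and maps $H^{-m}\to L^2$ boundedly by Lemma~\ref{lem:PSIDOs}.4. Set $\slashed B:=\slashed A\circ\Lambda^{-m}\circ(\chi P)$; by the composition calculus (Lemma~\ref{lem:PSIDOs}.1), $\slashed B\in\mathrm{Op}(S^0)$ with principal symbol $\slashed b(x,\xi)=\slashed a(x,\xi)\,|\xi|^{-m}\chi(x)\,p(x,\xi)$, which coincides with $\slashed a(x,\xi)\,p(x,\xi)/|\xi|^m$ wherever $\chi\equiv 1$ (in particular on $\mathrm{supp}\,\slashed a$) and vanishes elsewhere, so in fact $\slashed b=\slashed a\,p/|\xi|^m\in C_c(S^*\m R^k;\m C^{N\times N})$.

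Then I would evaluate $\lim_n\langle\slashed Bu_n,u_n\rangle_{L^2}$ in two ways. On one hand $\slashed Bu_n=\slashed A\Lambda^{-m}\big(\chi\,(Pu_n)\big)$; since $\chi(Pu_n)\to 0$ in $H^{-m}$ by the first step, $\Lambda^{-m}(\chi Pu_n)\to 0$ in $L^2$, hence $\slashed Bu_n\to 0$ in $L^2$, and as $\{u_n\}$ is $L^2$-bounded, $\langle\slashed Bu_n,u_n\rangle_{L^2}\to 0$. On the other hand, since $\{u_n\}$ is pure with microlocal defect measure $\ud\mu$, Theorem~\ref{thm:existenceMDM} gives $\langle\slashed Bu_n,u_n\rangle_{L^2}\to\int_{S^*\m R^k}\mathrm{tr}\big(\slashed b(x,\xi)\,\ud\mu\big)$. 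Therefore $\int_{S^*\m R^k}\mathrm{tr}\big(\slashed a(x,\xi)\,\tfrac{p(x,\xi)}{|\xi|^m}\,\ud\mu\big)=0$ for every $\slashed a\in C_c(S^*\m R^k;\m C^{N\times N})$. Taking $\slashed a=\varphi(x,\xi)E_{ji}$ with $\varphi$ a scalar test function and $E_{ji}$ the elementary matrices shows every matrix entry of $\tfrac{p}{|\xi|^m}\ud\mu$ vanishes, i.e.~$p\,\ud\mu=0$, as claimed.

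The hard part here is not deep; the two points that need genuine care are exactly the ones flagged above: upgrading relative compactness in $H^{-m}_{loc}$ to honest strong convergence $Pu_n\to 0$, and the localization by $\chi$ that turns the differential operator $P$ — whose smooth coefficients need not have bounded derivatives — into a bona fide element of the symbol class $S^m$, so that the composition calculus of Lemma~\ref{lem:PSIDOs} applies and correctly identifies the principal symbol $\slashed b$.
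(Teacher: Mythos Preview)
The paper does not supply its own proof of this statement; it is quoted verbatim as Corollary~2.2 from G\'erard \cite{Gerard} and used as a black box. Your argument is correct and is precisely the standard proof one finds in that reference: compose $P$ on the left with an order $-m$ parametrix and an order-zero test operator, then evaluate $\langle \slashed B u_n,u_n\rangle$ two ways, once via the strong $H^{-m}_{loc}$ convergence $Pu_n\to 0$ (which you rightly extract from weak convergence plus relative compactness) and once via the defining property of $\ud\mu$. The localization step with $\chi$ to force the coefficients of $P$ into the symbol class $S^m$ is exactly the point that needs care, and you handle it correctly; note also that $(\chi P)u_n=\chi\cdot(Pu_n)$ as functions, so the passage from ``$\chi Pu_n\to 0$ in $H^{-m}$'' to ``$\slashed Bu_n\to 0$ in $L^2$'' is clean.
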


\section{Microlocal defect measures for $\psi$ and $\om$}\label{sec:MDM.psi.om}

We begin to prove Theorem~\ref{thm:prelim} by studying the properties of the microlocal defect measures. For the remainder of this section, we work under the assumption of Theorem~\ref{thm:prelim}.

\subsection{Existence of the microlocal defect measures}\label{sec:existence}

Consider now the sequence of functions $\chi(\psi_n-\psi_0)$ and $\chi(\om_n-\om_0)$ (cf.~\eqref{def:chi}). We are now in a setting to apply the existence theorem (Theorem~\ref{thm:existenceMDM}) to obtain microlocal defect measures.

\begin{proposition}[Existence of microlocal defect measures]\label{prop:existence.MDM}
There exist Radon measures $\ud \sigma^\psi_{\alp\bt}$, $\ud \sigma^\om_{\alp\bt}$ and $\ud \sigma^{\mathrm{cross}}_{\alp\bt}$ on $S^*\mathbb R^{2+1}$ such that for any zeroth order (scalar) pseudo-differential operators $\{A^{\alp\bt}\}_{\alp,\bt = t,1,2}$ with principal symbols $\{a^{\alp\bt}\}_{\alp,\bt = t,1,2}$, the following holds up to a subsequence (which we do not relabel):
$$\lim_{n\to \infty} \int_{\mathbb R^{2+1}} \rd_\alp (\chi(\psi_n-\psi_0)) A^{\alp\bt} \rd_\bt(\chi(\psi_n-\psi_0)) \, \mathrm{dVol}_{g_0} = \int_{S^*\mathbb R^{2+1}} a^{\alp\bt} \,\ud \sigma^\psi_{\alp\bt},$$
$$\lim_{n\to \infty} \int_{\mathbb R^{2+1}} \rd_\alp (\chi(\om_n-\om_0)) A^{\alp\bt} \rd_\bt(\chi(\om_n-\om_0)) \, \mathrm{dVol}_{g_0} = \int_{S^*\mathbb R^{2+1}} a^{\alp\bt} \,\ud \sigma^\om_{\alp\bt},$$
$$\lim_{n\to \infty} \int_{\mathbb R^{2+1}} \rd_\alp (\chi(\psi_n-\psi_0)) A^{\alp\bt} \rd_\bt(\chi(\om_n-\om_0)) \, \mathrm{dVol}_{g_0} = \int_{S^*\mathbb R^{2+1}} a^{\alp\bt} \,\ud \sigma^{\mathrm{cross}}_{\alp\bt},$$
$$\lim_{n\to \infty} \int_{\mathbb R^{2+1}} \rd_\alp (\chi(\om_n-\om_0)) A^{\alp\bt} \rd_\bt(\chi(\psi_n-\psi_0)) \, \mathrm{dVol}_{g_0} = \int_{S^*\mathbb R^{2+1}} a^{\alp\bt} \,(\ud \sigma^{\mathrm{cross}})^*_{\alp\bt},$$
where ${ }^*$ denotes the Hermitian conjugate.

Moreover, $\ud \sigma^\psi_{\alp\bt}$ and $\ud \sigma^\om_{\alp\bt}$ are non-negative in the sense of Definition~\ref{def:nonneg.measure}.
\end{proposition}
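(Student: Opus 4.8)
The plan is to reduce everything to a single application of the existence theorem for microlocal defect measures, Theorem~\ref{thm:existenceMDM}, applied to the vector of all the relevant first derivatives.

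\emph{Step 1 (a pure $\m C^6$-valued sequence).} First I would check that, for each $\alp\in\{t,1,2\}$, the sequences $\rd_\alp(\chi(\psi_n-\psi_0))$ and $\rd_\alp(\chi(\om_n-\om_0))$ are bounded in $L^2(\m R^{2+1})$ and converge weakly to $0$ there. Writing $\rd_\alp(\chi(\psi_n-\psi_0)) = (\rd_\alp\chi)(\psi_n-\psi_0) + \chi\,\rd_\alp(\psi_n-\psi_0)$, the first term tends to $0$ in $L^\i$ --- hence in $L^2$, being compactly supported --- by the local uniform convergence $\psi_n\to\psi_0$; for the second term I would use that the weak $L^{p_0}_{loc}$ convergence $\rd\psi_n\rightharpoonup\rd\psi_0$, together with the precompactness of $\mathrm{supp}(\chi)$, upgrades to weak $L^2(\Omega')$ convergence (a bounded sequence in $L^{p_0}(\Omega')\subset L^2(\Omega')$ has a weak-$L^2$ limit, which must coincide with the weak-$L^{p_0}$ limit by testing against $C^\i_c$), and multiplication by the bounded function $\chi$ preserves weak $L^2$ convergence; the same holds with $\om$ in place of $\psi$. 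Thus
$$u_n := \big((\rd_\alp(\chi(\psi_n-\psi_0)))_{\alp=t,1,2},\ (\rd_\alp(\chi(\om_n-\om_0)))_{\alp=t,1,2}\big)$$
is a bounded sequence in $L^2(\m R^{2+1};\m C^6)$ with $u_n\rightharpoonup 0$, so by Theorem~\ref{thm:existenceMDM} there is a subsequence (which we do not relabel) and a non-negative $6\times 6$-matrix-valued Radon measure $\ud\mu$ on $S^*\m R^{2+1}$ computing all limits $\int_{\m R^{2+1}}\langle \slashed A u_n,u_n\rangle\,\ud x\to\int_{S^*\m R^{2+1}}\mathrm{tr}(\slashed a\,\ud\mu)$.

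\emph{Step 2 (removing the volume form; block operators).} Set $\rho:=\widetilde\chi\sqrt{-\det g_0}$ (extended by $0$; here $\widetilde\chi$ is as in \eqref{def:tildechi}), which is a non-negative smooth compactly supported function equal to $\sqrt{-\det g_0}$ --- which is smooth and positive, since $-\det g_0 = N_0^2 e^{4\gamma_0}$ in the gauge \eqref{g.form} --- on a neighbourhood of $\overline{\Omega'}\supset \mathrm{supp}(\chi)$. Since each component of $u_n$ is supported in $\mathrm{supp}(\chi)$, we may replace $\mathrm{dVol}_{g_0}$ by $\rho\,\ud x$ in each of the four integrals in the statement. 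For the first identity, apply Step 1 with $\slashed A$ the $6\times 6$ operator whose $\psi$-$\psi$ block has entries $M_\rho\circ A^{\alp\bt}$ (multiplication by $\rho$ followed by $A^{\alp\bt}$) and all other blocks zero; this is a $0$-th order operator whose principal symbol, $\rho(x)\,a^{\alp\bt}(x,\xi)$ in the $\psi$-$\psi$ block, lies in $C_c(S^*\m R^{2+1})$. Reading off the limit and using that $\ud\mu$ is supported over $\overline{\mathrm{supp}(\chi)}$ (so that $\rho=\sqrt{-\det g_0}$ there), one is led to define $\ud\sigma^\psi_{\alp\bt}:=\sqrt{-\det g_0}\,\ud\mu^{\psi\psi}_{\bt\alp}$, and similarly $\ud\sigma^\om_{\alp\bt}$ from the $\om$-$\om$ block and $\ud\sigma^{\mathrm{cross}}_{\alp\bt}$ from the $\psi$-$\om$ block. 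The fourth identity, involving the $\om$-$\psi$ block, then follows automatically with $(\ud\sigma^{\mathrm{cross}})^*$, since $\ud\mu(\varphi)$ is a Hermitian matrix for every test function $\varphi$.

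\emph{Step 3 (non-negativity).} For a non-negative $\varphi\in C_c(S^*\m R^{2+1})$, the matrix $\ud\mu(\varphi)$ is positive semi-definite Hermitian by Definition~\ref{def:nonneg.measure}; hence so is its principal $3\times 3$ submatrix corresponding to the $\psi$-$\psi$ block, and multiplication by the scalar $\sqrt{-\det g_0}$, which is bounded and positive on the relevant support, preserves this. The same applies to the $\om$-$\om$ block, so $\ud\sigma^\psi_{\alp\bt}$ and $\ud\sigma^\om_{\alp\bt}$ are non-negative in the sense of Definition~\ref{def:nonneg.measure}.

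The argument is essentially bookkeeping once Theorem~\ref{thm:existenceMDM} is available; the one point that is not purely formal --- and the closest thing to an obstacle --- is the upgrade in Step 1 from the hypothesized weak $L^{p_0}_{loc}$ convergence of the derivatives to weak $L^2$ convergence, which is exactly where one uses $p_0>2$ (a fortiori implied by $p_0>\f 83$) together with the precompactness of the cutoff support.
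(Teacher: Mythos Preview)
Your proposal is correct and follows essentially the same approach as the paper: apply Theorem~\ref{thm:existenceMDM} to the $\m C^6$-valued sequence of all six first derivatives, and read off $\ud\sigma^\psi$, $\ud\sigma^\om$, $\ud\sigma^{\mathrm{cross}}$ from the blocks of the resulting matrix measure, with the fourth identity coming from Hermiticity and non-negativity from positive semi-definiteness of principal submatrices. You are in fact more careful than the paper on two points it leaves implicit: the verification that $u_n\rightharpoonup 0$ in $L^2$ (using $p_0>2$ and precompact support), and the passage from $\mathrm{dVol}_{g_0}$ to $\ud x$ by absorbing $\rho=\widetilde\chi\sqrt{-\det g_0}$ into the operator so that the symbol becomes compactly supported.
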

\begin{proof}
Applying Theorem~\ref{thm:existenceMDM} with
\[ u_n = 
\left[
\begin{array}{c}
\rd_t (\chi(\psi_n-\psi_0)) \\
\rd_{x^1} (\chi(\psi_n-\psi_0)) \\
\rd_{x^2} (\chi(\psi_n-\psi_0)) \\
\rd_t (\chi(\om_n-\om_0)) \\
\rd_{x^1} (\chi(\om_n-\om_0)) \\
\rd_{x^2} (\chi(\om_n-\om_0))
\end{array}
\right],
\]
we obtain a non-negative $(6\times 6)$-complex-matrix-valued Radon measure $\ud\mu$. Since $\ud\mu$ is Hermitian by Theorem~\ref{thm:existenceMDM}, $\ud\mu$ takes a block diagonal form as follows
\[
\ud \mu=
\left[
\begin{array}{c|c}
\ud \sigma^\psi & \ud \sigma^{\mathrm{cross}} \\
\hline
(\ud \sigma^{\mathrm{cross}})^* & \ud \sigma^\om
\end{array}
\right].
\]
It is then easy to check that the components $\ud \sigma^\psi_{\alp\bt}$, $\ud \sigma^\om_{\alp\bt}$ and $\ud \sigma^{\mathrm{cross}}_{\alp\bt}$ of the corresponding measures have the properties as claimed. (Note that we have in particular used $\overline{\rd_\alp (\chi(\psi_n-\psi_0))} = \rd_\alp (\chi(\psi_n-\psi_0))$, etc.~in the expressions in the proposition.) \qedhere
\end{proof}

Without loss of generality (by passing to a subsequence if necessary), \textbf{we will assume from now on that the sequence is pure.}

\subsection{First properties of the microlocal defect measures}

In this subsection, we prove some properties of the microlocal defect measures.
\begin{proposition}\label{prop:dsigma.is.symmetric}
$\ud \sigma^{\mathrm{cross}}_{\alp\bt}$ is symmetric, i.e.
$$\ud \sigma^{\mathrm{cross}}_{\alp\bt} = \ud \sigma^{\mathrm{cross}}_{\bt\alp}.$$
\end{proposition}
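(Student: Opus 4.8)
The plan is to exploit the structure of the microlocal defect measure $\ud\sigma^{\mathrm{cross}}$ as the ``off-diagonal block'' of the matrix-valued measure $\ud\mu$ produced in Proposition~\ref{prop:existence.MDM}. Concretely, $\ud\sigma^{\mathrm{cross}}_{\alp\bt}$ is defined by the limiting bilinear form
$$\int_{\mathbb R^{2+1}} \rd_\alp(\chi(\psi_n-\psi_0))\, A^{\alp\bt}\, \rd_\bt(\chi(\om_n-\om_0))\,\mathrm{dVol}_{g_0} \;\longrightarrow\; \int_{S^*\mathbb R^{2+1}} a^{\alp\bt}\,\ud\sigma^{\mathrm{cross}}_{\alp\bt}.$$
The key point is that each of $\rd_\alp(\chi(\psi_n-\psi_0))$ and $\rd_\bt(\chi(\om_n-\om_0))$ is a gradient — more precisely, $\rd_\alp$ of a scalar function of $x$ — so the two indices $\alp$ and $\bt$ play interchangeable roles once one also swaps the operator $A^{\alp\bt}$ appropriately. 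So first I would take an arbitrary pair of indices $(\alp,\bt)$ and a scalar symbol $a$, and write down the defining limit for $\ud\sigma^{\mathrm{cross}}_{\alp\bt}$ tested against $a$ using a pseudo-differential operator $A$ with principal symbol $a$.

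Next, the main step: I would integrate by parts (or, equivalently, use the $L^2$-adjoint from Lemma~\ref{lem:PSIDOs}.3) to move $A$ from acting on $\rd_\bt(\chi(\om_n-\om_0))$ to acting on $\rd_\alp(\chi(\psi_n-\psi_0))$. Since $\rd_\alp$ and $\rd_\bt$ are just coordinate derivatives on functions of $x$, we have the pointwise identity
$$\rd_\alp(\chi(\psi_n-\psi_0))\,(A\rd_\bt(\chi(\om_n-\om_0))) \;=\; (A^*\rd_\alp(\chi(\psi_n-\psi_0)))\,\rd_\bt(\chi(\om_n-\om_0)) + (\text{divergence term}),$$
where the divergence term integrates to zero because $\chi(\psi_n-\psi_0)$, $\chi(\om_n-\om_0)$ are compactly supported. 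By Lemma~\ref{lem:PSIDOs}.3, $A^* = \mathrm{Op}(\bar a) + R$ with $R$ of order $-1$; since $a$ is real-valued we may take $\bar a = a$, and the order $-1$ remainder $R$ contributes nothing in the limit because $R\rd_\alp(\chi(\psi_n-\psi_0))\to 0$ strongly in $L^2_{loc}$ by Lemma~\ref{lem:PSIDOs}.5 (Rellich), while $\rd_\bt(\chi(\om_n-\om_0))$ stays bounded in $L^2$. Therefore the limit of the original integral equals the limit of $\int \rd_\bt(\chi(\om_n-\om_0))(A\rd_\alp(\chi(\psi_n-\psi_0)))\,\mathrm{dVol}_{g_0}$, which by definition is $\int_{S^*\mathbb R^{2+1}} a\, (\ud\sigma^{\mathrm{cross}})^*_{\alp\bt}$ — but by the symmetry of this whole argument under relabeling, and recalling that the fourth limit in Proposition~\ref{prop:existence.MDM} identifies $\int a\,(\ud\sigma^{\mathrm{cross}})^*_{\alp\bt}$ as the limit of $\int \rd_\alp(\chi(\om_n-\om_0))A\rd_\bt(\chi(\psi_n-\psi_0))$, I can alternatively run the identical integration-by-parts to land directly on $\ud\sigma^{\mathrm{cross}}_{\bt\alp}$. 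The cleanest bookkeeping: the limit of $\int \rd_\alp(\chi(\psi_n-\psi_0))A\rd_\bt(\chi(\om_n-\om_0))\,\mathrm{dVol}_{g_0}$ is both $\int a\,\ud\sigma^{\mathrm{cross}}_{\alp\bt}$ (by definition) and, after the adjoint manipulation, $\int a\,\ud\sigma^{\mathrm{cross}}_{\bt\alp}$ (reading it as $\rd_\bt(\chi(\om_n-\om_0))\cdot A\rd_\alp(\chi(\psi_n-\psi_0))$, which is the $(\bt,\alp)$-slot of the cross measure). Since this holds for every real scalar symbol $a$, we conclude $\ud\sigma^{\mathrm{cross}}_{\alp\bt}=\ud\sigma^{\mathrm{cross}}_{\bt\alp}$.

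The only mild subtlety — and the step I expect to require the most care — is the handling of the lower-order error term $R$ from the symbol calculus and making sure the ``divergence term'' in the integration by parts genuinely vanishes; both are routine given compact support of the cutoff functions and the Rellich compactness in Lemma~\ref{lem:PSIDOs}.5, but one should state them cleanly. It is also worth noting that one could bypass pseudo-differential operators entirely for this particular statement: test $\ud\sigma^{\mathrm{cross}}$ against a real symbol $a$ that is a Fourier multiplier, where the symmetry is transparent from Plancherel, and then invoke density of such symbols; however the adjoint argument above is cleaner to write and covers general symbols directly. In either formulation the essential mechanism is simply that $\rd_\alp$ and $\rd_\bt$ commute and that $A$ can be moved across the $L^2$ pairing up to negligible errors.
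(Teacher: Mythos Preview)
Your adjoint manipulation has a genuine gap. When you move $A$ across the $L^2$ pairing, you obtain
\[
\lim_n \int_{\mathbb R^{2+1}} \rd_\bt(\chi(\om_n-\om_0))\, A\rd_\alp(\chi(\psi_n-\psi_0))\,\mathrm{dVol}_{g_0},
\]
which by Proposition~\ref{prop:existence.MDM} (fourth display) is $\int a\,(\ud\sigma^{\mathrm{cross}})^*_{\bt\alp}$, \emph{not} $\int a\,\ud\sigma^{\mathrm{cross}}_{\bt\alp}$. The point is that in the definition of $\ud\sigma^{\mathrm{cross}}$ the $\psi$-factor is always on the left and the $\om$-factor on the right; the adjoint swaps \emph{both} the indices and the two functions. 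Since $(\ud\sigma^{\mathrm{cross}})^*_{\bt\alp}=\overline{\ud\sigma^{\mathrm{cross}}_{\alp\bt}}$, your argument proves the (true but different) statement that $\ud\sigma^{\mathrm{cross}}_{\alp\bt}$ is real, not that it equals $\ud\sigma^{\mathrm{cross}}_{\bt\alp}$. Your closing remark that ``the essential mechanism is that $\rd_\alp$ and $\rd_\bt$ commute'' is exactly the right intuition, but nowhere in your actual computation do you use $\rd_\alp\rd_\bt=\rd_\bt\rd_\alp$; moving $A$ does not invoke this.

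The fix, and what the paper does, is to move the \emph{derivatives} rather than $A$: write $A\rd_\bt h_n = \rd_\bt Ah_n + [A,\rd_\bt]h_n$, observe that $[A,\rd_\bt]$ is order $0$ (Lemma~\ref{lem:PSIDOs}.2) and $h_n:=\chi(\om_n-\om_0)\to 0$ in $L^2$ by assumption~(3), so $[A,\rd_\bt]h_n\to 0$ in $L^2$ and that term drops in the limit. Then integrate by parts to move $\rd_\bt$ onto $\rd_\alp f_n$, use $\rd_\bt\rd_\alp=\rd_\alp\rd_\bt$, integrate by parts again to send $\rd_\alp$ to the right, and finally undo the commutator $[\rd_\alp,A]$ (same mechanism). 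This keeps $\psi$ on the left and $\om$ on the right throughout, landing precisely on $\int a\,\ud\sigma^{\mathrm{cross}}_{\bt\alp}$.
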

\begin{proof}
This amounts to 
\begin{equation*}
\begin{split}
&\: \lim_{n\to +\infty} \int_{\mathbb R^{2+1}} \rd_\alp (\chi(\psi_n-\psi_0)) A^{\alp\bt} \rd_\bt(\chi(\om_n-\om_0)) \, \mathrm{dVol}_{g_0} \\
= &\: \lim_{n\to +\infty} \int_{\mathbb R^{2+1}} \rd_\bt (\chi(\psi_n-\psi_0)) A^{\alp\bt} \rd_\alp(\chi(\om_n-\om_0)) \, \mathrm{dVol}_{g_0}.
\end{split}
\end{equation*}
This can be seen by noting that $[A^{\alp\bt},\rd_\alp] : {L^2} \to L^2_{\mathrm{loc}}$ is {bounded by Lemma~\ref{lem:PSIDOs}.2 (and that the corresponding contribution $\to 0$ since $\om_n \to \om_0$ in $L^2$)}, integrating by parts and using assumptions (3) and (4) of Theorem~\ref{thm:prelim}. \qedhere
\end{proof}

\begin{proposition}[Microlocal defect measures are effectively given by $\ud \nu^\psi$ and $\ud \nu^\om$]\label{prop:nu}
There exist non-negative Radon measures $\ud \nu^\psi$, $\ud \nu^\om$ on $S^*\mathbb R^{2+1}$ such that 
$$\ud \sigma^\psi_{\alp\bt} = \f{\xi_\alp\xi_\bt}{|\xi|^2}\ud \nu^\psi,\quad  \ud \sigma^\om_{\alp\bt} = \f{\xi_\alp\xi_\bt}{|\xi|^2}\ud \nu^\om,$$
where $|\xi|^2:= \sum_{\mu=0}^2 |\xi_\mu|^2$.
\end{proposition}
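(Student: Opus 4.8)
The plan is to exploit the localization theorem for microlocal defect measures (Theorem~\ref{thm:localization}), together with the fact that the sequence $u_n$ defined in the proof of Proposition~\ref{prop:existence.MDM} consists of \emph{exact gradients}, to show that the matrix-valued measure $\ud\sigma^\psi$ (and similarly $\ud\sigma^\om$) must have the rank-one structure $\frac{\xi_\alp\xi_\bt}{|\xi|^2}\ud\nu^\psi$ for a non-negative scalar measure $\ud\nu^\psi$.

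\begin{proof}
We treat $\ud\sigma^\psi$; the argument for $\ud\sigma^\om$ is identical. Set $v_n := \chi(\psi_n - \psi_0)$, so that (after the passage to a pure subsequence) $v_n \rightharpoonup 0$ in $H^1$, hence $\rd_\alp v_n \rightharpoonup 0$ in $L^2$. The microlocal defect measure associated (via Theorem~\ref{thm:existenceMDM}) to the $\mathbb R^3$-valued sequence $(\rd_t v_n, \rd_1 v_n, \rd_2 v_n)$ is precisely the $(3\times 3)$-matrix-valued measure $\ud\sigma^\psi_{\alp\bt}$ appearing in Proposition~\ref{prop:existence.MDM} (restricted to the $\psi$-block).

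The key point is that the components $\rd_\alp v_n$ are not independent: for any pair of indices $\alp,\bt$ we have $\rd_\bt(\rd_\alp v_n) - \rd_\alp(\rd_\bt v_n) = 0$, which is an exact identity, in particular relatively compact in $H^{-1}_{loc}$. Applying Theorem~\ref{thm:localization} with the first-order differential operator $P_{\alp\bt}$ acting on the vector $(\rd_\gamma v_n)_\gamma$ by $P_{\alp\bt}(w)_\gamma := \rd_\bt w_\alp - \rd_\alp w_\bt$ (so that $P_{\alp\bt}(u_n) = 0$ identically), whose principal symbol is the matrix with entries built from $i\xi_\bt \de_\alp^\gamma - i\xi_\alp\de_\bt^\gamma$, we conclude that this symbol annihilates $\ud\mu = \ud\sigma^\psi$. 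Concretely, pairing against a scalar test symbol $a \in C_c(S^*\mathbb R^{2+1})$, this yields
\begin{equation*}
\xi_\bt\, \ud\sigma^\psi_{\alp\gamma} = \xi_\alp\, \ud\sigma^\psi_{\bt\gamma} \qquad\text{for all } \alp,\bt,\gamma,
\end{equation*}
as measures on $S^*\mathbb R^{2+1}$ (here $\xi$ is well-defined up to positive scaling, so the identity is homogeneous of the correct degree and makes sense on the cosphere bundle).

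From the relation $\xi_\bt\,\ud\sigma^\psi_{\alp\gamma} = \xi_\alp\,\ud\sigma^\psi_{\bt\gamma}$ one extracts the rank-one structure by a standard linear-algebra argument performed measure-theoretically. Since $\ud\sigma^\psi$ is a non-negative Hermitian-matrix-valued measure, by the Radon--Nikodym theorem we may write $\ud\sigma^\psi_{\alp\bt} = M_{\alp\bt}(x,\xi)\,\ud\tau$ for a scalar non-negative measure $\ud\tau$ (e.g.~$\ud\tau := \mathrm{tr}\,\ud\sigma^\psi = \sum_\gamma \ud\sigma^\psi_{\gamma\gamma}$) and a measurable positive semi-definite Hermitian matrix density $M_{\alp\bt}$ defined $\ud\tau$-a.e. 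The relation above becomes $\xi_\bt M_{\alp\gamma} = \xi_\alp M_{\bt\gamma}$ for $\ud\tau$-a.e.~$(x,\xi)$; since on $S^*\mathbb R^{2+1}$ we have $|\xi|^2 = \sum_\mu \xi_\mu^2 \neq 0$, fixing $\bt$ with $\xi_\bt \neq 0$ gives $M_{\alp\gamma} = \frac{\xi_\alp}{\xi_\bt} M_{\bt\gamma}$, and iterating in $\gamma$ as well (using symmetry of $M$, which follows from Hermitian-ness and real-valuedness of the entries, together with the same relation) shows $M_{\alp\bt} = \frac{\xi_\alp\xi_\bt}{|\xi|^2}\, m(x,\xi)$ where $m := |\xi|^2 M_{\alp\bt}\xi^\alp\xi^\bt/|\xi|^4 = \frac{1}{|\xi|^2}\sum_{\alp,\bt}\xi_\alp\xi_\bt M_{\alp\bt} \geq 0$ by positive semi-definiteness of $M$. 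Setting $\ud\nu^\psi := m\,\ud\tau$, which is a non-negative scalar Radon measure on $S^*\mathbb R^{2+1}$, we obtain $\ud\sigma^\psi_{\alp\bt} = \frac{\xi_\alp\xi_\bt}{|\xi|^2}\,\ud\nu^\psi$ as claimed. The same argument with $v_n$ replaced by $\chi(\om_n - \om_0)$ produces $\ud\nu^\om$.
\end{proof}

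I expect the main (if modest) obstacle to be the careful justification of the measure-theoretic rank-one extraction: one must make sure the pointwise linear-algebra statement ``a positive semi-definite Hermitian matrix $M$ satisfying $\xi_\bt M_{\alp\gamma} = \xi_\alp M_{\bt\gamma}$ for a fixed nonzero covector $\xi$ is a non-negative multiple of $\xi\otimes\xi$'' is applied correctly $\ud\tau$-almost everywhere, and that the resulting scalar density $m$ is genuinely measurable and non-negative so that $\ud\nu^\psi$ is a bona fide Radon measure. The invocation of Theorem~\ref{thm:localization} for the curl relation is completely routine since $P_{\alp\bt}(u_n)$ vanishes identically (not merely converges), so that half of the proof is immediate.
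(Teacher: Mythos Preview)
Your proof is correct and follows essentially the same approach as the paper: both apply the localization theorem (Theorem~\ref{thm:localization}) to the identity $\rd_\bt(\rd_\alp v_n) - \rd_\alp(\rd_\bt v_n) = 0$ to obtain $\xi_\bt\,\ud\sigma^\psi_{\alp\gamma} = \xi_\alp\,\ud\sigma^\psi_{\bt\gamma}$, and then extract a scalar non-negative measure from this rank-one constraint. The only cosmetic difference is that the paper constructs $\ud\nu^\psi$ in two steps via an intermediate measure $\ud\rho^\psi_\bt := \frac{|\xi|\,\ud\sigma^\psi_{\alp\bt}}{\xi_\alp}$, whereas you pass through a Radon--Nikodym density $M_{\alp\bt}$ with respect to the trace measure and do the linear algebra pointwise; both routes are standard and equivalent.
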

\begin{proof}
We will focus on $\ud\nu^\psi$ in the exposition. $\ud\nu^{\om}$ can be treated similarly.

\pfstep{Step~1: Defining an auxiliary measure $\ud \rho^\psi_\bt$} Using the identity $\rd_\alp\rd_\mu \psi_n = \rd_\mu\rd_\alp \psi_n$ and Theorem~\ref{thm:localization}, it follows that for every $\bt$, 
$$\f{\xi_\mu}{|\xi|} \ud \sigma^\psi_{\alp\bt} = \f{\xi_\alp}{|\xi|} \ud \sigma^\psi_{\mu\bt}.$$
This implies that
\begin{enumerate}
\item $\f{\ud \sigma^\psi_{\alp\bt}}{\xi_\alp}$ (to be understood \underline{without} summing repeated indices) is a well-defined Radon measure for every $\bt$. (To see this, note that at each point in $T^*\mathcal M \setminus\{0\}$, some component $\xi_\alp \neq 0$.)
\item $\f{\ud \sigma^\psi_{\alp\bt}}{\xi_\alp} = \f{\ud \sigma^\psi_{\alp'\bt}}{\xi_{\alp'}}$ for every $\alp$, $\alp'$.
\end{enumerate}
With the above observations, we can thus define the measure $\ud \rho^\psi_\bt := \f{|\xi| \ud \sigma^\psi_{\alp\bt}}{\xi_\alp}$.

\pfstep{Step~2: Defining $\ud \nu^{\psi}$} Since $\ud \sigma^\psi_{\alp\bt}$ is Hermitian (by Proposition~\ref{prop:existence.MDM}), for $\ud \rho^\psi_\bt$ defined as in Step~1,
\begin{equation}\label{the.rho.identity}
\xi_\alp \ud \rho^\psi_\bt = \xi_\bt \overline{\ud \rho^\psi_\alp}.
\end{equation}
Arguing as in Step~1 above, we know that $\f{\ud \rho^\psi_\alp}{\xi_\alp}$ is well-defined. We define 
$$\ud \nu^\psi := \f{|\xi|\ud \rho^\psi_\alp}{\xi_\alp}.$$

\pfstep{Step~3: Non-negativity of $\ud \nu^{\psi}$} Finally, using Proposition~\ref{prop:existence.MDM}, one sees that $\ud \nu^{\psi}$ is non-negative. \qedhere
\end{proof}

We record the following result, which follows from Propositions~\ref{prop:existence.MDM}, \ref{prop:nu} and simple algebraic manipulations.
\begin{cor}\label{cor:nu}
For $\ud \nu^\psi$, $\ud \nu^\om$ as in Proposition~\ref{prop:nu}, it holds that for every zeroth pseudo-differential operator $A$ with principal symbol $a$ which is {homogeneous} of order $0$, and supported in $S^*\Omega$,
\begin{equation}\label{eq:cor.nu.1}
\int_{\mathbb R^{2+1}} \rd_\alp (\chi\psi_n) (A(\rd_\bt (\chi\psi_n)))\,\mathrm{dVol}_{g_0} \to \int_{\mathbb R^{2+1}} \rd_\alp (\chi\psi_0) (A(\rd_\bt (\chi\psi_0)))\,\mathrm{dVol}_{g_0} + \int_{S^*\mathbb R^{2+1}} a \xi_\alp \xi_\bt \f{\ud \nu^\psi}{|\xi|^2},
\end{equation}
\begin{equation}\label{eq:cor.nu.2}
\int_{\mathbb R^{2+1}} \rd_\alp (\chi\om_n) (A(\rd_\bt (\chi\om_n)))\,\mathrm{dVol}_{g_0} \to \int_{\mathbb R^{2+1}} \rd_\alp (\chi\om_0) (A(\rd_\bt (\chi\om_0)))\,\mathrm{dVol}_{g_0} + \int_{S^*\mathbb R^{2+1}} a \xi_\alp \xi_\bt \f{\ud \nu^\om}{|\xi|^2}.
\end{equation}
Moreover,
\begin{equation}\label{eq:cor.nu.3}
\int_{\mathbb R^{2+1}} \rd_\alp (\chi\psi_n) (A(\rd_\bt (\chi\om_n)))\,\mathrm{dVol}_{g_0} \to \int_{\mathbb R^{2+1}} \rd_\alp (\chi\psi_0) (A(\rd_\bt (\chi\om_0)))\,\mathrm{dVol}_{g_0} + \int_{S^*\mathbb R^{2+1}} a \,\ud \sigma_{\alp\bt}^{\mathrm{cross}},
\end{equation}
and
\begin{equation}\label{eq:cor.nu.4}
\int_{\mathbb R^{2+1}} \rd_\alp (\chi\om_n) (A(\rd_\bt (\chi\psi_n)))\,\mathrm{dVol}_{g_0} \to \int_{\mathbb R^{2+1}} \rd_\alp (\chi\om_0) (A(\rd_\bt (\chi\psi_0)))\,\mathrm{dVol}_{g_0} + \int_{S^*\mathbb R^{2+1}} a \, (\ud \sigma^{\mathrm{cross}})^*_{\alp\bt}.
\end{equation}
\end{cor}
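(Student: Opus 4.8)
The plan is to reduce all four convergences to Propositions~\ref{prop:existence.MDM} and \ref{prop:nu} by expanding the relevant bilinear expressions around the limit. For \eqref{eq:cor.nu.1} I would write $\chi\psi_n = \chi(\psi_n-\psi_0) + \chi\psi_0$, so that
$$\int_{\mathbb R^{2+1}} \rd_\alp(\chi\psi_n)\,A(\rd_\bt(\chi\psi_n))\,\mathrm{dVol}_{g_0}$$
splits into four pieces: a ``difference--difference'' piece $\int_{\mathbb R^{2+1}} \rd_\alp(\chi(\psi_n-\psi_0))\,A(\rd_\bt(\chi(\psi_n-\psi_0)))\,\mathrm{dVol}_{g_0}$, a ``limit--limit'' piece $\int_{\mathbb R^{2+1}} \rd_\alp(\chi\psi_0)\,A(\rd_\bt(\chi\psi_0))\,\mathrm{dVol}_{g_0}$, and two ``cross'' pieces. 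The difference--difference piece converges, by Proposition~\ref{prop:existence.MDM} applied with the matrix of operators all vanishing except the $(\alp,\bt)$ entry (taken to be $A$), to $\int_{S^*\mathbb R^{2+1}} a\,\ud\sigma^\psi_{\alp\bt}$, and Proposition~\ref{prop:nu} rewrites this as $\int_{S^*\mathbb R^{2+1}} a\,\xi_\alp\xi_\bt\,\f{\ud\nu^\psi}{|\xi|^2}$. The limit--limit piece is manifestly independent of $n$ and is precisely the first term on the right-hand side of \eqref{eq:cor.nu.1}. Hence the entire content of the statement is that the two cross pieces tend to $0$.

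For the cross pieces the key point is that $\rd_\alp(\chi(\psi_n-\psi_0)) \rightharpoonup 0$ weakly in $L^{p_0}(\mathbb R^{2+1})$. Indeed, by the Leibniz rule this equals $\chi\,\rd_\alp(\psi_n-\psi_0) + (\rd_\alp\chi)(\psi_n-\psi_0)$; the first summand tends weakly to $0$ in $L^{p_0}$ because $\rd\psi_n\rightharpoonup\rd\psi_0$ in $L^{p_0}_{loc}$ (assumption (4) of Theorem~\ref{thm:prelim}) and $\chi$ has fixed compact support, while the second tends to $0$ strongly in $L^{p_0}$ because $\psi_n\to\psi_0$ uniformly on the compact set $\mathrm{supp}(\rd\chi)$ (assumption (3)). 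Writing $\mathrm{dVol}_{g_0}=w_0\,\ud x$ with $w_0$ a fixed smooth positive density, one cross piece is $\int_{\mathbb R^{2+1}} \rd_\alp(\chi(\psi_n-\psi_0))\,\big(w_0\,A(\rd_\bt(\chi\psi_0))\big)\,\ud x \to 0$, since $w_0\,A(\rd_\bt(\chi\psi_0))$ is a \emph{fixed} element of $L^{p_0'}(\mathbb R^{2+1})$: $\rd_\bt(\chi\psi_0)\in C^\infty_c$, $A$ is bounded on $L^{p_0'}$ by the Calder\'on--Zygmund theorem (Lemma~\ref{lem:PSIDOs}.4) as $p_0'\in(1,\infty)$, and $w_0$ is smooth. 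For the other cross piece $\int_{\mathbb R^{2+1}} \big(w_0\,\rd_\alp(\chi\psi_0)\big)\,A(\rd_\bt(\chi(\psi_n-\psi_0)))\,\ud x$, I would use that $A$, being bounded on $L^{p_0}$, is weak--weak continuous there, so $A(\rd_\bt(\chi(\psi_n-\psi_0)))\rightharpoonup 0$ in $L^{p_0}$, and pairing against the fixed $L^{p_0'}$ function $w_0\,\rd_\alp(\chi\psi_0)$ again yields $0$. (That Propositions~\ref{prop:existence.MDM}--\ref{prop:nu} are applicable here is not an issue: the sequences involved are bounded in $L^2$, since $p_0>2$ and everything is supported in a fixed compact set.) This establishes \eqref{eq:cor.nu.1}; \eqref{eq:cor.nu.2} is identical with $\psi\mapsto\om$, and \eqref{eq:cor.nu.3}--\eqref{eq:cor.nu.4} use the same four-term expansion, except that the difference--difference piece is left in the form $\int_{S^*\mathbb R^{2+1}} a\,\ud\sigma^{\mathrm{cross}}_{\alp\bt}$ (respectively its Hermitian conjugate), Proposition~\ref{prop:nu} not being invoked.

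I do not expect a genuine obstacle here: as the statement of the corollary indicates, it is a consequence of Propositions~\ref{prop:existence.MDM} and \ref{prop:nu} together with the convergence hypotheses. The only mild care needed is the bookkeeping with the cutoff $\chi$ (disposing of the $\rd\chi$ terms via uniform convergence on compact sets) and the choice of the correct pairing: because assumption (4) supplies only weak $L^{p_0}_{loc}$ convergence of the derivatives --- the $L^2$ bounds that would permit a compactness argument are available only under the stronger hypotheses of Theorem~\ref{thm:main} --- the cross-term estimates must be run in $L^{p_0}$--$L^{p_0'}$ duality using weak--weak continuity of zeroth-order pseudo-differential operators.
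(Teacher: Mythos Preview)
Your proposal is correct and follows exactly the approach the paper indicates: the paper itself offers no detailed proof, stating only that the result ``follows from Propositions~\ref{prop:existence.MDM}, \ref{prop:nu} and simple algebraic manipulations,'' and your four-term expansion together with the weak-convergence argument for the cross terms is precisely that manipulation spelled out. The care you take to run the cross-term pairing in $L^{p_0}$--$L^{p_0'}$ duality (rather than $L^2$) under the hypotheses of Theorem~\ref{thm:prelim} alone is appropriate and slightly more explicit than what the paper records.
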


\subsection{Microlocal defect measures are supported on the light cones}

Our goal in this subsection is to use Theorem~\ref{thm:localization} to show that the microlocal defect measures are supported on the light cones.

\begin{lemma}\label{lem:Box.psi.decomp}
$\Box_{g_0} (\chi (\psi_n - \psi_0))$ and $\Box_{g_0} (\chi (\om_n - \om_0))$ admit the following decomposition:
$$\Box_{g_0} (\chi (\psi_n - \psi_0)) = \rd_\alp (\xi^{(\psi)}_n)^\alp + \eta_n^{(\psi)},\quad \Box_{g_0} (\chi (\om_n - \om_0)) = \rd_\alp (\xi^{(\om)}_n)^\alp + \eta_n^{(\om)},$$
where $\xi^{(\psi)}_n$, $\xi^{(\om)}_n$ are vector fields compactly supported in $\Omega'$ {(recall the definition of $\Omega'$ in Section~\ref{sec:compact.reduction})} which converges to $0$ in the $L^2$ norm; and $\eta^{(\psi)}_n$, $\eta^{(\om)}_n$ are functions compactly supported in $\Omega'$ which are uniformly bounded in $L^{\f{p_0}{2}}$ (for $p_0$ as in assumption (4) of Theorem~\ref{thm:prelim}).
\end{lemma}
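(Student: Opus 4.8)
The plan is to exploit the fact that, up to the harmless smooth positive weight $\sqrt{|g_0|}$ (where $|g_0|:=|\det g_0|$), the operator $\Box_{g_0}$ is already in divergence form, namely $\sqrt{|g_0|}\,\Box_{g_0}f=\rd_\alp(\mathfrak{m}_0^{\alp\bt}\rd_\bt f)$ with $\mathfrak{m}_0^{\alp\bt}:=\sqrt{|g_0|}\,(g_0^{-1})^{\alp\bt}$, and similarly $\mathfrak{m}_n^{\alp\bt}:=\sqrt{|g_n|}\,(g_n^{-1})^{\alp\bt}$. We prove the statement only for $\psi$; the case of $\om$ is identical after replacing the nonlinearity $-\tfrac12 e^{-4\psi_n}g_n^{-1}(\ud\om_n,\ud\om_n)$ by $4\,g_n^{-1}(\ud\om_n,\ud\psi_n)$, which is again a product of two first derivatives bounded in $L^{p_0}_{loc}$. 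Throughout we use that under the hypotheses of Theorem~\ref{thm:prelim}: $\{\rd\psi_n\}$, $\{\rd\om_n\}$ are bounded in $L^{p_0}_{loc}$ (being weakly convergent, hence also bounded in $L^2_{loc}$ since $p_0>2$); $\psi_n-\psi_0\to0$ uniformly on $\overline{\Omega'}$; and $\mathfrak{m}_n\to\mathfrak{m}_0$ and $\sqrt{|g_n|}\to\sqrt{|g_0|}$ uniformly on $\overline{\Omega'}$ (so $\sqrt{|g_n|}$ is bounded above and away from $0$ on $\overline{\Omega'}$ for $n$ large), all of which follow from the local uniform convergence in assumption (3).

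First I would peel off the cutoff via $\Box_{g_0}(\chi u)=\chi\,\Box_{g_0}u+2(g_0^{-1})^{\alp\bt}(\rd_\alp\chi)\rd_\bt u+(\Box_{g_0}\chi)u$. Applied with $u=\psi_n-\psi_0$, the term $(\Box_{g_0}\chi)(\psi_n-\psi_0)$ tends to $0$ in $L^\infty$, is supported in $\mathrm{supp}\,\chi\subset\Omega'$, and is placed into $\eta_n^{(\psi)}$. The term $2(g_0^{-1})^{\alp\bt}(\rd_\alp\chi)\rd_\bt(\psi_n-\psi_0)$ is rewritten as $2\rd_\bt[(g_0^{-1})^{\alp\bt}(\rd_\alp\chi)(\psi_n-\psi_0)]-2\rd_\bt[(g_0^{-1})^{\alp\bt}\rd_\alp\chi](\psi_n-\psi_0)$: the first summand is the divergence of a vector field supported in $\Omega'$ tending to $0$ in $L^\infty\subset L^2$ (so it contributes to $\xi_n^{(\psi)}$), and the second is bounded in $L^\infty$ with compact support (so it contributes to $\eta_n^{(\psi)}$). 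It then remains to treat $\chi\,\Box_{g_0}(\psi_n-\psi_0)=\chi\,\Box_{g_0}\psi_n-\chi\,\Box_{g_0}\psi_0$; the second term is smooth and compactly supported and goes into $\eta_n^{(\psi)}$.

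For $\chi\,\Box_{g_0}\psi_n$ I would invoke the equation in \eqref{eq:U1vac}, $\Box_{g_n}\psi_n=-\tfrac12 e^{-4\psi_n}g_n^{-1}(\ud\om_n,\ud\om_n)$, and split $\chi\,\Box_{g_0}\psi_n=-\tfrac12\chi\,e^{-4\psi_n}g_n^{-1}(\ud\om_n,\ud\om_n)+\chi(\Box_{g_0}-\Box_{g_n})\psi_n$. The first summand is bounded in $L^{p_0/2}$ with compact support (using that $\rd\om_n$ is bounded in $L^{p_0}_{loc}$ and $e^{-4\psi_n}$, $g_n^{-1}$ are uniformly bounded) and goes into $\eta_n^{(\psi)}$. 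For the difference of wave operators I would use the algebraic identity
\[(\Box_{g_0}-\Box_{g_n})\psi_n=\frac{1}{\sqrt{|g_0|}}\rd_\alp\!\big((\mathfrak{m}_0^{\alp\bt}-\mathfrak{m}_n^{\alp\bt})\rd_\bt\psi_n\big)+\Big(\frac{1}{\sqrt{|g_0|}}-\frac{1}{\sqrt{|g_n|}}\Big)\sqrt{|g_n|}\,\Box_{g_n}\psi_n,\]
which is checked using $\rd_\alp(\mathfrak{m}_n^{\alp\bt}\rd_\bt\psi_n)=\sqrt{|g_n|}\,\Box_{g_n}\psi_n$. Into the last term I substitute the equation again, turning it into $-\tfrac12\chi\,(\tfrac{1}{\sqrt{|g_0|}}-\tfrac{1}{\sqrt{|g_n|}})\sqrt{|g_n|}\,e^{-4\psi_n}g_n^{-1}(\ud\om_n,\ud\om_n)$, which is bounded in $L^{p_0/2}$, compactly supported, and goes into $\eta_n^{(\psi)}$. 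In the remaining term I commute $\tfrac{\chi}{\sqrt{|g_0|}}$ past $\rd_\alp$: the genuine divergence $\rd_\alp\big[\tfrac{\chi}{\sqrt{|g_0|}}(\mathfrak{m}_0^{\alp\bt}-\mathfrak{m}_n^{\alp\bt})\rd_\bt\psi_n\big]$ is the divergence of a compactly supported vector field whose $L^2$ norm is $\ls\|\tfrac{\chi}{\sqrt{|g_0|}}\|_{L^\infty}\|\mathfrak{m}_0-\mathfrak{m}_n\|_{L^\infty(\Omega')}\|\rd\psi_n\|_{L^2(\Omega')}\to0$, so it contributes to $\xi_n^{(\psi)}$; and the leftover $-\rd_\alp[\tfrac{\chi}{\sqrt{|g_0|}}](\mathfrak{m}_0^{\alp\bt}-\mathfrak{m}_n^{\alp\bt})\rd_\bt\psi_n$ is bounded in $L^{p_0}\subset L^{p_0/2}$ with compact support, so it contributes to $\eta_n^{(\psi)}$.

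Collecting the divergence terms into $\xi_n^{(\psi)}$ (all compactly supported in $\Omega'$ and tending to $0$ in $L^2$) and all remaining terms into $\eta_n^{(\psi)}$ (all compactly supported in $\Omega'$ and uniformly bounded in $L^{p_0/2}$) gives the decomposition. The only genuine obstacle is the term $\chi(\Box_{g_0}-\Box_{g_n})\psi_n$: written naively it contains $(g_0^{-1}-g_n^{-1})\rd^2\psi_n$, and no bound on $\rd^2\psi_n$ is available under the hypotheses of Theorem~\ref{thm:prelim}; the resolution is to keep everything in divergence form and to use the equation to rewrite the non-divergence remainder $\rd_\alp(\mathfrak{m}_n^{\alp\bt}\rd_\bt\psi_n)$ as the controlled quadratic quantity $\sqrt{|g_n|}\,\Box_{g_n}\psi_n=-\tfrac12\sqrt{|g_n|}\,e^{-4\psi_n}g_n^{-1}(\ud\om_n,\ud\om_n)$.
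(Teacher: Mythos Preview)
Your proof is correct and follows essentially the same approach as the paper: both exploit that $\Box_{g_0}-\Box_{g_n}$ can be written in divergence form up to lower-order terms, and both invoke the wave equation \eqref{eq:U1vac} to bound the resulting nonlinear remainder in $L^{p_0/2}$. The only cosmetic difference is organizational---the paper applies the operator difference directly to $\chi\psi_n$ and splits off $\mathrm{I}_{a,n}=\rd_\alp\big(((g_0^{-1})^{\alp\bt}-(g_n^{-1})^{\alp\bt})\rd_\bt(\chi\psi_n)\big)$ as the divergence piece, whereas you first peel off $\chi$ via Leibniz and then use the weighted identity with $\mathfrak{m}_0-\mathfrak{m}_n$; your formulation has the mild advantage that the leftover commutator terms involve only derivatives of the smooth quantity $\chi/\sqrt{|g_0|}$ rather than derivatives of the metric difference.
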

\begin{proof}
We will prove the decomposition for $\Box_{g_0} (\chi (\psi_n - \psi_0))$; $\Box_{g_0} (\chi (\om_n - \om_0))$ can be treated similarly.

First we write
\begin{equation}\label{Box.psi.n.bounded}
\Box_{g_0} (\chi (\psi_n - \psi_0)) = \underbrace{(\Box_{g_0} - \Box_{g_n})(\chi\psi_n)}_{=:\mathrm{I}_n} + \underbrace{\Box_{g_n} (\chi\psi_n) }_{=:\mathrm{II}_n} \underbrace{- \Box_{g_0} (\chi\psi_0)}_{=:\mathrm{III}_n}.
\end{equation}

Clearly each term is compactly supported in $\Omega'$.

Term $\mathrm{I}_n$ can be computed further as follows:
\begin{equation*}
\begin{split}
\mathrm{I}_n = &\: \underbrace{\rd_\alp(((g_0^{-1})^{\alp\bt} - (g_n^{-1})^{\alp\bt}) \rd_{\bt} (\chi\psi_n))}_{=:\mathrm{I}_{a,n}} \underbrace{- (\rd_\alp ((g_0^{-1})^{\alp\bt} - (g_n^{-1})^{\alp\bt})) \rd_\bt (\chi\psi_n)}_{=:\mathrm{I}_{b,n}} \\
&\: + \underbrace{(\f{1}{\sqrt{-\det g_0}} \rd_\alp((g_0^{-1})^{\alp\bt} \sqrt{-\det g_0}) -\f{1}{\sqrt{-\det g_n}} \rd_\alp((g_n^{-1})^{\alp\bt} \sqrt{-\det g_n})  ) \rd_\bt (\chi\psi_n)}_{=:\mathrm{I}_{c,n}}.
\end{split}
\end{equation*}

Under the assumptions of Theorem~\ref{thm:prelim}, $\mathrm{I}_{b,n}$ and $\mathrm{I}_{c,n}$ are both uniformly bounded in $L^{\f{p_0}{2}}$.

For term $\mathrm{I}_{a,n}$, note that by assumptions (3) and (4) of Theorem~\ref{thm:prelim} (and H\"older's inequality), $((g_0^{-1})^{\alp\bt} - (g_n^{-1})^{\alp\bt}) \rd_{\bt} (\chi\psi_n) \to 0$ in the $L^2$ norm. 

For the term $\mathrm{II}_n$ in \eqref{Box.psi.n.bounded}, we note that by \eqref{eq:U1vac}, assumptions (3), (4) of Theorem~\ref{thm:prelim} and H\"older's inequality, it follows that $\mathrm{II}_n$ is uniformly bounded in $L^{\f{p_0}{2}}$.

Finally, the term $\mathrm{III}_n$ in \eqref{Box.psi.n.bounded} is smooth and \emph{independent of $n$}. It is therefore uniformly bounded in $L^{\f{p_0}{2}}$.

Combining the above results and letting
$$(\xi_n^{(\psi)})^{\alp} := ((g_0^{-1})^{\alp\bt} - (g_n^{-1})^{\alp\bt}) \rd_{\bt} (\chi\psi_n), \quad \eta_n^{(\psi)}:= \Box_{g_0} (\chi (\psi_n - \psi_0))-\rd_\alp (\xi_n^{(\psi)})^{\alp},$$
we obtain the desired result. \qedhere
\end{proof}

\begin{proposition}[Support of microlocal defect measures]\label{prop:psiom.localized}
Let $\ud \nu^\psi$, $\ud \nu^\om$ be as in Proposition~\ref{prop:nu}. Then
$$\f{(g_0^{-1})^{\alp\bt}\xi_\alp \xi_\bt}{|\xi|^2} \, \ud \nu^\psi = 0 = \f{(g_0^{-1})^{\alp\bt}\xi_\alp \xi_\bt}{|\xi|^2} \, \ud \nu^\om.$$
\end{proposition}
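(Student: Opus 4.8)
The plan is to apply the localization theorem for microlocal defect measures (Theorem~\ref{thm:localization}) to the sequences $\chi(\psi_n - \psi_0)$ and $\chi(\om_n - \om_0)$, using the decomposition of $\Box_{g_0}$ applied to these sequences that was established in Lemma~\ref{lem:Box.psi.decomp}. Concretely, I would first observe that the microlocal defect measure associated to the vector-valued sequence $u_n$ (of first derivatives of $\chi(\psi_n-\psi_0)$, $\chi(\om_n-\om_0)$) in Proposition~\ref{prop:existence.MDM} is a refinement of the measure coming from $\chi(\psi_n-\psi_0)$ alone, and that by Proposition~\ref{prop:nu} the relevant diagonal block is $\ud\sigma^\psi_{\alpha\beta} = \tfrac{\xi_\alpha\xi_\beta}{|\xi|^2}\ud\nu^\psi$. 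The identity we want, $\tfrac{(g_0^{-1})^{\alpha\beta}\xi_\alpha\xi_\beta}{|\xi|^2}\ud\nu^\psi = 0$, is exactly the statement that $(g_0^{-1})^{\alpha\beta}\xi_\alpha\xi_\beta\, \ud\sigma^\psi_{\alpha\beta} = 0$ (summing over $\alpha,\beta$), i.e. that the principal symbol $(g_0^{-1})^{\alpha\beta}\xi_\alpha\xi_\beta$ of $\Box_{g_0}$ annihilates the defect measure.

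The key point is to check the hypothesis of Theorem~\ref{thm:localization}: that $\{\Box_{g_0}(\chi(\psi_n-\psi_0))\}_n$ is relatively compact in $H^{-1}_{loc}$. This is precisely what Lemma~\ref{lem:Box.psi.decomp} is designed to give: it writes $\Box_{g_0}(\chi(\psi_n-\psi_0)) = \rd_\alpha (\xi_n^{(\psi)})^\alpha + \eta_n^{(\psi)}$ with $\xi_n^{(\psi)} \to 0$ in $L^2$ (hence the divergence term $\to 0$ in $H^{-1}$) and $\eta_n^{(\psi)}$ bounded in $L^{p_0/2}$ with compact support in $\Omega'$. Since $p_0 > \tfrac 83$, we have $\tfrac{p_0}{2} > \tfrac 43 > 1$, so in dimension $2+1 = 3$ the embedding $L^{p_0/2}_{comp} \hookrightarrow H^{-1}_{loc}$ is compact (one needs $L^q \hookrightarrow H^{-1}$ to be continuous, which holds for $q \geq \tfrac{2\cdot 3}{3+2} = \tfrac 65$, and then a standard Rellich-type argument upgrades continuity to compactness given the uniform compact support). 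Therefore $\{\eta_n^{(\psi)}\}_n$ is relatively compact in $H^{-1}_{loc}$, and adding the $L^2$-null divergence term, so is $\{\Box_{g_0}(\chi(\psi_n-\psi_0))\}_n$. Then Theorem~\ref{thm:localization}, applied with $P = \Box_{g_0}$ (principal symbol $p(x,\xi) = (g_0^{-1})^{\alpha\beta}\xi_\alpha\xi_\beta$, up to the sign/$i^2$ conventions in the statement), yields $p\,\ud\mu = 0$ for the full matrix-valued measure, and in particular $(g_0^{-1})^{\alpha\beta}\xi_\alpha\xi_\beta\,\ud\sigma^\psi_{\alpha\beta} = 0$, which by Proposition~\ref{prop:nu} is the claim for $\ud\nu^\psi$. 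The argument for $\ud\nu^\om$ is identical, using the $\om$-part of Lemma~\ref{lem:Box.psi.decomp}.

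The only mild subtlety — and the step I would be most careful about — is the functional-analytic input that $L^{p_0/2}$ with uniformly compact support embeds compactly into $H^{-1}_{loc}$, since this is where the threshold $p_0 > \tfrac 83$ is actually used. I would phrase this as: by Sobolev embedding $H^1_{loc} \hookrightarrow L^{(p_0/2)'}_{loc}$ compactly (valid since $(p_0/2)' < 6$ when $p_0/2 > \tfrac 65$, and in particular when $p_0 > \tfrac 83$), and dualizing, bounded sequences in $L^{p_0/2}$ supported in a fixed compact set are relatively compact in $H^{-1}_{loc}$. Everything else is a direct citation: Lemma~\ref{lem:Box.psi.decomp} for the decomposition, Theorem~\ref{thm:localization} for the localization, and Proposition~\ref{prop:nu} to translate the vanishing of $(g_0^{-1})^{\alpha\beta}\xi_\alpha\xi_\beta\,\ud\sigma^\psi_{\alpha\beta}$ into the vanishing of $\tfrac{(g_0^{-1})^{\alpha\beta}\xi_\alpha\xi_\beta}{|\xi|^2}\ud\nu^\psi$.
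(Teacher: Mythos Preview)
Your strategy is the same as the paper's: use Lemma~\ref{lem:Box.psi.decomp} to verify $H^{-1}_{loc}$ compactness of $\Box_{g_0}(\chi(\psi_n-\psi_0))$, invoke Theorem~\ref{thm:localization}, and then translate via Proposition~\ref{prop:nu}. The Sobolev-embedding argument for the $L^{p_0/2}\hookrightarrow H^{-1}_{loc}$ compactness is correct and matches the paper.

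There is, however, one point where your write-up is not quite right and the paper is more careful. You say you apply Theorem~\ref{thm:localization} ``with $P=\Box_{g_0}$'' (a second-order operator) and obtain $(g_0^{-1})^{\alp\bt}\xi_\alp\xi_\bt\,\ud\sigma^\psi_{\alp\bt}=0$. But Theorem~\ref{thm:localization} is being applied to the \emph{derivative} sequence $u_n=\big(\rd_\mu(\chi(\psi_n-\psi_0))\big)_\mu$, not to $\chi(\psi_n-\psi_0)$ itself (whose own defect measure is zero since it converges to $0$ in $L^2$). Thus the relevant $P$ is the \emph{first}-order operator on the vector $u_n$ whose output (up to lower-order terms bounded in $L^{p_0}$, hence compact in $H^{-1}_{loc}$) is $\Box_{g_0}(\chi(\psi_n-\psi_0))$; its principal symbol acts on the $\alp$-th component by $(g_0^{-1})^{\alp\sigma}\xi_\sigma$. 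The correct conclusion of Theorem~\ref{thm:localization} is therefore, for each fixed $\bt$,
\[
\frac{(g_0^{-1})^{\alp\sigma}\xi_\sigma}{|\xi|}\,\ud\sigma^\psi_{\alp\bt}=0,
\]
which via $\ud\sigma^\psi_{\alp\bt}=\frac{\xi_\alp\xi_\bt}{|\xi|^2}\ud\nu^\psi$ becomes $\frac{(g_0^{-1})^{\alp\sigma}\xi_\alp\xi_\sigma}{|\xi|^3}\xi_\bt\,\ud\nu^\psi=0$. Since at every $(x,\xi)$ some $\xi_\bt\neq 0$, this yields the claim. Your contracted expression $(g_0^{-1})^{\alp\bt}\xi_\alp\xi_\bt\,\ud\sigma^\psi_{\alp\bt}$ does not arise directly from the theorem and, as written, double-counts the $\xi$'s. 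Once you rephrase $P$ as a first-order operator on $u_n$, your proof lines up with the paper's.
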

\begin{proof}
We will only prove the equality for $\ud \nu^\psi$. The equality for $\ud \nu^\om$ can be treated in the same manner.

\pfstep{Step~1: Compactness of $\Box_{g_0} (\chi (\psi_n - \psi_0))$ in $H^{-1}_{\mathrm{loc}}$} We use the decomposition $\Box_{g_0} (\chi (\psi_n - \psi_0)) = \rd_\alp (\xi^{(\psi)}_n)^\alp + \eta_n^{(\psi)}$ given by Lemma~\ref{lem:Box.psi.decomp}.

Since $(\xi^{(\psi)}_n)^\alp \to 0$ in the $L^2$ norm, $\rd_\alp (\xi^{(\psi)}_n)^\alp$ converges to $0$ in $H^{-1}_{\mathrm{loc}}$ (and hence is compact).

On the other hand, we know that $\eta_n^{(\psi)}$ is uniformly bounded in $L^{\f{p_0}{2}}$, where $p_0 \in (\f 83, +\infty)$ (cf.~assumption (4)). In $(2+1)$ dimensions, since $\f{p_0}{2} > \f 43$, $L^{\f{p_0}{2}}$ embeds compactly into $H^{-1}_{\mathrm{loc}}$. (This can be proven by a duality argument after recalling that $H^{-1}$ is the dual of $H^1$.) It follows that $\{\eta_n^{(\psi)}\}_{n=1}^{+\infty}$ is compact in $H^{-1}_{\mathrm{loc}}$.

Putting all the above considerations together, it follows that $\Box_{g_0} (\chi (\psi_n - \psi_0))$ is compact in $H^{-1}_{\mathrm{loc}}$.

\pfstep{Step~2: Application of Theorem~\ref{thm:localization}} By Theorem~\ref{thm:localization} and the compactness obtained in Step~1, we obtain that, for any index $\bt$,
$$\f{(g_0^{-1})^{\alp\sigma}\xi_\sigma}{|\xi|} \, \ud \sigma^\psi_{\alp\bt} = 0.$$
This implies, via Proposition~\ref{prop:nu}, that for any index $\bt$,
$$\f{(g_0^{-1})^{\alp\sigma}\xi_\sigma \xi_\alp\xi_\bt}{|\xi|^3} \, \ud \nu^\psi = 0.$$
For every $(x,\xi)\in S^*\mathbb R^{2+1}$, $\xi_\bt\neq 0$ for some $\bt$. Hence, we obtain the desired conclusion. \qedhere
\end{proof}

\section{The proof of Theorem~\ref{thm:prelim}}\label{sec:pf.thm.prelim}

In this section, we prove Theorem~\ref{thm:prelim}. We continue to work under the assumptions of Theorem~\ref{thm:prelim}. As discussed in Section~\ref{sec:compact.reduction}, with $\ud \nu = 2\ud\nu^\psi + \f 12 e^{-4\psi_0} \ud\nu^\om$, it suffices to show that $(\Omega,g_0,\psi_0,\om_0,\ud\nu)$ obeys the conclusion of Theorem~\ref{thm:prelim}.

We have already proven that $\ud\nu$ is supported on the null cones by Proposition~\ref{prop:psiom.localized}. We therefore only need to prove \eqref{eq:U1vac.vlasov.again}. The two wave equations will be proven in \textbf{Section~\ref{sec:wave.equation}}; the equation for the geometry will be proven in \textbf{Section~\ref{sec:limiting.T}}. These results can be viewed as consequences of (bilinear) compensated compactness. We then put all these together in \textbf{Section~\ref{sec:conclusion.prelim}}.

\subsection{Wave equations for the limits $\psi_0$ and $\om_0$}\label{sec:wave.equation}

We begin with a simple (bilinear) compensated compactness type result related to the null forms.
\begin{lemma}\label{lem:limitwave}
Let $\{\phi^{(1)}_n\}_{n=1}^{+\infty}$ and $\{\phi^{(2)}_n\}_{n=1}^{+\infty}$ be two sequences of real-valued smooth functions on $\mathcal M = (0,T)\times \mathbb R^3$. Assume that there exist smooth functions $\phi^{(1)}_0$ and $\phi^{(2)}_0$ on $\mathcal M$ such that the following hold for some $p_0\in (\f 83, +\infty)$:
\begin{enumerate}
\item For any (spacetime) compact subset $K$ of $\mathcal M$,
$$\|\phi^{(i)}_n - \phi^{(i)}_0\|_{L^{ \max\{2,\f{p_0}{p_0-2}\} }(K)}\to 0.$$
\item For any (spacetime) compact subset $K$ of $\mathcal M$,
$$\max_i \sup_n \|\rd \phi^{(i)}_n\|_{L^2(K)}  <+\infty.$$
\item $\Box_{g_0}\phi^{(i)}_n$ admits a decomposition $\Box_{g_0}\phi^{(i)}_n = \rd_\alp (\xi^{(i)}_n)^\alp + \eta_n^{(i)}$ for some vector field $(\xi^{(i)}_n)^\alp$ and some function $\eta_n^{(i)}$ such that for any (spacetime) compact subset $K$ of $\mathcal M$, $(\xi^{(i)}_n)^\alp \to 0$ in the $L^2(K)$ norm and $\eta_n^{(i)}$ is uniformly bounded in the $L^{\f{p_0}{2}}(K)$ norm.
\end{enumerate}
Then as $n\to +\infty$,
$$g_0^{-1}(d\phi^{(1)}_n, d\phi^{(2)}_n) \rightharpoonup g_0^{-1}(d\phi^{(1)}_0, d\phi^{(2)}_0)\quad\mbox{in the sense of distributions}.$$
\end{lemma}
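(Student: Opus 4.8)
The plan is to exploit the classical bilinear compensated compactness (div-curl / null form) structure: writing $g_0^{-1}(d\phi^{(1)}_n, d\phi^{(2)}_n)$ in divergence form up to a remainder controlled by $\Box_{g_0}\phi^{(i)}_n$, and then passing to the limit term by term. First I would recall that, since $g_0$ is smooth, one has the pointwise identity
\begin{equation*}
g_0^{-1}(d\phi^{(1)}_n, d\phi^{(2)}_n) = \tfrac12\Big( \Box_{g_0}(\phi^{(1)}_n \phi^{(2)}_n) - \phi^{(1)}_n \Box_{g_0}\phi^{(2)}_n - \phi^{(2)}_n \Box_{g_0}\phi^{(1)}_n \Big),
\end{equation*}
where $\Box_{g_0}$ is the Laplace--Beltrami operator of $g_0$. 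Testing against $\varphi \in C^\infty_c(\mathcal M)$ and integrating by parts against $\Box_{g_0}$ (which is self-adjoint with respect to $\mathrm{dVol}_{g_0}$ up to smooth lower-order coefficients, since $g_0$ is smooth), the first term becomes $\tfrac12\int (\Box_{g_0}\varphi)\, \phi^{(1)}_n \phi^{(2)}_n\, \mathrm{dVol}_{g_0}$, which converges to the expected limit by assumption (1): $\phi^{(i)}_n \to \phi^{(i)}_0$ strongly in $L^{\max\{2, p_0/(p_0-2)\}}_{loc}$, hence the product $\phi^{(1)}_n\phi^{(2)}_n \to \phi^{(1)}_0\phi^{(2)}_0$ strongly in $L^1_{loc}$ (one exponent handles the $L^2$ pairing, the other is designed precisely to pair with the $L^{p_0/2}$ bound — see below).

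The remaining two terms, of the form $\int \varphi\, \phi^{(1)}_n (\Box_{g_0}\phi^{(2)}_n)\, \mathrm{dVol}_{g_0}$, are where assumption (3) enters. I would split $\Box_{g_0}\phi^{(2)}_n = \rd_\alp(\xi^{(2)}_n)^\alp + \eta^{(2)}_n$. For the divergence piece, integrate by parts once to move $\rd_\alp$ onto $\varphi\, \phi^{(1)}_n$ (times smooth volume factors); the resulting integrand is $(\xi^{(2)}_n)^\alp$ — which tends to $0$ in $L^2_{loc}$ — paired against $\rd_\alp(\varphi \phi^{(1)}_n)$, and $\rd\phi^{(1)}_n$ is bounded in $L^2_{loc}$ by assumption (2) while $\phi^{(1)}_n$ is bounded in $L^2_{loc}$ by assumption (1); so this term vanishes in the limit. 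For the $\eta^{(2)}_n$ piece, we pair it directly against $\varphi\, \phi^{(1)}_n$: by assumption (3), $\eta^{(2)}_n$ is bounded in $L^{p_0/2}_{loc}$, and by assumption (1), $\phi^{(1)}_n \to \phi^{(1)}_0$ strongly in $L^{p_0/(p_0-2)}_{loc}$, which is precisely the conjugate (Hölder) exponent to $p_0/2$; hence $\eta^{(2)}_n \phi^{(1)}_n \rightharpoonup \eta^{(2)}_{0}\phi^{(1)}_0$ (after passing to a further subsequence so that $\eta^{(2)}_n$ has a weak limit in $L^{p_0/2}_{loc}$, which one checks must be $\eta^{(2)}_0$ corresponding to $\Box_{g_0}\phi^{(2)}_0$ minus a divergence). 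Symmetrically for the term with the indices $1$ and $2$ swapped. Collecting the three contributions reconstructs $g_0^{-1}(d\phi^{(1)}_0, d\phi^{(2)}_0)$ tested against $\varphi$, which is the claim.

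The main obstacle, and the only genuinely delicate point, is the bookkeeping of exponents: one needs $p_0/2 > 1$ (so the $\eta_n$ are at least in $L^1_{loc}$) and, more importantly, that $\phi^{(i)}_n$ converges strongly in exactly the space $L^{p_0/(p_0-2)}_{loc}$ dual to $L^{p_0/2}_{loc}$, which is exactly what hypothesis (1) grants — so there is nothing to prove there, only to be careful that the condition $p_0 > 8/3$ (which gives the $H^{-1}_{loc}$-compactness used upstream to produce the decomposition in (3)) is consistent. A secondary point is that integration by parts against $\Box_{g_0}$ generates smooth first-order terms ($\rd g_0$-coefficients) hitting $\phi^{(i)}_n$ or $\phi^{(i)}_n\phi^{(j)}_n$; these are harmless since $g_0$ is smooth and $\phi^{(i)}_n$, as well as the products, converge strongly in $L^1_{loc}$. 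I do not expect to need the microlocal defect measure machinery here at all — this is purely the bilinear (div-curl) mechanism — which is consistent with the paper's remark that \eqref{eq:CC.null} is "well-known".
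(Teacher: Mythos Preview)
Your proposal is correct and follows essentially the same approach as the paper: the same product identity for $g_0^{-1}(d\phi^{(1)}_n,d\phi^{(2)}_n)$, the same integration-by-parts treatment of $\Box_{g_0}(\phi^{(1)}_n\phi^{(2)}_n)$, and the same splitting of $\Box_{g_0}\phi^{(i)}_n$ into a divergence part (killed by $L^2$-convergence of $\xi_n$ against $L^2$-bounded $\rd\phi_n$) and a bounded $L^{p_0/2}$ part paired with the strongly convergent $\phi_n$ in the dual exponent. The only cosmetic difference is that the paper identifies the weak distributional limit of $\eta^{(i)}_n$ directly as $\Box_{g_0}\phi^{(i)}_0$ (since $\Box_{g_0}\phi^{(i)}_n\rightharpoonup\Box_{g_0}\phi^{(i)}_0$ and the divergence part tends to zero), avoiding your subsequence extraction; your route still works because the limit is uniquely determined.
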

\begin{proof}
Let $\vartheta \in C^\infty_c(\mathcal M)$. We want to show that
\begin{equation}\label{Murat.type.goal}
\int_{\mathbb R^{2+1}} \vartheta g_0^{-1}(d\phi^{(1)}_n, d\phi^{(2)}_n)\, \ud x \to \int_{\mathbb R^{2+1}} \vartheta g_0^{-1}(d\phi^{(1)}_0, d\phi^{(2)}_0)\, \ud x.
\end{equation}

We write $$g_0^{-1}(d\phi^{(1)}_n, d\phi^{(2)}_n) = \underbrace{\f 12\Box_{g_0}(\phi^{(1)}_n \phi^{(2)}_n)}_{=:\mathrm{I}} \underbrace{-\f 12(\Box_{g_0}\phi^{(1)}_n) \phi^{(2)}_n}_{=:\mathrm{II}} \underbrace{-\f 12 \phi^{(1)}_n (\Box_{g_0}\phi^{(2)}_n) }_{=:\mathrm{III}}$$ (and similarly for $g_0^{-1}(d\phi^{(1)}_0, d\phi^{(2)}_0)$).
We handle each of these terms below.

\pfstep{Step~1: Term $\mathrm{I}$} To handle the term $\mathrm{I}$, simply note that the assumptions and H\"older's inequality implies that $\phi^{(1)}_n\phi^{(2)}_n\to \phi^{(1)}_0\phi^{(2)}_0$ strongly in $L^1$ (on any compact set). Since $\Box_{g_0}$ is a smooth differential operator, it follows that $\f 12\Box_{g_0}(\phi^{(1)}_n \phi^{(2)}_n)$ converges to $\f 12\Box_{g_0}(\phi^{(1)}_0 \phi^{(2)}_0)$ as distributions.

\pfstep{Step~2: Terms $\mathrm{II}$ and $\mathrm{III}$} We then consider the term $\mathrm{II}$; the term $\mathrm{III}$ is clearly similar.

\pfstep{Step~2(a): Contribution from $\f 12\rd_\alp (\xi^{(1)}_n)^\alp \phi^{(2)}_n$} Using the $L^2$ norm convergence of $(\xi^{(1)}_n)^\alp$ and the $L^2$ norm boundedness of $\rd\phi^{(2)}_n$, a simple integration by parts and H\"older's inequality imply that $\f 12\rd_\alp (\xi^{(1)}_n)^\alp \phi^{(2)}_n \rightharpoonup 0$ in the sense of distributions. 

\pfstep{Step~2(b): Contribution from $\f 12\eta_n^{(1)} \phi^{(2)}_n$} Since $g_0$ is smooth, $\Box_{g_0}\phi^{(1)}_n \rightharpoonup \Box_{g_0}\phi^{(1)}_0$ in the sense of distributions. The assumptions then imply that $\eta^{(1)}_n\rightharpoonup \Box_{g_0}\phi^{(1)}_0$ in the sense of distributions. Using now the $L^{\f{p_0}{2}}$ boundedness of $\eta^{(1)}_n$ and the norm convergence of $\phi^{(i)}_n - \phi^{(i)}_0$ in $L^{ \f{p_0}{p_0-2} }$, it follows that for any $\vartheta \in C^\infty_c(\mathcal M)$,
$$\int_{\mathbb R^{2+1}} \vartheta \eta_n^{(1)} \phi^{(2)}_n \,\ud x\to \int_{\mathbb R^{2+1}} \vartheta (\Box_{g_0}\phi^{(1)}_0) \phi^{(2)}_0 \,\ud x.$$

Combining Steps~1 and 2, we have proven \eqref{Murat.type.goal}. \qedhere

\end{proof}

Using Lemma~\ref{lem:limitwave}, we obtain the following equation for $\chi\psi_0$ and $\chi\om_0$.
\begin{proposition}\label{prop:limitwave}
$\chi\psi_0$ obeys (classically) the wave equation
\begin{equation}\label{eq:psi0}
\Box_{g_0} (\chi\psi_0) - 2 g_0^{-1}( \ud\chi,\ud \psi_0) - \psi_0 \Box_{g_0} \chi + \f 12 \chi e^{-4\psi_0} g_0^{-1}( \ud\om_0, \ud \om_0) = 0.
\end{equation}
$\chi\om_0$ obeys (classically) the wave equation
\begin{equation}\label{eq:om0}
\Box_{g_0} (\chi\om_0) - 2g_0^{-1}( \ud\chi,\ud \om_0) - \om_0 \Box_{g_0} \chi - 4 \chi g_0^{-1}( \ud\om_0, \ud \psi_0) = 0.
\end{equation}
\end{proposition}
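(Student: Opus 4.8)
\textbf{Proof plan for Proposition~\ref{prop:limitwave}.} The plan is to apply Lemma~\ref{lem:limitwave} to the relevant null forms in the equations~\eqref{eq:U1vac} satisfied by $(\psi_n,\om_n,g_n)$, after first rewriting things in terms of the cutoff quantities $\chi\psi_n$ and $\chi\om_n$, and then pass to the limit in the sense of distributions. Since the limit $g_0$ is smooth, the conclusion of an ordinary distributional wave equation with smooth coefficients and a (locally) $L^1$ right-hand side automatically upgrades to a classical equation once we know $\psi_0,\om_0$ are smooth (they are, by assumption).

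\emph{Step 1: Rewrite the equations for the cutoff quantities.} From the first equation of \eqref{eq:U1vac}, $\Box_{g_n}\psi_n = -\f12 e^{-4\psi_n}g_n^{-1}(\ud\om_n,\ud\om_n)$, and by the Leibniz rule for $\Box_{g_n}$,
\begin{equation*}
\Box_{g_n}(\chi\psi_n) = \chi\,\Box_{g_n}\psi_n + 2 g_n^{-1}(\ud\chi,\ud\psi_n) + \psi_n\,\Box_{g_n}\chi = -\f12\chi e^{-4\psi_n}g_n^{-1}(\ud\om_n,\ud\om_n) + 2g_n^{-1}(\ud\chi,\ud\psi_n) + \psi_n\Box_{g_n}\chi,
\end{equation*}
and similarly $\Box_{g_n}(\chi\om_n) = 4\chi g_n^{-1}(\ud\om_n,\ud\psi_n) + 2g_n^{-1}(\ud\chi,\ud\om_n) + \om_n\Box_{g_n}\chi$. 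I would then replace $\Box_{g_n}$ by $\Box_{g_0}$ on the left, moving the difference $(\Box_{g_0}-\Box_{g_n})(\chi\psi_n)$ to the right; by assumptions (3)--(4) this difference converges to $0$ in the sense of distributions (indeed, exactly as in the proof of Lemma~\ref{lem:Box.psi.decomp}, it splits into a divergence of an $L^2$-null term plus an $L^{p_0/2}$-bounded term times a factor converging weakly, which one checks tends to $0$ after testing against $C^\infty_c$).

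\emph{Step 2: Pass to the limit term by term.} The linear terms $2g_0^{-1}(\ud\chi,\ud\psi_n)$, $\psi_n\Box_{g_0}\chi$ (and the $\om$-analogues) converge weakly to $2g_0^{-1}(\ud\chi,\ud\psi_0)$, $\psi_0\Box_{g_0}\chi$ by assumptions (3)--(4) since $\ud\chi,\Box_{g_0}\chi$ are smooth and compactly supported. The terms $e^{-4\psi_n}$ converge uniformly on compact sets to $e^{-4\psi_0}$ by assumption (3)(a), so it remains to handle the null forms $g_0^{-1}(\ud\om_n,\ud\om_n)$ and $g_0^{-1}(\ud\om_n,\ud\psi_n)$. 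Here I invoke Lemma~\ref{lem:limitwave} with $\phi^{(1)}_n,\phi^{(2)}_n$ taken among $\psi_n,\om_n$ (or rather their differences from the limit, plus the smooth limit): hypothesis (1) of that lemma follows from assumption (3)(a) (uniform convergence on compact sets implies $L^q_{loc}$ convergence for all finite $q$), hypothesis (2) from assumption (4) (weak $L^{p_0}_{loc}$ convergence of the derivatives implies uniform $L^2_{loc}$ boundedness since $p_0>8/3>2$), and hypothesis (3) is precisely the content of Lemma~\ref{lem:Box.psi.decomp} applied to $\psi_n-\psi_0$ and $\om_n-\om_0$ (the smooth limit contributes a fixed smooth, hence $L^{p_0/2}_{loc}$-bounded, piece with vanishing divergence part). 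Combining these, $\chi e^{-4\psi_n}g_0^{-1}(\ud\om_n,\ud\om_n)\rightharpoonup \chi e^{-4\psi_0}g_0^{-1}(\ud\om_0,\ud\om_0)$ and likewise for the cross term.

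\emph{Step 3: Assemble and upgrade to a classical equation.} Putting Steps 1--2 together yields the two equations \eqref{eq:psi0}, \eqref{eq:om0} in the sense of distributions. Since $g_0$ is smooth and $\psi_0,\om_0$ are smooth, every term on the left-hand side of \eqref{eq:psi0}, \eqref{eq:om0} is a continuous function, so the distributional identity is in fact a pointwise one, i.e.~it holds classically. The main (mild) obstacle is Step 2's treatment of the null forms and of the $(\Box_{g_0}-\Box_{g_n})(\chi\psi_n)$ error; everything there reduces to verifying the hypotheses of Lemma~\ref{lem:limitwave} and of Lemma~\ref{lem:Box.psi.decomp} carefully (in particular the compact Sobolev embedding $L^{p_0/2}\hookrightarrow H^{-1}_{loc}$ in $2+1$ dimensions, which uses $p_0>8/3$), but no genuinely new idea is needed — the essential compensated-compactness work is already packaged in Lemma~\ref{lem:limitwave}.
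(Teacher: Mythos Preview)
Your plan is correct and matches the paper's proof in all essential respects: both rewrite the equation for $(\psi_n,\om_n,g_n)$ with the cutoff $\chi$, invoke Lemma~\ref{lem:limitwave} (whose hypotheses are verified via Lemma~\ref{lem:Box.psi.decomp}) for the null forms $g_0^{-1}(\ud\om_n,\ud\om_n)$ and $g_0^{-1}(\ud\om_n,\ud\psi_n)$, handle the linear terms by weak/uniform convergence, and upgrade from distributional to classical by smoothness of the limit. The only organizational difference is that the paper integrates against a test function $\eta$ from the outset and passes to the limit in $\int(\Box_{g_n}\eta)\chi\psi_n\sqrt{-\det g_n}\,\ud x$ directly, whereas you keep the strong form $\Box_{g_0}(\chi\psi_n)=\ldots$ and treat $(\Box_{g_0}-\Box_{g_n})(\chi\psi_n)\to 0$ as a separate error; both amount to the same one-line integration-by-parts computation. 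One small caution: your description of that error as ``an $L^{p_0/2}$-bounded term times a factor converging weakly'' is slightly loose (a product of two merely weakly convergent factors need not converge), but the conclusion is correct---it follows cleanly by testing $\sqrt{-\det g_n}\Box_{g_n}(\chi\psi_n)=\partial_\alpha(\sqrt{-\det g_n}(g_n^{-1})^{\alpha\beta}\partial_\beta(\chi\psi_n))$ against $\vartheta$ and using uniform convergence of the metric coefficients together with weak convergence of $\partial(\chi\psi_n)$. Also, the compact embedding $L^{p_0/2}\hookrightarrow H^{-1}_{loc}$ is not actually needed here (it is used in Proposition~\ref{prop:psiom.localized}, not in this proof).
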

\begin{proof}
We will focus the exposition on \eqref{eq:psi0}. \eqref{eq:om0} can be treated similarly.

Since $\chi\psi_0$ is smooth, it suffices to show that \eqref{eq:psi0} holds in the sense of distributions, i.e.~we want to show that for any $\eta \in C^\infty_c(\mathbb R^{2+1})$,
\begin{equation}\label{eq:psi0.weak}
\begin{split}
\underbrace{\int_{\mathbb R^{2+1}} (\Box_{g_0}\eta) \chi \psi_0  \sqrt{-\det g_0} \,\ud x}_{=:\mathrm{I}} + \underbrace{\f 12 \int_{\mathbb R^{2+1}} \eta {\chi} e^{-4\psi_0} g_0^{-1}(\ud \om_0, \ud \om_0)  \sqrt{-\det g_0} \,\ud x}_{=:\mathrm{II}} &\\
+\underbrace{\int_{\mathbb R^{2+1}} \eta\left( - 2 g_0^{-1}(\ud \chi, \ud \psi_0) - \psi_0 \Box_{g_0} \chi \right) \sqrt{-\det g_0} \,\ud x}_{=:\mathrm{III}} &= 0.
\end{split}
\end{equation}
We note that by assumption (4) of Theorem~\ref{thm:prelim}, $\rd\psi_n$ and $\rd \mfg_n$ converge respectively to $\rd\psi_0$ and $\rd\mfg_0$ weakly in $L^{p_0}_{\mathrm{loc}}$. Therefore, using also the locally uniform convergence of $\psi_n$ and $\mfg_n$ (in assumption (3) of Theorem~\ref{thm:prelim}), we obtain
\begin{equation}\label{limitwave.1}
\mathrm{I}+\mathrm{III}= \lim_{n\to +\infty} \int_{\mathbb R^{2+1}} \left((\Box_{g_n}\eta) \chi \psi_n - 2\eta g_n^{-1}(\ud \chi, \ud \psi_n) - \eta \psi_n \Box_{g_n} \chi \right) \sqrt{-\det g_n} \,dx.
\end{equation}
For the term $\mathrm{II}$ in \eqref{eq:psi0.weak}, we compute using the uniform convergence of $\psi_n$ and $g_n$ (on compact sets) and Lemma~\ref{lem:limitwave}.
Note that Lemma~\ref{lem:limitwave} indeed applied to $g_0^{-1}(\ud \om_n, \ud \om_n)$ since by assumptions (3), (4) of Theorem~\ref{thm:prelim} and Lemma~\ref{lem:Box.psi.decomp}, $\om_n$ obeys the assumptions of Lemma~\ref{lem:limitwave}. Hence, we obtain
\begin{equation}\label{limitwave.2}
\begin{split}
\mathrm{II} = &\: \f 12 \lim_{n\to+\infty} \int_{\mathbb R^{2+1}} \eta \chi e^{-4\psi_0} g_0^{-1}(\ud \om_n, \ud \om_n)  \sqrt{-\det g_0} \,\ud x \\
= &\: \f 12 \lim_{n\to+\infty} \int_{\mathbb R^{2+1}} \eta {\chi} e^{-4\psi_n} g_n^{-1}(\ud \om_n, \ud \om_n)  \sqrt{-\det g_n} \,\ud x.
\end{split}
\end{equation}

Combining \eqref{limitwave.1} and \eqref{limitwave.2}, we obtain
\begin{equation}\label{eq:psi0.weak.final}
\begin{split}
 \mathrm{I} + \mathrm{II} + \mathrm{III}
= &\: \lim_{n\to +\infty} \int_{\mathbb R^{2+1}} \left((\Box_{g_n}\eta) \chi \psi_n - 2\eta g_n^{-1}(\ud \chi, \ud \psi_n) - \eta \psi_n \Box_{g_n} \chi \right) \sqrt{-\det g_n} \,dx \\
&\: + \f 12 \lim_{n\to+\infty} \int_{\mathbb R^{2+1}} \eta {\chi} e^{-4\psi_n} g_n^{-1}(\ud \om_n, \ud \om_n)  \sqrt{-\det g_n} \,\ud x = 0,
\end{split}
\end{equation}
where in the last line we have used the fact that for every $n \in \mathbb N$, the wave equation
$$\Box_{g_n} (\chi\psi_n) - 2 g_n^{-1}(\ud \chi, \ud \psi_n) - \psi_n \Box_{g_n} \chi + \f 12 {\chi} e^{-4\psi_n} g_n^{-1}(\ud \om_n, \ud \om_n) = 0$$
holds. We have thus proven \eqref{eq:psi0.weak}.  \qedhere
\end{proof}

\subsection{The limiting stress-energy-momentum tensor}\label{sec:limiting.T}

\begin{proposition}\label{prop:T.compute.1}
There is a subsequence $n_k$ such that for every vector field $Y\in C^\infty_c(\Omega)$,
\begin{equation*}
\begin{split}
\int_{\mathbb R^{2+1}} \mathrm{Ric}(g_0) (Y,Y) \,\mathrm{dVol}_{g_0} = \lim_{k\to +\infty} \int_{\mathbb R^{2+1}} [2 (Y \psi_{n_k})^2 + \f 12 e^{-4\psi_{n_k}} (Y\om_{n_k})^2] \, \mathrm{dVol}_{g_{n_k}}.
\end{split}
\end{equation*}
\end{proposition}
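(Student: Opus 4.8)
The plan is to use the Einstein part of the system \eqref{eq:U1vac} satisfied by each $(\psi_n,\om_n,g_n)$, namely $\mathrm{Ric}_{\alp\bt}(g_n) = 2\rd_\alp\psi_n\rd_\bt\psi_n + \f12 e^{-4\psi_n}\rd_\alp\om_n\rd_\bt\om_n$, contract it with $Y^\alp Y^\bt$, integrate against $\mathrm{dVol}_{g_n}$, and pass to the limit $n\to+\infty$ on both sides. For the right-hand side, $\int_{\mathbb R^{2+1}} [2(Y\psi_n)^2 + \f12 e^{-4\psi_n}(Y\om_n)^2]\,\mathrm{dVol}_{g_n}$ is a bounded sequence (by assumption (4), $\rd\psi_n,\rd\om_n$ are bounded in $L^{p_0}_{loc}$ with $p_0>\f83>2$, hence in $L^2_{loc}$, and $Y$ is compactly supported), so after passing to a subsequence $n_k$ it converges; this is exactly the limit appearing in the statement, and no further identification of it is needed here. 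The real work is to show the left-hand side $\int_{\mathbb R^{2+1}}\mathrm{Ric}(g_n)(Y,Y)\,\mathrm{dVol}_{g_n}$ converges to $\int_{\mathbb R^{2+1}}\mathrm{Ric}(g_0)(Y,Y)\,\mathrm{dVol}_{g_0}$.

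First I would reduce this to the convergence \eqref{eq:CC.Ric} alluded to in the introduction, i.e.\ $\mathrm{Ric}_{\alp\bt}(g_n)\rightharpoonup\mathrm{Ric}_{\alp\bt}(g_0)$ in the sense of distributions, combined with the uniform-on-compacts convergence $g_n\to g_0$ (assumption (3)) which handles the $Y^\alp Y^\bt\sqrt{-\det g_n}$ factor: writing $\int \mathrm{Ric}(g_n)(Y,Y)\,\mathrm{dVol}_{g_n} = \int \mathrm{Ric}_{\alp\bt}(g_n)\,(Y^\alp Y^\bt\sqrt{-\det g_n})\,\ud x$ and noting that $Y^\alp Y^\bt\sqrt{-\det g_n}\to Y^\alp Y^\bt\sqrt{-\det g_0}$ uniformly on the compact set $\mathrm{supp}\,Y$ while $\mathrm{Ric}_{\alp\bt}(g_n)$ is bounded in (say) $H^{-1}_{loc}$ or $L^1_{loc}$, the product converges once one also knows the weak-$*$ convergence of $\mathrm{Ric}_{\alp\bt}(g_n)$. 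To obtain the latter I would use the elliptic gauge structure recalled in Section~\ref{sec:elliptic.gauge}. The key point, as emphasized in the introduction, is twofold: (i) the elliptic equations \eqref{elliptic.wave.1}--\eqref{elliptic.wave.4} give \emph{strong} $L^2_{loc}$ (indeed $L^{p}_{loc}$) convergence of the \emph{spatial} derivatives of the metric components $\mfg_n\in\{\log N_n,\bt^i_n,\gamma_n\}$ — because the right-hand sides of these elliptic equations are bounded in $L^{p_0/2}_{loc}$ (products of derivatives of $\psi_n,\om_n,\mfg_n$), so by elliptic regularity $\nab^2\mfg_n$ is bounded in $L^{p_0/2}_{loc}$ and hence $\nab\mfg_n$ is precompact in $L^{p_0}_{loc}$ by Rellich; and (ii) in the expression \eqref{Rij} for $\mathrm{Ric}_{ij}(g_n)$ (together with the analogous expressions for $\mathrm{Ric}_{0j}$, $\mathrm{Ric}_{00}$ read off from \eqref{elliptic.1}), the only genuinely time-differentiated term is $\frac1{N_n}(e_0)_n H_{ij}^{(n)}$, and this has \emph{divergence/total-derivative structure} — it is $\rd_t$ of something bounded in $L^{p_0/2}_{loc}$ plus lower-order products — so it converges weakly-$*$ to the corresponding expression for $g_0$; all the remaining terms in \eqref{Rij} are either spatial derivatives of metric components (strongly convergent) or products of at most one derivative, which pass to the limit by the strong convergence of $\nab\mfg_n$ together with the uniform convergence of the undifferentiated quantities. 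Assembling these, $\mathrm{Ric}_{\alp\bt}(g_n)\rightharpoonup\mathrm{Ric}_{\alp\bt}(g_0)$.

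The main obstacle is precisely point (ii): making rigorous that no \emph{quadratic-in-time-derivative} terms of the metric survive in $\mathrm{Ric}(g_n)$ in this gauge, and that the single time-differentiated term can be rearranged as a distributional derivative of an equi-integrable (bounded in $L^{p_0/2}_{loc}$ with $p_0/2>4/3$) sequence plus harmless products, so that it converges weakly rather than merely being bounded. This is where the specific form of the elliptic gauge \eqref{g.form}--\eqref{trace.free} and the identities of \cite[Appendix~B]{HL} are used. Once \eqref{eq:CC.Ric} is established in this distributional sense, I would finish by: fixing a subsequence $n_k$ along which the right-hand side of the proposition converges (bounded sequence of reals), passing to the limit in the Einstein equation contracted with $Y^\alp Y^\bt\,\mathrm{dVol}_{g_{n_k}}$ using the product-convergence argument above on the left and the definition of the limit on the right, and reading off the claimed identity. (Note that this proposition does \emph{not} yet assert the right-hand side equals $\int[2(Y\psi_0)^2+\f12 e^{-4\psi_0}(Y\om_0)^2]\,\mathrm{dVol}_{g_0}$ plus a defect-measure term; that decomposition of the limit is carried out in the subsequent subsection using the microlocal defect measure, so here one only needs existence of the limit along a subsequence.)
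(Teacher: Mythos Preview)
Your overall strategy matches the paper's: rewrite the matter integral as $\int \mathrm{Ric}(g_n)(Y,Y)\,\mathrm{dVol}_{g_n}$ via the Einstein equation, then prove $\mathrm{Ric}_{\alp\bt}(g_n)\rightharpoonup\mathrm{Ric}_{\alp\bt}(g_0)$ in distributions using the elliptic-gauge structure. There is, however, a real gap in your compactness step (i). The elliptic equations \eqref{elliptic.wave.2}--\eqref{elliptic.wave.4} involve only the \emph{spatial} Laplacian, so elliptic regularity at each fixed $t$ bounds only $\rd_i\rd_j\mfg_n$ in $L^{p_0/2}_{loc}$; this does not give a full spacetime $W^{1,p_0/2}_{loc}$ bound on $\rd_i\mfg_n$, and without control of $\rd_t\rd_i\mfg_n$ the spacetime Rellich embedding you invoke is not justified. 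The paper closes this gap by focusing on $\rd_i\gamma_n$ and $(H_n)_{ij}$ specifically and obtaining their \emph{time}-derivative bounds from other equations: $\rd_t(H_n)_{ij}$ is read off from \eqref{Rij.wave} (all remaining terms there are already bounded in $L^{p_0/2}_{loc}$), and $\rd_t\rd_i\gamma_n$ from the spatially differentiated gauge condition \eqref{trace.free} together with the $\rd^2_{ij}\bt^k_n$ bound coming from \eqref{elliptic.wave.4}. Only then does the compact embedding $W^{1,p_0/2}_{loc}\hookrightarrow L^2_{loc}$ (valid in $2{+}1$ dimensions since $p_0/2>4/3$) yield the strong subsequential $L^2_{loc}$ convergence.

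Your description in (ii) is also imprecise: the Ricci expressions \eqref{elliptic.1}--\eqref{elliptic.3}, \eqref{Rij} do contain several terms quadratic in first derivatives of the metric (e.g.\ $H_i{}^\ell H_{j\ell}$, $\rd_j\bt^k H_{ki}$, $\rd_j\gamma\,\rd_k N$), not merely ``products of at most one derivative''. The paper's structural point is that every such quadratic term carries at least one factor of $\rd_i\gamma_n$ or $(H_n)_{ij}$ --- exactly the quantities shown to converge strongly --- so these products pass to the limit by Cauchy--Schwarz. The second-order linear terms, including $(e_0)_n H_{ij}$, then converge distributionally; no separate divergence-structure argument is needed once the strong convergence of $H_n$ is in hand.
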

\begin{proof}

\pfstep{Step~1: $\rd_i \gamma_n$ and $(H_n)_{ij}$ have strong subsequential $L^2_{\mathrm{loc}}$ limits} In this step, we show that on any fixed compact set, after choosing a subsequence $n_k$, $\rd_i \gamma_{n_k}$ and $(H_{n_k})_{ij}$ have strong $L^2$ limits. Since $p_0 \in (\f 83, +\infty)$, $W^{1,\f{p_0}{2}}_{\mathrm{loc}}$ embeds \underline{compactly} into $L^{2}_{\mathrm{loc}}$ (in $(2+1)$ dimensions). Therefore, it suffices to show that for any fixed compact set, $\rd_i \gamma_n$ and $(H_n)_{ij}$ are uniformly bounded in $W^{1, \f{p_0}{2}}$. By assumptions, we already know that $\rd_i \gamma_n$ and $(H_n)_{ij}$ are uniformly bounded in $L^{\f{p_0}{2}}$ (in fact also $L^{p_0}$) on any compact set; we therefore need to show that the same holds true for all first derivatives of $\rd_i \gamma_n$ and $(H_n)_{ij}$.

By \eqref{elliptic.wave.1}, \eqref{elliptic.wave.2} and \eqref{elliptic.wave.3}, $\Delta \gamma_n$, $\Delta N$ and $\de^{ik}\rd_k (H_n)_{ij}$ are all uniformly bounded in $L^{\f{p_0}{2}}$ in any fixed compact set. Standard $L^p$ elliptic theory\footnote{Note that $H_n$ is traceless. In two (spatial) dimensions, this implies that a bound on the divergence of $H_n$ also gives a bound on the curl of $H_n$. Hence, we indeed have an elliptic estimate of the type
$$\sum_{i,j,k} \|\rd_k (H_n)_{ij}\|_{L^{\f{p_0}{2}}(U_1)} \ls \sum_j \|\de^{ik}\rd_k (H_n)_{ij}\|_{L^{\f{p_0}{2}}(U_2)} + \sum_{i,j}\|(H_n)_{ij}\|_{L^{\f{p_0}{2}}(U_2)}$$
for $U_1\subset U_2\subset \mathbb R^2$, each set being an open and precompact subset of the next set.
} (applied for each fixed $t$) implies that 
$$\rd^2_{ij} \gamma_n,\,\rd^2_{ik} N,\, \rd_k (H_n)_{ij}$$
are all uniformly bounded in $L^{\f{p_0}{2}}$ on any fixed compact set.

Using the above, and also \eqref{Rij.wave}, the assumptions of Theorem~\ref{thm:prelim} and H\"older's inequality, we also obtain that
$$\rd_t (H_n)_{ij}$$
is uniformly bounded in $L^{\f{p_0}{2}}$ on any fixed compact set.

It remains to bound $\rd_t \rd_i \gamma_n$. For this, first note that by \eqref{elliptic.wave.4}, the assumptions of Theorem~\ref{thm:prelim} and the above bounds, we see that $\Delta\bt^i_n$ is uniformly bounded in $L^{\f{p_0}{2}}$ on any fixed compact set. Elliptic theory then implies that 
$$\rd^2_{jk}\bt^i_n$$ 
is uniformly bounded in $L^{\f{p_0}{2}}$ on any fixed compact set.

Now we use \eqref{trace.free}, take a spatial derivative, and apply the above estimates. We see that
$$\rd_t \rd_i \gamma_n$$
is uniformly bounded in $L^{\f{p_0}{2}}$ on any fixed compact set.

The above discussions imply that indeed $\rd_i \gamma_n$ and $(H_n)_{ij}$ have strong subsequential $L^2_{\mathrm{loc}}$ limits.

\pfstep{Step~2: Weak convergence of the Ricci tensor} We now turn to the expressions for the Ricci tensor as given in \eqref{elliptic.1}, \eqref{elliptic.2}, \eqref{elliptic.3} and \eqref{Rij}. Notice that in each of the terms which is quadratic in the first derivative of metric, there is at least one factor of $\rd_i \gamma_{n_k}$ or $(H_{n_k})_{ij}$. By Step~1 and the Cauchy--Schwarz inequality, it follows that $\mathrm{Ric}(g_{n_k})$ converges to $\mathrm{Ric}(g_0)$ in the sense of distributions (where $n_k$ is the subsequence as in Step~1).

\pfstep{Step~3: Putting everything together} By Step~2 and assumption (3) of Theorem~\ref{thm:prelim}, it follows that for any smooth vector field $Y$ supported in $\Omega$, as $k\to +\infty$,
$$\int_{\mathbb R^{2+1}} \mathrm{Ric}(g_{n_k}) (Y,Y) \,\mathrm{dVol}_{g_{n_k}} \to \int_{\mathbb R^{2+1}} \mathrm{Ric}(g_0) (Y,Y) \,\mathrm{dVol}_{g_0}.$$
On the other hand, since $(\psi_n,\om_n,g_n)$ satisfies \eqref{eq:U1vac} for all $n\in \mathbb N$, we know that for every $n_k \in \mathbb N$,
$$\int_{\mathbb R^{2+1}} \mathrm{Ric}(g_{n_k}) (Y,Y) \,\mathrm{dVol}_{g_{n_k}} = \int_{\mathbb R^{2+1}} [2 (Y \psi_{n_k})^2 + \f 12 e^{-4\psi_{n_k}} (Y\om_{n_k})^2] \, \mathrm{dVol}_{g_{n_k}}.$$
The conclusion follows. \qedhere
\end{proof}

\begin{proposition}\label{prop:T.compute.2}
Let 
\begin{equation}\label{def:dnu}
\ud\nu:= 2\ud \nu^\psi + \f 12 e^{-4\psi_0} \, \ud \nu^\om.
\end{equation}
Then the limiting metric $g_0$ satisfies
\begin{equation*}
\begin{split}
\int_{\mathbb R^{2+1}} \mathrm{Ric}(g_0) (Y,Y) \,\mathrm{dVol}_{g_0}= \int_{\mathbb R^{2+1}} \left( 2(Y\psi_0)^2 + \f12 e^{-4\psi_0} (Y\om_0)^2 \right)\, \mathrm{dVol}_{g_0} + \int_{S^*\mathbb R^{2+1}} (Y^\alp \xi_\alp)^2 \, \ud\nu
\end{split}
\end{equation*}
for every vector field $Y\in C^\infty_c(\Omega)$.
\end{proposition}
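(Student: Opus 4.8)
The plan is to feed Proposition~\ref{prop:T.compute.1} into Corollary~\ref{cor:nu}. Proposition~\ref{prop:T.compute.1} already identifies $\int_{\mathbb R^{2+1}} \mathrm{Ric}(g_0)(Y,Y)\,\mathrm{dVol}_{g_0}$ with $\lim_k\int_{\mathbb R^{2+1}}[2(Y\psi_{n_k})^2 + \tfrac12 e^{-4\psi_{n_k}}(Y\om_{n_k})^2]\,\mathrm{dVol}_{g_{n_k}}$ along a subsequence $n_k$, while Corollary~\ref{cor:nu} computes exactly this kind of quadratic limit in terms of $\ud\nu^\psi$ and $\ud\nu^\om$. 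Since the sequence has been assumed pure, Corollary~\ref{cor:nu} holds in particular along the subsequence $n_k$; I fix $Y\in C^\infty_c(\Omega)$ and work along $n_k$ throughout.

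First I would reduce everything to integrals against $\mathrm{dVol}_{g_0}$ with the cutoff $\chi$ inserted. Because $\chi\equiv 1$ on $\Omega\supset\mathrm{supp}(Y)$ (cf.~\eqref{def:chi}), one has $Y\psi_n = Y(\chi\psi_n)$ and $Y\om_n = Y(\chi\om_n)$ on $\mathrm{supp}(Y)$, so $\int(Y\psi_n)^2\,\mathrm{dVol}_{g_n} = \int(Y(\chi\psi_n))^2\,\mathrm{dVol}_{g_n}$, and likewise for $\om$. Next, $\sqrt{-\det g_{n_k}}\to\sqrt{-\det g_0}$ and $e^{-4\psi_{n_k}}\to e^{-4\psi_0}$ uniformly on the compact set $\overline{\Omega'}\supset\mathrm{supp}(\chi)$ by assumption~(3) of Theorem~\ref{thm:prelim}, while $(Y(\chi\psi_{n_k}))^2$ and $(Y(\chi\om_{n_k}))^2$ are uniformly bounded in $L^1$ (the weak $L^{p_0}_{loc}$ convergence of assumption~(4), with $p_0>\tfrac83>2$, gives a uniform $L^2_{loc}$ bound on $\rd\psi_n$, $\rd\om_n$; no compactness is needed). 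Hence replacing $\mathrm{dVol}_{g_{n_k}}$ by $\mathrm{dVol}_{g_0}$ and $e^{-4\psi_{n_k}}$ by $e^{-4\psi_0}$ costs only $o(1)$, so that
$$\int_{\mathbb R^{2+1}}\mathrm{Ric}(g_0)(Y,Y)\,\mathrm{dVol}_{g_0} = \lim_{k\to+\infty}\int_{\mathbb R^{2+1}}\Big[2(Y(\chi\psi_{n_k}))^2 + \tfrac12 e^{-4\psi_0}(Y(\chi\om_{n_k}))^2\Big]\,\mathrm{dVol}_{g_0},$$
the existence of the limit on the right being supplied by the next step.

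To evaluate this limit I would expand $(Y(\chi\psi_n))^2 = Y^\alp Y^\bt\,\rd_\alp(\chi\psi_n)\,\rd_\bt(\chi\psi_n)$ and, for each $(\alp,\bt)$, take $A_{\alp\bt}$ to be the operator of multiplication by $Y^\alp Y^\bt\in C^\infty_c(\Omega)$, and $A'_{\alp\bt}$ the operator of multiplication by $\tfrac12 e^{-4\psi_0}Y^\alp Y^\bt\in C^\infty_c(\Omega)$ (here $e^{-4\psi_0}$ is smooth since $(\psi_0,\om_0,g_0)$ is smooth); each is a zeroth order pseudo-differential operator whose principal symbol is real, (trivially) homogeneous of order $0$ in $\xi$, and supported in $S^*\Omega$. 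Applying \eqref{eq:cor.nu.1} and \eqref{eq:cor.nu.2} of Corollary~\ref{cor:nu} to these operators and summing over $\alp,\bt$ gives
$$\int(Y(\chi\psi_n))^2\,\mathrm{dVol}_{g_0} \longrightarrow \int(Y(\chi\psi_0))^2\,\mathrm{dVol}_{g_0} + \int_{S^*\mathbb R^{2+1}}(Y^\alp\xi_\alp)^2\,\f{\ud\nu^\psi}{|\xi|^2},$$
$$\tfrac12\int e^{-4\psi_0}(Y(\chi\om_n))^2\,\mathrm{dVol}_{g_0} \longrightarrow \tfrac12\int e^{-4\psi_0}(Y(\chi\om_0))^2\,\mathrm{dVol}_{g_0} + \int_{S^*\mathbb R^{2+1}}(Y^\alp\xi_\alp)^2\,\f{\tfrac12 e^{-4\psi_0}\ud\nu^\om}{|\xi|^2}.$$
Adding these, using once more $\chi\equiv 1$ on $\mathrm{supp}(Y)$ to rewrite $Y(\chi\psi_0)=Y\psi_0$ and $Y(\chi\om_0)=Y\om_0$, and the definition \eqref{def:dnu} of $\ud\nu$, yields the asserted identity.

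The hard part here is essentially nonexistent: the genuine content already resides in Proposition~\ref{prop:T.compute.1} and Corollary~\ref{cor:nu}. The only two points that need care are purely bookkeeping — (i) that the exchange of $\mathrm{dVol}_{g_{n_k}}$ and of the weight $e^{-4\psi_{n_k}}$ for their limits rests only on locally uniform convergence together with a uniform $L^1$ bound on the quadratic quantities, and (ii) that multiplication by a smooth compactly supported function is a bona fide zeroth order pseudo-differential operator with $\xi$-independent symbol, so that Corollary~\ref{cor:nu} may be invoked separately for each pair $(\alp,\bt)$ and then summed.
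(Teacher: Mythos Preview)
Your proposal is correct and follows essentially the same approach as the paper's proof: first use the uniform convergence of $\psi_{n_k}$ and $g_{n_k}$ to replace $e^{-4\psi_{n_k}}$ and $\mathrm{dVol}_{g_{n_k}}$ by their limits, then invoke Corollary~\ref{cor:nu} (using $\chi\equiv 1$ on $\mathrm{supp}(Y)$) to compute the quadratic limits, and finally combine with Proposition~\ref{prop:T.compute.1}. Your write-up simply spells out in more detail how Corollary~\ref{cor:nu} is applied (via multiplication operators for each index pair), which the paper leaves implicit.
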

\begin{proof}
Since $\psi_k$ and $g_k$ converge uniformly on compact sets, they in particular converge uniformly on $\Omega$. Therefore, taking $n_k$ as the subsequence in Proposition~\ref{prop:T.compute.1}, 
\begin{equation*}
\begin{split}
\lim_{k\to +\infty} \int_{\mathbb R^{2+1}} [2 (Y \psi_{n_k})^2 + \f 12 e^{-4\psi_{n_k}} (Y\om_{n_k})^2] \, \mathrm{dVol}_{g_{n_k}} 
= &\: \lim_{k\to +\infty} \int_{\mathbb R^{2+1}} [2 (Y \psi_{n_k})^2 + \f 12 e^{-4\psi_0} (Y\om_{n_k})^2] \, \mathrm{dVol}_{g_0}.
\end{split}
\end{equation*}
Now using the fact that $\chi \equiv 1$ on the support of $Y$ and Corollary~\ref{cor:nu}, we obtain
\begin{equation*}
\begin{split}
&\: \lim_{k\to +\infty} \int_{\mathbb R^{2+1}} [2 (Y \psi_{n_k})^2 + \f 12 e^{-4\psi_0} (Y\om_{n_k})^2] \, \mathrm{dVol}_{g_0} \\
= &\: \int_{\mathbb R^{2+1}} [2 (Y \psi_0)^2 + \f 12 e^{-4\psi_0} (Y\om_0)^2] \, \mathrm{dVol}_{g_0} + \int_{S^*\mathbb R^{2+1}} (Y^\alp \xi_\alp)^2 \, \ud\nu.
\end{split}
\end{equation*}
The desired conclusion therefore follows from Proposition~\ref{prop:T.compute.1}. \qedhere
\end{proof}

\subsection{Conclusion of the proof of Theorem~\ref{thm:prelim}}\label{sec:conclusion.prelim}

We now conclude the proof of Theorem~\ref{thm:prelim}:
\begin{proof}[Proof of Theorem~\ref{thm:prelim}]
First, $\ud \nu$ is supported on $\{(x,\xi) \in S^*\mathcal M: g^{-1}_0(\xi,\xi) = 0\}$ in view of Proposition~\ref{prop:psiom.localized}.

To check that the three equations in \eqref{eq:U1vac.vlasov.again} are verified, note that the first two equations are verified due to Proposition~\ref{prop:limitwave} (and the fact that $\chi = 1$ on $\Omega$), while the last equation is verified thanks to Proposition~\ref{prop:T.compute.2}. 

Finally, using Proposition~\ref{prop:compact.reduction}, we have completed the proof of Theorem~\ref{thm:prelim}. \qedhere
\end{proof}

\section{Beginning of the proof of Theorem~\ref{thm:main}}\label{sec:preliminaries}

\textbf{From now on and for the remainder of the paper, we prove Theorem~\ref{thm:main}.} We will therefore work under the assumptions of Theorem~\ref{thm:main}. The main goal from now on will be to show that with the additional assumption (5) of Theorem~\ref{thm:main}, we can show moreover that the measure $\ud\nu$ satisfies a transport equation on $\Omega$ (where we have used the reduction in Proposition~\ref{prop:compact.reduction}).

\textbf{From now on, unless otherwise stated, let $A$ be a zeroth order pseudo-differential operator with real symbol $a(x,\xi)$. Assume moreover that $a(x,\xi)$ is supported in $S^*\Omega$.}

We introduce now conventions that we will use for the remainder of the paper. We use the convention that $\lambda_n$ refers to the sequence of constants in assumption (5) of Theorem~\ref{thm:main} \textbf{with $K = \overline{\Omega'''}$} (cf.~Section~\ref{sec:compact.reduction}).

From now on, we use the convention that for two non-negative quantities $B_1$ and \textbf{$B_2$, $B_1\ls B_2$ means there exists $C>0$ depending potentially on $T$, $\psi_0$, $\om_0$, $g_0$, $\Omega$, $\Omega'$, $\Omega''$, $\Omega'''$ and $A$, but \underline{independent of $n$}, such that 
$$B_1\leq C B_2.$$}

\textbf{We will also use the big-O and small-o conventions,} i.e.~for a non-negative quantity $B$ (depending on $n$) and a positive function $f(n)$ of $n$, $B=O(f(n))$ means $B\ls C \cdot f(n)$, while $B=o(f(n))$ means $\f{B}{f(n)}\to 0$ and $n\to +\infty$.

In this section, we carry out various preliminary steps. In \textbf{Section~\ref{sec:dmetric}}, we begin with some convergence estimates for the derivatives of the metric which follow from the elliptic equations (and are stronger than \eqref{assumption.1}). In \textbf{Section~\ref{sec:freeze.coeff}}, we discuss the freezing of coefficients, which will be used in various places later. Finally, in \textbf{Section~\ref{sec:reduction}}, we discuss a reduction allowing us to consider only a subclass of pseudo-differential operators $A$ later.

\subsection{Convergence of the derivatives of the metric components}\label{sec:dmetric}

\begin{proposition}\label{prop:spatial.imp}
Let $\widetilde{\chi}$ be as in \eqref{def:tildechi}. 
$$\| \rd_i(\widetilde{\chi}(\gamma_n-\gamma_0))\|_{L^\i} + \| \rd_i(\widetilde{\chi}(\bt^j_n-\bt^j_0))\|_{L^\i} + \| \rd_i(\widetilde{\chi}(N_n-N_0))\|_{L^\i} \ls \lambda_n^{\f 12}.$$
\end{proposition}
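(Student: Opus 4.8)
The plan is to exploit the fact that the gauge condition forces the metric components to satisfy the \emph{elliptic} equations \eqref{elliptic.wave.2}--\eqref{elliptic.wave.4}, whose right-hand sides are bounded \emph{uniformly in $n$}. Thus, writing $w_n:=\widetilde{\chi}(\mathfrak g_n-\mathfrak g_0)$ for $\mathfrak g\in\{\gamma,N,\bt^j\}$ (or rather its un-cutoff version $\mathfrak g_n-\mathfrak g_0$ inside $\Omega'''$), we will have the two ingredients $\|\mathfrak g_n-\mathfrak g_0\|_{L^\i(\Omega''')}\ls\lambda_n$ (from \eqref{assumption.0}) and $\|\Delta(\mathfrak g_n-\mathfrak g_0)\|_{L^\i(\Omega''')}\ls 1$, and the estimate will follow by an interpolation performed on spatial balls of radius $\sim\lambda_n^{1/2}$. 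Note that the claimed estimate only involves spatial derivatives $\rd_i$, which is precisely what this elliptic-equation argument can deliver.

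First I would record the pointwise bounds. By \eqref{assumption.1} all of $|\nabla\psi_n|$, $|\nabla\om_n|$, $|(e_0)_n\psi_n|$, $|(e_0)_n\om_n|$, $|\nabla\gamma_n|$, $|\nabla N_n|$, $|\nabla\bt_n^i|$ are $\ls 1$ on $\overline{\Omega'''}$; by \eqref{elliptic.4} the same holds for $|H_n|$ (which is $\mathfrak L\bt_n$ up to factors of $e^{2\gamma_n}/N_n$, all bounded since $\log N_n\to\log N_0$ uniformly and $N_0>0$ smooth, so $N_n\geq c>0$ on $\overline{\Omega'''}$ for $n$ large). Feeding these into \eqref{elliptic.wave.2}, \eqref{elliptic.wave.3} and \eqref{elliptic.wave.4} gives
$$\|\Delta\gamma_n\|_{L^\i(\Omega''')}+\|\Delta N_n\|_{L^\i(\Omega''')}+\|\Delta\bt_n^j\|_{L^\i(\Omega''')}\ls 1,$$
and since $g_0$ is smooth the same bounds hold with $g_0$ in place of $g_n$; hence $\|\Delta(\mathfrak g_n-\mathfrak g_0)\|_{L^\i(\Omega''')}\ls 1$.

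The core step is then a local estimate carried out slice by slice in $t$ (recall $\Delta$ is the \emph{spatial} Laplacian). Fix $x_0\in\mathrm{supp}(\widetilde{\chi})$; since this is a compact subset of the open set $\Omega'''$, for $n$ large the spatial disc $B:=B_R(x_0)$ with $R:=\lambda_n^{1/2}$ lies in $\Omega'''$. Using Green's representation for $-\Delta$ on $B$ with Dirichlet data, one writes $\mathfrak g_n-\mathfrak g_0=P[(\mathfrak g_n-\mathfrak g_0)|_{\rd B}]+G_B\ast\Delta(\mathfrak g_n-\mathfrak g_0)$ with $P$ the Poisson integral and $G_B$ the Green's function, differentiates at the centre $x_0$, and bounds the two pieces using $|\nabla_x(\text{Poisson kernel})|\ls R^{-2}$ together with the maximum principle, and $\int_B|\nabla_x G_B(x_0,\cdot)|\ls R$; this gives
$$|\nabla(\mathfrak g_n-\mathfrak g_0)(x_0)|\ls R^{-1}\|\mathfrak g_n-\mathfrak g_0\|_{L^\i(\Omega''')}+R\,\|\Delta(\mathfrak g_n-\mathfrak g_0)\|_{L^\i(\Omega''')}\ls R^{-1}\lambda_n+R=2\lambda_n^{1/2}.$$
(Equivalently: split $\mathfrak g_n-\mathfrak g_0=h+p$ on $B$ with $h$ harmonic and $h|_{\rd B}=(\mathfrak g_n-\mathfrak g_0)|_{\rd B}$, use the interior gradient estimate $\|\nabla h\|_{L^\i(B_{R/2})}\ls R^{-1}\|h\|_{L^\i(B)}$ with the maximum principle, and a rescaled $W^{2,q}$--Sobolev estimate $\|\nabla p\|_{L^\i(B)}\ls R\|\Delta(\mathfrak g_n-\mathfrak g_0)\|_{L^\i(B)}$.) Since $x_0\in\mathrm{supp}(\widetilde{\chi})$ was arbitrary, $\|\widetilde{\chi}\nabla(\mathfrak g_n-\mathfrak g_0)\|_{L^\i}\ls\lambda_n^{1/2}$; combined with $\|(\nabla\widetilde{\chi})(\mathfrak g_n-\mathfrak g_0)\|_{L^\i}\ls\lambda_n\ls\lambda_n^{1/2}$ and the product rule, the proposition follows (the finitely many small $n$ being absorbed into the implicit constant). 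The one genuinely delicate point — and the main obstacle — is that the naive interpolation $\|\nabla w\|_{L^\i}^2\ls\|w\|_{L^\i}\|\nabla^2 w\|_{L^\i}$ is useless here, since \eqref{assumption.2} only provides $\|\nabla^2(\mathfrak g_n-\mathfrak g_0)\|_{L^\i}\ls\lambda_n^{-1}$; one must instead use that the \emph{Laplacian} (not the full Hessian) of $\mathfrak g_n-\mathfrak g_0$ is $O(1)$, which is exactly what the elliptic gauge supplies, and exploit this via the scaling $R\sim\lambda_n^{1/2}$. Everything else is standard elliptic theory.
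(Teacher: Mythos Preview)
Your proof is correct and follows the same strategic outline as the paper's: use the elliptic equations \eqref{elliptic.wave.2}--\eqref{elliptic.wave.4} to obtain $\|\Delta(\mathfrak g_n-\mathfrak g_0)\|_{L^\infty(\Omega''')}\ls 1$, and then interpolate this against the smallness $\|\mathfrak g_n-\mathfrak g_0\|_{L^\infty}\ls\lambda_n$ at the critical scale $\lambda_n^{1/2}$.

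The implementation differs, however. The paper works in (spatial) Fourier space: for each fixed $t$ it splits $\widetilde\chi(u_n-u_0)$ into a low-frequency piece $\{|\xi_i|\ls\lambda_n^{-1/2}\}$ (handled by Bernstein's inequality, gaining $\lambda_n^{-1/2}$ against the $L^\infty$ bound) and a high-frequency piece (handled by a Littlewood--Paley sum, writing $\rd_j=\rd_j\Delta^{-1}\Delta$ and summing the $2^{-k}$ gain over dyadic shells $2^k\gtrsim\lambda_n^{-1/2}$). You instead work in physical space on balls of radius $R=\lambda_n^{1/2}$, splitting $\mathfrak g_n-\mathfrak g_0$ into a harmonic part (interior gradient estimate gives the $R^{-1}\lambda_n$ contribution) and a zero-boundary particular solution (scaled $W^{2,q}$--Sobolev gives the $R$ contribution). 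These two arguments are Fourier duals of one another; your version is arguably more elementary (no Littlewood--Paley machinery) and sidesteps the need to first commute $\Delta$ past $\widetilde\chi$, at the cost of requiring the balls $B_R(x_0)$ to sit inside $\Omega'''$, which you correctly note holds for $n$ large since $\mathrm{supp}(\widetilde\chi)\Subset\Omega'''$.
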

\begin{proof}

In view of the elliptic equations \eqref{elliptic.wave.2}, \eqref{elliptic.wave.3} and \eqref{elliptic.wave.4} satisfied by $\gamma$, $N$ and $\bt^j$, it suffices to show that for smooth $u_n,\,u_0:\mathbb R^{3+1} \to \mathbb R$ ($n\in \mathbb N$) such that 
\begin{equation}\label{spatial.imp.1}
\|\widetilde{\chi}(u_n-u_0)\|_{L^\i}\ls \lambda_n
\end{equation} and 
\begin{equation}\label{spatial.imp.2}
\|\Delta(\widetilde{\chi}(u_n-u_0))\|_{L^\i}\ls 1,
\end{equation}
we have $\|\rd_i(\widetilde{\chi}(u_n-u_0))\|_{L^\i} \ls \lambda_n^{\f 12}$. {This is a standard interpolation estimates; we include below a proof for completeness.}

Let $\Theta:[0,+\infty)\to \mathbb R$ be a non-negative smooth cutoff function such that 
$$\Theta \geq 0,\quad \Theta(x) = 1 \mbox{ for $x\in [0,1]$},\quad \Theta(x) = 0 \mbox{ for $x\geq 2$}.$$
For every fixed\footnote{Note that in fact for $t\notin [0,T]$, the term \eqref{spatial.imp.decomp} vanishes.} $t\in \mathbb R$, we take the spatial Fourier transform $\mathcal F_{\mathrm{spa}}$ and then decompose into a low-spatial-frequency part and a high-spatial-frequency part as follows:
\begin{equation}\label{spatial.imp.decomp}
\widetilde{\chi} (u_n - u_0)(t,\xi_i) = \underbrace{\mathcal F_{\mathrm{spa}}^{-1}\Theta(\lambda_n^{\f 12}|\xi_i|)\mathcal F_{\mathrm{spa}}(\widetilde{\chi} (u_n - u_0))(t,\xi_i)}_{=:\mathrm{I}} + \underbrace{\mathcal F_{\mathrm{spa}}^{-1}(1-\Theta(\lambda_n^{\f 12}|\xi_i|))\mathcal F_{\mathrm{spa}}(\widetilde{\chi} (u_n - u_0))(t,\xi_i)}_{=:\mathrm{II}}.
\end{equation}

For the term $\mathrm{I}$, we apply Bernstein's inequality and \eqref{spatial.imp.1} to obtain 
$$\left\|\rd_j\left(\mathcal F_{\mathrm{spa}}^{-1}\left(\Theta(\lambda_n^{\f 12}|\xi_i|)\mathcal F_{\mathrm{spa}}(\widetilde{\chi} (u_n - u_0))(t,\xi_i)\right)\right)\right\|_{L^\i_x}(t) \ls \lambda_n^{-\f 12} \|\widetilde{\chi} (u_n - u_0)\|_{L^\i_x}(t) \ls \lambda_n^{-\f 12} \lambda_n = \lambda_n^{\f 12}.$$
Taking supremum over $t$ implies the desired estimate for this term.

For the term $\mathrm{II}$, define first $P_{\mathrm{spa},k}$ the \underline{spatial} standard Littlewood--Paley projection to spatial frequency $|\xi_i|\sim 2^k$. {Denote the corresponding Fourier multiplier by $m_{LP}(2^{-k}\xi_i)$ where $m_{LP}$} is a radial smooth {spatial} function supported in an annulus. 

Now note that for each fixed $t\in \mathbb R$ and for each Littlewood--Paley piece, 
\begin{equation*}
\begin{split}
&\: \left\|\rd_j\left(\mathcal F_{\mathrm{spa}}^{-1}\left((1-\Theta(\lambda_n^{\f 12}|\xi_i|))\mathcal F_{\mathrm{spa}} P_{\mathrm{spa},k}(\widetilde{\chi} (u_n - u_0))(t,\xi_i)\right)\right)\right\|_{L^\i_x}(t)\\
\ls & \:\left\|\mathcal F_{\mathrm{spa}}^{-1} \left( \frac{i\xi_{{j}}}{|\xi|^2}(1-\Theta(\lambda_n^{\f 12}|\xi_i|))
{m_{LP}(2^{-k}\xi_i)}\mathcal F_{\mathrm{spa}} \left(\Delta(\widetilde{\chi} (u_n - u_0)) \right)(t,\xi_i)\right)
\right\|_{L^\i_x}(t)\\
\ls & \:\left\|\mathcal F_{\mathrm{spa}}^{-1} \left( \frac{i\xi_{{j}}}{|\xi|^2}
{m_{LP}(2^{-k}\xi_i)}\right)\right\|_{L^1_x}\left\|\Delta(\widetilde{\chi} (u_n - u_0))
\right\|_{L^\i_x}(t)\\
&\:+\left\|\mathcal F_{\mathrm{spa}}^{-1} \left( \frac{i\xi_{{j}}}{|\xi|^2}
{m_{LP}(2^{-k}\xi_i)}\right)\right\|_{L^1_x}\left\|\mathcal F_{\mathrm{spa}}^{-1} \left( \Theta(\lambda_n^{\f 12}|\xi_i|)\mathcal F_{\mathrm{spa}} \left(\Delta(\widetilde{\chi} (u_n - u_0)) \right)(t,\xi_i)\right)
\right\|_{L^\i_x}(t)\\
\ls & \:2^{-k}\left(1+\|\mathcal F_{\mathrm{spa}}^{-1} \Theta(\lambda_n^{\f 12}|\xi_i|)\|_{L^1_x}\right)\left\|\Delta(\widetilde{\chi} (u_n - u_0))
\right\|_{L^\i_x}(t)\\\\
\ls &\: 2^{-k}\left\|\Delta(\widetilde{\chi} (u_n - u_0)) \right\|_{L^\i_x}(t) {\ls 2^{-k}},
\end{split}
\end{equation*}
where in the last estimate we used \eqref{spatial.imp.2}.

Now, summing up all the Littlewood--Paley pieces with $2^k\gtrsim \lambda_n^{-\f 12}$, we obtain that for every fixed $t\in \mathbb R$,
$$\left\|\rd_j\left(\mathcal F_{\mathrm{spa}}^{-1}\left((1-\Theta(\lambda_n^{\f 12}|\xi_i|))\mathcal F_{\mathrm{spa}}(\widetilde{\chi} (u_n - u_0))(t,\xi_i)\right)\right)\right\|_{L^\i_x}(t) \ls \sum_{k: 2^k \gtrsim \lambda_n^{-\f 12}} 2^{-k} \ls \lambda_n^{\f 12}.$$
Taking supremum over $t$ then implies the desired estimate.
\qedhere
\end{proof}

\begin{proposition}\label{prop:dtgamma.imp}
Let $\widetilde{\chi}$ be as in \eqref{def:tildechi}. 
$$\| \rd_t(\widetilde{\chi}(\gamma_n-\gamma_0))\|_{L^\i} \ls \lambda_n^{\f 12}.$$
\end{proposition}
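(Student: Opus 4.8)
The plan is to deduce the improved estimate for $\rd_t(\widetilde\chi(\gamma_n-\gamma_0))$ from the gauge condition \eqref{trace.free}, which expresses $\rd_t\gamma$ in terms of quantities whose spatial derivatives (and which themselves) are already controlled. First I would recall that \eqref{trace.free} gives
$$\rd_t\gamma_n = \bt^i_n\rd_i\gamma_n + \f12\rd_i\bt^i_n,\qquad \rd_t\gamma_0 = \bt^i_0\rd_i\gamma_0 + \f12\rd_i\bt^i_0,$$
so that
$$\rd_t(\gamma_n-\gamma_0) = \bt^i_n\rd_i(\gamma_n-\gamma_0) + (\bt^i_n-\bt^i_0)\rd_i\gamma_0 + \f12\rd_i(\bt^i_n-\bt^i_0).$$
Each term on the right is now a spatial derivative of a metric component (times a bounded factor) or a difference of metric components times a bounded factor. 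By assumption (3), $\bt^i_n\to\bt^i_0$ uniformly on compact sets, so $\|\bt^i_n-\bt^i_0\|_{L^\i(\overline{\Omega'''})}\ls\lambda_n$ by \eqref{assumption.0}; and by Proposition~\ref{prop:spatial.imp}, $\|\rd_i(\widetilde\chi(\bt^i_n-\bt^i_0))\|_{L^\i}\ls\lambda_n^{\f12}$ and $\|\rd_i(\widetilde\chi(\gamma_n-\gamma_0))\|_{L^\i}\ls\lambda_n^{\f12}$. The only mildly delicate point is to handle the cutoff $\widetilde\chi$ carefully: I would multiply the identity above by $\widetilde\chi$ and commute $\widetilde\chi$ past the derivatives, picking up commutator terms of the form $(\rd_i\widetilde\chi)(\bt^i_n-\bt^i_0)$ and $(\rd_i\widetilde\chi)(\gamma_n-\gamma_0)$, which are $O(\lambda_n)$ by \eqref{assumption.0} (since $\widetilde\chi$ and its derivatives are fixed smooth functions and $\overline{\Omega'''}$ is the relevant compact set).

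Concretely, I would write
$$\rd_t(\widetilde\chi(\gamma_n-\gamma_0)) = (\rd_t\widetilde\chi)(\gamma_n-\gamma_0) + \widetilde\chi\,\rd_t(\gamma_n-\gamma_0),$$
estimate the first term by $O(\lambda_n)$ using \eqref{assumption.0}, and for the second substitute the expression above:
$$\widetilde\chi\,\rd_t(\gamma_n-\gamma_0) = \bt^i_n\,\widetilde\chi\,\rd_i(\gamma_n-\gamma_0) + (\bt^i_n-\bt^i_0)\,\widetilde\chi\,\rd_i\gamma_0 + \f12\,\widetilde\chi\,\rd_i(\bt^i_n-\bt^i_0).$$
The middle term is $O(\lambda_n)$ since $\|\bt^i_n-\bt^i_0\|_{L^\i(\overline{\Omega'''})}\ls\lambda_n$ and $\rd_i\gamma_0$ is bounded on the (fixed) support of $\widetilde\chi$. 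For the first and third terms I write $\widetilde\chi\,\rd_i(\gamma_n-\gamma_0) = \rd_i(\widetilde\chi(\gamma_n-\gamma_0)) - (\rd_i\widetilde\chi)(\gamma_n-\gamma_0)$ and similarly for $\bt^i$, so that, using $\|\bt^i_n\|_{L^\i}\ls 1$ (from \eqref{assumption.1} and the uniform convergence), Proposition~\ref{prop:spatial.imp} and \eqref{assumption.0}, both are bounded by $\ls\lambda_n^{\f12}$. Summing the contributions gives $\|\rd_t(\widetilde\chi(\gamma_n-\gamma_0))\|_{L^\i}\ls\lambda_n^{\f12}$, as claimed.

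I do not expect any genuine obstacle here: this proposition is essentially a bookkeeping consequence of the gauge condition \eqref{trace.free} combined with the already-established spatial improvement of Proposition~\ref{prop:spatial.imp}. The only thing to be careful about is that the gauge condition is a pointwise identity valid where the metrics are defined, and that all the implied constants depend only on the fixed data (including $\widetilde\chi$ and $\overline{\Omega'''}$) and not on $n$, which is exactly the convention fixed at the start of Section~\ref{sec:preliminaries}. If anything, the mildest subtlety is making sure $\rd_i\gamma_0$ is bounded on $\mathrm{supp}(\widetilde\chi)$, which holds since $\gamma_0$ is smooth and $\mathrm{supp}(\widetilde\chi)\subset\Omega'''$ is precompact.
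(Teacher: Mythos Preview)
Your proof is correct and follows exactly the approach indicated in the paper: the gauge condition \eqref{trace.free} expresses $\rd_t\gamma$ in terms of spatial derivatives of $\gamma$ and $\bt^i$, so the estimate reduces immediately to \eqref{assumption.0} and Proposition~\ref{prop:spatial.imp}. Your careful handling of the cutoff commutators just makes explicit what the paper's one-line proof leaves implicit.
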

\begin{proof}
This is an immediate consequence of \eqref{trace.free} and the estimates in \eqref{assumption.0} and Proposition~\ref{prop:spatial.imp}. \qedhere
\end{proof}

\subsection{Freezing coefficients}\label{sec:freeze.coeff}

One trick that we will repeatedly use is to to freeze coefficients. This will be used in Section~\ref{sec:reduction}, but will again be useful when we capture some trilinear cancellations; see already Section~\ref{sec:elliptic.wave.tri}. In this subsection, we will introduce some relevant notations and prove some basic estimates.

\textbf{Fix some $\ep_0\in (\f 16, \f 12)$} (for the remainder of the paper). For each $n\in \mathbb N$, choose finitely many (spacetime) balls of radius $\lambda_n^{\ep_0}$ (with respect to the $(t,x^1,x^2)$ coordinates), labeled by $\{B_\alp\}_{\alp}$ so that $\Omega'\subset \cup_{\alp} B_\alp \subset \overline{\cup_{\alp} B_\alp} \subset \Omega''$ (cf.~Section~\ref{sec:compact.reduction}). Note that this gives $O(\lambda_n^{-3\ep_0})$ balls, each with volume $O(\lambda_n^{3\ep_0})$. Introduce a partition of unity $\{\zeta_\alp^3\}_\alp$ adapted to these balls so that $\mathrm{supp}(\zeta_\alp) \subset B_\alp$ and
$$\sum_\alp \zeta_\alp^3 = 1\quad \mbox{on $\Omega'$}.$$
Due to the choice of $B_\alp$, $\zeta_\alp$ can be chosen so that for every $r\in [1,2]$,
\begin{equation}\label{zeta.prop}
\|\rd^k \zeta_\alp^r\|_{L^\infty} \ls \lambda_n^{-k\ep_0},\quad k=0,1,2,3.
\end{equation}

The following is an immediate consequence of mean value theorem:
\begin{proposition}\label{prop:constants}
Let $b:\Omega'' \to \mathbb R$ be a $C^1$ function. Then for every fixed $n\in \mathbb N$ and fixed $\alp$ as above, there exist \underline{constants} $b_{c,\alp}$ (depending on $\alp$) such that (with implicit constants depending on the $C^1$ norm of $b$ but independent of $n$ or $\alp$)
$$\|b-b_{c,\alp}\|_{L^\infty(B_\alp)} \ls \lambda_n^{\ep_0}.$$
Moreover, the constants satisfy
$$\sup_\alp |b_{c,\alp}|\ls 1.$$

In particular, for every $\alp$, there exist uniformly bounded \underline{constants} $N_{c,\alp}$, $\bt_{c,\alp}^i$ and $\gamma_{c,\alp}$ (depending on $\alp$) such that
$$\|\log N_0-\log N_{c,\alp}\|_{L^\infty(B_\alp)} + \|\bt_0^i-\bt^i_{c,\alp}\|_{L^\infty(B_\alp)} + \|\gamma_0 - \gamma_{c,\alp}\|_{L^\infty(B_\alp)} \ls \lambda_n^{\ep_0}.$$
\end{proposition}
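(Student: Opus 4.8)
The plan is to take each constant $b_{c,\alp}$ to be the value of $b$ at the center of $B_\alp$ and to control the oscillation of $b$ over $B_\alp$ by the fundamental theorem of calculus, using that each $B_\alp$ is a convex Euclidean ball (in the $(t,x^1,x^2)$-coordinates) of radius $\lambda_n^{\ep_0}$ sitting inside $\Omega''$.

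First I would fix $n$ and $\alp$, let $x_\alp$ denote the center of $B_\alp$, and set $b_{c,\alp}:=b(x_\alp)$. Since $B_\alp$ is convex and contained in $\Omega''$, for any $x\in B_\alp$ the segment $[x_\alp,x]$ lies in $B_\alp\subset\Omega''$, so that
$$b(x)-b_{c,\alp}=\int_0^1 (\nab b)\big(x_\alp+s(x-x_\alp)\big)\cdot (x-x_\alp)\,\ud s,$$
which gives $|b(x)-b_{c,\alp}|\le \|\nab b\|_{L^\i(B_\alp)}\,|x-x_\alp|\le \|\nab b\|_{L^\i(B_\alp)}\,\lambda_n^{\ep_0}$. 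Taking the supremum over $x\in B_\alp$ yields $\|b-b_{c,\alp}\|_{L^\i(B_\alp)}\ls \lambda_n^{\ep_0}$; the implicit constant is $\sup_n \sup_\alp \|\nab b\|_{L^\i(B_\alp)}$, which is finite and depends only on the $C^1$ geometry of $b$ — indeed, one may and does choose the balls so that $\cup_\alp B_\alp$ lies in a single fixed open set $U$ with $\overline{\Omega'}\subset U$ and $\overline U\subset\Omega''$ (possible since $\lambda_n\to 0$, adjusting the finitely many small-$n$ cases into the constant), so the bound is $\|\nab b\|_{L^\i(U)}<+\i$, independent of $n$ and $\alp$. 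For the uniform bound on the constants themselves, note $|b_{c,\alp}|=|b(x_\alp)|\le \|b\|_{L^\i(U)}\ls 1$, again uniformly in $n$ and $\alp$.

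For the ``in particular'' statement I would simply apply the above with $b\in\{\log N_0,\bt^i_0,\gamma_0\}$: these are smooth on $\mathcal M$ (since $g_0$ is a smooth metric of the form \eqref{g.form}, so $N_0>0$ and $\log N_0$ is smooth), hence have finite $C^1$ norm on $\overline{\Omega''}$, so the hypotheses hold and one obtains the constants $N_{c,\alp}$ (either by applying the estimate directly to $\log N_0$, or by setting $N_{c,\alp}:=e^{(\log N_0)_{c,\alp}}$ and using that $x\mapsto e^x$ is Lipschitz on the relevant bounded range), $\bt^i_{c,\alp}$, $\gamma_{c,\alp}$, with the asserted properties. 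There is no genuine obstacle; the only points requiring a word of care are the convexity of the balls (so the mean value estimate runs along segments that stay inside $\Omega''$) and the uniformity in $n,\alp$, which holds precisely because all constants are controlled by the fixed $C^1$ norm of $b$ on a compact subset of $\Omega''$.
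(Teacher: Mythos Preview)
Your proof is correct and is essentially the paper's approach spelled out in detail: the paper simply records this as ``an immediate consequence of mean value theorem,'' and your argument via the fundamental theorem of calculus along segments in the convex ball $B_\alp$ is exactly that. Your extra care about uniformity in $n,\alp$ is slightly more than needed, since the setup already arranges $\overline{\cup_\alp B_\alp}\subset\Omega''$ with $\Omega''$ precompact, but nothing you wrote is wrong.
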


\begin{proposition}\label{prop:psi.localized}
For every $\alp$, every $r\in [1,2]$ and every $p\in [1,+\infty]$,
\begin{align}
\|\rd^k (\zeta_\alp^r\chi(\psi_n - \psi_0))\|_{L^p}\ls &\: \lambda_n^{1-k+\f{3\ep_0}{p}},\quad k=0,1,2,3,\label{eq:psi.localized.1}\\
\|\rd_\mu (\zeta_\alp^r\chi(\psi_n - \psi_0)) - \zeta_\alp^r \rd_\mu(\chi(\psi_n - \psi_0))\|_{L^p} \ls &\: \lambda_n^{1+\ep_0(-1+\f 3p)},\label{eq:psi.localized.2}\\
\|\rd^2_{\mu\nu} (\zeta_\alp^r\chi(\psi_n - \psi_0)) - \zeta_\alp^r \rd^2_{\mu\nu}(\chi(\psi_n - \psi_0))\|_{L^p} \ls &\: \lambda_n^{\ep_0(-1+\f 3p)}\label{eq:psi.localized.3}.
\end{align}
Similarly, for every $\alp$, every $r\in [1,2]$ and every $p\in [1,+\infty]$,
\begin{align*}
\|\rd^k (\zeta_\alp^r\chi(\om_n - \om_0))\|_{L^p}\ls &\: \lambda_n^{1-k+\f{3\ep_0}{p}},\quad k=0,1,2,3,\\
\|\rd_\mu (\zeta_\alp^r\chi(\om_n - \om_0)) - \zeta_\alp^r \rd_\mu(\chi(\om_n - \om_0))\|_{L^p} \ls &\: \lambda_n^{1+\ep_0(-1+\f 3p)},\\
\|\rd^2_{\mu\nu} (\zeta_\alp^r\chi(\om_n - \om_0)) - \zeta_\alp^r \rd^2_{\mu\nu}(\chi(\om_n - \om_0))\|_{L^p} \ls &\: \lambda_n^{\ep_0(-1+\f 3p)}.
\end{align*}
\end{proposition}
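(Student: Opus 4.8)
The plan is to reduce the three estimates to their $p=\i$ cases and then read them off from the Leibniz rule together with the hypotheses of Theorem~\ref{thm:main}. Write $\varphi_n:=\psi_n-\psi_0$ for brevity. The key structural observation is that $\zeta_\alp^r\chi\varphi_n$ (and each of its derivatives, as well as the two ``commutator'' quantities appearing in \eqref{eq:psi.localized.2} and \eqref{eq:psi.localized.3}) is supported in the ball $B_\alp$, which has volume $O(\lambda_n^{3\ep_0})$; hence for any $f$ supported in $B_\alp$ one has $\|f\|_{L^p}\ls \lambda_n^{\f{3\ep_0}{p}}\|f\|_{L^\i}$ (with the convention $\lambda_n^0=1$ when $p=\i$). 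Thus it suffices to prove the $L^\i$ versions: $\|\rd^k(\zeta_\alp^r\chi\varphi_n)\|_{L^\i}\ls\lambda_n^{1-k}$ for $k=0,1,2,3$; $\|\rd_\mu(\zeta_\alp^r\chi\varphi_n)-\zeta_\alp^r\rd_\mu(\chi\varphi_n)\|_{L^\i}\ls\lambda_n^{1-\ep_0}$; and $\|\rd^2_{\mu\nu}(\zeta_\alp^r\chi\varphi_n)-\zeta_\alp^r\rd^2_{\mu\nu}(\chi\varphi_n)\|_{L^\i}\ls\lambda_n^{-\ep_0}$.

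First I would record the pointwise bounds needed. Since $\psi_0$ is smooth and $B_\alp\subset\Omega''\subset\overline{\Omega'''}$ while $\mathrm{supp}\,\chi\subset\Omega'\subset\overline{\Omega'''}$, assumptions \eqref{assumption.0}, \eqref{assumption.1}, \eqref{assumption.2} (applied with $K=\overline{\Omega'''}$, which is the set fixing the sequence $\lambda_n$ by our convention) give $\|\rd^k\varphi_n\|_{L^\i(\overline{\Omega'''})}\ls\lambda_n^{1-k}$ for $k=0,1,2,3,4$; also $\|\rd^k\chi\|_{L^\i}\ls1$ because $\chi$ is a fixed smooth compactly supported function; and \eqref{zeta.prop} gives $\|\rd^k\zeta_\alp^r\|_{L^\i}\ls\lambda_n^{-k\ep_0}$ for $k=0,1,2,3$ and $r\in[1,2]$.

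Next, for \eqref{eq:psi.localized.1} I would expand $\rd^k(\zeta_\alp^r\chi\varphi_n)$ by Leibniz into terms $(\rd^{k_1}\zeta_\alp^r)(\rd^{k_2}\chi)(\rd^{k_3}\varphi_n)$ with $k_1+k_2+k_3=k\le3$; each such term is $\ls\lambda_n^{-k_1\ep_0}\lambda_n^{1-k_3}=\lambda_n^{1-k_3-k_1\ep_0}\le\lambda_n^{1-k_3-k_1}\le\lambda_n^{1-k}$, using $\ep_0\in(0,1)$ and $\lambda_n\le1$. For \eqref{eq:psi.localized.2} and \eqref{eq:psi.localized.3}, the difference equals the sum of exactly those Leibniz terms in which at least one derivative lands on $\zeta_\alp^r$: in the first-order case this is $(\rd_\mu\zeta_\alp^r)\chi\varphi_n$, bounded in $L^\i$ by $\lambda_n^{-\ep_0}\lambda_n=\lambda_n^{1-\ep_0}$; in the second-order case it is $(\rd^2_{\mu\nu}\zeta_\alp^r)\chi\varphi_n+(\rd_\mu\zeta_\alp^r)\rd_\nu(\chi\varphi_n)+(\rd_\nu\zeta_\alp^r)\rd_\mu(\chi\varphi_n)$, whose $L^\i$ norm is $\ls\lambda_n^{1-2\ep_0}+\lambda_n^{-\ep_0}\ls\lambda_n^{-\ep_0}$ (using $\|\rd(\chi\varphi_n)\|_{L^\i}\ls\lambda_n+1\ls1$ and $\lambda_n^{1-2\ep_0}\le\lambda_n^{-\ep_0}$ since $\ep_0\le1$). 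Multiplying the resulting $L^\i$ bounds by $\lambda_n^{3\ep_0/p}$ yields \eqref{eq:psi.localized.1}--\eqref{eq:psi.localized.3}.

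Finally, the estimates for $\om_n-\om_0$ are proved verbatim, since assumptions \eqref{assumption.0}--\eqref{assumption.2} apply equally to $\om_n-\om_0$ and $\om_0$ is likewise smooth. I do not expect any genuine obstacle in this proposition: it is a routine application of the Leibniz rule and H\"older's inequality, and the only thing requiring a little care is keeping track of how many derivatives land on the cutoffs $\zeta_\alp^r$ and $\chi$ versus on $\varphi_n$, together with the elementary exponent inequalities, all of which use only $\ep_0\in(0,1)$ and $\lambda_n\le1$.
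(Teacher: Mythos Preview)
Your proof is correct and follows essentially the same approach as the paper: reduce to $L^\infty$ via the volume bound $|B_\alp|\ls\lambda_n^{3\ep_0}$, then expand by the Leibniz rule and use \eqref{assumption.0}--\eqref{assumption.2} together with \eqref{zeta.prop}, noting that for the commutator quantities at least one derivative must land on $\zeta_\alp^r$. The paper's proof is more terse but the logic is identical.
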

\begin{proof}
We will only discuss \eqref{eq:psi.localized.1}--\eqref{eq:psi.localized.3}; the remaining bounds can be derived in an identical manner.

\pfstep{Step~1: Proof of \eqref{eq:psi.localized.1}} By \eqref{assumption.0}, \eqref{assumption.1}, \eqref{assumption.2} and \eqref{zeta.prop}, and using $\ep_0<\f 12$, we have, for $k=0,1,2,3$,
$$\|\rd^k (\zeta_\alp^r \chi(\om_n - \om_0))\|_{L^\infty}\ls \lambda_n^{1-k}.$$
Now since $\mathrm{supp}(\rd^k (\zeta_\alp^r \chi(\psi_n - \psi_0)))\subset B_\alp$, and $B_\alp$ has volume $O(\lambda_n^{3\ep_0})$, we obtain the desired conclusion for all $p\in [1,+\infty]$.

\pfstep{Step~2: Proof of \eqref{eq:psi.localized.2} and \eqref{eq:psi.localized.3}} The proof of \eqref{eq:psi.localized.2} and \eqref{eq:psi.localized.3} is similar to that of \eqref{eq:psi.localized.1} except in this case we are computing the commutator so at least one derivative hits on $\zeta_\alp^r$. This results in the better bounds. \qedhere
\end{proof}

\begin{proposition}\label{prop:g.localized}
For every $\alp$, for $p\in [1,+\infty]$ and for $\mfg \in \{\log N,\, \bt^j, \, \gamma\}$,
$$\|\zeta_\alp(\mfg_n - \mfg_0)\|_{L^p}\ls \lambda_n^{1+\f{3\ep_0}{p}},\quad \|\rd (\zeta_\alp (\mfg_n - \mfg_0))\|_{L^p}\ls \lambda_n^{\f{3\ep_0}{p}},\quad \|\rd^2 (\zeta_\alp (\mfg_n-\mfg_0)\|_{L^p} \ls \lambda_n^{-1+\f{3\ep_0}{p}},$$
$$\|\rd_i (\zeta_\alp (\mfg_n - \mfg_0))\|_{L^p}\ls \lambda_n^{\f 12+\f{3\ep_0}{p}},\quad \|\Delta (\zeta_\alp (\mfg_n - \mfg_0))\|_{L^p} \ls \lambda_n^{\f{3\ep_0}{p}}.$$
\end{proposition}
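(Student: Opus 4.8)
The plan is to derive Proposition~\ref{prop:g.localized} as a direct consequence of the improved elliptic estimates for the metric components (Propositions~\ref{prop:spatial.imp} and \ref{prop:dtgamma.imp}) together with the crude estimates from assumption~(5), in exactly the same way that Proposition~\ref{prop:psi.localized} was derived. First I would record the pointwise bounds available for $\mfg_n-\mfg_0$ on $\Omega'''$: from \eqref{assumption.0} we have $\|\widetilde{\chi}(\mfg_n-\mfg_0)\|_{L^\infty}\ls\lambda_n$; from Proposition~\ref{prop:spatial.imp} (applied with $\mfg\in\{\log N,\bt^j,\gamma\}$, noting that $\log N_n-\log N_0$ inherits the same bound as $N_n-N_0$ since $N_n\to N_0$ uniformly and is bounded below) we have $\|\rd_i(\widetilde{\chi}(\mfg_n-\mfg_0))\|_{L^\infty}\ls\lambda_n^{1/2}$; and from \eqref{assumption.1} we have $\|\rd(\widetilde{\chi}(\mfg_n-\mfg_0))\|_{L^\infty}\ls 1$, while from \eqref{assumption.2}, $\|\rd^2(\widetilde{\chi}(\mfg_n-\mfg_0))\|_{L^\infty}\ls\lambda_n^{-1}$. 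One also needs that $\Delta(\widetilde{\chi}(\mfg_n-\mfg_0))$ is bounded in $L^\infty$; this follows either from the $L^\infty$ bound on $\rd^2$, or more cleanly from the elliptic equations \eqref{elliptic.wave.2}--\eqref{elliptic.wave.4} used inside the proof of Proposition~\ref{prop:spatial.imp}.

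Next, since $\zeta_\alp$ is supported in $B_\alp\subset\Omega''\subset\Omega'''$, on the support of $\zeta_\alp$ we have $\widetilde{\chi}\equiv 1$, so $\zeta_\alp(\mfg_n-\mfg_0)=\zeta_\alp\,\widetilde{\chi}(\mfg_n-\mfg_0)$ and the Leibniz rule lets me distribute derivatives between $\zeta_\alp$ and $\widetilde{\chi}(\mfg_n-\mfg_0)$. Using \eqref{zeta.prop} (with $r=1$, so $\|\rd^k\zeta_\alp\|_{L^\infty}\ls\lambda_n^{-k\ep_0}$) together with the bounds from the previous paragraph and $\ep_0<\tfrac12$, I obtain, for each fixed $n$ and $\alp$, the $L^\infty$ bounds: $\|\zeta_\alp(\mfg_n-\mfg_0)\|_{L^\infty}\ls\lambda_n$, $\|\rd(\zeta_\alp(\mfg_n-\mfg_0))\|_{L^\infty}\ls 1$, $\|\rd^2(\zeta_\alp(\mfg_n-\mfg_0))\|_{L^\infty}\ls\lambda_n^{-1}$, $\|\rd_i(\zeta_\alp(\mfg_n-\mfg_0))\|_{L^\infty}\ls\lambda_n^{1/2}$, and $\|\Delta(\zeta_\alp(\mfg_n-\mfg_0))\|_{L^\infty}\ls 1$; in the last case one notes that $\Delta(\zeta_\alp u)=\zeta_\alp\Delta u+2\rd_i\zeta_\alp\rd_i u+u\Delta\zeta_\alp$ and that each term is controlled using the spatial-derivative improvements (the worst term $2\rd_i\zeta_\alp\,\rd_i(\widetilde\chi(\mfg_n-\mfg_0))$ is $O(\lambda_n^{-\ep_0}\cdot\lambda_n^{1/2})=O(1)$ since $\ep_0<\tfrac12$, and $u\Delta\zeta_\alp=O(\lambda_n\cdot\lambda_n^{-2\ep_0})=O(1)$ since $\ep_0<\tfrac12$).

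Finally, to pass from $L^\infty$ to $L^p$, I would observe that every one of these functions is supported in the ball $B_\alp$, which has Lebesgue measure $O(\lambda_n^{3\ep_0})$ with respect to the $(t,x^1,x^2)$ coordinates. Hence $\|\cdot\|_{L^p}\ls\lambda_n^{3\ep_0/p}\|\cdot\|_{L^\infty}$ for every $p\in[1,+\infty]$ (with the convention that the exponent is $0$ when $p=\infty$), which converts each of the five $L^\infty$ bounds above into the claimed $L^p$ bounds. This is precisely Step~1 of the proof of Proposition~\ref{prop:psi.localized}, applied verbatim to $\mfg_n-\mfg_0$ instead of $\chi(\psi_n-\psi_0)$.

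I do not expect any serious obstacle here; the statement is purely bookkeeping once Propositions~\ref{prop:spatial.imp} and \ref{prop:dtgamma.imp} are in hand. The only mild subtlety worth a sentence in the write-up is the $\Delta(\zeta_\alp(\mfg_n-\mfg_0))$ estimate, where one must be careful to use the \emph{improved} spatial-derivative bound $\lambda_n^{1/2}$ (rather than the crude $O(1)$ bound) for the cross term $\rd_i\zeta_\alp\,\rd_i(\mfg_n-\mfg_0)$, and to invoke $\ep_0<\tfrac12$ so that $\lambda_n^{-\ep_0+1/2}\ls 1$ and $\lambda_n^{1-2\ep_0}\ls 1$; this is the one place where the hypothesis $\ep_0<\tfrac12$ is genuinely used, mirroring its role in Proposition~\ref{prop:psi.localized}.
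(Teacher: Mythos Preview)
Your proposal is correct and follows essentially the same route as the paper: first establish the $L^\infty$ bounds by combining the crude estimates \eqref{assumption.0}--\eqref{assumption.2} with the improved spatial bound from Proposition~\ref{prop:spatial.imp} and the elliptic equations \eqref{elliptic.wave.2}--\eqref{elliptic.wave.4} for the $\Delta$-term, then pass to $L^p$ via the volume $O(\lambda_n^{3\ep_0})$ of $B_\alp$. Your treatment of the cross terms in $\Delta(\zeta_\alp(\mfg_n-\mfg_0))$ using $\ep_0<\tfrac12$ matches the paper's $\max\{\lambda_n^{1-2\ep_0},\lambda_n^{1/2-\ep_0},1\}\ls 1$ computation exactly.
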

\begin{proof}
The estimates for $\rd^k(\zeta_\alp (\mfg_n-\mfg_0))$ ($k=0,1,2$) are similar to Proposition~\ref{prop:psi.localized}; we omit the details. 

The last two estimates assert that there is an improvement associated to spatial derivative. First, using Proposition~\ref{prop:spatial.imp} {(which is applicable since $\mathrm{supp}(\zeta_\alp)\subset \Omega''$)}, \eqref{zeta.prop}, \eqref{assumption.0} and $\ep_0<\f 12$, we obtain 
$$\|\rd_i (\zeta_\alp (\mfg_n - \mfg_0))\|_{L^\infty}\ls \max\{\lambda_n^{1-\ep_0},\lambda_n^{\f 12}\}\ls \lambda_n^{\f 12}.$$
(Here, the $\lambda_n^{1-\ep_0}$ error arises when the derivative acts on $\zeta_\alp$, while the $\lambda_n^{\f 12}$ error arises when the derivatives acts on $\mfg_n - \mfg_0$.) Taking the $L^p$ norm over $B_\alp$ yields the desired claim for all $p\in [1,+\infty]$.

Finally, using the equations \eqref{elliptic.wave.2}--\eqref{elliptic.wave.4} (which together with \eqref{assumption.0} and \eqref{assumption.1} give an $L^\i$ bound for $\Delta \gamma_n$, $\Delta \log N_n$ and $\Delta \bt^j_n$), the estimates in \eqref{assumption.0}, \eqref{assumption.1}, Proposition~\ref{prop:spatial.imp}, \eqref{zeta.prop} and $\ep_0 <\f 12$, we obtain
$$\|\Delta (\zeta_\alp (\mfg_n - \mfg_0))\|_{L^\infty}\ls \max\{\lambda_n^{1-2\ep_0},\lambda_n^{\f 12-\ep_0}, 1\}\ls 1.$$
As before, taking the $L^p$ norm over $B_\alp$ yields the desired claim for all $p\in [1,+\infty]$. \qedhere
\end{proof}

One important consequence of freezing the coefficients is that in every $B_\alp$, the $\Box_{g_0}$ operator is comparable to a \emph{constant coefficient} operator:
\begin{proposition}\label{prop:approximate.wave}
For every $\alp$, let $\widetilde{\Box}_{c,\alp}$ be the constant coefficient second order differential operator defined by
$$\widetilde{\Box}_{c,\alp}:= -\f 1{N_{c,\alp}^2}(\rd_t - \bt_{c,\alp}^i \rd_i) (\rd_t - \bt_{c,\alp}^j \rd_j) + e^{-2\gamma_{c,\alp}} \de^{k\ell} \rd^2_{k\ell},$$
where the constants $N_{c,\alp}$, $\bt_{c,\alp}^i$ and $\gamma_{c,\alp}$ are defined in Proposition~\ref{prop:constants} {(cf.~\eqref{eq:wave})}.

Then, for every $\alp$, for every $n\in \mathbb N$ and for every $p\in [1,+\infty]$,
$$\|\rd^k \widetilde{\Box}_{c,\alp} (\zeta_\alp\chi(\psi_n-\psi_0))\|_{L^p}\ls \lambda_n^{-1-k+\ep_0} \cdot \lambda_n^{\f{3\ep_0}{p}},\quad k=0,1,2.$$
Similarly, for every $\alp$, for every $n\in \mathbb N$ and for every $p\in [1,+\infty]$,
$$\|\rd^k\widetilde{\Box}_{c,\alp} (\zeta_\alp\chi(\om_n-\om_0))\|_{L^p}\ls \lambda_n^{-1-k+\ep_0} \cdot \lambda_n^{\f{3\ep_0}{p}},\quad k=0,1,2.$$
\end{proposition}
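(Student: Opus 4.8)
The plan is to expand $\widetilde{\Box}_{c,\alp}(\zeta_\alp\chi(\psi_n-\psi_0))$ by writing $\widetilde{\Box}_{c,\alp} = \Box_{g_0} + (\widetilde{\Box}_{c,\alp} - \Box_{g_0})$ and controlling each piece. First I would recall the expression for $\Box_{g_0}$ in our gauge (cf.~\eqref{g.form}), namely $\Box_{g_0}u = -\f{1}{N_0^2}(e_0)_0(e_0)_0 u + e^{-2\gamma_0}\de^{k\ell}\rd^2_{k\ell}u + (\text{lower order})u$, where the lower-order terms involve one derivative of $u$ with coefficients built from first derivatives of the metric components. Then $\widetilde{\Box}_{c,\alp} - \Box_{g_0}$, when applied to a function supported in $B_\alp$, has coefficients that are differences of the frozen constants and the actual metric components (plus the subleading first-order terms of $\Box_{g_0}$). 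By Proposition~\ref{prop:constants}, the second-order coefficients of $\widetilde{\Box}_{c,\alp}-\Box_{g_0}$ are $O(\lambda_n^{\ep_0})$ in $L^\i(B_\alp)$, while the first-order coefficients of $\Box_{g_0}$ are $O(1)$ (smooth, $n$-independent). So on $B_\alp$,
$$\widetilde{\Box}_{c,\alp}(\zeta_\alp\chi(\psi_n-\psi_0)) = \Box_{g_0}(\zeta_\alp\chi(\psi_n-\psi_0)) + O(\lambda_n^{\ep_0})\cdot\rd^2(\zeta_\alp\chi(\psi_n-\psi_0)) + O(1)\cdot\rd(\zeta_\alp\chi(\psi_n-\psi_0)),$$
and similarly for higher derivatives $\rd^k$ of this identity, $k=0,1,2$, where each derivative falling on the $O(\lambda_n^{\ep_0})$ or $O(1)$ coefficients costs at most a power of $\lambda_n^{-\ep_0}$ (since those coefficients are either frozen constants or smooth functions, the latter having $n$-independent derivatives, so the genuine cost comes only when a derivative hits $\zeta_\alp$, which by \eqref{zeta.prop} costs $\lambda_n^{-\ep_0}$).

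Next I would estimate $\Box_{g_0}(\zeta_\alp\chi(\psi_n-\psi_0))$ itself. Using that $(\psi_n,g_n)$ solves \eqref{eq:U1vac} and $(\psi_0,\om_0,g_0)$ solves the limiting system \eqref{eq:U1vac.vlasov.again} (cf.~Proposition~\ref{prop:limitwave}; note $\chi\equiv 1$ on $\Omega\supset B_\alp$ since $B_\alp\subset\Omega'$... actually $B_\alp$ may meet $\Omega'\setminus\Omega$, so one should keep the $\chi$ commutator terms, which are harmless), one writes $\Box_{g_0}(\zeta_\alp\chi(\psi_n-\psi_0))$ as $\zeta_\alp$ times $\Box_{g_0}(\chi(\psi_n-\psi_0))$ plus commutator terms $[\Box_{g_0},\zeta_\alp](\chi(\psi_n-\psi_0))$. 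The commutator involves at most one derivative of $\zeta_\alp$ (cost $\lambda_n^{-\ep_0}$) times at most first derivatives of $\chi(\psi_n-\psi_0)$ (size $O(\lambda_n^{1-1})=O(1)$ by \eqref{assumption.1}, or $O(\lambda_n)$ at zeroth order), hence is $O(\lambda_n^{-\ep_0})$; and the main term $\zeta_\alp\Box_{g_0}(\chi(\psi_n-\psi_0))$ is handled as in Lemma~\ref{lem:Box.psi.decomp}: $\Box_{g_0}(\chi(\psi_n-\psi_0)) = (\Box_{g_0}-\Box_{g_n})(\chi\psi_n) + \Box_{g_n}(\chi\psi_n) - \Box_{g_0}(\chi\psi_0)$. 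The first term has a factor $(g_0^{-1}-g_n^{-1})=O(\lambda_n)$ multiplying two derivatives of $\chi\psi_n$ ($O(\lambda_n^{-1})$), giving $O(1)$; the second and third, via the equations, are quadratic in first derivatives, hence $O(1)$; but in fact one sees $\Box_{g_0}(\chi(\psi_n-\psi_0)) = O(1)$ pointwise — actually a more careful count gives $O(\lambda_n^{-1+\ep_0})$ is needed rather than $O(1)$, so the crude $O(1)$ bound is more than enough at $k=0$. For $k=1,2$ one differentiates and each extra derivative on $\psi_n-\psi_0$-type quantities costs $\lambda_n^{-1}$ by \eqref{assumption.2}, giving $\|\rd^k\Box_{g_0}(\zeta_\alp\chi(\psi_n-\psi_0))\|_{L^\i(B_\alp)}\ls \lambda_n^{-1-k+\ep_0}$ after collecting all terms — in particular the $\rd^2(\zeta_\alp\chi(\psi_n-\psi_0))$ contribution weighted by $O(\lambda_n^{\ep_0})$ gives exactly $\lambda_n^{\ep_0}\cdot\lambda_n^{-1-k}$.

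Finally, having the $L^\i(B_\alp)$ bound $\|\rd^k\widetilde{\Box}_{c,\alp}(\zeta_\alp\chi(\psi_n-\psi_0))\|_{L^\i}\ls\lambda_n^{-1-k+\ep_0}$, I would pass to $L^p$: since everything is supported in $B_\alp$, a ball of volume $O(\lambda_n^{3\ep_0})$, Hölder gives the factor $\lambda_n^{\f{3\ep_0}{p}}$, yielding the claimed bound. The $\om_n$ estimate is identical, using the second equation of \eqref{eq:U1vac} and \eqref{eq:U1vac.vlasov.again} together with the uniform bounds on $\psi_n,\psi_0$ to control the $e^{-4\psi}$ and $g^{-1}(\ud\om,\ud\psi)$ factors. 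I expect the main bookkeeping obstacle to be keeping track of all the error terms with the correct powers of $\lambda_n$ and $\lambda_n^{\ep_0}$ — especially ensuring that no derivative-on-$\zeta_\alp$ factor combines badly with the $\lambda_n^{\ep_0}$ gain from freezing, which works precisely because $\ep_0$ is gained from the coefficient difference and lost at most once from a single derivative on $\zeta_\alp$, and the relevant identity only ever has one derivative hitting $\zeta_\alp$ per term at the top order.
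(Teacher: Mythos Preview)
Your proposal is correct and follows essentially the same route as the paper. The ingredients are identical: (i) the decomposition $\Box_{g_0}(\chi(\psi_n-\psi_0)) = (\Box_{g_0}-\Box_{g_n})(\chi\psi_n) + \Box_{g_n}(\chi\psi_n) - \Box_{g_0}(\chi\psi_0)$ together with the equations and \eqref{assumption.0}--\eqref{assumption.2} to get $\|\rd^k\Box_{g_0}(\chi(\psi_n-\psi_0))\|_{L^\infty}\ls\lambda_n^{-k}$; (ii) Proposition~\ref{prop:constants} to get $O(\lambda_n^{\ep_0})$ smallness of the second-order coefficients of $\widetilde{\Box}_{c,\alp}-\Box_{g_0}$ on $B_\alp$; (iii) commutator bookkeeping with $\zeta_\alp$; (iv) passage $L^\infty\to L^p$ via $|B_\alp|\ls\lambda_n^{3\ep_0}$. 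The only cosmetic difference is ordering: the paper first bounds $\zeta_\alp\,\rd^k\widetilde{\Box}_{c,\alp}(\chi(\psi_n-\psi_0))$ and then separately estimates the commutator $\rd^{k+2}(\zeta_\alp\,\cdot\,)-\zeta_\alp\,\rd^{k+2}(\cdot)$, whereas you apply $\widetilde{\Box}_{c,\alp}$ directly to $\zeta_\alp\chi(\psi_n-\psi_0)$ and split via $\Box_{g_0}+(\widetilde{\Box}_{c,\alp}-\Box_{g_0})$; both lead to the same powers of $\lambda_n$ under $\ep_0<\tfrac12$.
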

\begin{proof}
We will only prove the estimates for $\psi_n-\psi_0$; the estimates for $\om_n-\om_0$ are similar and will be omitted.

{First, notice that after using the first equation in \eqref{eq:U1vac} and the bounds \eqref{assumption.0}--\eqref{assumption.2} in a similar (but easier) argument as in the proof of Lemma~\ref{lem:Box.psi.decomp}, we obtain}
%
\begin{equation}\label{box.of.the.difference}
\|\rd^k \Box_{g_0} (\chi(\psi_n - \psi_0))\|_{L^\i} \ls \lambda_n^{-k},\quad k=0,1,2.
\end{equation}

Next, we compute
\begin{equation}\label{eq:diff.with.constants}
\begin{split}
&\:   \Box_{g_0} - \widetilde{\Box}_{c,\alp} \\
= &\:  - \Big(\f{1}{N_0^2} - \f{1}{N_{c,\alp}^2} \Big) \rd^2_t + 2 \Big(\f{\bt_0^i}{N_0^2} - \f{\bt_{c,\alp}^i}{N_{c,\alp}^2} \Big)\rd^2_{ti} + \Big[ (e^{-2\gamma} - e^{-2\gamma_{c,\alp}})\de^{ij}-(\f{\bt_0^i\bt_0^j}{N_0^2} - \f{\bt_{c,\alp}^i\bt_{c,\alp}^j}{N_{c,\alp}^2}) \Big]\rd^2_{ij} \\
&\:  + \f 1{\sqrt{-\det g_0}}\Big[\rd_\mu ((g_0^{-1})^{\mu\nu} \sqrt{-\det g_0} ) \Big] \rd_\nu
\end{split}
\end{equation}
Therefore, it follows that by \eqref{box.of.the.difference}, Propositions~\ref{prop:constants} and \ref{prop:psi.localized} and \eqref{assumption.0}--\eqref{assumption.2}, we have
\begin{equation}\label{eta.of.const.box}
\begin{split}
&\:\|\zeta_\alp \rd^k \widetilde{\Box}_{c,\alp} (\chi(\psi_n-\psi_0))\|_{L^p} \\
\ls &\: \|\zeta_\alp \rd^k \Box_{g_0} (\chi(\psi_n - \psi_0))\|_{L^p} + \|\zeta_\alp \rd^k [(\Box_{g_0} - \widetilde{\Box}_{c,\alp}) (\chi(\psi_n-\psi_0))]\|_{L^p}  \\
\ls &\:  \lambda_n^{-k} \times\lambda_n^{\f{3\ep_0}{p}} + \lambda_n^{-k-1+\ep_0} \times\lambda_n^{\f{3\ep_0}{p}} \ls \lambda_n^{-k-1+\ep_0}\lambda_n^{\f{3\ep_0}{p}},\quad k=0,1,2.
\end{split}
\end{equation}
Here, $\lambda_n^{\f{3\ep_0}{p}}$ comes from the volume of the support of $\zeta_\alp$. For the $ \rd^k \Box_{g_0} (\chi(\psi_n - \psi_0))$ term, we applied \eqref{box.of.the.difference}. For the main $\rd^k [(\Box_{g_0} - \widetilde{\Box}_{c,\alp}) (\chi(\psi_n-\psi_0))]$ term, when there are $k+2$ derivatives (which is the maximum possible) hitting on $\psi_n - \psi_0$, we use the expression \eqref{eq:diff.with.constants} together with the bounds in Propositions~\ref{prop:constants} and \ref{prop:psi.localized}; when fewer derivatives hit on $\psi_n-\psi_0$, this is slightly easier.

Finally, since $\ep_0<1$, when $k=0,1,2$, the following commutator can be estimated above by
\begin{equation}\label{commutator.of.two.derivatives.with.eta}
\begin{split}
&\: \| \rd^{k+2}_{\mu_1\cdots \mu_{k+2}} (\zeta_\alp \chi(\psi_n-\psi_0)) - \zeta_{\alp} \rd^{k+2}_{\mu_1\cdots \mu_{k+2}} (\chi(\psi_n-\psi_0))\|_{L^p} \\
\ls &\: \sum_{\ell = 0}^{k+1} \| (\rd^{k+2-\ell} \zeta_\alp) (\rd^{\ell} (\chi(\psi_n-\psi_0)))\|_{L^p} \ls \sum_{\ell=0}^{k+1} \lambda_n^{-(k+2-\ell)\ep_0}\lambda_n^{1-\ell}\lambda_n^{\f{3\ep_0}{p}} \ls \lambda_n^{-k-\ep_0}\lambda_n^{\f{3\ep_0}{p}}.
\end{split}
\end{equation}
Therefore, by \eqref{eta.of.const.box} and \eqref{commutator.of.two.derivatives.with.eta}, we obtain that for $k=0,1,2$,
$$\| \rd^k \widetilde{\Box}_{c,\alp} (\zeta_\alp\chi (\psi_n-\psi_0))\|_{L^p} \ls \lambda_n^{-k-1+\ep_0}\lambda_n^{\f{3\ep_0}{p}} + \lambda^{-k-\ep_0}\lambda_n^{\f{3\ep_0}{p}} \ls \lambda_n^{-k-1+\ep_0}\lambda_n^{\f{3\ep_0}{p}},$$
where in the last estimate we used $\ep_0<\f 12$. \qedhere
\end{proof}

\subsection{Main preliminary reduction}\label{sec:reduction}

\begin{proposition}\label{prop:only.even}
Let $a(x,\xi) = b(x) m(\xi)$. Suppose $m(\xi)$ is homogeneous of order 0 and is \underline{odd}, i.e.~$m(\xi) = -m(-\xi)$ for all $\xi \in \mathbb R^3$. Then 
$$\int_{S^*\mathbb R^{2+1}} \Big( (g_0^{-1})^{\alp\bt}\xi_\bt \rd_{x^{\alp}}(\f{(\xi_t - \bt_0^i \xi_i) a}{N_0}) - \f 12 (\rd_\mu g_0^{-1})^{\alp\bt} \xi_\alp \xi_\bt \rd_{\xi_\mu}(\f{(\xi_t - \bt_0^i \xi_i) a}{N_0}) \Big)  \, \f{\ud \nu^\psi}{|\xi|^2} = 0$$
and 
$$\int_{S^*\mathbb R^{2+1}} e^{-4\psi_0} \Big( (g_0^{-1})^{\alp\bt}\xi_\bt \rd_{x^{\alp}}(\f{(\xi_t - \bt_0^i \xi_i) a}{N_0}) - \f 12 (\rd_\mu g_0^{-1})^{\alp\bt} \xi_\alp \xi_\bt \rd_{\xi_\mu}(\f{(\xi_t - \bt_0^i \xi_i) a}{N_0}) \Big) \, \f{\ud \nu^\om}{|\xi|^2} = 0.$$
\end{proposition}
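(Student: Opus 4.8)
The plan is to prove both identities by a parity argument in $\xi$. The essential input is that the microlocal defect measures $\ud\nu^\psi$ and $\ud\nu^\om$ are \emph{even}, i.e.\ invariant under the pushforward by the antipodal involution $\iota:[\xi]\mapsto[-\xi]$ on $S^*\mathbb R^{2+1}$. This is a general feature of microlocal defect measures attached to sequences of \emph{real-valued} functions. To extract it, I would work with the diagonal component $\ud\sigma^\psi_{tt}$ (which by Propositions~\ref{prop:existence.MDM} and \ref{prop:nu} equals $\f{\xi_t^2}{|\xi|^2}\ud\nu^\psi$ and is non-negative, hence real): for real $f$ and a zeroth order $A$ with real symbol $a$, conjugating the oscillatory integral defining $A$ and substituting $\xi\mapsto-\xi$ gives $\overline{Af}=\check A f$ with $\check A$ of principal symbol $a(x,-\xi)$, so taking complex conjugates in
$$\int_{S^*\mathbb R^{2+1}} a\,\ud\sigma^\psi_{tt}=\lim_{n\to+\infty}\int_{\mathbb R^{2+1}}\rd_t(\chi(\psi_n-\psi_0))\,A\rd_t(\chi(\psi_n-\psi_0))\,\mathrm{dVol}_{g_0}$$
(using that $\rd_t(\chi(\psi_n-\psi_0))$, $a$, and the density $\sqrt{-\det g_0}$ are real) yields $\int a(x,\xi)\,\ud\sigma^\psi_{tt}=\int a(x,-\xi)\,\ud\sigma^\psi_{tt}$. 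Thus $\ud\sigma^\psi_{tt}$ is $\iota$-invariant, and since it is supported on the null cone $\{(g_0^{-1})^{\alp\bt}\xi_\alp\xi_\bt=0\}$ by Proposition~\ref{prop:psiom.localized} — on which $\xi_t\neq0$ because $(g_0^{-1})^{tt}=-N_0^{-2}<0$ so $dt$ is timelike — the relation $\ud\sigma^\psi_{tt}=\f{\xi_t^2}{|\xi|^2}\ud\nu^\psi$ is invertible there and forces $\ud\nu^\psi=\f{|\xi|^2}{\xi_t^2}\ud\sigma^\psi_{tt}$ to be $\iota$-invariant too; the same argument with $\om_n$ gives the evenness of $\ud\nu^\om$.

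Next I would unwind the parity of the integrand. Set $\widetilde a(x,\xi):=N_0^{-1}(\xi_t-\bt_0^i\xi_i)\,a(x,\xi)$. Since $a=b(x)m(\xi)$ with $m$ odd, while $\xi\mapsto\xi_t-\bt_0^i\xi_i$ is homogeneous of degree $1$ (hence odd) and $N_0$ is independent of $\xi$, the function $\widetilde a$ is \emph{even} in $\xi$ (and homogeneous of degree $1$). Therefore $\rd_{x^\alp}\widetilde a$ is even in $\xi$, whereas differentiating $\widetilde a(x,-\xi)=\widetilde a(x,\xi)$ in $\xi$ shows $\rd_{\xi_\mu}\widetilde a$ is odd in $\xi$. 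Since $(g_0^{-1})^{\alp\bt}\xi_\bt$ is odd and $(\rd_\mu g_0^{-1})^{\alp\bt}\xi_\alp\xi_\bt$ is even, both terms of $(g_0^{-1})^{\alp\bt}\xi_\bt\rd_{x^\alp}\widetilde a-\f12(\rd_\mu g_0^{-1})^{\alp\bt}\xi_\alp\xi_\bt\rd_{\xi_\mu}\widetilde a$ are odd in $\xi$, so after dividing by the even factor $|\xi|^2$ one obtains a degree-$0$, odd function on $S^*\mathbb R^{2+1}$, i.e.\ a continuous $F$ with $F\circ\iota=-F$.

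Finally, integrating an odd function against an even measure over $S^*\mathbb R^{2+1}$ gives $\int F\,\ud\nu^\psi=\int (F\circ\iota)\,\ud\nu^\psi=-\int F\,\ud\nu^\psi=0$, which is the first identity; the second is proved identically, the extra factor $e^{-4\psi_0(x)}$ being independent of $\xi$ and hence irrelevant for the parity count. The step I expect to be the main obstacle is the evenness of $\ud\nu^\psi$ and $\ud\nu^\om$: one must phrase the conjugation argument in terms of a manifestly real quantity (which is why I would use the diagonal component $\ud\sigma_{tt}$ rather than the possibly complex off-diagonal ones), check that the smooth density $\sqrt{-\det g_0}$ in $\mathrm{dVol}_{g_0}$ does not spoil reality, and verify that the passage from $\ud\sigma_{tt}$ to $\ud\nu$ by dividing by $\xi_t^2$ is legitimate on the support of the measure. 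Everything else is routine bookkeeping of signs.
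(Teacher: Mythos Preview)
Your approach is correct and is more direct than the paper's. The paper proves the same vanishing by reducing to $\int_{S^*\mathbb R^{2+1}} f\,\f{(\xi_t-\bt_0^i\xi_i)^2}{|\xi|^2}\,\ud\nu^\psi=0$ for odd $f=\underline b(x)\underline m(\xi)$, expresses this via Corollary~\ref{cor:nu} as a limit of quadratic forms in $(e_0)_0(\chi(\psi_n-\psi_0))$, then \emph{freezes coefficients} on balls of radius $\lambda_n^{\ep_0}$ (Section~\ref{sec:freeze.coeff}) to replace $\bt_0^i(x)$ by constants $\bt_{c,\alp}^i$, and finally uses the Fourier identity $\widehat h(-\xi)=\overline{\widehat h(\xi)}$ for real $h$ together with the substitution $\xi\mapsto-\xi$ and the oddness of $\underline m$ to conclude. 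Your argument short-circuits the freezing machinery by establishing once and for all that $\ud\nu^\psi$ is $\iota$-invariant; this is cleaner and conceptually clearer, while the paper's hands-on route has the incidental virtue of warming up the freezing-of-coefficients framework that is genuinely needed later in Section~\ref{sec:elliptic.wave.tri}.

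One small slip to repair: $dt$ being timelike means $g_0^{-1}(\xi,dt)\neq 0$ for nonzero null $\xi$, but $g_0^{-1}(\xi,dt)=(g_0^{-1})^{t\alp}\xi_\alp=-N_0^{-2}(\xi_t-\bt_0^i\xi_i)$, so what you actually get is $\xi_t-\bt_0^i\xi_i\neq 0$ (exactly what the paper uses), not $\xi_t\neq 0$. Without a smallness assumption on $\bt_0$ the latter can fail on the null cone, so the passage from evenness of $\ud\sigma^\psi_{tt}$ to evenness of $\ud\nu^\psi$ via division by $\xi_t^2$ is not justified as written. The fix is immediate: run the same conjugation argument on each diagonal component $\ud\sigma_{\alp\alp}^\psi$ (each is the limit of a real quadratic form, hence real and $\iota$-invariant) and sum, since by Proposition~\ref{prop:nu} one has $\sum_\alp\ud\sigma_{\alp\alp}^\psi=\f{\sum_\alp\xi_\alp^2}{|\xi|^2}\,\ud\nu^\psi=\ud\nu^\psi$ with no division required.
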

\begin{proof}
We will only prove the first equality as the second one can be achieved in an identical manner.

It is easy to check that $\f{1}{|\xi|^2}(g_0^{-1})^{\alp\bt}\xi_\bt \rd_{x^{\alp}}(\f{(\xi_t - \bt_0^i \xi_i) a}{N_0})$ and $\f{1}{|\xi|^2}(\rd_\mu g_0^{-1})^{\alp\bt} \xi_\alp \xi_\bt \rd_{\xi_\mu}(\f{(\xi_t - \bt_0^i \xi_i) a}{N_0})$ are both odd in $\xi$. In particular, these terms can be written as a finite sum of terms of form $f(x,\xi) = \underline{b}(x) \underline{m}(\xi)$, where $\underline{m}(\xi)$ is homogeneous of degree $0$ and \textbf{odd}.

It therefore suffices to show that for every {such} $f(x,\xi) = \underline{b}(x) \underline{m}(\xi)$, 
$$\int_{S^*\mathbb R^{2+1}} f(x,\xi) \, \ud \nu^{\psi} = 0.$$
Equivalently, since $\xi_t - \bt_0^i \xi_i \neq 0$ on the support of $\ud \nu^\psi$ (by Proposition~\ref{prop:psiom.localized} and the form of the metric), it suffices to show that for $f(x,\xi)= \underline{b}(x) \underline{m}(\xi)$ as above,
\begin{equation}\label{goal.for.f}
\int_{S^*\mathbb R^{2+1}} f(x,\xi) \f{(\xi_t - \bt_0^i \xi_i)^2}{|\xi|^2}\, \ud \nu^{\psi} = 0.
\end{equation}

To proceed, given $\underline{b}$ as above, we freeze the coefficients as in Proposition~\ref{prop:constants} and find constants $\{\underline{b}_{c,\alp}\}_\alp$ adapted to the partition of unity introduced in Section~\ref{sec:freeze.coeff} so that the conclusion of Proposition~\ref{prop:constants} holds. 

Then, using Corollary \ref{cor:nu}, $\sum_\alp \zeta^3_\alp = 1$, Propositions~\ref{prop:constants}, \ref{prop:psi.localized}, \eqref{assumption.0} and \eqref{assumption.1}, the LHS of \eqref{goal.for.f} can be expressed as follows:
\begin{equation}\label{goal.for.f.2}
\begin{split}
\mbox{LHS of \eqref{goal.for.f}}= &\: \lim_{n\to +\infty} \int_{\mathbb R^{2+1}} (\rd_t - \bt^i_0\rd_i)(\chi(\psi_n-\psi_0)) \underline{b}\, \underline{m}(\f 1i\nabla) (\rd_t - \bt^i_0\rd_i)(\chi(\psi_n-\psi_0)) \, \ud x \\
= &\: \sum_{\alp} \lim_{n\to +\infty} \int_{\mathbb R^{2+1}} \zeta_\alp^3 (\rd_t - \bt^i_0\rd_i)(\chi(\psi_n-\psi_0)) \underline{b}\, \underline{m}(\f 1i\nabla) (\rd_t - \bt^i_0\rd_i)(\chi(\psi_n-\psi_0)) \, \ud x \\ 
= &\: \sum_{\alp} \underline{b}_{c,\alp} \lim_{n\to +\infty} \int_{\mathbb R^{2+1}} (\rd_t - \bt^i_0\rd_i)(\zeta_\alp^{\f 32} \chi(\psi_n-\psi_0)) \underline{m}(\f 1i\nabla) (\rd_t - \bt^i_0\rd_i)(\zeta_\alp^{\f 32} \chi(\psi_n-\psi_0)) \, \ud x. 
\end{split}
\end{equation}
Taking Fourier transform and using that $m$, $\zeta_\alp$, $\chi(\psi_n-\psi)$ are real, we obtain
\begin{equation}\label{goal.for.f.3}
\begin{split}
\mbox{RHS of \eqref{goal.for.f.2}} = &\: \sum_{\alp} \underline{b}_{c,\alp} \lim_{n\to +\infty} \int_{\mathbb R^{2+1}} (\xi_t - \bt^i_0\xi_i)^2 \overline{\widehat{(\zeta_\alp^{\f 32} \chi(\psi_n-\psi_0))}(\xi)} \underline{m}(\xi) \widehat{(\zeta_\alp^{\f 32} \chi(\psi_n-\psi_0))}(\xi) \, \ud \xi \\
= &\: \sum_{\alp} \underline{b}_{c,\alp} \lim_{n\to +\infty} \int_{\mathbb R^{2+1}} (\xi_t - \bt^i_0\xi_i)^2 \widehat{(\zeta_\alp^{\f 32} \chi(\psi_n-\psi_0))}(-\xi) \underline{m}(\xi) \widehat{(\zeta_\alp^{\f 32} \chi(\psi_n-\psi_0))}(\xi) \, \ud \xi.
\end{split}
\end{equation}
Since $\underline{m}$ is odd, a simple change of variable $\xi \mapsto -\xi$ shows that the last line in \eqref{goal.for.f.3} also equals
$$-\sum_{\alp} \underline{b}_{c,\alp} \lim_{n\to +\infty} \int_{\mathbb R^{2+1}} (\xi_t - \bt^i_0\xi_i)^2 \widehat{(\zeta_\alp^{\f 32} \chi(\psi_n-\psi_0))}(-\xi) \underline{m}(\xi) \widehat{(\zeta_\alp^{\f 32} \chi(\psi_n-\psi_0))}(\xi) \, \ud \xi.$$
This then implies that the term is identically zero as desired. \qedhere
\end{proof}

\begin{proposition}\label{prop:main.reduction}
Let $\ud \nu = 2\ud \nu^\psi + \f 12 e^{-4\psi_0} \, \ud \nu^\om$. Suppose the following holds for all $a(x,\xi) = b(x) m(\xi)$ with $b$, $m$ smooth, real, $m$ homogeneous of order $0$ and $m$ \underline{even}:
\begin{equation}\label{main.transport.1}
\int_{S^*\mathbb R^{2+1}} ((g_0^{-1})^{\alp\bt}\xi_\bt \rd_{x^{\alp}}(\f{(\xi_t - \bt_0^i \xi_i) a}{N_0}) - \f 12 (\rd_\mu g_0^{-1})^{\alp\bt} \xi_\alp \xi_\bt \rd_{\xi_\mu}(\f{(\xi_t - \bt_0^i \xi_i) a}{N_0})) \, \f{\ud \nu}{|\xi|^2} = 0.
\end{equation}

Then in fact \eqref{main.transport.1} holds for \underline{all} smooth real $a(x,\xi)$ which are homogeneous of order $0$.
\end{proposition}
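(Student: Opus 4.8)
The plan is to reduce a general homogeneous-degree-$0$ symbol $a(x,\xi)$ to a linear combination (in fact an integral superposition) of symbols of the special product form $b(x)m(\xi)$ with $m$ even, modulo symbols handled by Proposition~\ref{prop:only.even}, and then use that the transport functional in \eqref{main.transport.1} is linear in $a$. First I would observe that the map
$$a\ \longmapsto\ \mathcal{T}[a]:=\int_{S^*\mathbb R^{2+1}} \Big((g_0^{-1})^{\alp\bt}\xi_\bt \rd_{x^{\alp}}\big(\tfrac{(\xi_t - \bt_0^i \xi_i) a}{N_0}\big) - \tfrac 12 (\rd_\mu g_0^{-1})^{\alp\bt} \xi_\alp \xi_\bt \rd_{\xi_\mu}\big(\tfrac{(\xi_t - \bt_0^i \xi_i) a}{N_0}\big)\Big)\,\f{\ud \nu}{|\xi|^2}$$
is linear in $a$ and, crucially, \emph{continuous} in an appropriate topology: since $\ud\nu$ is a finite Radon measure on the compact-fibered bundle $S^*\Omega$ (recall $a$ is supported in $S^*\Omega$), and since the integrand involves only $a$, $\rd_x a$, $\rd_\xi a$ together with the smooth coefficients $g_0^{-1}$, $N_0$, $\bt_0$ and their first derivatives, we have $|\mathcal{T}[a]|\ls \|a\|_{C^1(S^*\Omega)}$. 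Thus it suffices to show that symbols of the form $b(x)m(\xi)$ with $b$, $m$ smooth, real, $m$ homogeneous of degree $0$, span a $C^1$-dense subspace of the real homogeneous-degree-$0$ symbols supported in $S^*\Omega$, after we have already disposed of the odd-in-$\xi$ part.

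The key steps, in order, are: (i) split $a(x,\xi)=a_{\mathrm{even}}(x,\xi)+a_{\mathrm{odd}}(x,\xi)$ where $a_{\mathrm{even}}(x,\xi)=\tfrac12(a(x,\xi)+a(x,-\xi))$ and $a_{\mathrm{odd}}(x,\xi)=\tfrac12(a(x,\xi)-a(x,-\xi))$; by linearity of $\mathcal{T}$, and by Proposition~\ref{prop:only.even} (applied with $m$ odd, which exactly says $\mathcal{T}[b\,m]=0$ for the $\psi$- and $\om$-measures separately, hence for $\ud\nu$), it suffices to treat $a_{\mathrm{odd}}$ --- no wait, Proposition~\ref{prop:only.even} kills the odd-product symbols, so after (i) it remains to handle a general \emph{even} symbol $a_{\mathrm{even}}$ by even products. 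Let me restate: (i) reduce to $a$ even in $\xi$ using that the odd part, while not literally a single product $b(x)m(\xi)$, can be further reduced --- actually the cleanest route is (ii): expand $a(x,\cdot)$, viewed as a smooth function on the cosphere $S^{1}$ in the fiber (a circle in $\mathbb R^3$ after homogenization, say the unit sphere $S^2$ intersected appropriately — more precisely $a$ restricts to a smooth function on $S^2$ for each $x$), in spherical harmonics / a Fourier-type basis $\{Y_\ell(\xi/|\xi|)\}$; write $a(x,\xi)=\sum_\ell b_\ell(x) Y_\ell(\xi/|\xi|)$ with $b_\ell(x)=\int_{S^2} a(x,\omega)\overline{Y_\ell(\omega)}\,\ud\omega$ smooth in $x$ (uniformly, with rapidly decaying Sobolev norms in $\ell$ since $a\in C^\infty$); (iii) each $Y_\ell$ is either even or odd in $\xi$ (the parity of spherical harmonics is $(-1)^\ell$), so the even part of $a$ is a rapidly convergent sum $\sum_{\ell \text{ even}} b_\ell(x) Y_\ell(\xi/|\xi|)$ of even product symbols, and the odd part is $\sum_{\ell \text{ odd}} b_\ell(x) Y_\ell(\xi/|\xi|)$, a rapidly convergent sum of odd product symbols; (iv) the partial sums converge to $a$ in $C^1(S^*\Omega)$ by the rapid decay, so by the continuity estimate $|\mathcal{T}[a]|\ls\|a\|_{C^1}$ established above we may pass to the limit: $\mathcal{T}[a]=\sum_\ell \mathcal{T}[b_\ell\, Y_\ell]$; (v) for odd $\ell$, $\mathcal{T}[b_\ell Y_\ell]=0$ by Proposition~\ref{prop:only.even}; for even $\ell$, $\mathcal{T}[b_\ell Y_\ell]=0$ by the hypothesis \eqref{main.transport.1}. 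Hence $\mathcal{T}[a]=0$, which is the claim. One should also insert a routine cutoff step so that each $b_\ell$ is compactly supported near the $x$-projection of $S^*\Omega$ (which does not affect $\mathcal{T}[a]$ since $\ud\nu$ is supported there and \eqref{main.transport.1} only involves a neighborhood), and note $Y_\ell$ can be taken real-valued by pairing $Y_\ell$ with $Y_{-\ell}$.

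The main obstacle I anticipate is not any single hard estimate but rather bookkeeping: verifying carefully that the spherical-harmonic coefficients $b_\ell(x)$ are genuinely smooth and that $\sum_\ell \|b_\ell\,Y_\ell\|_{C^1(S^*\Omega)}<\infty$, which requires using that $a$ is smooth jointly and that one may differentiate under the integral $b_\ell(x)=\int_{S^2}a(x,\omega)\overline{Y_\ell(\omega)}\,\ud\omega$ and then integrate by parts on $S^2$ against the Laplace--Beltrami operator to gain powers of $\ell^{-1}$; one must also be mildly careful that the cutoff $\chi(\xi)$ separating the homogeneous principal part from the origin (cf.\ the definition of principal symbol in Section~\ref{sec:PSIDOs}) is harmless here since \eqref{main.transport.1} only sees the homogeneous-degree-$0$ behavior and $\ud\nu$ lives on $S^*\mathbb R^{2+1}$. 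A secondary point to check is that the hypothesis \eqref{main.transport.1}, stated for products $b(x)m(\xi)$ with $m$ even, indeed applies to $b_\ell(x)Y_\ell(\xi/|\xi|)$ for even $\ell$ --- this is immediate once $Y_\ell$ is taken real and recognized as an even, homogeneous-degree-$0$, smooth function on $\mathbb R^3\setminus\{0\}$.
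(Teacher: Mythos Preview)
Your approach is correct and follows essentially the same strategy as the paper: reduce to product symbols $b(x)m(\xi)$ by a density argument, then invoke Proposition~\ref{prop:only.even} for the odd-in-$\xi$ part and the hypothesis for the even-in-$\xi$ part. The paper simply cites the Stone--Weierstrass theorem to approximate $a$ by finite sums $\sum_k b_k(x)m_k(\xi)$ and then splits each $m_k$ into its even and odd parts, whereas you use a spherical-harmonic expansion in the fiber variable to obtain products with definite parity directly; both routes rest on the same $C^1$-continuity of $\mathcal{T}$ that you made explicit (and that the paper's ``standard density argument'' implicitly requires). Your version is a bit more hands-on but entirely equivalent.
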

\begin{proof}
By a standard density argument using the Stone--Weierstrass theorem we can reduce to the case where $a(x,\xi)$ takes the form $a(x,\xi) = \sum_{\mathrm{finite}} b_k(x) m_k(\xi)$. It therefore suffices to consider $a(x,\xi) = b(x) m(\xi)$. Decompose $m$ into its odd and even parts. Proposition~\ref{prop:only.even} shows that the odd part must give a zero contribution to \eqref{main.transport.1}. The conclusion follows. \qedhere
\end{proof}

\textbf{From now on we assume that $a(x,\xi) = b(x) m(\xi)$ and that $b(x)$ is real and $m(\xi)$ is real and even.} {Moreover, we will take $A$ to be a $0$-th order pseudo-differential operator $A = b(x)\widetilde{m}(\f 1i \nab)$, where $\widetilde{m}(\xi)$ is a smooth real-valued even function such that $\widetilde{m}(\xi) = m(\xi)$ for $|\xi|\geq 1$.}

One consequence of the evenness assumption is the following.
\begin{proposition}\label{prop:real.output}
Let {$A = b(x) \widetilde{m}(\f 1i\nabla)$}, where $b(x)$ is real{,} $\widetilde{m}(\xi)$ is {real and even and agrees with a real, even, homogeneous of order $0$ $m(\xi)$ for $|\xi|\geq 1$}. Then for any real function $\phi\in L^2$, we have $A\phi \in L^2$ and $A^*\phi \in L^2$ are both real (where $A^*=\widetilde{m}(\f 1i\nabla) b(x)$ denotes the $L^2$-adjoint of $A$).
\end{proposition}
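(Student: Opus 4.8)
\textbf{Proof proposal for Proposition~\ref{prop:real.output}.}

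The plan is to reduce the claim to the elementary fact that the Fourier multiplier $\widetilde{m}(\f1i\nabla)$ preserves real-valuedness precisely because $\widetilde{m}$ is real and even, and that multiplication by the real function $b(x)$ trivially preserves real-valuedness. First I would recall the standard characterization: a function $\phi\in L^2(\mathbb R^{2+1})$ is real-valued (a.e.) if and only if its spacetime Fourier transform satisfies the Hermitian symmetry $\widehat{\phi}(-\xi) = \overline{\widehat{\phi}(\xi)}$. Then, using the normalization fixed in Section~\ref{sec.notations}, one computes
$$\mathcal F\big(\widetilde{m}(\tfrac1i\nabla)\phi\big)(\xi) = \widetilde{m}(\xi)\,\widehat{\phi}(\xi),$$
at least for $\phi$ in a dense class (say $\mathcal S$, or $L^2$ with compactly supported Fourier transform), and then extend by the $L^2$-boundedness of the multiplier (which holds since $\widetilde{m}$ is bounded — it agrees with the homogeneous-of-order-$0$ function $m$ for $|\xi|\geq 1$ and is smooth, hence bounded, on $|\xi|\leq 1$). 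Evaluating at $-\xi$ and using that $\widetilde{m}$ is real and even,
$$\overline{\widetilde{m}(-\xi)\widehat{\phi}(-\xi)} = \widetilde{m}(\xi)\,\overline{\widehat{\phi}(-\xi)} = \widetilde{m}(\xi)\,\widehat{\phi}(\xi),$$
which is exactly the Hermitian symmetry of $\mathcal F(\widetilde{m}(\f1i\nabla)\phi)$; hence $\widetilde{m}(\f1i\nabla)\phi$ is real. Multiplying by the real-valued function $b(x)$ pointwise preserves real-valuedness, so $A\phi = b\cdot\widetilde{m}(\f1i\nabla)\phi$ is real.

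For $A^* = \widetilde{m}(\f1i\nabla)\,b(x)$ the argument is the same in the reverse order: $b(x)\phi$ is real since $b$ is real, and then $\widetilde{m}(\f1i\nabla)$ applied to a real $L^2$ function is real by the computation just given. One should note for bookkeeping that $A^*$ as written is indeed the $L^2$-adjoint of $A$: since $b$ is real, multiplication by $b$ is self-adjoint on $L^2$, and since $\widetilde{m}$ is real and even, $\widetilde{m}(\f1i\nabla)$ is self-adjoint (its symbol is real, so by Lemma~\ref{lem:PSIDOs}.3 together with the exact identity $\mathcal F(\widetilde m(\f1i\nabla)\phi) = \widetilde m\,\widehat\phi$ it is exactly self-adjoint, not merely up to lower order); hence $(b\,\widetilde{m}(\f1i\nabla))^* = \widetilde{m}(\f1i\nabla)\,b$.

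The only mild technical point — and the closest thing to an obstacle — is making the identity $\mathcal F(\widetilde{m}(\f1i\nabla)\phi) = \widetilde{m}\,\widehat{\phi}$ rigorous for all $\phi\in L^2$ rather than just Schwartz functions, but this is entirely routine: one verifies it on $\mathcal S(\mathbb R^{2+1})$ directly from the definition of the Fourier multiplier in Section~\ref{sec.notations}, and then both sides are bounded linear operators $L^2\to L^2$ (the left side by Plancherel composed with the bounded multiplier $\widetilde m$, the right side likewise), so they agree on all of $L^2$ by density. No new ideas are needed beyond Plancherel's theorem, the boundedness of $\widetilde m$, and the Hermitian-symmetry characterization of real-valued $L^2$ functions.
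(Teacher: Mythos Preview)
Your proposal is correct and follows essentially the same approach as the paper: both reduce to showing that the Fourier multiplier $\widetilde{m}(\tfrac1i\nabla)$ preserves real-valuedness via the Hermitian symmetry characterization $\overline{\widehat{\phi}(\xi)}=\widehat{\phi}(-\xi)$, using that $\widetilde{m}$ is real and even. Your write-up includes some extra bookkeeping (the density argument and the verification that $A^*=\widetilde{m}(\tfrac1i\nabla)\,b$), but the core computation is identical to the paper's.
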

\begin{proof}
It suffices to show that $\widetilde{m}(\f 1i\nabla) \phi$ is real. First, since $\phi$ is real, we have $\overline{\widehat{\phi}(\xi)} = \widehat{\phi}(-\xi)$.
Hence, 
$$\overline{\widehat{\Big[\widetilde{m}(\f 1i\nabla) \phi\Big]}(\xi)} =\overline{ \widetilde{m}(\xi)\widehat{\phi}(\xi)} = \widetilde{m}(\xi)\widehat{\phi}(-\xi) = \widetilde{m}(-\xi)\widehat{\phi}(-\xi) = \widehat{\Big[\widetilde{m}(\f 1i\nabla) \phi\Big]}(-\xi).$$
This implies that $\widetilde{m}(\f 1i\nabla) \phi$ is real. \qedhere
\end{proof}

\section{Energy identities}\label{sec:energy.id}

We continue to work under the assumptions of Theorem~\ref{thm:main} and the reductions in Sections~\ref{sec:compact.reduction} and \ref{sec:reduction}. Let $A$ be a $0$-th order pseudo-differential operator given by $A = b(x) \widetilde{m}(\f 1i\nabla)$, where the principal symbol {$a(x,\xi) = b(x) m(\xi)$ (with $m(\xi) = \widetilde{m}(\xi)$ for $|\xi|\geq 1$)} is real and supported in $T^*\Omega${,} $m(\xi)$ is homogeneous of order $0${, and $m$ and $\widetilde{m}$ are both even}.

In this section, we derive the main energy identities that will be used to prove the transport equation for the microlocal defect measure. We first introduce some notations in \textbf{Section~\ref{sec:def.boxes}}. In \textbf{Section~\ref{sec:id.0}} and \textbf{Section~\ref{sec:id.n}}, we will then derive respectively energy identities using the equations satisfied by $(\psi_0,\om_0)$ and $(\psi_n,\om_n)$.

\subsection{Definitions of $\Box_{g_0,A}$ and $\Box_{g_n,A}$}\label{sec:def.boxes}

{Using \eqref{eq:wave} for the metric $g_0$, we obtain}
\begin{equation}\label{eq:Box.g0}
\begin{split}
\Box_{g_0}\phi = -\f{e^{-2\gamma_0}}{N_0} \rd_t\big(\f{e^{2\gamma_0}}{N_0} (e_0)_0\phi \big) + \f{e^{-2\gamma_0}}{N_0} \rd_i \big(\f{\bt_0^i e^{2\gamma_0}}{N_0} (e_0)_0 \phi \big) + \f{e^{-2\gamma_0}}{N_0} \de^{ij} \rd_i \big(N_0 \rd_j\phi\big).
\end{split}
\end{equation}
Similarly,
\begin{equation}\label{eq:Box.gn}
\begin{split}
\Box_{g_n}\phi = -\f{e^{-2\gamma_n}}{N_n} \rd_t(\f{e^{2\gamma_n}}{N_n} \big(e_0)_n\phi \big) + \f{e^{-2\gamma_n}}{N_n} \rd_i \big(\f{\bt_n^i e^{2\gamma_n}}{N_n} (e_0)_n \phi \big) + \f{e^{-2\gamma_n}}{N_n} \de^{ij} \rd_i \big(N_n \rd_j\phi\big).
\end{split}
\end{equation}

Define the operator $\Box_{g_0,A}$ by
\begin{equation}\label{def:Box.g0}
\begin{split}
\Box_{g_0,A}\phi = -\f{e^{-2\gamma_0}}{N_0} \rd_t [e^{2\gamma_0} A(\f{(e_0)_0\phi}{N_0})] + \f{e^{-2\gamma_0}}{N_0} \rd_i \big[\bt_0^i e^{2\gamma_0} A(\f{(e_0)_0 \phi}{N_0}) \big] + \f{e^{-2\gamma_0}}{N_0} \de^{ij} \rd_i \big[N_0^2 A(\f{\rd_j\phi}{N_0}) \big].
\end{split}
\end{equation}
Similarly, for every $n\in \mathbb N$, define the operator $\Box_{g_n,A}$ by
\begin{equation}\label{def:Box.gn}
\begin{split}
\Box_{g_n,A}\phi = -\f{e^{-2\gamma_n}}{N_n} \rd_t \big[e^{2\gamma_n} A(\f{(e_0)_n\phi}{N_n}) \big] + \f{e^{-2\gamma_n}}{N_n} \rd_i \big[\bt_n^i e^{2\gamma_n} A(\f{(e_0)_n \phi}{N_n}) \big] + \f{e^{-2\gamma_n}}{N_n} \de^{ij} \rd_i \big[N_n^2 A(\f{\rd_j\phi}{N_n})\big].
\end{split}
\end{equation}

\subsection{Energy identities for $(\psi_0,\om_0)$}\label{sec:id.0}

We first derive some energy identities by directly integrating by parts. {One can view these as analogues of the the standard energy identities for the wave map system for $(\psi, \om)$ with multiplier $\f{e_0}{N_0}$, though now we also need to take into account the contribution of the pseudo-differential operator $A$.}

\begin{proposition}\label{prop:energy.id.0.1}
The following identities hold:
\begin{equation}\label{eq:energy.id.0.1}
\begin{split}
&\: \int_{\mathbb R^{2+1}} \f{(e_0)_0 (\chi\psi_0)}{N_0} \Box_{g_0,A}(\chi\psi_0)  \,\mathrm{dVol}_{g_0} + \int_{\mathbb R^{2+1}} A \Big(\f{(e_0)_0 (\chi\psi_0)}{N_0}\Big)  \Box_{g_0}(\chi\psi_0)  \,\mathrm{dVol}_{g_0}\\
=&\: - \int_{\mathbb R^{2+1}} \f{(e_0)_0 (\chi\psi_0)}{N_0} (\rd_t e^{2\gamma_0}) A \Big(\f{(e_0)_0 (\chi\psi_0)}{N_0}\Big) \,\ud x + \int_{\mathbb R^{2+1}} \f{(e_0)_0 (\chi\psi_0)}{N_0} (\rd_i (\bt_0^i e^{2\gamma_0})) A\Big(\f{(e_0)_0 (\chi\psi_0)}{N_0}\Big) \,\ud x\\
&\: + \int_{\mathbb R^{2+1}} N_0\rd_i (\chi\psi_0) \de^{ij} \Big\{ [A, \rd_j](\f{(e_0)_0(\chi\psi_0)}{N_0})\Big\} \,\ud x + \int_{\mathbb R^{2+1}} [\rd_i (\chi\psi_0)] \de^{ij} N_0 \Big\{[(e_0)_0,A](\f{\rd_j(\chi\psi_0)}{N_0}) \Big\} \,\ud x\\
&\: - \int_{\mathbb R^{2+1}} [\rd_i (\chi\psi_0)] \de^{ij} (\rd_k \bt^k_0) N_0 \Big[A(\f{\rd_j(\chi\psi_0)}{N_0}) \Big] \,\ud x  + \int_{\mathbb R^{2+1}} [\rd_i (\chi\psi_0)] \de^{ij} [(e_0)_0 N_0] \Big[A(\f{\rd_j(\chi\psi_0)}{N_0})\Big] \,\ud x  \\
&\: + \int_{\mathbb R^{2+1}} \Big[\rd_i (\chi\psi_0)] \de^{ij} N_0 [A(\f{((e_0)_0(\chi\psi_0))(\rd_j N_0)}{N_0^2} +\f{(\rd_j\bt_0^k)\rd_k(\chi\psi_0)}{N_0} - \f{((e_0)_0 N_0)\rd_j(\chi\psi_0)}{N_0^2}) \Big] \,\ud x \\
&\: + \int_{\mathbb R^{2+1}} (\rd_i \bt_0^k) [\rd_k (\chi\psi_0)] \de^{ij}  N_0 \Big[A(\f{\rd_j(\chi\psi_0)}{N_0}) \Big] \,\ud x + \int_{\mathbb R^{2+1}} [(e_0)_0 (\chi\psi_0)] (\rd_i N_0) \de^{ij}  \Big[A(\f{\rd_j(\chi\psi_0)}{N_0}) \Big]\,\ud x,
\end{split}
\end{equation}
and
\begin{equation}\label{eq:energy.id.0.3}
\begin{split}
&\:  \f 14 \int_{\mathbb R^{2+1}} e^{-4\psi_0} \f{(e_0)_0 (\chi\om_0)}{N_0} \Box_{g_0,A}(\chi\om_0)  \,\mathrm{dVol}_{g_0} + \f 14 \int_{\mathbb R^{2+1}} e^{-4\psi_0}  A(\f{(e_0)_0 (\chi\om_0)}{N_0})  \Box_{g_0}(\chi\om_0)  \,\mathrm{dVol}_{g_0}\\
=&\: \f 14 \int_{\mathbb R^{2+1}} e^{-4\psi_0} \Big\{ - \f{(e_0)_0 (\chi\om_0)}{N_0} (\rd_t e^{2\gamma_0}) A(\f{(e_0)_0 (\chi\om_0)}{N_0}) +  \f{(e_0)_0 (\chi\om_0)}{N_0} (\rd_i (\bt_0^i e^{2\gamma_0})) A(\f{(e_0)_0 (\chi\om_0)}{N_0}) \Big\} \,\ud x\\
&\: + \f 14\int_{\mathbb R^{2+1}} e^{-4\psi_0} \Big\{ N_0\rd_i (\chi\om_0) \de^{ij} \big\{ [A, \rd_j](\f{(e_0)_0(\chi\om_0)}{N_0}) \big\} + [\rd_i (\chi\om_0)] \de^{ij} N_0 \big\{[(e_0)_0,A](\f{\rd_j(\chi\om_0)}{N_0}) \big\} \Big\}\,\ud x\\
&\: - \f 14\int_{\mathbb R^{2+1}} e^{-4\psi_0} \Big\{ [\rd_i (\chi\om_0)] \de^{ij} (\rd_k \bt^k_0) N_0 [A(\f{\rd_j(\chi\om_0)}{N_0})] + [\rd_i (\chi\om_0)] \de^{ij} [(e_0)_0 N_0] [A(\f{\rd_j(\chi\om_0)}{N_0})] \Big\} \,\ud x  \\
&\: + \f 14\int_{\mathbb R^{2+1}} e^{-4\psi_0} [\rd_i (\chi\om_0)] \de^{ij} N_0 \Big[A(\f{((e_0)_0(\chi\om_0))(\rd_j N_0)}{N_0^2} +\f{(\rd_j\bt_0^k)\rd_k(\chi\om_0)}{N_0} - \f{((e_0)_0 N_0)\rd_j(\chi\om_0)}{N_0^2}) \Big] \,\ud x \\
&\: + \f 14\int_{\mathbb R^{2+1}} e^{-4\psi_0} \Big\{ (\rd_i \bt_0^k) [\rd_k (\chi\om_0)] \de^{ij}  N_0 [A(\f{\rd_j(\chi\om_0)}{N_0})] + [(e_0)_0 (\chi\om_0)] (\rd_i N_0) \de^{ij}  [A(\f{\rd_j(\chi\om_0)}{N_0})] \Big\}\,\ud x \\
&\: - \int_{\mathbb R^{2+1}} e^{-4\psi_0} e^{2\gamma_0} ((e_0)_0\psi_0) \f{(e_0)_0 (\chi\om_0)}{N_0}  A\Big(\f{(e_0)_0 (\chi\om_0)}{N_0}\Big) \,\ud x \\
&\: + \int_{\mathbb R^{2+1}} e^{-4\psi_0} \de^{ij} N_0 (\rd_i\psi_0) [ (e_0)_0 (\chi\om_0)]  \Big[A(\f{\rd_j(\chi\om_0)}{N_0})\Big] \, \ud x \\
&\: - \int_{\mathbb R^{2+1}} e^{-4\psi_0} ((e_0)_0 \psi_0)[ \rd_i (\chi\om_0)] \de^{ij} N_0 \Big[A(\f{\rd_j(\chi\om_0)}{N_0})\Big] \, \ud x \\
&\: + \int_{\mathbb R^{2+1}} e^{-4\psi_0} (\rd_j\psi_0)[\rd_i (\chi\om_0)] \de^{ij} N_0 A\Big( \f{(e_0)_0(\chi\om_0)}{N_0}\Big) \, \ud x.
\end{split}
\end{equation}
Here, we recall the definition of $\Box_{g_0,A}$ in \eqref{def:Box.g0}.
\end{proposition}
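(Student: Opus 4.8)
The identities \eqref{eq:energy.id.0.1} and \eqref{eq:energy.id.0.3} are pure integration-by-parts identities: they hold for the \emph{smooth} fields $\chi\psi_0$ and $\chi\om_0$ with no reference to the sequence $n$, and no analytic input beyond the fact that all the integrands are smooth and compactly supported (recall $\chi$, $a$ are compactly supported, $g_0$ is smooth, and $\chi\psi_0$, $\chi\om_0$ are smooth). So the plan is simply to expand the left-hand sides using the definitions \eqref{eq:Box.g0} of $\Box_{g_0}$ and \eqref{def:Box.g0} of $\Box_{g_0,A}$, and then move derivatives around.

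First I would treat \eqref{eq:energy.id.0.1}. Write $\Phi = \chi\psi_0$ for brevity and recall $\mathrm{dVol}_{g_0} = N_0 e^{2\gamma_0}\,\ud x$ (from \eqref{g.form}). Insert \eqref{def:Box.g0} into the first integral: the volume factor $N_0 e^{2\gamma_0}/(N_0 e^{-2\gamma_0})$ — wait, one must be careful: $\Box_{g_0,A}\phi$ carries a prefactor $\tfrac{e^{-2\gamma_0}}{N_0}$, so $\tfrac{(e_0)_0\Phi}{N_0}\,\Box_{g_0,A}\Phi\,\mathrm{dVol}_{g_0}$ becomes $\tfrac{(e_0)_0\Phi}{N_0}\cdot\bigl(-\rd_t[e^{2\gamma_0}A(\tfrac{(e_0)_0\Phi}{N_0})] + \rd_i[\bt_0^i e^{2\gamma_0}A(\tfrac{(e_0)_0\Phi}{N_0})] + \de^{ij}\rd_i[N_0^2 A(\tfrac{\rd_j\Phi}{N_0})]\bigr)\,\ud x$. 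Now integrate by parts in each of the three terms to move the outer $\rd_t$ or $\rd_i$ onto $\tfrac{(e_0)_0\Phi}{N_0}$. The key bookkeeping point is that $(e_0)_0 = \rd_t - \bt_0^i\rd_i$, so when the $\rd_t$-term and the $\rd_i(\bt_0^i\,\cdot)$-term are integrated by parts, the pieces where the derivative lands on $\tfrac{(e_0)_0\Phi}{N_0}$ combine to give $-e^{2\gamma_0}\,(e_0)_0(\tfrac{(e_0)_0\Phi}{N_0})\cdot A(\tfrac{(e_0)_0\Phi}{N_0})$ plus the term $-\tfrac{(e_0)_0\Phi}{N_0}(\rd_t e^{2\gamma_0})A(\cdot) + \tfrac{(e_0)_0\Phi}{N_0}(\rd_i(\bt_0^i e^{2\gamma_0}))A(\cdot)$, which are exactly the first two terms on the right-hand side of \eqref{eq:energy.id.0.1}. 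Symmetrically, the second integral on the left, $\int A(\tfrac{(e_0)_0\Phi}{N_0})\,\Box_{g_0}\Phi\,\mathrm{dVol}_{g_0}$, is expanded using \eqref{eq:Box.g0} and integrated by parts the other way. The principal (top-order) contributions from the two left-hand integrals are designed to cancel, leaving only commutator terms $[A,\rd_j]$, $[(e_0)_0,A]$ (these are the third and fourth lines of the RHS) together with a collection of terms in which a derivative has fallen on one of the metric coefficients $N_0$, $\gamma_0$, $\bt_0^i$ or on the argument $\tfrac{1}{N_0}$ inside $A$ — these produce the remaining lines of \eqref{eq:energy.id.0.1}. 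Each such term should be matched literally against the formula; the computation is long but mechanical.

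For \eqref{eq:energy.id.0.3} the structure is identical except for two new features: the weight $e^{-4\psi_0}$ and the fact that $\om_0$ solves $\Box_{g_0}\om_0 = 4 g_0^{-1}(\ud\om_0,\ud\psi_0)$ rather than being harmonic — but actually the identity \eqref{eq:energy.id.0.3} as stated is again a \emph{pure} integration-by-parts identity once one treats $e^{-4\psi_0}$ as a smooth weight; the wave equation is not used here (it will be used later when these identities are combined). The extra terms on the last four lines of \eqref{eq:energy.id.0.3} are precisely the contributions one gets when a derivative from the integration by parts hits the weight $e^{-4\psi_0}$: $\rd_t e^{-4\psi_0} = -4((e_0)_0\psi_0 + \bt_0^i\rd_i\psi_0)e^{-4\psi_0}$ and $\rd_i e^{-4\psi_0} = -4(\rd_i\psi_0)e^{-4\psi_0}$, and collecting these against the various terms of the integration by parts gives exactly those four lines. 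So I would carry out the same three-term expansion as above with the weight $\tfrac14 e^{-4\psi_0}$ carried along, and at each integration by parts separately record (a) the derivative-on-$\Phi$ pieces, which cancel between the two left-hand integrals up to commutators, (b) the derivative-on-metric-coefficient pieces, and (c) the derivative-on-$e^{-4\psi_0}$ pieces.

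The main obstacle is not conceptual but organizational: there are a dozen-or-so terms on each side and one must be scrupulous about signs and about which of $\rd_t$, $\rd_i$, $(e_0)_0$, $A$, $[A,\rd_j]$, $[(e_0)_0,A]$ each derivative becomes after integration by parts, keeping in mind that $A$ is \emph{not} exactly self-adjoint so one cannot freely move it across the pairing — but here we never need to, since in \eqref{eq:energy.id.0.1}–\eqref{eq:energy.id.0.3} $A$ always stays attached to the same factor it started with. I would organize the computation by first writing out the fully expanded left-hand side as a sum of explicit integrals, then integrating by parts in each, then grouping. A useful sanity check is the formal limiting case where $a\equiv\mathrm{const}$, $N_0\equiv 1$, $\bt_0\equiv 0$, $\gamma_0\equiv 0$: then $A$ is a constant multiple of the identity, $\Box_{g_0,A} = a\Box_{g_0}$, all commutators and all metric-derivative terms vanish, and both sides reduce to $2a\int (\rd_t\Phi)(\Box\Phi)\,\ud x = 2a\int(\rd_t\Phi)(\Box\Phi)\,\ud x$ trivially — confirming the normalization of the prefactors $1$ and $\tfrac14$.
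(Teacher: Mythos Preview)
Your proposal is correct and follows essentially the same approach as the paper: this is indeed a pure integration-by-parts identity, and you correctly identify that the top-order pieces of the two left-hand integrals cancel, leaving commutator terms and metric-derivative terms, with the last four lines of \eqref{eq:energy.id.0.3} arising from the derivative landing on the weight $e^{-4\psi_0}$.

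One small organizational difference worth noting: the paper does \emph{not} integrate by parts the second integral at all. Instead it performs all the integrations by parts on the first integral alone (for the spatial piece $\int \tfrac{(e_0)_0\Phi}{N_0}\,\de^{ij}\rd_i[N_0^2 A(\tfrac{\rd_j\Phi}{N_0})]\,\ud x$ this takes several steps, first moving $\rd_i$ off, then commuting $(e_0)_0$ past $\rd_i$, then integrating $(e_0)_0$ by parts again), and at the end the ``main'' terms that emerge are exactly $-\int A(\tfrac{(e_0)_0\Phi}{N_0})\,\Box_{g_0}\Phi\,\mathrm{dVol}_{g_0}$, which cancels the untouched second integral. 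Your symmetric scheme (integrate by parts in both and cancel) also works, but generates roughly twice as many intermediate terms before the cancellation; the paper's one-sided organization is a little cleaner to execute.
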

\begin{proof}
We first prove \eqref{eq:energy.id.0.1}. Consider each term in \eqref{def:Box.g0} and integrate by parts to obtain the following three identities.

\begin{equation}\label{energy.ibp.1}
\begin{split}
&\: - \int_{\mathbb R^{2+1}} \f{(e_0)_0 (\chi\psi_0)}{N_0} \rd_t \Big(e^{2\gamma_0} A(\f{(e_0)_0 (\chi\psi_0)}{N_0})\Big) \,\ud x \\
= &\: - \int_{\mathbb R^{2+1}} \f{(e_0)_0 (\chi\psi_0)}{N_0}  e^{2\gamma_0} \Big[\rd_t (A(\f{(e_0)_0 (\chi\psi_0)}{N_0}))\Big] \,\ud x - \int_{\mathbb R^{2+1}} \f{(e_0)_0 (\chi\psi_0)}{N_0} (\rd_t e^{2\gamma_0}) A\Big(\f{(e_0)_0 (\chi\psi_0)}{N_0}\Big) \,\ud x  \\
= &\: \int_{\mathbb R^{2+1}} \rd_t \Big(e^{2\gamma_0} \f{(e_0)_0 (\chi\psi_0)}{N_0}\Big)    \Big[A(\f{(e_0)_0 (\chi\psi_0)}{N_0})\Big] \,\ud x - \int_{\mathbb R^{2+1}} \f{(e_0)_0 (\chi\psi_0)}{N_0} (\rd_t e^{2\gamma_0}) A\Big(\f{(e_0)_0 (\chi\psi_0)}{N_0} \Big) \,\ud x .
\end{split}
\end{equation}

\begin{equation}\label{energy.ibp.2}
\begin{split}
&\: \int_{\mathbb R^{2+1}} \f{(e_0)_0 (\chi\psi_0)}{N_0} \rd_i \Big(\bt_0^i e^{2\gamma_0} A(\f{(e_0)_0 (\chi\psi_0)}{N_0})\Big) \,\ud x \\
= &\: \int_{\mathbb R^{2+1}} \f{(e_0)_0 (\chi\psi_0)}{N_0}  \bt_0^i e^{2\gamma_0} \Big[\rd_i (A(\f{(e_0)_0 (\chi\psi_0)}{N_0}))\Big] \,\ud x + \int_{\mathbb R^{2+1}} \f{(e_0)_0 (\chi\psi_0)}{N_0} (\rd_i (\bt_0^i e^{2\gamma_0})) A\Big(\f{(e_0)_0 (\chi\psi_0)}{N_0}\Big) \,\ud x\\
= &\: - \int_{\mathbb R^{2+1}} \Big[\rd_i (\bt_0^i e^{2\gamma_0}\f{(e_0)_0 (\chi\psi_0)}{N_0})\Big]   \Big[A(\f{(e_0)_0 (\chi\psi_0)}{N_0})\Big] \,\ud x + \int_{\mathbb R^{2+1}} \f{(e_0)_0 (\chi\psi_0)}{N_0} (\rd_i (\bt_0^i e^{2\gamma_0})) A\Big(\f{(e_0)_0 (\chi\psi_0)}{N_0}\Big) \,\ud x.
\end{split}
\end{equation}

\begin{equation}\label{energy.ibp.3}
\begin{split}
&\: \int_{\mathbb R^{2+1}} \f{(e_0)_0 (\chi\psi_0)}{N_0} \de^{ij} \rd_i \Big[N_0^2 A(\f{\rd_j(\chi\psi_0)}{N_0})\Big] \,\ud x \\
= &\: - \int_{\mathbb R^{2+1}} [ (e_0)_0 \rd_i (\chi\psi_0)] \de^{ij} N_0 \Big[A(\f{\rd_j(\chi\psi_0)}{N_0})\Big] \, \ud x + \int_{\mathbb R^{2+1}} (\rd_i \bt_0^k) [\rd_k (\chi\psi_0)] \de^{ij}  N_0 \Big[A(\f{\rd_j(\chi\psi_0)}{N_0})\Big] \,\ud x \\
&\: + \int_{\mathbb R^{2+1}} [(e_0)_0 (\chi\psi_0)] (\rd_i N_0) \de^{ij}  \Big[A(\f{\rd_j(\chi\psi_0)}{N_0})\Big]\,\ud x\\
= &\: \int_{\mathbb R^{2+1}} [\rd_i (\chi\psi_0)] \de^{ij} [(e_0)_0 N_0] \Big[A(\f{\rd_j(\chi\psi_0)}{N_0})\Big] \,\ud x + \int_{\mathbb R^{2+1}} [\rd_i (\chi\psi_0)] \de^{ij} N_0 \Big\{[e_0,A](\f{\rd_j(\chi\psi_0)}{N_0})\Big\} \,\ud x \\
&\: - \int_{\mathbb R^{2+1}} [\rd_i (\chi\psi_0)] \de^{ij} (\rd_k \bt^k_0) N_0 \Big[A(\f{\rd_j(\chi\psi_0)}{N_0})\Big] \,\ud x \\
&\: + \int_{\mathbb R^{2+1}} [\rd_i (\chi\psi_0)] \de^{ij} N_0 \Big[A(\rd_j \f{(e_0)_0(\chi\psi_0)}{N_0} + \f{((e_0)_0(\chi\psi_0))(\rd_j N_0)}{N_0^2} +\f{(\rd_j\bt_0^k)\rd_k(\chi\psi_0)}{N_0} - \f{((e_0)_0 N_0)\rd_j(\chi\psi_0)}{N_0^2})\Big] \,\ud x \\
&\: + \int_{\mathbb R^{2+1}} (\rd_i \bt_0^k) [\rd_k (\chi\psi_0)] \de^{ij}  N_0 \Big[A(\f{\rd_j(\chi\psi_0)}{N_0})\Big] \,\ud x + \int_{\mathbb R^{2+1}} [(e_0)_0 (\chi\psi_0)] (\rd_i N_0) \de^{ij}  \Big[A(\f{\rd_j(\chi\psi_0)}{N_0})\Big]\,\ud x\\
= &\: - \int_{\mathbb R^{2+1}} \rd_j [N_0\rd_i (\chi\psi_0)] \de^{ij} \Big[A(\f{(e_0)_0(\chi\psi_0)}{N_0})\Big] \,\ud x \\
&\: + \int_{\mathbb R^{2+1}} N_0\rd_i (\chi\psi_0) \de^{ij} \Big\{ [A, \rd_j](\f{(e_0)_0(\chi\psi_0)}{N_0}) \Big\} \,\ud x + \int_{\mathbb R^{2+1}} [\rd_i (\chi\psi_0)] \de^{ij} N_0 \Big\{[(e_0)_0,A](\f{\rd_j(\chi\psi_0)}{N_0})\Big\} \,\ud x\\
&\: - \int_{\mathbb R^{2+1}} [\rd_i (\chi\psi_0)] \de^{ij} (\rd_k \bt^k_0) N_0 \Big[A(\f{\rd_j(\chi\psi_0)}{N_0})\Big] \,\ud x  + \int_{\mathbb R^{2+1}} [\rd_i (\chi\psi_0)] \de^{ij} [(e_0)_0 N_0] \Big[A(\f{\rd_j(\chi\psi_0)}{N_0})\Big] \,\ud x  \\
&\: + \int_{\mathbb R^{2+1}} [\rd_i (\chi\psi_0)] \de^{ij} N_0 \Big[A(\f{((e_0)_0(\chi\psi_0))(\rd_j N_0)}{N_0^2} +\f{(\rd_j\bt_0^k)\rd_k(\chi\psi_0)}{N_0} - \f{((e_0)_0 N_0)\rd_j(\chi\psi_0)}{N_0^2})\Big] \,\ud x \\
&\: + \int_{\mathbb R^{2+1}} (\rd_i \bt_0^k) [\rd_k (\chi\psi_0)] \de^{ij}  N_0 \Big[A(\f{\rd_j(\chi\psi_0)}{N_0})\Big] \,\ud x + \int_{\mathbb R^{2+1}} [(e_0)_0 (\chi\psi_0)] (\rd_i N_0) \de^{ij}  \Big[A(\f{\rd_j(\chi\psi_0)}{N_0})\Big]\,\ud x.
\end{split}
\end{equation}

Combining \eqref{energy.ibp.1}--\eqref{energy.ibp.3}, and recalling \eqref{eq:Box.g0} and \eqref{def:Box.g0}, we obtain \eqref{eq:energy.id.0.1}.

Now the proof of \eqref{eq:energy.id.0.3} is similar, except that since there is an $e^{-4\psi_0}$ weight, we need to handle the extra (four) terms arising from differentiating  $e^{-4\psi_0}$. We omit the details. \qedhere

\end{proof}

Using the equations derived in Proposition~\ref{prop:limitwave}, we obtain the following energy identities, which give different ways of expressing \eqref{eq:energy.id.0.1} and \eqref{eq:energy.id.0.3}.

\begin{proposition}\label{prop:energy.id.0.2}
Let $$F^\psi_0:=2 g_0^{-1} (\ud \chi,\ud \psi_0) + \psi_0 \Box_{g_0} \chi -\f 12 \chi e^{-4\psi_0} g^{-1}_0 (\ud \om_0, \ud \om_0).$$ Then
\begin{equation}\label{eq:energy.id.0.2}
\begin{split}
&\: \int_{\mathbb R^{2+1}} \f{(e_0)_0 (\chi\psi_0)}{N_0} \Box_{g_0,A}(\chi\psi_0)  \,\mathrm{dVol}_{g_0} + \int_{\mathbb R^{2+1}} A(\f{(e_0)_0 (\chi\psi_0)}{N_0})  \Box_{g_0}(\chi\psi_0)  \,\mathrm{dVol}_{g_0}\\
=&\: \int_{\mathbb R^{2+1}} \f{(e_0)_0 (\chi\psi_0)}{N_0} \Big(\Box_{g_0,A} (\chi\psi_0) - \f{1}{\sqrt{-\det g_0}} A (\sqrt{-\det g_0}\Box_{g_0} (\chi\psi_0))\Big) \,\mathrm{dVol}_{g_0} \\
&\: + \int_{\mathbb R^{2+1}} \f{(e_0)_0 (\chi\psi_0)}{N_0} A (\sqrt{-\det g_0} F_0^\psi) \, \ud x + \int_{\mathbb R^{2+1}} A\Big(\f{(e_0)_0 (\chi\psi_0)}{N_0}\Big)  F_0^\psi  \,\sqrt{-\det g_0} \,\ud x.
\end{split}
\end{equation}
Similarly, let $$F^\om_0 = 2 g_0^{-1} (\ud \chi, \ud \om_0) + \om_0 \Box_{g_0} \chi + 4 \chi g^{-1}_0 (\ud \om_0, \ud \psi_0).$$ Then
\begin{equation}\label{eq:energy.id.0.4}
\begin{split}
&\: \f 14\int_{\mathbb R^{2+1}} e^{-4\psi_0}\f{(e_0)_0 (\chi\om_0)}{N_0} \Box_{g_0,A}(\chi\om_0)  \,\mathrm{dVol}_{g_0} + \f 14 \int_{\mathbb R^{2+1}} e^{-4\psi_0} A\Big(\f{(e_0)_0 (\chi\om_0)}{N_0}\Big)  \Box_{g_0}(\chi\om_0)  \,\mathrm{dVol}_{g_0}\\
=&\: \f 14\int_{\mathbb R^{2+1}} e^{-4\psi_0} \f{(e_0)_0 (\chi\om_0)}{N_0} \Big(\Box_{g_0,A} (\chi\om_0) - \f{1}{\sqrt{-\det g_0}} A (\sqrt{-\det g_0}\Box_{g_0} (\chi\om_0))\Big) \,\mathrm{dVol}_{g_0} \\
&\: + \f 14 \int_{\mathbb R^{2+1}} e^{-4\psi_0}\f{(e_0)_0 (\chi\om_0)}{N_0} A (\sqrt{-\det g_0} F_0^\om) \, \ud x + \f 14\int_{\mathbb R^{2+1}} {e^{-4\psi_0}} A\Big(\f{(e_0)_0 (\chi\om_0)}{N_0}\Big)  F_0^\om  \,\sqrt{-\det g_0} \,\ud x.
\end{split}
\end{equation}

\end{proposition}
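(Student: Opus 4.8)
The statement \eqref{eq:energy.id.0.2} (and similarly \eqref{eq:energy.id.0.4}) is essentially a bookkeeping identity: it rewrites the left-hand side of \eqref{eq:energy.id.0.1} (respectively \eqref{eq:energy.id.0.3}) by using the wave equation \eqref{eq:psi0} satisfied by $\chi\psi_0$ (respectively \eqref{eq:om0} for $\chi\om_0$). The two propositions together therefore just say that the energy identity from Proposition~\ref{prop:energy.id.0.1} can be re-expressed with the source term $F_0^\psi$ (or $F_0^\om$) made explicit. So the proof is short and purely algebraic.

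\begin{proof}
We prove \eqref{eq:energy.id.0.2}; the identity \eqref{eq:energy.id.0.4} is obtained in the same way, starting instead from \eqref{eq:om0} and carrying the $e^{-4\psi_0}$ weight, and we omit it.

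By Proposition~\ref{prop:limitwave}, $\chi\psi_0$ solves \eqref{eq:psi0} classically, i.e.
$$\Box_{g_0}(\chi\psi_0) = 2 g_0^{-1}(\ud\chi,\ud\psi_0) + \psi_0\Box_{g_0}\chi - \f12 \chi e^{-4\psi_0} g_0^{-1}(\ud\om_0,\ud\om_0) = F_0^\psi,$$
with $F_0^\psi$ as defined in the statement. Now start from the left-hand side of \eqref{eq:energy.id.0.2} and add and subtract the term $\int_{\mathbb R^{2+1}} \f{(e_0)_0(\chi\psi_0)}{N_0}\f{1}{\sqrt{-\det g_0}} A(\sqrt{-\det g_0}\,\Box_{g_0}(\chi\psi_0))\,\mathrm{dVol}_{g_0}$. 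This gives
\begin{equation*}
\begin{split}
&\: \int_{\mathbb R^{2+1}} \f{(e_0)_0 (\chi\psi_0)}{N_0} \Box_{g_0,A}(\chi\psi_0)  \,\mathrm{dVol}_{g_0} + \int_{\mathbb R^{2+1}} A(\f{(e_0)_0 (\chi\psi_0)}{N_0})  \Box_{g_0}(\chi\psi_0)  \,\mathrm{dVol}_{g_0}\\
=&\: \int_{\mathbb R^{2+1}} \f{(e_0)_0 (\chi\psi_0)}{N_0} \Big(\Box_{g_0,A} (\chi\psi_0) - \f{1}{\sqrt{-\det g_0}} A (\sqrt{-\det g_0}\,\Box_{g_0} (\chi\psi_0))\Big) \,\mathrm{dVol}_{g_0} \\
&\: + \int_{\mathbb R^{2+1}} \f{(e_0)_0 (\chi\psi_0)}{N_0} \f{1}{\sqrt{-\det g_0}}A (\sqrt{-\det g_0}\,\Box_{g_0}(\chi\psi_0)) \,\mathrm{dVol}_{g_0} + \int_{\mathbb R^{2+1}} A(\f{(e_0)_0 (\chi\psi_0)}{N_0})  \Box_{g_0}(\chi\psi_0)  \,\mathrm{dVol}_{g_0}.
\end{split}
\end{equation*}
In the second line use $\mathrm{dVol}_{g_0} = \sqrt{-\det g_0}\,\ud x$ to cancel the $\f{1}{\sqrt{-\det g_0}}$ factor and obtain $\int_{\mathbb R^{2+1}} \f{(e_0)_0(\chi\psi_0)}{N_0} A(\sqrt{-\det g_0}\,\Box_{g_0}(\chi\psi_0))\,\ud x$, and in the third line write $\mathrm{dVol}_{g_0} = \sqrt{-\det g_0}\,\ud x$ directly. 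Finally substitute $\Box_{g_0}(\chi\psi_0) = F_0^\psi$ in both of these terms. This yields exactly the right-hand side of \eqref{eq:energy.id.0.2}.
\end{proof}

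\textbf{Remark on difficulty.} There is no real obstacle here: the only point to be careful about is the placement of the $\sqrt{-\det g_0}$ weights when converting between $\mathrm{dVol}_{g_0}$ and $\ud x$ inside and outside the pseudo-differential operator $A$ (note that $A$ does not commute with multiplication by $\sqrt{-\det g_0}$, which is precisely why the first term on the right, the commutator-type term $\Box_{g_0,A} - \f{1}{\sqrt{-\det g_0}}A\sqrt{-\det g_0}\,\Box_{g_0}$, must be kept). All the genuine work — the integration by parts producing the many commutator and lower-order terms — was already done in Proposition~\ref{prop:energy.id.0.1}; Proposition~\ref{prop:energy.id.0.2} merely repackages the left-hand side in a form convenient for later comparison with the corresponding identity for $(\psi_n,\om_n)$.
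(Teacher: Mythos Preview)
Your proof is correct and is essentially the same as the paper's, which simply notes that the identity is ``an obvious consequence of $\Box_{g_0}(\chi\psi_0)=F_0^\psi$, $\Box_{g_0}(\chi\om_0)=F_0^\om$ (which holds by Proposition~\ref{prop:limitwave}).'' You have just spelled out the add-and-subtract step and the $\mathrm{dVol}_{g_0}=\sqrt{-\det g_0}\,\ud x$ bookkeeping explicitly.
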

\begin{proof}
This is an obvious consequence of 
$\Box_{g_0} (\chi\psi_0) = F_0^\psi$ and $\Box_{g_0} (\chi\om_0) = F_0^\om$
(which holds by Proposition~\ref{prop:limitwave}). \qedhere
\end{proof}

\subsection{Energy identities of $(\psi_n,\om_n)$}\label{sec:id.n}

We now derive analogues of Propositions~\ref{prop:energy.id.0.1} and \ref{prop:energy.id.0.2} with $(\psi_0,\om_0)$ replaced by $(\psi_n,\om_n)$. The results are given in Proposition~\ref{prop:energy.id.n.1} and \ref{prop:energy.id.n.2} below. Since the proofs are essentially the same as those for Propositions~\ref{prop:energy.id.0.1} and \ref{prop:energy.id.0.2}, they are omitted.

\begin{proposition}\label{prop:energy.id.n.1}
\begin{equation}\label{eq:energy.id.n.1}
\begin{split}
&\: \int_{\mathbb R^{2+1}} \f{(e_0)_n (\chi\psi_n)}{N_n} \Box_{g_n,A}(\chi\psi_n)  \,\mathrm{dVol}_{g_n} + \int_{\mathbb R^{2+1}} A(\f{(e_0)_n (\chi\psi_n)}{N_n})  \Box_{g_0}(\chi\psi_n)  \,\mathrm{dVol}_{g_n}\\
=&\: \underbrace{- \int_{\mathbb R^{2+1}} \f{(e_0)_n (\chi\psi_n)}{N_n} (\rd_t e^{2\gamma_n}) A(\f{(e_0)_n (\chi\psi_n)}{N_n}) \,\ud x}_{=:\mathrm{easy}_1} + \underbrace{\int_{\mathbb R^{2+1}} \f{(e_0)_n(\chi\psi_n)}{N_n} (\rd_i (\bt_n^i e^{2\gamma_n})) A(\f{(e_0)_n (\chi\psi_n)}{N_n}) \,\ud x}_{=:\mathrm{easy}_2}\\
&\: + \underbrace{\int_{\mathbb R^{2+1}} N_n\rd_i (\chi\psi_n) \de^{ij} \{ [A, \rd_j](\f{(e_0)_n(\chi\psi_n)}{N_n})\} \,\ud x}_{=:\mathrm{easy}_3} + \underbrace{\int_{\mathbb R^{2+1}} [\rd_i (\chi\psi_n)] \de^{ij} N_n \{[(e_0)_n,A](\f{\rd_j(\chi\psi_n)}{N_n})\} \,\ud x}_{=:\mathrm{hard}}\\
&\: \underbrace{- \int_{\mathbb R^{2+1}} [\rd_i (\chi\psi_n)] \de^{ij} (\rd_k \bt^k_n) N_n [A(\f{\rd_j(\chi\psi_n)}{N_n})] \,\ud x}_{=:\mathrm{easy}_4}  + \underbrace{\int_{\mathbb R^{2+1}} [\rd_i (\chi\psi_n)] \de^{ij} [(e_0)_n N_n] [A(\f{\rd_j(\chi\psi_n)}{N_n})] \,\ud x}_{=:\mathrm{medium}_1}  \\
&\: + \underbrace{\int_{\mathbb R^{2+1}} [\rd_i (\chi\psi_n)] \de^{ij} N_n A(\f{((e_0)_n(\chi\psi_n))(\rd_j N_n)}{N_n^2}) \,\ud x}_{=:\mathrm{easy}_5} + \underbrace{\int_{\mathbb R^{2+1}} [\rd_i (\chi\psi_n)] \de^{ij} N_n A (\f{(\rd_j\bt_n^k)\rd_k(\chi\psi_n)}{N_n}) \,\ud x}_{=:\mathrm{easy}_6}\\
&\: \underbrace{- \int_{\mathbb R^{2+1}} [\rd_i (\chi\psi_n)] \de^{ij} N_n A(\f{((e_0)_n N_n)\rd_j(\chi\psi_n)}{N_n^2}) \,\ud x}_{=:\mathrm{medium}_2}  + \underbrace{\int_{\mathbb R^{2+1}} (\rd_i \bt_n^k) [\rd_k (\chi\psi_n)] \de^{ij}  N_n [A(\f{\rd_j(\chi\psi_n)}{N_n})] \,\ud x}_{=:\mathrm{easy}_7} \\
&\: + \underbrace{\int_{\mathbb R^{2+1}} [(e_0)_n (\chi\psi_n)] (\rd_i N_n) \de^{ij}  [A(\f{\rd_j(\chi\psi_n)}{N_n})]\,\ud x}_{=:\mathrm{easy}_8},
\end{split}
\end{equation}
\begin{equation}\label{eq:energy.id.n.3}
\begin{split}
&\:  \f 14 \int_{\mathbb R^{2+1}} e^{-4\psi_0} \f{(e_0)_n (\chi\om_n)}{N_n} \Box_{g_0,A}(\chi\om_n)  \,\mathrm{dVol}_{g_n} + \f 14 \int_{\mathbb R^{2+1}} e^{-4\psi_0}  A(\f{(e_0)_n (\chi\om_n)}{N_n})  \Box_{g_n}(\chi\om_n)  \,\mathrm{dVol}_{g_0}\\
=&\: \f 14 \int_{\mathbb R^{2+1}} e^{-4\psi_0} \{- \f{(e_0)_n (\chi\om_n)}{N_n} (\rd_t e^{2\gamma_n}) A(\f{(e_0)_n (\chi\om_n)}{N_n}) +  \f{(e_0)_n (\chi\om_n)}{N_n} (\rd_i (\bt_n^i e^{2\gamma_n})) A(\f{(e_0)_n (\chi\om_n)}{N_n}) \} \,\ud x\\
&\: + \f 14\int_{\mathbb R^{2+1}} e^{-4\psi_0} \{ N_n\rd_i (\chi\om_n) \de^{ij} \{ [A, \rd_j](\f{(e_0)_n(\chi\om_n)}{N_n})\} + [\rd_i (\chi\om_n)] \de^{ij} N_n \{[(e_0)_n,A](\f{\rd_j(\chi\om_n)}{N_n})\} \}\,\ud x\\
&\: - \f 14\int_{\mathbb R^{2+1}} e^{-4\psi_0} \{ [\rd_i (\chi\om_n)] \de^{ij} (\rd_k \bt^k_n) N_n [A(\f{\rd_j(\chi\om_n)}{N_n})] + [\rd_i (\chi\om_n)] \de^{ij} [(e_0)_n N_n] [A(\f{\rd_j(\chi\om_n)}{N_n})] \} \,\ud x  \\
&\: + \f 14\int_{\mathbb R^{2+1}} e^{-4\psi_0} [\rd_i (\chi\om_n)] \de^{ij} N_n [A(\f{((e_0)_n(\chi\om_n))(\rd_j N_n)}{N_n^2} +\f{(\rd_j\bt_n^k)\rd_k(\chi\om_n)}{N_n} - \f{((e_0)_n N_n)\rd_j(\chi\om_n)}{N_n^2})] \,\ud x \\
&\: + \f 14\int_{\mathbb R^{2+1}} e^{-4\psi_0} \{ (\rd_i \bt_n^k) [\rd_k (\chi\om_n)] \de^{ij}  N_n [A(\f{\rd_j(\chi\om_n)}{N_n})] + [(e_0)_n (\chi\om_n)] (\rd_i N_n) \de^{ij}  [A(\f{\rd_j(\chi\om_n)}{N_n})] \}\,\ud x \\
&\: \underbrace{- \int_{\mathbb R^{2+1}} e^{-4\psi_0} e^{2\gamma_n} ((e_0)_n\psi_0) \f{(e_0)_n (\chi\om_n)}{N_n}  A(\f{(e_0)_n (\chi\om_n)}{N_n}) \,\ud x}_{=:\mathrm{extra}_1} \\
&\: + \underbrace{\int_{\mathbb R^{2+1}} e^{-4\psi_0} \de^{ij} N_n (\rd_i\psi_0) [ (e_0)_0 (\chi\om_n)]  [A(\f{\rd_j(\chi\om_n)}{N_n})] \, \ud x }_{=:\mathrm{extra}_2}\\
&\: \underbrace{- \int_{\mathbb R^{2+1}} e^{-4\psi_0} ((e_0)_n \psi_0)[ \rd_i (\chi\om_n)] \de^{ij} N_n [A(\f{\rd_j(\chi\om_n)}{N_n})] \, \ud x }_{=:\mathrm{extra}_3}\\
&\: + \underbrace{\int_{\mathbb R^{2+1}} e^{-4\psi_0} (\rd_j\psi_0)[\rd_i (\chi\om_n)] \de^{ij} N_n A( \f{(e_0)_n(\chi\om_n)}{N_n}) \, \ud x}_{=:\mathrm{extra}_4}.
\end{split}
\end{equation}
\end{proposition}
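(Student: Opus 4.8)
\emph{Proof plan.} The identities \eqref{eq:energy.id.n.1} and \eqref{eq:energy.id.n.3} are purely algebraic: they are obtained by integrating by parts term by term, and the computation is identical in structure to the proof of Proposition~\ref{prop:energy.id.0.1}, with $(\psi_0,\om_0,g_0,N_0,\gamma_0,\bt_0,(e_0)_0)$ replaced throughout by $(\psi_n,\om_n,g_n,N_n,\gamma_n,\bt_n,(e_0)_n)$. No estimate from assumption (5) enters. Since $\psi_n,\om_n\in C^4$ and $g_n$ is a $C^4$ metric, $\chi$ is smooth and compactly supported in $\Omega'$, and $A$ is a $0$-th order pseudo-differential operator (so that $A$ and $A^*$ are bounded on $L^2$ and map $\mathcal S(\mathbb R^{2+1})$ to $C^\infty$), every integral below is finite and every integration by parts is justified with no boundary contribution.

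\textbf{Step 1: the $\psi_n$ identity.} I would expand $\Box_{g_n,A}(\chi\psi_n)$ via its definition \eqref{def:Box.gn} into the three pieces coming from $\rd_t[e^{2\gamma_n}A(\cdots)]$, from $\rd_i[\bt_n^i e^{2\gamma_n}A(\cdots)]$, and from $\de^{ij}\rd_i[N_n^2 A(\cdots)]$, and pair $\f{(e_0)_n(\chi\psi_n)}{N_n}\,\Box_{g_n,A}(\chi\psi_n)$ against $\mathrm{dVol}_{g_n}=N_n e^{2\gamma_n}\,\ud x$; this reproduces exactly the three integrals whose integration by parts is performed in \eqref{energy.ibp.1}--\eqref{energy.ibp.3} under the above substitutions. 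In each, one (i) moves a derivative off $A(\cdots)$, (ii) when that derivative is spatial, moves it inside $(e_0)_n$ using $\rd_i((e_0)_n\phi)=(e_0)_n\rd_i\phi+(\rd_i\bt_n^k)\rd_k\phi$, and (iii) commutes $A$ back past the relevant $\rd_j$ or $(e_0)_n$; step (iii) is what produces the commutator terms $\rm{easy}_3$ (the $[A,\rd_j]$ term) and $\rm{hard}$ (the $[(e_0)_n,A]$ term). The three ``leading'' terms — those in which $A(\f{(e_0)_n(\chi\psi_n)}{N_n})$ multiplies a plain derivative of the rescaled flux — recombine, using \eqref{eq:Box.gn} together with $\sqrt{-\det g_n}=N_n e^{2\gamma_n}$, into (the negative of) the second integral on the left-hand side of \eqref{eq:energy.id.n.1}; moving it across yields \eqref{eq:energy.id.n.1}, with every surviving term on the right read off directly from \eqref{energy.ibp.1}--\eqref{energy.ibp.3}.

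\textbf{Step 2: the $\om_n$ identity, and the main point.} The computation for \eqref{eq:energy.id.n.3} is the same, applied to $\chi\om_n$ and carried out with the fixed weight $e^{-4\psi_0}$ sitting inside every integral; the only new feature is that whenever an integration by parts pushes a derivative onto a product containing this weight, the Leibniz rule produces an extra factor $\rd_\mu(e^{-4\psi_0})=-4(\rd_\mu\psi_0)\,e^{-4\psi_0}$. Tracking these across the three integrations by parts — recalling $(e_0)_n\psi_0=\rd_t\psi_0-\bt_n^i\rd_i\psi_0$ — produces precisely the four terms $\rm{extra}_1,\dots,\rm{extra}_4$, the remaining terms being the $e^{-4\psi_0}$-weighted analogues of those in the $\psi_n$ identity (in particular $\psi_0$, not $\psi_n$, enters the extra terms, since the weight is $e^{-4\psi_0}$ rather than $e^{-4\psi_n}$). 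There is no genuine obstacle: the argument is bookkeeping. The only points requiring care are keeping straight the $n$-dependence of $(e_0)_n=\rd_t-\bt_n^i\rd_i$, $N_n$, $\gamma_n$, $\bt_n$ — in particular that integrating by parts against the timelike direction $(e_0)_n$ also differentiates its coefficients $\bt_n^i$, which is the origin of several of the $\rm{easy}_j$ terms — and consistently choosing which of the two factors $A$ is moved onto, so that the commutators come out with the stated signs and positions. Since the algebra is line-for-line that of Proposition~\ref{prop:energy.id.0.1}, omitting the full details (as the paper does) is legitimate.
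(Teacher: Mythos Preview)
Your proposal is correct and matches the paper's approach exactly: the paper omits the proof entirely, stating only that it is ``essentially the same as those for Propositions~\ref{prop:energy.id.0.1} and \ref{prop:energy.id.0.2}'', and your outline reproduces precisely this substitution argument, including the key observation that the $e^{-4\psi_0}$ weight generates the four ``extra'' terms in \eqref{eq:energy.id.n.3} via the Leibniz rule.
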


\begin{proposition}\label{prop:energy.id.n.2}
Let $$F^\psi_n:=2 g_n^{-1} (\ud \chi,\ud \psi_n) + \psi_n \Box_{g_n} \chi -\f 12 \chi e^{-4\psi_n} g^{-1}_n (\ud \om_n, \ud \om_n).$$ Then
\begin{equation}\label{eq:energy.id.n.2}
\begin{split}
&\: \int_{\mathbb R^{2+1}} \f{(e_0)_n (\chi\psi_n)}{N_n} \Box_{g_n,A}(\chi\psi_n)  \,\mathrm{dVol}_{g_n} + \int_{\mathbb R^{2+1}} A\Big(\f{(e_0)_n (\chi\psi_n)}{N_n}\Big)  \Box_{g_0}(\chi\psi_n)  \,\mathrm{dVol}_{g_n}\\
=&\: \underbrace{\int_{\mathbb R^{2+1}} \f{(e_0)_n (\chi\psi_n)}{N_n} \Big(\Box_{g_n,A} (\chi\psi_n) - \f{1}{\sqrt{-\det g_n}} A (\sqrt{-\det g_n}\Box_{g_n} (\chi\psi_n))\Big) \,\mathrm{dVol}_{g_n}}_{=:\mathrm{main commutator}} \\
&\: + \underbrace{\int_{\mathbb R^{2+1}} \f{(e_0)_n (\chi\psi_n)}{N_n} A (\sqrt{-\det g_n} F_n^\psi) \, \ud x}_{=:\mathrm{trilinear}_1} + \underbrace{\int_{\mathbb R^{2+1}} A\Big(\f{(e_0)_n (\chi\psi_n)}{N_n}\Big)  F_n^\psi  \,\sqrt{-\det g_n} \,\ud x}_{=:\mathrm{trilinear}_2}.
\end{split}
\end{equation}
Similarly, let $$F^\om_n = 2 g_n^{-1} (\ud \chi, \ud \om_n) + \om_n \Box_{g_n} \chi + 4 \chi g^{-1}_n (\ud \om_n, \ud \psi_n).$$ Then
\begin{equation}\label{eq:energy.id.n.4}
\begin{split}
&\: \f 14\int_{\mathbb R^{2+1}} e^{-4\psi_0}\f{(e_0)_n (\chi\om_n)}{N_n} \Box_{g_n,A}(\chi\om_n)  \,\mathrm{dVol}_{g_n} + \f 14 \int_{\mathbb R^{2+1}} e^{-4\psi_0} A\Big(\f{(e_0)_n (\chi\om_n)}{N_n}\Big)  \Box_{g_n}(\chi\om_n)  \,\mathrm{dVol}_{g_n}\\
=&\: \f 14\int_{\mathbb R^{2+1}} e^{-4\psi_0} \f{(e_0)_n (\chi\om_n)}{N_n} \Big(\Box_{g_n,A} (\chi\om_n) - \f{1}{\sqrt{-\det g_n}} A (\sqrt{-\det g_n}\Box_{g_n} (\chi\om_n))\Big) \,\mathrm{dVol}_{g_n} \\
&\: + \underbrace{\f 14 \int_{\mathbb R^{2+1}} e^{-4\psi_0}\f{(e_0)_n (\chi\om_n)}{N_n} A (\sqrt{-\det g_n} F_n^\om) \, \ud x}_{=:\mathrm{trilinear}_3} + \underbrace{\f 14\int_{\mathbb R^{2+1}} {e^{-4\psi_0}} A\Big(\f{(e_0)_n (\chi\om_n)}{N_n}\Big)  F_n^\om  \,\sqrt{-\det g_n} \,\ud x}_{=:\mathrm{trilinear}_4}.
\end{split}
\end{equation}

\end{proposition}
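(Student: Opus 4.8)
The plan is to obtain \eqref{eq:energy.id.n.2} and \eqref{eq:energy.id.n.4} by the same purely algebraic manipulation as in Proposition~\ref{prop:energy.id.0.2}, the only difference being that one now invokes the Einstein--wave map system \eqref{eq:U1vac} for the sequence $(\psi_n,\om_n,g_n)$ rather than for the limit. First I would record the consequences of \eqref{eq:U1vac}, namely $\Box_{g_n}\psi_n = -\f 12 e^{-4\psi_n} g_n^{-1}(\ud\om_n,\ud\om_n)$ and $\Box_{g_n}\om_n = 4 g_n^{-1}(\ud\om_n,\ud\psi_n)$, and combine these with the Leibniz identity $\Box_{g_n}(\chi\phi) = \chi\Box_{g_n}\phi + \phi\Box_{g_n}\chi + 2 g_n^{-1}(\ud\chi,\ud\phi)$ to deduce the cut-off wave equations $\Box_{g_n}(\chi\psi_n) = F^\psi_n$ and $\Box_{g_n}(\chi\om_n) = F^\om_n$, with $F^\psi_n$ and $F^\om_n$ exactly as defined in the statement.

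Granting these two identities, I would next rewrite the left-hand side of \eqref{eq:energy.id.n.2}. In its first integral I would add and subtract $\f{1}{\sqrt{-\det g_n}}\, A(\sqrt{-\det g_n}\,\Box_{g_n}(\chi\psi_n))$: the difference contributes precisely the first term on the right-hand side, while the remaining piece, after cancelling the volume factor via $\mathrm{dVol}_{g_n} = \sqrt{-\det g_n}\,\ud x$ and then substituting $\Box_{g_n}(\chi\psi_n) = F^\psi_n$, becomes the second term ($\mathrm{trilinear}_1$). In the second integral on the left-hand side, substituting $\Box_{g_n}(\chi\psi_n) = F^\psi_n$ directly produces the third term ($\mathrm{trilinear}_2$). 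The identity \eqref{eq:energy.id.n.4} is obtained the same way from $\Box_{g_n}(\chi\om_n) = F^\om_n$; the fixed weight $e^{-4\psi_0}$ and the overall factor $\f 14$ are inert in this computation and are merely carried along. (The companion identity \eqref{eq:energy.id.n.1} is the integration-by-parts form of the same energy identity and is proved exactly as Proposition~\ref{prop:energy.id.0.1} with the index $0$ replaced throughout by $n$, so it may likewise be omitted.)

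There is no genuine obstacle in this proposition itself: both \eqref{eq:energy.id.n.2} and \eqref{eq:energy.id.n.4} are formal rewritings valid for any smooth solution $(\psi_n,\om_n,g_n)$ of \eqref{eq:U1vac}, and the only care required is the bookkeeping of the $\sqrt{-\det g_n}$ factors when passing between $\mathrm{dVol}_{g_n}$ and $\ud x$. The substance lies further on: the reason for recording \eqref{eq:energy.id.n.1}--\eqref{eq:energy.id.n.4} is that, upon subtracting the corresponding identities for $(\psi_0,\om_0)$ and letting $n\to+\infty$, the first term on the right of \eqref{eq:energy.id.n.2} should reconstruct the transport operator acting on the microlocal defect measure, while each of the $\mathrm{trilinear}$ terms --- together with each $\mathrm{easy}$, $\mathrm{medium}$, $\mathrm{hard}$ and $\mathrm{extra}$ term from \eqref{eq:energy.id.n.1} --- should tend to $0$. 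Proving those limits, in particular the vanishing of the trilinear contributions, is the heart of Sections~\ref{sec:energy.id.n.1}--\ref{sec:cc}, and is exactly where the trilinear compensated compactness phenomenon enters.
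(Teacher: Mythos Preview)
Your proof is correct and matches the paper's approach exactly: the paper omits the proof of this proposition, stating only that it is ``essentially the same as'' Proposition~\ref{prop:energy.id.0.2}, which in turn is proved in one line by invoking $\Box_{g_0}(\chi\psi_0)=F_0^\psi$ and $\Box_{g_0}(\chi\om_0)=F_0^\om$. Your add-and-subtract argument using $\Box_{g_n}(\chi\psi_n)=F_n^\psi$ and $\Box_{g_n}(\chi\om_n)=F_n^\om$ (derived from \eqref{eq:U1vac} and the Leibniz rule) is precisely that.

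One small remark on your closing commentary: it is not quite true that, after subtracting the $n=0$ identities, each $\mathrm{easy}$, $\mathrm{extra}$, and $\mathrm{trilinear}$ term tends to $0$. In the paper (see Propositions~\ref{prop:RHS.1}, \ref{prop:RHS.3}, \ref{prop:limit.nonlinear.terms}) several of these differences converge to nonzero microlocal-defect-measure contributions, and it is the \emph{combination} of all of these, together with the main commutator, that assembles into the transport operator. This does not affect the proof of the present proposition, which is purely algebraic.
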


\textbf{Our goal now is to compute the limit of the RHS of \eqref{eq:energy.id.n.1}, \eqref{eq:energy.id.n.3}, \eqref{eq:energy.id.n.2} and \eqref{eq:energy.id.n.4}} as $n\to +\infty$ (allowing possibly passing to a subsequence). We then compare the resulting expression with the RHS of \eqref{eq:energy.id.0.1}, \eqref{eq:energy.id.0.3}, \eqref{eq:energy.id.0.2} and \eqref{eq:energy.id.0.4} to derive an equation for $\ud \nu$. This task will be the goal of {Sections~\ref{sec:energy.id.n.1}}--\ref{sec:final} below.

\section{Terms in Proposition~\ref{prop:energy.id.n.1}}\label{sec:energy.id.n.1}

We continue to work under the assumptions of Theorem~\ref{thm:main} and the reductions in Sections~\ref{sec:compact.reduction} and \ref{sec:reduction}. As above, {let $A$ be a $0$-th order pseudo-differential operator given by $A = b(x) \widetilde{m}(\f 1i\nabla)$, where the principal symbol $a(x,\xi) = b(x) m(\xi)$ (with $m(\xi) = \widetilde{m}(\xi)$ for $|\xi|\geq 1$) is real and supported in $T^*\Omega${,} $m(\xi)$ is homogeneous of order $0$, and $m$ and $\widetilde{m}$ are both even.}

Our goal in this section is to compute the limit (as $n\to +\infty$) of the terms on the RHSs of \eqref{eq:energy.id.n.1} and \eqref{eq:energy.id.n.3} in Proposition~\ref{prop:energy.id.n.1}. We will focus mainly on \eqref{eq:energy.id.n.1}. The terms in \eqref{eq:energy.id.n.3} can be treated mostly in a similar manner; we will explain the additional details in Proposition~\ref{prop:RHS.3}.

The terms on the RHS of \eqref{eq:energy.id.n.1} labeled as ``$\mathrm{easy}$'' will be treated in \textbf{Section~\ref{sec:easy}}. The terms on RHS of \eqref{eq:energy.id.n.1} labeled as ``$\mathrm{medium}$'' will then be treated in \textbf{Section~\ref{sec:not.so.easy}}. Note that the ``$\mathrm{hard}$'' terms will not be dealt with but need to be combined with other terms later. We then conclude the section in \textbf{Section~\ref{sec:metric.conclude}}.

\subsection{The easier terms}\label{sec:easy}

\begin{proposition}
As $n\to +\infty$, for $\mathrm{easy}_i$ being the terms in \eqref{eq:energy.id.n.1},
\begin{equation*}
\begin{split}
\sum_{i=1}^8 \mathrm{easy}_i \to &\: \mbox{corresponding terms on the RHS of \eqref{eq:energy.id.0.1}} \\
&\: -2\int_{S^*\mathbb R^{2+1}} \Big((g^{-1}_0)^{\alp\bt} (\rd_\bt X^\gamma) \xi_\alp \xi_\gamma -\f 12 X^\mu \rd_\mu (g^{-1}_0)^{\alp\gamma}\xi_\alp \xi_\gamma\Big) a \,\f{d\nu^\psi}{|\xi|^2} \\
&\: + \int_{S^*\mathbb R^{2+1}} \Big[-\de^{ij} \xi_i (\xi_t-\bt^k_0\xi_k) \rd_{x^j} a \Big]\, \f{e^{-2\gamma_0}}{N_0}\f{d\nu^\psi}{|\xi|^2},
\end{split}
\end{equation*}
where $X = \f 1{N_0} (\rd_t -\bt^i_0 \rd_i)$.
\end{proposition}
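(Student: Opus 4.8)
## Proof proposal for the limit of the ``easy'' terms

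The plan is to compute the $n \to +\infty$ limit of each of the eight terms $\mathrm{easy}_1, \dots, \mathrm{easy}_8$ separately, tracking which limits are ``non-defect'' (reproducing the corresponding term in the RHS of \eqref{eq:energy.id.0.1}) and which produce a genuine contribution against $\ud\nu^\psi$. The basic mechanism is always the same: each $\mathrm{easy}_i$ is (up to harmless commutator/self-adjointness errors controlled by Lemma~\ref{lem:PSIDOs}) a bilinear expression of the form $\int \rd_\alp(\chi\psi_n)\, \mathcal{A}^{\alp\bt}\, \rd_\bt(\chi\psi_n)$ for some zeroth-order operator $\mathcal{A}^{\alp\bt}$ with principal symbol depending on the (smooth, $n$-independent) metric coefficients $g_0$ and on $a$; then Corollary~\ref{cor:nu} applies and splits the limit into the $(\chi\psi_0)$-piece (which assembles into the RHS of \eqref{eq:energy.id.0.1}) plus an integral against $\f{\ud\nu^\psi}{|\xi|^2}$ of the corresponding principal symbol times $\xi_\alp\xi_\bt$.

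Concretely, I would first replace $(e_0)_n, N_n, \bt^i_n, \gamma_n$ by $(e_0)_0, N_0, \bt^i_0, \gamma_0$ throughout $\mathrm{easy}_1,\dots,\mathrm{easy}_8$. The errors are of two types: (i) differences of metric coefficients, which by \eqref{assumption.0} are $O(\lambda_n)$ in $L^\infty$ on $\Omega'''$ and hence give $O(\lambda_n)$ after pairing against two factors of derivatives of $\chi\psi_n$ (bounded in $L^2$ by \eqref{assumption.1}); (ii) terms where a \emph{time derivative} $\rd_t(\gamma_n - \gamma_0)$ or a spatial derivative $\rd_i(\mfg_n - \mfg_0)$ appears — these are handled by the improved estimates in Proposition~\ref{prop:spatial.imp} and Proposition~\ref{prop:dtgamma.imp}, which give the $\lambda_n^{1/2}$ gain needed to kill the contribution (this is exactly where $\mathrm{easy}_4$, $\mathrm{easy}_5$, $\mathrm{easy}_6$, $\mathrm{easy}_8$ and the $\rd_t e^{2\gamma_n}$ part of $\mathrm{easy}_1$ become negligible in the defect, reproducing only their $g_0$-counterparts via Corollary~\ref{cor:nu}). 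After this reduction every term is $n$-dependent only through $\chi\psi_n$, and I apply Corollary~\ref{cor:nu} term by term. The bookkeeping is then: the symbols from $\mathrm{easy}_1 + \mathrm{easy}_2$ (the $\rd_t e^{2\gamma_0}$ and $\rd_i(\bt^i_0 e^{2\gamma_0})$ terms) together with $\mathrm{easy}_3$ (the commutator $[A,\rd_j]$ term, whose principal symbol is $-i\,\rd_{x^j}a$ times $\xi$-factors) are exactly what assembles, using $\xi_t^2 = \delta^{ij}\xi_i\xi_j$ on $\mathrm{supp}\,\ud\nu$ and the explicit form of $X = \f1{N_0}(\rd_t - \bt^i_0\rd_i)$, into the transport symbol $(g_0^{-1})^{\alp\bt}(\rd_\bt X^\gamma)\xi_\alp\xi_\gamma - \f12 X^\mu\rd_\mu(g_0^{-1})^{\alp\gamma}\xi_\alp\xi_\gamma$ (with the factor $-2$), plus the leftover $-\de^{ij}\xi_i(\xi_t - \bt^k_0\xi_k)\rd_{x^j}a \cdot \f{e^{-2\gamma_0}}{N_0}$ piece; the remaining $\mathrm{easy}_i$ contribute only their $g_0$-analogues on the RHS of \eqref{eq:energy.id.0.1}.

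The main obstacle — and the only genuinely delicate bookkeeping — is the \emph{symbol algebra}: showing that the sum of principal symbols coming from $\mathrm{easy}_1, \mathrm{easy}_2, \mathrm{easy}_3$ (after clearing $\rd_t$ vs.\ $(e_0)_0$-derivatives, using that on $S^*\Omega$ restricted to $\ud\nu^\psi$ one has $(g_0^{-1})^{\alp\bt}\xi_\alp\xi_\bt = 0$ by Proposition~\ref{prop:psiom.localized}, and using the identity $e^{2\gamma_0}/N_0 = $ the relevant metric density factor) collapses to exactly the two displayed terms. I would do this by writing $X^\gamma = \f1{N_0}(\delta^\gamma_t - \bt^i_0\delta^\gamma_i)$ explicitly, computing $(g_0^{-1})^{\alp\bt}\rd_\bt X^\gamma\,\xi_\alp\xi_\gamma$ and $X^\mu\rd_\mu(g_0^{-1})^{\alp\gamma}\xi_\alp\xi_\gamma$ in coordinates from \eqref{g.form}, and matching term by term against the symbols produced by integrating by parts in $\mathrm{easy}_1$–$\mathrm{easy}_3$; the cross-checks $\xi_t = \bt^i_0\xi_i \pm (\text{null relation})$ and homogeneity degree $+1$ of the transport symbol are the sanity checks. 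A secondary (routine) point is justifying that all commutators $[A,\rd_j]$, $[(e_0)_0, A]$, and the self-adjointness defect $A - A^*$ appearing when moving $A$ across a derivative are order $\leq 0$ with the stated principal symbols and hence, after pairing against two $L^2$-bounded derivative factors and taking $n\to\infty$ with Corollary~\ref{cor:nu}, land correctly; this is immediate from Lemma~\ref{lem:PSIDOs}.1–3 and Proposition~\ref{prop:real.output}.
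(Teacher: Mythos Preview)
Your overall strategy matches the paper's: first replace $g_n$-coefficients by $g_0$-coefficients using \eqref{assumption.0}, Proposition~\ref{prop:spatial.imp} and Proposition~\ref{prop:dtgamma.imp}, then apply Corollary~\ref{cor:nu} term by term. However, your bookkeeping of which terms contribute to the defect measure is wrong. You assert that ``the remaining $\mathrm{easy}_i$ contribute only their $g_0$-analogues on the RHS of \eqref{eq:energy.id.0.1},'' i.e.\ that $\mathrm{easy}_4,\dots,\mathrm{easy}_8$ produce no $\ud\nu^\psi$-term. This is false: after the replacement $g_n\to g_0$, each of $\mathrm{easy}_4,\dots,\mathrm{easy}_8$ is still a bilinear expression of the form $\int \rd_\alp(\chi\psi_n)\, B(x)\, A[\rd_\bt(\chi\psi_n)]\,\ud x$ with $B$ smooth and $n$-independent, and Corollary~\ref{cor:nu} produces a nonzero defect contribution from each one. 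Indeed these contributions are essential: the transport symbol $(g_0^{-1})^{\alp\bt}(\rd_\bt X^\gamma)\xi_\alp\xi_\gamma - \tfrac12 X^\mu\rd_\mu(g_0^{-1})^{\alp\gamma}\xi_\alp\xi_\gamma$ contains factors of $\rd_j N_0$ and $\rd_i\bt_0^k$, and these can only come from $\mathrm{easy}_5$--$\mathrm{easy}_8$; the terms $\mathrm{easy}_1,\mathrm{easy}_2$ involve only $\rd_t e^{2\gamma_0}$ and $\rd_i(\bt_0^i e^{2\gamma_0})$ and cannot by themselves produce those pieces.

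The correct split (which the paper carries out) is the opposite of yours: the defect contributions from \emph{all} of $\mathrm{easy}_1,\mathrm{easy}_2,\mathrm{easy}_4,\dots,\mathrm{easy}_8$ are collected and, after using the null relation $\tfrac{1}{N_0^2}(\xi_t-\bt_0^k\xi_k)^2 = e^{-2\gamma_0}\de^{ij}\xi_i\xi_j$ on $\mathrm{supp}\,\ud\nu^\psi$ (not $\xi_t^2=\de^{ij}\xi_i\xi_j$ as you wrote), are matched against the explicit coordinate expansion of the transport symbol to give the $-2\int\cdots$ term; while $\mathrm{easy}_3$ \emph{alone} (via the principal symbol $-\rd_{x^j}a$ of $[A,\rd_j]$) yields the separate $-\de^{ij}\xi_i(\xi_t-\bt_0^k\xi_k)\rd_{x^j}a$ piece. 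Your proposal as written would not close because the symbol identity you claim for $\mathrm{easy}_1+\mathrm{easy}_2+\mathrm{easy}_3$ alone simply does not hold.
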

\begin{proof}
\pfstep{Step~1: Taking limits for the metric quantities} In all the ``$\mathrm{easy}_i$'' terms for $i\neq 3$, note that we have the appearance of the metric components $\gamma_n$, $\log N_n$, $\bt^j_n$ and the following derivatives $\rd_i\gamma_n$, $\rd_i \log N_n$, $\rd_i \bt^j_n$ and $\rd_t\gamma_n$. In other words, there are \underline{no} appearances of $\rd_t \bt_n^j$ and $\rd_t \log N_n$. 

Therefore, by the estimates in \eqref{assumption.0} and \eqref{assumption.1} and the convergence statements for $\rd_i\gamma_n$, $\rd_i \log N_n$, $\rd_i \bt^j_n$ and $\rd_t\gamma_n$ in Propositions~\ref{prop:spatial.imp} and \ref{prop:dtgamma.imp}, all the $\mathrm{easy}_i$ terms have the same limit (as $n\to +\infty$) if we replace all the $(\gamma_n, \log N_n, \bt^j_n)$ by $(\gamma_0, \log N_0, \bt^j_0)$. {(Notice that we can apply Propositions~\ref{prop:spatial.imp} and \ref{prop:dtgamma.imp} here disregarding the factors $\widetilde{\chi}$ because $\widetilde{\chi} \equiv 1$ on $\mathrm{supp}(\chi)$.)} 
For instance, 
$$- \int_{\mathbb R^{2+1}} \f{(e_0)_n (\chi\psi_n)}{N_n} (\rd_t e^{2\gamma_n}) A(\f{(e_0)_n (\chi\psi_n)}{N_n}) \,\ud x + \int_{\mathbb R^{2+1}} \f{(e_0)_0 (\chi\psi_n)}{N_0} (\rd_t e^{2\gamma_0}) A(\f{(e_0)_0 (\chi\psi_n)}{N_0}) \,\ud x \to 0.$$
Similarly for other ``$\mathrm{easy}_i$'' terms with $i\neq 3$.

The $i=3$ term is also similar. We only need to note additionally by Lemmas~\ref{lem:PSIDOs}.2 and \ref{lem:PSIDOs}.4 $[A,\rd_j]$ is a bounded $L^2\to L^2$ operator \emph{independent of $n$}. Hence,
$$\int_{\mathbb R^{2+1}} N_n\rd_i (\chi\psi_n) \de^{ij} \{ [A, \rd_j](\f{(e_0)_n(\chi\psi_n)}{N_n})\} \,\ud x - \int_{\mathbb R^{2+1}} N_0\rd_i (\chi\psi_n) \de^{ij} \{ [A, \rd_j](\f{(e_0)_0(\chi\psi_n)}{N_0})\} \,\ud x \to 0.$$

\pfstep{Step~2: Using the microlocal defect measures} After the reduction in Step~1, we now use Corollary~\ref{cor:nu} to take the $n\to +\infty$ limits. We treat the $i\neq 3$ (Step~2(a)) and $i=3$ cases (Step~2(b)) separately.

\pfstep{Step~2(a): All terms except for $\mathrm{easy}_3$} Consider now the sum $\displaystyle\sum_{\substack{1\leq i \leq 8 \\ i\neq 3}} \mathrm{easy}_i$. Using Step~1, Corollary~\ref{cor:nu}, and recalling that $\ud x = \f{1}{\sqrt{-\det g_0}}\, \mbox{dVol}_{g_0} = \f{e^{-2\gamma_0}}{N_0}\,\mbox{dVol}_{g_0}$, we see that $\displaystyle\sum_{\substack{1\leq i \leq 8 \\ i\neq 3}} \mathrm{easy}_i$ converges to the corresponding terms on the RHS of \eqref{eq:energy.id.0.1} \emph{plus} the following:
\begin{equation}\label{most.easy}
\begin{split}
&\:\int_{S^*\mathbb R^{2+1}}  (\xi_t - \bt_0^k\xi_k)^2 \f{e^{-2\gamma_0}[-(\rd_t e^{2\gamma_0}) + \rd_i (\bt_0^i e^{2\gamma_0})]}{N_0^3}  a \,\f{d\nu^\psi}{|\xi|^2}  - \int_{S^*\mathbb R^{2+1}} \de^{ij} \xi_i\xi_j \f{e^{-2\gamma_0}(\rd_k \bt^k_0)}{N_0} a \,\f{d\nu^\psi}{|\xi|^2} \\
&\: + \int_{S^*\mathbb R^{2+1}} \de^{ij} \xi_i (\xi_t-\bt^k_0\xi_k)(\f{e^{-2\gamma_0}(\rd_j N_0)}{N_0^2}) a \,\f{d\nu^\psi}{|\xi|^2} + \int_{S^*\mathbb R^{2+1}} \de^{ij} \xi_i \xi_k \f{e^{-2\gamma_0}(\rd_j\bt_n^k)}{N_0} a \,\f{d\nu^\psi}{|\xi|^2} \\
&\: + \int_{S^*\mathbb R^{2+1}}  \de^{ij}  \xi_j \xi_k \f{e^{-2\gamma_0}(\rd_i \bt_0^k)}{N_0} a \,\f{d\nu^\psi}{|\xi|^2} + \int_{S^*\mathbb R^{2+1}}  \de^{ij}\xi_j (\xi_t-\bt^k_0\xi_k) \f{e^{-2\gamma_0}\rd_i N_0}{N_0^2} a \,\f{d\nu^\psi}{|\xi|^2}.
\end{split}
\end{equation}
We now use the fact that $\f 1{N_0^2} (\xi_t -\bt^k_0 \xi_k)^2 = e^{-2\gamma_0} \de^{ij} \xi_i \xi_j$ on the support of $\ud \nu^\psi$ (by Proposition~\ref{prop:psiom.localized}) to derive
\begin{equation}\label{most.easy.2}
\begin{split}
\mbox{\eqref{most.easy}} =&\: - \int_{S^*\mathbb R^{2+1}}  (\xi_t - \bt_0^k\xi_k)^2 \f{e^{-2\gamma_0}(\rd_t - \bt^i_0\rd_i) e^{2\gamma_0} }{N_0^3}  a \,\f{d\nu^\psi}{|\xi|^2} + 2\int_{S^*\mathbb R^{2+1}} \de^{ij} \xi_i (\xi_t-\bt^k_0\xi_k)\f{e^{-2\gamma_0}(\rd_j N_0)}{N_0^2} a \,\f{d\nu^\psi}{|\xi|^2} \\
&\: + 2\int_{S^*\mathbb R^{2+1}} \de^{ij} \xi_i \xi_k \f{e^{-2\gamma_0}(\rd_j\bt_0^k)}{N_0} a \,\f{d\nu^\psi}{|\xi|^2}.
\end{split}
\end{equation}

For $X = \f 1{N_0} (\rd_t - \bt_0^i \rd_i)$, let us also compute {(using \eqref{g.inverse})} that 
\begin{equation}\label{deformation.1}
\begin{split}
 (g^{-1}_0)^{\alp\bt} \rd_\bt X^\gamma \xi_\alp \xi_\gamma  
= &\: -\f 1 {N_0^2} (\rd_t - \bt^j_0 \rd_j)(\f 1{N_0}) \xi_t (\xi_t - \bt^k_0 \xi_k) + \f 1 {N_0^2} (\rd_t -\bt^j_0\rd_j) (\f {\bt_0^i}{N_0}) \xi_i (\xi_t - \bt^k_0 \xi_k) \\
&\: + e^{-2\gamma_0} \de^{ij} \rd_i (\f 1{N_0}) \xi_j \xi_t - e^{-2\gamma_0} \de^{ij}  \rd_i (\f {\bt_0^k}{N_0}) \xi_j \xi_k \\
=&\: + \f 1{N_0^4} ((\rd_t - \bt^j_0 \rd_j)N_0) (\xi_t - \bt^k_0 \xi_k)^2 + \f 1{N_0^3} ((\rd_t -\bt^j_0\rd_j) \bt_0^i)\xi_i (\xi_t - \bt^k_0 \xi_k) \\
&\: - \f 1{N_0^2} e^{-2\gamma_0} \de^{ij} (\rd_i N_0) \xi_j (\xi_t - \bt^k_0 \xi_k) - \f 1{N_0} e^{-2\gamma_0} \de^{ij}  (\rd_i \bt^k_0)\xi_j \xi_k,
\end{split}
\end{equation}
and
\begin{equation}\label{deformation.2}
\begin{split}
&\: \f 12 X^\mu \rd_\mu (g^{-1}_0)^{\alp\gamma}\xi_\alp \xi_\gamma \\
 = &\: - \f 12 \f{1}{N_0} ((\rd_t - \bt^k_0\rd_k)\f 1{N_0^2}) \xi_t^2 + \f{1}{N_0} ((\rd_t - \bt^k_0\rd_k)\f {\bt^i_0}{N_0^2}) \xi_t \xi_i \\
&\: + \f 12 \f{1}{N_0} ((\rd_t - \bt^k_0\rd_k)e^{-2\gamma_0}) \de^{ij} \xi_i\xi_j - \f 12 \f{1}{N_0} ((\rd_t - \bt^k_0\rd_k)(\f{\bt^i_0\bt^j_0}{N_0^2})) \xi_i\xi_j \\
= & \f 1{N_0^4} ((\rd_t - \bt^k_0\rd_k)N_0) (\xi_t-\bt^i_0 \xi_i)^2 + \f{1}{N_0^3} ((\rd_t - \bt^k_0\rd_k)\bt^i_0) (\xi_t -\bt^j_0 \xi_j) \xi_i + \f 12 \f{1}{N_0} ((\rd_t - \bt^k_0\rd_k)e^{-2\gamma_0}) \de^{ij} \xi_i\xi_j \\
= & \f 1{N_0^4} ((\rd_t - \bt^k_0\rd_k)N_0) (\xi_t-\bt^i_0 \xi_i)^2 + \f{1}{N_0^3} ((\rd_t - \bt^k_0\rd_k)\bt^i_0) (\xi_t -\bt^j_0 \xi_j) \xi_i - \f 12 \f{e^{-4\gamma_0}}{N_0} ((\rd_t - \bt^k_0\rd_k)e^{2\gamma_0}) \de^{ij} \xi_i\xi_j .
\end{split}
\end{equation}
Subtracting \eqref{deformation.2} from \eqref{deformation.1}, it follows that 
\begin{equation*}
\begin{split}
&\: (g^{-1}_0)^{\alp\bt} (\rd_\bt X^\gamma) \xi_\alp \xi_\gamma -\f 12 X^\mu \rd_\mu (g^{-1}_0)^{\alp\gamma}\xi_\alp \xi_\gamma \\
= &\: - \f 1{N_0^2} e^{-2\gamma_0} \de^{ij} (\rd_i N_0) \xi_j (\xi_t - \bt^k_0 \xi_k) - \f 1{N_0} e^{-2\gamma_0} \de^{ij}  (\rd_i \bt^k_0)\xi_j \xi_k + \f 12 \f{e^{-4\gamma_0}}{N_0} ((\rd_t - \bt^k_0\rd_k)e^{2\gamma_0}) \de^{ij} \xi_i\xi_j.
\end{split}
\end{equation*}

By inspection, we have proven that
\begin{equation*}
\begin{split}
\mbox{\eqref{most.easy.2}} = -2\int_{S^*\mathbb R^{2+1}} \Big((g^{-1}_0)^{\alp\bt} (\rd_\bt X^\gamma) \xi_\alp \xi_\gamma -\f 12 X^\mu \rd_\mu (g^{-1}_0)^{\alp\gamma}\xi_\alp \xi_\gamma\Big) a \,\f{d\nu^\psi}{|\xi|^2}.
\end{split}
\end{equation*}

\pfstep{Step~2(b): The term $\mathrm{easy}_3$}
By Lemma~\ref{lem:PSIDOs}.2, $[A,\rd_j]$ is a $0$-th order pseudo-differential symbol with principal symbol
$$-i\{a, i \xi_j\} = - \rd_{x^j} a.$$
Therefore, using Corollary~\ref{cor:nu},
\begin{equation*}
\begin{split}
&\: \int_{\mathbb R^{2+1}} N_n\rd_i (\chi\psi_n) \de^{ij} \Big\{ [A, \rd_j](\f{(e_0)_n(\chi\psi_n)}{N_n})\Big\} \,\ud x \\
\to  &\: \int_{\mathbb R^{2+1}} N_0\rd_i (\chi\psi_0) \de^{ij} \Big\{ [A, \rd_j](\f{(e_0)_0(\chi\psi_0)}{N_0})\Big\} \,\ud x + \int_{S^*\mathbb R^{2+1}} \Big[-\de^{ij} \xi_i (\xi_t-\bt^k_0\xi_k) \rd_{x^j} a \Big]\, \f{e^{-2\gamma_0}}{N_0}\f{d\nu^\psi}{|\xi|^2}.
\end{split}
\end{equation*}
Together with Step~1, this gives the desired limit. \qedhere

\end{proof}

\subsection{The not-so-easy terms}\label{sec:not.so.easy}

\begin{proposition}\label{prop:main.medium}
The following holds after passing to a subsequence (which we do not relabel):
\begin{equation}\label{eq:main.medium}
\begin{split}
&\: \int_{\mathbb R^{2+1}} [\rd_i (\chi\psi_n)] \de^{ij} [(e_0)_n N_n] [A(\f{\rd_j(\chi\psi_n)}{N_n})] \,\ud x - \int_{\mathbb R^{2+1}} [\rd_i (\chi\psi_n)] \de^{ij} N_0 A\Big(\f{((e_0)_n N_n)\rd_j(\chi\psi_n)}{N_n^2}\Big) \,\ud x \\
&\: - \int_{\mathbb R^{2+1}} [\rd_i (\chi\psi_n)] \de^{ij} [(e_0)_0 N_0] [A(\f{\rd_j(\chi\psi_n)}{N_0})] \,\ud x + \int_{\mathbb R^{2+1}} [\rd_i (\chi\psi_n)] \de^{ij} N_0 A\Big(\f{((e_0)_0 N_0)\rd_j(\chi\psi_n)}{N_0^2}\Big) \,\ud x \to 0.
\end{split}
\end{equation}
\end{proposition}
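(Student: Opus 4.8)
The plan is to read \eqref{eq:main.medium} as the difference between a ``$g_n$-version'' and a ``$g_0$-version'' of the combination $\mathrm{medium}_1+\mathrm{medium}_2$, applied to the \emph{same} field $\Phi_n:=\chi\psi_n$, so that each resulting term carries at least one factor of a metric difference $\mathfrak g_n-\mathfrak g_0$; since such differences are $O(\lambda_n)$ in $L^\i$ by \eqref{assumption.0} and, for \emph{spatial} derivatives, $O(\lambda_n^{\f12})$ by Proposition~\ref{prop:spatial.imp}, a careful integration-by-parts bookkeeping yields a net $o(1)$. First I would make harmless replacements: using $\|N_n-N_0\|_{L^\i}\ls\lambda_n$ and $\|\rd\Phi_n\|_{L^2}\ls1$, one may replace $N_0$ by $N_n$ in the second term of \eqref{eq:main.medium} (committing an $O(\lambda_n)$ error estimated by H\"older, since $A$ is $L^2$-bounded), and combine the first two terms into $\int_{\m R^{2+1}}\rd_i\Phi_n\,\de^{ij}N_n[c_n,A](\f{\rd_j\Phi_n}{N_n})\,\ud x+O(\lambda_n)$ with $c_n:=\f{(e_0)_nN_n}{N_n}$, and the last two into $-\int_{\m R^{2+1}}\rd_i\Phi_n\,\de^{ij}N_0[c_0,A](\f{\rd_j\Phi_n}{N_0})\,\ud x+O(\lambda_n)$ with $c_0:=\f{(e_0)_0N_0}{N_0}$; here $[f,A]:=fA-Af$ is the commutator with multiplication by $f$.

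Next I would linearize in the metric. The difference of these two expressions is $\int_{\m R^{2+1}}\rd_i\Phi_n\,\de^{ij}N_n[c_n-c_0,A](\f{\rd_j\Phi_n}{N_n})\,\ud x$ plus terms involving only $[c_0,A]$, which is a \emph{fixed} operator of order $-1$ with smooth coefficients; those extra terms are $O(\lambda_n)$. Now $c_n-c_0=\f{\rd_t(N_n-N_0)}{N_n}+e_n$, where $e_n$ collects the contributions of $\bt^i_n-\bt^i_0$, of $\f1{N_n}-\f1{N_0}$, and of $-\f{\bt^k_n\rd_k(N_n-N_0)}{N_n}$, and hence is $O(\lambda_n^{\f12})$ in $L^\i$ on $\mathrm{supp}\,\chi$ (for the last piece one uses Proposition~\ref{prop:spatial.imp}). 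For such a coefficient $e$ one does \emph{not} exploit the commutator cancellation: writing $[e,A]=eA-Ae$ and using boundedness of $A$ and $A^*$ on $L^2$ together with $\|\rd\Phi_n\|_{L^2}\ls1$, each of $eA(\cdot)$ and $A(e\cdot)$ is controlled by $\|e\|_{L^\i}$ and contributes $o(1)$. Thus everything reduces to showing $\int_{\m R^{2+1}}\rd_i\Phi_n\,\de^{ij}N_n[\f{\rd_t(N_n-N_0)}{N_n},A](\f{\rd_j\Phi_n}{N_n})\,\ud x\to0$.

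For this I would use the operator identity $N_n[\f{d}{N_n},A]=[d,A]+N_n[\f1{N_n},A]\,d$ with $d:=\rd_t(N_n-N_0)$, which on $\mathrm{supp}\,\chi$ equals $\rd_t M$ for $M:=\widetilde\chi(N_n-N_0)$ (satisfying $\|M\|_{L^\i}\ls\lambda_n$, $\|\rd_iM\|_{L^\i}\ls\lambda_n^{\f12}$, $\|\rd M\|_{L^\i}\ls1$, $\|\rd^2M\|_{L^\i}\ls\lambda_n^{-1}$). This splits the term into a genuine Calder\'on-commutator piece $\int_{\m R^{2+1}}\rd_i\Phi_n\,\de^{ij}[d,A](\f{\rd_j\Phi_n}{N_n})\,\ud x$ and a piece containing, after replacing $\f1{N_n}$ by $\f1{N_0}$ inside the commutator (an $O(\lambda_n)$ error), only the fixed smooth order $-1$ operator $[\f1{N_0},A]$. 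In the Calder\'on piece I would symmetrize: with $A^*=A+R$, $R$ of order $-1$, and using the symmetry of $\de^{ij}$ (the relabelling $i\leftrightarrow j$), the integrand equals $-\int_{\m R^{2+1}}(\rd_tM)\,\de^{ij}(\rd_i\Phi_n)[\f1{N_0},A](\rd_j\Phi_n)\,\ud x$ plus a term built from $R$ (and an $O(\lambda_n)$ error from $[\f1{N_n}-\f1{N_0},A]$). Every surviving term now features one of the \emph{fixed} order $-1$ operators $[\f1{N_0},A]$ or $R$, and it remains only to integrate by parts, each step invoking exactly one of: (i) $\rd_tM$ integrated by parts produces $M=O(\lambda_n)$; (ii) a spatial derivative landing on $M$ gives $\|\rd_iM\|_{L^\i}\ls\lambda_n^{\f12}$; (iii) whenever a $\rd_t$-derivative would otherwise fall on $\Phi_n$ against an $L^2$ partner, one instead integrates a spatial derivative off, so that $\rd_t\Phi_n$ (bounded in $L^2$, hence $\rd_i\rd_t\Phi_n$ bounded in $H^{-1}$) is paired — through the $H^{-1}\to L^2$ mapping property of the order $-1$ operator — against an $H^1$ factor, all of size $O(1)$, leaving one spare factor of $M$ or $\rd_iM$ to close. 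Since \eqref{eq:main.medium} is a pure $g_n$-versus-$g_0$ difference, each term carries such a spare factor, and the bound is $o(1)$.

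The main obstacle is precisely step four: the naive Calder\'on-commutator estimate $\|[\,\rd_t(N_n-N_0),A]\|_{L^2\to L^2}\ls\|\rd\,\rd_t(N_n-N_0)\|_{L^\i}$ is only $O(\lambda_n^{-1})$ — no gain is available for $\rd_t$-derivatives of $N_n$, in contrast to the spatial derivatives controlled by Proposition~\ref{prop:spatial.imp} — so one cannot estimate the commutator term by its operator norm. The argument must keep the integral form throughout, symmetrize the commutator into the fixed smooth order $-1$ operator $[\f1{N_0},A]$, and then run the delicate integration-by-parts bookkeeping, never allowing a $\rd_t$ to land on $\chi\psi_n$ against an $L^2$ partner and always spending the order $-1$ smoothing via $H^{\pm1}$-duality. (The analogous statement for $\om_n$ — with the extra, smooth and harmless, weight $e^{-4\psi_0}$ — follows by the same scheme.)
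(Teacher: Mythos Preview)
Your approach is correct and ultimately rests on the same cancellation as the paper --- after symmetrising via the adjoint, the dangerous piece $[\rd_t(N_n-N_0),A]$ collapses to a fixed order $-1$ operator acting on $\rd\Phi_n$ --- but the organisation is genuinely different. The paper first replaces $(e_0)_n N_n$ by $(e_0)_0 N_n$ (and $N_n$ by $N_0$ where harmless), reducing to the single expression \eqref{eq:medium.goal}, and then splits $\psi_n=(\psi_n-\psi_0)+\psi_0$, producing six terms $\mathrm{I}$--$\mathrm{VI}$. The two ``hard'' terms $\mathrm{I}+\mathrm{II}$ (those with $\psi_n-\psi_0$ in exactly one slot of each) combine, after one application of $A^*=A+(A^*-A)$ and the $i\leftrightarrow j$ swap, directly into
\[
-\int\bigl\{(A^*-A)\bigl([\rd_i(\chi(\psi_n-\psi_0))]N_0\bigr)+[A,N_0^2]\bigl(\tfrac{\rd_i(\chi(\psi_n-\psi_0))}{N_0}\bigr)\bigr\}\,\de^{ij}\,\tfrac{(e_0)_0(N_n-N_0)\,\rd_j(\chi\psi_n)}{N_0^2}\,\ud x;
\]
since $A^*-A$ and $[A,N_0^2]$ are order $-1$ and $\rd(\chi(\psi_n-\psi_0))\rightharpoonup0$, Rellich--Kondrachov gives strong $L^2$ convergence to $0$ along a subsequence and Cauchy--Schwarz finishes immediately. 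The remaining four terms (each carrying at least one smooth $\psi_0$ factor) fall either to a short compensated-compactness integration by parts moving $(e_0)_0$ off $N_n-N_0$, or to bare weak $L^2$ convergence of $(e_0)_0(N_n-N_0)$. Your route instead keeps $\psi_n$ undecomposed and, after the same symmetrisation, closes by explicit $\lambda_n$-bookkeeping and iterated integration by parts (using the $H^{-1}\to L^2$ mapping of the fixed order $-1$ operators $[\f1{N_0},A]$ and $R=A^*-A$). This yields quantitative $O(\lambda_n^{1/2})$-type rates and avoids invoking compactness at this step, but at the cost of considerably more commutator and integration-by-parts bookkeeping; the paper's decomposition-plus-Rellich argument is shorter and isolates the structural cancellation more cleanly.
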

\begin{proof}
Using \eqref{assumption.0} and \eqref{assumption.1}, it is easy to see that the two first two terms in \eqref{eq:main.medium} have the same limit as
$$ 
\int_{\mathbb R^{2+1}} [\rd_i (\chi\psi_n)] \de^{ij} [(e_0)_0 N_n] \Big[A(\f{\rd_j(\chi\psi_n)}{N_0})\Big] \,\ud x - \int_{\mathbb R^{2+1}} [\rd_i (\chi\psi_n)] \de^{ij} N_0 A\Big(\f{((e_0)_0 N_n)\rd_j(\chi\psi_n)}{N_0^2}\Big) \,\ud x.
$$
It therefore suffices to show that
\begin{equation}\label{eq:medium.goal}
\begin{split}
&\: \int_{\mathbb R^{2+1}} [\rd_i (\chi\psi_n)] \de^{ij} [(e_0)_0 (N_n-N_0)] \Big[A(\f{\rd_j(\chi\psi_n)}{N_0})\Big] \,\ud x \\
&\: - \int_{\mathbb R^{2+1}} [\rd_i (\chi\psi_n)] \de^{ij} N_0 A\Big(\f{((e_0)_0 (N_n-N_0))\rd_j(\chi\psi_n)}{N_0^2}\Big) \,\ud x \to 0.
\end{split}
\end{equation}

To prove \eqref{eq:medium.goal}, we need to rely further on the structure of the terms. We begin with the following algebraic manipulation.
\begin{equation*}
\begin{split}
\mbox{LHS of \eqref{eq:medium.goal}}= &\: \underbrace{\int_{\mathbb R^{2+1}} [\rd_i (\chi\psi_n)] \de^{ij} [(e_0)_0 (N_n-N_0)] \Big[A(\f{\rd_j(\chi(\psi_n-\psi_0))}{N_0})\Big] \,\ud x}_{=:\mathrm{I}} \\
&\: \underbrace{- \int_{\mathbb R^{2+1}} [\rd_i (\chi(\psi_n-\psi_0))] \de^{ij} N_0 A\Big(\f{((e_0)_0 (N_n-N_0))\rd_j(\chi\psi_n)}{N_0^2}\Big) \,\ud x}_{=:\mathrm{II}} \\
&\: +\underbrace{\int_{\mathbb R^{2+1}} [\rd_i (\chi(\psi_n-\psi_0))] \de^{ij} [(e_0)_0 (N_n-N_0)] \Big[A(\f{\rd_j(\chi\psi_0)}{N_0})\Big] \,\ud x}_{=:\mathrm{III}} \\
&\: \underbrace{- \int_{\mathbb R^{2+1}} [\rd_i (\chi\psi_0)] \de^{ij} N_0 A\Big(\f{((e_0)_0 (N_n-N_0))\rd_j(\chi(\psi_n-\psi_0))}{N_0^2}\Big) \,\ud x}_{=:\mathrm{IV}} \\
&\: +\underbrace{\int_{\mathbb R^{2+1}} [\rd_i (\chi\psi_0)] \de^{ij} [(e_0)_0 (N_n-N_0)] \Big[A(\f{\rd_j(\chi\psi_0)}{N_0})\Big] \,\ud x}_{=:\mathrm{V}} \\
&\: \underbrace{- \int_{\mathbb R^{2+1}} [\rd_i (\chi\psi_0)] \de^{ij} N_0 A\Big(\f{((e_0)_0 (N_n-N_0))\rd_j(\chi\psi_0)}{N_0^2}\Big) \,\ud x}_{=:\mathrm{VI}}.
\end{split}
\end{equation*}
We first consider $\mathrm{I} + \mathrm{II}$. Note that by Proposition~\ref{prop:real.output}, $A^*([\rd_i (\chi(\psi_n-\psi_0))] N_0)$ is real. Therefore,
\begin{equation*}
\begin{split}
&\: \mathrm{I} + \mathrm{II} \\
= &\: \int_{\mathbb R^{2+1}} [\rd_i (\chi\psi_n)] \de^{ij} [(e_0)_0 (N_n-N_0)] \Big[ A(\f{\rd_j(\chi(\psi_n-\psi_0))}{N_0}) \Big] \,\ud x\\
&\: - \int_{\mathbb R^{2+1}} A^*([\rd_i (\chi(\psi_n-\psi_0))] N_0)\de^{ij} \Big(\f{((e_0)_0 (N_n-N_0))\rd_j(\chi\psi_n)}{N_0^2} \Big) \,\ud x \\
=&\: - \int_{\mathbb R^{2+1}} \Big\{(A^*-A)([\rd_i (\chi(\psi_n-\psi_0))] N_0) + [A, N_0^2](\f{\rd_i(\chi(\psi_n-\psi_0))}{N_0}) \Big\}\de^{ij} \Big(\f{((e_0)_0 (N_n-N_0))\rd_j(\chi\psi_n)}{N_0^2}\Big) \,\ud x.
\end{split}
\end{equation*}
Now both $A^*-A$ and $[A, N_0^2]$ are pseudo-differential operator of orders $-1$ by Lemmas~\ref{lem:PSIDOs}.2 and \ref{lem:PSIDOs}.3 (and the fact that $a$ is real). Lemma~\ref{lem:PSIDOs}.5 then implies that after passing to a subsequence, both $(A^*-A)([\rd_i (\chi(\psi_n-\psi_0))] N_0)$ and $[A, N_0^2](\f{\rd_i(\chi(\psi_n-\psi_0))}{N_0})$ converge strongly in the $L^2$ norm. {Since the strong limit must coincide with the weak limit, the $L^2$ limit is in fact $=0$.} The Cauchy--Schwartz inequality then implies that (up to passing to a subsequence) $\mathrm{I}+\mathrm{II}\to 0$.

For the terms $\mathrm{III}$ and $\mathrm{IV}$, we show that they separately tend to $0$. To show each of these convergences, it suffices to show that $((e_0)_0 (N_n-N_0))\rd_j(\chi(\psi_n-\psi_0))$ converges to $0$ weakly in $L^2$, i.e.~the weak limit of the product coincide with the product of the weak limits. This can be viewed as a compensated compactness result: the key is that even though $(e_0)_0 (N_n-N_0)$ does not have a strong limit, we can integrate by parts to take advantage of the fact that $\rd_i (N_n-N_0)$ converges locally uniformly to $0$. More precisely, take $\vartheta \in C^\infty_c(\mathbb R^{2+1})$ (which we can do by a density argument). We then compute
\begin{equation*}
\begin{split}
&\: \int_{\mathbb R^{2+1}} \vartheta ((e_0)_0 (N_n-N_0))\rd_j(\chi(\psi_n-\psi_0)) \, \ud x \\
= &\: - \int_{\mathbb R^{2+1}} \vartheta (N_n-N_0){(e_0)_0}\rd_j(\chi(\psi_n-\psi_0)) \,\ud x + \int_{\mathbb R^{2+1}} [-{(e_0)_0}\vartheta  + \vartheta (\rd_i \bt_0^i)](N_n-N_0)\rd_j(\chi(\psi_n-\psi_0))\, \ud x \\
= &\: \int_{\mathbb R^{2+1}} [(\rd_j\vartheta) (N_n-N_0)+ \vartheta \rd_j (N_n-N_0)]{(e_0)_0}(\chi(\psi_n-\psi_0)) \,\ud x - \int_{\mathbb R^{2+1}} \vartheta (N_n-N_0) (\rd_j \bt_0^i) \rd_i (\chi(\psi_n-\psi_0))\,\ud x\\
&\: + \int_{\mathbb R^{2+1}} [-{(e_0)_0}\vartheta  + \vartheta (\rd_i \bt_0^i)](N_n-N_0)\rd_j(\chi(\psi_n-\psi_0))\, \ud x. 
\end{split}
\end{equation*}
By virtue of \eqref{assumption.0}, \eqref{assumption.1} and Proposition~\ref{prop:spatial.imp}, this $\to 0$.

Finally, $\mathrm{V}$ and $\mathrm{VI}$ both $\to 0$ by virtue of the fact that $((e_0)_0 (N_n-N_0))$ converges weakly in $L^2$ to $0$. We thus conclude the proof of \eqref{eq:medium.goal}. \qedhere
\end{proof}

\begin{proposition}
\begin{equation*}
\begin{split}
&\: \int_{\mathbb R^{2+1}} [\rd_i (\chi\psi_n)] \de^{ij} [(e_0)_n N_n] \Big[A(\f{\rd_j(\chi\psi_n)}{N_n})\Big] \,\ud x - \int_{\mathbb R^{2+1}} [\rd_i (\chi\psi_n)] \de^{ij} N_0 A\Big(\f{((e_0)_n N_n)\rd_j(\chi\psi_n)}{N_n^2}\Big) \,\ud x \\
\to &\: \int_{\mathbb R^{2+1}} [\rd_i (\chi\psi_0)] \de^{ij} [(e_0)_0 N_0] \Big[A(\f{\rd_j(\chi\psi_0)}{N_0})\Big] \,\ud x - \int_{\mathbb R^{2+1}} [\rd_i (\chi\psi_0)] \de^{ij} N_0 A\Big(\f{((e_0)_0 N_0)\rd_j(\chi\psi_0)}{N_0^2}\Big) \,\ud x.
\end{split}
\end{equation*}
\end{proposition}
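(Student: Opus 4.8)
The plan is to peel off the $n$-dependence of the coefficients using the preceding proposition and then recognize the remaining fixed-coefficient expression as a pairing against an operator of order $-1$, which therefore carries no microlocal defect. First, by Proposition~\ref{prop:main.medium} the two terms on the left-hand side differ, up to an $o(1)$ error as $n\to+\infty$, from
$$
\int_{\mathbb R^{2+1}} [\rd_i (\chi\psi_n)] \de^{ij} [(e_0)_0 N_0] [A(\tfrac{\rd_j(\chi\psi_n)}{N_0})] \,\ud x - \int_{\mathbb R^{2+1}} [\rd_i (\chi\psi_n)] \de^{ij} N_0 A(\tfrac{((e_0)_0 N_0)\rd_j(\chi\psi_n)}{N_0^2}) \,\ud x .
$$
(The replacement of the outer $N_n$ by $N_0$ in the second term of the statement, versus the $N_n$ appearing in $\mathrm{medium}_2$ of Proposition~\ref{prop:energy.id.n.1}, is likewise harmless: the difference is bounded by $\|N_n-N_0\|_{L^\infty(\Omega')}$ times an $L^2\times L^2$ pairing, which is $O(\lambda_n)$ by \eqref{assumption.0}.) Thus it suffices to show that this fixed-coefficient expression converges, as $n\to+\infty$, to the right-hand side of the statement, i.e.\ to the same expression with $\psi_n$ replaced by $\psi_0$.

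Now set $b_0 := (e_0)_0 N_0 = (\rd_t-\bt_0^k\rd_k)N_0$, a smooth function, and let $M_f$ denote multiplication by a function $f$. The displayed expression equals $\int_{\mathbb R^{2+1}} \rd_i(\chi\psi_n)\,\de^{ij}\, D\big(\rd_j(\chi\psi_n)\big)\,\ud x$, where
$$
D := M_{b_0}\,A\,M_{1/N_0} \;-\; M_{N_0}\,A\,M_{b_0/N_0^2}.
$$
Each of $M_{b_0}AM_{1/N_0}$ and $M_{N_0}AM_{b_0/N_0^2}$ is a composition of zeroth-order operators, hence a pseudo-differential operator of order $0$ (after harmlessly truncating $N_0$, $N_0^{-1}$, $b_0$ away from a neighbourhood of $\overline{\Omega'}$ so they lie in $S^0$); by Lemma~\ref{lem:PSIDOs}.1 its principal symbol is, respectively, $b_0(x)\,a(x,\xi)\,N_0(x)^{-1}$ and $N_0(x)\,a(x,\xi)\,b_0(x)N_0(x)^{-2} = b_0(x)\,a(x,\xi)\,N_0(x)^{-1}$. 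These principal symbols coincide, so $D$ is a pseudo-differential operator of order $-1$.

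By Lemma~\ref{lem:PSIDOs}.5, $D$ extends to a compact operator $L^2(\mathbb R^{2+1})\to L^2_{loc}(\mathbb R^{2+1})$. Since $\rd_j(\chi\psi_n)=(\rd_j\chi)\psi_n+\chi\,\rd_j\psi_n \rightharpoonup \rd_j(\chi\psi_0)$ weakly in $L^2$ (using the locally uniform convergence $\psi_n\to\psi_0$ and the weak $L^{p_0}_{loc}$-convergence $\rd\psi_n\rightharpoonup\rd\psi_0$, $p_0>2$, together with the compact support of $\chi$), compactness gives $D(\rd_j(\chi\psi_n))\to D(\rd_j(\chi\psi_0))$ strongly in $L^2_{loc}$. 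Pairing the strongly convergent sequence $\de^{ij}D(\rd_j(\chi\psi_n))$ against the $L^2$-bounded, compactly supported sequence $\rd_i(\chi\psi_n)$, which converges weakly in $L^2$, we conclude
$$
\int_{\mathbb R^{2+1}} \rd_i(\chi\psi_n)\,\de^{ij}\, D(\rd_j(\chi\psi_n))\,\ud x \;\longrightarrow\; \int_{\mathbb R^{2+1}} \rd_i(\chi\psi_0)\,\de^{ij}\, D(\rd_j(\chi\psi_0))\,\ud x,
$$
and the right-hand side is precisely the claimed limit. I expect no serious obstacle here: the only substantive point is the cancellation of principal symbols (which explains why this combination of the $\mathrm{medium}$ terms is inert at the level of the microlocal defect measure, in contrast to the $\mathrm{hard}$ term); the rest is bookkeeping plus the standard compactness argument already used, e.g., in Proposition~\ref{prop:dsigma.is.symmetric}.
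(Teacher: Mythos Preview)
Your proof is correct and rests on the same key observation as the paper's: the two terms have identical principal symbol $((e_0)_0 N_0)\,a/N_0$, so their difference contributes no microlocal defect. The packaging differs slightly. The paper applies Corollary~\ref{cor:nu} to each of the two fixed-coefficient integrals separately, obtaining in each case the corresponding $\psi_0$-term plus an explicit $\ud\nu^\psi$-integral; the two $\ud\nu^\psi$-integrals are visibly equal (both are $\int \frac{e^{-2\gamma_0}((e_0)_0 N_0)}{N_0^2}\,a\,\de^{ij}\xi_i\xi_j\,\frac{\ud\nu^\psi}{|\xi|^2}$) and cancel. You instead recognize the combination as a single bilinear form $\int \rd_i(\chi\psi_n)\,D(\rd_j(\chi\psi_n))\,\ud x$ with $D$ of order $-1$, and invoke the Rellich--Kondrachov compactness of Lemma~\ref{lem:PSIDOs}.5 directly. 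Your route is marginally cleaner in that it avoids writing out the defect-measure contributions only to cancel them, while the paper's route has the advantage of making the cancellation explicit at the symbol-on-the-light-cone level (consistent with how the surrounding ``easy'' terms are handled). Either way the substance is Proposition~\ref{prop:main.medium} plus the principal-symbol cancellation.
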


\begin{proof}
By Corollary~\ref{cor:nu}, 
\begin{equation*}
\begin{split}
&\: - \int_{\mathbb R^{2+1}} [\rd_i (\chi\psi_n)] \de^{ij} [(e_0)_0 N_0] \Big[A(\f{\rd_j(\chi\psi_n)}{N_0})\Big] \,\ud x + \int_{\mathbb R^{2+1}} [\rd_i (\chi\psi_n)] \de^{ij} N_0 A\Big(\f{((e_0)_0 N_0)\rd_j(\chi\psi_n)}{N_0^2}\Big) \,\ud x \\
\to &\: - \int_{S^*\mathbb R^{2+1}} \f{e^{-2\gamma_0} ((e_0)_0 N_0)}{N_0^2} a \de^{ij} \xi_i\xi_j \,\f{\ud \nu^\psi}{|\xi|^2} + \int_{S^*\mathbb R^{2+1}} \f{e^{-2\gamma_0} ((e_0)_0 N_0)}{N_0^2} a \de^{ij} \xi_i\xi_j \,\f{\ud \nu^\psi}{|\xi|^2} = 0.
\end{split}
\end{equation*}

The result therefore follows from Proposition~\ref{prop:main.medium}. \qedhere
\end{proof}

\subsection{Putting everything together}\label{sec:metric.conclude}

We summarize what we have obtained in this section.

\begin{proposition}\label{prop:RHS.1}
Suppose $A=b(x)\widetilde{m}(\f 1i\nabla)$, where the principal symbol is real and supported in $T^*\Omega$, and $m(\xi)$ is homogeneous of order $0$ and is even. After passing to a subsequence (which we do not relabel),
\begin{equation*}
\begin{split}
&\: \mbox{RHS of \eqref{eq:energy.id.n.1}} - \int_{\mathbb R^{2+1}} [\rd_i (\chi\psi_n)] \de^{ij} N_n \Big\{ [(e_0)_n,A](\f{\rd_j(\chi\psi_n)}{N_n}) \Big\} \,\ud x \\
\to &\: \mbox{RHS of \eqref{eq:energy.id.0.1}} - \int_{\mathbb R^{2+1}} [\rd_i (\chi\psi_0)] \de^{ij} N_0 \Big\{[(e_0)_0,A](\f{\rd_j(\chi\psi_0)}{N_0}) \Big\} \,\ud x \\
&\: -2\int_{S^*\mathbb R^{2+1}} \Big((g^{-1}_0)^{\alp\bt} (\rd_\bt X^\gamma) \xi_\alp \xi_\gamma -\f 12 X^\mu \rd_\mu (g^{-1}_0)^{\alp\gamma}\xi_\alp \xi_\gamma \Big) a \,\f{d\nu^\psi}{|\xi|^2} \\
&\: + \int_{S^*\mathbb R^{2+1}} \Big[-\de^{ij} \xi_i (\xi_t-\bt^k_0\xi_k) \rd_{x^j} a \Big]\, \f{e^{-2\gamma_0}}{N_0}\f{d\nu^\psi}{|\xi|^2},
\end{split}
\end{equation*}
where $X = \f 1{N_0} (\rd_t -\bt^i_0 \rd_i)$.
\end{proposition}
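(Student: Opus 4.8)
The plan is to collect the partial results proved in the two preceding subsections (Section~\ref{sec:easy} and Section~\ref{sec:not.so.easy}) and reassemble them term-by-term against the list of terms on the RHS of \eqref{eq:energy.id.n.1}. Recall that the terms on that RHS are labelled $\mathrm{easy}_1,\dots,\mathrm{easy}_8$, $\mathrm{medium}_1$, $\mathrm{medium}_2$, and $\mathrm{hard}$. The term $\mathrm{hard}$ is exactly $\int_{\mathbb R^{2+1}} [\rd_i (\chi\psi_n)] \de^{ij} N_n \{[(e_0)_n,A](\f{\rd_j(\chi\psi_n)}{N_n})\} \,\ud x$, which is precisely what we subtract on the left-hand side of the claimed convergence; so it plays no role and we only need to track the remaining ten terms.

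First I would handle $\sum_{i=1}^8 \mathrm{easy}_i$: this is literally the content of the first Proposition of Section~\ref{sec:easy}, which asserts that as $n\to+\infty$ this sum converges to the corresponding $\mathrm{easy}$-terms on the RHS of \eqref{eq:energy.id.0.1}, plus the two microlocal-defect-measure terms
$$-2\int_{S^*\mathbb R^{2+1}} \big((g^{-1}_0)^{\alp\bt} (\rd_\bt X^\gamma) \xi_\alp \xi_\gamma -\tfrac 12 X^\mu \rd_\mu (g^{-1}_0)^{\alp\gamma}\xi_\alp \xi_\gamma\big) a \,\f{d\nu^\psi}{|\xi|^2} + \int_{S^*\mathbb R^{2+1}} \big[-\de^{ij} \xi_i (\xi_t-\bt^k_0\xi_k) \rd_{x^j} a \big]\, \f{e^{-2\gamma_0}}{N_0}\f{d\nu^\psi}{|\xi|^2},$$
with $X = \f 1{N_0}(\rd_t - \bt^i_0\rd_i)$. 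Next I would handle $\mathrm{medium}_1 + \mathrm{medium}_2$: by the second Proposition of Section~\ref{sec:not.so.easy} (which itself invokes Proposition~\ref{prop:main.medium} and Corollary~\ref{cor:nu}), this sum converges to the corresponding $\mathrm{medium}$-terms on the RHS of \eqref{eq:energy.id.0.1} with \emph{no} extra microlocal-defect contribution — the two candidate $\ud\nu^\psi$ terms cancel each other. The only subtlety here is passing to a subsequence, which is permissible and is already incorporated in the statements of those propositions; I would note once that all these subsequences can be chosen compatibly (a single diagonal extraction).

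Finally I would add these two convergence statements together. On the limit side, the $\mathrm{easy}$-terms and the $\mathrm{medium}$-terms of \eqref{eq:energy.id.0.1} combine to give exactly ``RHS of \eqref{eq:energy.id.0.1} minus the $\mathrm{hard}$-analogue term'', where the $\mathrm{hard}$-analogue for the limit is $\int_{\mathbb R^{2+1}} [\rd_i (\chi\psi_0)] \de^{ij} N_0 \{[(e_0)_0,A](\f{\rd_j(\chi\psi_0)}{N_0})\} \,\ud x$ — this is the one term in \eqref{eq:energy.id.0.1} not accounted for by $\mathrm{easy}$ and $\mathrm{medium}$. The surviving microlocal-defect terms are precisely the two displayed above, since the $\mathrm{medium}$ contribution vanished. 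This is exactly the asserted identity. I do not expect a genuine obstacle here: the proposition is purely a bookkeeping assembly of the results of Sections~\ref{sec:easy} and \ref{sec:not.so.easy}. The one point requiring minor care is checking that the decomposition of ``RHS of \eqref{eq:energy.id.0.1}'' into $\mathrm{easy}$-analogues, $\mathrm{medium}$-analogues, and the single $\mathrm{hard}$-analogue term is exhaustive and non-overlapping — i.e.\ that the list of terms in \eqref{eq:energy.id.0.1} corresponds bijectively (under $\psi_n\mapsto\psi_0$, $g_n\mapsto g_0$) to the labelled list in \eqref{eq:energy.id.n.1}; this is immediate by inspection since Proposition~\ref{prop:energy.id.n.1} was obtained by exactly the same integration by parts as Proposition~\ref{prop:energy.id.0.1}.

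\begin{proof}[Proof of Proposition~\ref{prop:RHS.1}]
The terms on the RHS of \eqref{eq:energy.id.n.1} are $\mathrm{easy}_1,\dots,\mathrm{easy}_8$, $\mathrm{medium}_1$, $\mathrm{medium}_2$ and $\mathrm{hard}$, where $\mathrm{hard} = \int_{\mathbb R^{2+1}} [\rd_i (\chi\psi_n)] \de^{ij} N_n \{[(e_0)_n,A](\f{\rd_j(\chi\psi_n)}{N_n})\} \,\ud x$ is precisely the term subtracted on the left-hand side of the asserted convergence. It therefore suffices to compute the limit of $\sum_{i=1}^8 \mathrm{easy}_i + \mathrm{medium}_1 + \mathrm{medium}_2$.

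By the first Proposition of Section~\ref{sec:easy}, after passing to a subsequence,
\begin{equation*}
\begin{split}
\sum_{i=1}^8 \mathrm{easy}_i \to &\: \mbox{($\mathrm{easy}$-analogue terms on the RHS of \eqref{eq:energy.id.0.1})} \\
&\: -2\int_{S^*\mathbb R^{2+1}} ((g^{-1}_0)^{\alp\bt} (\rd_\bt X^\gamma) \xi_\alp \xi_\gamma -\tfrac 12 X^\mu \rd_\mu (g^{-1}_0)^{\alp\gamma}\xi_\alp \xi_\gamma) a \,\f{d\nu^\psi}{|\xi|^2} \\
&\: + \int_{S^*\mathbb R^{2+1}} [-\de^{ij} \xi_i (\xi_t-\bt^k_0\xi_k) \rd_{x^j} a ]\, \f{e^{-2\gamma_0}}{N_0}\f{d\nu^\psi}{|\xi|^2},
\end{split}
\end{equation*}
where $X = \f 1{N_0}(\rd_t - \bt^i_0\rd_i)$. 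By the second Proposition of Section~\ref{sec:not.so.easy} (which combines Proposition~\ref{prop:main.medium} with Corollary~\ref{cor:nu}), after passing to a further subsequence,
\begin{equation*}
\mathrm{medium}_1 + \mathrm{medium}_2 \to \mbox{($\mathrm{medium}$-analogue terms on the RHS of \eqref{eq:energy.id.0.1})},
\end{equation*}
with no additional contribution involving $\ud\nu^\psi$, the two candidate defect-measure terms having cancelled. All the subsequences extracted above may be chosen compatibly by a diagonal argument.

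Since Proposition~\ref{prop:energy.id.n.1} was derived by the same integration by parts as Proposition~\ref{prop:energy.id.0.1}, the terms on the RHS of \eqref{eq:energy.id.0.1} are in bijective correspondence (under $\psi_n \mapsto \psi_0$, $g_n\mapsto g_0$) with the labelled list on the RHS of \eqref{eq:energy.id.n.1}; in particular the $\mathrm{easy}$-analogue and $\mathrm{medium}$-analogue terms together exhaust the RHS of \eqref{eq:energy.id.0.1} except for the single term $\int_{\mathbb R^{2+1}} [\rd_i (\chi\psi_0)] \de^{ij} N_0 \{[(e_0)_0,A](\f{\rd_j(\chi\psi_0)}{N_0})\} \,\ud x$, which is the $\mathrm{hard}$-analogue. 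Adding the two convergences above therefore yields
\begin{equation*}
\begin{split}
\sum_{i=1}^8 \mathrm{easy}_i + \mathrm{medium}_1 + \mathrm{medium}_2 \to &\: \mbox{RHS of \eqref{eq:energy.id.0.1}} - \int_{\mathbb R^{2+1}} [\rd_i (\chi\psi_0)] \de^{ij} N_0 \{[(e_0)_0,A](\tfrac{\rd_j(\chi\psi_0)}{N_0})\} \,\ud x \\
&\: -2\int_{S^*\mathbb R^{2+1}} ((g^{-1}_0)^{\alp\bt} (\rd_\bt X^\gamma) \xi_\alp \xi_\gamma -\tfrac 12 X^\mu \rd_\mu (g^{-1}_0)^{\alp\gamma}\xi_\alp \xi_\gamma) a \,\f{d\nu^\psi}{|\xi|^2} \\
&\: + \int_{S^*\mathbb R^{2+1}} [-\de^{ij} \xi_i (\xi_t-\bt^k_0\xi_k) \rd_{x^j} a ]\, \f{e^{-2\gamma_0}}{N_0}\f{d\nu^\psi}{|\xi|^2}.
\end{split}
\end{equation*}
Since $\mbox{RHS of \eqref{eq:energy.id.n.1}} - \mathrm{hard} = \sum_{i=1}^8 \mathrm{easy}_i + \mathrm{medium}_1 + \mathrm{medium}_2$, this is exactly the claimed convergence. \qedhere
\end{proof}
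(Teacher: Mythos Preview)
Your proposal is correct and matches the paper's approach exactly: the paper presents Proposition~\ref{prop:RHS.1} with the preamble ``We summarize what we have obtained in this section'' and gives no further argument, since it is immediate bookkeeping from the results of Sections~\ref{sec:easy} and \ref{sec:not.so.easy}. Your write-up simply makes that bookkeeping explicit, correctly identifying that the $\mathrm{hard}$ term is subtracted on both sides, that the $\mathrm{easy}$ terms contribute the two microlocal-defect integrals, and that the $\mathrm{medium}$ terms contribute no defect-measure correction.
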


We have a similar result regarding the limit of the  RHS of \eqref{eq:energy.id.n.3}.
\begin{proposition}\label{prop:RHS.3}
Suppose $A=b(x)m(\f 1i\nabla)$, where the principal symbol is real and supported in $T^*\Omega$, and $m(\xi)$ is homogeneous of order $0$ and is even. After passing to a subsequence (which we do not relabel),
\begin{equation*}
\begin{split}
&\: \mbox{RHS of \eqref{eq:energy.id.n.3}} - \f 14\int_{\mathbb R^{2+1}} e^{-4\psi_0}[\rd_i (\chi\om_n)] \de^{ij} N_n \Big\{ [(e_0)_n,A](\f{\rd_j(\chi\om_n)}{N_n}) \Big\} \,\ud x \\
\to &\: \mbox{RHS of \eqref{eq:energy.id.0.3}} - \f 14 \int_{\mathbb R^{2+1}} e^{-4\psi_0} [\rd_i (\chi\om_0)] \de^{ij} N_0 \Big\{[(e_0)_0,A](\f{\rd_j(\chi\om_0)}{N_0}) \Big\} \,\ud x \\
&\: -\f 12\int_{S^*\mathbb R^{2+1}} e^{-4\psi_0} \Big((g^{-1}_0)^{\alp\bt} (\rd_\bt X^\gamma) \xi_\alp \xi_\gamma -\f 12 X^\mu \rd_\mu (g^{-1}_0)^{\alp\gamma}\xi_\alp \xi_\gamma \Big) a \,\f{d\nu^\om}{|\xi|^2} \\
&\: + \f 14\int_{S^*\mathbb R^{2+1}} e^{-4\psi_0} \Big[-\de^{ij} \xi_i (\xi_t-\bt^k_0\xi_k) \rd_{x^j} a \Big]\, \f{e^{-2\gamma_0}}{N_0}\f{d\nu^\om}{|\xi|^2}\\
&\: + 2\int_{S^*\mathbb R^{2+1}} \f{e^{-4\psi_0}}{N_0} (g_0^{-1})^{\alp\bt} (\rd_\alp \psi_0) \xi_\bt (\xi_t-\bt^k_0 \xi_k) a\, \f{\ud \nu^\om}{|\xi|^2},
\end{split}
\end{equation*}
where $X = \f 1{N_0} (\rd_t -\bt^i_0 \rd_i)$.
\end{proposition}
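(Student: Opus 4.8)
The plan is to repeat the argument used for Proposition~\ref{prop:RHS.1} in the case of the $\om$-equation, carefully tracking the extra terms coming from the $e^{-4\psi_0}$ weight and the presence of the source $-4\chi g_0^{-1}(\ud\om_0,\ud\psi_0)$ in the wave equation for $\om_0$. The terms on the RHS of \eqref{eq:energy.id.n.3} split into two groups. The first group consists of the terms that structurally match those of \eqref{eq:energy.id.n.1} (the ones multiplied by $\f 14 e^{-4\psi_0}$ and containing only the metric coefficients and $A$); the second group consists of $\rm{extra}_1$ through $\rm{extra}_4$. For the first group, I would note that $e^{-4\psi_0}$ is a smooth fixed function, so multiplying every integrand by it does not affect the compensated-compactness structure: each of these terms is handled exactly as in Section~\ref{sec:easy} (for the ``easy'' terms) and Section~\ref{sec:not.so.easy} (for the ``medium'' terms), with the only change being that $a$ gets replaced by $e^{-4\psi_0}a$ in the resulting microlocal-defect-measure expression, and $\ud\nu^\psi$ is replaced by $\ud\nu^\om$. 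Carrying $e^{-4\psi_0}$ through the computations \eqref{deformation.1}--\eqref{deformation.2} (and using $\f{1}{N_0^2}(\xi_t-\bt_0^k\xi_k)^2 = e^{-2\gamma_0}\de^{ij}\xi_i\xi_j$ on $\mathrm{supp}\,\ud\nu^\om$ by Proposition~\ref{prop:psiom.localized}) yields precisely the first three new terms claimed in the statement, with the overall $\f14 e^{-4\psi_0}$ weight and the factor $-\f12$ in front of the deformation-tensor term; the ``$\mathrm{hard}$'' term $\f 14\int e^{-4\psi_0}[\rd_i(\chi\om_n)]\de^{ij}N_n\{[(e_0)_n,A](\f{\rd_j(\chi\om_n)}{N_n})\}\,\ud x$ is subtracted off exactly as in Proposition~\ref{prop:RHS.1}, to be combined with other terms later.

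The genuinely new work is the analysis of $\rm{extra}_1,\dots,\rm{extra}_4$. Each of these is a trilinear expression of the form (derivative of $\psi_0$)$\times$(derivative of $\chi\om_n$)$\times A$(derivative of $\chi\om_n$), with smooth coefficients. Since $\rd\psi_0$ and the metric coefficients are smooth and fixed, and since each $\rm{extra}_i$ is \emph{quadratic} in the derivatives of $\chi\om_n$ (with the smooth factor $\rd\psi_0$ and metric components pulled out), I would apply Corollary~\ref{cor:nu} directly. The subtlety is that I must first replace $\f{\rd_j(\chi\om_n)}{N_n}$, $(e_0)_n$, etc.\ by the corresponding objects built from $N_0$, $\bt_0$, $\gamma_0$: this is legitimate because $\rm{extra}_1,\dots,\rm{extra}_4$ only involve the metric components and \underline{spatial} derivatives of $N_n$, $\bt_n$ together with $\rd_t\gamma_n$, all of which converge by Propositions~\ref{prop:spatial.imp} and \ref{prop:dtgamma.imp} (there is no $\rd_t N_n$ or $\rd_t\bt_n$), exactly as in Step~1 of the proof of the ``easy'' terms. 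After this reduction, each $\rm{extra}_i$ converges to the corresponding expression with $\om_n$ replaced by $\om_0$ (these are the analogues of $\rm{extra}_i$ appearing on the RHS of \eqref{eq:energy.id.0.3}) plus a microlocal-defect-measure contribution of the form $\int_{S^*\mathbb R^{2+1}}(\text{symbol})\,a\,\f{e^{-2\gamma_0}}{N_0}\f{\ud\nu^\om}{|\xi|^2}$, where the symbol is read off from $\rd\psi_0$ contracted with the appropriate $\xi$'s. Adding the four contributions and using again $\f{1}{N_0^2}(\xi_t-\bt_0^k\xi_k)^2 = e^{-2\gamma_0}\de^{ij}\xi_i\xi_j$ on the support of $\ud\nu^\om$, the terms combine into $2\int_{S^*\mathbb R^{2+1}}\f{e^{-4\psi_0}}{N_0}(g_0^{-1})^{\alp\bt}(\rd_\alp\psi_0)\xi_\bt(\xi_t-\bt_0^k\xi_k)\,a\,\f{\ud\nu^\om}{|\xi|^2}$, the last term in the statement. (One should double-check the numerical coefficient: $\rm{extra}_1$ contributes a term with an $e^{2\gamma_0}$ that becomes $\f{1}{N_0^2}(\xi_t-\bt_0^k\xi_k)^2$, and $\rm{extra}_2,\rm{extra}_3,\rm{extra}_4$ contribute spatial-derivative terms; after the null-cone substitution these add up with the right sign and the factor $2$.)

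I expect the main obstacle to be purely bookkeeping: making sure that the many metric-coefficient substitutions in $\rm{extra}_1,\dots,\rm{extra}_4$ and in the first group of terms are each justified by the \emph{right} convergence statement (in particular, that no $\rd_t N_n$ or $\rd_t\bt_n$ sneaks in, which would only converge weakly and break the argument), and that all the terms involving $\ud\nu^\om$ — both those coming from the metric terms and those from $\rm{extra}_i$ — are correctly collected with the stated coefficients after using the support property of $\ud\nu^\om$ on the null cone. There is no new analytic difficulty beyond what was already handled in Sections~\ref{sec:easy}--\ref{sec:not.so.easy}: the ``$\mathrm{medium}$''-type cancellation (Proposition~\ref{prop:main.medium}) goes through verbatim with $\psi_n\rightsquigarrow\om_n$ and an extra smooth weight $e^{-4\psi_0}$, and the ``$\mathrm{easy}$'' terms are handled by Corollary~\ref{cor:nu} as before. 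Hence the proof is: (i) reduce metric coefficients to their limits via Propositions~\ref{prop:spatial.imp}, \ref{prop:dtgamma.imp}; (ii) apply Corollary~\ref{cor:nu} and the null-cone support from Proposition~\ref{prop:psiom.localized} to the ``easy'' and ``$\mathrm{extra}$'' terms; (iii) invoke the $\om$-analogue of Proposition~\ref{prop:main.medium} for the ``$\mathrm{medium}$'' terms; (iv) collect all $\ud\nu^\om$-contributions, using \eqref{deformation.1}--\eqref{deformation.2} for the metric part; and the subtracted ``$\mathrm{hard}$'' term is left untouched. As the computations are essentially identical to those of Section~\ref{sec:energy.id.n.1}, I would state that the details are omitted.
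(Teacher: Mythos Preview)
Your proposal is correct and follows essentially the same approach as the paper. The paper's proof is very terse: it simply notes that apart from $\mathrm{extra}_1$--$\mathrm{extra}_4$ all terms in \eqref{eq:energy.id.n.3} have obvious analogues in \eqref{eq:energy.id.n.1} (so the arguments of Sections~\ref{sec:easy}--\ref{sec:not.so.easy} apply with the harmless smooth weight $e^{-4\psi_0}$), and then computes the limit of the $\mathrm{extra}$ terms directly via \eqref{assumption.0} and Corollary~\ref{cor:nu}, followed by the null-cone identity \eqref{eq:transport.second.term.null} to rewrite the resulting $\ud\nu^\om$-contributions in the form $\tfrac{2}{N_0}(g_0^{-1})^{\alp\bt}(\rd_\alp\psi_0)\xi_\bt(\xi_t-\bt_0^k\xi_k)$---exactly as you outline. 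One small refinement: in the $\mathrm{extra}$ terms no derivatives of the metric components appear at all (only the undifferentiated $\gamma_n$, $N_n$, $\bt_n^i$ and derivatives of the fixed $\psi_0$), so the replacement of $g_n$-quantities by $g_0$-quantities there is justified purely by \eqref{assumption.0}, not by Propositions~\ref{prop:spatial.imp}--\ref{prop:dtgamma.imp}.
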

\begin{proof}
Except for the terms labeled ``$\mathrm{extra}_1$''--``$\mathrm{extra}_4$'', all the other terms in \eqref{eq:energy.id.n.3} have their obvious analogues in \eqref{eq:energy.id.n.1}. We thus only focus on the terms ``$\mathrm{extra}_1$''--``$\mathrm{extra}_4$''.

Using \eqref{assumption.0} and Corollary~\ref{cor:nu}, it immediately follows that
\begin{equation}\label{eq:extra.terms.1}
\begin{split}
\sum_{i=1}^4 \mathrm{extra}_i \to &\:  \mbox{corresponding terms on RHS of \eqref{eq:energy.id.0.3}} \\
&\: - \int_{S^*\mathbb R^{2+1}} e^{-4\psi_0} \Big[\f{((e_0)_0\psi_0)}{N_0^3} (\xi_t- \bt^k_0\xi_k)^2 + \f{e^{-2\gamma_0}((e_0)_0\psi_0)}{N_0} \de^{ij}\xi_i\xi_j \Big] a\, \f{\ud \nu^\om}{|\xi|^2} \\
&\: + 2\int_{S^*\mathbb R^{2+1}} e^{-4\psi_0} \Big[\f{e^{-2\gamma_0}(\rd_i\psi_0)}{N_0} \de^{ij}(\xi_t- \bt^k_0\xi_k) \xi_j \Big] a\, \f{\ud \nu^\om}{|\xi|^2}.
\end{split}
\end{equation}

Note that by Proposition~\ref{prop:psiom.localized}, on the support of $\ud \nu^\om$, 
\begin{equation}\label{eq:transport.second.term.null}
\f 1{N_0^2} (\xi_t -\bt^k_0 \xi_k)^2 = e^{-2\gamma_0} \de^{ij} \xi_i \xi_j.
\end{equation}
Hence, a direct computation shows that \underline{on the support of $\ud \nu^\om$}, 
\begin{equation}\label{eq:extra.terms.2}
\begin{split}
&\: \f 2{N_0} (g_0^{-1})^{\alp\bt} (\rd_\alp \psi_0) \xi_\bt (\xi_t-\bt^k_0 \xi_k) \\
=&\: -\f 2{N_0^3} ((e_0)_0\psi_0) (\xi_t-\bt^k_0 \xi_k)^2 + \f {2 e^{-2\gamma_0}}{N_0} \de^{ij} (\rd_i\psi_0) (\xi_t-\bt^k_0 \xi_k) \xi_j \\
=&\: -\f 1{N_0^3} ((e_0)_0\psi_0) (\xi_t-\bt^k_0 \xi_k)^2 - \f{e^{-2\gamma_0}}{N_0}((e_0)_0\psi_0) \de^{ij} \xi_i\xi_j + \f {2 e^{-2\gamma_0}}{N_0} \de^{ij} (\rd_i\psi_0) (\xi_t-\bt^k_0 \xi_k) \xi_j.
\end{split}
\end{equation}

Therefore, using the computations leading to Proposition~\ref{prop:RHS.1} and also \eqref{eq:extra.terms.1} and \eqref{eq:extra.terms.2}, we obtain the desired conclusion. \qedhere
\end{proof}

Let us again emphasize that we have not handled the terms $\int_{\mathbb R^{2+1}} [\rd_i (\chi\psi_n)] \de^{ij} N_n \{[(e_0)_n,A](\f{\rd_j(\chi\psi_n)}{N_n})\} \,\ud x$ and $\f 14\int_{\mathbb R^{2+1}} e^{-4\psi_0}[\rd_i (\chi\om_n)] \de^{ij} N_n \{[(e_0)_n,A](\f{\rd_j(\chi\om_n)}{N_n})\} \,\ud x$. They are considerably more difficult: not only do we need to use a version of trilinear compensated compactness, but we will also need to combine this with appropriate terms on the RHS of \eqref{eq:energy.id.n.2} and \eqref{eq:energy.id.n.4} to obtain extra cancellations.

\section{The main commutator terms in Proposition~\ref{prop:energy.id.n.2} and the elliptic-wave trilinear compensated compactness}\label{sec:elliptic.wave.tri}

We continue to work under the assumptions of Theorem~\ref{thm:main} and the reductions in Sections~\ref{sec:compact.reduction} and \ref{sec:reduction}. As above, {let $A$ be a $0$-th order pseudo-differential operator given by $A = b(x) \widetilde{m}(\f 1i\nabla)$, where the principal symbol $a(x,\xi) = b(x) m(\xi)$ (with $m(\xi) = \widetilde{m}(\xi)$ for $|\xi|\geq 1$) is real and supported in $T^*\Omega$, $m(\xi)$ is homogeneous of order $0$, and $m$ and $\widetilde{m}$ are both even.}

Our goal in this section is to compute the limit of the term labeled ``$\mathrm{maincommutator}$'' in \eqref{eq:energy.id.n.2} (and the corresponding term in \eqref{eq:energy.id.n.4}). To handle this term, we will in particular need various forms of trilinear compensated compactness for special combinations of functions satisfying nonlinear elliptic and wave equations. 

To proceed, let us compute using \eqref{eq:Box.gn} and \eqref{def:Box.gn} that
\begin{align}
&\:\sqrt{-\det g_n}\left(\Box_{g_n,A} (\chi\psi_n) - \f{1}{\sqrt{-\det g_n}} A (\sqrt{-\det g_n}\Box_{g_n} (\chi\psi_n))\right) \\
= &\: -\rd_t \Big[e^{2\gamma_n} A(\f{(e_0)_n(\chi\psi_n)}{N_n})\Big] + A \rd_t \Big[e^{2\gamma_n} (\f{(e_0)_n(\chi\psi_n)}{N_n})\Big] \label{eq:main.term.commutator.gamma} \\
&\: + \de^{ij} \rd_i \Big[N_n^2 A(\f{\rd_j(\chi\psi_n)}{N_n})\Big]- {\de^{ij}} A \rd_i \Big[N_n \rd_j(\chi\psi_n) \Big] \label{eq:main.term.commutator.N}\\
&\: + \rd_i \Big[\bt_n^i e^{2\gamma_n} A(\f{(e_0)_n (\chi\psi_n)}{N_n}) \Big] - A \rd_i \Big[e^{2\gamma_n} \bt_n^i (\f{(e_0)_n(\chi\psi_n)}{N_n})\Big]. \label{eq:main.term.commutator.beta}
\end{align}

We will consider the contribution to the ``$\mathrm{maincommutator}$'' term from \eqref{eq:main.term.commutator.gamma}, \eqref{eq:main.term.commutator.N} and \eqref{eq:main.term.commutator.beta} in \textbf{Section~\ref{sec:commutator.gamma}}, \textbf{Section~\ref{sec:commutator.N}} and \textbf{Section~\ref{sec:commutator.beta}} respectively. We then put together the computations and obtain our conclusion in \textbf{Section~\ref{sec:commutator.terms.mdm}} and \textbf{Section~\ref{sec:commutator.terms.final}}.

\subsection{The term \eqref{eq:main.term.commutator.gamma}}\label{sec:commutator.gamma}

\begin{proposition}\label{prop:gamma.commute.final}
\begin{equation*}
\begin{split}
&\: \int_{\mathbb R^{2+1}} \f{(e_0)_n(\chi\psi_n)}{N_n} \Big\{ \rd_t [e^{2\gamma_n} A(\f{(e_0)_n(\chi\psi_n)}{N_n})] - A \rd_t [e^{2\gamma_n} (\f{(e_0)_n(\chi\psi_n)}{N_n})] \Big\}\,\ud x \\
&\: - \int_{\mathbb R^{2+1}} \f{(e_0)_0(\chi\psi_n)}{N_0} \Big\{ \rd_t [e^{2\gamma_0} A(\f{(e_0)_0(\chi\psi_n)}{N_0})] - A \rd_t [e^{2\gamma_0} (\f{(e_0)_0(\chi\psi_n)}{N_0})] \Big\}\,\ud x \to 0.
\end{split}
\end{equation*}
A similar statement holds after replacing $\psi_n\rightsquigarrow \om_n$, $\psi_0\rightsquigarrow \om_0$ and $\ud x\rightsquigarrow \f 14 e^{-4\psi_0} \ud x$.
\end{proposition}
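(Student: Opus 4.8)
The plan is to prove the statement by comparing the two integrals term by term, replacing the level-$n$ metric coefficients $(\gamma_n,N_n,\beta_n^i)$ by their limits $(\gamma_0,N_0,\beta_0^i)$ and the velocity $v_n:=\f{(e_0)_n(\chi\psi_n)}{N_n}$ by $\widetilde v_n:=\f{(e_0)_0(\chi\psi_n)}{N_0}$, and checking that each replacement costs only $o(1)$. The starting point is an algebraic rewriting of the commutator: expanding and using $\rd_t A-A\rd_t=[\rd_t,A]$ one finds
$$\rd_t[e^{2\gamma_n}Av_n]-A\rd_t[e^{2\gamma_n}v_n]=[\rd_t,A](e^{2\gamma_n}v_n)-\rd_t\big([A,e^{2\gamma_n}]v_n\big),$$
and likewise with $\gamma_n,v_n$ replaced by $\gamma_0,\widetilde v_n$. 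Here $[\rd_t,A]$ is a $0$-th order pseudo-differential operator \emph{independent of $n$}, and $[A,e^{2\gamma_n}]$ is of order $-1$ with $L^2\to L^2$ operator norm $\ls\|\rd e^{2\gamma_n}\|_{L^\i}=O(1)$ by \eqref{assumption.1}. Hence $\int v_n[\rd_t,A](e^{2\gamma_n}v_n)\,\ud x$ and its $\gamma_0$-analogue are each bounded by $\|v_n\|_{L^2}^2$; since $e^{2\gamma_n}\to e^{2\gamma_0}$ uniformly on compact sets, since $v_n,\widetilde v_n$ have the same weak $L^2$ limit $v_0=\f{(e_0)_0(\chi\psi_0)}{N_0}$, and since (Step below) $\|v_n-\widetilde v_n\|_{L^2}=O(\lambda_n)$ so that $v_n$ and $\widetilde v_n$ carry the same microlocal defect measure, Corollary~\ref{cor:nu} shows these two easy pieces converge to the same limit.

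The core is then $-\int v_n\,\rd_t\big([A,e^{2\gamma_n}]v_n\big)\,\ud x$ versus $-\int \widetilde v_n\,\rd_t\big([A,e^{2\gamma_0}]\widetilde v_n\big)\,\ud x$. Here one cannot simply integrate the outer $\rd_t$ by parts onto $v_n$, because $\rd_t v_n$ contains second derivatives of $\chi\psi_n$, which are only $O(\lambda_n^{-1})$ by \eqref{assumption.2}. The compensation comes from the wave equation $\Box_{g_n}(\chi\psi_n)=F_n^\psi$ (with $F_n^\psi$ as in Proposition~\ref{prop:energy.id.n.2}, uniformly bounded in $L^\i$), which by \eqref{eq:Box.gn} and $\sqrt{-\det g_n}=N_ne^{2\gamma_n}$ reads
$$\rd_t(e^{2\gamma_n}v_n)=\rd_i(\beta_n^ie^{2\gamma_n}v_n)+\de^{ij}\rd_i\big(N_n\rd_j(\chi\psi_n)\big)-\sqrt{-\det g_n}\,F_n^\psi,$$
so that $\rd_t(e^{2\gamma_n}v_n)$ is bounded in $H^{-1}$ (every term on the right is $\rd_i$ of an $L^2$-bounded quantity, or $L^2$-bounded); the same identity with the level-$n$ coefficients holds for $\widetilde v_n$. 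The size bound $\|v_n-\widetilde v_n\|_{L^2}=O(\lambda_n)$ itself follows from $\beta_n^i-\beta_0^i,N_n-N_0=O(\lambda_n)$ in $L^\i$ (by \eqref{assumption.0}) and $\rd(\chi\psi_n)=O(1)$ in $L^\i$ (by \eqref{assumption.1}), on the fixed finite-volume set where $\chi$ is supported.

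With these ingredients the replacements are carried out as follows. First write $[A,e^{2\gamma_n}]=[A,e^{2\gamma_0}]+[A,e^{2\gamma_n}-e^{2\gamma_0}]$; by the Calderón commutator estimate (Lemma~\ref{lem:PSIDOs}.6) together with the improved estimates of Propositions~\ref{prop:spatial.imp} and \ref{prop:dtgamma.imp}, which give $\|e^{2\gamma_n}-e^{2\gamma_0}\|_{\mathrm{Lip}}=O(\lambda_n^{\f12})$ on the relevant region, the second piece contributes $O(\lambda_n^{\f12})\to0$. For the first piece, integrate the outer $\rd_t$ by parts, split $\rd_t v_n=e^{-2\gamma_n}\rd_t(e^{2\gamma_n}v_n)-2(\rd_t\gamma_n)v_n$, and note that $e^{-2\gamma_n}[A,e^{2\gamma_0}]v_n$ is bounded in $H^1$ (a \emph{fixed} order $-1$ operator applied to an $L^2$-bounded sequence, times $e^{-2\gamma_n}$, which is bounded in $C^1$ by \eqref{assumption.1} and the gauge identity \eqref{trace.free}); pairing it against $\rd_t(e^{2\gamma_n}v_n)\in H^{-1}$ and substituting the wave-equation identity above, and then freezing the coefficients $\beta_n^i,N_n,e^{2\gamma_n}$ to $\beta_0^i,N_0,e^{2\gamma_0}$ by uniform convergence (as in Section~\ref{sec:freeze.coeff}), turns everything into integrals of the form covered by Corollary~\ref{cor:nu}; the remaining term $-2\int(\rd_t\gamma_n)v_n[A,e^{2\gamma_0}]v_n\,\ud x$ is handled by replacing $\rd_t\gamma_n$ with the fixed function $\rd_t\gamma_0$ at the cost of $O(\lambda_n^{\f12})$, again by Proposition~\ref{prop:dtgamma.imp}. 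Running the identical decomposition for $\widetilde v_n$ (same limit $v_0$, same microlocal defect measure, same fixed operators) makes all $O(1)$ contributions cancel, leaving $o(1)$. The $\om$-statement is proved verbatim after inserting the smooth weight $e^{-4\psi_0}$, whose derivatives are uniformly bounded and $n$-independent and therefore introduce no new term that is not of one of the types already treated.

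The main obstacle, and the reason the argument is not purely mechanical, is exactly the point just described: the wave variables are bounded in $H^1$ only with a $\lambda_n^{-1}$ loss, so the outer time derivative cannot be moved onto $v_n$ directly; one must use that the wave equation makes the time flux $\rd_t(e^{2\gamma_n}v_n)$ bounded in $H^{-1}$, and that the relevant commutator $[A,e^{2\gamma_n}]$ involves only first derivatives of $\gamma_n$ (hence has $O(1)$ norm). In particular $\rd_t N_n$ and $\rd_t\beta_n^i$, which converge only weakly, never need to be differentiated thanks to this structure, while $\rd_t\gamma_n$ is the one metric time derivative that does appear undifferentiated under a commutator and is controlled with a rate by the elliptic gauge (Proposition~\ref{prop:dtgamma.imp}); keeping track of which quantity appears in which position, and never losing the $H^1/H^{-1}$ duality, is the delicate bookkeeping underlying the proof.
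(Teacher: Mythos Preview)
Your decomposition $\rd_t[e^{2\gamma}Av]-A\rd_t[e^{2\gamma}v]=[\rd_t,A](e^{2\gamma}v)-\rd_t\big([A,e^{2\gamma}]v\big)$ is algebraically correct, but pulling the $\rd_t$ \emph{outside} the commutator $[A,e^{2\gamma}]$ is exactly what makes your argument hard. Once $\rd_t$ sits on the outside, you are forced either to let it land on $v_n$ (which costs $O(\lambda_n^{-1})$ in $L^2$) or to import the wave equation and an $H^1/H^{-1}$ pairing to avoid that. The paper does neither: it keeps $\rd_t$ together with the Fourier multiplier and writes
\[
\rd_t[e^{2\gamma}Av]-A\rd_t[e^{2\gamma}v]
= e^{2\gamma}(\rd_t b)\,\widetilde m(v)\;+\;(\rd_t e^{2\gamma})\,Av\;+\;b\Big(e^{2\gamma}\rd_t\widetilde m(v)-\widetilde m\,\rd_t(e^{2\gamma}v)\Big),
\]
and since $\widetilde m$ commutes with $\rd_t$, the last bracket is $-b\,[\rd_t\widetilde m,\,e^{2\gamma}]v$ with $T=\rd_t\widetilde m$ a \emph{first-order} pseudo-differential operator. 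Now Lemma~\ref{lem:PSIDOs}.6 applies directly and gives $\|[\rd_t\widetilde m,e^{2\gamma}]v\|_{L^2}\ls\|e^{2\gamma}\|_{\mathrm{Lip}}\|v\|_{L^2}$. This is the whole point: the proof is then a pure $L^2$ computation (split the difference of the two integrals as $(v_n-\widetilde v_n)\cdot\{\text{bracket}_n\}$ plus $\widetilde v_n\cdot\{\text{bracket}_n-\text{bracket}_0\}$, and every piece is $O(\lambda_n^{1/2})$ using $\|v_n-\widetilde v_n\|_{L^2}=O(\lambda_n)$ and $\|\widetilde\chi(e^{2\gamma_n}-e^{2\gamma_0})\|_{W^{1,\infty}}=O(\lambda_n^{1/2})$ from Propositions~\ref{prop:spatial.imp}--\ref{prop:dtgamma.imp}). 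No wave equation, no $H^{-1}$ pairing, no microlocal defect measure, no freezing of coefficients.

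Concretely, there is a gap in your treatment of the ``second piece'' $-\int v_n\,\rd_t\big([A,e^{2\gamma_n}-e^{2\gamma_0}]v_n\big)\,\ud x$. You cite Lemma~\ref{lem:PSIDOs}.6 for $[A,e^{2\gamma_n}-e^{2\gamma_0}]$, but that lemma is stated for operators of order~$1$, and $A$ is order~$0$; and even granting an operator bound on $[A,u]$, you have not explained what happens to the outer $\rd_t$. Expanding gives $\rd_t([A,u]v_n)=(\rd_t b)[\widetilde m,u]v_n+b[\widetilde m,\rd_t u]v_n+b[\widetilde m,u]\rd_t v_n$, and the last term is only $\|u\|_{L^\infty}\cdot\|\rd_t v_n\|_{L^2}=O(\lambda_n)\cdot O(\lambda_n^{-1})=O(1)$ by the naive estimate. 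To close, one must rewrite $[\widetilde m,u]\rd_t v_n=[\rd_t\widetilde m,u]v_n-\widetilde m((\rd_t u)v_n)$ and \emph{then} apply Calder\'on to the first-order operator $\rd_t\widetilde m$ --- but that is precisely the paper's grouping, which you have bypassed. Your ``first piece'' argument (integrate $\rd_t$ by parts, substitute the wave equation, freeze coefficients, appeal to Corollary~\ref{cor:nu}, and claim that ``all $O(1)$ contributions cancel'') is a substantial detour whose cancellations you do not actually carry out; in particular note that $\widetilde v_n$ does not satisfy the same flux identity you wrote for $v_n$ (the right-hand side involves $\Box_{g_0}(\chi\psi_n)$, not $F_n^\psi$), so the bookkeeping is more delicate than suggested.
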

\begin{proof}
We first note that
\begin{align}
&\: \f{(e_0)_n(\chi\psi_n)}{N_n} \Big\{ \rd_t [e^{2\gamma_n} A(\f{(e_0)_n(\chi\psi_n)}{N_n})] - A \rd_t [e^{2\gamma_n} (\f{(e_0)_n(\chi\psi_n)}{N_n})] \Big\} \nonumber \\
&\: -  \f{(e_0)_0(\chi\psi_n)}{N_0} \Big\{ \rd_t [e^{2\gamma_0} A(\f{(e_0)_0(\chi\psi_n)}{N_0})] - A \rd_t [e^{2\gamma_0} (\f{(e_0)_0(\chi\psi_n)}{N_0})] \Big\} \nonumber  \\
= &\: (\f{(e_0)_n(\chi\psi_n)}{N_n}  - \f{(e_0)_0(\chi\psi_n)}{N_0}) \Big\{ \rd_t [e^{2\gamma_n} A(\f{(e_0)_n(\chi\psi_n)}{N_n})] - A \rd_t [e^{2\gamma_n} (\f{(e_0)_n(\chi\psi_n)}{N_n})] \Big\}  \label{eq:gamma.commute.easy.1}\\
&\: + \f{(e_0)_0(\chi\psi_n)}{N_0} \Big\{ \rd_t [e^{2\gamma_n} A(\f{(e_0)_n(\chi\psi_n)}{N_n})] - A \rd_t [e^{2\gamma_n} (\f{(e_0)_n(\chi\psi_n)}{N_n})] \Big\} \label{eq:gamma.commute.easy.2}\\
&\: - \f{(e_0)_0(\chi\psi_n)}{N_0} \Big\{ \rd_t [e^{2\gamma_0} A(\f{(e_0)_0(\chi\psi_n)}{N_0})] - A \rd_t [e^{2\gamma_0} (\f{(e_0)_0(\chi\psi_n)}{N_0})] \Big\}.\label{eq:gamma.commute.easy.3}
\end{align}

\pfstep{Step~1: Estimating \eqref{eq:gamma.commute.easy.1}} We bound \eqref{eq:gamma.commute.easy.1} in $L^1$. First, it is easy to check {using} \eqref{assumption.0} and H\"older's inequality that
\begin{equation*}
\begin{split}
&\:\| \f{(e_0)_n(\chi\psi_n)}{N_n}  - \f{(e_0)_0(\chi\psi_n)}{N_0} \|_{L^2}\\
\ls &\: \|\bt_n^i - \bt_0^i\|_{L^\infty} \|\f{\rd_i(\chi\psi_n)}{N_n}\|_{L^2} +\|(e_0)_0(\chi\psi_n)\|_{L^2}\|\f 1{N_n} -\f 1{N_0}\|_{L^\infty} \ls \lambda_n.
\end{split}
\end{equation*}
On the other hand,
\begin{equation*}
\begin{split}
&\: \rd_t [e^{2\gamma_n} A(\f{(e_0)_n(\chi\psi_n)}{N_n})] - A \rd_t [e^{2\gamma_n} (\f{(e_0)_n(\chi\psi_n)}{N_n})] \} \\
= &\: \underbrace{e^{2\gamma_n} (\rd_t b) \widetilde{m}(\f 1i \nabla)(\f{(e_0)_n(\chi\psi_n)}{N_n})}_{=:\mathrm{I}_1} + \underbrace{b e^{2\gamma_n} \rd_t \widetilde{m}(\f 1i\nabla)(\f{(e_0)_n(\chi\psi_n)}{N_n}) - b \widetilde{m}(\f 1i\nabla) \rd_t [e^{2\gamma_n} (\f{(e_0)_n(\chi\psi_n)}{N_n})]}_{=:\mathrm{I}_2} \\
&\: + \underbrace{ (\rd_t e^{2\gamma_n}) [A(\f{(e_0)_n(\chi\psi_n)}{N_n})]}_{=:\mathrm{I}_3},
\end{split}
\end{equation*}
where we have used that $\widetilde{m}(\frac{1}{i}\nabla)$ commute{s} with $\partial_t$ and $\partial_i$. Each of $\mathrm{I}_1$, $\mathrm{I}_2$ and $\mathrm{I}_3$ can easily be seen to be bounded in $L^2$ uniformly in $n$. For $\mathrm{I}_1$, this simply follows from the assumptions \eqref{assumption.0} and \eqref{assumption.1} and the fact that $\widetilde{m}(\f 1i\nabla)$ is bounded on $L^2$. For $\mathrm{I}_2$, this is a consequence of the Calder\'on commutator theorem (Lemma~\ref{lem:PSIDOs}.6) and \eqref{assumption.0} and \eqref{assumption.1}. Finally, for $\mathrm{I}_3$, this is an immediate consequence of \eqref{assumption.0} and \eqref{assumption.1}.

Therefore, by the Cauchy--Schwarz inequality, 
\begin{equation}\label{gamma.commute.easy.final.1}
\|\mbox{\eqref{eq:gamma.commute.easy.1}}\|_{L^1} \ls \lambda_n \to 0.
\end{equation}

\pfstep{Step~2: Estimating \eqref{eq:gamma.commute.easy.2} and \eqref{eq:gamma.commute.easy.3}} The term $(\mbox{\eqref{eq:gamma.commute.easy.2}} + \mbox{\eqref{eq:gamma.commute.easy.3}})$ is more subtle. First,
\begin{align}
&\: \rd_t [e^{2\gamma_n} A(\f{(e_0)_n(\chi\psi_n)}{N_n})] - A \rd_t [e^{2\gamma_n} (\f{(e_0)_n(\chi\psi_n)}{N_n})]  -\{ \rd_t [e^{2\gamma_0} A(\f{(e_0)_0(\chi\psi_n)}{N_0})] - A \rd_t [e^{2\gamma_0} (\f{(e_0)_0(\chi\psi_n)}{N_0})]\} \nonumber \\
= &\: b  (e^{2\gamma_n}-e^{2\gamma_0}) [\rd_t \widetilde{m}(\f 1 i\nabla)(\f{(e_0)_n(\chi\psi_n)}{N_n})] - b \rd_t \widetilde{m}(\f 1 i\nabla) [(e^{2\gamma_n}-e^{2\gamma_0}) (\f{(e_0)_n(\chi\psi_n)}{N_n})]) \label{gamma.commute.easy.1}\\
&\: + b  e^{2\gamma_0} [\rd_t \widetilde{m}(\f 1i \nabla)(\f{(e_0)_n(\chi\psi_n)}{N_n} - \f{(e_0)_0(\chi\psi_n)}{N_0})] - b \rd_t \widetilde{m}(\f 1 i\nabla) [e^{2\gamma_0} (\f{(e_0)_n(\chi\psi_n)}{N_n} - \f{(e_0)_0(\chi\psi_n)}{N_0})] \label{gamma.commute.easy.2}\\
&\:+ b (\rd_t  (e^{2\gamma_n} - e^{2\gamma_0})) [\widetilde{m}(\f 1 i\nabla)(\f{(e_0)_n(\chi\psi_n)}{N_n})] + b (\rd_t  e^{2\gamma_0})) [\widetilde{m}(\f 1i \nabla)(\f{(e_0)_n(\chi\psi_n)}{N_n} - \f{(e_0)_0(\chi\psi_n)}{N_0})] \label{gamma.commute.easy.3}\\
&\: + (\rd_t b)\{ (e^{2\gamma_n}-e^{2\gamma_0}) [\widetilde{m}(\f 1 i\nabla)(\f{(e_0)_n(\chi\psi_n)}{N_n})] + e^{2\gamma_0} [\widetilde{m}(\f 1i \nabla)(\f{(e_0)_n(\chi\psi_n)}{N_n} - \f{(e_0)_0(\chi\psi_n)}{N_0})] \}.\label{gamma.commute.easy.4}
\end{align}

By the Calder\'on commutation theorem (Lemma~\ref{lem:PSIDOs}.6), the fact that $\widetilde{m}(\f 1i\nabla)$ is bounded on $L^2$ (Lemma~\ref{lem:PSIDOs}.4), and the estimates in \eqref{assumption.0}, \eqref{assumption.1} and Proposition~\ref{prop:spatial.imp} and \ref{prop:dtgamma.imp},
\begin{equation}\label{gamma.commute.easy.2.1}
\begin{split}
&\:\| \eqref{gamma.commute.easy.1} + \eqref{gamma.commute.easy.2}\|_{L^2} \\
\ls &\: \| \widetilde{\chi} (e^{2\gamma_n}- e^{2\gamma_0})\|_{W^{1,\infty}} \|\f{(e_0)_n(\chi\psi_n)}{N_n}\|_{L^2} + \|\widetilde{\chi} e^{2\gamma_0}\|_{W^{1,\infty}} \|\f{(e_0)_n(\chi\psi_n)}{N_n} - \f{(e_0)_0(\chi\psi_n)}{N_0} \|_{L^2} \ls \lambda_n^{\f 12}.
\end{split}
\end{equation}
Using again the fact that $\widetilde{m}(\f 1i\nabla)$ is bounded on $L^2$, and the estimates in \eqref{assumption.0}, \eqref{assumption.1} and Proposition~\ref{prop:spatial.imp} and \ref{prop:dtgamma.imp}, the remaining terms can be bounded directly as follows:
\begin{equation}\label{gamma.commute.easy.2.2}
\begin{split}
&\: \|\eqref{gamma.commute.easy.3} + \eqref{gamma.commute.easy.4}\|_{L^2} \\
\ls &\: \| \widetilde{\chi} (e^{2\gamma_n}- e^{2\gamma_0})\|_{W^{1,\infty}} \|\f{(e_0)_n(\chi\psi_n)}{N_n}\|_{L^2} + \| \widetilde{\chi} e^{2\gamma_0}\|_{W^{1,\infty}} \|\f{(e_0)_n(\chi\psi_n)}{N_n} - \f{(e_0)_0(\chi\psi_n)}{N_0} \|_{L^2} \ls \lambda_n^{\f 12}.
\end{split}
\end{equation}

Using \eqref{gamma.commute.easy.2.1}, \eqref{gamma.commute.easy.2.2} and also \eqref{assumption.0} and \eqref{assumption.1}, and the Cauchy--Schwarz inequality, we thus obtain
\begin{equation}\label{gamma.commute.easy.final.2}
\begin{split}
\|\eqref{eq:gamma.commute.easy.2} + \eqref{eq:gamma.commute.easy.3}\|_{L^1} 
\ls &\: \|\f{(e_0)_0(\chi\psi_n)}{N_0} \|_{L^2} (\| \eqref{gamma.commute.easy.1} + \eqref{gamma.commute.easy.2}\|_{L^2}+ \|\eqref{gamma.commute.easy.3} + \eqref{gamma.commute.easy.4}\|_{L^2}) \ls \lambda_n^{\f 12}.
\end{split}
\end{equation}

Combining \eqref{gamma.commute.easy.final.1} and \eqref{gamma.commute.easy.final.2} yields the conclusion. \qedhere
\end{proof}

\subsection{The term \eqref{eq:main.term.commutator.N}}\label{sec:commutator.N}

We first argue as in Section~\ref{sec:commutator.gamma} to control most of the terms. We will identify, however, in Proposition~\ref{prop:N.commute.1} that there is one difficult term that cannot be handled just with techniques in Section~\ref{sec:commutator.N}. In the rest of this subsection, we will then handle the difficult term that is identified in this proposition.

\begin{proposition}\label{prop:N.commute.1}
\begin{equation*}
\begin{split}
&\: \int_{\mathbb R^{2+1}} \f{(e_0)_n(\chi\psi_n)}{N_n} \de^{ij} \{ \rd_i [N_n^2 A (\f{\rd_j (\chi\psi_n)}{N_n})] - A \rd_i [N_n \rd_j(\chi\psi_n)] \}\,\ud x \\
&\: - \int_{\mathbb R^{2+1}} \f{(e_0)_0(\chi\psi_n)}{N_0} \de^{ij} \{ \rd_i [N_0^2 A(\f{\rd_j(\chi\psi_n)}{N_0})] - A \rd_i [N_0 \rd_j(\chi\psi_n)] \}\,\ud x \\
&\: \underbrace{- \int_{\mathbb R^{2+1}} \f{(e_0)_0(\chi\psi_n)}{N_0} \de^{ij} b \{ (N_n^2 - N_0^2) \widetilde{m}(\f 1i\nabla)  (\f{\rd^2_{ij} (\chi\psi_n )}{N_0}) - \widetilde{m}(\f 1i\nabla)  [(N_n^2 - N_0^2) \f{\rd_{ij}^2(\chi\psi_n )}{N_0}] \}\,\ud x }_{=:\mathrm{I}}\to 0.
\end{split}
\end{equation*}
\end{proposition}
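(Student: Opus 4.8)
The statement asserts that the ``N-commutator'' term in \eqref{eq:main.term.commutator.N}, after subtracting the analogous term with the metric coefficients replaced by their limits, reduces — up to terms converging to zero — to the single ``difficult term'' $\mathrm{I}$. The plan is to perform a sequence of replacements $g_n \mapsto g_0$ in the coefficients, each time producing an error which is $o(1)$, and to isolate at the end the one combination for which this naive replacement is \emph{not} obviously $o(1)$, namely $\mathrm{I}$. The pattern is entirely parallel to the treatment of \eqref{eq:main.term.commutator.gamma} in Proposition~\ref{prop:gamma.commute.final}, so most of the work is bookkeeping.

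First I would expand both double-commutator expressions. Writing $A = b(x)\widetilde m(\tfrac 1i\nabla)$ and using that $\widetilde m(\tfrac 1i\nabla)$ commutes with $\rd_i$, one has
$$\rd_i[N_n^2 A(\tfrac{\rd_j(\chi\psi_n)}{N_n})] - A\rd_i[N_n\rd_j(\chi\psi_n)] = (\rd_i(N_n^2 b))\,\widetilde m(\tfrac 1i\nabla)(\tfrac{\rd_j(\chi\psi_n)}{N_n}) + b\,\big(N_n^2 \rd_i\widetilde m(\tfrac 1i\nabla)(\tfrac{\rd_j(\chi\psi_n)}{N_n}) - \widetilde m(\tfrac 1i\nabla)\rd_i[N_n\rd_j(\chi\psi_n)]\big) - (\rd_i b) A(N_n\rd_j(\chi\psi_n))\,,$$
and similarly with subscript $0$. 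Now I would insert $N_n\rd_j(\chi\psi_n) = N_n^2(\tfrac{\rd_j(\chi\psi_n)}{N_n})$ inside the middle term so that everything is expressed through $\tfrac{\rd_j(\chi\psi_n)}{N_n}$ and commutators $[\widetilde m(\tfrac 1i\nabla), N_n^2]$, $[\widetilde m(\tfrac 1i\nabla), N_0^2]$. By the Calder\'on commutator theorem (Lemma~\ref{lem:PSIDOs}.6) these commutators are bounded on $L^2$ with norm controlled by $\|N_n^2\|_{W^{1,\infty}}$, $\|N_n^2 - N_0^2\|_{W^{1,\infty}}$, and by Proposition~\ref{prop:spatial.imp} the spatial part of $N_n^2 - N_0^2$ is $O(\lambda_n^{1/2})$ in $W^{1,\infty}$. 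The outer factor $\tfrac{(e_0)_n(\chi\psi_n)}{N_n}$ can be replaced by $\tfrac{(e_0)_0(\chi\psi_n)}{N_0}$ at the cost of $O(\lambda_n)$ using \eqref{assumption.0}, \eqref{assumption.1}. Tracking all these substitutions, every term produced is bounded in $L^1$ by $O(\lambda_n^{1/2})$ — \emph{except} the piece in which two \emph{spatial} derivatives land on $\chi\psi_n$ and the coefficient difference $N_n^2 - N_0^2$ appears \emph{undifferentiated}; this is precisely the term $\mathrm{I}$, where the Calder\'on estimate gives no smallness because $N_n^2 - N_0^2$ itself is only $O(\lambda_n)$ in $L^\infty$ while the two derivatives on $\chi\psi_n$ cost $\lambda_n^{-1}$.

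The main obstacle, therefore, is to argue carefully that once one passes from $\rd_i[N_n\rd_j(\chi\psi_n)]$ to the ``frozen-second-order'' form $N_0\cdot(N_n^2-N_0^2)\widetilde m(\tfrac 1i\nabla)(\tfrac{\rd^2_{ij}(\chi\psi_n)}{N_0})$, the remainder terms — those where a derivative hits $N_n$, or where one replaces $\tfrac 1{N_n}$ by $\tfrac 1{N_0}$ inside the argument — are all genuinely $o(1)$. Concretely: terms with $\rd_i N_n$ or $\rd_i N_0$ are $O(\lambda_n^{1/2})$ by Proposition~\ref{prop:spatial.imp}; terms with $\rd_j(\tfrac 1{N_n} - \tfrac 1{N_0})$ or $(\tfrac 1{N_n} - \tfrac 1{N_0})$ are $O(\lambda_n^{1/2})$ or $O(\lambda_n)$ by the same proposition and \eqref{assumption.0}; and the commutator $[\widetilde m(\tfrac 1i\nabla), b]$ contributes an order $-1$ operator, hence a bounded operator, so pairing it against the uniformly-$L^2$ quantities $\rd^2(\chi\psi_n)/\lambda_n^{-1}$... — here one must be slightly more careful and use that $\rd(\chi\psi_n)$, not $\rd^2(\chi\psi_n)$, is the right quantity to estimate after commuting $b$ past one derivative, which brings the estimate back to $O(\lambda_n^{1/2})$ or better. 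The bookkeeping is the only real content: once every stray term is shown to be $o(1)$ by direct Cauchy--Schwarz using \eqref{assumption.0}--\eqref{assumption.2}, Proposition~\ref{prop:spatial.imp}, Lemma~\ref{lem:PSIDOs}.4 and Lemma~\ref{lem:PSIDOs}.6, the stated convergence follows, with $\mathrm{I}$ left as the one term demanding the elliptic-wave trilinear compensated compactness treated in the remainder of the section. I would present this as a two-step argument mirroring Proposition~\ref{prop:gamma.commute.final}: Step 1, replace the outer factor and estimate the resulting $L^1$ error by $O(\lambda_n)$; Step 2, expand the inner double-commutator, peel off each term other than $\mathrm{I}$, and bound it in $L^2$ by $O(\lambda_n^{1/2})$ via the Calder\'on estimate and the improved spatial bounds, then conclude by Cauchy--Schwarz.
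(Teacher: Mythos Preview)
Your proposal is correct and follows essentially the same approach as the paper: a two-step argument mirroring Proposition~\ref{prop:gamma.commute.final}, where Step~1 replaces the outer factor $\tfrac{(e_0)_n(\chi\psi_n)}{N_n}$ by $\tfrac{(e_0)_0(\chi\psi_n)}{N_0}$ at cost $O(\lambda_n)$, and Step~2 expands the inner commutator difference and bounds each piece in $L^2$ by $O(\lambda_n^{1/2})$ via the Calder\'on commutator theorem and Proposition~\ref{prop:spatial.imp}, except for the single term $\mathrm{I}$ where two spatial derivatives hit $\chi\psi_n$ against an undifferentiated $N_n^2-N_0^2$. The paper's organization of Step~2 into the explicit list \eqref{N.commute.easy.1}--\eqref{N.commute.hard} is cleaner than your more discursive bookkeeping, but the ingredients and logic are the same.
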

\begin{proof}
The idea is to argue as in the proof of Proposition~\ref{prop:gamma.commute.final} until we face a term that does not obviously $\to 0$.

In analogy with \eqref{eq:gamma.commute.easy.1}--\eqref{eq:gamma.commute.easy.3}, we have
\begin{align}
&\: \f{(e_0)_n(\chi\psi_n)}{N_n} \de^{ij} \{ \rd_i [N_n^2 A (\f{\rd_j (\chi\psi_n)}{N_n})] - A \rd_i [N_n \rd_j(\chi\psi_n)] \} \nonumber\\
&\: - \f{(e_0)_0(\chi\psi_n)}{N_0} \de^{ij} \{ \rd_i [N_0^2 A(\f{\rd_j(\chi\psi_n)}{N_0})] - A \rd_i [N_0 \rd_j(\chi\psi_n)] \} \nonumber\\
= &\: ( \f{(e_0)_n(\chi\psi_n)}{N_n} - \f{(e_0)_0(\chi\psi_n)}{N_0} ) \de^{ij} \{ \rd_i [N_n^2 A (\f{\rd_j (\chi\psi_n)}{N_n})] - A \rd_i [N_n \rd_j(\chi\psi_n)] \} \label{eq:N.commute.easy.1}\\
&\: + \f{(e_0)_0(\chi\psi_n)}{N_0} \de^{ij} \left( \{ \rd_i [N_n^2 A (\f{\rd_j (\chi\psi_n)}{N_n})] - A \rd_i [N_n \rd_j(\chi\psi_n)] \}\right.\label{eq:N.commute.easy.2}\\
&\: \left. - \{ \rd_i [N_0^2 A(\f{\rd_j(\chi\psi_n)}{N_0})] - A \rd_i [N_0 \rd_j(\chi\psi_n)] \} \right).\label{eq:N.commute.easy.3}
\end{align}

First, note that \eqref{eq:N.commute.easy.1} can be handled completely analogously as in Step~1 in the proof of Proposition~\ref{prop:gamma.commute.final} using that $\|\f{(e_0)_n(\chi\psi_n)}{N_n} - \f{(e_0)_0(\chi\psi_n)}{N_0}\|_{L^2}\to 0$ and that $\|\rd_i [N_n^2 A (\f{\rd_j (\chi\psi_n)}{N_n})] - A \rd_i [N_n \rd_j(\chi\psi_n)]\|_{L^2}$ is uniformly bounded. 

To control $\eqref{eq:N.commute.easy.2}+\eqref{eq:N.commute.easy.3}$, we first compute as in Step~2 in the proof of Proposition~\ref{prop:gamma.commute.final}.
\begin{align}
&\: \{ \rd_i [N_n^2 A (\f{\rd_j (\chi\psi_n)}{N_n})] - A \rd_i [N_n \rd_j(\chi\psi_n)] \}  - \{ \rd_i [N_0^2 A(\f{\rd_j(\chi\psi_n)}{N_0})] - A \rd_i [N_0 \rd_j(\chi\psi_n)] \}\nonumber\\
=&\: b \rd_i [N_n^2 \widetilde{m}(\f 1i\nabla) (\f{\rd_j (\chi\psi_n)}{N_n} - \f{\rd_j(\chi\psi_n)}{N_0})] - b \rd_i \widetilde{m}(\f 1i\nabla) [N_n^2 (\f{\rd_j(\chi\psi_n)}{N_n}-\f{\rd_j(\chi\psi_n)}{N_0})] \label{N.commute.easy.1}\\
&\: + (\rd_i b)N_n^2 \widetilde{m}(\f 1i\nabla) (\f{\rd_j (\chi\psi_n)}{N_n} - \f{\rd_j(\chi\psi_n)}{N_0}) + (\rd_i b)(N_n^2- N_0^2) \widetilde{m}(\f 1i\nabla) (\f{\rd_j(\chi\psi_n)}{N_0}) \label{N.commute.easy.2}\\
&\: + b [\rd_i (N_n^2- N_0^2)] \widetilde{m}(\f 1i\nabla) (\f{\rd_j(\chi\psi_n)}{N_0}) - b  \widetilde{m}(\f 1i\nabla) [(\rd_i(N_n^2 -N_0^2))\f{\rd_j(\chi\psi_n)}{N_0}] \label{N.commute.easy.3}\\
&\: - b (N_n^2- N_0^2) [\widetilde{m}(\f 1i\nabla) (\f{(\rd_i N_0)\rd_{j}(\chi\psi_n)}{N_0^2})] + b \widetilde{m}(\f 1i\nabla) [(N_n^2 -N_0^2)(\f{(\rd_i N_0)\rd_{j}(\chi\psi_n)}{N_0^2})] \label{N.commute.easy.4}\\
&\: + b (N_n^2- N_0^2) \widetilde{m}(\f 1i\nabla) (\f{\rd^2_{ij}(\chi\psi_n)}{N_0}) - b \widetilde{m}(\f 1i\nabla) [(N_n^2 -N_0^2)\f{\rd^2_{ij}(\chi\psi_n)}{N_0}].\label{N.commute.hard}
\end{align}

Using the Calder\'on commutator theorem (Lemma~\ref{lem:PSIDOs}.6), $L^2$ boundedness of $\widetilde{m}(\f 1i\nabla)$ (Lemma~\ref{lem:PSIDOs}.4), H\"older's inequality, and the estimates in \eqref{assumption.0}, \eqref{assumption.1} and Proposition~\ref{prop:spatial.imp}, we obtain (in a similar manner as \eqref{gamma.commute.easy.2.1} and \eqref{gamma.commute.easy.2.2})
\begin{equation*}
\begin{split}
&\: \|\eqref{N.commute.easy.1}\|_{L^2} + \|\eqref{N.commute.easy.2}\|_{L^2} + \|\eqref{N.commute.easy.3}\|_{L^2} + \|\eqref{N.commute.easy.4}\|_{L^2} \\
\ls &\: \|\widetilde{\chi} N_n^2\|_{W^{1,\infty}}\|\f{\rd_j(\chi\psi_n)}{N_n}-\f{\rd_j(\chi\psi_n)}{N_0}\|_{L^2} + \|\widetilde{\chi} (N_n^2 - N_0^2)\|_{L^\i} \|\f{\rd_j(\chi\psi_n)}{N_0}\|_{L^2} \\
&\: + \|\rd_i(\widetilde{\chi} (N_n^2 - N_0^2))\|_{L^\i} \|\f{\rd_j(\chi\psi_n)}{N_0}\|_{L^2} \ls \lambda_n^{\f 12}.
\end{split}
\end{equation*}
Therefore, the contribution of \eqref{N.commute.easy.1}--\eqref{N.commute.easy.4} to $\eqref{eq:N.commute.easy.2}+\eqref{eq:N.commute.easy.3}$ $\to 0$ in analogy with \eqref{gamma.commute.easy.final.2}.

However, \underline{in contrast to Proposition~\ref{prop:gamma.commute.final}}, it is not clear whether the term \eqref{N.commute.hard} converges to $0$ in $L^2$. This therefore gives rise to the additional term $\mathrm{I}$ in the statement of the proposition. \qedhere
\end{proof}

The remaining task of this subsection is therefore to show that $\mathrm{I}$ in Proposition~\ref{prop:N.commute.1} $\to 0$ as $n\to +\infty$. (This requires a use of the full trilinear structure.) We first perform a series of reductions; see Propositions~\ref{prop:reduction.to.diff} and \ref{prop:reduction.to.frozen} below.

Our first reduction is to show that $\mathrm{I}$ in Proposition~\ref{prop:N.commute.1} has the same limit after the replacement $\psi_n \rightsquigarrow \psi_n - \psi_0$.

\begin{proposition}\label{prop:reduction.to.diff}
\begin{equation*}
\begin{split}
&\: \int_{\mathbb R^{2+1}} \f{(e_0)_0(\chi\psi_n)}{N_0} \de^{ij} b \Big\{ (N_n^2 - N_0^2) \widetilde{m}(\f 1i\nabla)  (\f{\rd^2_{ij} (\chi\psi_n )}{N_0}) - \widetilde{m}(\f 1i\nabla)  \big[(N_n^2 - N_0^2) \f{\rd_{ij}^2(\chi\psi_n )}{N_0}\big] \Big\}\,\ud x \\
&\: -\int_{\mathbb R^{2+1}} \f{(e_0)_0(\chi(\psi_n-\psi_0))}{N_0} \de^{ij} b \Big\{ (N_n^2 - N_0^2) \widetilde{m}(\f 1i\nabla)  (\f{\rd^2_{ij} (\chi(\psi_n-\psi_0) )}{N_0}) \\
&\: \qquad\qquad\qquad\qquad\qquad\qquad\qquad\qquad - \widetilde{m}(\f 1i\nabla)  [(N_n^2 - N_0^2) \f{\rd_{ij}^2(\chi(\psi_n-\psi_0) )}{N_0}] \Big\}\,\ud x \to 0.
\end{split}
\end{equation*}
\end{proposition}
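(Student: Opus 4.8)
\textbf{Proof proposal for Proposition~\ref{prop:reduction.to.diff}.} The plan is to expand $\psi_n = (\psi_n - \psi_0) + \psi_0$ in each of the three ``slots'' occupied by $\psi_n$ in the trilinear expression (the factor $(e_0)_0(\chi\psi_n)$, the factor $\widetilde{m}(\tfrac1i\nabla)(\rd^2_{ij}(\chi\psi_n)/N_0)$ inside the first product, and the factor $\rd^2_{ij}(\chi\psi_n)/N_0$ inside the commuted term), and to show that every resulting contribution in which at least one of the three slots carries $\psi_0$ (rather than $\psi_n - \psi_0$) tends to $0$ as $n\to+\infty$. The difference in the statement is exactly the ``all three slots are $\psi_n - \psi_0$'' piece, so this suffices. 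A convenient bookkeeping device: write $\mathrm{I} = B(\chi\psi_n, \chi\psi_n, \chi\psi_n)$ where $B(f,g,h) := \int \tfrac{(e_0)_0 f}{N_0}\,\de^{ij} b\{(N_n^2-N_0^2)\widetilde{m}(\tfrac1i\nabla)(\tfrac{\rd^2_{ij}g}{N_0}) - \widetilde{m}(\tfrac1i\nabla)[(N_n^2-N_0^2)\tfrac{\rd^2_{ij}h}{N_0}]\}\,\ud x$ is trilinear; then the claim is $B(\phi_n,\phi_n,\phi_n) - B(\phi_n-\phi_0, \phi_n-\phi_0, \phi_n-\phi_0) \to 0$ where $\phi_n := \chi\psi_n$, $\phi_0 := \chi\psi_0$, which expands into seven mixed terms, and I must show each $\to 0$.

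The estimates needed are all of the same flavour. For any of the mixed terms, I bound it in $L^1$ by Cauchy--Schwarz after placing the $(e_0)_0(\cdot)$ factor (or its $L^2$-adjoint-transferred partner) in $L^2$ and the bracketed pseudo-differential expression in $L^2$. The key structural point, exactly as in the ``$\mathrm{I}$-term analysis'' around Proposition~\ref{prop:N.commute.1}, is the following: the bracket $\{(N_n^2-N_0^2)\widetilde{m}(\tfrac1i\nabla)(\tfrac{\rd^2_{ij}h}{N_0}) - \widetilde{m}(\tfrac1i\nabla)[(N_n^2-N_0^2)\tfrac{\rd^2_{ij}h}{N_0}]\}$ is a commutator $[(N_n^2-N_0^2), \widetilde{m}(\tfrac1i\nabla)]$ applied to $\tfrac{\rd^2_{ij}h}{N_0}$, so if $h$ is smooth and fixed (the $h = \phi_0$ case) the Calder\'on commutator theorem (Lemma~\ref{lem:PSIDOs}.6) gives an $L^2$ bound $\ls \|\rd(N_n^2-N_0^2)\|_{L^\infty}\|\rd^2_{ij}(\chi\psi_0)/N_0\|_{L^2}$ which, by Proposition~\ref{prop:spatial.imp}, is $O(\lambda_n^{1/2})$ --- wait, that is the wrong direction, since $\rd_t(N_n^2-N_0^2)$ is only $O(1)$; so in the $h=\phi_0$ case one instead commutes differently. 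Let me reorganize: for a smooth fixed $h$, $\rd^2_{ij}(\chi\psi_0)$ is smooth and one can integrate by parts / use self-adjointness of $\widetilde{m}(\tfrac1i\nabla)$ to move everything onto $N_n^2-N_0^2$ and the smooth factors, whence the whole term is $O(\|N_n-N_0\|_{L^\infty}) = O(\lambda_n)$. For the $h = \phi_n - \phi_0$ case but with $g = \phi_0$ or the $(e_0)_0$ slot $= \phi_0$, one argues similarly: the slot carrying $\phi_0$ supplies a smooth bounded factor, $\|\rd^2(\chi(\psi_n-\psi_0))\|_{L^2} = O(\lambda_n^{-1}\cdot\lambda_n^{?})$ is controlled by \eqref{assumption.2} on the compact set $\overline{\Omega'''}$ (noting $\rd^2_{ij}$ has the cutoff so it is genuinely bounded in $L^2$ after multiplication by $\chi$, giving $O(\lambda_n^{-1})$), while the other two slots together must then beat $\lambda_n^{-1}$. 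The point is that whenever at least one slot carries the smooth $\psi_0$, one of the three ``derivative costs'' $\lambda_n^{-1}$ is removed, leaving a net gain; more precisely the product of the three slot-sizes becomes $O(\lambda_n^{-1}) \times O(\lambda_n^{-1}) \times O(\lambda_n^{\text{good}})$, and after also using the commutator gain from Lemma~\ref{lem:PSIDOs}.6 (worth a factor $\lambda_n$ since $\|\rd_i(N_n^2-N_0^2)\|_{L^\infty} = O(\lambda_n^{1/2})$ by Proposition~\ref{prop:spatial.imp}, with the bad $\rd_t$ direction absorbed into the other factors) or from \eqref{assumption.0} ($\|N_n - N_0\|_{L^\infty} = O(\lambda_n)$), the total is $o(1)$.

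Concretely I would split into cases by which slot carries $\psi_0$: (a) the $(e_0)_0$ slot carries $\psi_0$ — then $\|(e_0)_0(\chi\psi_0)/N_0\|_{L^2} = O(1)$, the commutator bracket applied to $\rd^2(\chi(\psi_n-\psi_0))/N_0$ is bounded in $L^2$ by $\|\rd(N_n^2-N_0^2)\|_{L^\infty}\|\rd^2(\chi(\psi_n-\psi_0))\|_{L^2} = O(\lambda_n^{1/2})\cdot O(\lambda_n^{-1}) = O(\lambda_n^{-1/2})$, which is \emph{not} good enough alone, so one instead first moves $\widetilde{m}(\tfrac1i\nabla)$ onto the $(e_0)_0$ slot (self-adjoint up to smoothing, Lemma~\ref{lem:PSIDOs}.3, ~.5) so that the $\rd^2$ slot can be integrated by parts once against the smooth $(e_0)_0(\chi\psi_0)$, trading $\|\rd^2(\chi(\psi_n-\psi_0))\|_{L^2}$ for $\|\rd(\chi(\psi_n-\psi_0))\|_{L^2} = O(1)$ and a commutator term; (b) one of the two ``$\rd^2_{ij}$'' slots carries $\psi_0$ — symmetric, and in fact easier since $\rd^2_{ij}(\chi\psi_0)$ is smooth and bounded in every $L^p$; (c) two or three slots carry $\psi_0$ — strictly easier still. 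The main obstacle I anticipate is exactly case (a) with the second $\rd^2$-slot carrying $\psi_n - \psi_0$: there one has two factors of $\rd^2(\chi(\psi_n-\psi_0))$-type quantities, i.e. $O(\lambda_n^{-2})$ nominal size, and one must find two independent gains — one from $N_n - N_0$ being $O(\lambda_n)$ in $L^\infty$ (or $O(\lambda_n^{1/2})$ for its spatial derivative via Proposition~\ref{prop:spatial.imp}), and a second from integrating by parts to lower one of the two second-order derivatives to first order. Once this careful $L^2\times L^2$ Cauchy--Schwarz estimate with the two gains is in place the term is $o(1)$, and summing the finitely many cases gives the Proposition. This is entirely analogous to (but slightly simpler than) the argument already carried out for the terms \eqref{N.commute.easy.1}--\eqref{N.commute.easy.4} in the proof of Proposition~\ref{prop:N.commute.1}, so I would cite that argument for the routine parts rather than reproduce it.
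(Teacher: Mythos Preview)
Your overall strategy --- expand $\psi_n = (\psi_n-\psi_0)+\psi_0$ and show the mixed terms vanish --- is correct and is what the paper does. But your execution is more complicated than necessary and contains a structural confusion.

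First, the expression is \emph{bilinear}, not trilinear. The two appearances of $\rd^2_{ij}(\chi\psi_n)$ inside the bracket always carry the same function; the bracket is a commutator $[(N_n^2-N_0^2),\widetilde{m}(\tfrac1i\nabla)]$ acting on a \emph{single} argument. So there are three mixed terms (the paper labels them LL, HL, LH according to which of the two slots carries $\psi_0$), not seven. Your later worry about ``two factors of $\rd^2(\chi(\psi_n-\psi_0))$-type quantities, i.e.\ $O(\lambda_n^{-2})$ nominal size'' stems from this: the two $\rd^2$ occurrences sit on opposite sides of a \emph{difference}, not a product, so each half of the bracket contains exactly one second-derivative factor, and the nominal size of the LH term is $O(1)$, not $O(\lambda_n^{-2})$.

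Second, the paper does \emph{not} use the commutator structure at all here --- it bounds the two halves of the bracket separately. When the $\rd^2$ slot carries $\psi_0$ (cases LL and HL), H\"older's inequality with $\|N_n^2-N_0^2\|_{L^\infty}\ls\lambda_n$ from \eqref{assumption.0} and $\rd^2(\chi\psi_0)\in L^2$ gives $O(\lambda_n)$ directly; no Calder\'on commutator, no subtlety. When the $(e_0)_0$ slot carries $\psi_0$ and the $\rd^2$ slot carries $\psi_n$ (case LH), one integrates by parts once in $\rd_i$ (which commutes with $\widetilde{m}(\tfrac1i\nabla)$) to trade $\rd^2_{ij}(\chi\psi_n)$ for $\rd_j(\chi\psi_n)$; the derivative then lands either on the smooth factor $(e_0)_0(\chi\psi_0)/N_0$ (giving $O(\lambda_n)$) or on $N_n^2-N_0^2$ (giving $O(\lambda_n^{1/2})$ by Proposition~\ref{prop:spatial.imp}, since this is a \emph{spatial} derivative).

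Your detour through the Calder\'on commutator with $\|\rd(N_n^2-N_0^2)\|_{L^\infty}$ runs into exactly the obstacle you noticed (the $\rd_t$ component is only $O(1)$), and the fix you propose --- move $\widetilde{m}$ by self-adjointness, then integrate by parts --- recovers the paper's LH argument by a longer route. The citation of \eqref{N.commute.easy.1}--\eqref{N.commute.easy.4} is not quite apt: those terms \emph{are} handled by Calder\'on, whereas the point of the present proposition is precisely that the commutator structure is irrelevant once one of the wave slots is smooth.
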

\begin{proof}
It clearly suffices to prove the following three convergences as $n\to +\infty$:
\begin{equation}\label{N.diff.LL}
\begin{split}
\int_{\mathbb R^{2+1}} \f{(e_0)_0(\chi\psi_0)}{N_0} \de^{ij} b \Big\{ (N_n^2 - N_0^2) \widetilde{m}(\f 1i\nabla)  (\f{\rd^2_{ij} (\chi\psi_0)}{N_0}) - \widetilde{m}(\f 1i\nabla) \big[(N_n^2 - N_0^2) \f{\rd^2_{ij}(\chi\psi_0)}{N_0} \big] \Big\}\,\ud x \to 0,
\end{split}
\end{equation}
\begin{equation}\label{N.diff.HL}
\begin{split}
\int_{\mathbb R^{2+1}} \f{(e_0)_0(\chi\psi_n)}{N_0} \de^{ij} b \Big\{ (N_n^2 - N_0^2) \widetilde{m}(\f 1i\nabla)(\f{\rd^2_{ij} (\chi\psi_0)}{N_0}) - \widetilde{m}(\f 1i\nabla) \big[(N_n^2 - N_0^2) \f{\rd^2_{ij}(\chi\psi_0)}{N_0} \big] \Big\}\,\ud x \to 0,
\end{split}
\end{equation}
and
\begin{equation}\label{N.diff.LH}
\begin{split}
\int_{\mathbb R^{2+1}} \f{(e_0)_0(\chi\psi_0)}{N_0} \de^{ij} b \Big\{ (N_n^2 - N_0^2) \widetilde{m}(\f 1i\nabla)  (\f{\rd^2_{ij} (\chi\psi_n)}{N_0}) - \widetilde{m}(\f 1i\nabla) \big[(N_n^2 - N_0^2) \f{\rd^2_{ij}(\chi\psi_n)}{N_0} \big] \Big\}\,\ud x \to 0.
\end{split}
\end{equation}

We will in fact not need to take advantage of the commutator $[(N_n^2 - N_0^2), \widetilde{m}(\f 1i\nabla)]$ in the expressions above. We will simply control the first term in each of \eqref{N.diff.LL}--\eqref{N.diff.LH}; the second term in each line can be handled in exactly the same way.

\pfstep{Step~1: Proof of \eqref{N.diff.LL} and \eqref{N.diff.HL}} The terms \eqref{N.diff.LL} and \eqref{N.diff.HL} are easier because $\psi_0$ is smooth and we can directly bound its second derivatives. More precisely, using H\"older's inequality, the boundedness of $\widetilde{m}(\f 1i\nabla)$ on $L^2$, and the estimate \eqref{assumption.0}, we obtain
\begin{equation*}
\begin{split}
|\mbox{First term in \eqref{N.diff.LL}}| 
\ls &\: \|(e_0)_0(\chi \psi_0)\|_{L^2} \| \widetilde{\chi} (N_n^2 - N_0^2)\|_{L^\i} \|\de^{ij} \rd^2_{ij} (\chi \psi_0)\|_{L^2}
\ls  1 \cdot \lambda_n\cdot 1 \ls \lambda_n \to 0.
\end{split}
\end{equation*}

Similarly, but using in addition \eqref{assumption.1}, we obtain
\begin{equation*}
\begin{split}
|\mbox{First term in \eqref{N.diff.HL}} |
\ls &\: \|(e_0)_0(\chi \psi_n)\|_{L^2} \| \widetilde{\chi} (N_n^2 - N_0^2)\|_{L^\i} \|\de^{ij} \rd^2_{ij} (\chi \psi_0)\|_{L^2} 
\ls  1 \cdot \lambda_n\cdot 1 \ls \lambda_n \to 0.
\end{split}
\end{equation*}

\pfstep{Step~2: Proof of \eqref{N.diff.LH}} The key is an integration by parts to throw the derivatives on the smooth $\psi_0$. More precisely, after integrating by parts, applying H\"older's inequality and the boundedness of $\widetilde{m}(\f 1i\nabla)$ in $L^2$, and using the estimates in \eqref{assumption.0}, \eqref{assumption.1} and Proposition~\ref{prop:spatial.imp}, we obtain
\begin{equation*}
\begin{split}
|\mbox{First term in \eqref{N.diff.LH}}| 
\ls &\: \left| \int_{\mathbb R^{2+1}} \rd_i\f{(e_0)_0 (\chi \psi_0)}{N_0} \de^{ij}(N_n^2 - N_0^2) \widetilde{m}(\f 1i\nabla) \f{\rd_j (\chi \psi_n)}{N_0}  \,\ud x\right| \\
&\:+ \left| \int_{\mathbb R^{2+1}} \f{(e_0)_0(\chi \psi_0)}{N_0} \de^{ij}\rd_i(N_n^2 - N_0^2) \widetilde{m}(\f 1i\nabla)\f{\rd_j (\chi \psi_n)}{N_0}  \,\ud x\right| \\
&\: + \left| \int_{\mathbb R^{2+1}} \f{(e_0)_0(\chi \psi_0)}{N_0} \de^{ij}(N_n^2 - N_0^2) \widetilde{m}(\f 1i\nabla)\f{(\rd_i N_0)\rd_j (\chi \psi_n)}{N_0^2}  \,\ud x\right| \\
\ls &\: \|\rd_i(e_0)_0(\chi \psi_0)\|_{L^2} \| N_n^2 - N_0^2\|_{L^\i} \|\rd_j (\chi \psi_n)\|_{L^2}  \\
&\: + \|(e_0)_0(\chi \psi_0)\|_{L^2} (\| \rd_i(N_n^2 - N_0^2)\|_{L^\i} + \| N_n^2 - N_0^2\|_{L^\i}) \|\rd_j (\chi \psi_n)\|_{L^2} \\
\ls &\: \lambda_n + \lambda_n^{\f 12} \ls \lambda_n^{\f 12} \to 0. 
\end{split}
\end{equation*}
\end{proof}

Our next reduction is to freeze the coefficients (cf.~Section~\ref{sec:freeze.coeff}). We show that the difficult term is essentially the same as a ``frozen coefficient'' version up to error terms which are $o(1)$.
\begin{proposition}\label{prop:reduction.to.frozen}
Let $b_{c,\alp}$, $N_{c,\alp}$, $\bt_{c,\alp}^i$ be as in Proposition~\ref{prop:constants}. Denote moreover $(e_0)_{c,\alp}:= \rd_t - \bt^i_{c,\alp} \rd_i$. Then
\begin{equation*}
\begin{split}
&\: \int_{\mathbb R^{2+1}} \f{(e_0)_0(\chi(\psi_n-\psi_0))}{N_0} \de^{ij} b \{ (N_n^2 - N_0^2) \widetilde{m}(\f 1i\nabla)  (\f{\rd^2_{ij} (\chi(\psi_n-\psi_0) )}{N_0}) - \widetilde{m}(\f 1i\nabla)  [(N_n^2 - N_0^2) \f{\rd_{ij}^2(\chi(\psi_n-\psi_0) )}{N_0}] \}\,\ud x \\
&\: - \sum_\alp \f{b_{c,\alp}}{N_{c,\alp}^2} \int_{\mathbb R^{2+1}} (e_0)_{c,\alp}(\zeta_\alp\chi(\psi_n-\psi_0)) \de^{ij} \{ \zeta_\alp(N_n^2 - N_0^2) \widetilde{m}(\f 1i\nabla)  (\rd^2_{ij} (\zeta_\alp\chi(\psi_n-\psi_0) )) \\
&\: \qquad\qquad\qquad\qquad\qquad\qquad\qquad\qquad\qquad- \widetilde{m}(\f 1i\nabla)  [\zeta_\alp(N_n^2 - N_0^2) \rd_{ij}^2(\zeta_\alp\chi(\psi_n-\psi_0) )] \}\,\ud x \to 0.
\end{split}
\end{equation*}
\end{proposition}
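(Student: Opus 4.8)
The plan is to reduce the difficult term in Proposition~\ref{prop:N.commute.1} (after the replacement $\psi_n\mapsto \psi_n-\psi_0$ justified in Proposition~\ref{prop:reduction.to.diff}) to a sum over the balls $B_\alp$ of frozen-coefficient versions, at the cost of error terms that are $o(1)$. The starting point is the partition of unity $\sum_\alp \zeta_\alp^3 = 1$ on $\Omega'$, which allows us to write the integrand as $\sum_\alp \zeta_\alp^3 (\cdots)$; since all the functions appearing have derivatives supported where $\chi$ or $\widetilde\chi$ equal $1$, the cutoffs cause no trouble. The three frozen quantities we must introduce are: the coefficient $b(x)\rightsquigarrow b_{c,\alp}$, the coefficient $N_0^{-1}$ (appearing twice, once in each factor $\tfrac{(e_0)_0(\cdots)}{N_0}$ and $\tfrac{\rd^2_{ij}(\cdots)}{N_0}$)$\rightsquigarrow N_{c,\alp}^{-1}$, and the shift vector $\bt_0^i$ inside $(e_0)_0 \rightsquigarrow (e_0)_{c,\alp}$. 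Each replacement introduces a difference bounded in $L^\infty(B_\alp)$ by $O(\lambda_n^{\ep_0})$ by Proposition~\ref{prop:constants}.

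First I would distribute the cutoffs. The key algebraic identity is that moving $\zeta_\alp$ powers past $\rd^2_{ij}$ and past ${\widetilde m}(\tfrac1i\nabla)$ produces commutator terms; these are controlled by Proposition~\ref{prop:psi.localized} (estimates \eqref{eq:psi.localized.2}, \eqref{eq:psi.localized.3}) for the $\rd^2_{ij}$-commutators, and by a Calderón-commutator-type bound (Lemma~\ref{lem:PSIDOs}.6, exactly as used in Section~\ref{sec:commutator.N}) together with Proposition~\ref{prop:g.localized} for the commutators with ${\widetilde m}(\tfrac1i\nabla)$. The crucial point, as in Proposition~\ref{prop:N.commute.1}, is to \emph{count powers of $\lambda_n$}: the typical term is schematically $\|(e_0)_0(\zeta_\alp^a\chi(\psi_n-\psi_0))\|_{L^2(B_\alp)} \cdot \|\zeta_\alp^b(N_n^2-N_0^2)\|_{L^\infty} \cdot \|\rd^2_{ij}(\zeta_\alp^c\chi(\psi_n-\psi_0))\|_{L^2(B_\alp)}$, and by Propositions~\ref{prop:psi.localized} and \ref{prop:g.localized} this is $O(\lambda_n^{3\ep_0/2}) \cdot O(\lambda_n) \cdot O(\lambda_n^{-1+3\ep_0/2}) = O(\lambda_n^{3\ep_0})$ per ball; summing over the $O(\lambda_n^{-3\ep_0})$ balls gives $O(1)$, which is \emph{not} enough. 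So the gain must come from the coefficient-freezing: each of the three frozen replacements gains an extra $\lambda_n^{\ep_0}$ per ball (from $\|b-b_{c,\alp}\|_{L^\infty(B_\alp)}\ls\lambda_n^{\ep_0}$ etc.), making each such error $O(\lambda_n^{3\ep_0}\cdot\lambda_n^{\ep_0}) = O(\lambda_n^{4\ep_0})$ per ball, hence $O(\lambda_n^{\ep_0}) = o(1)$ after summation. Similarly, commutator terms where a derivative lands on $\zeta_\alp$ gain a factor $\lambda_n^{\ep_0}$ (from \eqref{eq:psi.localized.2}, \eqref{eq:psi.localized.3}, or from the improved $\rd_i$-estimates in Proposition~\ref{prop:g.localized}), producing the same $o(1)$ bound. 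One must also restore the shift: replacing $(e_0)_0 = \rd_t-\bt_0^i\rd_i$ by $(e_0)_{c,\alp} = \rd_t-\bt^i_{c,\alp}\rd_i$ produces a term with $(\bt_0^i-\bt^i_{c,\alp})\rd_i(\zeta_\alp\chi(\psi_n-\psi_0))$, again gaining $\lambda_n^{\ep_0}$.

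I would organize the proof as a chain of intermediate replacements, each shown to change the expression by $o(1)$: (i) insert the partition of unity and commute the $\zeta_\alp$ factors inside, absorbing commutator errors; (ii) replace $b\rightsquigarrow b_{c,\alp}$ on each ball; (iii) replace the two factors of $N_0^{-1}$ (and note $N_n^2-N_0^2$ stays, since that is the ``small'' quantity driving the whole estimate, and should \emph{not} be frozen — we keep $\zeta_\alp(N_n^2-N_0^2)$ as is, only multiplying by $\zeta_\alp$); (iv) replace $(e_0)_0\rightsquigarrow (e_0)_{c,\alp}$. In each step the error is estimated via Hölder, the $L^2$-boundedness of ${\widetilde m}(\tfrac1i\nabla)$, the Calderón commutator theorem, and the localized estimates of Propositions~\ref{prop:psi.localized}, \ref{prop:g.localized}, \ref{prop:constants}; the arithmetic of the $\lambda_n$ powers always closes because $\ep_0>0$, and it is here that the choice $\ep_0\in(\tfrac16,\tfrac12)$ matters (in particular $\ep_0<\tfrac12$ is used so that $\|\rd_i(\zeta_\alp(N_n^2-N_0^2))\|_{L^\infty}\ls\lambda_n^{1/2}$ dominates, and $\ep_0>\tfrac16$ will be needed in the next section when the frozen expression is analyzed in Fourier space). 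The main obstacle I anticipate is bookkeeping: there are many commutator terms (from three $\zeta_\alp$ insertions, from moving past ${\widetilde m}$, and from moving past $\rd^2_{ij}$), and one must check in every single one that either a coefficient difference or a misplaced derivative on $\zeta_\alp$ supplies the decisive extra $\lambda_n^{\ep_0}$; the worst terms are the ones with \emph{no} coefficient freezing and the derivative on $\psi_n-\psi_0$ rather than on $\zeta_\alp$, but these are precisely the ones that get passed \emph{unchanged} into the frozen expression, so they never need to be shown small. Verifying this dichotomy — ``every error term either is frozen away or is a $\zeta_\alp$-commutator'' — is the heart of the argument.
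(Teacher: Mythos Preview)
Your proposal is correct and follows essentially the same approach as the paper: insert the partition of unity $\sum_\alp \zeta_\alp^3 = 1$, then on each ball replace the smooth coefficients $b$, $N_0^{-1}$, $\bt_0^i$ by their frozen values from Proposition~\ref{prop:constants}, and commute the $\zeta_\alp$ factors inside the derivatives via Proposition~\ref{prop:psi.localized}; every error term carries an extra $\lambda_n^{\ep_0}$ (either from a coefficient difference or from a derivative landing on $\zeta_\alp$), which after summation over the $O(\lambda_n^{-3\ep_0})$ balls yields $o(1)$. The paper's proof is a single sentence citing exactly Propositions~\ref{prop:constants} and \ref{prop:psi.localized}, H\"older, and the $L^2$ boundedness of $\widetilde m(\tfrac1i\nabla)$; your write-up is a faithful and considerably more detailed expansion of that sketch, and your closing dichotomy (``every error term either is frozen away or is a $\zeta_\alp$-commutator'') is precisely the organizing principle.

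One minor remark: the paper's proof does not invoke the Calder\'on commutator theorem (Lemma~\ref{lem:PSIDOs}.6) or Proposition~\ref{prop:g.localized} here, whereas you do. For moving $\zeta_\alp$ past $\widetilde m(\tfrac1i\nabla)$ this is a natural tool to reach for, but note that $\|\zeta_\alp\|_{\mathrm{Lip}}\ls\lambda_n^{-\ep_0}$ is a \emph{loss}, so if you go that route you must pair it with a gain elsewhere (e.g.\ the extra factor of $\lambda_n$ from $\|N_n^2-N_0^2\|_{L^\infty}$ or the volume factor from localization). The paper's terse proof suggests this step is meant to be handled with only $L^2$-boundedness of $\widetilde m$ and the localized estimates; either way the bookkeeping closes, but be careful that every commutator with $\widetilde m$ is accounted for with the correct power of $\lambda_n$.
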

\begin{proof}
We write $1 = \sum_{\alp} \zeta_\alp^3$. Then for every $\alp$, we apply the estimates in Proposition~\ref{prop:constants} and \ref{prop:psi.localized} together with H\"older's inequality and the $L^2$ boundedness of $\widetilde{m}(\f 1i \nabla)$ to obtain the desired result.  \qedhere
\end{proof}

After the series of reductions above, we now finally estimate the term with frozen coefficients. As we have indicated earlier, the frozen coefficients allow us to employ Fourier techniques and exploit crucial cancellations.
\begin{proposition}\label{prop:main.N.term}
Let $b_{c,\alp}$, $N_{c,\alp}$, $\bt_{c,\alp}^i$ and $(e_0)_{c,\alp}$ be as in Proposition~\ref{prop:reduction.to.frozen}. Then
\begin{equation*}
\begin{split}
&\: \sum_\alp \f{b_{c,\alp}}{N_{c,\alp}^2} \int_{\mathbb R^{2+1}} (e_0)_{c,\alp}(\zeta_\alp\chi(\psi_n-\psi_0)) \de^{ij} \{ \zeta_\alp(N_n^2 - N_0^2) \widetilde{m}(\f 1i\nabla)  (\rd^2_{ij} (\zeta_\alp\chi(\psi_n-\psi_0) )) \\
&\: \qquad\qquad\qquad\qquad\qquad\qquad\qquad\qquad\qquad- \widetilde{m}(\f 1i\nabla)  [\zeta_\alp(N_n^2 - N_0^2) \rd_{ij}^2(\zeta_\alp\chi(\psi_n-\psi_0) )] \}\,\ud x \to 0.
\end{split}
\end{equation*}

\end{proposition}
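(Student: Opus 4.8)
## Proof proposal for Proposition~\ref{prop:main.N.term}

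The plan is to work on each frozen-coefficient block separately and pass to the (spacetime) Fourier side, where the constant-coefficient structure lets us compute the commutator exactly. First I would abbreviate $u_n := \zeta_\alp\chi(\psi_n-\psi_0)$ and $V_n := \zeta_\alp(N_n^2-N_0^2)$; note $u_n$ is supported in $B_\alp$ with $\|\rd^k u_n\|_{L^p}\ls \lambda_n^{1-k+3\ep_0/p}$ (Proposition~\ref{prop:psi.localized}), and $V_n$ is supported in $B_\alp$ with $\|V_n\|_{L^p}\ls\lambda_n^{1+3\ep_0/p}$, $\|\rd_i V_n\|_{L^p}\ls\lambda_n^{1/2+3\ep_0/p}$, $\|\rd_t V_n\|_{L^p}\ls\lambda_n^{3\ep_0/p}$, $\|\Delta V_n\|_{L^p}\ls\lambda_n^{3\ep_0/p}$ (Proposition~\ref{prop:g.localized}, using that $\rd_t$ of the metric components other than $\gamma$ only obeys \eqref{assumption.1} — but here only $N$ appears, so I should also invoke Proposition~\ref{prop:spatial.imp} for the improved \emph{spatial} bound). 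Since $(e_0)_{c,\alp}$ has constant coefficients and $\widetilde m(\tfrac1i\nabla)$ is a Fourier multiplier, each summand equals, via Plancherel,
\begin{equation*}
\f{i\, b_{c,\alp}}{N_{c,\alp}^2}\int_{\mathbb R^{2+1}\times\mathbb R^{2+1}} (\xi_t-\bt^k_{c,\alp}\xi_k)\,\delta^{ij}\bigl(-\eta_i\eta_j\bigr)\,\widehat{V_n}(\xi-\eta)\,\widehat{u_n}(-\xi)\,\widehat{u_n}(\eta)\,(\widetilde m(\eta)-\widetilde m(\xi))\,\ud\eta\,\ud\xi,
\end{equation*}
and after symmetrizing in $\xi\leftrightarrow\eta$ (using that $u_n$ is real, so $\widehat{u_n}(-\xi)=\overline{\widehat{u_n}(\xi)}$, and $\widetilde m$ even) the symbol becomes essentially $(\xi_t-\bt^k_{c,\alp}\xi_k)|\eta_i|^2+(\eta_t-\bt^k_{c,\alp}\eta_k)|\xi_i|^2$ times $(\widetilde m(\xi)-\widetilde m(\eta))$ — cf.~\eqref{eq:intro.hard.yet.again}. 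This is the three-derivative symbol that must be dismantled.

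The core algebraic step is the decomposition from the introduction: writing $q:=\xi-\eta$ (the frequency landing on $V_n$), one has
\begin{equation*}
(\xi_t-\bt^k_{c,\alp}\xi_k)|\eta_i|^2+(\eta_t-\bt^k_{c,\alp}\eta_k)|\xi_i|^2
= (\eta_t-\bt^k_{c,\alp}\eta_k)(\xi_i+\eta_i)q_i + |\eta_i|^2\bigl((\xi_t-\bt^k_{c,\alp}\xi_k)+(\eta_t-\bt^k_{c,\alp}\eta_k)\bigr),
\end{equation*}
and for the second piece, using $(\xi_t-\bt\xi)+(\eta_t-\bt\eta) = \tfrac{(\xi_t-\bt\xi)^2-(\eta_t-\bt\eta)^2}{(\xi_t-\bt\xi)-(\eta_t-\bt\eta)}$ and the frozen wave symbol $\widetilde\Box_{c,\alp}$ whose symbol is $\tfrac1{N_{c,\alp}^2}(\xi_t-\bt_{c,\alp}\xi)^2-e^{-2\gamma_{c,\alp}}|\xi_i|^2$, rewrite $(\xi_t-\bt\xi)^2 = N_{c,\alp}^2 e^{-2\gamma_{c,\alp}}|\xi_i|^2 + N_{c,\alp}^2\cdot(\text{symbol of }\widetilde\Box_{c,\alp}\text{ at }\xi)$, and similarly at $\eta$, and $|\xi_i|^2-|\eta_i|^2=(\xi_i+\eta_i)q_i$. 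This exhibits every term as carrying either a factor $q_i$ (a \emph{spatial} derivative of $V_n$, for which we have the $\lambda_n^{1/2}$-gain), or a factor of the symbol of $\widetilde\Box_{c,\alp}$ acting on one of the two $u_n$ factors (for which Proposition~\ref{prop:approximate.wave} gives $\|\rd^k\widetilde\Box_{c,\alp}u_n\|_{L^p}\ls\lambda_n^{-1-k+\ep_0+3\ep_0/p}$, a gain of $\lambda_n^{\ep_0}$ over the naive $\lambda_n^{-1-k}$). The remaining denominator $(\xi_t-\bt\xi)-(\eta_t-\bt\eta)=q_t-\bt_{c,\alp}^k q_k$ is a frequency-$q$ multiplier of degree $-1$ composed with $(e_0)_{c,\alp}$ applied to $V_n$; I would control it by splitting $V_n$ into the region $|q_t-\bt_{c,\alp}q|\lesssim|q_i|$ (where the denominator is bounded by $|q_i|^{-1}$, costing nothing beyond a spatial derivative already accounted for) and its complement (where we genuinely divide, but then $\tfrac{(e_0)_{c,\alp}}{q_t-\bt q}V_n$ is just a zeroth-order multiplier of $V_n$ supported where the multiplier symbol is bounded, so it is bounded in $L^p$ like $V_n$ itself). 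Each resulting trilinear integral is then estimated back in physical space by Hölder, putting the two "$u_n$" factors (one differentiated, possibly through $\widetilde\Box_{c,\alp}$) in $L^2$ and the "$V_n$" factor (differentiated or not) in $L^\infty$, using $L^2$-boundedness of the Calderón–Zygmund multipliers involved (Lemma~\ref{lem:PSIDOs}.4); one checks the powers of $\lambda_n$: the generic term scales like $\lambda_n^{(1-a)+\ep_0 c}\cdot\lambda_n^{(1-b)}\cdot\lambda_n^{(g)}$ summed over $O(\lambda_n^{-3\ep_0})$ balls, where $a+b+(\text{derivatives on }V_n)=3$ and $c\in\{0,1\}$ counts $\widetilde\Box_{c,\alp}$-gains and $g\in\{0,\tfrac12\}$ counts the spatial-derivative gain on $V_n$; one verifies that in \emph{every} term either $c=1$ or $g=\tfrac12$ occurs, and that $\ep_0>\tfrac16$ makes $\min(\ep_0,\tfrac12)-3\ep_0>\ldots$ — actually one needs $\ep_0-3\ep_0$... let me just say: the net power is $\lambda_n^{\min(\ep_0,1/2)-3\ep_0}\cdot(\text{bounded})$, which requires the book-keeping to show the total exponent is positive, and this is exactly where the choice $\ep_0\in(\tfrac16,\tfrac12)$ enters.

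The main obstacle — and the only genuinely delicate point — is the term-by-term verification that the decomposition above really does place a good factor ($q_i$ or a $\widetilde\Box_{c,\alp}$) on \emph{every} summand simultaneously with acceptable $\lambda_n$-counting, i.e.~that there is no "diagonal" leftover of the form $|\eta_i|^2\cdot(\text{pure }\xi_t)$ with no gain. The introduction's computation shows this works at the level of symbols; the work is to make it rigorous with the commutator $[\widetilde m(\tfrac1i\nabla),V_n]$ present (Calderón commutator theorem, Lemma~\ref{lem:PSIDOs}.6, converts one derivative on $V_n$ into an $L^2\to L^2$ bound, which is consistent with the $q_i$-gain) and to handle the low-frequency part of $\widetilde m$ versus $m$ (harmless since $u_n$, $V_n$ are compactly supported and the symbols agree for $|\xi|\geq1$, contributing only $O(\lambda_n^\infty)$-type or at worst $O(\lambda_n)$ errors from a smoothing operator). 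I would organize the estimate as a single lemma: for frozen constants and for $w_n$ satisfying $\|\rd^k w_n\|_{L^2}\ls\lambda_n^{-k}\mu_n$, $\|\rd^k\widetilde\Box_{c,\alp}w_n\|_{L^2}\ls\lambda_n^{-1-k+\ep_0}\mu_n$ and $V_n$ as above, the trilinear form is $O(\lambda_n^{\min(\ep_0,1/2)}\mu_n^2\lambda_n^{3\ep_0})$ (the $\lambda_n^{3\ep_0}$ reflecting the volume of $B_\alp$ after the $L^\infty$ estimate on $V_n$ is replaced by its $L^\infty$ norm times $|B_\alp|$ coming from the two $L^2$ factors localized to $B_\alp$), then sum over the $O(\lambda_n^{-3\ep_0})$ balls to conclude $o(1)$.
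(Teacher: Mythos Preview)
Your symbol computation and the algebraic decomposition of $(\xi_t-\bt_{c,\alp}\xi)|\eta_i|^2+(\eta_t-\bt_{c,\alp}\eta)|\xi_i|^2$ into pieces carrying either a spatial factor $q_i=\xi_i-\eta_i$ or a $\widetilde\Box_{c,\alp}$-symbol are correct and match the paper. The gap is in how you handle the denominator $q_t-\bt_{c,\alp}^k q_k$. Your proposed dichotomy $|q_t-\bt_{c,\alp}q|\lessgtr|q_i|$ does not give a usable lower bound on the denominator: when the total frequency $|q|$ of $V_n$ is small, both sides can be small simultaneously and the decomposed integrand is genuinely singular (the symbol identity introduces a hyperplane singularity that is not present in the original expression). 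Relatedly, your assertion that ``$\tfrac{(e_0)_{c,\alp}}{q_t-\bt q}V_n$ is a zeroth-order multiplier bounded in $L^p$ like $V_n$'' is not the relevant quantity: what actually lands on $\widehat{V_n}$ is the degree $-1$ multiplier $(q_t-\bt_{c,\alp}q)^{-1}$ itself, which does \emph{not} act boundedly on $L^p$.

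The paper's fix is a three-piece frequency decomposition of $V_n=\zeta_\alp(N_n^2-N_0^2)$ with carefully chosen thresholds: a low-frequency piece $|q|\ls\lambda_n^{-5/6}$ handled directly by Bernstein (so that $\|(\bm N_{\rm diff})_{n,1}\|_{W^{1,\infty}}\ls\lambda_n^{1/6}$) together with the Calder\'on commutator theorem; a high-frequency piece with large spatial frequency $|q_i|\gtrsim|q|^{5/8}$ handled via the improved $\Delta V_n$ estimate (giving $\|(\bm N_{\rm diff})_{n,2}\|_{L^2}\ls\lambda_n^{25/24+3\ep_0/2}$); and only on the remaining piece---high frequency with $|q_i|\ls|q|^{5/8}$---does one perform your symbol decomposition, because there $|q_t-\bt_{c,\alp}q|\gtrsim|q|\gtrsim\lambda_n^{-5/6}$ gives a concrete lower bound on the denominator. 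The specific exponents $5/6$ and $5/8$ are not incidental: they are tuned so that each of the four resulting terms is $o(\lambda_n^{3\ep_0})$, and the constraint $\ep_0>\tfrac16$ enters precisely in the $\widetilde\Box_{c,\alp}$-terms, where the bound comes out as $\lambda_n^{-1/6+4\ep_0}$. Your vague book-keeping (``$\min(\ep_0,\tfrac12)-3\ep_0$'') will not close without this structure; the low-frequency cutoff is the missing idea.
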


\begin{proof}
We will bound each term in the sum. Since there are $O(\lambda_n^{-3\ep_0})$ terms in the sum (cf.~beginning of Section~\ref{sec:freeze.coeff}), it suffices to show that each term is $o(\lambda_n^{3\ep_0})$. This is what we will show.

From now on fix $\alp$.

\pfstep{Step~1: Frequency space decomposition} Decompose $\zeta_\alp(N_n^2-N_0^2)$ into three pieces in frequency space. For this purpose, define a smooth cutoff function $\Theta:[0,+\infty)\to \mathbb R$ such that 
\begin{equation}\label{def:Theta}
\Theta \geq 0,\quad \Theta(x) = 1 \mbox{ for $x\in [0,1]$},\quad \Theta(x) = 0 \mbox{ for $x\geq 2$}.
\end{equation}
Define now the decomposition of $\zeta_\alp(N_n^2-N_0^2)$ as follows
$$\zeta_\alp(N_n^2-N_0^2) = ({\bm N}_{\mathrm{diff}})_{n,1} + ({\bm N}_{\mathrm{diff}})_{n,2} + ({\bm N}_{\mathrm{diff}})_{n,3},$$
where
\begin{equation}\label{def:N.decomp.1}
\widehat{({\bm N}_{\mathrm{diff}})_{n,1}}(\xi) := \Theta(\lambda_n^{\f 56}|\xi|)\left(\widehat{\zeta_\alp N_n^2}(\xi)-\widehat{\zeta_\alp N_0^2}(\xi)\right),
\end{equation}
\begin{equation}\label{def:N.decomp.2}
\widehat{({\bm N}_{\mathrm{diff}})_{n,2}}(\xi) := (1-\Theta(\lambda_n^{\f 56}|\xi|))(1-\Theta(\f{|\xi_i|}{|\xi|^{\f 58}}))\left(\widehat{\zeta_\alp N_n^2}(\xi)-\widehat{\zeta_\alp N_0^2}(\xi)\right),
\end{equation}
\begin{equation}\label{def:N.decomp.3}
\widehat{({\bm N}_{\mathrm{diff}})_{n,3}}(\xi) := (1-\Theta(\lambda_n^{\f 56}|\xi|))\Theta(\f{|\xi_i|}{|\xi|^{\f 58}})\left(\widehat{\zeta_\alp N_n^2}(\xi)-\widehat{\zeta_\alp N_0^2}(\xi)\right).
\end{equation}
{(We recall here our convention from Section~\ref{sec.notations} that $|\xi_i|$ denote only the spatial part.)}

\pfstep{Step~2: Handling $({\bm N}_{\mathrm{diff}})_{n,1}$} We first deal with the terms involving $({\bm N}_{\mathrm{diff}})_{n,1}$. 

By Bernstein's inequality and Proposition~\ref{prop:g.localized},
\begin{equation}\label{N1.Li}
\|({\bm N}_{\mathrm{diff}})_{n,1}\|_{W^{1,\infty}} \ls \lambda_n^{-\f 56} \|\zeta_\alp (N_n^2 - N_0^2)\|_{L^\i} \ls \lambda_n^{\f 16}.
\end{equation}

{On the other hand, notice that for all sufficiently regular functions $u$ and $f$, we have the commutator estimate
\begin{equation}\label{eq:commutator.detail}
\begin{split}
&\: \| u \widetilde{m}(\f 1i\nabla) \rd_{ij}^2 f - \widetilde{m}(\f 1i\nabla) (u\rd_{ij}^2 f) \|_{L^2} \\
\ls &\: \| u \widetilde{m}(\f 1i\nabla) \rd_{ij}^2 f - \widetilde{m}(\f 1i\nabla) \rd_i (u \rd_j f) \|_{L^2}  + \| \widetilde{m}(\f 1i\nabla) [(\rd_i u) \rd_j f)] \|_{L^2} 
\ls  \| u \|_{W^{1,\infty}} \| \rd_j f\|_{L^2}, 
\end{split}
\end{equation}
where we have used the Calder\'on commutator theorem (Lemma~\ref{lem:PSIDOs}.6) for the first term, and the $L^2$-boundedness of $\widetilde{m}(\f 1i\nabla)$ (by Lemma~\ref{lem:PSIDOs}.4) for the second term. 

Using \eqref{eq:commutator.detail} with $u = ({\bm N}_{\mathrm{diff}})_{n,1}$ and $f = \zeta_\alp\chi(\psi_n-\psi_0)$, and applying the Cauchy--Schwartz inequality, \eqref{N1.Li} and Proposition~\ref{prop:psi.localized},} we have
\begin{equation*}
\begin{split}
&\: \left| \f{b_{c,\alp}}{N_{c,\alp}^2} \int_{\mathbb R^{2+1}} (e_0)_{c,\alp}(\zeta_\alp\chi(\psi_n-\psi_0)) \de^{ij}  ({\bm N}_{\mathrm{diff}})_{n,1} \widetilde{m}(\f 1i\nabla)  (\rd^2_{ij} (\zeta_\alp\chi(\psi_n-\psi_0) )) \right. \\
&\:\left.  \qquad\qquad\qquad\qquad\qquad\qquad\qquad\qquad\qquad - \widetilde{m}(\f 1i\nabla)  [({\bm N}_{\mathrm{diff}})_{n,1} \rd_{ij}^2(\zeta_\alp\chi(\psi_n-\psi_0) )] \,\ud x \right| \\
\ls &\: \|\rd (\zeta_\alp\chi(\psi_n-\psi_0))\|_{L^2} \|({\bm N}_{\mathrm{diff}})_{n,1}\|_{W^{1,\infty}} \|\rd (\zeta_\alp\chi(\psi_n-\psi_0))\|_{L^2} \ls \lambda_n^{\f{3\ep_0}{2}}\cdot \lambda_n^{\f 16} \cdot \lambda_n^{\f{3\ep_0}{2}} = \lambda_n^{\f 16+3\ep_0} = o(\lambda_n^{3\ep_0}),
\end{split}
\end{equation*}
as desired.

\pfstep{Step~3: Handling $({\bm N}_{\mathrm{diff}})_{n,2}$} For this, we make use of the large spatial frequency to obtain a good $L^2$ bound. More precisely, since on the support of $(1-\Theta(\lambda_n^{\f 56}|\xi|))(1-\Theta(\f{|\xi_i|}{|\xi|^{\f 58}}))$, $|\xi_i|^2 \gtrsim |\xi|^{\f 54} \gtrsim \lambda_n^{-\f{25}{24}}$, by the Plancherel theorem and Proposition~\ref{prop:g.localized},
\begin{equation}\label{N.diff.2.improvement}
\|({\bm N}_{\mathrm{diff}})_{n,2}\|_{L^2} \ls \|\Delta^{-1}\Delta ({\bm N}_{\mathrm{diff}})_{n,2}\|_{L^2} \ls \lambda_n^{\f{25}{24}} \|\Delta(\zeta_\alp (N_n^2 -N_0^2))\|_{L^2} \ls \lambda_n^{\f{25}{24}+\f{3\ep_0}{2}}.
\end{equation}

By Lemma~\ref{lem:PSIDOs}.4, $\widetilde{m}(\f 1i\nabla)$ is a bounded operator in $L^4$. Therefore, using H\"older's inequality and the estimates in \eqref{N.diff.2.improvement} and in Proposition~\ref{prop:psi.localized}, we obtain
\begin{equation}
\begin{split}
&\: \left| \f{b_{c,\alp}}{N_{c,\alp}^2}\int_{\mathbb R^{2+1}} (\rd_t - \bt^k_{c,\alp} \rd_k)(\zeta_\alp \chi(\psi_n-\psi_0)) \de^{ij}  \{ ({\bm N}_{\mathrm{diff}})_{n,2} \widetilde{m}(\f 1i\nabla)(\rd^2_{ij} (\zeta_\alp \chi(\psi_n-\psi_0))) \,\ud x \right| \\
&\: + \left| \f{b_{c,\alp}}{N_{c,\alp}^2}\int_{\mathbb R^{2+1}} (\rd_t - \bt^k_{c,\alp} \rd_k)(\zeta_\alp \chi(\psi_n-\psi_0)) \de^{ij} \widetilde{m}(\f 1i\nabla) [({\bm N}_{\mathrm{diff}})_{n,2} \rd^2_{ij}(\zeta_\alp \chi(\psi_n-\psi_0))] \}\,\ud x \right| \\
\ls &\: \|(\rd_t - \bt^k_{c,\alp} \rd_k)(\zeta_\alp \chi(\psi_n-\psi_0))\|_{L^4} \|\de^{ij} \rd^2_{ij} (\zeta_\alp \chi(\psi_n-\psi_0))\|_{L^4} \| ({\bm N}_{\mathrm{diff}})_{n,2}\|_{L^2} \\
\ls &\: \lambda_n^{\f {3\ep_0}{4}} \cdot \lambda_n^{-1+\f {3\ep_0}{4}} \cdot \lambda_n^{\f{25}{24}+\f{3\ep_0}{2}} = \lambda_n^{\f 1{24}+3\ep_0} = o(\lambda_n^{3\ep_0}),
\end{split}
\end{equation}
as desired.

\pfstep{Step~4: Handling $({\bm N}_{\mathrm{diff}})_{n,3}$} To handle $({\bm N}_{\mathrm{diff}})_{n,3}$, we need to compute in Fourier space. Here, we take full advantage of having frozen the coefficients. In order to simplify the formulae, we will denote $({\bm \psi}_{\mathrm{diff}})_n:= \zeta_\alp \chi(\psi_n-\psi_0)$.

\begin{equation}\label{N.Fourier.1}
\begin{split}
&\: \f{b_{c,\alp}}{N_{c,\alp}^2}\int_{\mathbb R^{2+1}} (\rd_t - \bt^k_{c,\alp} \rd_k)({\bm \psi}_{\mathrm{diff}})_n \de^{ij}  \{ ({\bm N}_{\mathrm{diff}})_{n,3} \widetilde{m}(\f 1i\nabla)(\rd^2_{ij} ({\bm \psi}_{\mathrm{diff}})_n) - \widetilde{m}(\f 1i\nabla) [({\bm N}_{\mathrm{diff}})_{n,3} \rd^2_{ij}({\bm \psi}_{\mathrm{diff}})_n] \}\,\ud x\\
=&\: \f{-ib_{c,\alp}}{N_{c,\alp}^2} \iint (\xi_t-\bt^k_{c,\alp}\xi_k)\eta_i\eta_j \de^{ij}\widehat{({\bm \psi}_{\mathrm{diff}})_n}(\xi) \overline{\widehat{({\bm N}_{\mathrm{diff}})_{n,3}}(\xi-\eta)} \overline{\widehat{({\bm \psi}_{\mathrm{diff}})_n}(\eta)}[\widetilde{m}(\eta)-\widetilde{m}(\xi)]\,\ud \eta\,\ud \xi \\
=&\: \f{-ib_{c,\alp}}{N_{c,\alp}^2} \iint (\xi_t-\bt^k_{c,\alp}\xi_k)\eta_i\eta_j \de^{ij}\widehat{({\bm \psi}_{\mathrm{diff}})_n}(\xi) \widehat{({\bm N}_{\mathrm{diff}})_{n,3}}(\eta-\xi) \widehat{({\bm \psi}_{\mathrm{diff}})_n}(-\eta)[\widetilde{m}(\eta)-\widetilde{m}(\xi)]\,\ud \eta\,\ud \xi.
\end{split}
\end{equation}
Exchanging $\xi$ and $\eta$,\eqref{N.Fourier.1} can be given equivalently as 
\begin{equation}\label{N.Fourier.2}
\begin{split}
\mbox{\eqref{N.Fourier.1}}=&\: \f{-ib_{c,\alp}}{N_{c,\alp}^2} \iint (\eta_t-\bt^k_{c,\alp}\eta_k)\xi_i\xi_j \de^{ij}\widehat{({\bm \psi}_{\mathrm{diff}})_n}(\eta) \widehat{({\bm N}_{\mathrm{diff}})_{n,3}}(\xi-\eta) \widehat{({\bm \psi}_{\mathrm{diff}})_n}(-\xi)[\widetilde{m}(\xi)-\widetilde{m}(\eta)]\,\ud \eta\,\ud \xi.
\end{split}
\end{equation}
Changing variables $\xi\mapsto -\xi$ and $\eta \mapsto -\eta$, and using the evenness of $m$, 
\begin{equation}\label{N.Fourier.3}
\begin{split}
\mbox{\eqref{N.Fourier.2}}= &\: \f{-ib_{c,\alp}}{N_{c,\alp}^2} \iint (\eta_t-\bt^k_{c,\alp}\eta_k)\xi_i\xi_j \de^{ij} \widehat{({\bm \psi}_{\mathrm{diff}})_n}(\xi) \widehat{({\bm N}_{\mathrm{diff}})_{n,3}}(\eta-\xi) \widehat{({\bm \psi}_{\mathrm{diff}})_n}(-\eta)[\widetilde{m}(\eta)-\widetilde{m}(\xi)]\,\ud \eta\,\ud \xi.
\end{split}
\end{equation}
Therefore, averaging between \eqref{N.Fourier.1} and \eqref{N.Fourier.3}, we obtain
\begin{equation}\label{N.Fourier.final}
\begin{split}
&\: \f{b_{c,\alp}}{N_{c,\alp}^2}\int_{\mathbb R^{2+1}} (\rd_t - \bt^k_{c,\alp} \rd_k)({\bm \psi}_{\mathrm{diff}})_n \de^{ij}  \{ ({\bm N}_{\mathrm{diff}})_{n,3} \widetilde{m}(\f 1i\nabla)(\rd^2_{ij} ({\bm \psi}_{\mathrm{diff}})_n) - \widetilde{m}(\f 1i\nabla) [({\bm N}_{\mathrm{diff}})_{n,3} \rd^2_{ij}({\bm \psi}_{\mathrm{diff}})_n] \}\,\ud x\\
=&\: \f{-ib_{c,\alp}}{2 N_{c,\alp}^2} \iint \left((\xi_t-\bt^k_{c,\alp}\xi_k)|\eta_i|^2 + (\eta_t-\bt^k_{c,\alp}\eta_k)|\xi_i|^2\right) \\
&\: \qquad \times\widehat{({\bm \psi}_{\mathrm{diff}})_n}(\xi) \widehat{({\bm N}_{\mathrm{diff}})_{n,3}}(\eta-\xi) \widehat{({\bm \psi}_{\mathrm{diff}})_n}(-\eta)[\widetilde{m}(\eta)-\widetilde{m}(\xi)]\,\ud \eta\,\ud \xi.
\end{split}
\end{equation}

\pfstep{Step~4(a): Some manipulation of the Fourier multiplier}
Note that on the support of $\Big(1-\Theta(\lambda_n^{\f 56}|\xi-\eta|)\Big)\Theta\Big(\f{|\xi_i-\eta_i|}{|\xi-\eta|^{\f 58}}\Big)$, we easily have $|\xi_t - \eta_t| \gtrsim |\xi - \eta| - |\xi_j -\eta_j| \gtrsim |\xi - \eta|$. It follows that
\begin{equation}\label{eq:Fourier.lower.bound}
\begin{split}
\Big| (\xi_t - \eta_t) - \bt^j_{c,\alp}(\xi_j - \eta_j) \Big| \Big(1-\Theta(\lambda_n^{\f 56}|\xi-\eta|)\Big)\Theta\Big(\f{|\xi_i-\eta_i|}{|\xi-\eta|^{\f 58}}\Big)  \gtrsim \lambda_n^{-\f 56}.
\end{split}
\end{equation}
In particular, $(\xi_t - \eta_t) - \bt^j_{c,\alp}(\xi_j - \eta_j)$ is bounded away from $0$. Therefore, we can divide by $(\xi_t - \eta_t) - \bt^j_{c,\alp}(\xi_j - \eta_j)$ and a direct computation shows that
\begin{equation}\label{expression.for.sum.of.t.frequency}
\begin{split}
&\: (\eta_t - \bt^j_{c,\alp}\eta_j) + (\xi_t - \bt^j_{c,\alp}\xi_j) \\
= &\: \f{(\eta_t - \bt^j_{c,\alp}\eta_j)^2 - e^{-2\gamma_{c,\alp}} N_{c,\alp}^2 |\eta_i|^2}{(\eta_t - \bt^j_{c,\alp}\eta_j) - (\xi_t - \bt^j_{c,\alp}\xi_j)} - \f{(\xi_t - \bt^j_{c,\alp}\xi_j)^2 - e^{-2\gamma_{c,\alp}} N_{c,\alp}^2 |\xi_i|^2}{(\eta_t - \bt^j_{c,\alp}\eta_j) - (\xi_t - \bt^j_{c,\alp}\xi_j)} + \f{e^{-2\gamma_{c,\alp}} N_{c,\alp}^2 (|\eta_i|^2 -|\xi_i|^2)}{(\eta_t - \bt^j_{c,\alp}\eta_j) - (\xi_t - \bt^j_{c,\alp}\xi_j)}.
\end{split}
\end{equation}

Using \eqref{expression.for.sum.of.t.frequency}, we can therefore write the Fourier multiplier in \eqref{N.Fourier.final} as follows:
\begin{align}
&\: \left( |\xi_i|^2 (\eta_t - \bt^j_{c,\alp}\eta_j) + |\eta_i|^2 (\xi_t - \bt^j_{c,\alp}\xi_j)\right) \Big(1-\Theta(\lambda_n^{\f 56}|\xi-\eta|)\Big)\Theta\Big(\f{|\xi_i-\eta_i|}{|\xi-\eta|^{\f 58}}\Big) \nonumber\\
= &\: \left(\left( |\xi_i|^2 - |\eta_i|^2\right) (\eta_t - \bt^j_{c,\alp}\eta_j) + |\eta_i|^2 \left((\eta_t - \bt^j_{c,\alp}\eta_j) + (\xi_t - \bt^j_{c,\alp}\xi_j) \right)\right) \Big(1-\Theta(\lambda_n^{\f 56}|\xi-\eta|)\Big)\Theta\Big(\f{|\xi_i-\eta_i|}{|\xi-\eta|^{\f 58}}\Big) \nonumber\\
= &\: \de^{ik}(\xi_i-\eta_i)(\xi_k + \eta_k)(\eta_t - \bt^j_{c,\alp}\eta_j )\Big(1-\Theta(\lambda_n^{\f 56}|\xi-\eta|)\Big)\Theta\Big(\f{|\xi_i-\eta_i|}{|\xi-\eta|^{\f 58}}\Big) \label{N.multiplier.1}\\
&\: + |\eta_i|^2 \f{(\eta_t - \bt^j_{c,\alp}\eta_j)^2 - e^{-2\gamma_{c,\alp}} N_{c,\alp}^2 |\eta_i|^2}{(\eta_t - \bt^j_{c,\alp}\eta_j) - (\xi_t - \bt^j_{c,\alp}\xi_j)} \Big(1-\Theta(\lambda_n^{\f 56}|\xi-\eta|)\Big)\Theta\Big(\f{|\xi_i-\eta_i|}{|\xi-\eta|^{\f 58}}\Big) \label{N.multiplier.2}\\
&\: - |\eta_i|^2 \f{(\xi_t - \bt^j_{c,\alp}\xi_j)^2 - e^{-2\gamma_{c,\alp}} N_{c,\alp}^2 |\xi_i|^2}{(\eta_t - \bt^j_{c,\alp}\eta_j) -(\xi_t - \bt^j_{c,\alp}\xi_j)} \Big(1-\Theta(\lambda_n^{\f 56}|\xi-\eta|)\Big)\Theta\Big(\f{|\xi_i-\eta_i|}{|\xi-\eta|^{\f 58}}\Big) \label{N.multiplier.3}\\
&\: + |\eta_i|^2 \f{e^{-2\gamma_{c,\alp}} N_{c,\alp}^2 \de^{k\ell}(\eta_k -\xi_k)(\eta_\ell + \xi_\ell)}{(\eta_t - \bt^j_{c,\alp}\eta_j) - (\xi_t - \bt^j_{c,\alp}\xi_j)}\Big(1-\Theta(\lambda_n^{\f 56}|\xi-\eta|)\Big)\Theta\Big(\f{|\xi_i-\eta_i|}{|\xi-\eta|^{\f 58}}\Big). \label{N.multiplier.4}
\end{align}

\pfstep{Step~4(b): Estimating each term} Define now the term $\mathrm{I}$, $\mathrm{II}$, $\mathrm{III}$ and $\mathrm{IV}$ respectively by inserting \eqref{N.multiplier.1}, \eqref{N.multiplier.2}, \eqref{N.multiplier.3} and \eqref{N.multiplier.4} into (*) below
\begin{equation}
\begin{split}
\f{-ib_{c,\alp}}{2 N_{c,\alp}^2} \iint \left(*\right) \widehat{({\bm \psi}_{\mathrm{diff}})_n}(\xi) {(\widehat{\zeta_\alp N_n^2}- \widehat{\zeta_\alp N_0^2})}(\eta-\xi) \widehat{({\bm \psi}_{\mathrm{diff}})_n}(-\eta) [\widetilde{m}(\eta)-\widetilde{m}(\xi)]\,\ud \eta\,\ud \xi.
\end{split}
\end{equation}

For the term $\mathrm{I}$, by first applying {Fourier inversion} and {then} H\"older's inequality, we obtain
\begin{equation}\label{eq.N3.1}
\begin{split}
\mathrm{|I|} \ls &\: \left(\| \zeta_\alp \chi (\psi_n-\psi_0{)}\|_{L^{{4}}} \|\widetilde{m}(\f 1i\nabla)\partial_i (\rd_t - \bt^j_{c,\alp} \rd_j)(\zeta_\alp \chi (\psi_n - \psi_0) ) \|_{L^4} \right.\\
&\: \left. \qquad+\|\partial_i {(}\zeta_\alp \chi (\psi_n-\psi_0))\|_{L^{{4}}} \|\widetilde{m}(\f 1i\nabla) (\rd_t - \bt^j_{c,\alp} \rd_j)(\zeta_\alp \chi (\psi_n - \psi_0) ) \|_{L^4} \right.\\
&\: \left. \qquad+\|\widetilde{m}(\f 1i\nabla)\partial_i {(} \zeta_\alp \chi (\psi_n-\psi_0))\|_{L^{{4}}} \| (\rd_t - \bt^j_{c,\alp} \rd_j)(\zeta_\alp \chi (\psi_n - \psi_0) ) \|_{L^4} \right.\\
&\: \left. \qquad+\|\widetilde{m}(\f 1i\nabla)(\zeta_\alp \chi (\psi_n-\psi_0))\|_{L^{{4}}} \| \partial_i(\rd_t - \bt^j_{c,\alp} \rd_j)(\zeta_\alp \chi (\psi_n - \psi_0) ) \|_{L^4} \right)\\
&\: \qquad \qquad \times\|(1-\Theta(\lambda_n^{\f 56}|\nabla|))\Theta(\f{|\nabla_i|}{|\nabla|^{\f 58}}) \rd_i (\zeta_\alp(N_n^2 - N_0^2))\|_{L^{{2}}}.
\end{split}
\end{equation}
With the estimates in \eqref{assumption.0}{, \eqref{assumption.1}} and Proposition~\ref{prop:g.localized}, {Plancherel's theorem and H\"older's inequality,} we obtain
\begin{equation}\label{eq.N3.1.1}
\begin{split}
&\: {\|(1-\Theta(\lambda_n^{\f 56}|\nabla|))\Theta(\f{|\nabla_i|}{|\nabla|^{\f 58}}) \rd_i (\zeta_\alp(N_n^2 - N_0^2))\|_{L^{2}}} \\
\ls &\: {\|\rd_i (\zeta_\alp (N_n - N_0))\|_{L^2} \|N_n + N_0\|_{L^\i} + \|\zeta_\alp (N_n - N_0)\|_{L^2} \|\rd_i(N_n + N_0)\|_{L^\i} }\\
\ls &\: {\lambda_n^{\f 12 + \f{3\ep_0}{2}} + \lambda_n^{1 + \f{3\ep_0}{2}} \ls \lambda_n^{\f 12 + \f{3\ep_0}{2}}.}
\end{split}
\end{equation}
{Plugging \eqref{eq.N3.1.1} into \eqref{eq.N3.1} and using using the estimates in Proposition~\ref{prop:psi.localized} together with Lemma~\ref{lem:PSIDOs}.4, we obtain}
\begin{equation}\label{eq.N3.1.final}
|\mathrm{I}|\ls (\lambda_n^{1+\f {3\ep_0}{{4}}}\cdot\lambda_n^{-1+\f{3\ep_0}{4}}+\lambda_n^{\f {3\ep_0}{{4}}}\cdot\lambda_n^{\f{3\ep_0}{4}}) \lambda_n^{\frac{1}{2}+\frac{3\ep_0}{{2}}}= \lambda_n^{\f 12+3\ep_0} = o(\lambda_n^{3\ep_0}).
\end{equation}

To handle the term $\mathrm{II}$, we likewise apply the inverse Fourier transform and then use H\"older's inequality to obtain
\begin{equation}\label{eq.N3.2}
\begin{split}
|\mathrm{II}|\ls & \: \left(\|\widetilde{m}(\f 1i \nabla)\rd_i^2 \widetilde{\Box}_{c,\alp} (\zeta_\alp \chi (\psi_n-\psi_0))\|_{L^4} \|\zeta_\alp \chi (\psi_n-\psi_0)\|_{L^4} \right.\\
&\: \left. \qquad + \|\rd_i^2 \widetilde{\Box}_{c,\alp} (\zeta_\alp \chi (\psi_n-\psi_0))\|_{L^4} \|\widetilde{m}(\f 1i \nabla)(\zeta_\alp \chi (\psi_n-\psi_0))\|_{L^4} \right)\\
&\: \qquad\qquad \times \Big\|(1-\Theta(\lambda_n^{\f 56}|\nabla|))\Theta(\f{|\nabla_i|}{|\nabla|^{\f 58}}) \f{i}{\nabla_t - \bt^j \nabla_j} (\zeta_\alp (N_n^2- N_0^2)) \Big\|_{L^2}.
\end{split}
\end{equation}
By Plancherel's theorem, \eqref{eq:Fourier.lower.bound}, H\"older's inequality, \eqref{assumption.1} and Proposition~\ref{prop:g.localized}, we obtain
\begin{equation}\label{eq.N3.2.1}
\begin{split}
&\: \Big\|(1-\Theta(\lambda_n^{\f 56}|\nabla|))\Theta(\f{|\nabla_i|}{|\nabla|^{\f 58}}) \f{i}{\nabla_t - \bt^j \nabla_j} (\zeta_\alp (N_n^2- N_0^2)) \Big\|_{L^2} \\
\ls &\: \lambda_n^{\f 56}\|\zeta_\alp(N_n^2-N_0^2)\|_{L^2} \ls \lambda_n^{\f 56}\|\zeta_\alp(N_n-N_0)\|_{L^2} \|N_n+N_0\|_{L^\i} \ls \lambda_n^{\f {11}6+ \f{3\ep_0}{2}}.
\end{split}
\end{equation}
Plugging \eqref{eq.N3.2.1} into \eqref{eq.N3.2} and using the estimates in Propositions~\ref{prop:psi.localized} and \ref{prop:approximate.wave} together with Lemma~\ref{lem:PSIDOs}.4, we obtain
\begin{equation}\label{eq.N3.2.final}
\begin{split}
|\mathrm{II}| \ls \lambda_n^{-3+\ep_0+\f{3\ep_0}{4}} \cdot \lambda_n^{1+\f{3\ep_0}{4}} \cdot \lambda_n^{\f {11}6+ \f{3\ep_0}{2}} = \lambda_n^{-\f 16 + 4\ep_0} = o(\lambda_n^{3\ep_0}), 
\end{split}
\end{equation}
since $\ep_0>\f 16$ (cf.~Section~\ref{sec:freeze.coeff}).

$\mathrm{III}$ can be controlled in an entirely analogous manner as $\mathrm{II}$; we omit the details:
\begin{equation}\label{eq.N3.3.final}
|\mathrm{III}| \ls  \lambda_n^{-\f 16 + 4\ep_0} = o(\lambda_n^{3\ep_0}).
\end{equation}

Finally, we handle the term $\mathrm{IV}$. As before, we apply the inverse Fourier transform and then use H\"older's inequality. We then obtain
\begin{equation}\label{eq.N3.4}
\begin{split}
|\mathrm{IV}| \ls &\: \left(\| \widetilde{m}(\f 1i \nabla)\rd_k \rd_i^2 (\zeta_\alp \chi (\psi_n-\psi_0))\|_{L^4} \|\zeta_\alp \chi (\psi_n - \psi_0) \|_{L^4} \right.\\
&\:\quad \left.+ \| \widetilde{m}(\f 1i \nabla)\rd_i^2(\zeta_\alp \chi (\psi_n-\psi_0))\|_{L^4} \|\rd_k(\zeta_\alp \chi (\psi_n - \psi_0) ) \|_{L^4}\right) \\
&\:\qquad \times \Big\|(1-\Theta(\lambda_n^{\f 56}|\nabla|))\Theta(\f{|\nabla_i|}{|\nabla|^{\f 58}}) \rd_i \f{i}{\nabla_t - \bt^j \nabla_j} (\zeta_\alp(N_n^2 - N_0^2)) \Big\|_{L^2} 
\end{split}
\end{equation}
By Plancherel's theorem, \eqref{eq:Fourier.lower.bound}, H\"older's inequality, \eqref{assumption.1} and Proposition~\ref{prop:g.localized}, we obtain
\begin{equation}\label{eq.N3.4.1}
\begin{split}
&\: \Big\|(1-\Theta(\lambda_n^{\f 56}|\nabla|))\Theta(\f{|\nabla_i|}{|\nabla|^{\f 58}}) \rd_i \f{i}{\nabla_t - \bt^j \nabla_j} (\zeta_\alp(N_n^2 - N_0^2)) \Big\|_{L^2} \\
\ls &\: \Big\|(1-\Theta(\lambda_n^{\f 56}|\xi|))\Theta(\f{|\xi_i|}{|\xi|^{\f 58}}) |\xi|^{\f 58-1} (\widehat{\zeta_\alp N_n^2} - \widehat{\zeta_\alp N_0^2}) \Big\|_{L^2} \\
\ls &\: \lambda_n^{(\f 58-1)(-\f 56)} \|\zeta_\alp(N_n^2 - N_0^2)\|_{L^2} \ls \lambda_n^{\f 5{16}} \|\zeta_\alp(N_n-N_0)\|_{L^2} \|N_n+N_0\|_{L^\i} \ls \lambda_n^{\f {21}{16}+ \f{3\ep_0}{2}}.
\end{split}
\end{equation}
Plugging \eqref{eq.N3.4.1} into \eqref{eq.N3.4} and using the estimates in Propositions~\ref{prop:psi.localized} together with Lemma~\ref{lem:PSIDOs}.4, we obtain
\begin{equation}\label{eq.N3.4.final}
\begin{split}
|\mathrm{IV}|\ls \lambda_n^{-2+\f{3\ep_0}{4}} \cdot \lambda_n^{1+\f{3\ep_0}{4}} \cdot \lambda_n^{\f {21}{16}+ \f{3\ep_0}{2}} = \lambda_n^{\f 5{16}+3\ep_0} = o(\lambda_n^{3\ep_0}).
\end{split}
\end{equation}

By \eqref{eq.N3.1.final}, \eqref{eq.N3.2.final}, \eqref{eq.N3.3.final} and \eqref{eq.N3.4.final}, we have thus shown that each of the terms obey the desired estimate. This concludes the proof. \qedhere
\end{proof}

Let us summarize what we have achieved in this subsection. At this point, let us also note that while the computations in this subsection concerns the commutator term involving $\psi_n$, they apply in an identical manner to the commutator term involving $\om_n$.
\begin{proposition}\label{prop:N.commute.final}
\begin{equation*}
\begin{split}
&\: \int_{\mathbb R^{2+1}} \f{(e_0)_n(\chi\psi_n)}{N_n} \de^{ij} \Big\{ \rd_i [N_n^2 A (\f{\rd_j (\chi\psi_n)}{N_n})] - A \rd_i [N_n \rd_j(\chi\psi_n)] \Big\}\,\ud x \\
&\: - \int_{\mathbb R^{2+1}} \f{(e_0)_0(\chi\psi_n)}{N_0} \de^{ij} \Big\{ \rd_i [N_0^2 A(\f{\rd_j(\chi\psi_n)}{N_0})] - A \rd_i [N_0 \rd_j(\chi\psi_n)] \Big\}\,\ud x 
\to 0.
\end{split}
\end{equation*}
A similar statement holds after replacing $\psi_n\rightsquigarrow \om_n$, $\psi_0\rightsquigarrow \om_0$ and $\ud x\rightsquigarrow \f 14 e^{-4\psi_0} \ud x$.
\end{proposition}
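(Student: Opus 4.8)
The plan is to assemble the results of this subsection into a single chain of reductions; the content of the proposition is essentially bookkeeping once Propositions~\ref{prop:N.commute.1}, \ref{prop:reduction.to.diff}, \ref{prop:reduction.to.frozen} and \ref{prop:main.N.term} are in hand. First I would invoke Proposition~\ref{prop:N.commute.1}: expanding the integrand of the difference and using the elementary bounds \eqref{assumption.0}, \eqref{assumption.1} together with the Calder\'on commutator theorem (Lemma~\ref{lem:PSIDOs}.6), the $L^2$-boundedness of $\widetilde{m}(\f 1i\nabla)$ (Lemma~\ref{lem:PSIDOs}.4) and the improved spatial estimate for $N_n$ (Proposition~\ref{prop:spatial.imp}), all but one piece of the difference is shown there to converge to $0$. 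The surviving piece is exactly the term $\mathrm{I}$ in the statement of Proposition~\ref{prop:N.commute.1}, namely
\[
\mathrm{I} = -\int_{\mathbb R^{2+1}} \f{(e_0)_0(\chi\psi_n)}{N_0} \de^{ij} b \{ (N_n^2 - N_0^2) \widetilde{m}(\f 1i\nabla)(\f{\rd^2_{ij}(\chi\psi_n)}{N_0}) - \widetilde{m}(\f 1i\nabla)[(N_n^2-N_0^2)\f{\rd^2_{ij}(\chi\psi_n)}{N_0}]\}\,\ud x ,
\]
so it remains only to show $\mathrm{I}\to 0$.

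Next I would run the chain of reductions already established. Proposition~\ref{prop:reduction.to.diff} shows that $\mathrm{I}$ has the same limit after the substitution $\psi_n \mapsto \psi_n - \psi_0$ inside every factor; the point there is that the ``cross'' terms in which at least one factor is the smooth $\psi_0$ can be integrated by parts so that derivatives land on $\psi_0$ and the result is $O(\lambda_n^{1/2})$ by Proposition~\ref{prop:spatial.imp}. Then Proposition~\ref{prop:reduction.to.frozen} replaces $N_0$, $\bt_0^i$ and $b$ by their frozen constants $N_{c,\alp}$, $\bt^i_{c,\alp}$, $b_{c,\alp}$ on each ball $B_\alp$ of the partition of unity $\{\zeta_\alp^3\}$ of Section~\ref{sec:freeze.coeff}, the errors being controlled by Propositions~\ref{prop:constants} and \ref{prop:psi.localized}. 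Finally Proposition~\ref{prop:main.N.term}, which is the heart of the matter, shows the resulting frozen-coefficient sum tends to $0$: one passes to Fourier space, symmetrizes in $\xi\leftrightarrow\eta$ using the evenness of $m$, and decomposes $\zeta_\alp(N_n^2-N_0^2)$ into a low-frequency piece, a high-\emph{spatial}-frequency piece and a high-\emph{time}-frequency piece, the last of which is handled by the algebraic identity \eqref{expression.for.sum.of.t.frequency} that trades the bad factor $(\xi_t-\bt^j_{c,\alp}\xi_j)+(\eta_t-\bt^j_{c,\alp}\eta_j)$ for factors of $\xi_i-\eta_i$ (spatial derivatives of $N_n$, obeying the improved bound of Proposition~\ref{prop:g.localized}) or for $\widetilde\Box_{c,\alp}$ applied to $\psi_n$ (Proposition~\ref{prop:approximate.wave}). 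Composing these four propositions gives $\mathrm{I}\to 0$, hence the claimed convergence for $\psi_n$.

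For the variant with $\psi_n\mapsto\om_n$, $\psi_0\mapsto\om_0$ and $\ud x\mapsto\f14 e^{-4\psi_0}\,\ud x$, I would observe that the only structural change is the smooth bounded weight $\f14 e^{-4\psi_0}$, which can be absorbed into $b(x)$ (equivalently into the constants $b_{c,\alp}$) throughout; since $\om_n$ obeys the same estimates \eqref{assumption.0}--\eqref{assumption.2} and the same wave-type bounds (Propositions~\ref{prop:psi.localized} and \ref{prop:approximate.wave} are stated for both $\psi$ and $\om$), every step goes through verbatim. I do not expect any genuine obstacle in this assembly step itself — the real difficulty was already dispatched inside Proposition~\ref{prop:main.N.term} — the only care needed is to verify that the error exponents accumulated through the four reductions (powers of $\lambda_n^{\ep_0}$ from freezing, $\lambda_n^{1/2}$ from the improved spatial bounds, and the $o(\lambda_n^{3\ep_0})$ gain per ball against the $O(\lambda_n^{-3\ep_0})$ balls) combine with room to spare, which they do precisely because $\ep_0\in(\f16,\f12)$.
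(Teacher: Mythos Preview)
Your proposal is correct and matches the paper's approach exactly: the paper presents this proposition as a summary of the subsection, with no separate proof, precisely because it follows by chaining Propositions~\ref{prop:N.commute.1}, \ref{prop:reduction.to.diff}, \ref{prop:reduction.to.frozen} and \ref{prop:main.N.term} in the order you describe. Your account of how the pieces fit together, including the handling of the $\om_n$ variant via the smooth weight $e^{-4\psi_0}$, is accurate.
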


\subsection{The term \eqref{eq:main.term.commutator.beta}}\label{sec:commutator.beta}

We now look at the term \eqref{eq:main.term.commutator.beta}. Unlike the terms \eqref{eq:main.term.commutator.gamma} and \eqref{eq:main.term.commutator.N} (cf.~Sections~\ref{sec:commutator.gamma} and \ref{sec:commutator.N}), we will not be able to just compute the limit of \eqref{eq:main.term.commutator.beta}. Instead we will need to \emph{combine} this term with the ``$\mathrm{hard}$'' term in \eqref{eq:energy.id.n.1}.

For this reason, we first consider some reduction of \eqref{eq:main.term.commutator.beta} and the ``$\mathrm{hard}$'' term in \eqref{eq:energy.id.n.1} in \textbf{Section~\ref{sec:reduction.commutator.beta}} and \textbf{Section~\ref{sec:reduction.hard}} respectively. We then consider the limit of the combination in \textbf{Section~\ref{sec:commute.beta.combined}}.

\subsubsection{Reduction for the term \eqref{eq:main.term.commutator.beta}}\label{sec:reduction.commutator.beta}

We first argue as in Proposition~\ref{prop:N.commute.1} and identify the main term in the limit. The proof is essentially the same as Proposition~\ref{prop:N.commute.1} and is omitted.
\begin{proposition}\label{prop:reduction.beta.main.1}
\begin{equation*}
\begin{split}
&\: \int_{\mathbb R^{2+1}} \f{(e_0)_n(\chi\psi_n)}{N_n} \Big\{ \rd_k [e^{2\gamma_n} \bt_n^k A(\f{(e_0)_n(\chi\psi_n)}{N_n})] - A \rd_k [e^{2\gamma_n} \bt_n^k (\f{(e_0)_n(\chi\psi_n)}{N_n})] \Big\}\,\ud x \\
&\: - \int_{\mathbb R^{2+1}} \f{(e_0)_0(\chi\psi_n)}{N_0} \Big\{ \rd_k [e^{2\gamma_0} \bt_0^k A(\f{(e_0)_0(\chi\psi_n)}{N_0})] - A\rd_k  [e^{2\gamma_0} \bt_0^k (\f{{(e_0)_0}(\chi\psi_n)}{N_0})] \Big\}\,\ud x \\
&\: - \int_{\mathbb R^{2+1}} \f{(e_0)_0(\chi\psi_n)}{N_0} b \Big\{ (e^{2\gamma_n} \bt_n^k-e^{2\gamma_0} \bt_0^k) \widetilde{m}(\f 1i\nabla)(\f{\rd_k(e_0)_0(\chi\psi_n)}{N_0}) \\
&\: \qquad\qquad\qquad\qquad\qquad- \widetilde{m}(\f 1i\nabla)  [(e^{2\gamma_n} \bt_n^k-e^{2\gamma_0} \bt_0^k) (\f{\rd_k(e_0)_0(\chi\psi_n)}{N_0})] \Big\}\,\ud x\to 0.
\end{split}
\end{equation*}
\end{proposition}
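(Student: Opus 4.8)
The plan is to follow the proof of Proposition~\ref{prop:N.commute.1} essentially verbatim, with $e^{2\gamma_n}\bt_n^k - e^{2\gamma_0}\bt_0^k$ now playing the role of $N_n^2 - N_0^2$ and the second-order operator $\rd_k(e_0)_0$ playing the role of $\de^{ij}\rd^2_{ij}$. First I would split the integrand of the difference of the first two lines into two pieces: (a) the piece in which the outer weight $\f{(e_0)_n(\chi\psi_n)}{N_n}$ is replaced by $\f{(e_0)_n(\chi\psi_n)}{N_n} - \f{(e_0)_0(\chi\psi_n)}{N_0}$, multiplied by the full ``$n$-commutator'' $\rd_k[e^{2\gamma_n}\bt_n^k A(\cdot)] - A\rd_k[e^{2\gamma_n}\bt_n^k(\cdot)]$ (with the slot filled by $\f{(e_0)_n(\chi\psi_n)}{N_n}$); and (b) the common weight $\f{(e_0)_0(\chi\psi_n)}{N_0}$ multiplied by the difference of the ``$n$-commutator'' and the ``$0$-commutator''.

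For (a), exactly as in the treatment of \eqref{eq:N.commute.easy.1} in Proposition~\ref{prop:N.commute.1}: the ``$n$-commutator'' is controlled in $L^2$ using Lemma~\ref{lem:PSIDOs}.2, Lemma~\ref{lem:PSIDOs}.4, Lemma~\ref{lem:PSIDOs}.6 and the fact that $\widetilde{m}(\f 1i\nabla)$ commutes with $\rd_k$, while $\|\f{(e_0)_n(\chi\psi_n)}{N_n} - \f{(e_0)_0(\chi\psi_n)}{N_0}\|_{L^2}\ls\lambda_n$ by \eqref{assumption.0} and \eqref{assumption.1}, so Cauchy--Schwarz bounds (a) by $O(\lambda_n)$ in $L^1$. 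For (b), I would write $A = b\,\widetilde{m}(\f 1i\nabla)$, apply Leibniz to the outer $\rd_k$, and telescope the difference of the two commutators into a finite sum of terms mirroring \eqref{N.commute.easy.1}--\eqref{N.commute.hard}: terms in which a derivative falls on $b$; terms carrying the coefficient difference $e^{2\gamma_n}\bt_n^k - e^{2\gamma_0}\bt_0^k$ itself (small, $O(\lambda_n)$, in $L^\infty$ by \eqref{assumption.0}) or a spatial derivative $\rd_i(e^{2\gamma_n}\bt_n^k - e^{2\gamma_0}\bt_0^k)$ of it (small, $O(\lambda_n^{1/2})$, by Proposition~\ref{prop:spatial.imp} applied to $\bt^k$ and $\gamma$ together with \eqref{assumption.1}); terms coming from $\f1{N_n} - \f1{N_0}$ (small in $L^\infty$); and a Calder\'on commutator of $\widetilde{m}(\f 1i\nabla)$ with that difference. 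Each of these pieces carries at most one derivative of $\chi\psi_n$ (hence is $O(1)$ in $L^2$ by \eqref{assumption.1}) and a factor of $\lambda_n^{1/2}$ or $\lambda_n$ from the coefficient difference, so pairing with $\|\f{(e_0)_0(\chi\psi_n)}{N_0}\|_{L^2}\ls 1$ shows it tends to $0$. Exactly as in Proposition~\ref{prop:N.commute.1}, the one contribution that does not obviously vanish is the term in which the outer $\rd_k$ and the inner $(e_0)_0$ both fall on $\chi\psi_n$ and the coefficient difference appears undifferentiated inside both halves of the commutator --- and this is precisely the third integral subtracted off in the statement.

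The only point requiring genuine care --- and the reason this surviving term must be kept rather than absorbed here --- is the derivative bookkeeping in the telescoping above. In contrast to $\gamma$, for which $\rd_t(\gamma_n - \gamma_0) = O(\lambda_n^{1/2})$ by Proposition~\ref{prop:dtgamma.imp}, we have no smallness for $\rd_t(\bt^k_n - \bt^k_0)$, only the uniform bound \eqref{assumption.1}; hence the spacetime Lipschitz norm of $e^{2\gamma_n}\bt^k_n - e^{2\gamma_0}\bt^k_0$ is merely $O(1)$, and any term in which this difference is differentiated in $t$ would, when paired against the two-derivative factor $\rd_k(e_0)_0(\chi\psi_n)$, produce an $O(\lambda_n^{-1})$ bound that is useless. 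The telescoping must therefore be organised so that the coefficient difference meets a time derivative only inside the single term that is subtracted off; since in \eqref{eq:main.term.commutator.beta} the coefficient $e^{2\gamma_n}\bt^k_n$ is acted on only by the outer \emph{spatial} derivative $\rd_k$ and never by $(e_0)_n$, every time derivative arising in the expansion originates from $(e_0)_n$ or $(e_0)_0$ acting on $\chi\psi_n$, and this organisation is automatic. With this observed, the proof is routine; I expect no further obstacle, and the surviving term is handled later by combining it with the ``$\mathrm{hard}$'' term from \eqref{eq:energy.id.n.1} in Sections~\ref{sec:reduction.hard}--\ref{sec:commute.beta.combined}.
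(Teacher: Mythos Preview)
Your proposal is correct and follows exactly the approach the paper indicates: the paper's own proof simply reads ``The proof is essentially the same as Proposition~\ref{prop:N.commute.1} and is omitted.'' Your additional remark about the lack of smallness for $\rd_t(\bt^k_n-\bt^k_0)$, and why the purely spatial nature of the outer $\rd_k$ keeps this from causing trouble, is a correct and useful observation that the paper leaves implicit; your itemized list of the ``easy'' pieces in the telescoping is slightly loosely worded (the Calder\'on step is with the full coefficient $e^{2\gamma_n}\bt_n^k$, not with the difference, and the smallness comes from the inner function $\f{(e_0)_n(\chi\psi_n)}{N_n}-\f{(e_0)_0(\chi\psi_n)}{N_0}$), but the overall organisation matches \eqref{N.commute.easy.1}--\eqref{N.commute.hard} and the surviving hard term is identified correctly.
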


\begin{proposition}\label{prop:reduction.beta.main.2}
\begin{equation*}
\begin{split}
&\: \int_{\mathbb R^{2+1}} \f{(e_0)_0(\chi\psi_n)}{N_0} b \Big\{ (e^{2\gamma_n} \bt_n^k-e^{2\gamma_0} \bt_0^k) \widetilde{m}(\f 1i\nabla)(\f{\rd_k(e_0)_0(\chi\psi_n)}{N_0}) \\
&\: \qquad\qquad\qquad\qquad\qquad- \widetilde{m}(\f 1i\nabla)  [(e^{2\gamma_n} \bt_n^k-e^{2\gamma_0} \bt_0^k) (\f{\rd_k(e_0)_0(\chi\psi_n)}{N_0})] \Big\}\,\ud x \\
&\: -\int_{\mathbb R^{2+1}} \f{(e_0)_0(\chi\psi_n)}{N_0} b \Big\{ e^{2\gamma_0}(\bt_n^k-\bt_0^k) \widetilde{m}(\f 1i\nabla)(\f{\rd_k(e_0)_0(\chi\psi_n)}{N_0}) \\
&\: \qquad\qquad\qquad\qquad\qquad - \widetilde{m}(\f 1i\nabla)  [e^{2\gamma_0}(\bt_n^k-\bt_0^k) (\f{\rd_k(e_0)_0(\chi\psi_n)}{N_0})] \Big\}\,\ud x \to 0.
\end{split}
\end{equation*}
\end{proposition}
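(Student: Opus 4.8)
\textbf{Proof proposal for Proposition~\ref{prop:reduction.beta.main.2}.}

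The plan is to show that the difference between the two displayed expressions tends to $0$ by bounding the contribution of the ``error'' metric factor $e^{2\gamma_n}\bt^k_n - e^{2\gamma_0}\bt^k_0 - e^{2\gamma_0}(\bt^k_n - \bt^k_0)$ directly, with no need for the commutator structure. First I would write
$$e^{2\gamma_n}\bt^k_n - e^{2\gamma_0}\bt^k_0 = e^{2\gamma_0}(\bt^k_n - \bt^k_0) + (e^{2\gamma_n} - e^{2\gamma_0})\bt^k_n,$$
so that the difference of the two expressions in the statement equals
$$\int_{\mathbb R^{2+1}} \f{(e_0)_0(\chi\psi_n)}{N_0}\, b\, \left\{ (e^{2\gamma_n}-e^{2\gamma_0})\bt_n^k\, {\widetilde{m}}(\tf 1i\nabla)\Big(\tf{\rd_k(e_0)_0(\chi\psi_n)}{N_0}\Big) - {\widetilde{m}}(\tf 1i\nabla)\Big[(e^{2\gamma_n}-e^{2\gamma_0})\bt_n^k\, \tf{\rd_k(e_0)_0(\chi\psi_n)}{N_0}\Big] \right\}\,\ud x.$$
I would not attempt to exploit the near-cancellation between the two terms inside the braces; instead I would estimate each of them separately in $L^1$, pairing the $L^2$-bounded factor $\f{(e_0)_0(\chi\psi_n)}{N_0}$ (uniformly bounded by \eqref{assumption.1}) with the $L^2$ norm of each term in the braces via Cauchy--Schwarz.

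The key point is that $\widetilde\chi(e^{2\gamma_n} - e^{2\gamma_0})$ converges to $0$ in $L^\infty$ \emph{at rate $\lambda_n$} (by \eqref{assumption.0} and the smoothness of the exponential), and moreover its first \emph{spatial} derivatives converge at rate $\lambda_n^{1/2}$ and its $\rd_t$ derivative — via $\rd_t\gamma$ — also converges at rate $\lambda_n^{1/2}$ by Propositions~\ref{prop:spatial.imp} and \ref{prop:dtgamma.imp}; crucially, \emph{all} first derivatives of $\gamma_n - \gamma_0$ obey the improved $\lambda_n^{1/2}$ bound, unlike for $N_n$ or $\bt^k_n$, because of the relation \eqref{trace.free}. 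Hence $\widetilde\chi(e^{2\gamma_n} - e^{2\gamma_0})$ is a Lipschitz function (uniformly in $n$, since $\rd(e^{2\gamma_n}-e^{2\gamma_0})$ is bounded in $L^\infty$ using \eqref{assumption.1}) with Lipschitz constant $O(\lambda_n^{1/2})$. For the first term in the braces, I would use that ${\widetilde{m}}(\tf1i\nabla)$ is bounded on $L^2$ (Lemma~\ref{lem:PSIDOs}.4), that $\|\rd_k(e_0)_0(\chi\psi_n)/N_0\|_{L^2}$ and $\|\bt^k_n\|_{L^\infty}$ are uniformly bounded (by \eqref{assumption.1}, \eqref{assumption.0}), and that $\|\widetilde\chi(e^{2\gamma_n}-e^{2\gamma_0})\|_{L^\infty}\ls \lambda_n$; since $\psi_n$ is supported (after cutoff by $\chi$) in a region where $\widetilde\chi \equiv 1$, the factor $e^{2\gamma_n}-e^{2\gamma_0}$ may be replaced by $\widetilde\chi(e^{2\gamma_n}-e^{2\gamma_0})$. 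This gives an $O(\lambda_n)$ bound on that term. For the second term — where ${\widetilde{m}}(\tf1i\nabla)$ acts on the whole product — the same estimate applies verbatim, since $\|{\widetilde{m}}(\tf1i\nabla)[(e^{2\gamma_n}-e^{2\gamma_0})\bt^k_n \rd_k(e_0)_0(\chi\psi_n)/N_0]\|_{L^2}\ls \|(e^{2\gamma_n}-e^{2\gamma_0})\|_{L^\infty}\|\rd_k(e_0)_0(\chi\psi_n)/N_0\|_{L^2}\ls \lambda_n$ by $L^2$-boundedness of ${\widetilde{m}}(\tf1i\nabla)$.

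Putting these together, the full difference is $O(\lambda_n)$ in $L^1$, hence tends to $0$, which is the claim. (Alternatively, one may invoke the Calder\'on commutator theorem, Lemma~\ref{lem:PSIDOs}.6, applied to the Lipschitz function $\widetilde\chi(e^{2\gamma_n}-e^{2\gamma_0})\bt^k_n$ with Lipschitz constant $O(\lambda_n^{1/2})$, to bound the braces directly by $O(\lambda_n^{1/2})\|\rd_k(e_0)_0(\chi\psi_n)/N_0\|_{L^2}$, which gives the slightly weaker — but still sufficient — rate $O(\lambda_n^{1/2})$; this is in exact analogy with the argument for \eqref{gamma.commute.easy.1} in the proof of Proposition~\ref{prop:gamma.commute.final}.) I do not anticipate a serious obstacle here — the only point requiring care is checking that all the first derivatives of $\gamma_n-\gamma_0$ (including $\rd_t$) enjoy the improved convergence, which is precisely the content of Propositions~\ref{prop:spatial.imp} and \ref{prop:dtgamma.imp} and uses the gauge condition \eqref{trace.free}; without this the $\rd_t\gamma_n$ factor would only converge weakly and the argument would break, mirroring the genuine difficulty that was isolated in the ``$\mathrm{hard}$'' and main-commutator terms elsewhere. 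The same argument applies verbatim with $\psi_n\mapsto\om_n$, $\psi_0\mapsto\om_0$, and $\ud x\mapsto \tf14 e^{-4\psi_0}\ud x$.
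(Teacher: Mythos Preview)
There is a genuine gap. You claim that $\|\rd_k(e_0)_0(\chi\psi_n)/N_0\|_{L^2}$ is uniformly bounded, citing \eqref{assumption.1}. But $\rd_k(e_0)_0(\chi\psi_n)$ contains \emph{second} derivatives of $\psi_n$; assumption \eqref{assumption.1} only controls first derivatives, and by \eqref{assumption.2} one has $\|\rd^2(\chi\psi_n)\|_{L^\infty(\Omega')}\ls \lambda_n^{-1}$, hence $\|\rd_k(e_0)_0(\chi\psi_n)/N_0\|_{L^2}\ls \lambda_n^{-1}$ and no better. With this correction your ``direct'' bound on each term in the braces becomes $\lambda_n\cdot\lambda_n^{-1}=O(1)$, which does not tend to $0$, and your ``alternative'' Calder\'on bound becomes $\lambda_n^{1/2}\cdot\lambda_n^{-1}=O(\lambda_n^{-1/2})$, which diverges. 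So neither version of your argument closes.

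The fix --- and this is exactly what the paper does --- is to absorb the spatial derivative $\rd_k$ into the pseudo-differential operator before invoking Calder\'on. Writing $u=\bt_n^k(e^{2\gamma_n}-e^{2\gamma_0})$ and $g=\f{(e_0)_0(\chi\psi_n)}{N_0}$, one rewrites
\[
u\,{\widetilde m}(\tfrac1i\nabla)\Big(\tfrac{\rd_k(e_0)_0(\chi\psi_n)}{N_0}\Big)-{\widetilde m}(\tfrac1i\nabla)\Big[u\,\tfrac{\rd_k(e_0)_0(\chi\psi_n)}{N_0}\Big]
= \Big(u\,{\widetilde m}(\tfrac1i\nabla)\rd_k g - {\widetilde m}(\tfrac1i\nabla)\rd_k(u\,g)\Big) + (\mbox{lower order}),
\]
where the ``lower order'' terms come from the product rule ($\rd_k$ hitting $N_0^{-1}$ or $u$) and are directly $O(\lambda_n^{1/2})$ in $L^2$. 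The main bracket is now a commutator $[\,{\widetilde m}(\tfrac1i\nabla)\rd_k\,,\,u\,]g$ with a \emph{first-order} operator $T={\widetilde m}(\tfrac1i\nabla)\rd_k$ acting on $g$, which involves only \emph{first} derivatives of $\psi_n$ and is therefore uniformly bounded in $L^2$. Lemma~\ref{lem:PSIDOs}.6 then gives $\|[T,u]g\|_{L^2}\ls \|u\|_{C^1}\|g\|_{L^2}\ls \lambda_n^{1/2}$, using precisely the improved convergence of \emph{all} first derivatives of $\gamma_n-\gamma_0$ that you correctly identified (Propositions~\ref{prop:spatial.imp} and \ref{prop:dtgamma.imp}). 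Your analogy with \eqref{gamma.commute.easy.1} is apt, but note that there the operator is already $\rd_t{\widetilde m}$ (order $1$) acting on a first-derivative quantity; here one must first manufacture the same structure by moving $\rd_k$.
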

\begin{proof}
In view of
$$e^{2\gamma_n} \bt_n^k-e^{2\gamma_0} \bt_0^k = e^{2\gamma_0} (\bt_n^k - \bt_0^k)+ \bt^k_n (e^{2\gamma_n} - e^{2\gamma_0}),$$
it suffices to show that 
\begin{equation}\label{goal.for.handling.gamma.comm}
\begin{split}
&\: \int_{\mathbb R^{2+1}} \f{(e_0)_0(\chi\psi_n)}{N_0} b \Big\{ \bt^k_n (e^{2\gamma_n} - e^{2\gamma_0}) \widetilde{m}(\f 1i\nabla)(\f{\rd_k(e_0)_0(\chi\psi_n)}{N_0}) \\
&\: \qquad\qquad\qquad\qquad\qquad - \widetilde{m}(\f 1i\nabla)  [\bt^k_n (e^{2\gamma_n} - e^{2\gamma_0}) (\f{\rd_k(e_0)_0(\chi\psi_n)}{N_0})] \Big\}\,\ud x \to 0.
\end{split}
\end{equation}
We compute
\begin{align}
&\: \bt^k_n (e^{2\gamma_n} - e^{2\gamma_0}) \widetilde{m}(\f 1i\nabla)(\f{\rd_k(e_0)_0(\chi\psi_n)}{N_0}) - \widetilde{m}(\f 1i\nabla)  \big[\bt^k_n (e^{2\gamma_n} - e^{2\gamma_0}) (\f{\rd_k(e_0)_0(\chi\psi_n)}{N_0}) \big] \notag\\
= &\: \bt^k_n (e^{2\gamma_n} - e^{2\gamma_0}) \widetilde{m}(\f 1i\nabla)\rd_k (\f{(e_0)_0(\chi\psi_n)}{N_0})  - \widetilde{m}(\f 1i\nabla)  \rd_k \big[\bt^k_n (e^{2\gamma_n} - e^{2\gamma_0}) \f{(e_0)_0(\chi\psi_n)}{N_0}\big] \label{handling.gamma.comm.1}\\
&\:  + \bt^k_n (e^{2\gamma_n} - e^{2\gamma_0}) \widetilde{m}(\f 1i\nabla) \big[\f{((e_0)_0(\chi\psi_n))(\rd_k N_0)}{N_0^2}\big] + \widetilde{m}(\f 1i\nabla)\big[ ((e_0)_0(\chi\psi_n)) \rd_k \f{\bt^k_n (e^{2\gamma_n} - e^{2\gamma_0})}{N_0}\big]. \label{handling.gamma.comm.2}
\end{align}

\eqref{handling.gamma.comm.1} can be controlled using Calder\'on's commutator theorem (Lemma~\ref{lem:PSIDOs}.6) with $T = \widetilde{m}(\f 1i\nabla)\rd_k$ and using the estimates in \eqref{assumption.0}, \eqref{assumption.1}, Propositions~\ref{prop:spatial.imp} and \ref{prop:dtgamma.imp}, we obtain
\begin{equation*}
\begin{split}
\|\eqref{handling.gamma.comm.1}\|_{L^2}\ls \|\widetilde{\chi} \bt^k_n (e^{2\gamma_n} - e^{2\gamma_0})\|_{C^1} \Big\|\f{(e_0)_0(\chi\psi_n)}{N_0} \Big\|_{L^2} \ls \lambda_n^{\f 12}.
\end{split}
\end{equation*}
On the other hand, \eqref{handling.gamma.comm.2} can be bounded using estimates in \eqref{assumption.0}, \eqref{assumption.1} and Proposition~\ref{prop:spatial.imp} as follows:
\begin{equation*}
\begin{split}
\|\eqref{handling.gamma.comm.2}\|_{L^2}\ls &\: \|\widetilde{\chi} \bt^k_n (e^{2\gamma_n} - e^{2\gamma_0})\|_{L^\i} \Big\|\f{((e_0)_0(\chi\psi_n))(\rd_k N_0)}{N_0^2} \Big\|_{L^2}  \\
&\: + \|(e_0)_0(\chi\psi_n)\|_{L^2} \Big\|\widetilde{\chi} \rd_k \f{\bt^k_n (e^{2\gamma_n} - e^{2\gamma_0})}{N_0} \Big\|_{L^\i} \ls \lambda_n + \lambda_n^{\f 12} \ls \lambda_n^{\f 12}.
\end{split}
\end{equation*}
It therefore follows that as $n\to +\infty$,
$$\Big\|\bt^k_n (e^{2\gamma_n} - e^{2\gamma_0}) \widetilde{m}(\f 1i\nabla)(\f{\rd_k(e_0)_0(\chi\psi_n)}{N_0})  - \widetilde{m}(\f 1i\nabla)  [\bt^k_n (e^{2\gamma_n} - e^{2\gamma_0}) (\f{\rd_k(e_0)_0(\chi\psi_n)}{N_0})] \Big\|_{L^2} \to 0.$$

Hence, our goal \eqref{goal.for.handling.gamma.comm} follows from the Cauchy--Schwarz inequality and the estimates in \eqref{assumption.1}. \qedhere
\end{proof}

We now take the main term in Proposition~\ref{prop:reduction.beta.main.2} (i.e.~the term on the last two lines) and show that the limit remains the same after (1) replacing $\psi_n \mapsto \psi_n-\psi_0$ and (2) freezing the coefficients. The proof is entirely analogous to Propositions~\ref{prop:reduction.to.diff} and \ref{prop:reduction.to.frozen} and is omitted.
\begin{proposition}\label{prop:reduction.beta.main.3}
Let $b_{c,\alp}$, $N_{c,\alp}$, $\gamma_{c,\alp}$ and $\bt_{c,\alp}^i$ be as in Proposition~\ref{prop:constants}. Then
\begin{equation*}
\begin{split}
&\: \int_{\mathbb R^{2+1}} \f{(e_0)_0(\chi\psi_n)}{N_0} b \Big\{ e^{2\gamma_0}(\bt_n^k-\bt_0^k) \widetilde{m}(\f 1i\nabla)(\f{\rd_k(e_0)_0(\chi\psi_n)}{N_0}) \\
&\: \qquad\qquad\qquad\qquad\qquad - \widetilde{m}(\f 1i\nabla)  [e^{2\gamma_0}(\bt_n^k-\bt_0^k) (\f{\rd_k(e_0)_0(\chi\psi_n)}{N_0})] \Big\}\,\ud x\\
&\: - \sum_{\alp} \f{b_{c,\alp}e^{2\gamma_{c,\alp}}}{N_{c,\alp}^2} \int_{\mathbb R^{2+1}} (\rd_t - \bt^{\ell}_{c,\alp}\rd_\ell)(\zeta_\alp\chi(\psi_n-\psi_0)) \Big\{  (\bt_n^k - \bt_0^k) \widetilde{m}(\f 1i\nabla)(\rd_k(\rd_t - \bt^m_{c,\alp}\rd_m)(\zeta_\alp\chi(\psi_n-\psi_0)))\\
&\: \qquad \qquad \qquad \qquad \qquad \qquad \qquad  - \widetilde{m}(\f 1i\nabla)  [ (\bt_n^k - \bt_0^k) (\rd_k(\rd_t - \bt^m_{c,\alp}\rd_m)(\zeta_\alp\chi(\psi_n-\psi_0))] \Big\}\,\ud x\to 0.
\end{split}
\end{equation*}
\end{proposition}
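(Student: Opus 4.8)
The plan is to mimic, almost verbatim, the two-part reduction argument used in the treatment of the corresponding term in Section~\ref{sec:commutator.N}, namely Propositions~\ref{prop:reduction.to.diff} and \ref{prop:reduction.to.frozen}, whose proofs the paper has already declared to be ``omitted''. Since the statement of Proposition~\ref{prop:reduction.beta.main.3} simultaneously performs \emph{two} reductions --- the replacement $\psi_n \mapsto \psi_n - \psi_0$ and the freezing of coefficients --- the cleanest path is to split the difference into an intermediate quantity where only $\psi_n\mapsto \psi_n-\psi_0$ has been carried out, prove that each of the two resulting differences tends to $0$, and invoke the triangle inequality.

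First I would carry out the $\psi_n \mapsto \psi_n - \psi_0$ reduction. Writing $\chi\psi_n = \chi(\psi_n-\psi_0) + \chi\psi_0$ and expanding multilinearly, one gets, just as in the proof of Proposition~\ref{prop:reduction.to.diff}, three ``mixed'' terms in which at least one factor is the smooth $\chi\psi_0$ (the ``LL'', ``HL'' and ``LH'' terms in the notation there). For the terms where $\chi\psi_0$ carries the $\rd_k(e_0)_0$ derivative, one bounds directly in $L^2$: using \eqref{assumption.0}, \eqref{assumption.1}, Propositions~\ref{prop:spatial.imp} and \ref{prop:dtgamma.imp} to control $\widetilde\chi(\bt^k_n-\bt^k_0)$ (which, unlike $\rd_tN_n$, enjoys a locally uniform bound $O(\lambda_n^{1/2})$ on spatial derivatives and $O(\lambda_n)$ on the undifferentiated quantity), the $L^2$-boundedness of $\widetilde m(\tfrac1i\nabla)$ from Lemma~\ref{lem:PSIDOs}.4, and Cauchy--Schwarz, each such term is $O(\lambda_n^{1/2})\to 0$. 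For the remaining ``LH'' term, where $\chi(\psi_n-\psi_0)$ carries the second derivative, one integrates by parts to move one derivative onto the smooth $\chi\psi_0$ (exactly as in Step~2 of Proposition~\ref{prop:reduction.to.diff}), picking up a commutator with $\widetilde m(\tfrac1i\nabla)$ which is harmless since this multiplier commutes with derivatives; the resulting terms are again $O(\lambda_n^{1/2})$. The Calder\'on commutator theorem (Lemma~\ref{lem:PSIDOs}.6) handles the commutator $[\widetilde m(\tfrac1i\nabla),\, e^{2\gamma_0}(\bt^k_n-\bt^k_0)]$ whenever it appears.

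Next I would freeze the coefficients, following the proof of Proposition~\ref{prop:reduction.to.frozen}: insert $1 = \sum_\alp \zeta_\alp^3$, distribute the three copies of $\zeta_\alp$ among the three factors of the trilinear expression, and replace $N_0, \bt^k_0, \gamma_0, b$ by the constants $N_{c,\alp}, \bt^k_{c,\alp}, \gamma_{c,\alp}, b_{c,\alp}$ of Proposition~\ref{prop:constants}. The errors incurred in each replacement are governed by $\|b - b_{c,\alp}\|_{L^\infty(B_\alp)} \ls \lambda_n^{\ep_0}$ together with the localized estimates of Propositions~\ref{prop:psi.localized} and \ref{prop:g.localized}; the extra factor $\lambda_n^{\ep_0}$ (or, for commutators of $\zeta_\alp$ with the $\rd_k(e_0)_0$ derivative, an extra factor $\lambda_n^{-\ep_0}$ against an improved $\lambda_n^{1}$ from $\psi_n-\psi_0$ being small) beats the $O(\lambda_n^{-3\ep_0})$ count of balls, giving a total that is $o(1)$. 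Here one also uses that $(e_0)_{c,\alp}=\rd_t-\bt^i_{c,\alp}\rd_i$ differs from $(e_0)_0$ by $(\bt^i_0-\bt^i_{c,\alp})\rd_i$, which is $O(\lambda_n^{\ep_0})$ in $L^\infty$ on $B_\alp$. I do not expect any genuine obstacle: everything is an $L^2$/H\"older estimate driven by the a priori bounds, and the only mildly delicate point --- already absent here because this proposition stops \emph{before} the Fourier-space cancellation step (that is Proposition~\ref{prop:main.N.term}'s analogue, to be treated in Section~\ref{sec:commute.beta.combined}) --- is simply keeping track of powers of $\lambda_n$ and $\ep_0$ to confirm the book-keeping closes. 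Accordingly I would state the proof as: ``The proof is entirely analogous to that of Propositions~\ref{prop:reduction.to.diff} and \ref{prop:reduction.to.frozen}, using in addition the improved estimate of Proposition~\ref{prop:spatial.imp} for $\rd_i(\widetilde\chi(\bt^j_n-\bt^j_0))$ in place of the one for $\rd_i(\widetilde\chi(N_n-N_0))$; we omit the details.''
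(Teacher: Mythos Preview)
Your proposal is correct and matches the paper's own approach exactly: the paper states that the proof ``is entirely analogous to Propositions~\ref{prop:reduction.to.diff} and \ref{prop:reduction.to.frozen} and is omitted,'' which is precisely the two-step reduction (first $\psi_n\mapsto\psi_n-\psi_0$, then freezing coefficients via $\sum_\alp\zeta_\alp^3=1$ and Proposition~\ref{prop:constants}) that you outline. Your write-up in fact supplies more detail than the paper does, and the ingredients you cite (Lemma~\ref{lem:PSIDOs}.4 and \ref{lem:PSIDOs}.6, Propositions~\ref{prop:spatial.imp}, \ref{prop:psi.localized}, \ref{prop:g.localized}, and the bounds \eqref{assumption.0}--\eqref{assumption.1}) are the right ones.
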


\subsubsection{Reduction for the ``$\mathrm{hard}$'' term in \eqref{eq:energy.id.n.1}}\label{sec:reduction.hard}

Note that
$$[(e_0)_n,A](\f{\rd_j(\chi\psi_n)}{N_n}) = -\bt^k_n \rd_k A(\f{\rd_j(\chi\psi_n)}{N_n}) + A[\bt^k_n \rd_k (\f{\rd_j(\chi\psi_n)}{N_n})] + [\partial_t,A](\f{\rd_j(\chi\psi_n)}{N_n}) .$$
Hence, the ``$\mathrm{hard}$'' term in \eqref{eq:energy.id.n.1} has a similar form as the previous commutator terms, can also be treated in a similar manner.

First, we identify one main term for which the limit is difficult to compute. This is similar to Proposition~\ref{prop:reduction.beta.main.1}; we omit the details.
\begin{proposition}\label{prop:reduction.beta.hard.1}
\begin{equation*}
\begin{split}
&\: -\int_{\mathbb R^{2+1}} [\rd_i (\chi\psi_n)] \de^{ij} N_n \Big\{[(e_0)_n,A](\f{\rd_j(\chi\psi_n)}{N_n}) \Big\} \,\ud x + \int_{\mathbb R^{2+1}} [\rd_i (\chi\psi_n)] \de^{ij} N_0 \Big\{[(e_0)_0,A](\f{\rd_j(\chi\psi_n)}{N_0}) \Big\} \,\ud x \\
&\: - \int_{\mathbb R^{2+1}} [\rd_i (\chi\psi_n)] \de^{ij} N_0 b \Big\{(\bt_n^k-\bt_0^k) \widetilde{m}(\f 1i\nabla)(\f{\rd^2_{jk}(\chi\psi_n)}{N_0}) - \widetilde{m}(\f 1i\nabla)[(\bt_n^k-\bt_0^k) \f{\rd^2_{jk}(\chi\psi_n)}{N_0}] \Big\} \,\ud x \to 0
\end{split}
\end{equation*}
\end{proposition}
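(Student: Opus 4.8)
The plan is to mirror the now-familiar commutator analysis used for \eqref{eq:main.term.commutator.beta} (Propositions~\ref{prop:reduction.beta.main.1}--\ref{prop:reduction.beta.main.2}) and apply it to the ``$\mathrm{hard}$'' term. First I would expand the commutator $[(e_0)_n,A]$ using the decomposition
$$[(e_0)_n,A](\tfrac{\rd_j(\chi\psi_n)}{N_n}) = -\bt^k_n \rd_k A(\tfrac{\rd_j(\chi\psi_n)}{N_n}) + A[\bt^k_n \rd_k (\tfrac{\rd_j(\chi\psi_n)}{N_n})] + [\rd_t,A](\tfrac{\rd_j(\chi\psi_n)}{N_n}),$$
and observe that the $[\rd_t,A]$ piece is an honest $0$-th order pseudodifferential operator independent of $n$ (by Lemma~\ref{lem:PSIDOs}.2 and \ref{lem:PSIDOs}.4), so it converges via Corollary~\ref{cor:nu} and matches the corresponding $\psi_0$-expression; thus it contributes to neither difference in the statement and can be peeled off together with the lower-order commutator terms (those where a derivative lands on $\bt^k_n$ or $1/N_n$).

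Next I would split off the $N_n\mapsto N_0$ and $\bt_n\mapsto \bt_0$ errors one at a time exactly as in the proof of Proposition~\ref{prop:N.commute.1}. The substitution $N_n\mapsto N_0$ inside the various factors $\tfrac{1}{N_n}$, $N_n$, $e^{2\gamma_n}$ costs $O(\lambda_n)$ by \eqref{assumption.0} together with Cauchy--Schwarz, using that the remaining operator applied to $\chi\psi_n$ is uniformly bounded in $L^2$; the self-adjointness error $A^*-A$ and the commutators $[A,N_0]$, $[A,N_0^2]$, $[A,1/N_0]$ are order $-1$ (Lemma~\ref{lem:PSIDOs}.3 and \ref{lem:PSIDOs}.2) hence compact on $L^2$ (Lemma~\ref{lem:PSIDOs}.5), so after passing to a subsequence those contributions vanish in the limit. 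The leftover after the $N_n\mapsto N_0$ reduction is then an expression in which the only $n$-dependent coefficient difference is $\bt_n^k-\bt_0^k$; commuting $\de^{ij}N_0$ and $\rd_i$ inward and writing $\rd_k(\tfrac{\rd_j(\chi\psi_n)}{N_0}) = \tfrac{\rd^2_{jk}(\chi\psi_n)}{N_0} - \tfrac{(\rd_k N_0)\rd_j(\chi\psi_n)}{N_0^2}$ produces, after Calderón-commutator bounds (Lemma~\ref{lem:PSIDOs}.6) and Proposition~\ref{prop:spatial.imp} control the lower-order remainder by $\lambda_n^{1/2}$, precisely the term
$$\int [\rd_i (\chi\psi_n)] \de^{ij} N_0 b \{(\bt_n^k-\bt_0^k) {\widetilde{m}}(\tfrac 1i\nabla)(\tfrac{\rd^2_{jk}(\chi\psi_n)}{N_0}) - {\widetilde{m}}(\tfrac 1i\nabla)[(\bt_n^k-\bt_0^k) \tfrac{\rd^2_{jk}(\chi\psi_n)}{N_0}]\},$$
which is the term isolated in the statement. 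Since all other pieces have been shown to tend to $0$, the stated difference tends to $0$.

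The main obstacle I anticipate is bookkeeping rather than a genuinely new idea: one must be careful that in the ``$\mathrm{hard}$'' term the two $\rd\psi_n$ factors are \emph{un-weighted} first spatial derivatives (not $(e_0)_n$-derivatives as in the $\mathrm{maincommutator}$ calculation), so the roles of $\xi_i$ and $\xi_t - \bt_0^k\xi_k$ in the eventual Fourier analysis differ, and the symmetrization that was used in Step~4 of Proposition~\ref{prop:main.N.term} (averaging $\xi\leftrightarrow\eta$ and using evenness of $m$) has to be re-derived for the multiplier $\xi_i\eta_j\de^{ij}(\text{something in }\bt)$ rather than $(\xi_t-\bt\cdot\xi)(\eta\cdot\eta)$. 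At the present level of the argument, though, Proposition~\ref{prop:reduction.beta.hard.1} only asserts the \emph{reduction}, so all that is needed here is to run the commutator estimates as above; the actual Fourier-space cancellation for the isolated term is deferred (it is treated in parallel with the $\bt_n$ term from \eqref{eq:main.term.commutator.beta} in Section~\ref{sec:commute.beta.combined}). Consequently the proof of Proposition~\ref{prop:reduction.beta.hard.1} itself is routine, and I would simply indicate that it follows the pattern of Propositions~\ref{prop:N.commute.1}, \ref{prop:reduction.beta.main.1} and \ref{prop:reduction.beta.main.2}, spelling out only the one place where the substitution generates the surviving commutator term.
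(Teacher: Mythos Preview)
Your proposal is correct and follows exactly the pattern the paper indicates: the paper itself omits the proof, stating only that it ``is similar to Proposition~\ref{prop:reduction.beta.main.1}'' (which in turn follows Proposition~\ref{prop:N.commute.1}), and your outline---expanding $[(e_0)_n,A]$, peeling off the $[\rd_t,A]$ piece and the lower-order remainders via \eqref{assumption.0}, Calder\'on commutators, and Proposition~\ref{prop:spatial.imp}, then isolating the surviving $(\bt_n^k-\bt_0^k)$ commutator---is precisely this. One minor remark: for the $[\rd_t,A]$ contribution you do not actually need Corollary~\ref{cor:nu} or any $\psi_0$-limit, since after the $N_n\mapsto N_0$ substitution (cost $O(\lambda_n)$) the $[\rd_t,A]$ parts of the first two integrals cancel \emph{identically}; your argument via convergence of both to a common limit is valid but slightly roundabout.
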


Next, we show that the limit remains unchanged after replacing $\psi_n\mapsto \psi_n -\psi_0$ and freezing the coefficients. This is similar to Proposition~\ref{prop:reduction.beta.main.3}.
\begin{proposition}\label{prop:reduction.beta.hard.2}
Let $b_{c,\alp}$ be as in Proposition~\ref{prop:constants}. Then
\begin{equation*}
\begin{split}
&\: \int_{\mathbb R^{2+1}} [\rd_i (\chi \psi_n)] \de^{ij} N_0 b \Big\{(\bt_n^k-\bt_0^k) \widetilde{m}(\f 1i\nabla)(\f{\rd^2_{jk}(\chi\psi_n)}{N_0}) - \widetilde{m}(\f 1i\nabla)[(\bt_n^k -\bt_0^k ) \f{\rd^2_{jk}(\chi\psi_n)}{N_0}] \Big\} \,\ud x \\
&\: -\sum_{\alp} b_{c,\alp} \int_{\mathbb R^{2+1}} [\rd_i (\zeta_\alp\chi(\psi_n-\psi_0))] \de^{ij} \Big\{(\bt_n^k -\bt_0^k) \widetilde{m}(\f 1i\nabla)(\rd^2_{jk}(\zeta_\alp \chi(\psi_n-\psi_0)))\\
&\:  \qquad \qquad \qquad \qquad \qquad \qquad \qquad  - \widetilde{m}(\f 1i\nabla)[(\bt_n^k -\bt_0^k) \rd^2_{jk}(\zeta_\alp \chi(\psi_n-\psi_0))] \Big\} \,\ud x \to 0.
\end{split}
\end{equation*}
\end{proposition}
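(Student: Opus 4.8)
The statement of Proposition~\ref{prop:reduction.beta.hard.2} is a two-step reduction entirely parallel to Propositions~\ref{prop:reduction.to.diff} and \ref{prop:reduction.to.frozen}: first replace $\psi_n$ by $\psi_n-\psi_0$ in the commutator term, then localize by the partition of unity $\{\zeta_\alp^3\}_\alp$ and freeze the metric coefficient $b=b(x)$ on each ball $B_\alp$ to the constant $b_{c,\alp}$. The plan is to carry these out in that order, controlling every resulting error term in $L^1$ via H\"older's inequality, the $L^2$ and $L^4$ boundedness of $\widetilde{m}(\f 1i\nabla)$ (Lemma~\ref{lem:PSIDOs}.4), the Calder\'on commutator theorem (Lemma~\ref{lem:PSIDOs}.6), and the localized estimates of Propositions~\ref{prop:constants}, \ref{prop:psi.localized} and \ref{prop:g.localized}. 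No new idea beyond those two earlier propositions is needed.

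\emph{Step 1 (replacing $\psi_n$ by $\psi_n-\psi_0$).} Write $\chi\psi_n = \chi(\psi_n-\psi_0)+\chi\psi_0$ in each of the three slots ($\rd_i(\chi\psi_n)$ on the far left, and $\rd^2_{jk}(\chi\psi_n)$ inside both $\widetilde{m}(\f1i\nabla)$-terms) and expand. The cross terms and the pure-$\psi_0$ term split into three families. The families in which a $\psi_0$ appears undifferentiated-twice (i.e.\ where $\rd^2_{jk}$ falls on $\chi\psi_0$) are bounded directly: $\|\rd^2_{jk}(\chi\psi_0)\|_{L^2}\ls 1$, $\|\bt^k_n-\bt^k_0\|_{L^\infty}\ls\lambda_n$ by \eqref{assumption.0}, and the remaining two factors are $O(1)$ in $L^2$ and $L^\infty$ respectively, giving $O(\lambda_n)$. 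The family where the far-left factor is $\rd_i(\chi\psi_0)$ and both high-derivative factors carry $\psi_n$ is handled by an integration by parts moving one spatial derivative off the $\psi_n$-factors onto $\psi_0$, exactly as in Step~2 of Proposition~\ref{prop:reduction.to.diff}, at which point $\|\rd_i(\bt^k_n-\bt^k_0)\|_{L^\infty}\ls\lambda_n^{1/2}$ from Proposition~\ref{prop:spatial.imp} (and $\|\bt^k_n-\bt^k_0\|_{L^\infty}\ls\lambda_n$) closes the estimate with $O(\lambda_n^{1/2})$. In all cases one uses that $\chi$, $N_0$, $\bt^k_n$ are bounded in $C^1$ so the extra Leibniz terms are harmless.

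\emph{Step 2 (localizing and freezing $b$).} Insert $1=\sum_\alp\zeta_\alp^3$ into the $\psi_n-\psi_0$ version of the term. Distribute one factor $\zeta_\alp$ to each of the three slots and, using Proposition~\ref{prop:psi.localized} (estimates \eqref{eq:psi.localized.2}, \eqref{eq:psi.localized.3} for the commutators $\rd_i(\zeta_\alp\cdot)-\zeta_\alp\rd_i(\cdot)$ and $\rd^2_{jk}(\zeta_\alp\cdot)-\zeta_\alp\rd^2_{jk}(\cdot)$) together with H\"older and Lemma~\ref{lem:PSIDOs}.4, replace $\zeta_\alp\rd_i(\chi(\psi_n-\psi_0))$ by $\rd_i(\zeta_\alp\chi(\psi_n-\psi_0))$ and similarly for the second-derivative slots, up to errors that sum (over the $O(\lambda_n^{-3\ep_0})$ balls) to $o(1)$; here one tracks powers of $\lambda_n^{\ep_0}$ just as in Proposition~\ref{prop:reduction.to.frozen}. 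Then replace $b(x)$ by the constant $b_{c,\alp}$ using $\|b-b_{c,\alp}\|_{L^\infty(B_\alp)}\ls\lambda_n^{\ep_0}$ from Proposition~\ref{prop:constants}, again picking up a per-ball error $O(\lambda_n^{\ep_0}\cdot\lambda_n^{3\ep_0})=o(\lambda_n^{3\ep_0})$ by the $L^2\times L^2$ bound $\|\rd(\zeta_\alp\chi(\psi_n-\psi_0))\|_{L^2}^2\ls\lambda_n^{3\ep_0}$. (Note that $N_0$ and $\bt^m_{c,\alp}$ need not be frozen further here: $N_0$ is absorbed into $b_{c,\alp}/N_{c,\alp}^2$ type constants only in the companion Proposition~\ref{prop:reduction.beta.main.3}; in the present statement the coefficient $N_0$ sits inside the argument and the statement already keeps it as $\f{1}{N_0}$, so one only needs $\|\rd_i N_0\|_{L^\infty}\ls1$ for the Leibniz terms.) Summing the $o(\lambda_n^{3\ep_0})$ per-ball errors over $O(\lambda_n^{-3\ep_0})$ balls gives the claimed convergence to $0$.

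\emph{Main obstacle.} There is no genuine analytic obstacle; the content is bookkeeping. The one point requiring care is the power-counting in Step~2: because there are $O(\lambda_n^{-3\ep_0})$ balls, every per-ball error must be shown to be $o(\lambda_n^{3\ep_0})$ rather than merely $o(1)$, and one must check that the worst distribution of derivatives among the three slots (two derivatives on one $\psi$-factor, one on another, zero on the third — costing $\lambda_n^{-1}$, $\lambda_n^{0}$, $\lambda_n^{1}$ in $L^\infty$ respectively, hence $\lambda_n^{0}$ total, times the volume factor $\lambda_n^{3\ep_0}$ and the gain $\lambda_n^{\ep_0}$ from freezing) still yields a net positive power of $\lambda_n$. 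Since $\ep_0>\f16$ is not even needed here (only $\ep_0<\f12$ via Proposition~\ref{prop:psi.localized}), the margins are comfortable, and the Calder\'on commutator theorem is invoked only in Step~1's integration-by-parts family to keep $\|\widetilde{m}(\f1i\nabla)\rd_k(\cdot)-(\cdot)\widetilde{m}(\f1i\nabla)\rd_k$-type quantities $L^2$-bounded. I would therefore present Step~1 in full (three short sub-cases) and Step~2 as a one-paragraph repetition of the argument of Proposition~\ref{prop:reduction.to.frozen}, omitting the identical details as the paper does for its analogues.
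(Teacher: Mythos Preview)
Your approach is correct and matches the paper's own, which omits the proof entirely with the remark that it is analogous to Proposition~\ref{prop:reduction.beta.main.3} (itself a repetition of Propositions~\ref{prop:reduction.to.diff} and \ref{prop:reduction.to.frozen}). One small correction to your parenthetical in Step~2: the target expression has \emph{no} factor of $N_0$ or $1/N_0$ anywhere, so both the outer $N_0$ and the inner $1/N_0$ must in fact be frozen (to $N_{c,\alp}$ and $1/N_{c,\alp}$, which then cancel) via Proposition~\ref{prop:constants}, exactly as $b$ is; this costs nothing extra and is handled by the same $\lambda_n^{\ep_0}$ gain you already invoke.
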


\subsubsection{Computation of the limit}\label{sec:commute.beta.combined}

We now combine the terms in Propositions~\ref{prop:reduction.beta.main.3} and \ref{prop:reduction.beta.hard.2} and compute the limit.
\begin{proposition}
Let $b_{c,\alp}$, $N_{c,\alp}$, $\gamma_{c,\alp}$ and $\bt_{c,\alp}^i$ be as in Proposition~\ref{prop:constants}. Then
\begin{equation*}
\begin{split}
&\: \sum_{\alp} \f{b_{c,\alp}e^{2\gamma_{c,\alp}}}{N_{c,\alp}^2} \int_{\mathbb R^{2+1}} (\rd_t - \bt^{\ell}_{c,\alp}\rd_\ell)(\zeta_\alp\chi(\psi_n-\psi_0)) \Big\{  (\bt_n^k - \bt_0^k) \widetilde{m}(\f 1i\nabla)(\rd_k(\rd_t - \bt^m_{c,\alp}\rd_m)(\zeta_\alp\chi(\psi_n-\psi_0)))\\
&\: \qquad \qquad \qquad \qquad \qquad \qquad \qquad  - \widetilde{m}(\f 1i\nabla)  [ (\bt_n^k - \bt_0^k) (\rd_k(\rd_t - \bt^m_{c,\alp}\rd_m)(\zeta_\alp\chi(\psi_n-\psi_0))] \Big\}\,\ud x \\
&\: +\sum_{\alp} b_{c,\alp} \int_{\mathbb R^{2+1}} [\rd_i (\zeta_\alp\chi(\psi_n-\psi_0))] \de^{ij} \Big\{(\bt_n^k - \bt_0^k) \widetilde{m}(\f 1i\nabla)(\rd^2_{jk}(\zeta_\alp \chi(\psi_n-\psi_0)))\\
&\:  \qquad \qquad \qquad \qquad \qquad \qquad \qquad  - \widetilde{m}(\f 1i\nabla)[(\bt_n^k - \bt_0^k) \rd^2_{jk}(\zeta_\alp \chi(\psi_n-\psi_0))] \Big\} \,\ud x \to 0.
\end{split}
\end{equation*}
\end{proposition}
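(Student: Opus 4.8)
The two sums to be combined arise respectively from the term \eqref{eq:main.term.commutator.beta} (the $\bt$-part of the main commutator in \eqref{eq:energy.id.n.2}) and from the ``$\mathrm{hard}$'' term in \eqref{eq:energy.id.n.1} (via Propositions~\ref{prop:reduction.beta.main.3} and \ref{prop:reduction.beta.hard.2}). The first observation is that, after using $\ep_0 < \f12$ and Propositions~\ref{prop:psi.localized} and \ref{prop:g.localized}, we may freely replace $(\rd_t - \bt^m_{c,\alp}\rd_m)$ by $(\rd_t - \bt^\ell_{c,\alp}\rd_\ell)$ inside each expression, commute ${\widetilde m}(\f 1i\nabla)$ past $\rd_k$ at the cost of $o(\lambda_n^{3\ep_0})$ errors (the commutators land on $\zeta_\alp$-factors), and bring the two sums into a form where each summand is a \emph{single} trilinear expression in $\rd(\zeta_\alp\chi(\psi_n-\psi_0))$, $\rd(\zeta_\alp(\bt_n^k - \bt_0^k))$ and a Fourier multiplier. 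The plan is then to take Fourier transforms in each summand (legitimate because all coefficients are now constants), exactly as in the proof of Proposition~\ref{prop:main.N.term}.

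Concretely, write $({\bm\psi}_{\rm diff})_n := \zeta_\alp\chi(\psi_n-\psi_0)$ and $({\bm\bt}_{\rm diff})^k_n := \zeta_\alp(\bt_n^k-\bt_0^k)$. After the reductions, the first sum contributes a multiplier proportional to $(\xi_t - \bt^\ell_{c,\alp}\xi_\ell)(\eta_t - \bt^m_{c,\alp}\eta_m)\eta_k$ (symmetrized) and the second a multiplier proportional to $\xi_i\de^{ij}\eta_j\eta_k$ (symmetrized), each paired with $\widehat{({\bm\psi}_{\rm diff})_n}(\xi)\,\widehat{({\bm\bt}_{\rm diff})^k_n}(\eta-\xi)\,\widehat{({\bm\psi}_{\rm diff})_n}(-\eta)\,[{\widetilde m}(\eta)-{\widetilde m}(\xi)]$, up to the factor $\f{b_{c,\alp}e^{2\gamma_{c,\alp}}}{N_{c,\alp}^2}$ in the first case and $b_{c,\alp}$ in the second. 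The key algebraic point is that the relation $\f{1}{N_{c,\alp}^2}(\xi_t - \bt^\ell_{c,\alp}\xi_\ell)(\eta_t - \bt^m_{c,\alp}\eta_m) = e^{-2\gamma_{c,\alp}}\de^{ij}\xi_i\eta_j + (\text{terms controlled by }\Box_{g_0}\text{-type factors or }\xi-\eta)$ should make the two multipliers \emph{cancel at leading order}: one rewrites $(\xi_t-\bt^\ell_{c,\alp}\xi_\ell)(\eta_t-\bt^m_{c,\alp}\eta_m) - e^{-2\gamma_{c,\alp}}N_{c,\alp}^2\,\de^{ij}\xi_i\eta_j$ as a sum of $\big((\xi_t-\bt\xi)^2 - e^{-2\gamma}N^2|\xi_i|^2\big)$-type pieces and $\big((\eta_t-\bt\eta)^2-e^{-2\gamma}N^2|\eta_i|^2\big)$-type pieces, plus pieces carrying a factor $\de^{ij}(\xi_i-\eta_i)(\text{something})$. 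Each of these is then estimated exactly as terms $\rm{I}$--$\rm{IV}$ in the proof of Proposition~\ref{prop:main.N.term}: the $\Box$-type factors are absorbed by Proposition~\ref{prop:approximate.wave} (costing $\lambda_n^{\ep_0}$ and hence $o(1)$ after the $\lambda_n^{-3\ep_0}$ count of balls); the factor $\de^{ij}(\xi_i-\eta_i)$ is a spatial derivative hitting $({\bm\bt}_{\rm diff})^k_n$, which by Proposition~\ref{prop:g.localized} gains an extra $\lambda_n^{1/2}$; and the frequency decomposition of $({\bm\bt}_{\rm diff})^k_n$ into low frequencies (Bernstein), high spatial frequencies (where $\Delta^{-1}$ gains), and high time-but-moderate-spatial frequencies (where the $\f{1}{\nabla_t - \bt\nabla_j}$ multiplier is bounded and one divides as in \eqref{eq:Fourier.lower.bound}) handles the remaining pieces. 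In each case the net power of $\lambda_n$ is $3\ep_0 + (\text{strictly positive})$, so each summand is $o(\lambda_n^{3\ep_0})$ and, summing over the $O(\lambda_n^{-3\ep_0})$ balls, the total tends to $0$.

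The main obstacle is organizing the algebra of the combined multiplier so that the cancellation between the $\bt$-commutator from $\Box_{g_n,A}$ and the $\bt$-commutator from the $\mathrm{hard}$ term is manifest: one must be careful that the symmetrization in $(\xi,\eta)$ (using the evenness of $m$, as in \eqref{N.Fourier.1}--\eqref{N.Fourier.final}) is performed consistently in both sums before subtracting, since otherwise spurious non-cancelling terms appear. Once the multiplier is written as $\big[(\text{first sum's symbol}) + (\text{second sum's symbol})\big]$ and reduced via the identity above to $\Box$-type and derivative-of-$\bt$-type pieces, the remaining estimates are routine repetitions of Steps~2--4 of Proposition~\ref{prop:main.N.term}, with the same frequency cutoffs $\Theta(\lambda_n^{5/6}|\xi-\eta|)$ and $\Theta(|\xi_i-\eta_i|/|\xi-\eta|^{5/8})$ applied to $({\bm\bt}_{\rm diff})^k_n$ in place of $({\bm N}_{\rm diff})_n$, and using Proposition~\ref{prop:g.localized} (which gives the \emph{same} improved bounds for $\bt_n - \bt_0$ as for $N_n - N_0$) in place of the bounds on $N_n^2 - N_0^2$. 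A secondary nuisance is that here the multiplier $\eta_k$ contracts against $({\bm\bt}_{\rm diff})^k_n$ rather than appearing quadratically in $\eta$, so one of the two ``derivative'' slots sits on $\bt$ rather than on $\psi$; this only shifts one power of $\lambda_n$ from a $\psi$-factor to the (better-behaved) $\bt$-factor and does not affect the conclusion.
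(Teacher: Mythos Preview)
Your overall strategy --- freeze coefficients, pass to Fourier space, and decompose $\zeta_\alp(\bt_n^k-\bt_0^k)$ into the same three frequency regimes as in Proposition~\ref{prop:main.N.term} --- is exactly what the paper does, and the low-frequency and high-spatial-frequency pieces are indeed handled as in Steps~2 and~3 there. The gap is in your algebra for the remaining (high-time, moderate-spatial) piece.

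The combined Fourier multiplier you must control is a \emph{sum}, not a difference: after the computations \eqref{beta.commute.frozen.1}--\eqref{beta.commute.frozen.2} one finds the symbol
\[
\Big[\f{e^{2\gamma_{c,\alp}}}{N_{c,\alp}^2}(\xi_t-\bt^j_{c,\alp}\xi_j)(\eta_t-\bt^m_{c,\alp}\eta_m)\;+\;\de^{ij}\xi_i\eta_j\Big]\eta_k,
\]
so there is no ``cancellation at leading order'', and your polarization-style rewriting of the \emph{difference} $(\xi_t-\bt\xi)(\eta_t-\bt\eta)-e^{-2\gamma}N^2\de^{ij}\xi_i\eta_j$ into $\widetilde\Box_{c,\alp}$-type pieces does not apply (with the correct plus sign the polarization identity produces $(\cdot)^2+e^{-2\gamma}N^2|\cdot|^2$ terms, which are not $\widetilde\Box_{c,\alp}$-symbols). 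The paper also does \emph{not} symmetrize in $(\xi,\eta)$ here; the role that symmetrization played in Proposition~\ref{prop:main.N.term} is instead played precisely by the presence of the second sum (the $\mathrm{hard}$ term) being added to the first. The decomposition actually used is \eqref{beta.Fourier.mult.prelim}: one rewrites the symbol above as
\[
\f{e^{2\gamma_{c,\alp}}}{N_{c,\alp}^2}(\xi_t-\bt\xi)\big[(\xi_t-\bt\xi)+(\eta_t-\bt\eta)\big]\eta_k \;+\; \de^{ij}\xi_i(\eta_j-\xi_j)\eta_k \;+\; \big[-\f{e^{2\gamma_{c,\alp}}}{N_{c,\alp}^2}(\xi_t-\bt\xi)^2+\de^{ij}\xi_i\xi_j\big]\eta_k.
\]
The first bracket is then expanded via \eqref{expression.for.sum.of.t.frequency} (this is where the division by $(\eta_t-\bt\eta)-(\xi_t-\bt\xi)$ enters, producing three pieces analogous to $\mathrm{II}$--$\mathrm{IV}$ of Proposition~\ref{prop:main.N.term}), the second term carries a spatial factor $(\eta_j-\xi_j)$ landing on $\bt$, and the third is a direct $\widetilde\Box_{c,\alp}$-symbol in $\xi$ with \emph{no} denominator, bounded via Proposition~\ref{prop:approximate.wave}. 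With that corrected decomposition in hand, your remaining bookkeeping (each summand $o(\lambda_n^{3\ep_0})$, $O(\lambda_n^{-3\ep_0})$ balls) is correct.
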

\begin{proof}
\pfstep{Step~1: Fourier decomposition} 
Let $\Theta$ be as in \eqref{def:Theta}. Define now the decomposition of $\bt_n^i-\bt_0^i$ as follows (compare \eqref{def:N.decomp.1}--\eqref{def:N.decomp.3}):
$${\zeta_\alp}(\bt_n^i-\bt_0^i) = ({\bm \beta}_{\mathrm{diff}})^i_{n,1} + ({\bm \beta}_{\mathrm{diff}})^i_{n,2} + ({\bm \beta}_{\mathrm{diff}})^i_{n,3},$$
where
$$\widehat{({\bm \beta}_{\mathrm{diff}})^i_{n,1}}(\xi) := \Theta(\lambda_n^{\f 56}|\xi|)\left(\widehat{{\zeta_\alp}\bt_n^i}(\xi)-\widehat{{\zeta_\alp}\bt_0^i}(\xi)\right),$$
$$\widehat{({\bm \beta}_{\mathrm{diff}})^i_{n,2}}(\xi) := (1-\Theta(\lambda_n^{\f 56}|\xi|))(1-\Theta(\f{|\xi_i|}{|\xi|^{\f 58}}))\left(\widehat{{\zeta_\alp}\bt_n^i}(\xi)-\widehat{{\zeta_\alp}\bt_0^i}(\xi)\right),$$
$$\widehat{({\bm \beta}_{\mathrm{diff}})^i_{n,3}}(\xi) := (1-\Theta(\lambda_n^{\f 56}|\xi|))\Theta(\f{|\xi_i|}{|\xi|^{\f 58}})\left(\widehat{{\zeta_\alp}\bt_n^i}(\xi)-\widehat{{\zeta_\alp}\bt_0^i}(\xi)\right).$$

We need to estimate the contributions from $({\bm \beta}_{\mathrm{diff}})^i_{n,1}$, $({\bm \beta}_{\mathrm{diff}})^i_{n,2}$ and $({\bm \beta}_{\mathrm{diff}})^i_{n,3}$. The contributions from the terms $({\bm \beta}_{\mathrm{diff}})^i_{n,1}$ and $({\bm \beta}_{\mathrm{diff}})^i_{n,2}$ can be handled as in Steps~2 and 3 in the proof of Proposition~\ref{prop:main.N.term}, where analogous terms were estimated. We note in particular that in Steps~2 and 3 in the proof of Proposition~\ref{prop:main.N.term}, the argument relies only on the frequency support of the corresponding terms and we did not use the precise structure of the nonlinearity. We therefore omit the details about bounding these terms.

On the other hand, the contribution from $({\bm \beta}_{\mathrm{diff}})^i_{n,3}$ requires a more careful treatment. (This is analogous to the term in Step~4 in the proof of Proposition~\ref{prop:main.N.term}, where we fully exploit the precise structure of the term.) We will thus focus on this term in the remainder of the proof. 

\pfstep{Step~2: Estimating the main term}
Denote $({\bm \psi}_{\mathrm{diff}})_n = \zeta_\alp {\chi} (\psi_n - \psi_0)$. We compute
\begin{equation}\label{beta.commute.frozen.1}
\begin{split}
&\: \f{b_{c,\alp}e^{2\gamma_{c,\alp}}}{N_{c,\alp}^2} \int_{\mathbb R^{2+1}} (\rd_t - \bt^\ell_{c,\alp}\rd_\ell)({\bm \psi}_{\mathrm{diff}})_n({\bm \beta}_{\mathrm{diff}})^k_{n,3} [ \widetilde{m}(\f 1 i\nabla)((\rd_t - \bt^m_{c,\alp}\rd_m)\rd_k({\bm \psi}_{\mathrm{diff}})_n)] \,\ud x \\
&\: - \f{b_{c,\alp}e^{2\gamma_{c,\alp}}}{N_{c,\alp}^2} \int_{\mathbb R^{2+1}} (\rd_t - \bt^j_{c,\alp}\rd_j)({\bm \psi}_{\mathrm{diff}})_n \widetilde{m}(\f 1 i\nabla) [({\bm \beta}_{\mathrm{diff}})^k_{n,3} ((\rd_t - \bt^m_{c,\alp}\rd_m)\rd_k({\bm \psi}_{\mathrm{diff}})_n))] \,\ud x \\
= &\: \f{-i e^{2\gamma_{c,\alp}} b_{c,\alp}}{N_{c,\alp}^2}\int_{\mathbb R^{2+1}} (\xi_t - \bt^j_{c,\alp}\xi_j)(\eta_t - \bt^m_{c,\alp}\eta_m) \eta_k \widehat{({\bm \psi}_{\mathrm{diff}})_n}(\xi) \overline{\widehat{({\bm \beta}_{\mathrm{diff}})^k_{n,3}}(\xi-\eta)} \overline{\widehat{({\bm \psi}_{\mathrm{diff}})_n}(\eta)} [\widetilde{m}(\eta) - \widetilde{m}(\xi)]\,\ud \eta\,\ud \xi \\
= &\: \f{-i e^{2\gamma_{c,\alp}} b_{c,\alp}}{N_{c,\alp}^2}\int_{\mathbb R^{2+1}} (\xi_t - \bt^j_{c,\alp}\xi_j)(\eta_t - \bt^m_{c,\alp}\eta_m) \eta_k \widehat{({\bm \psi}_{\mathrm{diff}})_n}(\xi) \widehat{({\bm \beta}_{\mathrm{diff}})^k_{n,3}}(\eta-\xi) \widehat{({\bm \psi}_{\mathrm{diff}})_n}(-\eta) [\widetilde{m}(\eta) - \widetilde{m}(\xi)]\,\ud \eta\,\ud \xi.
\end{split}
\end{equation}

Similarly,
\begin{equation}\label{beta.commute.frozen.2}
\begin{split}
&\: b_{c,\alp} \int_{\mathbb R^{2+1}} \rd_i({\bm \psi}_{\mathrm{diff}})_n \de^{ij} \{ ({\bm \beta}_{\mathrm{diff}})^k_{n,3} [ \widetilde{m}(\f 1 i\nabla)(\rd^2_{jk}({\bm \psi}_{\mathrm{diff}})_n)] - \widetilde{m}(\f 1 i\nabla) [({\bm \beta}_{\mathrm{diff}})^k_{n,3} (\rd^2_{jk} ({\bm \psi}_{\mathrm{diff}})_n))]\}\,\ud x \\
= &\: -i b_{c,\alp} \int_{\mathbb R^{2+1}} \de^{ij}\xi_i\eta_j \eta_k \widehat{({\bm \psi}_{\mathrm{diff}})_n}(\xi) \widehat{({\bm \beta}_{\mathrm{diff}})^k_{n,3}}(\eta-\xi) \widehat{({\bm \psi}_{\mathrm{diff}})_n}(-\eta) [\widetilde{m}(\eta) - \widetilde{m}(\xi)]\,\ud \eta\,\ud \xi.
\end{split}
\end{equation}
We now analyze the Fourier multiplier corresponding to $\eqref{beta.commute.frozen.1} + \eqref{beta.commute.frozen.2}$. First, we compute
\begin{equation}\label{beta.Fourier.mult.prelim}
\begin{split}
&\: \f{e^{2\gamma_{c,\alp}} }{N_{c,\alp}^2}(\xi_t - \bt^j_{c,\alp}\xi_j)(\eta_t - \bt^m_{c,\alp}\eta_m) \eta_k + \de^{ij} \xi_i\eta_j \eta_k \\
=&\: \f{e^{2\gamma_{c,\alp}} }{N_{c,\alp}^2}(\xi_t - \bt^j_{c,\alp}\xi_j)((\xi_t - \bt^j_{c,\alp}\xi_j)+(\eta_t - \bt^m_{c,\alp}\eta_m)) \eta_k + \de^{ij}\xi_i(\eta_j - \xi_j) \eta_k \\
&\: + [-\f{e^{2\gamma_{c,\alp}} }{N_{c,\alp}^2}(\xi_t - \bt^j_{c,\alp}\xi_j)^2 + \de^{ij} \xi_i \xi_j ]\eta_k
\end{split}
\end{equation}
From \eqref{beta.Fourier.mult.prelim} and \eqref{expression.for.sum.of.t.frequency} it follows that
\begin{align}
&\: \left( \f{e^{2\gamma_{c,\alp}} }{N_{c,\alp}^2}(\xi_t - \bt^j_{c,\alp}\xi_j)(\eta_t - \bt^m_{c,\alp}\eta_m) \eta_k + \de^{ij} \xi_i\eta_j \eta_k\right) \Big(1-\Theta(\lambda_n^{\f 56}|\xi-\eta|)\Big)\Theta\Big(\f{|\xi_i-\eta_i|}{|\xi-\eta|^{\f 58}}\Big) \nonumber\\
= &\: + \f{e^{2\gamma_{c,\alp}} }{N_{c,\alp}^2}(\xi_t - \bt^j_{c,\alp}\xi_j) \eta_k \f{(\eta_t - \bt^j_{c,\alp}\eta_j)^2 - e^{-2\gamma_{c,\alp}} N_{c,\alp}^2 |\eta_i|^2}{(\eta_t - \bt^j_{c,\alp}\eta_j) - (\xi_t - \bt^j_{c,\alp}\xi_j)} \Big(1-\Theta(\lambda_n^{\f 56}|\xi-\eta|)\Big)\Theta\Big(\f{|\xi_i-\eta_i|}{|\xi-\eta|^{\f 58}}\Big) \label{beta.multiplier.1}\\
&\: - \f{e^{2\gamma_{c,\alp}} }{N_{c,\alp}^2}(\xi_t - \bt^j_{c,\alp}\xi_j) \eta_k \f{(\xi_t - \bt^j_{c,\alp}\xi_j)^2 - e^{-2\gamma_{c,\alp}} N_{c,\alp}^2 |\xi_i|^2}{(\eta_t - \bt^j_{c,\alp}\eta_j) -(\xi_t - \bt^j_{c,\alp}\xi_j)} \Big(1-\Theta(\lambda_n^{\f 56}|\xi-\eta|)\Big)\Theta\Big(\f{|\xi_i-\eta_i|}{|\xi-\eta|^{\f 58}}\Big) \label{beta.multiplier.2}\\
&\: + \f{e^{2\gamma_{c,\alp}} }{N_{c,\alp}^2}(\xi_t - \bt^j_{c,\alp}\xi_j) \eta_k \f{e^{-2\gamma_{c,\alp}} N_{c,\alp}^2 \de^{k\ell}(\eta_k -\xi_k)(\eta_\ell + \xi_\ell)}{(\eta_t - \bt^j_{c,\alp}\eta_j) - (\xi_t - \bt^j_{c,\alp}\xi_j)}\Big(1-\Theta(\lambda_n^{\f 56}|\xi-\eta|)\Big)\Theta\Big(\f{|\xi_i-\eta_i|}{|\xi-\eta|^{\f 58}}\Big) \label{beta.multiplier.3} \\
&\: +\de^{ij}\xi_i(\eta_j - \xi_j) \eta_k \Big(1-\Theta(\lambda_n^{\f 56}|\xi-\eta|)\Big)\Theta\Big(\f{|\xi_i-\eta_i|}{|\xi-\eta|^{\f 58}}\Big) \label{beta.multiplier.4}\\
&\: +[-\f{e^{2\gamma_{c,\alp}} }{N_{c,\alp}^2}(\xi_t - \bt^j_{c,\alp}\xi_j)^2 + \de^{ij} \xi_i \xi_j ]\eta_k\Big(1-\Theta(\lambda_n^{\f 56}|\xi-\eta|)\Big)\Theta\Big(\f{|\xi_i-\eta_i|}{|\xi-\eta|^{\f 58}}\Big).\label{beta.multiplier.5}
\end{align}

Define now the term $\mathrm{I}$, $\mathrm{II}$, $\mathrm{III}$, $\mathrm{IV}$ and $\mathrm{V}$ respectively by inserting \eqref{beta.multiplier.1}, \eqref{beta.multiplier.2}, \eqref{beta.multiplier.3}, \eqref{beta.multiplier.4} and \eqref{beta.multiplier.5} into (*) below
\begin{equation}\label{beta.multiplier.sub}
\begin{split}
-ib_{c,\alp} \iint \left(*\right) \widehat{({\bm \psi}_{\mathrm{diff}})_n}(\xi) {(\widehat{\zeta_\alp \beta_n^k}- \widehat{\zeta_\alp \bt_0^k})}(\eta-\xi) \widehat{({\bm \psi}_{\mathrm{diff}})_n}(-\eta) [\widetilde{m}(\eta)-\widetilde{m}(\xi)]\,\ud \eta\,\ud \xi.
\end{split}
\end{equation}

We note that the term $\mathrm{I}$ and $\mathrm{II}$ here can be handled in a similar way as the terms $\mathrm{II}$ and $\mathrm{III}$ in Step~4(b) of the proof of Proposition~\ref{prop:main.N.term}. Also, the term $\mathrm{III}$ here can be handled in a similar way as the terms $\mathrm{IV}$ in Step~4(b) of the proof of Proposition~\ref{prop:main.N.term}. In particular, we have
\begin{equation}\label{beta.main.1.2.3}
|\mathrm{I}|+|\mathrm{II}|+|\mathrm{III}| = o(\lambda^{3\ep_0}).
\end{equation}

Inverting the Fourier transform, using H\"older's inequality, Lemma~\ref{lem:PSIDOs}.4, and applying the estimates in Propositions~\ref{prop:psi.localized} and \ref{prop:g.localized}, we obtain
\begin{equation}\label{beta.main.4}
\begin{split}
|\mathrm{IV}| \ls &\: \left(\| \rd_i (\zeta_\alp \chi (\psi_n-\psi_0))\|_{L^4} \|\widetilde{m}(\f 1i\nabla)\rd_k (\zeta_\alp \chi (\psi_n - \psi_0) ) \|_{L^4} \right) \\
&\: \qquad \times\|(1-\Theta(\lambda_n^{\f 56}|\nabla|))\Theta(\f{|\nabla_i|}{|\nabla|^{\f 58}})\rd_i(\zeta_\alp(\bt_n - \bt_0))\|_{L^2} \\
\ls &\: \lambda_n^{\f{3\ep_0}{4}} \cdot \lambda_n^{\f{3\ep_0}{4}}\cdot \lambda_n^{\f 12+\f{3\ep_0}{2}} = \lambda_n^{\f 12+ 3\ep_0} = o(\lambda_n^{3\ep_0}).
\end{split}
\end{equation}

Inverting the Fourier transform, using H\"older's inequality, Lemma~\ref{lem:PSIDOs}.4, and applying the estimates in Propositions~\ref{prop:psi.localized}, \ref{prop:g.localized} and \ref{prop:approximate.wave}, we obtain
\begin{equation}\label{beta.main.5}
\begin{split}
|\mathrm{V}| \ls &\: \left(\| \rd_i (\zeta_\alp \chi (\psi_n-\psi_0))\|_{L^4} \|\widetilde{m}(\f 1i\nabla)\widetilde{\Box}_{c,\alp} (\zeta_\alp \chi (\psi_n - \psi_0) ) \|_{L^4} \right.\\
&\: \left. \qquad+ \|\widetilde{\Box}_{c,\alp} (\zeta_\alp \chi (\psi_n-\psi_0))\|_{L^4} \|\widetilde{m}(\f 1i\nabla)\rd_i(\zeta_\alp \chi (\psi_n - \psi_0) ) \|_{L^4}\right) \\
&\: \qquad \qquad \times\|(1-\Theta(\lambda_n^{\f 56}|\nabla|))\Theta(\f{|\nabla_i|}{|\nabla|^{\f 58}})\zeta_\alp(\bt_n - \bt_0)\|_{L^2} \\
\ls &\: \lambda_n^{\f{3\ep_0}{4}} \cdot \lambda_n^{-1+\ep_0} \lambda_n^{\f{3\ep_0}{4}}\cdot \lambda_n \lambda_n^{\f{3\ep_0}{2}} = \lambda_n^{4\ep_0} = o(\lambda_n^{3\ep_0}).
\end{split}
\end{equation}

Noticing that the sum $\sum_\alp$ has $O(\lambda_n^{-3\ep_0})$ terms (cf.~beginning of Section~\ref{sec:freeze.coeff}), it follows from \eqref{beta.commute.frozen.1}, \eqref{beta.commute.frozen.2}, \eqref{beta.multiplier.1}, \eqref{beta.multiplier.2}, \eqref{beta.multiplier.3}, \eqref{beta.multiplier.4}, \eqref{beta.multiplier.5}, \eqref{beta.multiplier.sub}, \eqref{beta.main.1.2.3}, \eqref{beta.main.4} and \eqref{beta.main.5} that
\begin{equation*}
\begin{split}
&\: \sum_{\alp} \f{b_{c,\alp}e^{2\gamma_{c,\alp}}}{N_{c,\alp}^2} \int_{\mathbb R^{2+1}} (\rd_t - \bt^\ell_{c,\alp}\rd_\ell)({\bm \psi}_{\mathrm{diff}})_n({\bm \beta}_{\mathrm{diff}})^k_{n,3} [ \widetilde{m}(\f 1 i\nabla)((\rd_t - \bt^m_{c,\alp}\rd_m)\rd_k({\bm \psi}_{\mathrm{diff}})_n)] \,\ud x \\
&\: + \sum_{\alp} b_{c,\alp} \int_{\mathbb R^{2+1}} \rd_i({\bm \psi}_{\mathrm{diff}})_n \de^{ij} \{ ({\bm \beta}_{\mathrm{diff}})^k_{n,3} [ \widetilde{m}(\f 1 i\nabla)(\rd^2_{jk}({\bm \psi}_{\mathrm{diff}})_n)] - \widetilde{m}(\f 1 i\nabla) [({\bm \beta}_{\mathrm{diff}})^k_{n,3} (\rd^2_{jk} ({\bm \psi}_{\mathrm{diff}})_n))]\}\,\ud x \to 0.
\end{split}
\end{equation*}
Combining this with the discussions in Step~1, {this concludes} the proof. \qedhere
\end{proof}

We summarize below what we have obtained in this subsection. As in Proposition~\ref{prop:N.commute.final}, we note that the computation for the commutator applies equally well to the term involving $\om_n$.
\begin{proposition}\label{prop:beta.commute.final}
\begin{equation*}
\begin{split}
&\: \int_{\mathbb R^{2+1}} \f{(e_0)_n(\chi\psi_n)}{N_n} \Big\{ \rd_i [e^{2\gamma_n} \bt_n^i A(\f{(e_0)_n(\chi\psi_n)}{N_n})] - A \rd_i [e^{2\gamma_n} \bt_n^i (\f{(e_0)_n(\chi\psi_n)}{N_n})] \Big\}\,\ud x \\
&\: - \int_{\mathbb R^{2+1}} [\rd_i (\chi\psi_n)] \de^{ij} N_n \Big\{[(e_0)_n,A](\f{\rd_j(\chi\psi_n)}{N_n}) \Big\} \,\ud x \\
&\: - \int_{\mathbb R^{2+1}}  \f{(e_0)_0(\chi\psi_n)}{N_0} \Big\{ \rd_i [e^{2\gamma_0} \bt_0^i A(\f{(e_0)_0(\chi\psi_n)}{N_0})] - A \rd_i [e^{2\gamma_0} \bt_0^i (\f{(e_0)_0(\chi\psi_n)}{N_0})] \Big\}\,\ud x \\
&\: + \int_{\mathbb R^{2+1}} [\rd_i (\chi\psi_n)] \de^{ij} N_0 \Big\{[(e_0)_0,A](\f{\rd_j(\chi\psi_n)}{N_0}) \Big\} \,\ud x \to 0.
\end{split}
\end{equation*}
A similar statement holds after replacing $\psi_n\rightsquigarrow \om_n$, $\psi_0\rightsquigarrow \om_0$ and $\ud x\rightsquigarrow \f 14 e^{-4\psi_0} \ud x$.
\end{proposition}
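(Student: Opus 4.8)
The plan is to assemble Proposition~\ref{prop:beta.commute.final} from the string of reductions established in Sections~\ref{sec:reduction.commutator.beta}--\ref{sec:reduction.hard} together with the limit computation of Section~\ref{sec:commute.beta.combined}. The point is that all the genuinely new work has already been done: the only remaining task is to chain together the convergences. I would first note that the left-hand side of the statement is exactly the sum of (the negative of) the ``$\mathrm{hard}$'' term in \eqref{eq:energy.id.n.1} and the contribution of \eqref{eq:main.term.commutator.beta} to the ``$\mathrm{maincommutator}$'' term in \eqref{eq:energy.id.n.2}, each compared against its $(\psi_0,g_0)$-analogue. Recalling the decomposition $\sqrt{-\det g_n}(\Box_{g_n,A}-\frac{1}{\sqrt{-\det g_n}}A\sqrt{-\det g_n}\Box_{g_n})=\eqref{eq:main.term.commutator.gamma}+\eqref{eq:main.term.commutator.N}+\eqref{eq:main.term.commutator.beta}$, and that $\mathrm{dVol}_{g_n}=\sqrt{-\det g_n}\,\ud x$, the $\beta$-part of the main commutator term is precisely $\int \frac{(e_0)_n(\chi\psi_n)}{N_n}\{\rd_i[e^{2\gamma_n}\bt^i_n A(\frac{(e_0)_n(\chi\psi_n)}{N_n})]-A\rd_i[e^{2\gamma_n}\bt^i_n(\frac{(e_0)_n(\chi\psi_n)}{N_n})]\}\,\ud x$ (up to sign), which is the first term in the statement.

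Next I would run the telescoping argument. By Proposition~\ref{prop:reduction.beta.main.1}, the $\beta$-main-commutator term (minus its $g_0$-analogue) has the same limit as the ``frozen-numerator'' expression $\int\frac{(e_0)_0(\chi\psi_n)}{N_0}b\{(e^{2\gamma_n}\bt^k_n-e^{2\gamma_0}\bt^k_0)\widetilde m(\frac1i\nabla)(\frac{\rd_k(e_0)_0(\chi\psi_n)}{N_0})-\widetilde m(\frac1i\nabla)[\cdots]\}\,\ud x$; by Proposition~\ref{prop:reduction.beta.main.2} this in turn reduces to the term with $e^{2\gamma_n}\bt^k_n-e^{2\gamma_0}\bt^k_0$ replaced by $e^{2\gamma_0}(\bt^k_n-\bt^k_0)$; and by Proposition~\ref{prop:reduction.beta.main.3} this reduces (after replacing $\psi_n\mapsto\psi_n-\psi_0$ and freezing coefficients) to $\sum_\alp\frac{b_{c,\alp}e^{2\gamma_{c,\alp}}}{N_{c,\alp}^2}\int(\rd_t-\bt^\ell_{c,\alp}\rd_\ell)(\zeta_\alp\chi(\psi_n-\psi_0))\{(\bt^k_n-\bt^k_0)\widetilde m(\frac1i\nabla)(\rd_k(\rd_t-\bt^m_{c,\alp}\rd_m)(\zeta_\alp\chi(\psi_n-\psi_0)))-\widetilde m(\frac1i\nabla)[\cdots]\}\,\ud x$. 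In parallel, Propositions~\ref{prop:reduction.beta.hard.1} and \ref{prop:reduction.beta.hard.2} reduce the ``$\mathrm{hard}$'' term (minus its $g_0$-analogue) to $\sum_\alp b_{c,\alp}\int[\rd_i(\zeta_\alp\chi(\psi_n-\psi_0))]\de^{ij}\{(\bt^k_n-\bt^k_0)\widetilde m(\frac1i\nabla)(\rd^2_{jk}(\zeta_\alp\chi(\psi_n-\psi_0)))-\widetilde m(\frac1i\nabla)[\cdots]\}\,\ud x$. The final Proposition of Section~\ref{sec:commute.beta.combined} asserts exactly that the sum of these two frozen-coefficient expressions tends to $0$. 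Adding these convergences up, with the correct signs (the $\mathrm{hard}$ term enters \eqref{eq:energy.id.n.1} with a plus sign, and its reduction in Proposition~\ref{prop:reduction.beta.hard.1} is stated with a minus sign, while the $\beta$-commutator term, after the sign flip coming from $\Box_{g_n,A}-\cdots$, carries the matching opposite sign so that the two frozen-coefficient tails are precisely the pair appearing in the Section~\ref{sec:commute.beta.combined} Proposition), gives the claimed convergence to $0$. For the statement with $\psi\mapsto\om$ and $\ud x\mapsto\frac14 e^{-4\psi_0}\ud x$, I would remark that every one of the cited propositions already carries (or, as stated in Propositions~\ref{prop:gamma.commute.final}, \ref{prop:N.commute.final}, is explicitly noted to admit) the $\om$-analogue with the weight $\frac14 e^{-4\psi_0}$, since $e^{-4\psi_0}$ is smooth and its presence only contributes harmless lower-order factors and extra terms handled by \eqref{assumption.0}--\eqref{assumption.1}; thus the same telescoping works verbatim.

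The main obstacle is purely bookkeeping rather than analytic: one must make sure the signs and the $\sqrt{-\det g}$ weights line up so that the two frozen-coefficient tails produced by the two separate reduction chains are \emph{exactly} the two terms in the Section~\ref{sec:commute.beta.combined} Proposition — in particular that the $(\rd_t-\bt^m_{c,\alp}\rd_m)\rd_k$ ordering matches, and that the constant prefactors $\frac{b_{c,\alp}e^{2\gamma_{c,\alp}}}{N_{c,\alp}^2}$ and $b_{c,\alp}$ emerge correctly from $\frac{e^{-2\gamma_0}}{N_0}\mathrm{dVol}_{g_0}=\ud x$ and from freezing $\gamma_0,N_0$. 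I would also double-check that the ``error'' terms discarded at each reduction step are $o(1)$ uniformly (not merely term-by-term in $\alp$), which is guaranteed because each cited proposition already states convergence to $0$ for the full sum over $\alp$. Once this matching is verified, the proof is a one-line assembly: ``Combine Propositions~\ref{prop:reduction.beta.main.1}, \ref{prop:reduction.beta.main.2}, \ref{prop:reduction.beta.main.3}, \ref{prop:reduction.beta.hard.1}, \ref{prop:reduction.beta.hard.2} with the Proposition of Section~\ref{sec:commute.beta.combined}, and observe that the $\om$-versions hold identically.''
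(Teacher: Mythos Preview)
Your proposal is correct and matches the paper's approach exactly: the paper presents Proposition~\ref{prop:beta.commute.final} as a summary (``We summarize below what we have obtained in this subsection'') with no separate proof, the implicit argument being precisely the telescoping of Propositions~\ref{prop:reduction.beta.main.1}--\ref{prop:reduction.beta.main.3}, \ref{prop:reduction.beta.hard.1}--\ref{prop:reduction.beta.hard.2}, and the unlabeled proposition of Section~\ref{sec:commute.beta.combined}. Your sign-and-prefactor bookkeeping is accurate, and your remark about the $\om$-analogue (smooth weight $e^{-4\psi_0}$ being harmless) is the right justification.
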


\subsection{Taking limits using the microlocal defect measures}\label{sec:commutator.terms.mdm}

Let us summarize what we have obtained so far. The following is an immediate consequence of Propositions~\ref{prop:gamma.commute.final}, \ref{prop:N.commute.final} and \ref{prop:beta.commute.final}:
\begin{proposition}\label{prop:Box.commute.almost}
\begin{equation*}
\begin{split}
&\: \int_{\mathbb R^{2+1}} \f{(e_0)_n(\chi\psi_n)}{N_n} \Big( \Box_{g_n,A} (\chi\psi_n)- \f{1}{\sqrt{-\det g_n}}A(\sqrt{-\det g_n}\Box_{g_n} (\chi\psi_n)) \Big) \,\mathrm{dVol}_{g_n} \\
&\: \quad - \int_{\mathbb R^{2+1}} [\rd_i (\chi\psi_n)] \de^{ij} N_n \Big\{ [(e_0)_n,A](\f{\rd_j(\chi\psi_n)}{N_n}) \Big\} \,\ud x \\
&\: - \int_{\mathbb R^{2+1}} \f{(e_0)_0 (\chi\psi_n)}{N_0} \Big( \Box_{g_0,A} (\chi\psi_n) - \f{1}{\sqrt{-\det g_0}} A (\sqrt{-\det g_0}\Box_{g_0} (\chi\psi_n)) \Big) \,\mathrm{dVol}_{g_0} \\
&\: \quad + \int_{\mathbb R^{2+1}} [\rd_i (\chi\psi_n)] \de^{ij} N_0 \Big\{[(e_0)_0,A](\f{\rd_j(\chi\psi_n)}{N_0}) \Big\} \,\ud x \to 0.
\end{split}
\end{equation*}
A similar statement holds after replacing $\psi \rightsquigarrow \om$ and $\ud x\rightsquigarrow \f 14 e^{-4\psi_0} \ud x$.
\end{proposition}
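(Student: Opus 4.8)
The plan is to reduce the statement to the three convergence results already established in Sections~\ref{sec:commutator.gamma}--\ref{sec:commutator.beta}, by bookkeeping. Recall from the computation at the beginning of Section~\ref{sec:elliptic.wave.tri} that $\sqrt{-\det g_n}\big(\Box_{g_n,A}(\chi\psi_n)-\f{1}{\sqrt{-\det g_n}}A(\sqrt{-\det g_n}\Box_{g_n}(\chi\psi_n))\big)$ equals the sum \eqref{eq:main.term.commutator.gamma}$+$\eqref{eq:main.term.commutator.N}$+$\eqref{eq:main.term.commutator.beta}, and that $\f{\mathrm{dVol}_{g_n}}{\sqrt{-\det g_n}}=\ud x$, so that
\begin{equation*}
\int_{\mathbb R^{2+1}}\f{(e_0)_n(\chi\psi_n)}{N_n}\Big(\Box_{g_n,A}(\chi\psi_n)-\f{1}{\sqrt{-\det g_n}}A(\sqrt{-\det g_n}\Box_{g_n}(\chi\psi_n))\Big)\,\mathrm{dVol}_{g_n}=\int_{\mathbb R^{2+1}}\f{(e_0)_n(\chi\psi_n)}{N_n}\big(\mbox{\eqref{eq:main.term.commutator.gamma}}+\mbox{\eqref{eq:main.term.commutator.N}}+\mbox{\eqref{eq:main.term.commutator.beta}}\big)\,\ud x,
\end{equation*}
and likewise for $g_0$ with $\gamma_n,N_n,\bt^i_n$ replaced by $\gamma_0,N_0,\bt^i_0$. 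I would therefore split the expression in the statement into three groups according to whether the term comes from \eqref{eq:main.term.commutator.gamma}, \eqref{eq:main.term.commutator.N}, or \eqref{eq:main.term.commutator.beta}, and I would assign the ``$\mathrm{hard}$'' term $\mp\int[\rd_i(\chi\psi_n)]\de^{ij}N_\bullet\{[(e_0)_\bullet,A](\f{\rd_j(\chi\psi_n)}{N_\bullet})\}\,\ud x$ to the third ($\beta$) group; this is precisely the grouping for which the signs match.

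With this done, the $\gamma$-group is, up to an overall sign (since \eqref{eq:main.term.commutator.gamma} equals $-\{\rd_t[e^{2\gamma_n}A(\cdot)]-A\rd_t[e^{2\gamma_n}(\cdot)]\}$), exactly the difference shown to tend to $0$ in Proposition~\ref{prop:gamma.commute.final}; the $N$-group is exactly the difference treated in Proposition~\ref{prop:N.commute.final}; and the $\beta$-plus-hard group is exactly the combination shown to tend to $0$ in Proposition~\ref{prop:beta.commute.final}. Adding the three limits gives the claim. The analogous statement with $\psi\mapsto\om$, $\psi_0\mapsto\om_0$ and $\ud x\mapsto\f14 e^{-4\psi_0}\,\ud x$ follows verbatim: in the $\om$ energy identity \eqref{eq:energy.id.n.4} the factor $e^{-4\psi_0}$ enters only as a fixed smooth weight multiplying the same operator-commutator structure, and each of Propositions~\ref{prop:gamma.commute.final}, \ref{prop:N.commute.final}, \ref{prop:beta.commute.final} already carries exactly that $\psi\mapsto\om$, $\ud x\mapsto\f14 e^{-4\psi_0}\,\ud x$ extension.

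I do not expect any genuine obstacle at this stage: all the analytic content has been discharged in the preceding subsections --- the elliptic-improved estimates of Section~\ref{sec:dmetric} (Propositions~\ref{prop:spatial.imp} and \ref{prop:dtgamma.imp}), the coefficient-freezing apparatus of Section~\ref{sec:freeze.coeff}, and, most importantly, the Fourier-space cancellations behind the elliptic-wave trilinear compensated compactness (the algebraic identity \eqref{expression.for.sum.of.t.frequency}, exploited in Proposition~\ref{prop:main.N.term} and in the $\beta$-analogue of Section~\ref{sec:commute.beta.combined}). The only delicate points in assembling the proof are verifying the overall sign and the $\f{1}{\sqrt{-\det g_\bullet}}$ normalization when identifying each of the three groups with its corresponding proposition, and confirming that the ``$\mathrm{hard}$'' term is placed in the $\beta$-group with the sign for which Proposition~\ref{prop:beta.commute.final} applies as stated.
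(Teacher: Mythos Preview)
Your proposal is correct and matches the paper's own proof, which simply states that the proposition is an immediate consequence of Propositions~\ref{prop:gamma.commute.final}, \ref{prop:N.commute.final} and \ref{prop:beta.commute.final}. Your write-up supplies the bookkeeping details (the sign check on the $\gamma$-group and the assignment of the ``$\mathrm{hard}$'' term to the $\beta$-group) that the paper leaves implicit, but the approach is identical.
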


In other words, we have reduced the computation of the limit of the first two lines to that of the limit of the last two lines. To proceed, we use Corollary~\ref{cor:nu} to compute the limit of the last two lines on the LHS as $n\to +\infty$. This will be achieved in the next two propositions.

\begin{proposition}\label{prop:Box.commute.almost.1}
\begin{equation*}
\begin{split}
&\: \int_{\mathbb R^{2+1}} [\rd_i (\chi\psi_n)] \de^{ij} N_0 \Big\{[(e_0)_0,A](\f{\rd_j(\chi\psi_n)}{N_0}) \Big\} \,\ud x \\
\to &\: \int_{\mathbb R^{2+1}} [\rd_i (\chi\psi_0)] \de^{ij} N_0 \Big\{[(e_0)_0,A](\f{\rd_j(\chi\psi_0)}{N_0}) \Big\} \,\ud x \\
&\: + \int_{S^*\mathbb R^{2+1}} \Big[\de^{ij} \xi_i\xi_j (\rd_{x^t}a - \bt^k_0 \rd_{x^k}a) + \de^{ij} \xi_i\xi_j (\rd_\mu \bt^k_0)\xi_k \rd_{\xi_\mu} a \Big]\, \f{e^{-2\gamma_0}}{N_0}\f{d\nu^\psi}{|\xi|^2}.
\end{split}
\end{equation*}
A similar statement holds after changing $\psi \rightsquigarrow \om$ and $\ud\nu^\psi\rightsquigarrow e^{-4\psi_0} \,\ud \nu^\om$.
\end{proposition}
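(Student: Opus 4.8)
The plan is to recognize the left-hand side as $\int_{\mathbb R^{2+1}} [\rd_i(\chi\psi_n)]\de^{ij} N_0 \,B(\rd_j(\chi\psi_n))\,\ud x$ for a suitable \emph{single} $0$-th order pseudo-differential operator $B$, and then apply Corollary~\ref{cor:nu}. Indeed, since $(e_0)_0 = \rd_t - \bt^k_0\rd_k$ and $\rd_t$, $\rd_k$ commute with $\widetilde m(\f 1i\nabla)$, we have $[(e_0)_0, A] = [-\bt^k_0\rd_k, A] + [\rd_t, A]$, and using $A = b(x)\widetilde m(\f 1i\nabla)$, one computes that $N_0^{-1}[(e_0)_0, A]\circ(N_0^{-1}\,\cdot\,)$ composed appropriately is a $0$-th order operator. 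More precisely, set $B:= N_0[(e_0)_0, A]\circ M_{1/N_0}$ where $M_{1/N_0}$ is multiplication by $1/N_0$; by Lemma~\ref{lem:PSIDOs}.1--3, $B$ is a $0$-th order pseudo-differential operator (the commutator $[(e_0)_0,A]$ is order $0$ by Lemma~\ref{lem:PSIDOs}.2 since $(e_0)_0$ is a first-order differential operator with smooth coefficients and $A$ is order $0$). Thus the left side equals $\int [\rd_i(\chi\psi_n)]\de^{ij} B(\rd_j(\chi\psi_n))\,\ud x$ up to terms that vanish in the limit (commuting $N_0$ past operators, which costs order $-1$ and hence is handled by Lemma~\ref{lem:PSIDOs}.5 after passing to a subsequence — exactly as in the proof of Proposition~\ref{prop:dsigma.is.symmetric}).

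Next I would compute the principal symbol of $B$. The principal symbol of $[(e_0)_0, A]$ is $-i\{(i\xi_t - i\bt^k_0\xi_k),\, a\}$ where $a(x,\xi) = b(x)m(\xi)$; expanding the Poisson bracket $\{p,a\} = \rd_{\xi_\mu}p\,\rd_{x^\mu}a - \rd_{x^\mu}p\,\rd_{\xi_\mu}a$ with $p = i(\xi_t - \bt^k_0\xi_k)$ gives principal symbol $(\rd_{x^t}a - \bt^k_0\rd_{x^k}a) + (\rd_\mu\bt^k_0)\xi_k\rd_{\xi_\mu}a$ for $[(e_0)_0, A]$. Conjugating by $N_0$ (which is lower order and does not affect the principal symbol, but the $N_0$-prefactor multiplies it) and accounting for the $N_0$ in $\de^{ij} N_0[\cdots](\f{\cdot}{N_0})$, the operator appearing is $\de^{ij} N_0 [(e_0)_0,A](\f{\cdot}{N_0})$ which to leading order has the $N_0$ factors cancelling in the symbol up to lower order, so the full bilinear form has principal symbol (in the sense of Corollary~\ref{cor:nu}, where $\ud x = \f{e^{-2\gamma_0}}{N_0}\,\mathrm{dVol}_{g_0}$) given by $\de^{ij}\xi_i\xi_j\,[(\rd_{x^t}a - \bt^k_0\rd_{x^k}a) + (\rd_\mu\bt^k_0)\xi_k\rd_{\xi_\mu}a]\cdot\f{e^{-2\gamma_0}}{N_0}$.

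Then I would invoke Corollary~\ref{cor:nu}: since $[(e_0)_0,A]$ (composed with the relevant multiplications) is a $0$-th order pseudo-differential operator with real principal symbol supported in $S^*\Omega$ — for which one uses Proposition~\ref{prop:real.output} and the evenness/reality reductions of Section~\ref{sec:reduction} to ensure the output stays real, and one may need to symmetrize the bilinear form to bring it to the exact shape of \eqref{eq:cor.nu.1} — the sequence $\int [\rd_i(\chi\psi_n)]\de^{ij}B(\rd_j(\chi\psi_n))\,\mathrm{dVol}_{g_0}$ converges, after passing to a subsequence, to $\int [\rd_i(\chi\psi_0)]\de^{ij}B(\rd_j(\chi\psi_0))\,\mathrm{dVol}_{g_0} + \int_{S^*\mathbb R^{2+1}} (\text{principal symbol})\,\xi_i\xi_j\,\f{\ud\nu^\psi}{|\xi|^2}$, which is exactly the claimed right-hand side. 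The statement for $\om$ is identical, with $\ud\nu^\psi$ replaced by $e^{-4\psi_0}\,\ud\nu^\om$; here one carries along the smooth weight $e^{-4\psi_0}$, which is harmless since it is an $n$-independent smooth multiplier and hence commutes into everything up to lower-order (compact) errors and merely reweights the measure via Corollary~\ref{cor:nu}.

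The main obstacle I anticipate is \textbf{bookkeeping the commutators between $N_0$-multiplications and $A$ carefully enough to be sure that all discarded remainders are genuinely order $-1$} (hence negligible in the limit by Lemma~\ref{lem:PSIDOs}.5), while simultaneously tracking the contribution of $[\partial_t, A]$ and the $x$-derivatives of $\bt^k_0$ so that the principal symbol comes out exactly as $\de^{ij}\xi_i\xi_j[(\rd_{x^t}a - \bt^k_0\rd_{x^k}a) + (\rd_\mu\bt^k_0)\xi_k\rd_{\xi_\mu}a]\cdot\f{e^{-2\gamma_0}}{N_0}$ and not some symmetrization of it that differs by lower-order terms. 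A secondary subtlety is that $\rd_i(\chi\psi_n)$ is only the test sequence in Corollary~\ref{cor:nu} when we write the integrand in the form $\langle A u_n, u_n\rangle$ with $u_n = \rd(\chi\psi_n)$; verifying that $\de^{ij}N_0[(e_0)_0,A](\f{\rd_j(\chi\psi_n)}{N_0})$ can be rewritten as (a matrix-valued $0$-th order operator) acting on $\rd(\chi\psi_n)$ — so that Theorem~\ref{thm:existenceMDM}/Corollary~\ref{cor:nu} applies — requires writing $\f{\rd_j(\chi\psi_n)}{N_0} = \f 1{N_0}\rd_j(\chi\psi_n)$ and absorbing $\f 1{N_0}$ as a (smooth, $n$-independent) zeroth-order multiplier; this is routine but must be done explicitly.
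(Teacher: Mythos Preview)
Your proposal is correct and follows essentially the same approach as the paper: compute the principal symbol of $[(e_0)_0,A]$ via the Poisson bracket (your formula $(\rd_{x^t}a - \bt^k_0\rd_{x^k}a) + (\rd_\mu\bt^k_0)\xi_k\rd_{\xi_\mu}a$ is exactly what the paper obtains), then invoke Corollary~\ref{cor:nu}. Your concerns about commuting the $N_0$-multiplications and about rewriting the integrand in the form required by Corollary~\ref{cor:nu} are legitimate but entirely routine (Lemma~\ref{lem:PSIDOs}.1 handles compositions with smooth multipliers at the level of principal symbols), and the paper simply absorbs them without comment into the two-line proof.
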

\begin{proof}
By Lemma~\ref{lem:PSIDOs}.2,  $[(e_0)_0,A]$ is a $0$-th order pseudo-differential symbol with principal symbol
$$-i \{ i(\xi_t - \bt^k_0 \xi_k), a\} = \rd_{x^t} a - \bt^k_0 \rd_{x^k} a + (\rd_\mu \bt^k_0)\xi_k \rd_{\xi_\mu} a.$$

The conclusion therefore follows from Corollary~\ref{cor:nu} {and that $\sqrt{-\det g_0} = e^{2\gamma} N$ (by \eqref{g.form})}. \qedhere
\end{proof}

\begin{proposition}\label{prop:Box.commute.almost.2}
\begin{equation*}
\begin{split}
&\: \int_{\mathbb R^{2+1}} \f{(e_0)_0 (\chi\psi_n)}{N_0} \Big(\Box_{g_0,A} (\chi\psi_n) - \f{1}{\sqrt{-\det g_0}} A (\sqrt{-\det g_0}\Box_{g_0} (\chi\psi_n)) \Big) \,\mathrm{dVol}_{g_0} \\
\to &\: \int_{\mathbb R^{2+1}} \f{(e_0)_0 (\chi\psi_0)}{N_0} \Big(\Box_{g_0,A} (\chi\psi_0) - \f{1}{\sqrt{-\det g_0}} A (\sqrt{-\det g_0}\Box_{g_0} (\chi\psi_0))\Big) \,\mathrm{dVol}_{g_0}\\
&\: + \int_{S^*\mathbb R^{2+1}} \f 1{N_0} (\xi_t-\bt^k_0\xi_k) \Big[ (g_0^{-1})^{\mu\nu} \xi_\mu (\rd_{x^\nu} a) - \rd_\mu (g_0^{-1})^{\alp\bt} \xi_\alp \xi_\bt (\rd_{\xi_\mu} a) \Big] \, \f{\ud \nu^\psi}{|\xi|^2}\\
&\:  + \int_{S^*\mathbb R^{2+1}} \f{e^{-2\gamma_0} (\rd_\mu \bt^k_0) {(\rd_{\xi_\mu} a)}}{N_0} \de^{ij} \xi_i \xi_j \xi_k \, \f{\ud \nu^\psi}{|\xi|^2}.
\end{split}
\end{equation*}
A similar statement holds after changing $\psi \rightsquigarrow \om$ and $\ud\nu^\psi\rightsquigarrow e^{-4\psi_0} \,\ud \nu\om$.
\end{proposition}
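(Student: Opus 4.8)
The plan is to follow the template of Proposition~\ref{prop:Box.commute.almost.1}: we recognise the left-hand side as a bilinear expression in $\rd(\chi\psi_n)$ with a pseudo-differential operator inserted, and then invoke Corollary~\ref{cor:nu}.

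First I would show that the operator $B\phi:=\Box_{g_0,A}\phi-\f{1}{\sqrt{-\det g_0}}A(\sqrt{-\det g_0}\,\Box_{g_0}\phi)$ is a pseudo-differential operator of order $\le 1$. Indeed, $\Box_{g_0,A}$, $A\Box_{g_0}$ and $\f{1}{\sqrt{-\det g_0}}A\sqrt{-\det g_0}\Box_{g_0}$ all share the same (second order) principal symbol $-a(x,\xi)(g_0^{-1})^{\alp\bt}\xi_\alp\xi_\bt$ --- the first by comparing \eqref{def:Box.g0} with \eqref{eq:Box.g0}, and the last because conjugation by the smooth positive function $\sqrt{-\det g_0}$ does not change the principal symbol (Lemma~\ref{lem:PSIDOs}.1) --- so $B$, being a difference of operators with the same order $2$ symbol, is of order $\le 1$. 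Equivalently, $\sqrt{-\det g_0}\,B(\chi\psi_n)$ equals the sum of the $g_0$-analogues of \eqref{eq:main.term.commutator.gamma}, \eqref{eq:main.term.commutator.N} and \eqref{eq:main.term.commutator.beta}; each of these is of the form $\rd_\mu(c_1A(c_2\rd_\nu(\chi\psi_n)))-A\rd_\mu(c_1c_2\rd_\nu(\chi\psi_n))$ with $c_1,c_2$ smooth metric coefficients, and expanding the outer derivative and using Lemma~\ref{lem:PSIDOs}.1--\ref{lem:PSIDOs}.3 exhibits it as a zeroth order operator applied to $\rd(\chi\psi_n)$ plus an operator of order $-1$ applied to a second derivative of $\chi\psi_n$. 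Writing $B=\sum_\gamma B^\gamma\rd_\gamma+C$ with $B^\gamma$, $C$ of order $0$, and $\f{(e_0)_0}{N_0}=\f{1}{N_0}\rd_t-\f{\bt^i_0}{N_0}\rd_i$, the integral in the Proposition becomes a finite sum of terms $\int_{\mathbb R^{2+1}}\rd_\alp(\chi\psi_n)\,[A'\rd_\gamma(\chi\psi_n)]\,\mathrm{dVol}_{g_0}$ (where $A'$ is the zeroth order operator obtained by multiplying $B^\gamma$ by a smooth function, and $\alp\in\{t,1,2\}$) plus a remainder term $\int_{\mathbb R^{2+1}}\f{(e_0)_0(\chi\psi_n)}{N_0}\,C(\chi\psi_n)\,\mathrm{dVol}_{g_0}$.

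For the remainder term I would use that $\chi\psi_n\to\chi\psi_0$ strongly in $L^2$ (as $\psi_n\to\psi_0$ uniformly on compact sets and $\chi$ is compactly supported), hence $C(\chi\psi_n)\to C(\chi\psi_0)$ strongly in $L^2$; since also $\f{(e_0)_0(\chi\psi_n)}{N_0}\rightharpoonup\f{(e_0)_0(\chi\psi_0)}{N_0}$ weakly in $L^2$, this term converges to its $n=0$ analogue with \emph{no} defect-measure contribution. For each bilinear term, Corollary~\ref{cor:nu} (after, if needed, splitting into the even and odd parts in $\xi$ of the symbol of $A'$) gives the $n=0$ analogue plus $\int_{S^*\mathbb R^{2+1}}\sigma(A')\,\xi_\alp\xi_\gamma\,\f{\ud\nu^\psi}{|\xi|^2}$; summing all the product-of-weak-limits pieces reassembles precisely $\int_{\mathbb R^{2+1}}\f{(e_0)_0(\chi\psi_0)}{N_0}(\Box_{g_0,A}(\chi\psi_0)-\f{1}{\sqrt{-\det g_0}}A(\sqrt{-\det g_0}\Box_{g_0}(\chi\psi_0)))\,\mathrm{dVol}_{g_0}$.

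It remains to identify the sum of the defect-measure integrands with the claimed expression. Since $\sum_\gamma B^\gamma\rd_\gamma$ and $B$ have the same order $1$ principal symbol $b_1(x,\xi)$, and $\f{(e_0)_0}{N_0}$ has symbol $\f{i(\xi_t-\bt^k_0\xi_k)}{N_0}$, the sum of the defect terms equals $\int_{S^*\mathbb R^{2+1}}\f{\xi_t-\bt^k_0\xi_k}{N_0}\cdot\f{b_1(x,\xi)}{i}\,\f{\ud\nu^\psi}{|\xi|^2}$. Computing $b_1$ from \eqref{def:Box.g0}--\eqref{eq:Box.g0} (equivalently, from Lemma~\ref{lem:PSIDOs}.2 applied to the $g_0$-analogues of \eqref{eq:main.term.commutator.gamma}--\eqref{eq:main.term.commutator.beta}) and simplifying on the support of $\ud\nu^\psi$ via $\f{1}{N_0^2}(\xi_t-\bt^k_0\xi_k)^2=e^{-2\gamma_0}\de^{ij}\xi_i\xi_j$ (Proposition~\ref{prop:psiom.localized}) should yield exactly $\f 1{N_0}(\xi_t-\bt^k_0\xi_k)[(g_0^{-1})^{\mu\nu}\xi_\mu\rd_{x^\nu}a-\rd_\mu(g_0^{-1})^{\alp\bt}\xi_\alp\xi_\bt\rd_{\xi_\mu}a]+\f{e^{-2\gamma_0}(\rd_\mu\bt^k_0)}{N_0}\de^{ij}\xi_i\xi_j\xi_k$. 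This symbol bookkeeping --- keeping track of all lower order contributions and the various $\bt_0$-terms produced by the shift in the gauge \eqref{g.form} --- is the main, albeit purely mechanical, obstacle. Finally, the statement for $\om$ is proven identically: the smooth weight $e^{-4\psi_0}$ can be absorbed into the symbol $b(x)$ of $A$ (or simply carried along, $\psi_0$ being smooth) and plays no further role.
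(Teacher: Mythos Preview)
Your proposal is correct and follows essentially the same approach as the paper: recognise the difference $\Box_{g_0,A}-\f{1}{\sqrt{-\det g_0}}A\sqrt{-\det g_0}\Box_{g_0}$ as a first-order operator, pair with $\f{(e_0)_0}{N_0}$, and apply Corollary~\ref{cor:nu} so that the defect contribution is governed by the order-$1$ principal symbol. The paper makes explicit the ``purely mechanical'' step you defer: it splits $\sqrt{-\det g_0}B$ into the three $g_0$-analogues of \eqref{eq:main.term.commutator.gamma}--\eqref{eq:main.term.commutator.beta}, computes the zeroth-order symbol of each commutator via Lemma~\ref{lem:PSIDOs}.2 (giving six pieces $\mathrm I$--$\mathrm{VI}$), and then in its Steps~2--3 regroups $\mathrm I+\mathrm{III}+\mathrm V$ into $\f{1}{N_0}(\xi_t-\bt^k_0\xi_k)(g_0^{-1})^{\mu\nu}\xi_\mu\rd_{x^\nu}a$ and, using the null-cone identity from Proposition~\ref{prop:psiom.localized}, regroups $\mathrm{II}+\mathrm{IV}+\mathrm{VI}$ into $-\f{1}{N_0}(\xi_t-\bt^k_0\xi_k)\rd_\mu(g_0^{-1})^{\alp\bt}\xi_\alp\xi_\bt\rd_{\xi_\mu}a$ plus the extra $\f{e^{-2\gamma_0}(\rd_\mu\bt^k_0)}{N_0}\de^{ij}\xi_i\xi_j\xi_k$ term. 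Your parenthetical about splitting into even/odd parts is unnecessary (Corollary~\ref{cor:nu} only requires the symbol to be real), and your handling of the $\om$ case is fine since the smooth weight $e^{-4\psi_0}$ simply rides along.
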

\begin{proof}
\pfstep{Step~1: Computing the limit using Corollary~\ref{cor:nu}}
We compare each of the terms in $\Box_{g_0,A}$ and $\Box_{g_0}$ (cf.~definitions in Section~\ref{sec:def.boxes}). 

By Lemma~\ref{lem:PSIDOs}.2, $[-\rd_t e^{2\gamma_0}, A]$ {(to be understood as $[-\rd_t e^{2\gamma_0}, A] h := -\rd_t (e^{2\gamma_0} A h) + A \rd_t (e^{2\gamma_0} h)$, similarly below)} is a $0$-th order pseudo-differential symbol with principal symbol
$$-i\{ -i\xi_t e^{2\gamma_0}, a \} = - e^{2\gamma_0} \rd_{x^t} a + \xi_t (\rd_{x^\mu} e^{2\gamma_0})(\rd_{\xi_\mu}a).$$
It follows from Corollary~\ref{cor:nu} {(and that $\sqrt{-\det g_0} = e^{2\gamma} N$)} that
\begin{equation*}
\begin{split}
&\: \int_{\mathbb R^{2+1}} \f{(e_0)_0 (\chi\psi_n)}{N_0} \Big\{ - \rd_t [e^{2\gamma_0} A(\f{(e_0)_0(\chi\psi_n)}{N_0})] + A\rd_t [e^{2\gamma_0} (\f{(e_0)_0(\chi\psi_n)}{N_0})] \Big\} \, \ud x \\
\to &\: \int_{\mathbb R^{2+1}} \f{(e_0)_0 (\chi\psi_0)}{N_0} \Big\{ - \rd_t [e^{2\gamma_0} A(\f{(e_0)_0(\chi\psi_0)}{N_0})] + A\rd_t [e^{2\gamma_0} (\f{(e_0)_0(\chi\psi_0)}{N_0})] \Big\} \, \ud x \\
&\: + \int_{S^*\mathbb R^{2+1}} \Big[ \underbrace{- \f 1{N_0^3}(\xi_t - \bt^k_0\xi_k)^2(\rd_{x^t} a)}_{=:\mathrm{I}} + \underbrace{\f {e^{-2\gamma_0}}{N_0^3}\xi_t (\xi_t - \bt^k_0\xi_k)^2(\rd_{x^\mu} e^{2\gamma_0})(\rd_{\xi_\mu}a)}_{=:\mathrm{II}} \Big]\, \f{\ud\nu^\psi}{|\xi|^2}.
\end{split}
\end{equation*}

By Lemma~\ref{lem:PSIDOs}.2, $[\rd_i {N_0^2}, A]$ is a $0$-th order pseudo-differential symbol with principal symbol
$$-i\{ i\xi_i N_0^2, a\} = N_0^2 \rd_{x^i} a - \xi_i (\rd_{x^\mu} N_0^2)(\rd_{\xi_\mu} a).$$
It follows from Corollary~\ref{cor:nu} that
\begin{equation*}
\begin{split}
&\: \int_{\mathbb R^{2+1}} \f{(e_0)_0 (\chi\psi_n)}{N_0} \de^{ij} \Big\{ \rd_i [N_0^2 A(\f{\rd_j(\chi\psi_n)}{N_0})] - A\rd_i [N_0 \rd_j(\chi\psi_n)] \Big\} \, \ud x \\
\to &\: \int_{\mathbb R^{2+1}} \f{(e_0)_0 (\chi\psi_0)}{N_0} \de^{ij} \Big\{ \rd_i [N_0^2 A(\f{\rd_j(\chi\psi_0)}{N_0})] - A\rd_i [N_0 \rd_j(\chi\psi_0)] \Big\} \, \ud x \\
&\: + \int_{S^*\mathbb R^{2+1}} \Big[\underbrace{\f{e^{-2\gamma_0}}{N_0}\de^{ij}(\xi_t-\bt^k_0\xi_k) \xi_j(\rd_{x^i} a)}_{=:\mathrm{III}} \underbrace{- \f{e^{-2\gamma_0}}{N_0^3} \de^{ij} (\xi_t-\bt^k_0\xi_k) \xi_i \xi_j (\rd_{x^\mu} N_0^2)(\rd_{\xi_\mu} a)}_{=:\mathrm{IV}} \Big] \, \f{\ud\nu^\psi}{|\xi|^2}.
\end{split}
\end{equation*}

Finally, by Lemma~\ref{lem:PSIDOs}.2, $[\rd_i e^{2\gamma_0}\bt_0^i, A]$ is a $0$-th order pseudo-differential symbol with principal symbol
$$-i \{i\xi_i e^{2\gamma_0}\bt_0^i, a \} = e^{2\gamma_0}\bt_0^i (\rd_{x^i} a) - \xi_i(\rd_{x^\mu} (e^{2\gamma_0}\bt_0^i) )(\rd_{\xi_\mu} a) = e^{2\gamma_0}\bt_0^i (\rd_{x^i} a) - \xi_i((\rd_{x^\mu} e^{2\gamma_0})\bt_0^i + e^{2\gamma_0} (\rd_{x^\mu}\bt_0^i))(\rd_{\xi_\mu} a).$$
Therefore, by Corollary~\ref{cor:nu},
\begin{equation*}
\begin{split}
&\: \int_{\mathbb R^{2+1}} \f{(e_0)_0 (\chi\psi_n)}{N_0}  \{ \rd_i [e^{2\gamma_0}\bt_0^i A(\f{(e_0)_0(\chi\psi_n)}{N_0})] - A\rd_i [e^{2\gamma_0}\bt_0^i (e_0)_0(\chi\psi_n)] \, \ud x \\
\to &\: \int_{\mathbb R^{2+1}} \f{(e_0)_0 (\chi\psi_0)}{N_0}  \{ \rd_i [e^{2\gamma_0}\bt_0^i A(\f{(e_0)_0(\chi\psi_0)}{N_0})] - A\rd_i [e^{2\gamma_0}\bt_0^i (e_0)_0(\chi\psi_0)] \, \ud x \\
&\: + \int_{S^*\mathbb R^{2+1}} [\underbrace{\f{\bt_0^i}{N_0^3}(\xi_t-\bt^k_0\xi_k)^2  (\rd_{x^i} a)}_{=:\mathrm{V}} \underbrace{- \f{e^{-2\gamma_0}}{N_0^3} (\xi_t-\bt^k_0\xi_k)^2 \xi_i  ((\rd_{x^\mu} e^{2\gamma_0})\bt_0^i + e^{2\gamma_0} (\rd_{x^\mu}\bt_0^i))(\rd_{\xi_\mu} a)}_{=:\mathrm{VI}} ]\, \f{\ud\nu^\psi}{|\xi|^2}.
\end{split}
\end{equation*}

\pfstep{Step~2: Computing $(g_0^{-1})^{\mu\nu} \xi_\mu (\rd_{x^\nu} a)$}
\begin{equation}\label{eq:spatial.transport.a}
\begin{split}
 (g_0^{-1})^{\mu\nu} \xi_\mu (\rd_{x^\nu} a) 
 = &\: - \f 1{N_0^2} (\xi_t - \bt^k_0 \xi_k) (\rd_{x^t} a - \bt^i_0 \rd_{x^i} a) + e^{-2\gamma_0} \de^{ij} \xi_i \rd_{x^j} a.
\end{split}
\end{equation}
Therefore,
\begin{equation*}
\begin{split}
&\: \f 1{N_0} (\xi_t-\bt^k_0\xi_k) (g_0^{-1})^{\mu\nu} \xi_\mu (\rd_{x_\nu}\ a) \\
= &\: - \f 1{N_0^3} (\xi_t - \bt^k_0 \xi_k)^2 (\rd_{x^t} a - \bt^i_0 \rd_{x^i} a) + \f {e^{-2\gamma_0}}{N_0}   (\xi_t - \bt^k_0 \xi_k) \de^{ij} \xi_i (\rd_{x^j} a) = \mathrm{I} + \mathrm{III} + \mathrm{V}.
\end{split}
\end{equation*}

\pfstep{Step~3: Computing $\rd_\mu (g_0^{-1})^{\alp\bt} \xi_\alp \xi_\bt (\rd_{\xi_\mu} a)$}
\begin{equation}\label{eq:transport.second.term}
\begin{split}
&\: \rd_\mu (g_0^{-1})^{\alp\bt} \xi_\alp \xi_\bt (\rd_{\xi_\mu} a) \\
= &\: \left(- \rd_\mu (\f 1{N_0^2}) \xi_t \xi_t  + 2 \rd_\mu (\f {\bt^i_0}{N_0^2}) \xi_t \xi_i  + \rd_{\mu} (e^{-2\gamma_0} \de^{ij} - \f{\bt^i_0\bt^j_0}{N_0^2}) \xi_i \xi_j \right)(\rd_{\xi_\mu} a) \\
= &\: \left(\f{\rd_\mu N_0^2}{N_0^4} (\xi_t - \bt^k_0 \xi_k)^2 + 2 \f{(\rd_\mu \bt^i_0)}{N_0^2} (\xi_t - \bt_0^k \xi_k) \xi_i + (\rd_\mu e^{-2\gamma_0}) \de^{ij} \xi_i \xi_j \right) (\rd_{\xi_\mu} a) \\
= &\: \left(\f{\rd_\mu N_0^2}{N_0^4} (\xi_t - \bt^k_0 \xi_k)^2 + 2 \f{(\rd_\mu \bt^i_0)}{N_0^2} (\xi_t - \bt_0^k \xi_k) \xi_i - e^{-4\gamma_0} (\rd_\mu e^{2\gamma_0}) \de^{ij} \xi_i \xi_j \right) (\rd_{\xi_\mu} a).
\end{split}
\end{equation}

Recall that \underline{on the support of $\ud \nu^\psi$}, \eqref{eq:transport.second.term.null} holds.
Hence, by \eqref{eq:transport.second.term} and \eqref{eq:transport.second.term.null}, \underline{on the support of $\ud \nu^\psi$},
\begin{equation*}
\begin{split}
&\: \f 1{N_0} (\xi_t-\bt^k_0\xi_k) \rd_\mu (g_0^{-1})^{\alp\bt} \xi_\alp \xi_\bt (\rd_{\xi_\mu} a) \\
= &\: \left(\f{\rd_\mu N_0^2}{N_0^5} (\xi_t - \bt^k_0 \xi_k)^3 + 2 \f{(\rd_\mu \bt^i_0)}{N_0^3} (\xi_t - \bt_0^k \xi_k)^2 \xi_i - \f{e^{-4\gamma_0} (\rd_\mu e^{2\gamma_0})}{N_0} (\xi_t - \bt_0^k \xi_k) \de^{ij} \xi_i \xi_j \right) (\rd_{\xi_\mu} a) \\
= &\: \left(\f{e^{-2\gamma_0} (\rd_\mu N_0^2)}{N_0^3} (\xi_t - \bt^k_0 \xi_k) \de^{ij} \xi_i \xi_j + 2 \f{(\rd_\mu \bt^i_0)}{N_0^3} (\xi_t - \bt_0^k \xi_k)^2 \xi_i - \f{e^{-2\gamma_0} (\rd_\mu e^{2\gamma_0})}{N_0^3} (\xi_t - \bt_0^k \xi_k)^3 \right) (\rd_{\xi_\mu} a) \\
= &\: -\mathrm{II} - \mathrm{IV} - \mathrm{VI} + \f{(\rd_\mu \bt^i_0)}{N_0^3} (\xi_t - \bt_0^k \xi_k)^2 \xi_i {(\rd_{\xi_\mu} a)} =  -\mathrm{II} - \mathrm{IV} - \mathrm{VI} + \f{e^{-2\gamma_0} (\rd_\mu \bt^k_0)}{N_0} \de^{ij} \xi_i \xi_j \xi_k {(\rd_{\xi_\mu} a)}.
\end{split}
\end{equation*}

Combining Steps~1,2 and 3 yields the conclusion. \qedhere
\end{proof}

\subsection{Putting everything together}\label{sec:commutator.terms.final}

We summarize what we have obtained so far. Combining Propositions~\ref{prop:Box.commute.almost}, \ref{prop:Box.commute.almost.1} and \ref{prop:Box.commute.almost.2}, {and noting that the two terms of $\int_{S^*\mathbb R^{2+1}} \f{e^{-2\gamma_0} (\rd_\mu \bt^k_0)}{N_0} \de^{ij} \xi_i \xi_j \xi_k \, \f{\ud \nu^\psi}{|\xi|^2}$ cancel,} we immediately obtain

\begin{proposition}\label{prop:Box.commute}
Suppose $A = b(x) \widetilde{m}(\f 1i\nabla)$, {where the principal symbol $a(x,\xi) = b(x) m(\xi)$ (with $m(\xi) = \widetilde{m}(\xi)$ for $|\xi|\geq 1$) is real and supported in $T^*\Omega$, $m(\xi)$ is homogeneous of order $0$, and $m$ and $\widetilde{m}$ are both even.} Then
\begin{equation*}
\begin{split}
&\: \int_{\mathbb R^{2+1}} \f{(e_0)_n(\chi\psi_n)}{N_n} (\Box_{g_n,A} (\chi\psi_n)- \f{1}{\sqrt{-\det g_n}}A(\sqrt{-\det g_n}\Box_{g_n} (\chi\psi_n))) \,\mathrm{dVol}_{g_n} \\
&\: \quad - \int_{\mathbb R^{2+1}} [\rd_i (\chi\psi_n)] \de^{ij} N_n \{[(e_0)_n,A](\f{\rd_j(\chi\psi_n)}{N_n})\} \,\ud x \\
\to &\: + \int_{\mathbb R^{2+1}} \f{(e_0)_0 (\chi\psi_0)}{N_0} (\Box_{g_0,A} (\chi\psi_0) - \f{1}{\sqrt{-\det g_0}} A (\sqrt{-\det g_0}\Box_{g_0} (\chi\psi_0))) \,\mathrm{dVol}_{g_0}\\
&\: -\int_{\mathbb R^{2+1}} [\rd_i (\chi\psi_0)] \de^{ij} N_0 \{[(e_0)_0,A](\f{\rd_j(\chi\psi_0)}{N_0})\} \,\ud x \\
&\: - \int_{S^*\mathbb R^{2+1}} \de^{ij} \xi_i\xi_j (\rd_{x^t}a - \bt^k_0 \rd_{x^k}a) \, \f{e^{-2\gamma_0}}{N_0}\f{d\nu^\psi}{|\xi|^2} \\
&\: + \int_{S^*\mathbb R^{2+1}} \f 1{N_0} (\xi_t-\bt^k_0\xi_k) [(g_0^{-1})^{\mu\nu} \xi_\mu (\rd_{x^\nu} a) - \rd_\mu (g_0^{-1})^{\alp\bt} \xi_\alp \xi_\bt (\rd_{\xi_\mu} a)] \, \f{\ud \nu^\psi}{|\xi|^2}.
\end{split}
\end{equation*}
A similar statement holds after replacing $\psi \rightsquigarrow \om$ and $\ud x\rightsquigarrow \f 14 e^{-4\psi_0} \ud x$.
\end{proposition}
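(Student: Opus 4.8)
The plan is to assemble the three results established in the preceding subsections; no new analytic input is needed. First I would apply Proposition~\ref{prop:Box.commute.almost}, which reduces the problem: the limit (as $n\to+\infty$, along the chosen subsequence) of
\begin{equation*}
\int_{\mathbb R^{2+1}} \f{(e_0)_n(\chi\psi_n)}{N_n} (\Box_{g_n,A} (\chi\psi_n)- \tfrac{1}{\sqrt{-\det g_n}}A(\sqrt{-\det g_n}\Box_{g_n} (\chi\psi_n))) \,\mathrm{dVol}_{g_n} - \int_{\mathbb R^{2+1}} [\rd_i (\chi\psi_n)] \de^{ij} N_n \{[(e_0)_n,A](\tfrac{\rd_j(\chi\psi_n)}{N_n})\} \,\ud x
\end{equation*}
coincides with the limit of the same expression with $g_n$, $N_n$, $(e_0)_n$ replaced everywhere by $g_0$, $N_0$, $(e_0)_0$ but with $\psi_n$ still inside. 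The value of this reduction is that $\Box_{g_0,A}$, $\Box_{g_0}$ and $(e_0)_0$ are smooth, $n$-independent operators, so these remaining limits can be read off directly from the microlocal defect measures via Corollary~\ref{cor:nu}.

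Next I would invoke Propositions~\ref{prop:Box.commute.almost.2} and \ref{prop:Box.commute.almost.1} to evaluate those two limits. Proposition~\ref{prop:Box.commute.almost.2} expresses the limit of the $\Box_{g_0,A}$-term as its $\psi_0$-analogue plus $\int_{S^*\mathbb R^{2+1}} \f 1{N_0} (\xi_t-\bt^k_0\xi_k) [(g_0^{-1})^{\mu\nu} \xi_\mu (\rd_{x^\nu} a) - \rd_\mu (g_0^{-1})^{\alp\bt} \xi_\alp \xi_\bt (\rd_{\xi_\mu} a)] \, \f{\ud \nu^\psi}{|\xi|^2}$ plus a leftover microlocal-defect-measure term carrying the factor $\f{e^{-2\gamma_0}(\rd_\mu\bt^k_0)}{N_0}\de^{ij}\xi_i\xi_j\xi_k$; Proposition~\ref{prop:Box.commute.almost.1} expresses the limit of the $[(e_0)_0,A]$-term as its $\psi_0$-analogue plus $\int_{S^*\mathbb R^{2+1}} [\de^{ij} \xi_i\xi_j (\rd_{x^t}a - \bt^k_0 \rd_{x^k}a) + \de^{ij} \xi_i\xi_j (\rd_\mu \bt^k_0)\xi_k \rd_{\xi_\mu} a ]\, \f{e^{-2\gamma_0}}{N_0}\f{d\nu^\psi}{|\xi|^2}$. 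Subtracting the second from the first: the $\psi_0$-terms reorganize into the $\psi_0$-terms claimed in the statement, the term $\int_{S^*\mathbb R^{2+1}}\de^{ij}\xi_i\xi_j(\rd_{x^t}a-\bt^k_0\rd_{x^k}a)\f{e^{-2\gamma_0}}{N_0}\f{d\nu^\psi}{|\xi|^2}$ survives with a minus sign, the $(g_0^{-1})$-term survives with a plus sign, and — this is the one point that must be checked — the two copies of the microlocal-defect-measure term with factor $\f{e^{-2\gamma_0}(\rd_\mu\bt^k_0)}{N_0}\de^{ij}\xi_i\xi_j\xi_k$ (one incoming with $+$ from Proposition~\ref{prop:Box.commute.almost.2}, one with $-$ from Proposition~\ref{prop:Box.commute.almost.1}) are literally the same integral and cancel. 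This is the elementary algebraic identity already recorded in the footnote, and what remains is exactly the asserted right-hand side.

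Finally, the statement with $\om$ in place of $\psi$ follows verbatim from the $\om$-versions of Propositions~\ref{prop:Box.commute.almost}, \ref{prop:Box.commute.almost.1} and \ref{prop:Box.commute.almost.2}, replacing $\ud x$ by $\f14 e^{-4\psi_0}\ud x$ throughout and $\ud\nu^\psi$ by $e^{-4\psi_0}\ud\nu^\om$. There is no genuine obstacle: all the real work — the elliptic-wave and three-wave trilinear compensated compactness, the freezing of coefficients, and the identification of the principal symbols of the commutators $[(e_0)_0,A]$, $[-\rd_t e^{2\gamma_0},A]$, $[\rd_i N_0,A]$, $[\rd_i e^{2\gamma_0}\bt_0^i,A]$ — has already been done in Sections~\ref{sec:commutator.gamma}--\ref{sec:commutator.terms.mdm}. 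The only mild care required is bookkeeping of signs and of which terms are genuinely common, so that the cancellation comes out exact; everything else is direct substitution.
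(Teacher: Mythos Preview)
Your proposal is correct and follows exactly the paper's approach: combine Propositions~\ref{prop:Box.commute.almost}, \ref{prop:Box.commute.almost.1} and \ref{prop:Box.commute.almost.2}, and observe (as the paper notes in a footnote) that the two occurrences of the $\f{e^{-2\gamma_0}(\rd_\mu\bt^k_0)}{N_0}\de^{ij}\xi_i\xi_j\xi_k$ term cancel.
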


\section{The wave equation terms in Proposition~\ref{prop:energy.id.n.2} and trilinear compensated compactness for three waves}\label{sec:cc}

We continue to work under the assumptions of Theorem~\ref{thm:main} and the reductions in Sections~\ref{sec:compact.reduction} and \ref{sec:reduction}. As above, {let $A$ be a $0$-th order pseudo-differential operator given by $A = b(x) \widetilde{m}(\f 1i\nabla)$, where the principal symbol $a(x,\xi) = b(x) m(\xi)$ (with $m(\xi) = \widetilde{m}(\xi)$ for $|\xi|\geq 1$) is real and supported in $T^*\Omega$, $m(\xi)$ is homogeneous of order $0$, and $m$ and $\widetilde{m}$ are both even.}

In this section, we handle the terms $\mathrm{trilinear}_1$ and $\mathrm{trilinear}_2$ in \eqref{eq:energy.id.n.2} (and the analogous terms in \eqref{eq:energy.id.n.4}. There are two types of terms coming from two types of contribution from $F^\psi_n$ and $F^\om_n$. First, there are terms which are linear in the wave variables $\psi_n$ and $\om_n$ --- these terms are easier and will be handled in \textbf{Section~\ref{sec:limit.chi.terms}}. The remaining terms are nonlinear and will be treated in \textbf{Section~\ref{sec:limit.nonlinear.terms}}. In order to deal with the nonlinear terms, we will need a trilinear compensated compactness result for three waves, which will be established in \textbf{Section~\ref{subsec:cc}}.

\subsection{The linear terms in the wave equation}\label{sec:limit.chi.terms}

\begin{proposition}\label{prop:limit.chi.terms}
The following holds after passing to a subsequence (which we do not relabel):

\begin{equation*}
\begin{split}
&\: -\int_{\mathbb R^{2+1}} \f{(e_0)_n(\chi\psi_n)}{N_n}\{ A[\sqrt{-\det g_n}(2 g_n^{-1}(\ud \chi, \ud \psi_n) + \psi_n \Box_{g_n} \chi) ]\} \,{\ud x}\\
&\: -\int_{\mathbb R^{2+1}} A(\f{(e_0)_n(\chi\psi_n)}{N_n}) (2 g_n^{-1}(\ud \chi, \ud \psi_n) + \psi_n \Box_{g_n} \chi) \,\mathrm{dVol}_{g_n}\\
&\: -\f 14 \int_{\mathbb R^{2+1}} e^{-4\psi_0} \f{(e_0)_n(\chi\om_n)}{N_n} \{ A[\sqrt{-\det g_n}(2 (g_n^{-1})^{\alp\bt} \rd_\alp \chi\rd_\bt \om_n + \om_n \Box_{g_n} \chi) ]\} \,{\ud x}\\
&\: -\f 14 \int_{\mathbb R^{2+1}} e^{-4\psi_0} A(\f{(e_0)_n(\chi\om_n)}{N_n}) (2 g_n^{-1} (\ud \chi, \ud \om_n) + \om_n \Box_{g_n} \chi) \,\mathrm{dVol}_{g_n}\\
\to &\:  -\int_{\mathbb R^{2+1}} \f{(e_0)_0(\chi\psi_0)}{N_0}\{ A[\sqrt{-\det g_0} (2 g_0^{-1} (\ud \chi,\ud \psi_0) + \psi_0 \Box_{g_0} \chi)]\} \,{\ud x} \\
&\: -\int_{\mathbb R^{2+1}} A(\f{(e_0)_0(\chi\psi_0)}{N_0}) (2 g_0^{-1}(\ud \chi, \ud \psi_0) + \psi_0 \Box_{g_0} \chi) \,\mathrm{dVol}_{g_0} \\
&\: -\f 14 \int_{\mathbb R^{2+1}} e^{-4\psi_0} \f{(e_0)_0(\chi\om_0)}{N_0} \{ A[\sqrt{-\det g_0}(2 g_0^{-1} (\ud \chi, \ud \om_0) + \om_0 \Box_{g_0} \chi)] \} \,{\ud x} \\
&\: -\f 14 \int_{\mathbb R^{2+1}} e^{-4\psi_0} A(\f{(e_0)_0(\chi\om_0)}{N_0}) (2 g_0^{-1}(\ud \chi, \ud \om_0) + \om_0 \Box_{g_0} \chi) \,\mathrm{dVol}_{g_0}.
\end{split}
\end{equation*}

\end{proposition}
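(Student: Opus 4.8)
The strategy is the same pattern of ``replace $n$-dependent metric quantities by their limits, then take weak limits'' that was used for the $\mathrm{easy}$ terms in Section~\ref{sec:easy}, the difference being that here the quadratic structure is benign because the cutoff function $\chi$ is smooth and fixed. First I would deal with the factors of the form $g_n^{-1}(\ud\chi,\ud\psi_n)$, $g_n^{-1}(\ud\chi,\ud\om_n)$, $\psi_n\Box_{g_n}\chi$, $\om_n\Box_{g_n}\chi$: since $\chi$, $\rd\chi$ and all coefficients of $\Box_{g_n}\chi$ involve only the metric components and their first two derivatives, the bounds \eqref{assumption.0}, \eqref{assumption.1} together with Propositions~\ref{prop:spatial.imp} and \ref{prop:dtgamma.imp} (recall $\Box_{g_n}\chi$ contains at most $\rd^2\bt_n^i,\,\rd^2 N_n$ acting on the \emph{smooth} $\chi$, not on $\psi_n$) give that replacing $g_n\mapsto g_0$ in these factors produces an error tending to $0$ in $L^\i_{loc}$, hence (after multiplying by the $L^2$-bounded quantities $\rd\psi_n$, $\rd\om_n$, $\f{(e_0)_n(\chi\psi_n)}{N_n}$, etc.) an error tending to $0$. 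One also replaces the measure-theoretic weight $\sqrt{-\det g_n}\mapsto \sqrt{-\det g_0}$, $\mathrm{dVol}_{g_n}\mapsto\mathrm{dVol}_{g_0}$, $\f{(e_0)_n}{N_n}\mapsto\f{(e_0)_0}{N_0}$ inside every integral, again at the cost of $o(1)$ errors by the same estimates. At this stage every factor in every integrand except for the ``outer'' derivative of $\psi_n$ or $\om_n$ (namely $\f{(e_0)_0(\chi\psi_n)}{N_0}$ or $\rd_\alp(\chi\psi_n)$, etc.) and one ``inner'' factor $\rd_\bt(\chi\psi_n)$ (or $\psi_n$) is a fixed, $n$-independent, smooth function, possibly acted on by the fixed $0$-th order operator $A$ or $A^*$ which, by Lemma~\ref{lem:PSIDOs}.4, is bounded on $L^2$.

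Next I would take the weak limit. In the terms that are \emph{linear} in $\psi_n$ (respectively $\om_n$) — i.e.\ those with the factor $\psi_n\Box_{g_0}\chi$ and no derivative of $\psi_n$ — I simply use the locally uniform convergence $\psi_n\to\psi_0$, $\om_n\to\om_0$ from assumption (3) (the other factor being $A$ or $A^*$ applied to a quantity that converges weakly in $L^2$, or a fixed $L^2$ function). In the terms that are \emph{bilinear} in the derivatives of $\psi_n$ (or $\om_n$) — e.g.\ $A(\f{(e_0)_0(\chi\psi_n)}{N_0})\cdot 2g_0^{-1}(\ud\chi,\ud\psi_n)\sqrt{-\det g_0}$, or its $A$-transposed partner — one factor, namely $2g_0^{-1}(\ud\chi,\ud\psi_n)\cdot(\text{smooth})$ or $A(\text{smooth}\cdot\rd\psi_n)$, is the operator $A$ (or multiplication by a smooth function) applied to a sequence converging \emph{weakly} in $L^2_{loc}$, while the other factor $\rd(\chi\psi_n)$ or $A(\f{(e_0)_0(\chi\psi_n)}{N_0})$ also converges only weakly. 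A priori a product of two weakly convergent sequences need not converge to the product of the limits — but here the key point is that \emph{one} of the two factors, after pulling the operator off, is just $(\text{fixed smooth function})\times\rd_\bt(\chi\psi_n)$: I would move $A$ (or $A^*$) by adjointness onto the other factor, pick up a $[A,\cdot]$ commutator which by Lemmas~\ref{lem:PSIDOs}.2--3 is of order $-1$, hence by Lemma~\ref{lem:PSIDOs}.5 compact as a map $L^2\to L^2_{loc}$, so that commutator term converges strongly; and the remaining term reduces to an inner product of $\rd(\chi\psi_n)$ against a \emph{strongly} convergent sequence (a fixed smooth function times $A^*$ of a fixed smooth function) — this is exactly the situation in which weak convergence of $\rd(\chi\psi_n)$ suffices. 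Collecting all contributions gives the stated limit with $\psi_n\mapsto\psi_0$, $\om_n\mapsto\om_0$, $g_n\mapsto g_0$ throughout, and \emph{no} extra microlocal-defect-measure term — which is consistent with the fact that the defect measure only records the frequency-localized failure of strong convergence of $\rd\psi_n$, $\rd\om_n$, and here in each integral at least one factor is a fixed smooth function.

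The main obstacle, and the only place where some care is genuinely needed, is verifying that no microlocal-defect-measure contribution survives in the bilinear terms: one must make sure that after pulling $A$ (or $A^*$) onto a fixed smooth factor, what is left really is an $L^2$-\emph{strongly}-convergent sequence paired with a weakly convergent one, rather than two honestly weakly convergent sequences both carrying the oscillations of $\psi_n$. This is true because in $F^\psi_n$ the only $n$-dependence in the ``dangerous'' position enters through a \emph{single} factor $\ud\psi_n$ contracted against the fixed $\ud\chi$ (or through $\psi_n$ itself, which converges strongly); there is never a product $\rd\psi_n\cdot\rd\psi_n$ of two oscillating factors here — those sit in the genuinely nonlinear terms $g_n^{-1}(\ud\om_n,\ud\om_n)$, $g_n^{-1}(\ud\om_n,\ud\psi_n)$ treated separately in Section~\ref{sec:limit.nonlinear.terms}. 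Once this structural observation is in place, the proof is a routine combination of: (i) the $L^\i_{loc}$-convergence of metric quantities (assumptions (3)--(5), Propositions~\ref{prop:spatial.imp}, \ref{prop:dtgamma.imp}); (ii) $L^2$-boundedness of $A$ and $A^*$ (Lemma~\ref{lem:PSIDOs}.4); (iii) compactness of order $-1$ operators and commutators (Lemmas~\ref{lem:PSIDOs}.2, \ref{lem:PSIDOs}.3, \ref{lem:PSIDOs}.5); (iv) the Cauchy--Schwarz and H\"older inequalities; and (v) passing to a subsequence so that the relevant weak $L^2_{loc}$ limits exist. I would write it out term by term, grouping the four ``$\f{(e_0)_n(\chi\,\cdot\,)}{N_n}A[\ldots]$''-type terms with their four ``$A(\f{(e_0)_n(\chi\,\cdot\,)}{N_n})[\ldots]$''-type partners, and in each case reduce to the two mechanisms (strong convergence for the linear-in-$\psi_n$ pieces; adjointness-plus-compact-commutator for the bilinear pieces) described above.
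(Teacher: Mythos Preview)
Your general strategy (replace $n$-dependent metric quantities by their limits, then pass to weak limits) is sound, and your treatment of the $\psi_n\Box_{g_n}\chi$-type pieces via strong convergence of $\psi_n$ is correct in spirit. However, there is a genuine gap in your handling of the $g_n^{-1}(\ud\chi,\ud\psi_n)$-type pieces.

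You correctly identify these terms as bilinear in derivatives of $\psi_n$ --- the outer $(e_0)_n(\chi\psi_n)/N_n$ paired with the inner $\ud\psi_n$ --- but your proposed commutator-and-adjointness mechanism does not, as written, reduce this to a strong--weak pairing. After moving $A$ by adjointness and extracting a $[A^*,\text{smooth}]$ commutator (which is indeed order $-1$ and harmless), the remaining main term is \emph{still} a pairing of two weakly convergent factors; nowhere does one of them become ``$A^*$ of a fixed smooth function'' as you assert. Concretely, for $\int A(f_n)\cdot\phi\,\rd_\bt\psi_n\,\ud x$ with $f_n=(e_0)_0(\chi\psi_n)/N_0$ and $\phi$ smooth, taking the adjoint and commuting $\phi$ past $A^*$ leaves $\int f_n\cdot\phi\,A^*[\rd_\bt\psi_n]\,\ud x$, which is still weak-times-weak.

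The paper's proof rests on an observation you do not use: since $\chi\equiv 1$ on $\Omega$ and the symbol $a$ is supported in $S^*\Omega$, the $x$-supports of $\rd_\alp\chi$ and of $a$ are \emph{disjoint}. Hence $A\circ(\rd_\alp\chi)$ and $(\rd_\alp\chi)\circ A$ are pseudo-differential operators of order $\leq -1$, so compact $L^2\to L^2_{loc}$ by Lemmas~\ref{lem:PSIDOs}.1 and \ref{lem:PSIDOs}.5. It is this compactness --- not a generic commutator argument --- that makes $A[\sqrt{-\det g_n}\,g_n^{-1}(\ud\chi,\ud\psi_n)]$ converge \emph{strongly} in $L^2_{loc}$ after a subsequence, which then pairs safely with the weakly convergent outer factor. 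Put differently: the would-be defect-measure contribution is $\int_{S^*\mathbb R^{2+1}} a(x,\xi)\,\rd_\alp\chi(x)\,(\cdots)\,\f{\ud\nu^\psi}{|\xi|^2}$, which vanishes because $a\,\rd_\alp\chi\equiv 0$. Your claim that ``there is never a product $\rd\psi_n\cdot\rd\psi_n$'' is therefore misleading: the product is present, and it is the disjoint-support structure --- not the mere smoothness of $\ud\chi$ --- that kills its contribution.
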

\begin{proof}
We will only indicate how to obtain the limit of the terms on the first line; all the other terms can be treated similarly.

\pfstep{Step~1: First term on first line} Since $a$ (the symbol of $A$) and $\rd_\bt \chi$ have disjoint support, by Lemma~\ref{lem:PSIDOs}, $A \rd_\bt \chi: L^2\to L^2_{\mathrm{loc}}$ is compact. Therefore, using \eqref{assumption.0} and \eqref{assumption.1}, we see that after passing to a subsequence (not {relabeled})
\begin{equation*}
\| A[ g_n^{-1}(\ud \chi, \ud \psi_n) \sqrt{-\det g_n}] - A[ g_0^{-1} (\ud \chi, \ud \psi_0) \sqrt{-\det g_0}]\|_{L^2(\Omega')} \to 0.
\end{equation*}
On the other hand, by \eqref{assumption.0} and \eqref{assumption.1}, we know that $\f{(e_0)_n(\chi\psi_n)}{N_n} \rightharpoonup \f{(e_0)_0(\chi\psi_0)}{N_0}$ weakly in $L^2$. Therefore,
\begin{equation*}
\begin{split}
&\: -\int_{\mathbb R^{2+1}} \f{(e_0)_n(\chi\psi_n)}{N_n} A[\sqrt{-\det g_n} g_n^{-1} (\ud \chi, \ud \psi_n)]  \,{\ud x} \\
\to &\: -\int_{\mathbb R^{2+1}} \f{(e_0)_0(\chi\psi_0)}{N_0} A[\sqrt{-\det g_0} g_0^{-1} (\ud \chi, \ud \psi_0)]  \,{\ud x}.
\end{split}
\end{equation*}

\pfstep{Step~2: Second term on first line} By \eqref{assumption.0} and \eqref{assumption.1}, $\sqrt{-\det g_n} \psi_n \Box_{g_n} \chi \to \sqrt{-\det g_0} \psi_0 \Box_{g_0} \chi$ in the $L^2$ norm. The fact that $A$ is a $0${-}th order operator then implies that $A[\sqrt{-\det g_n} \psi_n \Box_{g_n} \chi ] \to A[\sqrt{-\det g_0} \psi_0 \Box_{g_0} \chi ]$ in the $L^2$ norm. Using also that $\f{(e_0)_n(\chi\psi_n)}{N_n} \rightharpoonup \f{(e_0)_0(\chi\psi_0)}{N_0}$ weakly in $L^2$ (by \eqref{assumption.0} and \eqref{assumption.1}), it thus follows that
\begin{equation*}
\begin{split}
-\int_{\mathbb R^{2+1}} \f{(e_0)_n(\chi\psi_n)}{N_n}\{ A[\sqrt{-\det g_n} \psi_n \Box_{g_n} \chi ]\} \,{\ud x} \to &\:  -\int_{\mathbb R^{2+1}} \f{(e_0)_0(\chi\psi_0)}{N_0}\{ A[\sqrt{-\det g_0} \psi_0 \Box_{g_0} \chi ]\} \,{\ud x}.
\end{split}
\end{equation*}

The other terms can be treated similarly; we omit the details. \qedhere
\end{proof}

\subsection{A general trilinear compensated compactness result}\label{subsec:cc}
\begin{proposition}\label{prop:trilinear.wave.0}
Let $\{\phi^{(1)}_n\}_{n=1}^{+\infty}$, $\{\phi^{(2)}_n\}_{n=1}^{+\infty}$ and $\{\phi^{(3)}_n\}_{n=1}^{+\infty}$ be three sequences of smooth functions with $\phi^{(i)}_n:\mathbb R^{2+1}\to \mathbb R$. Assume that for any (spacetime) compact set $K \subset \mathbb R^{2+1}$,	
\begin{enumerate}
\item $\max_i \sup_n (\|\rd \phi^{(i)}_n\|_{L^3(K)} + \|\Box_{g_0} \phi^{(i)}_n\|_{L^3(K)}) <+\infty,$
\item $\max_i \|\phi^{(i)}_n\|_{L^3(K)} \to 0$ as $n\to +\infty$.
\end{enumerate}

Then for any smooth vector field $X$, 
$$(X\phi^{(1)}_n) g_0^{-1}(\ud\phi^{(2)}_n, \ud \phi^{(3)}_n) \rightharpoonup 0$$ 
in the sense of distributions.
\end{proposition}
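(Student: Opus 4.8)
The plan is to prove the trilinear compensated compactness statement \eqref{prop:trilinear.wave.0} by exploiting the same divergence/product identity for $\Box_{g_0}$ that was used in Lemma~\ref{lem:limitwave} and in the model problem of the introduction, now applied twice rather than once. The starting point is the algebraic identity
\begin{equation*}
g_0^{-1}(\ud\phi^{(2)}_n, \ud\phi^{(3)}_n) = \f 12 \Box_{g_0}(\phi^{(2)}_n \phi^{(3)}_n) - \f 12 \phi^{(2)}_n (\Box_{g_0}\phi^{(3)}_n) - \f 12 \phi^{(3)}_n (\Box_{g_0}\phi^{(2)}_n).
\end{equation*}
Testing against $\vartheta \in C^\infty_c(\mathbb R^{2+1})$ and multiplying by $X\phi^{(1)}_n$, the last two terms are immediately controlled: by H\"older's inequality with exponents $3,3,3$ together with hypotheses (1) and (2), each of $\int \vartheta (X\phi^{(1)}_n)\phi^{(2)}_n(\Box_{g_0}\phi^{(3)}_n)$ and its counterpart tends to $0$, since $X\phi^{(1)}_n$ and $\Box_{g_0}\phi^{(3)}_n$ are bounded in $L^3$ (on the support of $\vartheta$) while $\phi^{(2)}_n \to 0$ in $L^3$. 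So it remains to treat $\int \vartheta (X\phi^{(1)}_n)\Box_{g_0}(\phi^{(2)}_n\phi^{(3)}_n)\,\ud x$.

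For this term I would integrate by parts twice to move $\Box_{g_0}$ off of the product $\phi^{(2)}_n\phi^{(3)}_n$ and onto $\vartheta (X\phi^{(1)}_n)$. Writing $\Box_{g_0}$ in divergence form as in \eqref{eq:Box.g0}, each integration by parts produces either a term in which a derivative falls on $\vartheta$ or on a coefficient of $g_0$ (harmless, since these are smooth and compactly localized by $\vartheta$, and $\phi^{(2)}_n\phi^{(3)}_n \to 0$ in $L^{3/2}$ by H\"older and (2)) or a term in which both derivatives land on $\vartheta(X\phi^{(1)}_n)$. The genuinely dangerous term is $\int (\phi^{(2)}_n\phi^{(3)}_n)\, \Box_{g_0}(\vartheta X\phi^{(1)}_n)\,\ud x$, where $\Box_{g_0}(\vartheta X\phi^{(1)}_n)$ involves $X\Box_{g_0}\phi^{(1)}_n$ and second derivatives of $\phi^{(1)}_n$ — neither of which is controlled by the hypotheses. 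The resolution is the same as in the introduction's "three waves" discussion: one does not expand $\Box_{g_0}(\vartheta X\phi^{(1)}_n)$ fully, but instead keeps the expression in the half-integrated form $\int \de^{(1)}(\vartheta X\phi^{(1)}_n)\,\de^{(2)}(\phi^{(2)}_n\phi^{(3)}_n)$ where exactly one derivative sits on each factor; then $\de(\phi^{(2)}_n\phi^{(3)}_n) = (\de\phi^{(2)}_n)\phi^{(3)}_n + \phi^{(2)}_n(\de\phi^{(3)}_n)$ is bounded in $L^{3/2}$ and tends to $0$ in, say, $L^{3/2-\ep}$ on compact sets (using $\rd\phi^{(i)}_n$ bounded in $L^3$ and $\phi^{(i)}_n \to 0$ in $L^3$), while $\de(\vartheta X\phi^{(1)}_n)$ — which does contain a genuine second derivative of $\phi^{(1)}_n$ — is bounded in... this is exactly where the obstruction lies, since we only assume $\rd\phi^{(1)}_n \in L^3$ and $\Box_{g_0}\phi^{(1)}_n\in L^3$, not $\rd^2\phi^{(1)}_n\in L^3$.

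The way around this last point, and the step I expect to be the main obstacle, is to avoid creating a full second derivative of $\phi^{(1)}_n$ altogether: one integrates by parts so that of the two $\Box_{g_0}$-derivatives, one is used to form $\Box_{g_0}\phi^{(1)}_n$ (via the wave operator structure, by recombining the $\rd_t$ and spatial pieces) and the other remains a single derivative. Concretely, after the double integration by parts one organizes the terms so that the only occurrence of two derivatives on $\phi^{(1)}_n$ is precisely $\Box_{g_0}\phi^{(1)}_n$ (up to lower-order terms and commutators with the smooth coefficients, which land on $\vartheta$), which is assumed bounded in $L^3$; all remaining factors carry at most one derivative on $\phi^{(1)}_n$. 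This is the same bookkeeping that makes the term \eqref{eq:intro.wave} in the introduction vanish ("After integrating by parts, the first term can be written as a term taking the form of the second terms plus $O(\lambda_n)$ error"), adapted to a general smooth background metric and a general vector field $X$ — so the heart of the matter is a careful, if unilluminating, integration-by-parts reorganization that never lets two derivatives pile onto $\phi^{(1)}_n$ except in the combination $\Box_{g_0}\phi^{(1)}_n$. Once this is arranged, every resulting integrand is a product of three factors, at least one of which is some $\phi^{(i)}_n$ (or $\phi^{(i)}_n\phi^{(j)}_n$) tending to $0$ in $L^3$ (resp.\ $L^{3/2}$) on compact sets, the others bounded in $L^3$ (resp.\ $L^3$ and $L^\infty$), so H\"older's inequality gives convergence to $0$, completing the proof. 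I would then record, as a corollary for the application, the version with non-trivial limits where $\phi^{(i)}_n \rightharpoonup \phi^{(i)}_0$: subtracting the analogous identity for the limits reduces it to the $\to 0$ case applied to the differences together with the already-established bilinear compensated compactness of Lemma~\ref{lem:limitwave}.
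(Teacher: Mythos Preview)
Your approach is the paper's, and your endpoint is right: after all integrations by parts each surviving integrand should be a triple product in $L^3\times L^3\times L^3$ with at least one undifferentiated factor $\phi^{(i)}_n\to 0$. But there is a genuine gap in how you get there. After moving $\Box_{g_0}$ onto $\vartheta X\phi^{(1)}_n$ by self-adjointness, the expansion is
\[
\Box_{g_0}(\vartheta X\phi^{(1)}_n) = (\Box_{g_0}\vartheta)(X\phi^{(1)}_n) + 2(g_0^{-1})^{\alp\bt}(\rd_\alp\vartheta)(\rd_\bt X\phi^{(1)}_n) + \vartheta X(\Box_{g_0}\phi^{(1)}_n) + \vartheta[\Box_{g_0}, X]\phi^{(1)}_n.
\]
You dismiss the last piece as a lower-order commutator, but $[\Box_{g_0}, X]$ is a \emph{second}-order differential operator (for a generic smooth vector field $X$), and the cross term $(\rd_\alp\vartheta)(\rd_\bt X\phi^{(1)}_n)$ likewise carries a genuine second derivative of $\phi^{(1)}_n$. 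Neither of these can be reorganized into $\Box_{g_0}\phi^{(1)}_n$ plus first-order remainders; the wave structure is of no help for them.

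The paper's fix is simple and does not use the wave structure for these two pieces at all: write both $[\Box_{g_0}, X]$ and $\rd_\bt X$ as finite sums $\sum_j Y_j Z_j$ of smooth vector fields, and integrate by parts \emph{once more} in $Y_j$ to move one derivative onto the product $\phi^{(2)}_n\phi^{(3)}_n$. This yields terms of the form $(Z_j\phi^{(1)}_n)(Y_j\phi^{(2)}_n)\phi^{(3)}_n$, $(Z_j\phi^{(1)}_n)\phi^{(2)}_n(Y_j\phi^{(3)}_n)$, and lower order, each a product of three $L^3$-bounded factors with at least one undifferentiated $\phi^{(j)}_n\to 0$. The remaining term $\vartheta X(\Box_{g_0}\phi^{(1)}_n)$ is handled by integrating by parts in $X$ (not in the wave operator), leaving $\Box_{g_0}\phi^{(1)}_n\in L^3$ against first-derivative factors. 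With this correction your sketch goes through exactly as the paper's proof does.
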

\begin{proof}
We write
$$g_0^{-1}(\ud\phi^{(2)}_n, \ud \phi^{(3)}_n) = \underbrace{\f 12 \Box_{g_0} (\phi^{(2)}_n \phi^{(3)}_n)}_{=:\mathrm{I}} - \underbrace{\f 12 (\Box_{g_0} \phi^{(2)}_n) \phi^{(3)}_n}_{=:\mathrm{II}} - \underbrace{\f 12 \phi^{(2)}_n (\Box_{g_0}\phi^{(3)}_n)}_{=:\mathrm{III}}.$$

By assumptions of the proposition and H\"older's inequality, $\mathrm{II}$ and $\mathrm{III}$ both converge to $0$ in the $L^{\f 32}$ norm as $n\to +\infty$. Together with the assumed uniform $L^3$-boundedness of ${X\phi^{(1)}_n}$, H\"older's inequality implies that $X\phi^{(1)}_n (\mathrm{II} + \mathrm{III})$ in fact converges to $0$ in the $L^1$ norm on any compact set.

It therefore remains to check the contribution from the term $\mathrm{I}$. Let $\vartheta \in C^\infty_c(\mathbb R^{2+1})$ be a smooth function with support $K$. We then compute
\begin{equation*}
\begin{split}
&\: \int_{\mathbb R^{2+1}} \vartheta (X\phi^{(1)}_n) \Box_{g_0} (\phi^{(2)}_n \phi^{(3)}_n) \,\mathrm{dVol}_{g_0}\\
= &\: \int_{\mathbb R^{2+1}} [(\Box_{g_0}\vartheta) (X\phi^{(1)}_n) + \vartheta (X\Box_{g_0}\phi^{(1)}_n) + \vartheta ([\Box_{g_0},X]\phi^{(1)}_n) + 2 (g^{-1}_0)^{\alp\bt}(\rd_{\alp}\vartheta) (\rd_{\bt} X\phi^{(1)}_n)] \phi^{(2)}_n \phi^{(3)}_n \,\mathrm{dVol}_{g_0}\\
=: &\: \mathrm{I_a} + \mathrm{I_b} + \mathrm{I_c} + \mathrm{I_d}.
\end{split}
\end{equation*}

To control $\mathrm{I_a}$ we note that $\vartheta$ is smooth and thus $\Box_{g_0}\vartheta$ is pointwise bounded on $K$. Thus using H\"older's inequality and the bounds in the assumptions of the proposition, we obtain
\begin{equation*}
|\mathrm{I_a}| \ls \|X\phi^{(1)}_n\|_{L^3(K)} \|\phi^{(2)}_n\|_{L^3(K)} \|\phi^{(3)}_n\|_{L^3(K)} \to 0.
\end{equation*} 
For $\mathrm{I_b}$, we integrate by parts to obtain
\begin{equation*}
\begin{split}
\mathrm{I_b} = &\: \int_{K} [-(X\vartheta) (\Box_{g_0}\phi^{(1)}_n) \phi^{(2)}_n \phi^{(3)}_n - \vartheta (\Box_{g_0}\phi^{(1)}_n) (X\phi^{(2)}_n) \phi^{(3)}_n]\\
&\: -\int_{K}  [\vartheta (\Box_{g_0}\phi^{(1)}_n) \phi^{(2)}_n (X\phi^{(3)}_n) + \vartheta (\mathrm{div}_{g_0} X)(\Box_{g_0}\phi^{(1)}_n) \phi^{(2)}_n \phi^{(3)}_n] \,\mathrm{dVol}_{g_0}.
\end{split}
\end{equation*}
Since $g_0$, $X$ and $\vartheta$ are smooth, by H\"older's inequality and the bounds in the assumptions of the proposition, we obtain
\begin{equation*}
|\mathrm{I_b}|\ls \|\Box_{g_0}\phi^{(1)}_n\|_{L^3(K)} (\|\phi^{(2)}_n\|_{L^3(K)} \|\phi^{(3)}_n\|_{L^3(K)}+ \|X\phi^{(2)}_n\|_{L^3(K)} \|\phi^{(3)}_n\|_{L^3(K)}+ \|\phi^{(2)}_n\|_{L^3(K)} \|X\phi^{(3)}_n\|_{L^3(K)}) \to 0.
\end{equation*} 
Note that $[\Box_{g_0}, X]$ is a smooth second order differential operator that can be written as a finite sum $\sum_j Y_jZ_j$ for some smooth vector fields $Y_j$ and $Z_j$. Therefore we can treat $\mathrm{I_c}$ and $\mathrm{I_d}$ simultaneously by bounding a term of the form
$$\int_{K} \varsigma (YZ\phi^{(1)}_n) \phi^{(2)}_n \phi^{(3)}_n \,\mathrm{dVol}_{g_0}$$ 
for some smooth function $\varsigma$ and smooth vector fields $Y$ and $Z$. We integrate by parts and then use H\"older's inequality and the smoothness of $\varsigma$ and $Y$ to show that 
\begin{equation*}
\begin{split}
&\: \left| \int_{K} \varsigma (YZ\phi^{(1)}_n) \phi^{(2)}_n \phi^{(3)}_n \,\mathrm{dVol}_{g_0} \right| \\
\ls &\: \|Z \phi^{(1)}_n\|_{L^3(K)} (\|\phi^{(2)}_n\|_{L^3(K)} \|\phi^{(3)}_n\|_{L^3(K)}+ \|Y\phi^{(2)}_n\|_{L^3(K)} \|\phi^{(3)}_n\|_{L^3(K)}+ \|\phi^{(2)}_n\|_{L^3(K)} \|Y\phi^{(3)}_n\|_{L^3}) \to 0.
\end{split}
\end{equation*}
This shows that $\mathrm{I_c},\,\mathrm{I_d}\to 0$ and finishes the proof. \qedhere
\end{proof}

We next compute the limits in a similar setting but instead with $\phi^{(i)}_n$ converging to a potentially non-zero $\phi^{(i)}_0$.
\begin{proposition}\label{prop:trilinear}
Let $\{\phi^{(1)}_n\}_{n=1}^{+\infty}$, $\{\phi^{(2)}_n\}_{n=1}^{+\infty}$ and $\{\phi^{(3)}_n\}_{n=1}^{+\infty}$ be three sequences of smooth functions. Assume that there exist smooth $\phi^{(i)}_0:\mathbb R^{2+1}\to \mathbb R$ so that for every compact subset $K\subset \mathbb R^{2+1}$,
\begin{enumerate}
\item $\max_i \sup_n (\|\rd (\phi^{(i)}_n-\phi^{(i)}_0)\|_{L^3(K)} + \|\Box_{g_0} (\phi^{(i)}_n-\phi^{(i)}_0)\|_{L^3(K)}) <+\infty,$
\item $\max_i \|\phi^{(i)}_n-\phi^{(i)}_0\|_{L^3(K)} \to 0$ as $n\to +\infty$.
\end{enumerate}

Let $\vartheta \in C^\infty_c(\mathbb R^{2+1})$. Then
\begin{equation*}
\begin{split}
&\: \int_{\mathbb R^{2+1}} \vartheta (X\phi^{(1)}_n) g_0^{-1}(\ud \phi^{(2)}_n, \ud \phi^{(3)}_n) \,\mathrm{dVol}_{g_0} - \int_{\mathbb R^{2+1}} \vartheta (X\phi^{(1)}_n) g_0^{-1}(\ud \phi^{(2)}_0, \ud \phi^{(3)}_n) \,\mathrm{dVol}_{g_0}\\
&\: - \int_{\mathbb R^{2+1}} \vartheta (X\phi^{(1)}_n) g_0^{-1}(\ud \phi^{(2)}_n,\ud \phi^{(3)}_0) \,\mathrm{dVol}_{g_0}\\
\to  &\: -\int_{\mathbb R^{2+1}} \vartheta (X\phi^{(1)}_0) g_0^{-1}(\ud \phi^{(2)}_0,\ud \phi^{(3)}_0) \,\mathrm{dVol}_{g_0}.
\end{split}
\end{equation*}
\end{proposition}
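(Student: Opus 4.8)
The plan is to expand the trilinear expression $(X\phi^{(1)}_n)\,g_0^{-1}(\ud\phi^{(2)}_n,\ud\phi^{(3)}_n)$ around the limits by writing $\phi^{(i)}_n = \phi^{(i)}_0 + (\phi^{(i)}_n - \phi^{(i)}_0)$ in each slot and sorting the resulting terms by how many factors carry the difference $\phi^{(i)}_n - \phi^{(i)}_0$. Since $g_0^{-1}(\cdot,\cdot)$ is bilinear in its two arguments, we have the algebraic identity
\begin{equation*}
\begin{split}
&\: (X\phi^{(1)}_n)\,g_0^{-1}(\ud\phi^{(2)}_n,\ud\phi^{(3)}_n) - (X\phi^{(1)}_n)\,g_0^{-1}(\ud\phi^{(2)}_0,\ud\phi^{(3)}_n) - (X\phi^{(1)}_n)\,g_0^{-1}(\ud\phi^{(2)}_n,\ud\phi^{(3)}_0) \\
= &\: (X\phi^{(1)}_n)\,g_0^{-1}(\ud(\phi^{(2)}_n-\phi^{(2)}_0),\ud(\phi^{(3)}_n-\phi^{(3)}_0)) - (X\phi^{(1)}_n)\,g_0^{-1}(\ud\phi^{(2)}_0,\ud\phi^{(3)}_0).
\end{split}
\end{equation*}
So the left-hand side of the claimed convergence, after integrating against $\vartheta$, splits into a ``genuinely trilinear difference'' piece and a piece that is linear in the differences (through $X\phi^{(1)}_n$ only).

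First I would dispose of the genuinely trilinear piece $\int \vartheta (X\phi^{(1)}_n)\,g_0^{-1}(\ud(\phi^{(2)}_n-\phi^{(2)}_0),\ud(\phi^{(3)}_n-\phi^{(3)}_0))\,\mathrm{dVol}_{g_0}$. Here I want to apply Proposition~\ref{prop:trilinear.wave.0} with the three sequences $\widetilde{\phi}^{(1)}_n := \phi^{(1)}_n$ and $\widetilde{\phi}^{(i)}_n := \phi^{(i)}_n - \phi^{(i)}_0$ for $i=2,3$. The hypotheses (1) and (2) of Proposition~\ref{prop:trilinear.wave.0} for $\widetilde{\phi}^{(2)}_n$ and $\widetilde{\phi}^{(3)}_n$ are exactly the assumptions (1) and (2) of the present proposition. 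For $\widetilde{\phi}^{(1)}_n = \phi^{(1)}_n$, hypothesis (1) holds since $\rd\phi^{(1)}_n = \rd\phi^{(1)}_0 + \rd(\phi^{(1)}_n-\phi^{(1)}_0)$ and $\Box_{g_0}\phi^{(1)}_n = \Box_{g_0}\phi^{(1)}_0 + \Box_{g_0}(\phi^{(1)}_n - \phi^{(1)}_0)$ are each a fixed smooth function (locally bounded in $L^3$) plus a uniformly $L^3$-bounded sequence; so $\phi^{(1)}_n$ satisfies hypothesis (1) of Proposition~\ref{prop:trilinear.wave.0} even though it does \emph{not} satisfy hypothesis (2). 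But inspecting the proof of Proposition~\ref{prop:trilinear.wave.0}, hypothesis (2) (the $L^3$-smallness) is only ever used on the \emph{second and third} factors, never on the first; the first factor $X\phi^{(1)}_n$ enters only through its uniform $L^3$-bound. Hence the conclusion $(X\phi^{(1)}_n)\,g_0^{-1}(\ud\widetilde{\phi}^{(2)}_n,\ud\widetilde{\phi}^{(3)}_n)\rightharpoonup 0$ still holds, so that term integrated against $\vartheta$ tends to $0$. (If one prefers not to re-inspect that proof, one can equivalently write $X\phi^{(1)}_n = X\phi^{(1)}_0 + X(\phi^{(1)}_n - \phi^{(1)}_0)$ and apply Proposition~\ref{prop:trilinear.wave.0} twice, once with first factor $\phi^{(1)}_n - \phi^{(1)}_0$ — which does satisfy hypothesis (2) — and once noting that for the smooth fixed $X\phi^{(1)}_0$ the statement is just the bilinear compensated compactness of Lemma~\ref{lem:limitwave} applied to the null form $g_0^{-1}(\ud\widetilde\phi^{(2)}_n, \ud\widetilde\phi^{(3)}_n)$.)

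It then remains to identify the limit of the remaining piece $-\int \vartheta (X\phi^{(1)}_n)\,g_0^{-1}(\ud\phi^{(2)}_0,\ud\phi^{(3)}_0)\,\mathrm{dVol}_{g_0}$. Here $g_0^{-1}(\ud\phi^{(2)}_0,\ud\phi^{(3)}_0)$ is a fixed smooth function, and $\vartheta\, \mathrm{dVol}_{g_0}$ times it is a fixed $L^1_{loc}$ (indeed smooth, compactly supported) density; since $X\phi^{(1)}_n = X\phi^{(1)}_0 + X(\phi^{(1)}_n - \phi^{(1)}_0)$ and, by assumption (1) together with assumption (2), $X(\phi^{(1)}_n - \phi^{(1)}_0)\rightharpoonup 0$ weakly in $L^3_{loc}$ (weak-$L^3$ convergence follows from the uniform $L^3$-bound on $\rd(\phi^{(1)}_n - \phi^{(1)}_0)$ together with the $L^3$-to-$0$ convergence of $\phi^{(1)}_n - \phi^{(1)}_0$ itself, via e.g. a density argument testing against $C^\infty_c$ and the Rellich-type compactness), this term converges to $-\int \vartheta (X\phi^{(1)}_0)\,g_0^{-1}(\ud\phi^{(2)}_0,\ud\phi^{(3)}_0)\,\mathrm{dVol}_{g_0}$, which is precisely the claimed right-hand side. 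Combining the two pieces gives the statement. The main obstacle is the first piece: one must be slightly careful that Proposition~\ref{prop:trilinear.wave.0} is being invoked with a first factor that fails the smallness hypothesis, and argue — either by re-reading its proof or by the splitting-and-re-applying device above — that this is harmless because smallness of the first factor was never needed.
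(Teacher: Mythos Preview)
Your proof is correct and rests on the same two ingredients as the paper's: Proposition~\ref{prop:trilinear.wave.0} applied to the trilinear-in-differences term, and elementary weak convergence for what remains. The difference is purely organizational: the paper applies Proposition~\ref{prop:trilinear.wave.0} to $(X(\phi^{(1)}_n-\phi^{(1)}_0))\,g_0^{-1}(\ud(\phi^{(2)}_n-\phi^{(2)}_0),\ud(\phi^{(3)}_n-\phi^{(3)}_0))$ with \emph{all three} factors replaced by differences (so hypothesis~(2) is satisfied on the nose), then expands the resulting eight terms and handles the mixed ones individually via weak convergence and Lemma~\ref{lem:limitwave}; you instead use the cleaner two-term identity $abc-ab_0c-abc_0=a(b-b_0)(c-c_0)-ab_0c_0$, which collapses the bookkeeping at the cost of having to argue (correctly, and your splitting device makes it rigorous without re-reading the earlier proof) that the first factor need not itself be small.
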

\begin{proof}
Using Proposition~\ref{prop:trilinear.wave.0} (with $\phi^{(i)}_n - \phi^{(i)}_0$ in place of $\phi^{(i)}_n$) and then expanding the terms,
\begin{equation*}
\begin{split}
0=&\: \int_{\mathbb R^{2+1}} \vartheta (X(\phi^{(1)}_n-\phi^{(1)}_0)) g_0^{-1}(\ud (\phi^{(2)}_n-\phi^{(2)}_0), \ud (\phi^{(3)}_n-\phi^{(3)}_0)) \,\mathrm{dVol}_{g_0} \\
=&\: \int_{\mathbb R^{2+1}} \vartheta (X\phi^{(1)}_n) g_0^{-1}(\ud \phi^{(2)}_n, \ud \phi^{(3)}_n) \,\mathrm{dVol}_{g_0} - \int_{\mathbb R^{2+1}} \vartheta (X\phi^{(1)}_0) g_0^{-1}(\ud \phi^{(2)}_0, \ud \phi^{(3)}_0) \,\mathrm{dVol}_{g_0} \\
&\: + \underbrace{\int_{\mathbb R^{2+1}} \vartheta (X\phi^{(1)}_n) g_0^{-1}(\ud \phi^{(2)}_0, \ud \phi^{(3)}_0) \,\mathrm{dVol}_{g_0}}_{=:\mathrm{I}} + \underbrace{\int_{\mathbb R^{2+1}} \vartheta (X\phi^{(1)}_0) g_0^{-1}(\ud \phi^{(2)}_n, \ud \phi^{(3)}_0) \,\mathrm{dVol}_{g_0}}_{=:\mathrm{II}} \\
&\: + \underbrace{\int_{\mathbb R^{2+1}} \vartheta (X\phi^{(1)}_0) g_0^{-1}(\ud \phi^{(2)}_0,\ud \phi^{(3)}_n) \,\mathrm{dVol}_{g_0}}_{=:\mathrm{III}} \underbrace{-\int_{\mathbb R^{2+1}} \vartheta (X\phi^{(1)}_0) g_0^{-1}(\ud \phi^{(2)}_n,\ud \phi^{(3)}_n) \,\mathrm{dVol}_{g_0}}_{=:\mathrm{IV}} \\
&\: - \int_{\mathbb R^{2+1}} \vartheta (X\phi^{(1)}_n) g_0^{-1}(\ud \phi^{(2)}_0, \ud \phi^{(3)}_n) \,\mathrm{dVol}_{g_0} - \int_{\mathbb R^{2+1}} \vartheta (X\phi^{(1)}_n) g_0^{-1}(\ud \phi^{(2)}_n, \ud \phi^{(3)}_0) \,\mathrm{dVol}_{g_0}.
\end{split}
\end{equation*}

Note that each of $\mathrm{I}$, $\mathrm{II}$ and $\mathrm{III}$ has at most one factor depending on $n$. Since our assumptions easily imply that $\rd \phi^{(i)}_n$ converges weakly to $\rd \phi^{(i)}_0$ in $L^3$ (for each $i$), we have
$$\mathrm{I}+ \mathrm{II} + \mathrm{III} \to 3\int_{\mathbb R^{2+1}} \vartheta (X\phi^{(1)}_0) g_0^{-1}(\ud \phi^{(2)}_0, \ud \phi^{(3)}_0) \,\mathrm{dVol}_{g_0}.$$

Next, we apply Lemma~\ref{lem:limitwave} with $p_0 = 3$. Noting that since $L^3(K)\subset L^2(K)$ and $L^3(K)\subset L^{\f 32}(K)$ (for any compact set $K$), by Lemma~\ref{lem:limitwave}, $g_0^{-1}(\ud \phi^{(2)}_n, \ud \phi^{(3)}_n)$ converges to $g_0^{-1}(\ud \phi^{(2)}_0, \ud \phi^{(3)}_0)$ in the sense of distributions. Hence,
\begin{equation*}
\begin{split}
\mathrm{IV}
\to &\: -\int_{\mathbb R^{2+1}} \vartheta (X\phi^{(1)}_0) g_0^{-1}(\ud \phi^{(2)}_0, \ud \phi^{(3)}_0) \,\mathrm{dVol}_{g_0}.
\end{split}
\end{equation*}

Finally, rearranging yields the conclusion. \qedhere
\end{proof}

\subsection{Computation of the remaining terms using trilinear compensated compactness}\label{sec:limit.nonlinear.terms}

We now look at the contributions in $F^\psi_n$ and $F^\om_n$ which are nonlinear in the derivatives of $\psi_n$ and $\om_n$. There are four relevant terms. For these terms, we need the trilinear compensated compactness in Section~\ref{subsec:cc}.
\begin{proposition}\label{prop:limit.nonlinear.terms}
The following holds after passing to a subsequence (which we do not relabel):
\begin{equation*}
\begin{split}
&\: \f 12 \int_{\mathbb R^{2+1}} A(\f{(e_0)_n(\chi\psi_n)}{N_n}) \chi e^{-4\psi_n} g^{-1}_n (\ud \om_n, \ud \om_n)  \,\mathrm{dVol}_{g_n} \\
&\:- \int_{\mathbb R^{2+1}} e^{-4\psi_0} A(\f{(e_0)_n(\chi\om_n)}{N_n}) \chi g^{-1}_n (\ud \om_n, \ud \psi_n) \,\mathrm{dVol}_{g_n}\\
&\: + \f 12 \int_{\mathbb R^{2+1}} \f{(e_0)_n (\chi\psi_n)}{N_n} A[\sqrt{-\det g_n} \chi e^{-4\psi_n} g^{-1}_n (\ud \om_n, \ud \om_n)]  \,\ud x \\
&\: - \int_{\mathbb R^{2+1}} e^{-4\psi_0} \f{(e_0)_n(\chi\om_n)}{N_n} A[\sqrt{-\det g_n}\chi g^{-1}_n (\ud \om_n, \ud \psi_n)] \,\ud x\\
\to &\: \mbox{corresponding terms on the RHS of \eqref{eq:energy.id.0.2} and \eqref{eq:energy.id.0.4}}\\
&\: -2\int_{S^*\mathbb R^{2+1}} e^{-4\psi_0}  \f{(g^{-1}_0)^{\alp\bt} (\rd_\bt \psi_0)}{N_0} (\xi_t - \bt^k_0 \xi_k)\xi_\alp a  \, \f{\ud \nu^{\om}}{|\xi|^2}.
\end{split}
\end{equation*}
\end{proposition}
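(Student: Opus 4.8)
\textbf{Proof plan for Proposition~\ref{prop:limit.nonlinear.terms}.}

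The strategy is to isolate the four nonlinear terms, combine them in pairs so that the Lagrangian structure of the wave map system produces a cancellation, and then apply the trilinear compensated compactness results of Section~\ref{subsec:cc}. First I would observe that, since $a$ is supported in $T^*\Omega$, $A$ has a symbol supported away from $\mathrm{supp}(\rd \chi)$, so in each of the four terms we may freely replace $\chi\psi_n$, $\chi\om_n$ by localized versions and insert cutoffs $\widetilde{\chi}$ (from \eqref{def:tildechi}) next to the metric coefficients and next to $\psi_n$, $\om_n$, at the cost of errors tending to zero by \eqref{assumption.0}, \eqref{assumption.1}. Then, using \eqref{assumption.0} and the local uniform convergence $g_n\to g_0$, $\psi_n\to\psi_0$ (assumption (3)), one reduces all four terms to the same expressions with $g_n$, $e^{-4\psi_n}$ replaced by $g_0$, $e^{-4\psi_0}$ up to $o(1)$; this is the same kind of ``freezing the slowly-varying factors'' argument used throughout Sections~\ref{sec:easy}--\ref{sec:elliptic.wave.tri}, and I would not reproduce the details. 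After this reduction the task is to compute the limit of
$$
\tfrac 12 \int A(\tfrac{(e_0)_0(\chi\psi_n)}{N_0}) \chi e^{-4\psi_0} g_0^{-1}(\ud\om_n,\ud\om_n)\,\mathrm{dVol}_{g_0}
- \int e^{-4\psi_0} A(\tfrac{(e_0)_0(\chi\om_n)}{N_0})\chi g_0^{-1}(\ud\om_n,\ud\psi_n)\,\mathrm{dVol}_{g_0}
$$
plus the two analogous terms with $A$ moved onto the second factor.

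Next I would apply Proposition~\ref{prop:trilinear} (with $p_0=3$, which is available since $\rd\psi_n$, $\rd\om_n$ are bounded in $L^\infty_{loc}$ hence in $L^3_{loc}$, and $\Box_{g_0}$ applied to them is bounded in $L^3_{loc}$ by \eqref{assumption.0}--\eqref{assumption.2} and Lemma~\ref{lem:Box.psi.decomp}, while $\psi_n-\psi_0,\om_n-\om_0\to 0$ in $L^3_{loc}$). The key point is that $A(\tfrac{(e_0)_0(\chi\psi_n)}{N_0})$ does \emph{not} directly fit the form $X\phi^{(1)}_n$ since $A$ is a nonlocal operator; so I would first write $A(\tfrac{(e_0)_0(\chi\psi_n)}{N_0}) = \tfrac 1{N_0}A((e_0)_0(\chi\psi_n)) + [A,\tfrac 1{N_0}]((e_0)_0(\chi\psi_n))$, note the commutator is order $-1$ hence (by Lemma~\ref{lem:PSIDOs}.5, passing to a subsequence) its image of the bounded sequence $(e_0)_0(\chi\psi_n)$ converges strongly in $L^2_{loc}$, which combined with the weak $L^2$ (in fact distributional, via Lemma~\ref{lem:limitwave}) convergence of $g_0^{-1}(\ud\om_n,\ud\om_n)$ handles that piece; and for the main piece I would further commute $A$ past $(e_0)_0$ (again an order-$0$ commutator whose contribution I track, or absorb into error after pairing against the distributionally-convergent null form) so as to land on an honest differential-operator coefficient. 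In effect, since $g_0^{-1}(\ud\om_n,\ud\om_n)$ and $g_0^{-1}(\ud\om_n,\ud\psi_n)$ are null forms and converge in the sense of distributions to their limits by Lemma~\ref{lem:limitwave}, the only surviving non-obvious contributions come from trilinear terms of the form $(X\phi^{(1)}_n)\,g_0^{-1}(\ud\phi^{(2)}_n,\ud\phi^{(3)}_n)$ with all three factors genuinely oscillating, to which Proposition~\ref{prop:trilinear} applies and gives the limit in terms of the products of the weak limits plus a microlocal defect term. The microlocal defect term is captured by Corollary~\ref{cor:nu}: pairing $A(\tfrac{(e_0)_0(\cdot)}{N_0})$ against $\rd_\alp$ of the other wave variable produces, in the limit, the expression $\int_{S^*\mathbb R^{2+1}} a\,\tfrac{(\xi_t-\bt^k_0\xi_k)}{N_0}\,\xi_\alp \,\cdot\,\tfrac{\ud\nu^{\psi\text{ or }\om}}{|\xi|^2}$ with the appropriate metric contractions.

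The crucial algebraic step is then to add the ``$\psi$-equation'' term (the one with $e^{-4\psi_0}g_0^{-1}(\ud\om_n,\ud\om_n)$ paired with $A(\tfrac{(e_0)_0\psi_n}{N_0})$) to the ``$\om$-equation'' term (the one with $g_0^{-1}(\ud\om_n,\ud\psi_n)$ paired with $A(\tfrac{(e_0)_0\om_n}{N_0})$), and observe that the microlocal-defect contributions involving $\ud\sigma^{\mathrm{cross}}_{\alp\bt}$ cancel: schematically the first produces a term $\propto e^{-4\psi_0}\langle A(\tfrac{(e_0)_0\psi_n}{N_0}),\, \rd(\chi\om_n)\rangle\cdot(\rd\om_n)$ and the second $\propto e^{-4\psi_0}\langle A(\tfrac{(e_0)_0\om_n}{N_0}),\,\rd(\chi\om_n)\rangle\cdot(\rd\psi_n)$, and after using symmetry of $\ud\sigma^{\mathrm{cross}}$ (Proposition~\ref{prop:dsigma.is.symmetric}) and the fact that $\ud\sigma^{\mathrm{cross}}$ only enters through $\psi$–$\om$ cross pairings, the defect contributions cancel in pairs, leaving only the $\ud\nu^\om$ term on the right-hand side and the product-of-weak-limits terms which reassemble into the ``corresponding terms on the RHS of \eqref{eq:energy.id.0.2} and \eqref{eq:energy.id.0.4}''. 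I would verify this cancellation by a direct symbol computation contracting $\tfrac{(\xi_t-\bt^k_0\xi_k)}{N_0}\xi_\alp$ with $(g_0^{-1})^{\alp\bt}\rd_\bt\psi_0$ and using that $\tfrac 1{N_0^2}(\xi_t-\bt^k_0\xi_k)^2 = e^{-2\gamma_0}\de^{ij}\xi_i\xi_j$ on $\mathrm{supp}(\ud\nu^\om)$ (Proposition~\ref{prop:psiom.localized}), exactly as in the computation \eqref{eq:extra.terms.2}. The main obstacle I anticipate is bookkeeping: correctly tracking which commutator terms are genuinely negligible versus which contribute a nonzero microlocal-defect piece, and making sure the ``$\psi$''–``$\om$'' cross defect contributions really do cancel rather than merely simplify; the analysis proper (trilinear compensated compactness plus Corollary~\ref{cor:nu}) is by this point routine, having been set up in Sections~\ref{subsec:cc} and \ref{sec:existence}.
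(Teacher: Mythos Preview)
Your proposal is correct and follows essentially the same route as the paper: reduce $g_n,\psi_n$ to $g_0,\psi_0$ in the slowly-varying factors, commute $A$ through $\tfrac{1}{N_0}$ and $(e_0)_0$ so that the trilinear expression takes the form $(X\phi_n^{(1)})\,g_0^{-1}(\ud\phi_n^{(2)},\ud\phi_n^{(3)})$ with $\phi_n^{(1)}=A(\chi\psi_n)$, apply Proposition~\ref{prop:trilinear} together with Corollary~\ref{cor:nu}, and finally cancel the $\ud\sigma^{\mathrm{cross}}$ contributions using the symmetry in Proposition~\ref{prop:dsigma.is.symmetric}.

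Two small remarks. First, where you say you ``freely replace $\chi\psi_n,\chi\om_n$ by localized versions,'' the paper makes this precise via the algebraic decomposition $1=(1-\chi)(1+\chi)+\chi^2$: the $(1-\chi)(1+\chi)$ piece is handled by compactness of $(1-\chi)(1+\chi)A$ (disjoint supports) paired against the weakly convergent null form, while the $\chi^2$ piece is absorbed into the derivatives to yield $g_0^{-1}(\ud(\chi\om_n),\ud(\chi\om_n))$, which is the form actually required in Proposition~\ref{prop:trilinear}. Your description would work but this device makes the bookkeeping cleaner. Second, the light-cone identity from Proposition~\ref{prop:psiom.localized} is not needed here: the two $\ud\nu^\om$ contributions from the second and fourth terms simply add to give the factor $-2$, and the $\ud\sigma^{\mathrm{cross}}$ cancellation uses only symmetry, not the null condition.
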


\begin{proof}

We will compute the limit of each term. Since the computation is largely similar, we will give the details for the first term (Step~1) and 
only give the results for the remaining terms (Step~2).

\pfstep{Step~1: Detailed computation for the first term}
{Notice} that on the support of $a$, $\chi \equiv 1$. In particular, by Lemma~\ref{lem:PSIDOs}, $A(1-\chi)(1+\chi)$ and $(1-\chi)(1+\chi) A$ are both pseudo-differential operators of order $-1$ and hence compact on $L^2$. We write $1 = (1-\chi)(1+\chi) + \chi^2$ and compute each contribution.

\begin{equation}\label{eq:cutoffterm.main}
\begin{split}
&\: \f 12 \int_{\mathbb R^{2+1}} A(\f{(e_0)_n(\chi\psi_n)}{N_n}) \chi e^{-4\psi_n} g^{-1}_n(\ud \om_n, \ud \om_n)  \,\mathrm{dVol}_{g_n}\\
= &\: \underbrace{\f 12 \int_{\mathbb R^{2+1}} A(\f{(e_0)_n(\chi\psi_n)}{N_n}) (1-\chi)(1+\chi)\chi e^{-4\psi_n} g^{-1}_n(\ud \om_n, \ud \om_n)  \,\mathrm{dVol}_{g_n} }_{=:\mathrm{I}}\\
&\: + \underbrace{\f 12 \int_{\mathbb R^{2+1}} A(\f{(e_0)_n(\chi\psi_n)}{N_n}) \chi^3 e^{-4\psi_n} g^{-1}_n(\ud \om_n, \ud \om_n)  \,\mathrm{dVol}_{g_n}}_{=:\mathrm{II}}.
\end{split}
\end{equation}

To handle $\mathrm{I}$, we use the following two facts:
\begin{itemize}
\item $(1-\chi)(1+\chi) A(\f{(e_0)_n(\chi\psi_n)}{N_n})$ converges in the $L^2$ \underline{norm} to $(1-\chi)(1+\chi) A(\f{(e_0)_0(\chi\psi_0)}{N_0})$ after passing to a subsequence (by Lemmas~\ref{lem:PSIDOs}.1 and \ref{lem:PSIDOs}.5).
\item By the pointwise convergence in \eqref{assumption.0}, the bound in \eqref{assumption.1} and Lemma~\ref{lem:limitwave}, $\chi e^{-4\psi_n} g^{-1}_n(\ud \om_n, \ud \om_n)$ converges to $\chi e^{-4\psi_0} g^{-1}_0( \ud \om_0, \ud \om_0)$ in the sense of distributions. Using \eqref{assumption.0} and \eqref{assumption.1} again then implies that the said convergence holds \underline{weakly} in $L^2$.
\end{itemize}
Then, we obtain that, up to a subsequence (which we do not relabel),
\begin{equation}\label{eq:cutoffterm.I}
\begin{split}
\mathrm{I} \to &\: \f 12 \int_{\mathbb R^{2+1}} A(\f{(e_0)_0(\chi\psi_0)}{N_0}) (1-\chi)(1+\chi)\chi e^{-4\psi_0} g^{-1}_0 (\ud \om_0, \ud \om_0)  \,\mathrm{dVol}_{g_0}.
\end{split}
\end{equation}

For $\mathrm{II}$, we further compute
\begin{equation}\label{eq:cutoffterm.II.1}
\begin{split}
\mathrm{II} = &\: \underbrace{\f 12 \int_{\mathbb R^{2+1}} A(\f{(e_0)_n(\chi\psi_n)}{N_n}) \chi e^{-4\psi_n} g^{-1}_n (\ud (\chi\om_n), \ud (\chi\om_n))  \,\mathrm{dVol}_{g_n} }_{=:\mathrm{II_a}}\\
&\: + \underbrace{\f 12 \int_{\mathbb R^{2+1}} A(\f{(e_0)_n(\chi\psi_n)}{N_n}) \chi e^{-4\psi_n} \om_n[ g^{-1}_n (\ud \chi, \ud (\chi\om_n)) + \chi g^{-1}_n (\ud \chi, \ud \om_n)]  \,\mathrm{dVol}_{g_n} }_{=:\mathrm{II_b}} \\
\end{split}
\end{equation}

First, using the fact that $g_n$ and $\psi_n$ converges in $C^0$ to their limits (see \eqref{assumption.0}), $\mathrm{II}_a$ has the same limit as
\begin{equation}\label{eq:cutoffterm.II.2}
\begin{split}
\mathrm{II}_a':= \f 12 \int_{\mathbb R^{2+1}} A(\f{(e_0)_0(\chi\psi_n)}{N_0}) \chi e^{-4\psi_0} g^{-1}_0(\ud (\chi\om_n), \ud(\chi\om_n))  \,\mathrm{dVol}_{g_0}.
\end{split}
\end{equation}
We now apply the result on trilinear compensated compactness (Proposition~\ref{prop:trilinear}). 
\begin{equation}\label{eq:cutoffterm.II.3}
\begin{split} 
\mathrm{II}_a' =&\: \underbrace{ \f 12 \int_{\mathbb R^{2+1}} \f{(e_0)_0(A(\chi\psi_n))}{N_0} \chi e^{-4\psi_0} g^{-1}_0(\ud (\chi\om_n) ,\ud (\chi\om_n))  \,\mathrm{dVol}_{g_0} }_{=:\mathrm{II}'_{a,1}}\\
&\: + \underbrace{\f 12 \int_{\mathbb R^{2+1}} [A(\f{(e_0)_0(\chi\psi_n)}{N_0}) -\f{(e_0)_0(A(\chi\psi_n))}{N_0}] \chi e^{-4\psi_0} g^{-1}_0 (\ud (\chi\om_n), \ud (\chi\om_n))  \,\mathrm{dVol}_{g_0} }_{=:\mathrm{II}'_{a,2}}.
\end{split}
\end{equation}

Note that by Lemmas~\ref{lem:PSIDOs}.2 and \ref{lem:PSIDOs}.4, $A$, $[\rd_\alp, A]$ are both bounded $:L^3 \to L^3$ and $[\Box_{g_0}, A]$ is bounded $: W^{1,3} \to L^3$. Therefore, $\phi^{(1)}_n = A(\chi\psi_n)$, $\phi^{(2)}_n = \phi^{(3)}_n = \chi\om_n$ satisfy the estimates of Proposition~\ref{prop:trilinear}. Hence, by Proposition~\ref{prop:trilinear}, {Corollary~\ref{cor:nu}} and the fact that $\chi\equiv 1$ on the support of $a$,
\begin{equation}\label{eq:cutoffterm.II.4}
\begin{split}
\mathrm{II}'_{a,1} \to &\: \f 12 \int_{\mathbb R^{2+1}} \f{(e_0)_0(A(\chi\psi_0))}{N_0} \chi e^{-4\psi_0} g^{-1}_0 (\ud(\chi\om_0), \ud(\chi\om_0))  \,\mathrm{dVol}_{g_0} \\
&\: + \int_{S^*\mathbb R^{2+1}} e^{-4\psi_0} \f{(g^{-1}_0)^{\alp\bt} (\rd_\bt \om_0)}{N_0} a \,((\ud \sigma^{\mathrm{cross}})^*_{\alp t} - \bt^k_0 (\ud \sigma^{\mathrm{cross}})^*_{\alp k}).
\end{split}
\end{equation}

For $\mathrm{II_{a,2}}$, we note the following:
\begin{itemize}
\item By Lemma~\ref{lem:PSIDOs}, $[A, \f 1{N_0}]:L^2 \to L^2_{\mathrm{loc}}$ and $[A, (e_0)_0]:H^1\to L^2_{\mathrm{loc}}$ are compact so that (after passing to a subsequence) $[A(\f{(e_0)_0(\chi\psi_n)}{N_0}) -\f{(e_0)_0(A(\chi\psi_n))}{N_0}]$ converges in the $L^2$ \underline{norm} to $[A(\f{(e_0)_0(\chi\psi_0)}{N_0}) -\f{(e_0)_0(A(\chi\psi_0))}{N_0}]$.
\item By Lemma~\ref{lem:limitwave}, \eqref{assumption.0} and \eqref{assumption.1}, $g^{-1}_0(\ud (\chi\om_n), \ud (\chi\om_n))$ converges \underline{weakly} in $L^2$ to $g^{-1}_0(\ud (\chi\om_0),\ud (\chi\om_0))$.
\end{itemize}
It follows that 
\begin{equation}\label{eq:cutoffterm.II.5}
\mathrm{II}'_{a,2} \to \f 12 \int_{\mathbb R^{2+1}} [A(\f{(e_0)_0(\chi\psi_0)}{N_0}) -\f{(e_0)_0(A(\chi\psi_0))}{N_0}] \chi e^{-4\psi_0} g^{-1}_0(\ud (\chi\om_0), \ud (\chi\om_0))  \,\mathrm{dVol}_{g_0}.
\end{equation}

We now return to the term $\mathrm{II}_b$ in \eqref{eq:cutoffterm.II.1}. Notice now that $\rd\chi$ and $a$ have disjoint support. Therefore by Lemmas~\ref{lem:PSIDOs}.1 and \ref{lem:PSIDOs}.5, $\rd \chi A: L^2 \to L^2$ is compact. As a result, using also \eqref{assumption.0} and \eqref{assumption.1}, we obtain
\begin{equation}\label{eq:cutoffterm.II.6}
\begin{split}
\mathrm{II}_{b} \to &\: \f 12 \int_{\mathbb R^{2+1}} A(\f{(e_0)_0(\chi\psi_0)}{N_0}) \chi e^{-4\psi_0} \om_0[ g^{-1}_0 (\ud \chi, \ud (\chi\om_0)) + \chi g^{-1}_0 (\ud \chi, \ud \om_0)]  \,\mathrm{dVol}_{g_0}.
\end{split}
\end{equation}

Combining \eqref{eq:cutoffterm.II.1}--\eqref{eq:cutoffterm.II.6}, we obtain
\begin{equation}\label{eq:cutoffterm.II}
\begin{split}
\mathrm{II} \to  &\: \f 12 \int_{\mathbb R^{2+1}} A(\f{(e_0)_0(\chi\psi_0)}{N_0}) \chi^3 e^{-4\psi_0} g^{-1}_0 (\ud \om_0, \ud \om_0)  \,\mathrm{dVol}_{g_0} \\
&\: + \int_{S^*\mathbb R^{2+1}} e^{-4\psi_0} \f{(g^{-1}_0)^{\alp\bt} (\rd_\bt \om_0)}{N_0} a \,((\ud \sigma^{\mathrm{cross}})^*_{\alp t} - \bt^k_0 (\ud \sigma^{\mathrm{cross}})^*_{\alp k}).
\end{split}
\end{equation}

Combining \eqref{eq:cutoffterm.main}, \eqref{eq:cutoffterm.I} and \eqref{eq:cutoffterm.II}, we obtain
\begin{equation}\label{eq:cutoff.1}
\begin{split}
&\: \f 12 \int_{\mathbb R^{2+1}} A(\f{(e_0)_n(\chi\psi_n)}{N_n}) \chi e^{-4\psi_n} g^{-1}_n (\ud \om_n, \ud \om_n)  \,\mathrm{dVol}_{g_n}\\
\to &\: \f 12 \int_{\mathbb R^{2+1}} A(\f{(e_0)_0 (\chi\psi_0)}{N_0}) \chi e^{-4\psi_0} g^{-1}_0 (\ud \om_0, \ud \om_0)  \,\mathrm{dVol}_{g_0} \\
&\: + \int_{S^*\mathbb R^{2+1}} e^{-4\psi_0} \f{(g^{-1}_0)^{\alp\bt} (\rd_\bt \om_0)}{N_0} a \,((\ud \sigma^{\mathrm{cross}})^*_{\alp t} - \bt^k_0 (\ud \sigma^{\mathrm{cross}})^*_{\alp k}).
\end{split}
\end{equation}

\pfstep{Step~2: Computing the second to fourth terms} Arguing as in the derivation of \eqref{eq:cutoff.1} in Step~1, we obtain
\begin{equation}\label{eq:cutoff.2}
\begin{split}
&\: - \int_{\mathbb R^{2+1}} e^{-4\psi_0} A(\f{(e_0)_n(\chi\om_n)}{N_n}) \chi g^{-1}_n (\ud \om_n, \ud \psi_n) \,\mathrm{dVol}_{g_n} \\
\to &\: - \int_{\mathbb R^{2+1}} e^{-4\psi_0} A(\f{(e_0)_0(\chi\om_0)}{N_0}) \chi g^{-1}_0 (\ud \om_0, \ud \psi_0) \,\mathrm{dVol}_{g_0} \\
&\: -\int_{S^*\mathbb R^{2+1}} e^{-4\psi_0}  \f{(g^{-1}_0)^{\alp\bt} (\rd_\bt \psi_0)}{N_0} (\xi_t - \bt^k_0 \xi_k)\xi_\alp a  \, \f{\ud \nu^{\om}}{|\xi|^2} \\
&\: -\int_{S^*\mathbb R^{2+1}} e^{-4\psi_0}  \f{(g^{-1}_0)^{\alp\bt} (\rd_\alp \om_0)}{N_0}  a \, (\ud \sigma^{\mathrm{cross}}_{\bt t} - \bt^k_0 \ud \sigma^{\mathrm{cross}}_{\bt k}).
\end{split}
\end{equation}

Similarly, we also obtain
\begin{equation}\label{eq:cutoff.3}
\begin{split}
&\: \f 12 \int_{\mathbb R^{2+1}} \f{(e_0)_n (\chi\psi_n)}{N_n} A[\sqrt{-\det g_n} \chi e^{-4\psi_n} g^{-1}_n (\ud \om_n, \ud \om_n)]  \,\ud x\\
\to &\: \f 12 \int_{\mathbb R^{2+1}} \f{(e_0)_0 (\chi\psi_0)}{N_0} A[\sqrt{-\det g_0} \chi e^{-4\psi_0} g^{-1}_0 (\ud \om_0, \ud \om_0)]  \, \ud x \\
&\: + \int_{S^*\mathbb R^{2+1}} e^{-4\psi_0}  \f{(g^{-1}_0)^{\alp\bt} (\rd_\alp \om_0)}{N_0}  a \, (\ud {\sigma}^{\mathrm{cross}}_{t\bt}- \bt^k_0 \ud {\sigma}^{\mathrm{cross}}_{k\bt}),
\end{split}
\end{equation}
and
\begin{equation}\label{eq:cutoff.4}
\begin{split}
&\: - \int_{\mathbb R^{2+1}} e^{-4\psi_0} \f{(e_0)_n(\chi\om_n)}{N_n} A[\sqrt{-\det g_n}\chi g^{-1}_n (\ud \om_n, \ud \psi_n)] \,\ud x\\
\to &\: - \int_{\mathbb R^{2+1}} e^{-4\psi_0} \f{(e_0)_0(\chi\om_0)}{N_0} A[\sqrt{-\det g_n}\chi g^{-1}_0 (\ud \om_0, \ud \psi_0)] \,\ud x \\
&\: -\int_{S^*\mathbb R^{2+1}} e^{-4\psi_0} \f{(g^{-1}_0)^{\alp\bt} (\rd_\bt \psi_0)}{N_0} (\xi_t - \bt_0^k \xi_k)\xi_\alp a \, \f{\ud \nu^{\om}}{|\xi|^2} \\
&\: -\int_{S^*\mathbb R^{2+1}} e^{-4\psi_0} \f{(g^{-1}_0)^{\alp\bt} (\rd_\alp \om_0)}{N_0} a \, \,((\ud \sigma^{\mathrm{cross}})^*_{t\bt} - \bt^k_0 (\ud \sigma^{\mathrm{cross}})^*_{k\bt}).
\end{split}
\end{equation}

\pfstep{Step~3: Putting everything together} Adding \eqref{eq:cutoff.1}, \eqref{eq:cutoff.2}, \eqref{eq:cutoff.3} and \eqref{eq:cutoff.4}, and noticing a cancellation using Proposition~\ref{prop:dsigma.is.symmetric}, we finish the proof. \qedhere

\end{proof}

\subsection{Putting everything together} 
We now combining the results in Section~\ref{sec:elliptic.wave.tri} and this section. More precisely, subtracting the expression in Proposition~\ref{prop:limit.nonlinear.terms} from the sum of the expressions for $\psi$ and $\om$ in Proposition~\ref{prop:Box.commute}, we obtain the following:
\begin{proposition}\label{prop:RHS.other}
Let $\ud\nu$ be defined as in \eqref{def:dnu}. Suppose $A = b(x) \widetilde{m}(\f 1i\nabla)$, {where the principal symbol $a(x,\xi) = b(x) m(\xi)$ (with $m(\xi) = \widetilde{m}(\xi)$ for $|\xi|\geq 1$) is real and supported in $T^*\Omega$, $m(\xi)$ is homogeneous of order $0$, and $m$ and $\widetilde{m}$ are both even.} Then, after passing to a subsequence (which we do not relabel),
\begin{equation*}
\begin{split}
&\:\mbox{(RHS of \eqref{eq:energy.id.n.2}) $+$ (RHS of \eqref{eq:energy.id.n.4})} \\
&\: - \int_{\mathbb R^{2+1}} [\rd_i (\chi\psi_n)] \de^{ij} N_n \{[(e_0)_n,A](\f{\rd_j(\chi\psi_n)}{N_n})\} \,\ud x - \f 14\int_{\mathbb R^{2+1}} e^{-4\psi_0}[\rd_i (\chi\om_n)] \de^{ij} N_n \{[(e_0)_n,A](\f{\rd_j(\chi\om_n)}{N_n})\} \,\ud x \\
\to &\: \mbox{(RHS of \eqref{eq:energy.id.0.2}) $+$ (RHS of \eqref{eq:energy.id.0.4})} \\
&\: - \int_{\mathbb R^{2+1}} [\rd_i (\chi\psi_0)] \de^{ij} N_0 \{[(e_0)_0,A](\f{\rd_j(\chi\psi_0)}{N_0})\} \,\ud x - \f 14 \int_{\mathbb R^{2+1}} e^{-4\psi_0} [\rd_i (\chi\om_0)] \de^{ij} N_0 \{[(e_0)_0,A](\f{\rd_j(\chi\om_0)}{N_0})\} \,\ud x \\
&\: - \f 12\int_{S^*\mathbb R^{2+1}} \de^{ij} \xi_i\xi_j (\rd_{x^t}a - \bt^k_0 \rd_{x^k}a) \f{e^{-2\gamma_0}}{N_0}\f{\ud \nu}{|\xi|^2}\\
&\: + \f 12\int_{S^*\mathbb R^{2+1}} \f 1{N_0} (\xi_t-\bt^k_0\xi_k) [(g_0^{-1})^{\mu\nu} \xi_\mu (\rd_{x^\nu} a) - \rd_\mu (g_0^{-1})^{\alp\bt} \xi_\alp \xi_\bt (\rd_{\xi_\mu} a)] \, \f{\ud \nu}{|\xi|^2}\\
&\: +2\int_{S^*\mathbb R^{2+1}} e^{-4\psi_0}  \f{(g^{-1}_0)^{\alp\bt} (\rd_\bt \psi_0)}{N_0} (\xi_t - \bt^k_0 \xi_k)\xi_\alp a  \, \f{\ud \nu^{\om}}{|\xi|^2}.
\end{split}
\end{equation*}
where $X = \f 1{N_0} (\rd_t -\bt^i_0 \rd_i)$. 
\end{proposition}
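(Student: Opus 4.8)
The plan is to assemble Proposition~\ref{prop:RHS.other} as a bookkeeping exercise: every term on the RHS of \eqref{eq:energy.id.n.2} and \eqref{eq:energy.id.n.4} has been analyzed in one of the preceding results, and it remains to collect the limits and observe that several of the resulting microlocal-defect-measure terms cancel. Recall from Proposition~\ref{prop:energy.id.n.2} that the RHS of \eqref{eq:energy.id.n.2} consists of $\rm{maincommutator}$, $\rm{trilinear}_1$ and $\rm{trilinear}_2$, and similarly for \eqref{eq:energy.id.n.4}. First I would split each of $\rm{trilinear}_1$, $\rm{trilinear}_2$ (and their $\om$-counterparts) into the part of $F_n^\psi$, $F_n^\om$ that is linear in the wave variables (the $g_n^{-1}(\ud\chi,\ud\psi_n)$, $\psi_n\Box_{g_n}\chi$ terms) and the part that is nonlinear (the $\chi e^{-4\psi_n}g_n^{-1}(\ud\om_n,\ud\om_n)$ and $\chi g_n^{-1}(\ud\om_n,\ud\psi_n)$ terms). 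The linear part converges to its obvious limit by Proposition~\ref{prop:limit.chi.terms}; the nonlinear part converges, after passing to a subsequence, to the corresponding limit plus the microlocal defect term $-2\int_{S^*\mathbb R^{2+1}} e^{-4\psi_0}\f{(g_0^{-1})^{\alp\bt}(\rd_\bt\psi_0)}{N_0}(\xi_t-\bt_0^k\xi_k)\xi_\alp a\,\f{\ud\nu^\om}{|\xi|^2}$ by Proposition~\ref{prop:limit.nonlinear.terms}. (Note that it is here that the cross-measure terms $\ud\sigma^{\rm{cross}}$ all drop out, via the symmetry established in Proposition~\ref{prop:dsigma.is.symmetric}.)

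Next I would treat the $\rm{maincommutator}$ term together with the still-unhandled $\rm{hard}$ term $-\int[\rd_i(\chi\psi_n)]\de^{ij}N_n\{[(e_0)_n,A](\f{\rd_j(\chi\psi_n)}{N_n})\}\,\ud x$ (and its $\om$-analogue) which we have carried along since Section~\ref{sec:energy.id.n.1}. Proposition~\ref{prop:Box.commute} states precisely that the combination ``$\rm{maincommutator}$ minus the $\rm{hard}$ term'' converges, along a subsequence, to the corresponding limiting expression minus the limiting $\rm{hard}$ term, plus the two microlocal defect terms
\[
-\int_{S^*\mathbb R^{2+1}}\de^{ij}\xi_i\xi_j(\rd_{x^t}a-\bt_0^k\rd_{x^k}a)\f{e^{-2\gamma_0}}{N_0}\f{\ud\nu^\psi}{|\xi|^2}
\]
and
\[
\int_{S^*\mathbb R^{2+1}}\f 1{N_0}(\xi_t-\bt_0^k\xi_k)[(g_0^{-1})^{\mu\nu}\xi_\mu(\rd_{x^\nu}a)-\rd_\mu(g_0^{-1})^{\alp\bt}\xi_\alp\xi_\bt(\rd_{\xi_\mu}a)]\f{\ud\nu^\psi}{|\xi|^2},
\]
together with the identical statement for $\om$ (with $\ud\nu^\psi$ replaced by $e^{-4\psi_0}\,\ud\nu^\om$ and $\ud x\mapsto\f14 e^{-4\psi_0}\ud x$). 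Adding the $\psi$ and $\om$ versions and recalling $\ud\nu=2\ud\nu^\psi+\f12 e^{-4\psi_0}\ud\nu^\om$ from \eqref{def:dnu} turns the factor-$2$ on the $\psi$-side and factor-$\f14$ on the $\om$-side into the overall factors $\f12$ appearing in the statement.

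Finally I would add all the pieces. Since each convergence above holds only after passing to a subsequence, one first passes to a common subsequence for all of them (a finite intersection, or a diagonal argument, so no relabeling is lost). The limiting ``regular'' terms on the right reassemble exactly into (RHS of \eqref{eq:energy.id.0.2}) $+$ (RHS of \eqref{eq:energy.id.0.4}) minus the two limiting $\rm{hard}$ terms, because Proposition~\ref{prop:energy.id.0.2} expresses those RHS's as $\rm{maincommutator}_0$ plus the linear and nonlinear limiting trilinear terms, matching term by term. What is left is the sum of the microlocal-defect contributions, which is precisely the last three lines of the asserted identity. The only genuine subtlety — and the step I expect to require the most care — is verifying that no microlocal defect terms are accidentally dropped or double-counted: one must check that the $\ud\sigma^{\rm{cross}}$ terms in Proposition~\ref{prop:limit.nonlinear.terms} cancel (they do, by Proposition~\ref{prop:dsigma.is.symmetric}), that the two occurrences of $\int\f{e^{-2\gamma_0}(\rd_\mu\bt_0^k)}{N_0}\de^{ij}\xi_i\xi_j\xi_k\,\f{\ud\nu^\psi}{|\xi|^2}$ arising inside Proposition~\ref{prop:Box.commute} (one from Proposition~\ref{prop:Box.commute.almost.1}, one from Proposition~\ref{prop:Box.commute.almost.2}) cancel, and that the $\rm{hard}$ term and its limit, which appear with the correct signs on both sides, indeed survive unchanged so that the proposition is stated with them subtracted on each side. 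Everything else is linear bookkeeping, so the proof is short once the accounting is done carefully.
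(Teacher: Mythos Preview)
Your proposal is correct and follows essentially the same approach as the paper, which simply states that the proposition is obtained by combining Proposition~\ref{prop:Box.commute} (for both $\psi$ and $\om$) with Propositions~\ref{prop:limit.chi.terms} and \ref{prop:limit.nonlinear.terms}. You are more explicit than the paper about the bookkeeping and the cancellations; the only place to be slightly careful is that the nonlinear parts of $F_n^\psi$, $F_n^\om$ enter the trilinear terms with the \emph{opposite} sign from the LHS of Proposition~\ref{prop:limit.nonlinear.terms}, which is exactly what flips the $-2\int$ in that proposition to the $+2\int$ in the statement (this is the ``subtracting'' the paper alludes to).
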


\section{Transport equation for the microlocal defect measure and conclusion of the proof of Theorem~\ref{thm:main}}\label{sec:final}

Our goal in this section is to combine Propositions~\ref{prop:RHS.1}, \ref{prop:RHS.3} and \ref{prop:RHS.other} to prove that the measure $\ud\nu$ indeed satisfies a transport equation as in \eqref{eq:transport.def}. This will allow us to conclude the proof of Theorem~\ref{thm:main}.

\begin{proposition}\label{prop:main.transport.prelim}
Let $\ud\nu$ be defined as in \eqref{def:dnu}.

Suppose $a: T^*\mathbb R^{2+1}\to \mathbb R$ be a smooth function which is homogeneous of order $0$ in $\xi$ and is supported in $T^*\Omega$. Then
$$\int_{S^*\mathbb R^{2+1}} (2 (g_0^{-1})^{\alp\bt} \xi_\alp \rd_{x^\bt} (\f{(\xi_t-\bt^k_0 \xi_k) a}{N_0}) - (\rd_{\mu}(g_0^{-1})^{\alp\bt})\xi_\alp \xi_\bt \rd_{\xi_\mu} (\f{(\xi_t-\bt^k_0 \xi_k) a}{N_0})) \,\f{\ud\nu}{|\xi|^2} = 0.$$
\end{proposition}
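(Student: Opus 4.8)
The plan is to assemble the transport equation for $\ud\nu$ by exploiting the fact that the combination $\frac{(e_0)_n(\chi\psi_n)}{N_n}\Box_{g_n,A}(\chi\psi_n) + A(\frac{(e_0)_n(\chi\psi_n)}{N_n})\Box_{g_n}(\chi\psi_n)$ (and its $\om_n$-analogue) can be computed in two different ways: once by directly integrating by parts (Proposition~\ref{prop:energy.id.n.1}) and once by substituting the equations satisfied by $(\psi_n,\om_n)$ (Proposition~\ref{prop:energy.id.n.2}). The two right-hand sides are algebraically equal for each $n$, so their limits as $n\to+\infty$ must agree. First I would write down this equality: for the $\psi$-part, 2 times the RHS of \eqref{eq:energy.id.n.1} equals 2 times the RHS of \eqref{eq:energy.id.n.2}, and similarly \eqref{eq:energy.id.n.3} $=$ \eqref{eq:energy.id.n.4}; add these (after the appropriate $\frac14 e^{-4\psi_0}$ weight on the $\om$-terms, and recalling $\ud\nu = 2\ud\nu^\psi + \frac12 e^{-4\psi_0}\ud\nu^\om$ from \eqref{def:dnu}).

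Next I would take limits. On the "$\eqref{eq:energy.id.n.1}$-side" I would use Proposition~\ref{prop:RHS.1} (for $\psi$) and Proposition~\ref{prop:RHS.3} (for $\om$): the "$\mathrm{hard}$" commutator terms are \emph{not} controlled there, so they are kept as $-\int [\rd_i(\chi\psi_n)]\de^{ij}N_n\{[(e_0)_n,A](\tfrac{\rd_j(\chi\psi_n)}{N_n})\}\,\ud x$ (plus the $\om$-analogue). On the "$\eqref{eq:energy.id.n.2}$-side" I would use Proposition~\ref{prop:RHS.other}, which also keeps exactly those same "$\mathrm{hard}$" commutator terms. Therefore, when I equate the two limits, the unhandled commutator terms — which appear identically on both sides — cancel, along with the corresponding limiting $\psi_0,\om_0$ commutator terms and all the terms on the RHSs of \eqref{eq:energy.id.0.1}–\eqref{eq:energy.id.0.4} (these match because the limiting energy identities \eqref{eq:energy.id.0.1}$=$\eqref{eq:energy.id.0.2}, \eqref{eq:energy.id.0.3}$=$\eqref{eq:energy.id.0.4} hold by Proposition~\ref{prop:energy.id.0.2}). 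What remains is a relation purely among microlocal-defect-measure integrals against $\ud\nu^\psi$, $\ud\nu^\om$ and $\ud\nu$.

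Then the bookkeeping: I would collect all surviving $\ud\nu$-terms. From the $\eqref{eq:energy.id.n.1}$-side (Props.~\ref{prop:RHS.1}, \ref{prop:RHS.3}) we get $-2\int ((g_0^{-1})^{\alp\bt}(\rd_\bt X^\gamma)\xi_\alp\xi_\gamma - \frac12 X^\mu \rd_\mu(g_0^{-1})^{\alp\gamma}\xi_\alp\xi_\gamma) a\,\tfrac{\ud\nu}{|\xi|^2}$, plus $\int [-\de^{ij}\xi_i(\xi_t-\bt^k_0\xi_k)\rd_{x^j}a]\tfrac{e^{-2\gamma_0}}{N_0}\tfrac{\ud\nu}{|\xi|^2}$, plus the extra $\om$-term $2\int \tfrac{e^{-4\psi_0}}{N_0}(g_0^{-1})^{\alp\bt}(\rd_\alp\psi_0)\xi_\bt(\xi_t-\bt^k_0\xi_k)a\,\tfrac{\ud\nu^\om}{|\xi|^2}$; here $X = \tfrac1{N_0}(\rd_t-\bt^i_0\rd_i)$. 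From the $\eqref{eq:energy.id.n.2}$-side (Prop.~\ref{prop:RHS.other}) we get $-\frac12\int\de^{ij}\xi_i\xi_j(\rd_{x^t}a-\bt^k_0\rd_{x^k}a)\tfrac{e^{-2\gamma_0}}{N_0}\tfrac{\ud\nu}{|\xi|^2} + \frac12\int \tfrac1{N_0}(\xi_t-\bt^k_0\xi_k)[(g_0^{-1})^{\mu\nu}\xi_\mu\rd_{x^\nu}a - \rd_\mu(g_0^{-1})^{\alp\bt}\xi_\alp\xi_\bt\rd_{\xi_\mu}a]\tfrac{\ud\nu}{|\xi|^2} + 2\int e^{-4\psi_0}\tfrac{(g_0^{-1})^{\alp\bt}(\rd_\bt\psi_0)}{N_0}(\xi_t-\bt^k_0\xi_k)\xi_\alp a\,\tfrac{\ud\nu^\om}{|\xi|^2}$. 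The two $\ud\nu^\om$ terms involving $\rd\psi_0$ are identical and must be moved so they \emph{cancel} when the two sides are equated (this is the "Lagrangian structure" cancellation flagged in Section~\ref{sec:further.issues}). The main obstacle — and the only genuinely nontrivial part — is the purely algebraic verification that the remaining $\ud\nu$-integrands combine, \emph{on the support of $\ud\nu$} (where $\frac1{N_0^2}(\xi_t-\bt^k_0\xi_k)^2 = e^{-2\gamma_0}\de^{ij}\xi_i\xi_j$ by Proposition~\ref{prop:psiom.localized}), into precisely
$$\frac1{|\xi|^2}\Big(2(g_0^{-1})^{\alp\bt}\xi_\alp\rd_{x^\bt}\big(\tfrac{(\xi_t-\bt^k_0\xi_k)a}{N_0}\big) - (\rd_\mu(g_0^{-1})^{\alp\bt})\xi_\alp\xi_\bt\rd_{\xi_\mu}\big(\tfrac{(\xi_t-\bt^k_0\xi_k)a}{N_0}\big)\Big).$$
I would verify this by expanding $\rd_{x^\bt}(\tfrac{(\xi_t-\bt^k_0\xi_k)a}{N_0})$ and $\rd_{\xi_\mu}(\tfrac{(\xi_t-\bt^k_0\xi_k)a}{N_0})$ via the product rule: the terms where the derivative falls on $a$ reproduce the "$(\xi_t-\bt^k_0\xi_k)[(g_0^{-1})^{\mu\nu}\xi_\mu\rd_{x^\nu}a - \ldots]$"-type expressions (cf.~the computations in Steps~2–3 of Proposition~\ref{prop:Box.commute.almost.2} and the identity $\tfrac1{N_0}(\xi_t-\bt^k_0\xi_k)(g_0^{-1})^{\mu\nu}\xi_\mu\rd_{x^\nu}a = \mathrm{I}+\mathrm{III}+\mathrm{V}$ etc.), while the terms where the derivative falls on $\tfrac{\xi_t-\bt^k_0\xi_k}{N_0}$ reproduce precisely the "$(g_0^{-1})^{\alp\bt}(\rd_\bt X^\gamma)\xi_\alp\xi_\gamma - \tfrac12 X^\mu\rd_\mu(g_0^{-1})^{\alp\gamma}\xi_\alp\xi_\gamma$" expression computed in \eqref{deformation.1}–\eqref{deformation.2}, since $X^\gamma\xi_\gamma = \tfrac{\xi_t-\bt^k_0\xi_k}{N_0}$. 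Matching the numerical coefficients ($2$ vs.\ $-2$, $\tfrac12$, etc.) and the $\rd_{x^j}a$ cross-terms completes the identification, and since $a$ ranges over all such symbols this yields the stated identity.
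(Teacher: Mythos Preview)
Your approach is the paper's: equate the two energy-identity right-hand sides, pass to the limit via Propositions~\ref{prop:RHS.1}, \ref{prop:RHS.3}, \ref{prop:RHS.other}, observe that both the unhandled ``$\mathrm{hard}$'' commutator terms and the two $\ud\nu^\om$-terms involving $\rd\psi_0$ cancel, and then use the null-cone condition from Proposition~\ref{prop:psiom.localized} to recognize the transport operator applied to $\tfrac{(\xi_t-\bt_0^k\xi_k)a}{N_0}$.

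Two points need fixing. First, you never invoke Proposition~\ref{prop:main.reduction}: Propositions~\ref{prop:RHS.1}, \ref{prop:RHS.3} and \ref{prop:RHS.other} are proved only for symbols $a=b(x)m(\xi)$ with $m$ even, so for a general homogeneous-of-order-$0$ symbol you must first reduce to that case --- this is the paper's opening sentence and it is not optional. Second, your weighting is confused: the identities \eqref{eq:energy.id.n.3}$=$\eqref{eq:energy.id.n.4} already carry the factor $\tfrac14 e^{-4\psi_0}$, and no extra factor of $2$ belongs on the $\psi$-identities --- one simply adds \eqref{eq:energy.id.n.1}$=$\eqref{eq:energy.id.n.2} and \eqref{eq:energy.id.n.3}$=$\eqref{eq:energy.id.n.4} as written. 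With the correct bookkeeping the deformation term carries coefficient $-1$ (not $-2$) and the $\rd_{x^j}a$ term carries coefficient $\tfrac12$ (not $1$), matching the paper's \eqref{main.transport.identity.0}; with your stated coefficients the final algebraic identification would not close.
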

\begin{proof}
By Proposition~\ref{prop:main.reduction}, it suffices to consider the case where $a(x,\xi) = b(x) m(\xi)$, where $m$ is homogeneous of order $0$ and \underline{even}. We make this assumption for the remainder of the proof (so that we can apply results in earlier sections).

Note that $\mbox{RHS of \eqref{eq:energy.id.n.1}} = \mbox{RHS of \eqref{eq:energy.id.n.2}}$, $\mbox{RHS of \eqref{eq:energy.id.n.3}} = \mbox{RHS of \eqref{eq:energy.id.n.4}}$, $\mbox{RHS of \eqref{eq:energy.id.0.1}} = \mbox{RHS of \eqref{eq:energy.id.0.2}}$ and $\mbox{RHS of \eqref{eq:energy.id.0.3}} = \mbox{RHS of \eqref{eq:energy.id.0.4}}$ {(because the LHSs all agree)}. Therefore, combining Propositions~\ref{prop:RHS.1}, \ref{prop:RHS.3} and \ref{prop:RHS.other}, we obtain
\begin{equation}\label{main.transport.identity.0}
\begin{split}
0= &\: -\int_{S^*\mathbb R^{2+1}} ((g^{-1}_0)^{\alp\bt} (\rd_\bt X^\gamma) \xi_\alp \xi_\gamma -\f 12 X^\mu \rd_\mu (g^{-1}_0)^{\alp\gamma}\xi_\alp \xi_\gamma) a \,\f{d\nu}{|\xi|^2} \\
&\: + \f 12\int_{S^*\mathbb R^{2+1}} [-\de^{ij} \xi_i (\xi_t-\bt^k_0\xi_k) \rd_{x^j} a ]\, \f{e^{-2\gamma_0}}{N_0}\f{d\nu}{|\xi|^2}\\
&\: + \f 12\int_{S^*\mathbb R^{2+1}} \de^{ij} \xi_i\xi_j (\rd_{x^t}a - \bt^k_0 \rd_{x^k}a) \f{e^{-2\gamma_0}}{N_0}\f{\ud \nu}{|\xi|^2}\\
&\: - \f 12\int_{S^*\mathbb R^{2+1}} \f 1{N_0} (\xi_t-\bt^k_0\xi_k) [(g_0^{-1})^{\alp\bt} \xi_\alp (\rd_{x^\beta} a) - \rd_\mu (g_0^{-1})^{\alp\bt} \xi_\alp \xi_\bt (\rd_{\xi_\mu} a)] \, \f{\ud \nu}{|\xi|^2},
\end{split}
\end{equation}
where $X = \f 1{N_0} (\rd_t -\bt^i_0 \rd_i)$ as before. (Note that the two terms of $2\int_{S^*\mathbb R^{2+1}} \f{e^{-4\psi_0}}{N_0} (g_0^{-1})^{\alp\bt} (\rd_\alp \psi_0) \xi_\bt (\xi_t-\bt^k_0 \xi_k) a\, \f{\ud \nu^\om}{|\xi|^2}$ cancel.)

Since $\f 1{N_0^2} (\xi_t -\bt^k_0 \xi_k)^2 = e^{-2\gamma_0} \de^{ij} \xi_i \xi_j$ on the support of $\ud \nu$ (by Proposition~\ref{prop:psiom.localized}),
\begin{equation}\label{complicated.transport.terms}
\begin{split}
 &\: \f 12\int_{S^*\mathbb R^{2+1}} [-\de^{ij} \xi_i (\xi_t-\bt^k_0\xi_k) \rd_{x^j} a ]\, \f{e^{-2\gamma_0}}{N_0}\f{d\nu}{|\xi|^2} + \f 12\int_{S^*\mathbb R^{2+1}} \de^{ij} \xi_i\xi_j (\rd_{x^t}a - \bt^k_0 \rd_{x^k}a) \f{e^{-2\gamma_0}}{N_0}\f{\ud \nu}{|\xi|^2} \\
= &\: \f 12\int_{S^*\mathbb R^{2+1}} [-\de^{ij} \xi_i (\xi_t-\bt^k_0\xi_k) \rd_{x^j} a ]\, \f{e^{-2\gamma_0}}{N_0}\f{d\nu}{|\xi|^2} +\f 12 \int_{S^*\mathbb R^{2+1}} (\xi_t - \bt^i_0 \xi_i)^2 (\rd_{x^t}a - \bt^k_0 \rd_{x^k}a) \f{1}{N_0^3}\f{\ud \nu}{|\xi|^2}\\
= &\: \f 12\int_{S^*\mathbb R^{2+1}} (\xi_t-\bt^k_0\xi_k) [- e^{-2\gamma_0}\de^{ij} \xi_i  \rd_{x^j} a + \f{(\xi_t - \bt^i_0 \xi_i)}{N_0^2} (\rd_{x^t}a - \bt^k_0 \rd_{x^k}a)]\, \f{1}{N_0}\f{d\nu}{|\xi|^2}.
\end{split}
\end{equation}
By \eqref{eq:spatial.transport.a}, it then follows that
\begin{equation}\label{complicated.transport.terms.simplified}
\mbox{\eqref{complicated.transport.terms}}= - \f 12\int_{S^*\mathbb R^{2+1}} (\xi_t-\bt^k_0\xi_k) (g_0^{-1})^{\alp\bt} \xi_\alpha (\rd_{x^\beta} a) \, \f{1}{N_0}\f{d\nu}{|\xi|^2}.
\end{equation}

Plugging \eqref{complicated.transport.terms.simplified} into \eqref{main.transport.identity.0}, we then obtain
\begin{equation}\label{main.transport.identity}
\begin{split}
0= &\: -\int_{S^*\mathbb R^{2+1}}\left({-}(g_0^{-1})^{\alpha \mu}\frac{\partial_\mu \beta_0^j}{N_0}\xi_j \xi_\alpha + (\xi_t - \bt^i_0 \xi_i) ((g^{-1}_0)^{\alp\bt} (\rd_\bt \f{1}{N_0}) \xi_\alp 
-\f 12 \f 1{N_0} ((e_0)_0 (g^{-1}_0)^{\alp\gamma}) \xi_\alp {\xi_\gamma}) a \,\right)\f{d\nu}{|\xi|^2} \\
&\: - \int_{S^*\mathbb R^{2+1}} \f 1{N_0} (\xi_t-\bt^k_0\xi_k) [(g_0^{-1})^{\alp\bt} \xi_\alp (\rd_{x^\beta} a) - \f 12 \rd_\mu (g_0^{-1})^{\alp\bt} \xi_\alp \xi_\bt (\rd_{\xi_\mu} a)] \, \f{\ud \nu}{|\xi|^2}\\
= &\: - \int_{S^*\mathbb R^{2+1}}  [(g_0^{-1})^{\alp\bt} \xi_\alp \rd_{x^\beta} (\f{(\xi_t-\bt^k_0\xi_k) a}{N_0}) - \f 12 (\rd_\mu (g_0^{-1})^{\alp\bt}) \xi_\alp \xi_\bt \rd_{\xi_\mu} (\f{(\xi_t-\bt^k_0\xi_k) a}{N_0})] \, \f{\ud \nu}{|\xi|^2},
\end{split}
\end{equation}
as desired. \qedhere
\end{proof}

\begin{proposition}\label{prop:main.transport}
Let $\ud\nu$ be defined as in \eqref{def:dnu}.

Suppose $\wht{a}: T^*\mathbb R^{2+1}\to \mathbb R$ be a smooth function which is homogeneous of order $1$ in $\xi$ and is supported in $T^*\Omega$. Then
$$\int_{S^*\mathbb R^{2+1}} (2 (g_0^{-1})^{\alp\bt} \xi_\alp \rd_{x^\bt} \wht a - (\rd_{\mu}(g_0^{-1})^{\alp\bt})\xi_\alp \xi_\bt \rd_{\xi_\mu} \wht a) \,\f{\ud\nu}{|\xi|^2} = 0.$$
\end{proposition}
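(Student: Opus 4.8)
The statement follows from Proposition~\ref{prop:main.transport.prelim} by ``inverting the change of variables'' $\wht{a}\mapsto \f{(\xi_t-\bt^k_0\xi_k)a}{N_0}$. Concretely, given a smooth $\wht{a}$ homogeneous of degree $1$ in $\xi$ and supported in $T^*\Omega$, I would like to define $a := \f{N_0 \wht{a}}{\xi_t - \bt^k_0\xi_k}$ so that $\wht a = \f{(\xi_t-\bt^k_0\xi_k)a}{N_0}$, and then apply Proposition~\ref{prop:main.transport.prelim} directly. The main point is that $a$ so defined is smooth, homogeneous of degree $0$, and supported in $T^*\Omega$ \emph{after multiplication by a cutoff}, and — crucially — that multiplying $a$ by a cutoff which equals $1$ on the support of $\ud\nu$ does not change the value of the integral. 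The denominator $\xi_t - \bt^k_0\xi_k$ is the key: by Proposition~\ref{prop:psiom.localized} (and the form \eqref{g.form} of the metric) it is nonzero on the support of $\ud\nu$ — indeed $\f 1{N_0^2}(\xi_t-\bt^k_0\xi_k)^2 = e^{-2\gamma_0}\de^{ij}\xi_i\xi_j$ there, so it is bounded away from $0$ on the cosphere bundle restricted to $\mathrm{supp}(\ud\nu)$.

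First I would fix a smooth function $\eta(x,\xi)$, homogeneous of degree $0$ in $\xi$, compactly supported in $\{(x,\xi): \xi_t - \bt^k_0(x)\xi_k \neq 0\}\cap T^*\Omega$, and equal to $1$ on a neighborhood (within $T^*\Omega$) of the support of $\ud\nu$. Such an $\eta$ exists: $\ud\nu$ is supported in $S^*\Omega$, and on this set $\xi_t-\bt^k_0\xi_k$ is bounded away from zero, so one can choose $\eta$ to cut off to the region where $|\xi_t-\bt^k_0\xi_k|\geq c|\xi|$ for a suitable $c>0$. Then set
$$a(x,\xi) := \eta(x,\xi)\,\f{N_0(x)\,\wht a(x,\xi)}{\xi_t - \bt^k_0(x)\xi_k}.$$
This $a$ is smooth (the denominator is nonvanishing on $\mathrm{supp}(\eta)$), homogeneous of degree $0$ in $\xi$ (numerator degree $1$, denominator degree $1$), and supported in $T^*\Omega$. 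Moreover, on $\mathrm{supp}(\ud\nu)$ we have $\eta\equiv 1$, so
$$\f{(\xi_t-\bt^k_0\xi_k)\,a}{N_0} = \eta\,\wht a = \wht a \quad\text{on }\mathrm{supp}(\ud\nu),$$
and likewise all first derivatives in $x$ and $\xi$ of $\f{(\xi_t-\bt^k_0\xi_k)a}{N_0} - \wht a$ vanish on $\mathrm{supp}(\ud\nu)$ (since this difference is of the form $(\eta-1)\wht a$, and both $\eta-1$ and its derivatives vanish there — here I use that $\eta\equiv 1$ on an \emph{open} neighborhood of $\mathrm{supp}(\ud\nu)$ within $T^*\Omega$). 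Consequently the integrand in Proposition~\ref{prop:main.transport.prelim} evaluated with this $a$ agrees $\ud\nu$-a.e.\ with the integrand $2(g_0^{-1})^{\alp\bt}\xi_\alp\rd_{x^\bt}\wht a - (\rd_\mu(g_0^{-1})^{\alp\bt})\xi_\alp\xi_\bt\rd_{\xi_\mu}\wht a$.

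Applying Proposition~\ref{prop:main.transport.prelim} to this $a$ then gives exactly
$$\int_{S^*\mathbb R^{2+1}} (2(g_0^{-1})^{\alp\bt}\xi_\alp\rd_{x^\bt}\wht a - (\rd_\mu(g_0^{-1})^{\alp\bt})\xi_\alp\xi_\bt\rd_{\xi_\mu}\wht a)\,\f{\ud\nu}{|\xi|^2} = 0,$$
as desired. I would need to double-check one routine point: that $\rd_{x^\bt}$ and $\rd_{\xi_\mu}$ of $\f{(\xi_t-\bt^k_0\xi_k)a}{N_0}$ really do agree with those of $\wht a$ on $\mathrm{supp}(\ud\nu)$ — this uses the Leibniz rule together with the fact that $\eta-1$ vanishes identically near $\mathrm{supp}(\ud\nu)$, so every term in which a derivative falls on $\eta-1$, on $\wht a$, or elsewhere still carries a factor of $\eta-1$ or its derivative that vanishes on $\mathrm{supp}(\ud\nu)$. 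I do not expect any genuine obstacle here; the only thing to be careful about is ensuring that $\eta$ can be taken to equal $1$ on an open set containing $\mathrm{supp}(\ud\nu)$ while still being compactly supported in the region where the denominator is nonvanishing, which is possible precisely because of the quantitative lower bound $|\xi_t-\bt^k_0\xi_k|\gtrsim |\xi|$ on $\mathrm{supp}(\ud\nu)$ coming from Proposition~\ref{prop:psiom.localized} and the non-degeneracy of the Lorentzian metric $g_0$. Finally, by the reduction in Proposition~\ref{prop:compact.reduction} and the definition \eqref{def:dnu} of $\ud\nu$, combined with Proposition~\ref{prop:psiom.localized} (support on the null cone), this transport equation together with the already-established equations of Theorem~\ref{thm:prelim} shows that $(\Omega,\psi_0,\om_0,g_0,\ud\nu\restriction_\Omega)$ is a radially-averaged measure solution to the restricted Einstein--massless Vlasov system in $\mathbb U(1)$ symmetry, completing the proof of Theorem~\ref{thm:main}.
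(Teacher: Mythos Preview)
Your proof is correct and follows the same approach as the paper: both use Proposition~\ref{prop:psiom.localized} to ensure $\xi_t-\bt^k_0\xi_k\neq 0$ on $\mathrm{supp}(\ud\nu)$, then define a degree-$0$ symbol $a$ (via a cutoff) so that $\f{(\xi_t-\bt^k_0\xi_k)a}{N_0}=\wht a$ near $\mathrm{supp}(\ud\nu)$ and apply Proposition~\ref{prop:main.transport.prelim}. Your writeup is simply more explicit about the cutoff construction and the verification that derivatives match on $\mathrm{supp}(\ud\nu)$.
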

\begin{proof}
Suppose $\wht a$ is homogeneous {of} order $+1$ {in $\xi$} with support in $S^*\Omega$. Since $\ud \nu$ is supported on $\{(x,\xi): g_0^{-1}(\xi,\xi)=0 \}$, $\xi_t-\bt^k_0\xi_k\neq 0$ on the support of $\ud\nu$. It follows that we can define $a$ to be {homogeneous of} order $0$ {in $\xi$} supported in $S^*\Omega$ so that $\f{(\xi_t-\bt^k_0\xi_k) a}{N_0} \equiv \wht{a}$ in a neighborhood of the support of $\ud \nu$. Applying Proposition~\ref{prop:main.transport.prelim} to this $a$ then yields
$$\int_{S^*\mathbb R^{2+1}} (2 (g_0^{-1})^{\alp\bt} \xi_\alp \rd_{x^\bt} \wht a - (\rd_{\mu}(g_0^{-1})^{\alp\bt})\xi_\alp \xi_\bt \rd_{\xi_\mu} \wht a) \,\f{\ud\nu}{|\xi|^2},$$
which is what we want to prove. \qedhere
\end{proof}

\begin{proof}[Proof of Theorem~\ref{thm:main}]
In view of Theorem~\ref{thm:prelim} and Proposition~\ref{prop:compact.reduction}, it suffices to prove is that under the additional assumption of Theorem~\ref{thm:main}, the transport equation \eqref{eq:transport.def} holds in $\Omega$. This is exactly provided by Proposition~\ref{prop:main.transport}. \qedhere
\end{proof}

\end{document}